\newtheorem{example}{Example}[chapter]
\newtheorem*{nonexample}{Example}
\newtheorem{theorem}{Theorem}[chapter]
\newtheorem{lemma}{Lemma}[chapter]
\newtheorem{corollary}{Corollary}[chapter]
\newtheorem{definition}{Definition}[chapter]
\newtheorem{proposition}{Proposition}[chapter]
\newtheorem{observation}{Observation}[chapter]
\newcommand{\remove}[1]{}
\renewenvironment{proof}{\noindent{\bf Proof:} \hspace*{1mm}}{\hfill $\Box$ }
\newcommand{\CC}{\ensuremath{\mathcal C}\xspace}
\newcommand{\GG}{\ensuremath{\mathcal G}\xspace}
\newcommand{\MM}{\ensuremath{\mathcal M}\xspace}
\newcommand{\NN}{\ensuremath{\mathcal N}\xspace}
\newcommand{\TT}{\ensuremath{\mathcal T}\xspace}
\newcommand{\YY}{\ensuremath{\mathcal Y}\xspace}
\DeclareMathOperator*{\argmaxaa}{arg\,max}
\DeclareMathOperator*{\argminaa}{arg\,min}
\newcommand{\argmi}{\argminaa\limits_}
\newcommand{\argma}{\argmaxaa\limits_}
\newcommand{\instance}{\ensuremath{\mathcal{I}}\xspace}
\newcommand{\suml}{\sum\limits_}
\newcommand{\maxl}{\max\limits_}
\newcommand{\minl}{\min\limits_}
\newcommand{\lb}{\left(}
\newcommand{\rb}{\right)}
\DeclarePairedDelimiter\floor{\lfloor}{\rfloor}
\newcommand{\curly}[1]{\ensuremath{\{#1\}}\xspace}
\newcommand{\nn}{\nonumber}
\newcommand{\NPH}{\ensuremath{\mathsf{NP}}-hard\xspace}
\newcommand{\WWH}{\ensuremath{\mathsf{W[2]}}-hard\xspace}
\newcommand{\Poly}{\ensuremath{\mathsf{P}}\xspace}
\newcommand{\NP}{\ensuremath{\mathsf{NP}}\xspace}
\newcommand{\XP}{\ensuremath{\mathsf{NP}}\xspace}
\newcommand{\subsum}{{\sc Subset Sum}\xspace}
\newcommand{\vercov}{{\sc Vertex Cover}\xspace}
\newcommand{\setcov}{{\sc Set Cover}\xspace}
\newcommand{\yes}{{\sc Yes}\xspace}
\newcommand{\no}{{\sc No}\xspace}
\newcommand{\FPT}{\ensuremath{\mathsf{FPT}}\xspace}
\newcommand{\bud}{\ensuremath{b}\xspace}
\newcommand{\proj}{\ensuremath{P}\xspace}
\newcommand{\voters}{\ensuremath{N}\xspace}
\newcommand{\feasible}{\ensuremath{\mathcal{F}}\xspace}
\newcommand{\aof}[1]{%
  \ifthenelse{\isempty{#1}}%
    {\ensuremath{A_i}\xspace}
    {\ensuremath{A_{#1}}\xspace}
}
\newcommand{\approf}{\ensuremath{\mathcal{A}}\xspace}
\newcommand{\rprof}{\ensuremath{\mathcal{P}}\xspace}
\newcommand{\ii}{\ensuremath{i}\xspace}
\newcommand{\pa}{\ensuremath{a}\xspace}
\newcommand{\distinct}{\ensuremath{\hat{n}}\xspace}
\newcommand{\scalel}{\ensuremath{\delta}\xspace}
\newcommand{\bool}{\ensuremath{\mathds{1}}}
\newcommand{\uof}[2]{%
  \ifthenelse{\isempty{#1}}  {\ensuremath{\operatorname{u_i\!}\left(#2\right)}\xspace}
    {\ensuremath{\operatorname{u_{#1}\!}\left(#2\right)}\xspace}
}
\newcommand{\cof}[1]{\ensuremath{\operatorname{c\!}\left(#1\right)}\xspace}
\newcommand{\domainof}[1]{\ensuremath{\mathcal{X}_{#1}}\xspace}
\newcommand{\winnerfunction}{\ensuremath{\ensuremath{\Phi}}\xspace}
\newcommand{\ruleof}[2]{%
  \ifthenelse{\isempty{#2}}%
    {\ensuremath{\operatorname{#1\!}\left(\instance\right)}\xspace}
    {\ensuremath{\operatorname{#1\!}\left(\instance#2\right)}\xspace}
}
\newcommand{\winners}[2]{%
	\ifthenelse{\isempty{#2}}%
	{\ensuremath{\operatorname{\winnerfunction\!}\left(#1,\instance\right)}\xspace}
	{\ensuremath{\operatorname{\winnerfunction\!}\left(#1,\instance#2\right)}\xspace}
}
\newcommand{\pbrule}{\ensuremath{R}\xspace}
\newcommand{\fullapinstance}{\ensuremath{\langle \voters,\proj,c,\bud,\approf \rangle}\xspace}
\newcommand{\fullrinstance}{\ensuremath{\langle \voters,\proj,c,\bud,\rprof \rangle}\xspace}
\newcommand{\hcbp}{\ensuremath{\mathsf{HCBP}}\xspace}
\newcommand{\mmpb}{\ensuremath{\mathsf{MPB}}\xspace}
\newcommand{\lpalgo}{{\sc Ordered-Relax}\xspace}
\newcommand{\setword}[2]{%
  \phantomsection
  #1\def\@currentlabel{\unexpanded{#1}}\label{#2}%
}
\newcommand{\opt}{\emph{OPT}}
\newcommand{\lo}{\ensuremath{l_o}\xspace}
\newcommand{\ho}{\ensuremath{h_o}\xspace}
\newcommand{\la}{\ensuremath{l_{\approf}}\xspace}
\newcommand{\ha}{\ensuremath{h_{\approf}}\xspace}
\newcommand{\disuof}[2]{%
  \ifthenelse{\isempty{#1}}%
    {\ensuremath{\operatorname{d_i\!}\left(#2\right)}\xspace}
    {\ensuremath{\operatorname{d_{#1}\!}\left(#2\right)}\xspace}
}
\newcommand{\fdisuof}[2]{%
  \ifthenelse{\isempty{#1}}%
    {\ensuremath{\bud-{\uof{}{#2}}}\xspace}
    {\ensuremath{\bud-{\uof{#1}{#2}}}\xspace}
}
\newcommand{\cxof}[1]{\ensuremath{\operatorname{c\!}\left(#1\right)x_{#1}}\xspace}
\newcommand{\cxsof}[1]{\ensuremath{\operatorname{c\!}\left(#1\right)x_{#1}^*}\xspace}
\newcommand{\br}{\ensuremath{R}\xspace}
\newcommand{\mmpbr}{\ensuremath{M}\xspace}
\newcommand{\firstfrac}[1]{%
  \ifthenelse{\isempty{#1}}%
    {\ensuremath{\frac{|S|-|Y_i|}{|\aof{}|-|Y_i|}}\xspace}
    {\ensuremath{\frac{|S|-|Y_{#1}|}{|\aof{#1}|-|Y_{#1}|}}\xspace}
}
\newcommand{\revfrac}[1]{%
  \ifthenelse{\isempty{#1}}%
    {\ensuremath{\frac{|S|-|\aof{}|}{|S|-|Y_i|}}\xspace}
    {\ensuremath{\frac{|S|-|\aof{#1}|}{|S|-|Y_{#1}|}}\xspace}
}
\newcommand{\frack}[1]{%
  \ifthenelse{\isempty{#1}}%
    {\ensuremath{\frac{|\aof{}|-|Y_i|}{|S|-|Y_i|}}\xspace}
    {\ensuremath{\frac{|\aof{#1}|-|Y_{#1}|}{|S|-|Y_{#1}|}}\xspace}
}
\newcommand{\githublink}{\url{https://github.com/Participatory-Budgeting/maxmin_2022}}
\newcommand{\degreeset}{\ensuremath{\operatorname{D}}\xspace}
\newcommand{\degreeproj}{\ensuremath{\mathcal{D}}\xspace}
\newcommand{\costfunction}{\ensuremath{\operatorname{c}}\xspace}
\newcommand{\valid}{\ensuremath{\mathcal{V}}\xspace}
\newcommand{\lbounds}{\ensuremath{\mathcal{L}}\xspace}
\newcommand{\eachlb}{\ensuremath{l}\xspace}
\newcommand{\ubounds}{\ensuremath{\mathcal{H}}\xspace}
\newcommand{\eachub}{\ensuremath{h}\xspace}
\newcommand{\fulldeginstance}{\ensuremath{\langle \voters,\degreeproj,\costfunction,\bud,\lbounds,\ubounds\rangle}\xspace}
\newcommand{\pdof}[2]{%
  \ifthenelse{\isempty{#1}}%
    {\ifthenelse{\isempty{#2}}
    {\ensuremath{p_{j}^{t}}\xspace}
    {\ensuremath{p_{j}^{#2}}\xspace}
    }
    {\ifthenelse{\isempty{#2}}{\ensuremath{p_{#1}^{t}}\xspace} 
    {\ensuremath{p_{#1}^{#2}}\xspace} 
    }
}
\newcommand{\dcof}[2]{%
  \ifthenelse{\isempty{#1}}%
    {\ifthenelse{\isempty{#2}}
    {\ensuremath{c_{j}^{t}}\xspace}
    {\ensuremath{c_{j}^{#2}}\xspace}
    }
    {\ifthenelse{\isempty{#2}}{\ensuremath{c_{#1}^{t}}\xspace} 
    {\ensuremath{c_{#1}^{#2}}\xspace} 
    }
}
\newcommand{\conrange}{\ensuremath{\tau_p}\xspace}
\newcommand{\maxconrange}{\ensuremath{\overline{\conrange}}\xspace}
\newcommand{\minconrange}{\ensuremath{\underline{\conrange}}\xspace}
\newcommand{\qof}[2]{%
  \ifthenelse{\isempty{#1}}%
    {\ifthenelse{\isempty{#2}}
    {\ensuremath{q_{j}^{t}}\xspace}
    {\ensuremath{q_{j}^{#2}}\xspace}
    }
    {\ifthenelse{\isempty{#2}}{\ensuremath{q_{#1}^{t}}\xspace} 
    {\ensuremath{q_{#1}^{#2}}\xspace} 
    }
}
\newcommand{\qmax}{\ensuremath{q_m}\xspace}
\newcommand{\qsum}{\ensuremath{q_\sigma}\xspace}
\newcommand{\csetof}[1]{
\ifthenelse{\isempty{#1}}
{\ensuremath{\costfunction\!\left(S\right)}\xspace}
{\ensuremath{\costfunction\!\left(#1\right)}\xspace}
}
\newcommand{\mdof}[1]{
\ifthenelse{\isempty{#1}}
{\ensuremath{t_{j}}\xspace}
{\ensuremath{t_{#1}}\xspace}
}
\newcommand{\dsetof}[1]{
\ifthenelse{\isempty{#1}}
{\ensuremath{\degreeset\!\left(p_{j}\right)}\xspace}
{\ensuremath{\degreeset\!\left(p_{#1}\right)}\xspace}
}
\newcommand{\utof}[2]{
\ifthenelse{\isempty{#1}}
{
\ifthenelse{\isempty{#2}}
{\ensuremath{u_i\left(S\right)}\xspace}
{\ensuremath{u_i\left(S,#2\right)}\xspace}
}
{
\ifthenelse{\isempty{#2}}
{\ensuremath{u_{#1}\left(S\right)}\xspace}
{\ensuremath{u_{#1}\left(S,#2\right)}\xspace}
}
}
\newcommand{\utofd}[2]{
\ifthenelse{\isempty{#1}}
{
\ifthenelse{\isempty{#2}}
{\ensuremath{u_i\left(S'\right)}\xspace}
{\ensuremath{u_i\left(S',#2\right)}\xspace}
}
{
\ifthenelse{\isempty{#2}}
{\ensuremath{u_{#1}\left(S'\right)}\xspace}
{\ensuremath{u_{#1}\left(S',#2\right)}\xspace}
}
}
\newcommand{\dutof}[2]{
\ifthenelse{\isempty{#1}}
{
\ifthenelse{\isempty{#2}}
{\ensuremath{d_i\left(S\right)}\xspace}
{\ensuremath{d_i\left(S,#2\right)}\xspace}
}
{
\ifthenelse{\isempty{#2}}
{\ensuremath{d_{#1}\left(S\right)}\xspace}
{\ensuremath{d_{#1}\left(S,#2\right)}\xspace}
}
}
\newcommand{\dutofd}[2]{
\ifthenelse{\isempty{#1}}
{
\ifthenelse{\isempty{#2}}
{\ensuremath{d_i\left(S'\right)}\xspace}
{\ensuremath{d_i\left(S',#2\right)}\xspace}
}
{
\ifthenelse{\isempty{#2}}
{\ensuremath{d_{#1}\left(S'\right)}\xspace}
{\ensuremath{d_{#1}\left(S',#2\right)}\xspace}
}
}
\newcommand{\variance}{variance coefficient\xspace}
\newcommand{\vpara}{\ensuremath{\gamma}\xspace}
\newcommand{\scorefunction}{\ensuremath{\operatorname{s}\!}\xspace}
\newcommand{\dscoreof}[2]{
\ifthenelse{\isempty{#1}}
{
\ifthenelse{\isempty{#2}}
{\ensuremath{\scorefunction\left(\pdof{}{}\right)}\xspace}
{\ensuremath{\scorefunction\left(\pdof{}{#2}\right)}\xspace}
}
{
\ifthenelse{\isempty{#2}}
{\ensuremath{\scorefunction\left(\pdof{#1}{}\right)}\xspace}
{\ensuremath{\scorefunction\left(\pdof{#1}{#2}\right)}\xspace}
}
}
\newcommand{\msof}[1]{
\ifthenelse{\isempty{#1}}
{\ensuremath{\alpha\!\left(\pdof{}{}\right)}\xspace}
{\ensuremath{\alpha\!\left(\pdof{}{#1}\right)}\xspace}
}
\newcommand{\cardof}[1]{
\ifthenelse{\isempty{#1}}
{\ensuremath{|S|}\xspace}
{\ensuremath{|#1|}\xspace}
}
\newcommand{\lowerb}[2]{
\ifthenelse{\isempty{#1}}
{
\ifthenelse{\isempty{#2}}
{\ensuremath{\eachlb_i(j)}\xspace}
{\ensuremath{\eachlb_i(#2)}\xspace}
}
{
\ifthenelse{\isempty{#2}}
{\ensuremath{\eachlb_{#1}(j)}\xspace}
{\ensuremath{\eachlb_{#1}(#2)}\xspace}
}
}
\newcommand{\upperb}[2]{
\ifthenelse{\isempty{#1}}
{
\ifthenelse{\isempty{#2}}
{\ensuremath{\eachub_i(j)}\xspace}
{\ensuremath{\eachub_i(#2)}\xspace}
}
{
\ifthenelse{\isempty{#2}}
{\ensuremath{\eachub_{#1}(j)}\xspace}
{\ensuremath{\eachub_{#1}(#2)}\xspace}
}
}
\newcommand{\appof}[2]{
\ifthenelse{\isempty{#1}}
{
\ifthenelse{\isempty{#2}}
{\ensuremath{A_i(j)}\xspace}
{\ensuremath{A_i(#2)}\xspace}
}
{
\ifthenelse{\isempty{#2}}
{\ensuremath{A_{#1}(j)}\xspace}
{\ensuremath{A_{#1}(#2)}\xspace}
}
}
\newcommand{\chodof}[2]{
\ifthenelse{\isempty{#1}}
{
\ifthenelse{\isempty{#2}}
{\ensuremath{S(j)}\xspace}
{\ensuremath{#2(j)}\xspace}
}
{
\ifthenelse{\isempty{#2}}
{\ensuremath{S(#1)}\xspace}
{\ensuremath{#2(#1)}\xspace}
}
}
\newcommand{\chocof}[2]{
\ifthenelse{\isempty{#1}}
{
\ifthenelse{\isempty{#2}}
{\ensuremath{c^S(j)}\xspace}
{\ensuremath{c^{#2}(j)}\xspace}
}
{
\ifthenelse{\isempty{#2}}
{\ensuremath{c^S(#1)}\xspace}
{\ensuremath{c^{#2}(#1)}\xspace}
}
}
\newcommand{\nrule}{\ensuremath{R_{|S|}}\xspace}
\newcommand{\tstar}{\ensuremath{t^*}\xspace}
\newcommand{\crule}{\ensuremath{R_{\costfunction(S)}}\xspace}
\newcommand{\slimit}{\ensuremath{\delta}\xspace}
\newcommand{\ccaprule}{\ensuremath{R_{\widehat{\costfunction(S)}}}\xspace}
\newcommand{\drule}{\ensuremath{R_{\|\;\|}}\xspace}
\newcommand{\winnersof}[2]{%
  \ifthenelse{\isempty{#2}}%
    {\ensuremath{\pbrule\left(\instance\right)}\xspace}
    {\ensuremath{\pbrule\left(\instance#2\right)}\xspace}
}
\newcommand\mathcircled[1]{%
  \mathpalette\@mathcircled{#1}%
}
\newcommand\@mathcircled[2]{%
  \tikz[baseline=(math.base)] \node[draw,circle,inner sep=1pt] (math) {$\m@th#1#2$};%
}
\newcommand{\thickhline}{%
    \noalign {\ifnum 0=`}\fi \hrule height 1.3pt
    \futurelet \reserved@a \@xhline
}
\newcommand{\toothickhline}{%
    \noalign {\ifnum 0=`}\fi \hrule height 2pt
    \futurelet \reserved@a \@xhline
}
\newcolumntype{"}{@{\hskip\tabcolsep\vrule width 1.3pt\hskip\tabcolsep}}
\newcolumntype{x}{@{\hskip\tabcolsep\vrule width 2pt\hskip\tabcolsep}}
\newcommand{\comporder}{\ensuremath{\mathcal{L}(\proj)}\xspace}
\newcommand{\suci}{\ensuremath{\succeq_i}\xspace}
\newcommand{\rof}[1]{\ensuremath{\operatorname{r_i\!}\left(#1\right)}\xspace}
\newcommand{\pat}[1]{\ensuremath{\operatorname{\succeq_i\!}\left(#1\right)}\xspace}
\newcommand{\opat}[2]{
\ifthenelse{\isempty{#1}}
{\ensuremath{\operatorname{\succ\!}\left(#2\right)}\xspace}
{\ensuremath{\operatorname{\succ_{#1}\!}\left(#2\right)}\xspace}
}
\newcommand{\ptill}[1]{\ensuremath{E^i_{[#1]}}\xspace}
\newcommand{\trunkk}[2]{\ensuremath{\operatorname{t_i\!}\left(#1,#2\right)}\xspace}
\newcommand{\trunkkcc}[2]{\ensuremath{\operatorname{y_i\!}\left(#2\right)}\xspace}
\newcommand{\trunks}{\ensuremath{t}\xspace}
\newcommand{\gtr}{\ensuremath{\langle MT,f \rangle}\xspace}
\newcommand{\cwpara}{\ensuremath{\alpha}\xspace}
\newcommand{\cwr}{\ensuremath{\langle \CC_\cwpara,f \rangle}\xspace}
\newcommand{\ebpara}{\ensuremath{\theta}\xspace}
\newcommand{\cwparaof}[1]{\ensuremath{\operatorname{\cwpara\!}\left(#1\right)}\xspace}
\newcommand{\scoreof}[1]{\ensuremath{score(#1)}\xspace}
\newcommand*{\rom}[1]{\expandafter\@slowromancap\romannumeral #1@}
\newcommand{\repeatcaption}[2]{%
  \renewcommand{\thefigure}{\ref{#1}}%
  \captionsetup{list=no}%
  \caption{#2 (figure repeated from page \pageref{#1}).}%
  \addtocounter{figure}{-1}
}
\def\thickhline{%
  \noalign{\ifnum0=`}\fi\hrule \@height \thickarrayrulewidth \futurelet
   \reserved@a\@xthickhline}
\def\@xthickhline{\ifx\reserved@a\thickhline
               \vskip\doublerulesep
               \vskip-\thickarrayrulewidth
             \fi
      \ifnum0=`{\fi}}
\newlength{\thickarrayrulewidth}
\newcommand{\functionsymbol}{H}
\newcommand{\shortfu}[2]{%
  \ifthenelse{\isempty{#1}}%
    {\ensuremath{\operatorname{\functionsymbol_{\psi_q}\!}(#2)}\xspace}
    {\ensuremath{\operatorname{\functionsymbol_{\psi_{#1}}\!}(#2)}\xspace}
}
\newcommand{\kmin}{\ensuremath{\kappa_{min}\xspace}}
\newcommand{\emax}{\ensuremath{\eta_{max}}\xspace}
\newcommand{\rscr}{\ensuremath{\varphi:\mathcal{D}^n \to \Delta \proj}\xspace}
\newcommand{\sd}[1]{%
  \ifthenelse{\isempty{#1}}%
    {\ensuremath{\succ^{\text{\tiny{SD}}}}\xspace}
    {\ensuremath{\succ_{#1}^{\text{\tiny{SD}}}}\xspace}
}
\newcolumntype{?}{!{\vrule width 1.5pt}}
\newcommand{\weakf}{$(\kappa_G,\psi_G,\eta_G)$-weak-GEG\xspace}
\newcommand{\strof}{$(\kappa_G,\psi_G,\eta_G)$-strong-GEG\xspace}
\newcommand{\weakfness}{$(\kappa_G,\psi_G,\eta_G)$-weak-GEG\xspace}
\newcommand{\strofness}{$(\kappa_G,\psi_G,\eta_G)$-strong-GEG\xspace}
\newcommand{\spweakf}{$(\kappa_N,\eta_N)$-weak-IEG\xspace}
\newcommand{\spstrof}{$(\kappa_N,\eta_N)$-strong-IEG\xspace}
\newcommand{\spweakfness}{$(\kappa_N,\eta_N)$-weak-IEG\xspace}
\newcommand{\spstrofness}{$(\kappa_N,\eta_N)$-strong-IEG\xspace}
\newcommand{\community}{group\xspace}
\newcommand{\reprange}{representative range\xspace}
\newcommand{\repranges}{representative ranges\xspace}
\newcommand{\repscene}{representation scenario\xspace}
\newcommand{\repscenes}{representation scenarios\xspace}
\newcommand{\resquota}{reservation quota\xspace}
\newcommand{\compliant}{compliant\xspace}
\newcommand{\topi}{top-ranged\xspace}
\newcommand{\topiness}{top-rangedness\xspace}
\newcommand{\correspondencef}{\ensuremath{\psi_q:\mathcal{D}^{|N_q|}\to \small{A \choose \kappa_q}}\xspace}
\newcommand{\correspondence}{\ensuremath{\psi_q}\xspace}
\newcommand{\blankpage}{
\newpage
\thispagestyle{empty}
\mbox{}
\newpage
}
\newcommand{\blankpagewithnumber}{
\newpage
\mbox{}
\newpage
}
\crefname{observation}{observation}{observations}
\crefname{algorithm}{algorithm}{algorithms}
\crefname{align}{equation}{equations}
\crefname{eqnarray}{equation}{equations}
\begin{document}

\title{Exploring Welfare Maximization and Fairness in Participatory Budgeting} 

\submitdate{\monthyeardate\today} 
\phd
\dept{Computer Science and Automation}
\faculty{Faculty of Engineering}
\author{Gogulapati Sreedurga}


\maketitle

\begin{center}
\LARGE{\underline{\textbf{Declaration of Originality}}}
\end{center}
\noindent I, \textbf{Gogulapati Sreedurga}, with SR No. \textbf{04-04-00-10-12-18-1-15965} hereby declare that
the material presented in the thesis titled

\begin{center}
\textbf{Exploring Welfare Maximization and Fairness in Participatory Budgeting}
\end{center}

\noindent represents original work carried out by me in the \textbf{Department of Computer Science and Automation} at \textbf{Indian Institute of Science} during the years \textbf{2018-2023}.

\noindent With my signature, I certify that:
\begin{itemize}
	\item I have not manipulated any of the data or results.
	\item I have not committed any plagiarism of intellectual
	property.
	I have clearly indicated and referenced the contributions of
	others.
	\item I have explicitly acknowledged all collaborative research
	and discussions.
	\item I have understood that any false claim will result in severe
	disciplinary action.
	\item I have understood that the work may be screened for any form
	of academic misconduct.
\end{itemize}

\vspace{20mm}

\noindent {\footnotesize{Date:	\hfill	Student Signature}} \qquad

\vspace{20mm}

\noindent In my capacity as supervisor of the above-mentioned work, I certify
that the above statements are true to the best of my knowledge, and 
I have carried out due diligence to ensure the originality of the
report.

\vspace{20mm}

\noindent  {\footnotesize{Advisor Name: \hfill Advisor Signature}} \qquad

\blankpage

\vspace*{\fill}
\begin{center}
\large\bf \textcopyright \ Gogulapati Sreedurga\\
\large\bf \monthyeardate\today\\
\large\bf All rights reserved
\end{center}
\vspace*{\fill}
\thispagestyle{empty}

\blankpage

\vspace*{\fill}
\begin{center}
\Large DEDICATED TO \\[1em]
\Large\it my beloved ideal parents\\
\Large\it who are the reasons behind everything that I am\\[2em]
\begin{figure}[h]
  \centering
  \includegraphics[width=0.67\linewidth]{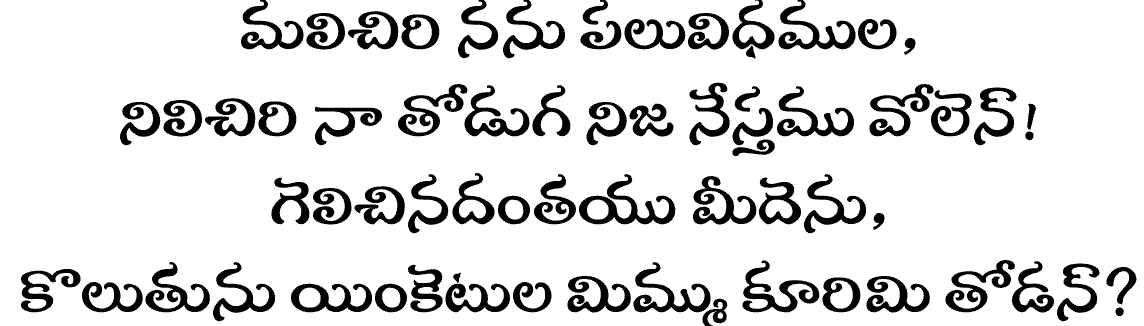}
  \captionsetup{labelformat=empty, width=0.67\linewidth}
  \caption[]{\large\it {English Translation}: You (both) have molded me in countless ways. You always stood by my side like true friends. All my accomplishments indeed belong to you. How else do I worship you and express my boundless \centering{love for you?}}
\end{figure}
\end{center}
\vspace*{\fill}
\thispagestyle{empty}



\blankpage

\setcounter{secnumdepth}{3}
\setcounter{tocdepth}{3}

\frontmatter 
\pagenumbering{roman}

\prefacesection{Acknowledgements}
 First of all, I express my heartfelt gratitude to my esteemed PhD supervisor, Prof. Yadati Narahari, whose unwavering support and unwavering positivity have been invaluable to me. It is truly rare to encounter a supervisor who grants a PhD student such a remarkable degree of freedom and respect, and I consider myself fortunate to be among the privileged few. Prof. Narahari has not only acted as a supervisor but also as a nurturing figure, akin to a parent, offering boundless encouragement and genuine concern for both my personal and professional growth. I also thank his wife, Smt. Padmashri, for showering me and my family with genuine warmth and affection in every interaction we had.

I am incredibly grateful to Assoc. Prof. Siddharth Barman and Assoc. Prof. Neeldhara Misra for their valuable advice and support at several crucial junctures of my PhD. I am truly thankful to Prof. Ulle Endriss for graciously hosting me at the University of Amsterdam, an experience that has greatly enriched my academic journey. The opportunity to engage in both formal and informal discussions with his group has been truly enjoyable. Additionally, I express my gratitude to Dr. Markus Brill for hosting me at the University of Warwick, as this visit has helped me in multiple ways. I am profoundly grateful to Assoc. Prof. Souvik Roy and Dr. Soumyarup Sadhukhan for their support and excellent technical discussions.

I thank the Govt. of India for awarding me the Prime Minister Research Fellowship (PMRF). It is sheer pleasure to have the opportunity to live in the serene campus of Indian Institute of Science. I am very grateful to the staff of Computer Science and Automation (CSA) department - Mrs. Padmavathi, Mrs. Kushael, Mrs. Meenakshi, Mrs. Nishitha. Their support played an indispensable role in managing the practical aspects of my PhD.

I am grateful to my lab mates, Nidhi, Ganesh, and Arpita for their kindness, and valuable help at several junctures of my PhD. I am very thankful to Shraddha, Janaky, and Arpita for proof-reading this thesis. I am supremely grateful to my best friend on campus, Janaky Murthy, for being my twin soul and much more. The hours of chatter, bouncing off technical ideas, nights filled with old music and/or philosophical discussions, the countless joyful moments of lighthearted fun, and the cherished moments of sharing our deepest emotions - she was a much-needed source of solace throughout my demanding PhD journey.

I am indebted to my brother and lab mate, Mayank Ratan Bhardwaj, whose presence has been an immense blessing throughout my PhD journey. He has been a constant source of support and firm belief in my abilities, even when I doubted myself. His guidance and encouragement have played a pivotal role in my personal and professional life. My parents deeply admire him and regard him as a God-given son. Among all the valuable experiences I gained at IISc, I cherish his company as much as, if not more than, my PhD degree.

I extend my warm regards and deep gratitude to my beloved parents-in-law, Smt. Padmaja Yagnamurthy and Sri Dwarakanath Yagnamurthy, for their constant support, love, and a sense of belonging they bestowed upon me.

I am extremely lucky to be blessed with the most encouraging elder sister, Sreeprada, and my brother-in-law (BIL), Dr. Naga Samrat, who is like my big brother in guise. My sister has been my biggest cheerleader ever since I remember. If I ever feel under-confident, a short conversation with her is all I need to feel like I can conquer this world. She transcends the role of a typical sister, serving as my companion in many aspects of my life. Her presence in my life is a cherished treasure. The guidance of my brother-in-law, rooted in his invaluable experience, has been an essential pillar throughout my research journey.

I fall short of words when I need to articulate how lucky I got with my husband, Dr. Sai Saran. Saran entered my life when I was at my lowest and flipped it with great ease. With him by my side, every obstacle became conquerable and every dream became attainable. Being my strongest support, my best friend, my biggest inspiration, and my soul mate in the truest sense, he has forever changed the course of my PhD and, indeed, my entire life.

I am incredibly lucky for receiving the love and unwavering support of my maternal grandmother, Late. Smt. Pingali Rama Syamala. No amount of appreciation or thanks would do justice to the pivotal role she played in my life, and even more importantly, throughout my academic journey. Losing her during my PhD was the biggest low point in my life, and the pain of missing her presence as I submit my PhD thesis is indescribable.

Lastly and most importantly, I have no words to thank my most incredible parents, Sri. Gogulapati Venkata Ramakrishna and Smt. Lakshmi Padma Kumari, who I very often refer to as the best parents in the world. They are the most selfless, most open-minded, most friendly, most loving, most responsible, most kind, most encouraging, and most ideal parents one can imagine and it is no exaggeration if I claim that the space in my thesis is inadequate to describe each of them in detail. My parents are the reasons behind everything that I am personally and professionally, and everything that I will ever be. I humbly dedicate this thesis and every piece of my effort to them.  

\prefacesection{Abstract}
Participatory budgeting (PB) is a voting paradigm for distributing a divisible resource, usually called a budget, among a set of projects by aggregating the preferences of individuals over these projects. It is implemented quite extensively for purposes such as government allocating funds to public projects and funding agencies selecting research proposals to support. This dissertation studies the welfare-related and fairness-related  objectives for different PB models. Our contribution lies in proposing and exploring novel PB rules that maximize welfare and promote fairness, as well as, in introducing and investigating a range of novel utility notions, axiomatic properties, and fairness notions, effectively filling the gaps in the existing literature for each PB model. The thesis is divided into two main parts, the first focusing on dichotomous and the second focusing on ordinal preferences. Each part considers two cases: (i) the cost of each project is \emph{restricted} to a single value and partial funding is not permitted and (ii) the cost of each project is \emph{flexible} and may assume multiple values.

\subsection*{Part I: Dichotomous Preferences}
\subsubsection*{Restricted Costs: Egalitarian Participatory Budgeting}
Egalitarianism holds significance in PB, capturing welfare as well as fairness. Our work introduces and studies a natural egalitarian rule, Maxmin Participatory Budgeting (MPB). The study consists of two parts: computational analysis and axiomatic analysis. In the computational part, we prove that MPB is strongly NP-hard and present several results on its fixed parameter tractability. We also propose an approximation algorithm and further establish an upper bound on the achievable approximation ratio for exhaustive strategy-proof PB algorithms. In the axiomatic part, we investigate MPB by generalizing existing axioms and introducing a new fairness axiom called maximal coverage, which we show MPB satisfies.
\subsubsection*{Flexible Costs: Welfare Maximization when Projects have Multiple Degrees of Sophistication}
We introduce a novel PB model where projects have a discrete set of permissible costs, reflecting different levels of project sophistication. Voters express their preferences by specifying upper and lower bounds on the amount to be allocated for each project. The outcome of a PB rule involves selecting a subset of projects and determining their costs. We propose a variety of utility concepts and welfare-maximizing rules. We prove that all the positive findings from single-cost projects can be extended to this new framework with multiple permissible costs and further analyze the fixed parameter tractability of the problems. We also propose novel and intuitive axioms and evaluate their compatibility with the four PB rules proposed by us.
\subsection*{Part II: Ordinal Preferences}
\subsubsection*{Restricted Costs: Welfare Maximization and Fairness under Incomplete Weakly Ordinal Preferences}
This chapter focuses on incomplete weakly ordinal preferences and has two logical components. The first component concentrates on maximizing welfare, while the second one addresses fairness. In the first component, we introduce a family of rules, \emph{dichotomous translation rules}, and \emph{the PB-Chamberlin-Courant (PB-CC) rule}, which respectively expand on existing welfare-maximizing rules for dichotomous and strictly ordinal preferences. We show that our expansions largely maintain and even enhance the computational and axiomatic properties of these rules. We also propose a new relevant axiom, pro-affordability. The second component introduces the novel class of \textit{average rank-share guarantee rules} to address fairness in participatory budgeting with ordinal preferences, overcoming limitations of existing fairness concepts in the literature.
\subsubsection*{Flexible Costs: Characterization of Group-Fair and Individual-Fair Rules under Single-Peaked Preferences}
We examine a PB model in which the cost of each project is completely unrestricted. Consequently, this model is viewed as random social choice, where the probability associated with a project in the outcome represents the fraction of the budget allocated to it. We investigate fairness in social choice under single-peaked preferences. Existing literature has extensively examined the construction and characterization of social choice rules in the single-peaked domain. We non-trivially extend these findings by incorporating fairness considerations. To address group-fairness, we partition voters into logical groups based on attributes like gender or location. We introduce \textit{group-wise anonymity} to capture fairness within each group and propose the notions of \textit{weak group entitlement guarantee} and \textit{strong group entitlement guarantee} to ensure fairness across groups. We characterize deterministic and random social choice rules that achieve group-fairness. We also explore the case without groups and provide more precise characterizations of rules achieving individual-fairness.

\prefacesection{Publications and Manuscripts based on this Thesis}
\begin{enumerate}
    \item \textit{Gogulapati Sreedurga}, \textit{Mayank Ratan Bhardwaj}, and \textit{Yadati Narahari}. Maxmin Participatory Budgeting. Proceedings of the 31st International Joint Conference on Artificial Intelligence (IJCAI 2022), pages 489-495.
    \item \textit{Gogulapati Sreedurga}. Indivisible Participatory Budgeting with Multiple Degrees of Sophistication of Projects. Proceedings of the 22nd International Conference on Autonomous Agents and Multiagent Systems (AAMAS 2023), pages 2694-2696.
    \item \textit{Gogulapati Sreedurga}. Participatory Budgeting With Multiple Degrees of Projects And Ranged Approval Votes. To appear in the Proceedings of the 32nd International Joint Conference on Artificial Intelligence (IJCAI 2023).
    \item \textit{Gogulapati Sreedurga} and \textit{Yadati Narahari}. Participatory Budgeting under Weakly Ordinal Preferences: Welfare and Fairness. Under review.
    \item \textit{Gogulapati Sreedurga}, \textit{Soumyarup Sadhukhan}, \textit{Souvik Roy}, and \textit{Yadati Narahari}. Individual-Fair and Group-Fair Social Choice Rules under Single-Peaked Preferences. Proceedings of the 22nd International Conference on Autonomous Agents and Multiagent Systems (AAMAS 2023), pages 2872-2874.
    \item \textit{Gogulapati Sreedurga}, \textit{Soumyarup Sadhukhan}, \textit{Souvik Roy}, and \textit{Yadati Narahari}. Characterization of Fair Social Choice Rules under Single-Peaked Preferences. Journal submission (under review).
    \item \textit{Gogulapati Sreedurga}. Participatory Budgeting: Fairness and Welfare Maximization. Proceedings of the 19th European Conference on Autonomous Agents and Multiagent Systems (EUMAS 2022), pages 439-443.
\end{enumerate}
\blankpage
\tableofcontents
\listoffigures
\blankpagewithnumber
\listoftables
\blankpagewithnumber
\newpage

\begin{table}[!ht]
    \centering
    \renewcommand{\arraystretch}{1.3}
    \begin{tabular}[width = \textwidth]{c  l}
    \hline
    Notation & Definition\\
    \hline
    \bud    & Available budget\\
    \voters        & Set of voters\\
    $n$ & Number of voters\\
    \NN & Generic preference profile of voters\\
    \proj       & Set of projects\\
    $m$ & Number of projects\\
    \aof{} & Dichotomous preference of voter $i$\\
    \approf & Dichotomous preference profile\\
    $\succeq_i$ & Weakly ordinal preference of voter $i$\\
    $\succ_i$ & Strictly ordinal preference of voter $i$\\
    $r_i(p)$ & Rank of project $p$ in the ordinal preference of voter $i$\\
    $\succeq_i(t)$, $\succ_i(t)$ & Project(s) ranked exactly $t$ in the ordinal preference of voter $i$\\
    \rprof  & Ordinal preference profile\\
    \domainof{p} & Set of all possible costs of project $p$\\
    \multirow{2}{*}{\cof{S}} & Total cost of a subset $S$ of projects when the costs are restricted\\
    & or partially flexible\\
    \instance & An instance of participatory budgeting\\
    \feasible & Set of all feasible subsets of projects in the restricted costs model\\
    \valid & Set of feasible budget allocations in the flexible costs model\\
    \uof{}{S} & Utility of voter $i$ from an outcome $S$ of the PB rule\\
    \pbrule & A participatory budgeting (PB) rule\\
    \multirow{2}{*}{\ruleof{\pbrule}{}} & Set of all the subsets selected by the PB rule \pbrule in the restricted\\
    & costs model\\
    \multirow{2}{*}{\winners{\pbrule}{}} & Set of all the projects in some subset selected by the PB rule in\\
    & restricted costs model\\
    \hline
    \end{tabular}
    \caption{Common symbols used in multiple chapters}
    \label{tab: common_notations}
\end{table}

\mainmatter 
\setcounter{page}{1}
\chapter{Introduction}\label{chap: intro}
\begin{quote}
    \textit{This chapter serves as an introduction to participatory budgeting, its applications, and the landscape of its variants. It then presents an overview of all the contributions of this thesis.}
\end{quote}

\noindent{In recent years, digital democracy has garnered significant interest, captivating researchers and gaining widespread social support. The growing recognition of the importance of providing equitable representation for every citizen and engaging them directly in decision-making processes has propelled the emergence of citizen-centric governance. Among the various applications of this approach, the one that stands out as particularly noteworthy and currently captures the spotlight is participatory budgeting.}

Participatory Budgeting (PB) is a voting paradigm that deals with the situations where a divisible resource (such as money, time etc.) is to be distributed among a set of projects by taking the preferences of stakeholders over the projects into account. The divisible resource that is to be distributed is typically referred to as the \emph{budget}, whereas the stakeholders are typically referred to as \emph{voters}.

PB finds immediate application in various domains, including government funding the public projects, where the preferences of citizens are aggregated to make funding decisions as displayed in \Cref{fig: publicfunding}. It is also used by conference organizers to manage talk schedules, taking into account the preferences of the audience. Additionally, companies employ PB to select research proposals for funding, by aggregating the preferences of the selection panel as displayed in \Cref{fig: researchfunding}.
\begin{figure}
        \centering
        \includegraphics[width=0.6\linewidth]{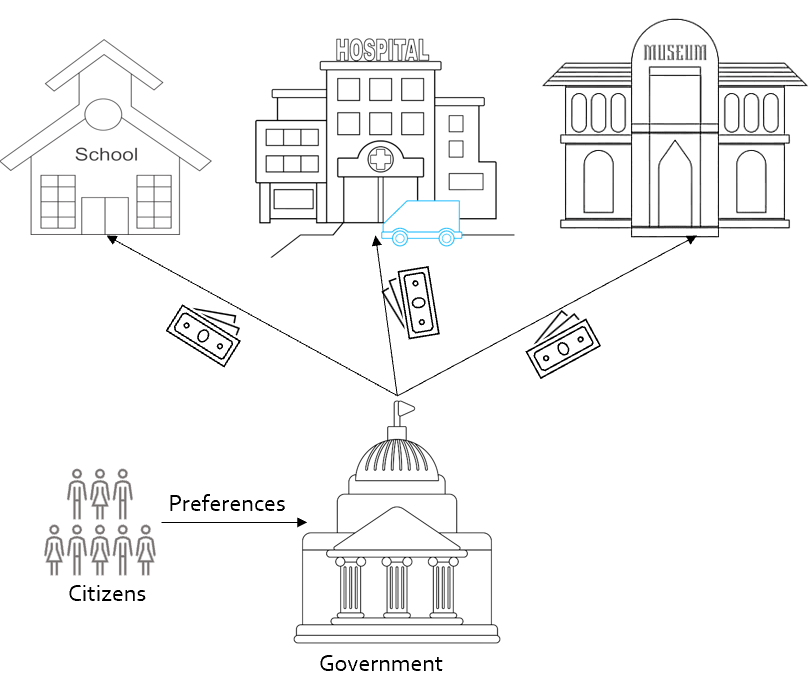}
        \caption{Participatory budgeting for government funding public projects.}
        \label{fig: publicfunding}
\end{figure}

The success of PB is evident through its widespread adoption in numerous countries worldwide, including the United Kingdom, the Netherlands, France, Sweden, Germany, Austria, and the United States \cite{cabannes2004participatory,shah2007participatory,sintomer2008participatory,wampler2010participatory,rocke2014framing}. Several platforms such as Polys platform (\burl{https://polys.vote/participatory-budgeting-platform}) and PB platform maintained by Stanford university (\burl{https://pbstanford.org/}) have also been set up to facilitate participatory budgeting elections. This broad applicability of PB led to its extensive study in computational social choice - an interdisciplinary field at the intersection of computer science, artificial intelligence, and economics, dedicated to investigating preference aggregation. A book chapter by Aziz and Shah \cite{aziz2021participatory} and a recent survey by Rey and Maly \cite{rey2023computational} provide an excellent exposition.
\begin{figure}
        \centering
        \includegraphics[width=0.4\linewidth]{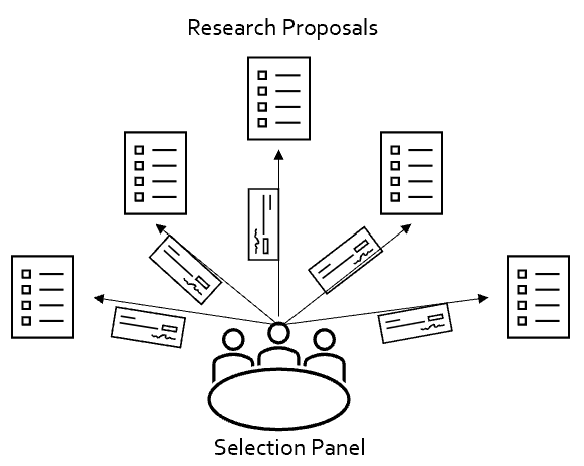}
        \caption{Participatory budgeting for agencies funding research proposals.}
        \label{fig: researchfunding}
\end{figure}

In formal terms, a standard participatory budgeting model comprises several key components: the budget itself, a collection of projects, associated costs for each project, a group of voters, and the preferences expressed by each individual voter. A participatory budgeting rule combines these voter preferences and generates one or more `desirable' budget allocations to the projects as the outcome. The notion of desirability for an outcome is defined based on what we refer to as the `objective' of the rule, which formalizes the criteria guiding the decision-making process.
\section{Landscape of PB Models}\label{sec: intro-variants}
The PB models can be broadly classified with respect to three axes: (i) preferences of the voters (ii) costs of the projects (iii) objective of the PB rule. Let us look at PB through each of these lenses and understand the variants of PB models.
\subsection{Preferences in PB}\label{sec: intro-preferences}
When approaching a voting problem, the initial consideration often revolves around the design of the ballots. The field of social choice offers various approaches for voters to express their preferences regarding the projects under consideration. Furthermore, the participatory budgeting literature proposes additional methods for eliciting preferences that are tailored specifically to PB. In the following discussion, we introduce all the potential ballot formats suitable for PB and illustrate them with a running example.

\begin{example}\label{eg: preferences}
    Suppose there is a total budget of $1M$ units and the following projects need funding: $\{Library, Park, Station, Hospital, School, Museum\}$. There are multiple ways in which voters can express preferences over the projects.
    \begin{figure}[h]
        \centering
        \includegraphics[width=0.63\linewidth]{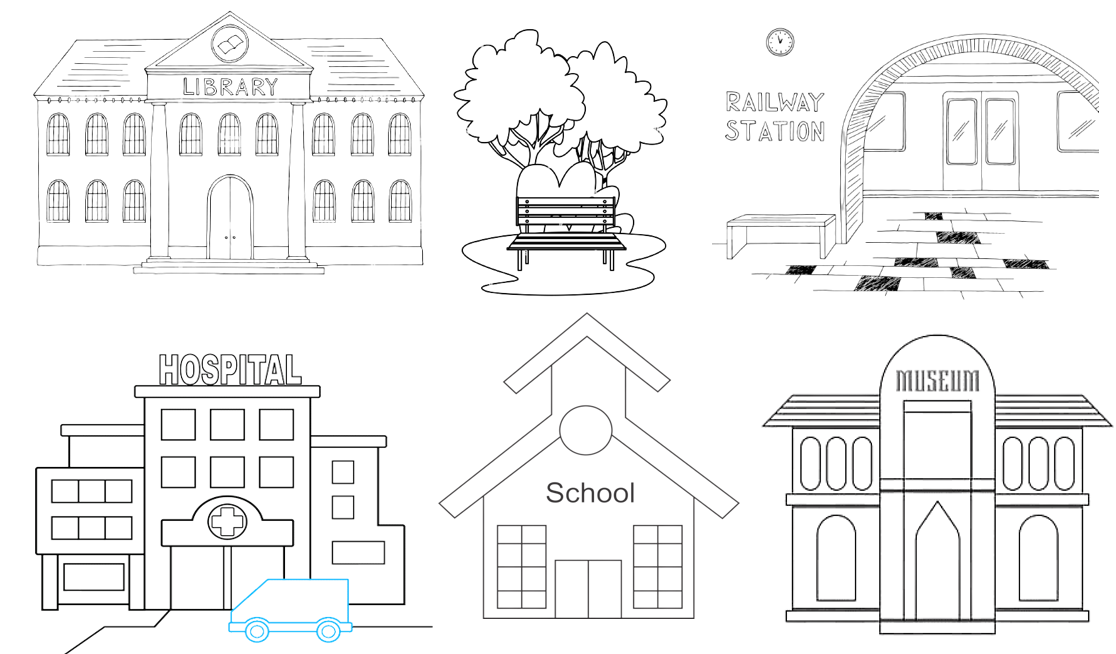}
        \caption{Stylized example to illustrate preferences for \Cref{eg: preferences}.}
        \label{fig: preferences}
    \end{figure}
\end{example}

\begin{itemize}
    \item \textbf{\textit{Dichotomous preferences:}} Every voter reports for each project whether or not she likes the project. In other words, each voter is asked to report a subset of project she likes.
    \begin{itemize}
        \item \textit{Standard dichotomous preferences:} Every voter is free to approve any subset of projects she likes.
        \begin{nonexample}
            A voter may choose to only approve $\{Library, Station, Hospital, School\}$ since they feel park or museum are a waste of money.
        \end{nonexample}
        \item \textit{Knapsack votes:} Every voter approves a subset of projects whose total cost is within the budget limit.
        \begin{nonexample}
            Sometimes, each project may also have some fixed cost associated to them. Every voter may be asked to approve projects such that the total cost is within the budget. For example, suppose library costs $75K$, park costs $50K$, station costs $350K$, hospital costs $500K$, school costs $200K$, and a museum costs $400K$. A voter cannot approve $\{Library, Station, Hospital, School\}$ since the total cost crosses $1M$. She can however approve $\{Library, Station, Hospital\}$.
        \end{nonexample}
        \item \textit{Threshold approval votes:} Every voter approves all and only those projects for which her implicit value crosses a threshold fixed by the PB organizer.
        \begin{nonexample}
            Every voter may be, for instance, asked to approve the projects who she thinks is worth at least $200K$ (this threshold could also be some numerical value of cardinal utility which will be explained shortly).
        \end{nonexample}
    \end{itemize}
    \item \textbf{\textit{Ordinal preferences:}} Every voter gives an ordering or ranking over the projects under consideration.
     \begin{itemize}
        \item \textit{Complete strictly ordinal preferences:} Every voter reports a strict ordering over all the projects.
        \begin{nonexample}
            A voter may report a ranking $Hospital \succ School \succ Museum \succ Library \succ Park \succ Station$.
        \end{nonexample}
        \item \textit{Complete weakly ordinal preferences:} Every voter reports a weak ordering over all the projects. That is, preferences allow for ties between the projects.
        \begin{nonexample}
            A voter may report a ranking $Hospital \succ \{School,Museum\} \succ Library \succ \{Park,Station\}$. That is, the voter has equal preference for $School$ and $Museum$.
        \end{nonexample}
        \item \textit{Incomplete strictly ordinal preferences:}  Every voter reports a strict ordering over only a subset of projects, and ignores the rest.
        \begin{nonexample}
            A voter may report a ranking $Hospital \succ School \succ Museum \succ Library$.
        \end{nonexample}
        \item \textit{Incomplete weakly ordinal preferences:}  Every voter reports a weak ordering over only a subset of projects, and ignores the rest.
        \begin{nonexample}
            A voter may report a ranking $Hospital \succ School \succ \{Museum,Library\}$.
        \end{nonexample}
        \item \textit{Partial ordinal preferences:} Every voter reports only a few pairwise comparisons between the projects, without revealing the entire ordering over those projects.
        \begin{nonexample}
            A voter may report a few comparisons: $School \succ Park$, $School \succ Museum$, $Station \succ Park$, and $Station \succ Museum$. Note that the preference between $Park$ and $Musuem$, or between $School$ and $Station$ cannot be inferred from the above.
        \end{nonexample}
    \end{itemize}
    \item \textbf{\textit{Cardinal preferences:}} Every voter assigns to each project a numerical value that quantifies her liking for the project.
    \begin{nonexample}
            A voter may say that she has a utility of $5$ for $Hospital$ and $School$, a utility of $3$ for $Museum$ and $Station$, a utility of $2$ for $Library$, and $1$ for $Park$. Recall that in a threshold approval vote, the PB organizer can ask a voter to approve all projects valued at least $3$ by her. In such a case, the voter with above stated preferences will approve $\{Hospital, School, Museum, Station\}$.
    \end{nonexample}
\end{itemize}
\subsection{Costs in PB}\label{sec: intro-costs}
In participatory budgeting, the cost of a project signifies the amount that needs to be allocated to the project in the event of its selection for funding. However, the specific context and application of PB can introduce limitations on the funds dedicated to each project. As a consequence, the following scenarios may arise:
\begin{itemize}
    \item \textbf{\textit{Restricted costs:}} The cost of each project is restricted to a single value. A project must either receive an amount exactly equal to this value or must be not funded at all. 
    \begin{nonexample}
        A dam construction project, if funded, may need exactly $\$10B$. Any amount less than this would be inadequate and a higher amount would result in budget wastage.
    \end{nonexample}
    \item \textbf{\textit{Flexible costs:}} The amount allocated to each project is permitted to assume multiple values. This is further split into two possible scenarios:
    \begin{itemize}
        \item \textit{Partially flexible:} A project can be implemented upto different degrees of sophistication and every degree corresponds to a different cost. The amount allocated to the project by the PB rule must belong to this set of permissible costs. 
        \begin{nonexample}
            A building could be constructed with wood, or cement, or stone. Each of these options demands a different cost. The project, if selected, must be allocated an amount equal to one of these permissible costs.
        \end{nonexample}
        \item \textit{Totally flexible:} There is no restriction on the amount allocated to each project. That is, projects do not have any associated cost(s) and any amount can be allocated to each project. 
        \begin{nonexample}
            Any amount, however little or huge, could be used to donate for an environmental cause. It is a welcome contribution and shall be put to use in some way.
        \end{nonexample}
    \end{itemize}
\end{itemize}
\subsection{Objectives of PB}\label{sec: intro-objectives}
As in any social choice setting, PB aims to achieve two primary objectives: maximizing welfare and promoting fairness. Though the precise definitions of welfare and fairness depend on the specific PB model and its application, we informally explain below an overarching idea of what these objectives stand for. In the subsequent chapters, we will provide formal definitions of these concepts, following the introduction of each specific PB model.
\begin{itemize}
    \item \textbf{\textit{Maximizing welfare:}} Utility of a voter is a function that quantifies the benefit a voter derives from an outcome. Various welfare measures are formulated with respect to a given utility notion. We explain three such prominent measures below:
    \begin{itemize}
        \item \textit{Utilitarian welfare:} This is the sum of utilities of all the voters. An outcome that maximizes utilitarian welfare selects a subset of projects that maximizes the sum of utility of all the voters such that the total cost of the set is within the budget.
        \item \textit{Egalitarian welfare:} This is the utility of the worst-off voter, or in other words, the least of the utilities of all the voters. An outcome that maximizes egalitarian welfare selects a subset of projects that maximizes the utility of the worst-off voter such that the total cost of the set is within the budget.
        \item \textit{Nash welfare:} This is the geometric mean of utilities of all the voters, and is often viewed as a trade-off between utilitarian and egalitarian welfare.
    \end{itemize}
    \item \textbf{\textit{Achieving fairness:}} Fairness notions are established with the primary aim of guaranteeing equitable treatment for all voters or ensuring that they receive the treatment they deserve. Various approaches exist to uphold this principle. In the following discussion, we provide an informal explanation of some such ideas.
    \begin{itemize}
        \item \textit{Guarantee-based fairness:}
        Certain notions of individual-fairness guarantee that each voter should find satisfaction with at least a predetermined fraction of the budget allocation. Meanwhile, various group-fairness notions dictate that for a group of voters, an amount proportionate to their group size must be allocated in accordance with their collective preferences.
        \item \textit{Egalitarian fairness:} The goal of maximizing egalitarian welfare is also often viewed as a fairness goal that ensures that a voter from a minority group is treated on an equal footing with a voter from the majority group. This objective aims to promote symmetry in the treatment of individuals, irrespective of their strength.
    \end{itemize}
\end{itemize}
\section{Relating the Thesis to the State-of-the-Art}\label{sec: intro-positioning}
We first contextualize the thesis within the landscape discussed in the preceding section. We particularly confine our work to dichotomous and ordinal preferences, especially due to their cognitive simplicity over cardinal preferences \cite{benade2018efficiency}. 
\vspace*{-0.7\baselineskip}
\begin{center}
\begin{tikzpicture}
    \node (pb){\textbf{Participatory Budgeting}};
    \node (prefs)[below left=of pb,align=center]{
        \begin{tabular}{|ll|}
\hline
\multicolumn{2}{|c|}{\textbf{Preferences}}        \\ \hline
\multicolumn{1}{|l}{{(P1)}} & Dichotomous \\ \hline
\multicolumn{1}{|l}{(P2)} & Ordinal     \\ \hline
\end{tabular}
    };
    \node (costs)[below =of pb,align=center]{
        \begin{tabular}{|ll|}
        \hline
\multicolumn{2}{|c|}{\textbf{Costs}} \\\hline
\multicolumn{1}{|l}{(C1)} & Restricted \\\hline
\multicolumn{1}{|l}{(C2)} & Flexible   \\\hline
\end{tabular}
    };
    \node (objectives)[below right=of pb,align=center]{
        \begin{tabular}{|ll|}
        \hline
\multicolumn{2}{|c|}{\textbf{Objective}} \\\hline
\multicolumn{1}{|l}{(O1)} & Welfare Maximization \\\hline
\multicolumn{1}{|l}{(O2)} & Fairness   \\\hline
\end{tabular}
    };
    \draw[->] (pb)--(prefs);
    \draw[->] (pb)--(costs);
    \draw[->] (pb)--(objectives);
\end{tikzpicture}
\end{center}
\begin{table}[H]
\centering
    \begin{tabular}{cccccl}
(P1) & - & (C1) & - & (O1) & \cite{talmon2019framework,fluschnik2019fair,goel2019knapsack,baumeister2020irresolute,rey2020designing,jain2020participatoryg,freeman2021truthful,benade2021preference,laruelle2021voting}, \Cref{chap: d-re} \\
(P1) & - & (C1) & - & (O2) & \cite{aziz2018proportionally,fluschnik2019fair,laruelle2021voting,fairstein2021proportional,los2022proportional,brill2023proportionality}, \Cref{chap: d-re} \\
(P1) & - & (C2) & - & (O1) &  \cite{bogomolnaia2005collective,aziz2019fair}, \Cref{chap: d-fl} \\
(P1) & - & (C2) & - & (O2) & \cite{bogomolnaia2005collective,duddy2015fair,aziz2019fair}  \\
(P2) & - & (C1) & - & (O1) & \cite{lu2011budgeted, benade2017preference,laruelle2021voting}, \Cref{chap: o-re}  \\
(P2) & - & (C1) & - & (O2) & \cite{aziz2021proportionally,laruelle2021voting,pierczynski2021proportional}, \Cref{chap: o-re} \\
(P2) & - & (C2) & - & (O1) & \cite{anshelevich2017randomized}  \\
(P2) & - & (C2) & - & (O2) & \cite{aziz2014generalization,aziz2018rank,fain2016core,airiau2019portioning}, \Cref{chap: o-fl}
\end{tabular}
\caption{A chart that brings out the positioning of the thesis contributions with respect to the existing literature. The first entry indicates the preference elicitation method, the second indicates whether the costs are restricted or flexible, and the third entry indicates the objective considered. For example, `(P1)-(C2)-(O1)' indicates that the work studies welfare maximization under dichotomous preferences and flexible costs.}
\label{tab: positioning}
\end{table}
\subsection{Relevant Work and Research Gaps in the Literature}\label{sec: intro-gaps}
We now emphasize the existing gaps in the literature and elucidate how the thesis addresses and bridges these gaps.
\subsubsection{Dichotomous Preferences - Restricted Costs}\label{sec: intro-d-re-gap}
In the context of welfare maximization under restricted costs model, Fluschnik et al. \cite{fluschnik2019fair} studied Nash welfare under cardinal preferences, whose results can also be extended to the setting with dichotomous preferences. They prove that Nash welfare maximization is computationally hard even for the most restricted special case with just two voters and unit cost projects. Thus, the literature, with an exception of the work by Laruelle \cite{laruelle2021voting}, only focused on utilitarian welfare \cite{talmon2019framework,goel2019knapsack,baumeister2020irresolute,rey2020designing,jain2020participatoryg,freeman2021truthful,benade2021preference,laruelle2021voting}. Likewise, when it comes to fairness, all the existing works, other than the work by Laruelle \cite{laruelle2021voting}, looked at a guarantee-based fairness notion called proportionality \cite{aziz2018proportionally,fluschnik2019fair,laruelle2021voting,fairstein2021proportional,los2022proportional,brill2023proportionality} which ensures that every group of cohesive voters (i.e., voters having similar preferences) must receive a utility proportional to their group size. Both utilitarian welfare and proportionality suffer a common drawback: they are majority driven. Let us understand this with an example.
\begin{figure}[!ht]
    \centering
    \includegraphics[width=0.8\linewidth]{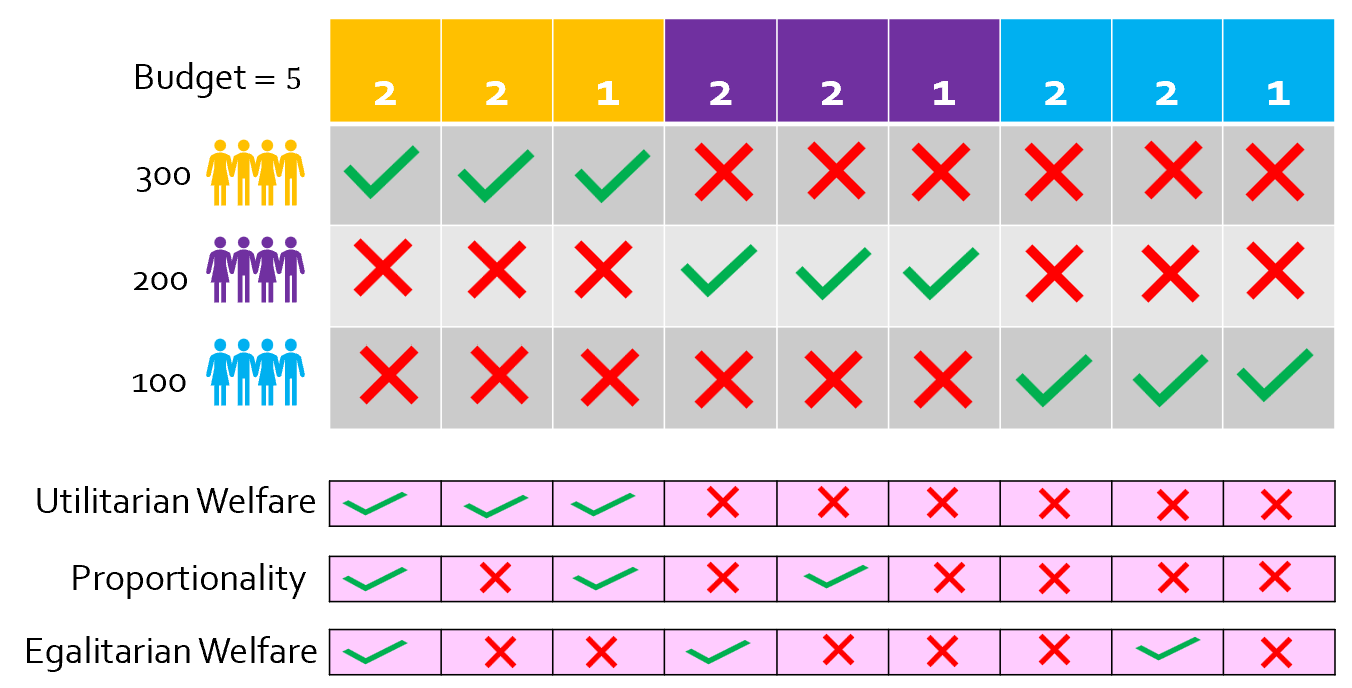}
    \caption{Stylized example to illustrate the need for egalitarian welfare in PB under dichotomous preferences and restricted costs.}
    \label{fig: motivation_egal}
\end{figure}

In \Cref{fig: motivation_egal}, each color denotes a county or district. Each column corresponds to a school construction project proposed in that district and the heading of the column denotes its cost. The total budget is $5$ units. The populations of the three districts are $300,200,$ and $100$ respectively. Every voter approves all and only the projects proposed in her district. The resultant outcomes of various objectives are denoted in pink-colored rows. It can be observed that the outcomes maximizing utilitarian welfare or achieving proportionality ignore the needs of sky-blue district. However, the government may want to construct schools in as many districts as possible and promote universal literacy, instead of constructing multiple schools in one populous district. Therefore, a more desirable goal could be to \emph{cover} all the voters. An outcome that maximizes the egalitarian welfare fulfills this purpose, as displayed in \Cref{fig: motivation_egal}. In fact, this is why the egalitarian welfare maximization is also considered to be a fairness objective as explained in \Cref{sec: intro-objectives}.

Laruelle \cite{laruelle2021voting} experimentally evaluates a \emph{sub-optimal} greedy algorithm for maximizing egalitarian welfare and achieving egalitarian fairness. However, a study of achieving \emph{optimal} egalitarian welfare remained to be explored. We fill this void in our \Cref{chap: d-re} by studying the egalitarian welfare maximization and providing equitable treatment to all the voters.
\subsubsection{Dichotomous Preferences - Flexible Costs}\label{sec: intro-d-fl-gap}
The existing work on PB under dichotomous preferences and flexible costs assumes that the costs are totally flexible and that any amount can be allocated to any project \cite{bogomolnaia2005collective,duddy2015fair,aziz2019fair}. However, there could be many scenarios where the costs are \emph{partially} flexible. One such example is illustrated in \Cref{eg: motivation_degrees}.
\begin{example}\label{eg: motivation_degrees}
    \begin{figure}[h]
    \centering
    \includegraphics[width=0.6\linewidth]{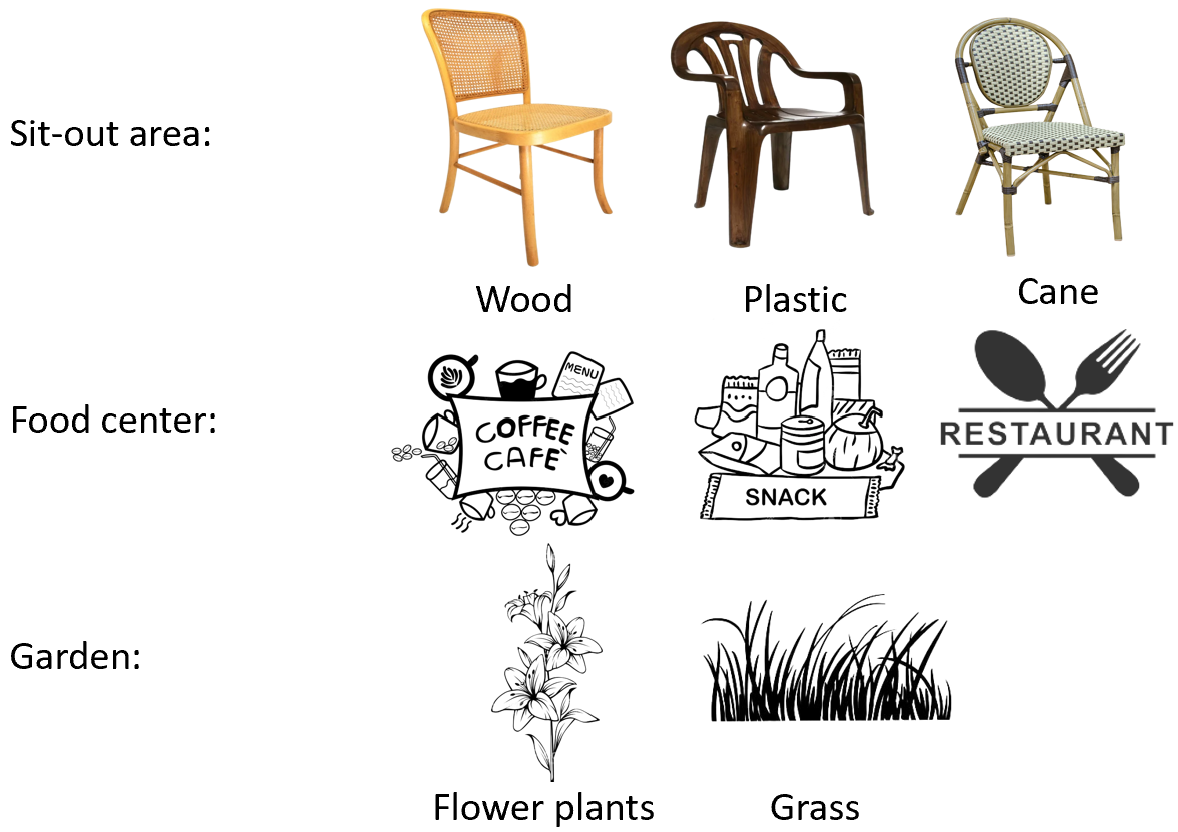}
    \caption{Stylized illustration of \Cref{eg: motivation_degrees} to motivate PB under partially flexible costs.}
    \label{fig: motivation_degrees}
\end{figure}
    Suppose a multi-national company is renovating the facilities at its office to promote mental health of its employees and to motivate them to work from the office. Project proposals include maintaining a small garden, building a sit-out area, constructing a food point etc. and the management decides to aggregate the preferences of employees. Each project can be executed in multiple ways. The food centre could be coffee cafe, or a small snack and fast-food center, or a full fledged restaurant; sit-out area may have chairs made of different materials, and so on. Each of these possibilities correspond to a different cost.
\end{example}

Similarly, a health service construction project could build a primary healthcare center, or a clinic, or a multi-speciality hospital. Assume that the construction of primary healthcare center needs $\$1M$, that of a clinic needs $\$10M$, and that of a multi-speciality hospital needs $\$100M$. We need a constraint which enforces that if the health service project is chosen to be funded, the amount allocated to it cannot be arbitrary and instead must belong to the set $\{1,10,100\}$. In \Cref{chap: d-fl}, we study the model with \emph{partially} flexible costs and maximize the utilitarian welfare. 
\subsubsection{Ordinal Preferences - Restricted Costs}\label{sec: intro-o-re-gap}
\paragraph{Welfare Maximization.}
Welfare maximization in PB under restricted costs and ordinal preferences is studied by Benade et al. \cite{benade2017preference} and Laurelle \cite{laruelle2021voting}. Benade et al. \cite{benade2017preference} assume that all the voters implicitly have cardinal preferences over the projects and that the PB organizer has access only to the ordinal preferences induced by these cardinal preferences. In other words, ordinal preferences are assumed to act as a proxy to the implicit cardinal preferences. Under this assumption, the authors quantified the distortion or loss in the utilitarian welfare caused due to the lack of access to cardinal preferences. However, this work comes with a set of drawbacks: (i) many times, the implicit preferences of the voters could simply be weakly ordinal and not cardinal (ii) they assume that the ordinal preferences are strict, whereas in real world, voters could be indifferent between two projects (iii) they assume that every voter has a preference over every project, whereas in real world, voters could have preferences only over a few projects and have no opinion on the others. Laurelle \cite{laruelle2021voting} experimentally evaluated a sub-optimal greedy algorithm to maximize welfare, but did not study any PB rules that achieve optimal welfare. We bridge all these gaps in \Cref{sec: o-re-welfare} of \Cref{chap: o-re}, wherein we propose PB rules that achieve optimal utilitarian welfare for PB under incomplete weakly ordinal preferences. 

\paragraph{Fairness.}
While the egalitarian fairness for PB under ordinal preferences and restricted costs is studied by Laurelle \cite{laruelle2021voting}, the guarantee-based fairness notions are studied by Aziz and Lee \cite{aziz2021proportionally} and Pierczynski et al. \cite{pierczynski2021proportional}. Aziz and Lee \cite{aziz2021proportionally} proposed two fairness notions, CPSC (Comparative Proportionality for Solid Coalitions) and IPSC (Inclusive Proportionality for Solid Coalitions), and Pierczynski et al. \cite{pierczynski2021proportional} followed the same lines of fairness (though their results are for cardinal preferences, they could be applied to ordinal preferences). Notably, all these notions suffer from major drawbacks: (i) they assume the ordinal preferences to be complete, whereas in real-world, voters could have no opinion on some projects (ii) an outcome satisfying CPSC is not always guaranteed to exist (iii) they consider fairness to be a \emph{hard constraint} instead of \emph{optimizing} it. Many times, this may result in sub-standard choices. Let us look at a toy example that demonstrates this.
\begin{figure}[!ht]
    \centering
    \includegraphics[width=0.3\linewidth]{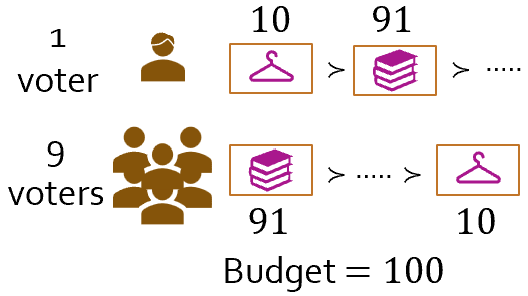}
    \caption{Stylized example to illustrate the drawback of existing fairness notions in PB under ordinal preferences and restricted costs.}
    \label{fig: motivation_args}
\end{figure}

In \Cref{fig: motivation_args}, we have $10$ voters and a budget of $100$ units. A project proposal to construct shopping mall costs $10$ units, whereas all the remaining projects, including a library construction proposal, costs $91$ units each. All the voters except one (say voter $i$) have the same preference: they prefer library the most and shopping mall the least. Whereas voter $i$ being a shopping buff prefers shopping mall over everything else. Existing fairness notions insist that the voter $i$ has a right over $\frac{1}{10}^{\text{th}}$ fraction of budget, i.e., on $10$ units of budget. Thus, any fair outcome ends up selecting only the shopping mall project and chooses no other project, resulting in the wastage of $90$ units of budget. Moreover, mall is the least preferred project for $90\%$ of the voters. Upon closer examination of the example, it appears rather strange that the library project was not chosen instead: library is the \emph{second most} preferred project for $i$ and the \emph{most} preferred project for all the remaining voters. In \Cref{sec: o-re-fair} of \Cref{chap: o-re}, we elaborate on the underlying cause of this drawback in the existing fairness notions and present two families of fair rules that effectively overcome this limitation.
\subsubsection{Ordinal Preferences - Flexible Costs}\label{sec: intro-o-fl-gap}
For the space of PB under ordinal preferences and totally flexible costs, several works studied individual-fairness \cite{aziz2014generalization,aziz2018rank,airiau2019portioning} and a few papers looked at group-fairness \cite{aziz2018rank,fain2016core}. The existing fairness notions have a few limitations. The first limitation is that they assume all the voters to have equal rights over the budget. However, there could be many real-world scenarios where each group is endowed with a different entitlement. One such example is the use of affirmative actions (\burl{https://en.wikipedia.org/wiki/Affirmative_action}) by numerous governments in the world to uplift the minority and historically discriminated groups. The government may want the underprivileged groups to have a say on higher fraction of the budget. The second limitation arises when the existing notions are applied to \emph{strictly} ordinal preferences: the notions simply reduce to dictatorship and yield unfavorable outcomes. In \Cref{chap: o-fl}, we will illustrate this weakness with multiple examples. Finally, in \Cref{sec: o-fl-notions}, we propose fairness notions that overcome these disadvantages. We further characterize group-fair and individually-fair PB rules that also satisfy other desirable properties.
\section{Thesis Contributions and Overview}\label{sec: intro-overview}
In this section, we outline the structure of the thesis and offer a concise overview of the key technical contributions presented in each chapter.
The thesis looks at PB models under each of the four possible combinations (dichotomous preferences-restricted costs; dichotomous preferences-flexible costs; ordinal preferences-restricted costs; ordinal preferences-flexible costs). The results are organized in two parts: (I) dichotomous preferences and (II) ordinal preferences. Each part further contains two chapters: the former chapter considers PB model under restricted costs, whereas the latter chapter considers PB model under flexible costs.

\begin{table}[h]
\begin{tabular}{lcclcl}
\multicolumn{1}{c}{}                   &                                                                                                                             & \multicolumn{4}{c}{\textbf{Costs}}                                                                                                      \\
\multicolumn{1}{c}{}                   &                                                                                                                             & \multicolumn{2}{c}{}                                               & \multicolumn{2}{c}{}                                               \\
\multicolumn{1}{c}{}                   &                                                                                                                             & \multicolumn{2}{c}{\multirow{-2}{*}{Restricted}}                   & \multicolumn{2}{c}{\multirow{-2}{*}{Flexible}}                     \\ \cline{3-6} 
\textbf{}                              & \multicolumn{1}{c|}{{\color[HTML]{FE0000} }}                                                                                & \multicolumn{2}{c|}{{\color[HTML]{FE0000} }}                       & \multicolumn{2}{c|}{{\color[HTML]{FE0000} }}                       \\
\textbf{}                              & \multicolumn{1}{c|}{{\color[HTML]{FE0000} }}                                                                                & \multicolumn{2}{c|}{{\color[HTML]{FE0000} }}                       & \multicolumn{2}{c|}{{\color[HTML]{FE0000} }}                       \\
                                       & \multicolumn{1}{c|}{\multirow{-3}{*}{{\color[HTML]{FE0000} \begin{tabular}[c]{@{}c@{}}Part I:\\ Dichotomous\end{tabular}}}} & \multicolumn{2}{c|}{\multirow{-3}{*}{{\color[HTML]{FE0000} \Cref{chap: d-re}}}} & \multicolumn{2}{c|}{\multirow{-3}{*}{{\color[HTML]{FE0000} \Cref{chap: d-fl}}}} \\ \cline{3-6} 
\multirow{-2}{*}{\textbf{Preferences}} & \multicolumn{1}{c|}{{\color[HTML]{009901} }}                                                                                & \multicolumn{2}{c|}{{\color[HTML]{009901} }}                       & \multicolumn{2}{c|}{{\color[HTML]{009901} }}                       \\
                                       & \multicolumn{1}{c|}{{\color[HTML]{009901} }}                                                                                & \multicolumn{2}{c|}{{\color[HTML]{009901} }}                       & \multicolumn{2}{c|}{{\color[HTML]{009901} }}                       \\
                                       & \multicolumn{1}{c|}{\multirow{-3}{*}{{\color[HTML]{009901} \begin{tabular}[c]{@{}c@{}}Part II:\\ Ordinal\end{tabular}}}}    & \multicolumn{2}{c|}{\multirow{-3}{*}{{\color[HTML]{009901} \Cref{chap: o-re}}}} & \multicolumn{2}{c|}{\multirow{-3}{*}{{\color[HTML]{009901} \Cref{chap: o-fl}}}} \\ \cline{3-6} 
\end{tabular}
\caption{An outline of the organization of the Thesis}
\label{tab: organization}
\end{table}

For each of the four PB models, we propose novel PB rules that maximize welfare or promote fairness. To maximize welfare, we study utility notions existing in the literature and also introduce new notions tailored to each model. We propose PB rules that optimize utilitarian or egalitarian welfare and thoroughly analyze the computational and axiomatic aspects of these rules. In the context of fairness, we critically assess existing notions, emphasizing their limitations and drawbacks. We put forward novel fairness notions which overcome these limitations, and either construct or characterize several families of innovative fair PB rules. Furthermore, we investigate the computational complexity of the newly proposed fair PB rules. Overall, the thesis presents a range of novel utility notions, axiomatic properties, fairness notions, and participatory budgeting rules for the four considered PB models.
\subsection*{\Cref{part: dichotomous}: Dichotomous Preferences}
This part has two chapters, \Cref{chap: d-re} and \Cref{chap: d-fl}, respectively dealing with restricted costs and flexible costs. The first part studies the optimization of egalitarian welfare, whereas the second part proposes four different families of rules each optimizing utilitarian welfare.
\subsubsection*{\Cref{chap: d-re}: Restricted Costs: Egalitarian Participatory Budgeting}
Egalitarianism plays a crucial role in participatory budgeting (PB), functioning as both a welfare and fairness objective, as discussed in \Cref{sec: intro-variants}. We introduce and examine a natural egalitarian rule ,called \emph{Maxmin Participatory Budgeting (MPB)}. The study encompasses two main components: computational analysis and axiomatic analysis.

In the computational analysis, we establish the computational complexity of MPB by demonstrating its strong NP-hardness. We also delve into fixed parameter tractability results and then present an approximation algorithm that guarantees distinct additive approximations for various families of instances. Furthermore, we provide an upper bound on the achievable approximation ratio for exhaustive strategy-proof PB algorithms. In the axiomatic analysis, we explore MPB by extending existing axioms and introducing a novel fairness axiom known as \emph{maximal coverage}. Through rigorous analysis, we explore the compatibility of MPB with each of the axioms and prove that MPB satisfies maximal coverage, making it desirable in terms of fairness.
\subsubsection*{\Cref{chap: d-fl}: Flexible Costs: Welfare Maximization when Projects have Multiple Degrees of Sophistication}
We introduce a novel model that considers projects with a finite set of permissible costs, each representing a distinct level of project sophistication. Voters participate express their preferences for each project, specifying the range of costs they believe the project deserves. The outcome of the participatory budgeting rule involves the selection of a subset of projects and the determination of the amount allocated to each selected project.

We propose four utility notions that capture different aspects of voter preferences and analyze welfare-maximizing rules corresponding to each of these notions. Importantly, we demonstrate that the positive results observed in the context of single-cost projects can be extended to our framework with multiple permissible costs. Additionally, we undertake a thorough analysis of the fixed parameter tractability of the problems, considering both existing parameters and novel parameters to assess computational complexity. We also introduce a set of intuitive axioms and scrutinize their compatibility with the four studied rules, thereby gaining insights into the fairness and desirability properties of the proposed rules.
\subsection*{\Cref{part: ordinal}: Ordinal Preferences}
This part also has two chapters, \ref{chap: o-re} and \ref{chap: o-fl}, respectively dealing with restricted and flexible costs. \Cref{chap: o-re} proposes families of rules that maximize welfare and two families that achieve fairness. \Cref{chap: o-fl} characterizes the fair rules in a special domain of ordinal preferences.
\subsubsection*{\Cref{chap: o-re}: Restricted Costs: Welfare Maximization and Fairness under Incomplete Weakly Ordinal Preferences}
This chapter focuses on incomplete weakly ordinal preferences and has two logical components. The first component presents rules that maximize the utilitarian welfare, whereas the second component presents the rules that achieve fairness.

In the first component, we introduce a family of rules, \emph{dichotomous translation rules}, which expand on existing welfare-maximizing rules for dichotomous ordinal preferences. Likewise, we also introduce the \emph{PB-CC rule} (participatory budgeting chamberlin courant rule) that expands on the well known CC rule for strictly ordinal preferences. We show that all our expansions largely maintain and even enhance the computational and axiomatic properties of the parent rules. We also propose a new relevant axiom, \emph{pro-affordability}. The second component introduces the novel classes of \emph{average rank-share guarantee rules} to address fairness by overcoming limitations of existing fairness concepts in the literature.
\subsubsection*{\Cref{chap: o-fl}: Flexible Costs: Characterization of Group-Fair and Individual-Fair Rules under Single-Peaked Preferences}
We assume that any amount can be allocated to each project, thereby making the model equivalent to random social choice. We study fairness in social choice under single-peaked domain. Literature has extensively examined the characterization of social choice rules in the single-peaked domain. We extend these findings by incorporating fairness considerations.

To address group-fairness, we assume a partition of the voters into logical groups, based on inherent attributes such as gender, race, or location. To capture fairness within each group, we introduce the concept of \emph{Group-Wise Anonymity}. To capture fairness across the groups, we propose a weak and a strong notion of \emph{Group-Entitlement Guarantee (GEG)}. These proposed fairness notions represent generalizations of existing individual-fairness notions. We characterize deterministic and random social choice rules that achieve group-fairness. We provide two separate characterizations of random rules: a direct characterization and an extreme point characterization (as convex combinations of deterministic rules). Additionally, we explore the case where no groups are present and provide more precise characterizations of the rules that achieve individual-fairness.

\blankpage
\chapter{Background} \label{chap:prelims}

\begin{quote} 
 \textit{In this chapter, we provide the background necessary for understanding the technical contributions of this thesis. We begin by introducing fundamental notation and then delve into the essential concepts of participatory budgeting and computational complexity as prerequisites for the ensuing discussions.}
\end{quote}

\section{Participatory Budgeting}\label{sec: prelims-pb}
Participatory Budgeting (PB) is a paradigm used to distribute a divisible resource, called \emph{budget}, among different \emph{projects}. It involves aggregating preferences of stakeholders (referred to as \emph{voters}) to decide how the budget should be allocated to the projects. There are multiple variants of PB models, as explained in \Cref{sec: intro-variants}. We start this section with introducing important notations and mathematically formulating all the variants of PB.
\subsection{Notations}\label{sec: prelims-notations}
Throughout the thesis, we assume that $N = \{1,2,\ldots,n\}$ represents the set of $n$ voters and $\proj = \{p_1,p_2,\ldots,p_m\}$ represents the set of $m$ projects. The available budget that needs to be distributed is represented by \bud. Additionally, there could be different kinds of restrictions on the funds allocated to each project and voters can express their preferences in various ways.
\subsubsection{Costs of Projects}\label{sec: prelims-costs}
The cost of a project signifies the amount that needs to be allocated to the project in the event of its selection for funding. There may be a restriction on the amount each project $p_j \in \proj$ could receive. That is, if $p_j$ is chosen to be allocated a non-zero amount in the final outcome, that amount has to be a value from \domainof{p_j}. We also refer to \domainof{p_j} as the set of possible costs of the project $p_j$. We mathematically define various possibilities of \domainof{p_j}, as described in \Cref{sec: intro-costs}.
\begin{itemize}
    \item \textbf{\textit{Restricted costs:}} The cost of each project $p_j$ is restricted to a single given value $\boldsymbol{\cof{p_j}} \in \mathbb{N}$. That is, $\domainof{p_j} = \{\cof{p_j}\}$. This model is also referred to as \textit{indivisible PB} in the literature.
    \item \textbf{\textit{Flexible costs:}} The allocation amount for each project can take on multiple values, which can further be divided into two cases:
    \begin{itemize}
        \item \textit{Partially flexible:} Each project $p_j \in \proj$ has multiple degrees of sophistication, $\{\pdof{}{1},\ldots,\pdof{}{\mdof{}}\}$, each representing a different way of executing the project. Each degree \pdof{}{} needs a cost of $\dcof{}{} \in \mathbb{N}$. If the project $p_j$ is chosen to be funded, the amount allocated to it must belong to the discrete set $\domainof{p_j} = \boldsymbol{\curly{\dcof{}{1},\ldots,\dcof{}{\mdof{}}}}$.
        \item \textit{Totally flexible:} There are no fixed costs associated with the projects and each project $p_j \in \proj$ may receive any amount. That is, $\domainof{p_j} = (0,\bud]$. This model is also referred to as \textit{divisible PB} in the literature.
    \end{itemize}
\end{itemize}
For a set $S$ of projects, we use $\cof{S}$ to denote the sum of costs of all the projects (or degrees of projects, as applicable) in the set $S$. In the restricted costs model, we say that a set $S$ of projects is \textbf{\textit{feasible}} if its total cost is within the budget limit, i.e., $\cof{S} \leq \bud$. We use $\boldsymbol{\feasible}$ to denote the set of all feasible subsets of projects, i.e., $\feasible = \curly{S \subseteq \proj: \cof{S} \leq \bud}$.
\subsubsection{Preferences of Voters}\label{sec: prelims-preferences}
In this section, we provide mathematical formalizations for the preference elicitation methods discussed in \Cref{sec: intro-preferences}. Our focus is limited to the preference elicitation methods investigated in this thesis.
\begin{itemize}
    \item \textbf{\textit{Dichotomous preferences:}}
    \begin{itemize}
        \item \textit{Standard dichotomous preferences:} Every voter $i \in N$ reports a subset $\boldsymbol{A_i}$ of projects.
        \item \textit{Knapsack votes:} Every voter $i$ reports a subset $\boldsymbol{A_i}$ of projects such that $c(A_i) \leq \bud$.
    \end{itemize}
    Each project in the set $A_i$ is considered to be \textbf{\textit{approved}} by $i$.
    \item \textbf{\textit{Ordinal preferences:}}
    \begin{itemize}
        \item \textit{Complete strictly ordinal preferences:} Every voter $i$ reports a strict ordering $\boldsymbol{\succ_i}$ over all projects in \proj.
        \item \textit{Incomplete weakly ordinal preferences:} Every voter $i$ reports a weak ordering $\boldsymbol{\succeq_i}$ over a subset of projects $S \subseteq \proj$.
    \end{itemize}
    We denote by $\boldsymbol{r_i(p)}$ the \textbf{\textit{rank}} of project $p$ in the preference of $i$. Rank of a project $p$ is defined to be exactly one greater than the number of projects strictly preferred over $p$. Given a value $t \in \{1,\ldots,m\}$, we use $\boldsymbol{\pat{t}}$ to denote all the projects ranked exactly $t$ in $\succeq_i$. Likewise, we use $\boldsymbol{\opat{}{t}}$ to denote the project ranked $t$ in $\succ_i$.
\end{itemize}
Let \NN denote the collection of preferences of all the voters in $N$. Under dichotomous preferences, we use $\boldsymbol{\approf}$ to denote the preference profile, i.e., $\approf = \NN = (A_i)_{i \in N}$. When the preferences are ordinal, we use $\boldsymbol{\rprof}$ to denote the preference profile, i.e., $\rprof = \NN = (\succeq_i)_{i \in N}$ (or $\rprof = (\succ_i)_{i \in N}$ when the preferences are strictly ordinal). In this thesis, henceforth, the term dichotomous preferences implies standard dichotomous preferences unless specified otherwise.

A typical participatory budgeting instance $\boldsymbol{\instance}$ is $\langle \bud,\voters,\proj,(\domainof{p_j})_{p_j \in \proj},\NN \rangle$. In each chapter, the variables $(\domainof{p_j})_{p_j \in \proj}$ and \NN will be adjusted accordingly based on the specific preference elicitation method and cost restriction in each model. Please note that in a model with totally flexible costs, $(\domainof{p_j})_{p_j \in \proj}$ can be omitted from the instance since it is implicitly understood.
\subsubsection{Participatory Budgeting Rule}\label{sec: prelims-pbrule}
In the PB model with restricted costs, a PB rule aggregates the preferences of voters and outputs one or more feasible subsets of projects.
\begin{definition}[\textbf{PB Rule under Restricted Costs}]\label{def: pbrule-restricted}
Given a PB instance \instance, a PB rule \pbrule generates one or more feasible subsets of projects. That is, $\ruleof{\pbrule}{} \subseteq \feasible$.
\end{definition}
A PB rule under restricted costs is said to be \textbf{\textit{irresolute}} if it outputs more than one feasible subsets of projects. In the PB model with flexible costs, a PB rule aggregates the preferences of voters and outputs the corresponding amount allocated to each project. Note that if a project is not chosen to be funded, it gets allocated an amount of $0$.
\begin{definition}[\textbf{PB Rule under Flexible Costs}]\label{def: pbrule-flexible}
Given a PB instance \instance, a PB rule \pbrule outputs the corresponding allocated cost $x_j \in \domainof{p_j} \cup \curly{0}$ for each project $p_j \in \proj$.
\end{definition}
\subsection{Utility Notions}\label{sec: prelims-utilities}
There are various methods to measure the utility that a voter obtains from the outcome of a PB rule. In this section, we present the existing utility notions for both dichotomous preferences and ordinal preferences.
\begin{note}
    Utility notions in this section are only defined for the \textit{restricted costs} model. Utility notions for partially flexible costs model do not exist in the literature (the thesis addresses this gap in subsequent chapters). While utility notions for totally flexible cost exist in the literature, they are beyond the purview of this thesis.
\end{note}
\subsubsection{Utilities under Dichotomous Preferences}\label{sec: prelims-dutilities}
Talmon and Faliszewski \cite{talmon2019framework} introduced three notions of utilities for PB under dichotomous preferences and restricted costs. They are as follows:
\begin{enumerate}
    \item \textbf{Cardinal utility}: Each voter \ii derives a utility of $1$ from an approved project $p$ if the project is chosen in $S$ and $0$ otherwise. That is, $$\utof{}{} = |p: p \in A_i \cap S|.$$
    \item \textbf{Cost utility}: Each voter \ii derives a utility of $\cof{p}$ from an approved project $p$ if the project is chosen in $S$ and $0$ otherwise. That is, $$\utof{}{} = \suml{p \in A_i\cap S}{\cof{p}}.$$
    \item \textbf{Boolean utility}: Each voter \ii derives a utility of $1$ if at least one project in $A_i$ is chosen in $S$ and $0$ otherwise. That is, 
    \begin{align*}
        \utof{}{} \!=\!
        \begin{cases} 
        1 &  A_i \cap S \neq \emptyset\\
        0 & \text{otherwise}
        \end{cases}
    \end{align*}
\end{enumerate}
\subsubsection{Utilities under Ordinal Preferences}\label{sec: prelims-outilities}
While there is no specific utility notion for PB under ordinal preferences defined in the literature, various utility notions established for other social choice problems under ordinal preferences can readily be extended to PB as well. In this thesis, we specifically focus one such utility concept: Chamberlin-Courant (CC) utility \cite{chamberlin1983representative}. The notion intuitively implies that a voter derives utility from the rank of her most preferred funded project.

\vspace*{0.5\baselineskip}
\noindent{\textbf{CC utility}: Each voter $i$ with an ordinal preference derives a utility of $\utof{}{} = m - \minl{p \in S}{r_i(p)}$}.

\begin{nonexample}
    Suppose we have $m = 4$ projects. Let the ordinal preference of voter $i$ be $\{p_1,p_3\} \succ_i p_4 \succ_i p_2$. Consider the set $S = \{p_2,p_4\}$. The most favorite funded project of $i$ is $p_4$ and its rank is $r_i(p_4) = 3$ (since exactly two projects are preferred over it). Therefore $\utof{}{} = 4-3 = 1$.
\end{nonexample}
\subsection{Welfare Measures}\label{sec: prelims-welfare}
Welfare maximization has been one of the most commonly pursued objective in social choice literature. As informally discussed in \Cref{sec: intro-objectives}, the literature formulates several welfare measures with respect to a utility notion. In this thesis, we investigate two such welfare measures: utilitarian welfare and egalitarian welfare.
\begin{enumerate}
    \item \textbf{Utilitarian Welfare}: Also referred to as social welfare in the literature, this measure simply reflects the sum of utilities of all the voters, that is, $UW(S) = \suml{i \in \voters}{\utof{}{}}$.
    \item \textbf{Egalitarian Welfare}: Also referred to as rawlsian welfare in the literature, this measure reflects the utility of the worst-off voter, that is, $EW(S) = \minl{i \in \voters}{\utof{}{}}$.
\end{enumerate}
The typical objective of a welfare-maximizing PB rule is to find feasible set(s) of projects which maximizes the welfare objective. Nash welfare is beyond the purview of this thesis, particularly due to its extreme computational intractability as mentioned in \Cref{sec: intro-d-re-gap}.
\section{Computational Complexity Concepts}\label{sec: prelims-complexity}
Computational complexity of a problem is an evaluation of the resources such as time and space required to solve the problem \cite{garey1979computers}. Computational complexity theory, in general, focuses on classifying problems into specific \emph{complexity classes}, such that all the problems within a class are understood to have similar complexity \cite{papadimitriou2003computational}. This thesis investigates the time complexity of various participatory budgeting problems. We briefly introduce the background on time complexity classes.

The problems in theoretical computer science are broadly classified into two categories: decision problems and optimization problems. A decision problem focuses on verifying whether or not a solution satisfying required properties \emph{exist}. Alternatively, the corresponding optimization problem \emph{computes} such a solution(s). Clearly, optimization version of a problem is at least as hard as its decision version. So, the positive results are usually presented for the optimization version, whereas the negative/impossibility results are presented for the latter.

\subsection{${\Poly}$ and ${\NP}$}\label{sec: prelims-pnp}
The most fundamental classes of time complexity of problems are \Poly and \NP. The class \Poly is a collection of all problems that can be solved in polynomial (in size of the input) time using a deterministic Turing machine, whereas the class \NP is a collection of all those problems that can be solved in polynomial time using a non-deterministic Turing machine. The query on whether or not \Poly and \NP classes are equal remains an open question. However, it is widely believed that \Poly $\neq$ \NP.

For the sake of simplicity, our statements in the thesis are with respect to a deterministic Turing machine. We refer to the problems in \Poly as polynomial-time solvable or computationally tractable problems, and those in \NP or in the classes more complex than \NP as computationally hard or intractable problems. To prove that an optimization problem is in \Poly, we propose algorithms that run in polynomial time in size of the input. To prove that an optimization problem is computationally hard, we generally define its decision version and prove that it is at least as hard as an existing decision problem in \NP class. We achieve this by a process of \emph{reduction} of one problem to the other.

Formally, a decision problem is encoded as a language $D \subseteq \Sigma^*$ over an alphabet $\Sigma$.
\begin{definition}\label{def: reduction}
    A decision problem $D_1$ \textbf{\textit{reduces}} to another decision problem $D_2$ if there exists a polynomial-time computable function $f$ such that for every $x \in \Sigma^*$, $x \in D_1 \iff f(x) \in D_2$.
\end{definition}
\begin{definition}[$\boldsymbol{\NP}$\textbf{-hard} and $\boldsymbol{\NP}$\textbf{-complete}]\label{def: nph}
    A decision problem $D$ is said to be \NP-hard if all problems in \NP reduce to $D$. If a \NP-hard problem $D$ belongs to \NP, $D$ is said to be \NP-complete.
\end{definition}

\subsection{Coping with Intractability}\label{sec: prelims-intractability}
Consider a \NPH decision problem $D$ and its corresponding optimization problem $O_D$. As described in the previous section, the fact that $D$ is \NPH is informally interpreted as follows:
\begin{center}
    Given an \textcolor{red}{\textit{arbitrary}} instance $I$ of $O_D$, it is  \textcolor{red}{\textit{intractable}} to compute an  \textcolor{red}{\textit{optimal}} solution.
\end{center}
To cope up with such a negative result, scientists commonly attack the problem in one or more of the following different directions:
\begin{enumerate}[(i)]
    \item The problem is intractable for \emph{arbitrary} instances. But does it become tractable for \emph{special families of instances} that satisfy a specific property or structure?
    \item The problem is \emph{intractable}, which means that it is not known to be solved in polynomial-time. Could it be \emph{partially tractable}? If yes, what must be the definition of partial tractability?
    \item Finding an \emph{optimal} solution is found to be intractable. Would it be tractable to find a \emph{sub-optimal solution close} to the optimal solution? If yes, how should the closeness be defined?
\end{enumerate}
For the first direction, we look at families of instances satisfying a certain property. For example, in the space of instances of PB under dichotomous preferences, we could confine to instances with only knapsack votes (since knapsack votes are a special case of dichotomous preferences). We could also impose other properties on the instances and shrink the space of instances further.
\subsubsection{Partial Tractability}\label{sec: prelims-partialtractability}
For the second direction, we discuss two notions of partial tractability from the literature: pseudo-polynomial time tractability and fixed parameter tractability.

\paragraph{\textbf{Pseudo-polynomial time tractability.}} A problem $D$ is said to be computed in polynomial time (and thus lie in \Poly) if its running time is a polynomial in the size of the instance $I$, i.e., in the number of bits used to represent the instance. In other words, the running time must be $O(|I|^c)$ for some constant $c$, where the size $|I| = O(\log I)$.

On the flip side, $D$ is said to be computed in \emph{pseudo-polynomial time} if its running time is a polynomial in the size of $I$ when $I$ is expressed in unary. In other words, the running must be $O(|I|^c)$ for some constant $c$, where the size $|I| = I$.
\begin{definition}[\textbf{Weakly} $\boldsymbol{\NP}$\textbf{-hard and Strongly} $\boldsymbol{\NP}$\textbf{-hard}]
    A \NPH problem is said to be weakly \NPH if it is solvable in pseudo-polynomial time and strongly \NPH otherwise.
\end{definition}

\paragraph{\textbf{Fixed parameter tractability.}} An alternative way to define partial tractability is to identify the specific parameter(s) of an instance that is the primary cause of intractability and demonstrate that the running time is polynomial when this particular parameter remains constant \cite{downey2012parameterized}. To formally define this, we first define the problem in terms of parameter.

Formally, a parameterized decision problem is encoded as a language $D \subseteq \Sigma^* \times \Sigma^*$ , where $\Sigma$ is a finite alphabet. The second part of the problem indicates the parameter.
\begin{definition}\label{def: fpt}
    A parameterized decision problem $D$ is said to be \textbf{fixed parameter tractable} with respect to the parameter $k$ if it can be decided whether or not $(I,k) \in D$ by an algorithm in $f(k).|I|^{O(1)}$ time, where $f$ is an arbitrary computable function depending only on $k$.
\end{definition}
\FPT class is a collection of all fixed parameter tractable problems and a problem $D$ as described in \Cref{def: fpt} is said to be \emph{in \FPT w.r.t. parameter $k$}. However, even when the parameters remain constant, some problems could still continue to be intractable. Such problems are naturally harder than the \NPH problems in \FPT. Downey and Fellows \cite{downey2012parameterized} classified parameterized intractable problems into $\mathsf{W}$\textit{-hierarchy} based on boolean circuits. Below is a demonstration of relations between all these classes:
\begin{align*}
    \FPT=\mathsf{W[0]} \subseteq \mathsf{W[1]} \subseteq \mathsf{W[2]} \subseteq \ldots \mathsf{W[\Poly]} \subseteq \XP
\end{align*}
This thesis contains results only on \FPT class and $\mathsf{W[2]}$ class. Parallel to \Cref{def: reduction} and \Cref{def: nph}, we have parameterized reductions and $\mathsf{W[t]}$-hardness as defined below.
\begin{definition}\label{def: parareduction}
    A parameterized decision problem $D_1 \subseteq \Sigma^* \times \Sigma^*$ reduces to another parameterized decision problem $D_2 \subseteq \Sigma^* \times \Sigma^*$ by \textbf{parameterized reduction} if there are two computable functions $h_1$ and $h_2$ depending only on $|k|$ and a function depending on $I$ and $k$ such that
    \begin{enumerate}
        \item $(I,k) \in D_1 \iff f(I,k) \in D_2$
        \item $h_1(k) = k'$ whenever $f(I,k) = (I',k')$
        \item $f$ is computable in $|I|^{O(1)}.h_2(|k|)$
    \end{enumerate}
\end{definition}
\begin{definition}[$\boldsymbol{\mathsf{W[t]}}$\textbf{-hard}]\label{def: wth}
    A parameterized decision problem $D$ is said to be $\mathsf{W[t]}$-hard if all problems in $\mathsf{W[t]}$ reduce to $D$.
\end{definition}

\subsubsection{Approximation Algorithms}\label{sec: prelims-approximation}
Finally, we discuss the third direction of coping up with intractability, in which we find a sub-optimal solution in polynomial time and quantify its closeness to the optimal solution. This is precisely achieved by what are called as \emph{approximation algorithms} in the literature. We recommend the textbooks by Vazirani \cite{vazirani2001approximation} and Williamson and Shmoys \cite{williamson2011design} for a comprehensive overview of the topic.

If the objective in an optimization problem $O_D$ is to maximize a given function, we call the problem a maximization problem. Likewise, if the objective is to minimize, we refer to the problem as a minimization problem. Let \YY denote the space of all the possible input instances of the problem $O_D$. Let $ALG$ denote a polynomial-time approximation algorithm returning a sub-optimal solution for the problem, whereas $OPT$ denotes the optimal algorithm (optimizes the objective exactly). The approximation ratio of $ALG$ denotes the ratio between the optimal value of the objective and the value returned by $ALG$, in the worst case scenario. Formally, we define it below.

\begin{definition}[\textbf{Approximation Ratio}]\label{def: approxratio}
   For a maximization problem, approximation ratio of an algorithm is the maximum value $\zeta \in (0,1)$ such that $ALG(I) \geq \zeta\cdot OPT(I)$ for every $I \in \YY$. For a minimization problem, approximation ratio of an algorithm is the minimum value $\zeta > 1$ such that $ALG(I) \leq \zeta\cdot OPT(I)$ for every $I \in \YY$.
\end{definition}
By definition, approximation ratio can never be $1$. Garey and Johnson \cite{garey1978strong} further introduced the concept of \emph{approximation schemes}, which is to propose a single polynomial-time algorithm with an accuracy parameter $\epsilon$ such that it always guarantees an approximation ratio that is a function of $\epsilon$ and very close to $1$. We formally define them below.
\begin{definition}[\textbf{Polynomial Time Approximation Scheme (PTAS)}]
    An algorithm $ALG$ for the problem $O_D$ is said to be a PTAS, if for a given acuracy parameter $\epsilon>0$, for every instance $I \in \YY$, it holds that $ALG(I) \geq (1-\epsilon)\cdot OPT(I)$ when $O_D$ is a maximization problem and $ALG(I) \leq (1+\epsilon)\cdot OPT(I)$ when it is a minimization problem.
\end{definition}
Note that closer the $\epsilon$ is to $1$, stronger is the approximation ratio. This means that, though getting an approximation ratio of $1$ is intractable, getting a ratio arbitrarily close to $1$ is tractable. An algorithm with a running time $O(|I|^{\frac{1}{\epsilon}})$ is also a PTAS. The next approximation scheme prohibits $\epsilon$ from taking such a position in the exponent.
\begin{definition}[\textbf{Fully Polynomial Time Approximation Scheme (FPTAS)}]
    An algorithm $ALG$ is said to be a FPTAS, if it is a PTAS whose running time is $O(|I|^{O(1)}(\frac{1}{\epsilon})^{O(1)})$, where $|I|$ is the input size and $\epsilon$ is the accuracy parameter.
\end{definition}

\newpage
\part{Dichotomous Preferences}\label{part: dichotomous}
\thispagestyle{empty}

\blankpage

\chapter{Restricted Costs: Egalitarian Participatory Budgeting}\label{chap: d-re}

\begin{quote}
\textit{Egalitarianism, an important objective in participatory budgeting (PB) that maximizes the utility of worst-off voter, has not received much attention in the case where costs of the projects are restricted to a single value. We address this gap in this chapter through a detailed study of a natural egalitarian rule named Maxmin Participatory Budgeting (\mmpb). Our study is in two parts: (1) computational (2) axiomatic. }

\textit{In the computational part, we establish the computational complexity of \mmpb and present pseudo-polynomial time and polynomial-time algorithms, parameterized by well-motivated parameters. We introduce an algorithm that provides distinct additive approximation guarantees for various families of instances and demonstrate through empirical evidence that our algorithm yields exact optimal solutions on real-world participatory budgeting datasets. Additionally, we determine an upper bound on the achievable approximation ratio for \mmpb by examining exhaustive strategy-proof participatory budgeting algorithms.}

\textit{In the second part, we embark on an axiomatic study of the \mmpb rule by extending existing axioms found in the literature. This study culminates in the introduction of a novel fairness axiom named maximal coverage, which effectively captures fairness considerations. We investigate the compatibility of \mmpb with all the axioms.}
\end{quote}

\section{Motivation}\label{sec: d-re-intro}
Among the various preference elicitation methods for participatory budgeting, dichotomous preferences (each voter specifying a subset of projects to be funded) have received much attention in the literature due to their cognitive simplicity \cite{benade2018efficiency}. In participatory budgeting under dichotomous preferences, one way to define the utility of a voter is to base it on the number of approved projects of the voter that are funded \cite{rey2020designing,jain2020participatoryg,pierczynski2021proportional}. This notion, however, fails to capture the role played by the size of a project, as demonstrated below in \Cref{eg: costutil_moti}.
\begin{example}\label{eg: costutil_moti}
    The budget is $\$100M$. A PB rule selects two projects approved by voter $1$,  costing $\$4M$ and $\$6M$, and a project approved by $2$ costing $\$90M$. The utility notion based on cardinality gives higher utility to voter $1$ though $90\%$ of the budget is allocated favorably to voter $2$.
\end{example}
This immediately motivates other well-studied notion of utility, namely, the total amount allocated to the approved projects of the voter \cite{bogomolnaia2005collective,duddy2015fair,talmon2019framework,aziz2019fair,goel2019knapsack,freeman2021truthful}, also explained in \Cref{sec: prelims-dutilities}. Assuming that the utility of a voter is the total amount allocated to its approved projects, the natural question that follows is to decide the appropriate objective of participatory budgeting.

When the objective of participatory budgeting is to maximize welfare, Fluschnik et al. \cite{fluschnik2019fair} studied Nash welfare maximization, which is to maximize the geometric mean of utilities of all the voters. The authors proved that maximizing Nash welfare is computationally hard even for the most restricted special case with just two voters and unit cost projects (their results assume cardinal preferences, but can also be trivially extended to the setting with dichotomous preferences). This resulted in most works focusing on the maximization of utilitarian welfare, which is the sum of utilities of all the voters \cite{talmon2019framework,goel2019knapsack,baumeister2020irresolute,freeman2021truthful,benade2021preference,laruelle2021voting}.

On the other hand, when the objective of participatory budgeting is to achieve fairness, most works looked at a guarantee-based fairness notion called proportionality \cite{aziz2018proportionally,fluschnik2019fair,laruelle2021voting,fairstein2021proportional,los2022proportional,brill2023proportionality} which ensures that every group of cohesive voters (i.e., voters having similar preferences) must receive a utility proportional to their group size. Notably, both the objectives of achieving utilitarian welfare maximization and proportionality suffer a major disadvantage of being majority-driven. To understand it, let us look at two examples: \Cref{eg: egal_moti} and \Cref{fig: motivation_egal}.

\begin{example}\label{eg: egal_moti}
    A budget of \$50M is to be allocated to school construction projects in three villages, X, Y, and Z, with populations of 10000, 6000, and 2000, respectively. Village X proposes schools at localities $\curly{X_1,X_2,X_3,X_4}$ costing  $\curly{\$10M,\$20M,\$20M,\$60M}$ respectively, Y proposes schools at $\curly{Y_1,Y_2,Y_3}$ costing $\curly{\$14M,\$14M,\$16M}$ respectively, and Z proposes a school at $Z_1$ costing \$6M. Suppose each voter in a village approves all and only the projects proposed by her village. 

    Utilitarian welfare maximization would result in the schools being constructed at $\curly{X_1,X_2,X_3}$. The fairness notions based on proportionality also ignore the village $Z$ since the population strength of $2000$ is not proportionate to the amount \$6M needed. However, the government may want to construct schools in as many villages as possible instead of constructing multiple schools at various locations of the same village. In such a scenario, the outcome $\curly{X_1,X_2,Y_1,Z_1}$ represents all the villages including minorities, and promotes universal literacy.
\end{example}
\begin{figure}[!ht]
    \centering
    \includegraphics[width=0.8\linewidth]{motivation_egal.png}
    \repeatcaption{fig: motivation_egal}{A toy example to demonstrate where egalitarian welfare fares better than utilitarian welfare and proportionality. Colors represent counties or districts, columns represent sewage projects with associated costs. Total budget available is $5$ units. District populations are $300, 200,$ and $100$. Each voter approves only the projects in their district. Outcomes for different objectives are in pink rows. Evidently, utilitarian and proportional outcomes neglect sky-blue district, while egalitarian outcome does not.
}
\end{figure}

The example presented above underscores the importance of prioritizing an egalitarian objective. In simpler terms, when the goal is to maximize the utility of the most disadvantaged voter (i.e., the voter with least utility), the issue of favoring majority preferences, as demonstrated in the example, would not arise. This is precisely why, as explained in \Cref{sec: intro-objectives}, egalitarian welfare maximization is considered to be a welfare objective as well as a fairness objective in the social choice literature.

In the models where the costs of projects are flexible, egalitarian objectives are studied by Aziz and Stursberg \cite{aziz2014generalization}, Aziz et al. \cite{aziz2019fair}, Airiau et al. \cite{airiau2019portioning}, and Tang et al. \cite{tang2020price}. When costs of the projects are restricted to admit only one value, Laruelle \cite{laruelle2021voting} conducted a case study that experimentally evaluated a \emph{sub-optimal} greedy algorithm to optimize an egalitarian objective. However, rather surprisingly, a theoretical or experimental study on achieving \emph{optimal} egalitarian welfare did not exist and this chapter fills the said void. It needs to be mentioned that Brams et al. \cite{brams2007minimax} studied a very restricted special case of our model, named multi-winner voting, where the budget is $k$ units and the projects cost $1$ unit each.

\section{Contributions and Organization of the Chapter}\label{sec: d-re-contri}
We choose the \emph{maxmin} as the egalitarian objective to be optimized. We propose the rule, \emph{Maxmin Participatory Budgeting (\mmpb)}, which maximizes the utility of the voter with least utility. Our study of \mmpb for indivisible PB is in two parts: (1) computational (2) axiomatic. 

In the first part (\Cref{sec: d-re-comp}), we prove that \mmpb is computationally hard and give pseudo-polynomial time and polynomial-time algorithms when parameterized by certain well-motivated parameters. We propose an algorithm that achieves, for \mmpb, additive approximation guarantees for several restricted spaces of instances and empirically show that our approximation algorithm in fact gives \emph{exact} optimal solutions on real-world PB datasets. We also establish an upper bound on the approximation ratio achievable for \mmpb by the family of exhaustive strategy-proof PB algorithms, thereby quantifying the loss in welfare incurred due to strategy-proofness.

In the second part (\Cref{sec: d-re-axioms}), we undertake an axiomatic study of the \mmpb rule by generalizing known axioms in the literature. Our study leads us to propose a new axiom, \emph{maximal coverage}, which captures the fairness notion diversity \cite{faliszewski2017multiwinner}.  We prove that \mmpb satisfies maximal coverage and achieves diversity.
\section{Notations}\label{sec: d-re-prelims}
We recall the necessary notations and terms defined in \Cref{sec: prelims-pb} and also define the MPB rule. Recall that $\voters = \curly{1,\ldots,n}$ is the set of voters and $\proj = \curly{p_1,\ldots,p_m}$ is the set of projects. The function $c: \proj \to \mathbb{N}$ represents the cost and $\bud \in \mathbb{N}$ is the total budget available (note that $\mathbb{N}$ is the set of natural numbers). The dichotomous preference profile of all voters, $(\aof{})_{i \in \voters}$, is represented by $\approf$, where $\aof{} \subseteq \proj$ is the set of projects approved by the voter \ii. An instance \instance of participatory budgeting is \fullapinstance. With a slight abuse of notation, we represent the cost of a set $S$ of projects, $\sum_{p \in S}{\cof{p}}$, by $\cof{S}$. A set $S$ is said to be feasible if $\cof{S} \leq \bud$. A feasible approval vote \aof{} is also called a \emph{knapsack vote}. Let \feasible represent the set of all feasible subsets of projects. Recall the definition of a PB rule as described below.

\newtheorem*{pbrule_repetition}{\Cref{def: pbrule-restricted}}
\begin{pbrule_repetition}
Given a PB instance \instance, a \textbf{PB rule} \pbrule generates one or more feasible subsets of projects. That is, $\ruleof{\pbrule}{} \subseteq \feasible$.
\end{pbrule_repetition}

A \textbf{PB algorithm} is a PB rule such that always outputs only a single feasible subset, or in other words, for all \instance, $|\ruleof{\pbrule}{}| = 1$. Utility of a voter \ii from a set of projects $S$ is defined as the amount of money allocated to the projects approved by \ii, i.e., $\uof{}{S} = \cof{\aof{} \cap S}$.

\begin{definition}[Maxmin Participatory Budgeting (\mmpb)]
Given a PB instance \instance, the $\boldsymbol{\mmpb}$ \textbf{rule} outputs a set of all subsets $S \subseteq \proj$ such that $$S \in \argma{S \in \feasible}{\minl{i \in \voters}{\uof{}{S}}}.$$ 
\end{definition}
Note that our \mmpb rule is irresolute by its definition (\Cref{sec: prelims-pbrule}) since there could be multiple optimal subsets. For ease of presentation, we call the objective optimized by the \mmpb rule the $\boldsymbol{\mmpb}$ \textbf{objective} or simply $\boldsymbol{\mmpb}$. Given a PB instance \instance and a score $s$, the problem of determining if there exists $S \in \feasible$ such that $\min_{i \in \voters}{\uof{}{S}} \geq s$ is called the \textbf{decision version of} $\boldsymbol{\mmpb}$.
\section{Computational Results}\label{sec: d-re-comp}
We first prove the hardness of \mmpb and present some tractable special cases based on certain well-motivated parameters. We then give an approximation algorithm for \mmpb and show empirically that, in fact, it gives exact optimal solutions on real-world PB datasets. We conclude the section by establishing an upper bound on the approximation ratio achieved for \mmpb by any exhaustive strategy-proof PB algorithm. We start by formulating \mmpb as the following integer linear program (\setword{ILP}{word: ILP}) where each variable $x_p$ corresponds to selection of the project $p$.
\begin{align}
    \nn
    \max\; &q\\
    \label{eq: ipl1}
    \text{subject to }&q \leq \suml{p \in \aof{}}{\cxof{p}} \quad \forall i \in \voters\\\nn
    &\suml{p \in \proj}{\cxof{p}} \leq \bud\\
    \label{eq: ilp2}
    &x_p \in \curly{0,1} \quad \forall p \in \proj\\\nn
    &q \geq 0
\end{align}
As we can see, any feasible solution of the above ILP corresponds to a subset $S \subseteq \proj$ that includes all and only the projects whose corresponding variable $x$ is assigned $1$ in the solution. The second constraint ensures feasibility of the subset. \Cref{eq: ipl1} sets the value of $q$ to be equal to $\minl{i \in \voters}{\utof{}{}}$.
\subsection{NP-Hardness}\label{sec: d-re-nph}
Decision version of \mmpb is found to be strongly \NPH based on a reduction from a well known problem \setcov in the literature.
\begin{definition}[\setcov]\label{prob: setcov}
   Given a set $U$ of elements, a collection $S$ of subsets of $U$, and a positive integer $k$, the \setcov problem is to find if there exists $F \subseteq S$ such that $\cup_{P \in F}P = U$ and $|F| = k$. 
\end{definition}
The above problem of \setcov is known to be strongly \NPH \cite{garey1979computers}. We reduce this problem to the decision version of \mmpb to prove the following result.
\begin{theorem}\label{the: strongnph}
The decision version of \mmpb is strongly \NPH.
\end{theorem}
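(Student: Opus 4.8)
The plan is to reduce \setcov to the decision version of \mmpb. Given a \setcov instance with universe $U = \{e_1,\ldots,e_r\}$, collection $\mathcal{C} = \{C_1,\ldots,C_m\}$ of subsets of $U$, and integer $k$, I would construct a PB instance as follows: introduce one voter per element, so $\voters = \{1,\ldots,r\}$ with voter $i$ identified with $e_i$; introduce one project per subset, so $\proj = \{p_1,\ldots,p_m\}$ with $p_j$ identified with $C_j$; set every cost to $\cof{p_j} = 1$; set the budget to $\bud = k$; and let voter $i$ approve exactly those projects whose subset contains $e_i$, i.e. $\aof{i} = \{p_j : e_i \in C_j\}$. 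Finally, set the target score to $s = 1$. This construction is plainly computable in polynomial time.

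The correctness rests on one observation tied to the unit-cost structure. With all costs equal to $1$, a project set $S$ is feasible iff $|S| \le k$, and $\uof{}{S} = \cof{\aof{i} \cap S} \ge 1$ holds iff at least one selected project is approved by voter $i$, that is, iff the element $e_i$ lies in some chosen subset. Consequently $\min_{i \in \voters} \uof{}{S} \ge 1$ precisely when the chosen subsets cover $U$. For the forward direction, a cover $G \subseteq \mathcal{C}$ of size $k$ maps to the corresponding $k$ projects, which are feasible (cost exactly $k = \bud$) and give every voter utility at least $1$, so the MPB instance is a \textsc{Yes}-instance. For the reverse direction, any feasible $S$ witnessing $\min_{i \in \voters} \uof{}{S} \ge 1$ selects at most $k$ projects whose corresponding subsets cover $U$, yielding a cover.

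The one point that needs care is the mismatch between the \setcov definition, which asks for a cover of size exactly $k$, and the budget constraint, which only bounds the number of selected projects by $k$. I would handle this with the standard padding argument: assuming $m \ge k$ (otherwise the instance is trivial), any cover of size at most $k$ can be extended by arbitrary additional subsets to one of size exactly $k$ without destroying the covering property, so the ``exactly $k$'' and ``at most $k$'' formulations are equivalent here.

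Finally, to obtain strong \NPH rather than ordinary \NPH, I would observe that every numerical magnitude in the constructed instance is tiny: each cost is $1$ and the budget is $k \le m$, both bounded by a polynomial in the input size. Since \setcov is itself strongly \NPH and the reduction introduces only such polynomially bounded numbers, strong hardness is preserved, giving the claim. I expect the combinatorial heart of the equivalence to be immediate from the unit-cost encoding; the only genuinely delicate bookkeeping is the control over magnitudes needed for the \emph{strong} hardness conclusion, together with the exact-versus-at-most padding step above.
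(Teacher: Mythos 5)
Your proposal is correct and is essentially the same reduction as the paper's: one unit-cost project per subset, one voter per element approving the projects containing it, budget $k$, and target score $s=1$. You are in fact slightly more careful than the paper on the two bookkeeping points (the exactly-$k$ versus at-most-$k$ padding, and the explicit observation that unit costs make the numbers polynomially bounded so that strong hardness transfers from \setcov), but the underlying argument is identical.
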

\begin{proof}
    We give a reduction from the \setcov problem. Given an instance $\langle U,S,k \rangle$ of \setcov, we construct an \mmpb instance as follows: For each $C \in S$, create a project $p_C$ with unit cost. For each $i \in U$, create a voter $i$ with $A_i = \curly{p_C: i \in C}$. Set $\bud = k$ and $s = 1$. We claim that both these instances are equivalent.

    To prove the correctness of our claim, first assume that we are given a \yes instance of \setcov. That is, there exists $F \subseteq S$ such that $|F|=k$ and all elements of $U$ are covered in $F$. The corresponding set of projects $\curly{p_C: C \in F}$ is feasible since $|F| = k$. Every voter has at least one approved project selected and hence the minimum utility of a voter is at least $1$. Thus, it is a \yes instance of our \mmpb problem. Likewise, if we assume that the reduced instance is a \yes instance, we have a set of $k$ projects such that at least one approved project of each voter is selected. This implies that the given instance is \yes instance of \setcov and thus completes the proof. 
\end{proof}

\subsection{Tractable Special Cases}\label{sec: d-re-fpt}
A popular way of addressing computational intractability of a problem is to study the special conditions under which the problem becomes tractable (\Cref{sec: prelims-partialtractability}). We discuss some special cases where \mmpb becomes more tractable than the proved hardness in the previous section.
\subsubsection{Constant Number of Projects}\label{sec: smallprojects}
We look at this parameter, since, in many scenarios, there can be an upper bound on the number of projects that can be funded due to logistic reasons. 
\begin{theorem}[Lenstra and Hendrik \cite{lenstra1983integer}]\label{the: ILPvariables}
  Solving an ILP instance is fixed parameter tractable parameterized by the number of variables.
\end{theorem}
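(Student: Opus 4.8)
The plan is to prove the stronger statement that \emph{integer linear feasibility} with a fixed number of variables $n$ is solvable in time $f(n)\cdot|I|^{O(1)}$, and then obtain the optimization version (needed for \mmpb, whose integer program has exactly $m+1$ variables) by binary search on the objective value $q$: since an optimal $q$ is an integer bounded in magnitude by the input, $O(|I|)$ feasibility queries of the form ``is $\curly{x \in \mathbb{Z}^n : Ax \le b,\ c^\top x \ge t}$ nonempty?'' suffice. So the whole burden is to decide, for a rational polyhedron $P = \curly{x \in \mathbb{R}^n : Ax \le b}$, whether $P \cap \mathbb{Z}^n \neq \emptyset$ in \FPT time with respect to $n$. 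After a standard preprocessing step I would reduce to the case where $P$ is a full-dimensional bounded polytope, handling lower-dimensional and unbounded cases by projecting onto the affine hull and the recession structure.

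The key geometric tool is the \emph{flatness theorem}: there is a constant $\omega(n)$ depending only on $n$ such that every convex body $K \subseteq \mathbb{R}^n$ with $K \cap \mathbb{Z}^n = \emptyset$ admits a nonzero $d \in \mathbb{Z}^n$ with lattice width $\operatorname{width}_d(K) := \max_{x \in K} d^\top x - \min_{x \in K} d^\top x \le \omega(n)$. The algorithmic content is to make this dichotomy effective. First I would sandwich $P$ between concentric ellipsoids $E_{\mathrm{in}} \subseteq P \subseteq E_{\mathrm{out}}$ whose scaling ratio depends only on $n$ (a L\"owner--John type rounding, computable in polynomial time). Applying the affine map that turns $E_{\mathrm{out}}$ into a ball sends $\mathbb{Z}^n$ to a lattice $L$; I would then run the \textbf{LLL} basis-reduction algorithm on $L$. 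A reduced basis either exhibits a lattice point inside the inner ball --- yielding directly a feasible integer point of $P$ --- or certifies a short dual vector, which pulls back to an integer functional $d$ of small width on $P$.

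Given such a flat direction $d$, every integer point of $P$ lies on one of at most $\omega(n)+1$ parallel hyperplanes $\curly{x : d^\top x = k}$ for consecutive integers $k$. For each value $k$ I would perform a unimodular change of coordinates eliminating one variable and recurse on the resulting feasibility problem in dimension $n-1$. The recursion tree has depth $n$ and branching factor at most $\omega(n)+1$, hence at most $(\omega(n)+1)^{n} = f(n)$ leaves; since ellipsoidal rounding, LLL, and the coordinate changes are each polynomial in $|I|$, the total running time is $f(n)\cdot|I|^{O(1)}$, which is exactly the form required for membership in \FPT parameterized by the number of variables.

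The main obstacle is the flatness step: one must establish that the width bound $\omega(n)$ is an absolute function of the dimension alone, independent of $A$, $b$, and their encoding lengths, and convert the LLL output faithfully into \emph{either} a genuine feasible point \emph{or} a genuinely short-width integer direction. This marriage of lattice reduction with convex geometry --- quantifying how flat an integer-point-free body must be and extracting the witnessing direction in polynomial time --- is the technical heart of the argument; the surrounding recursion and the reduction from optimization to feasibility are comparatively routine bookkeeping.
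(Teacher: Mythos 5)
The paper does not prove this statement at all: it is imported verbatim as Lenstra's theorem, with only the citation \cite{lenstra1983integer} standing in for a proof, and it is then used as a black box to derive \Cref{prop: projects}. Your sketch is, in outline, exactly Lenstra's original argument --- reduce optimization to feasibility by binary search, round the polytope between concentric ellipsoids, run LLL to either find a lattice point or extract a flat integer direction, and recurse on at most $\omega(n)+1$ parallel hyperplanes of dimension $n-1$ --- so it is correct and coincides with the canonical proof of the cited result.
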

\begin{proposition}\label{prop: projects}
\mmpb is fixed parameter tractable parameterized by the number of projects.
\end{proposition}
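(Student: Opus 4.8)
The plan is to exploit the integer linear program \eqref{eq: ipl1}--\eqref{eq: ilp2} already set up for \mmpb, together with the fixed-parameter tractability of integer programming in a bounded number of variables (\Cref{the: ILPvariables}). The key observation is that this ILP uses exactly one decision variable $x_p$ per project, plus the single auxiliary objective variable $q$; hence the total number of variables is $m+1$, where $m = |\proj|$ is the parameter. First I would argue that an optimal solution of the ILP computes precisely the \mmpb objective value: any $0/1$ assignment of the $x_p$ encodes a feasible subset $S$ (the budget constraint enforces $\cof{S} \le \bud$), and \eqref{eq: ipl1} forces $q \le \min_{i \in \voters} \uof{}{S}$, so maximizing $q$ yields $\max_{S \in \feasible} \min_{i \in \voters} \uof{}{S}$, the \mmpb optimum.

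Next I would invoke \Cref{the: ILPvariables}: since the number of variables is $m+1$, a function of the parameter alone, solving the ILP takes time $f(m)\cdot|\instance|^{O(1)}$, which is exactly the \FPT bound we want. Recovering one optimal subset $S$ together with the optimal score from the solution is immediate, and this suffices for the decision and optimization versions of \mmpb, even though \mmpb as an irresolute rule may have exponentially many optimal subsets.

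The one point needing care is that $q$ is a continuous variable whereas the $x_p$ are binary, so strictly speaking this is a mixed-integer program rather than the pure integer program of \Cref{the: ILPvariables}. I would dispose of this by noting that, because every cost $\cof{p} \in \mathbb{N}$ and every $x_p \in \curly{0,1}$, each quantity $\suml{p \in \aof{}}{\cxof{p}}$ is an integer, so the optimum of $q$ is attained at an integer value; one may therefore add the constraint $q \in \mathbb{Z}$ without changing the optimum, making the program purely integral in $m+1$ variables. Alternatively, one could eliminate $q$ entirely by binary-searching the threshold $s \in \curly{0,\ldots,\bud}$ and, for each $s$, testing feasibility of the pure binary ILP obtained by replacing \eqref{eq: ipl1} with the constraints $\suml{p \in \aof{}}{\cxof{p}} \ge s$ for all $i \in \voters$; this adds only an $O(\log \bud)$ factor, polynomial in the input size, and so preserves the \FPT guarantee. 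This bookkeeping is the only genuine obstacle; the heart of the argument is simply that the number of ILP variables is governed by the parameter $m$.
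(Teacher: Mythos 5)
Your proposal is correct and follows essentially the same route as the paper: the paper's proof is a one-liner observing that \mmpb is formulated as the \ref{word: ILP} with $m+1$ variables and invoking \Cref{the: ILPvariables}. Your additional care about the integrality of the auxiliary variable $q$ is a sound (and welcome) refinement, but it does not change the underlying argument.
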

\Cref{prop: projects} follows from \Cref{the: ILPvariables} since MPB is formulated as an \ref{word: ILP}.
\subsubsection{Constant Number of Distinct Votes}\label{sec: smalln}
In many real-world scenarios, though the number of voters is large, the number of distinct votes is small. That is, the set of voters can be partitioned into a small number of equivalence classes such that all voters in an equivalence class approve the same set of projects. For example, the 2018 PB elections held in Powazki (Warsaw, Poland) had 3482 voters, out of which there were only 16 distinct approval votes. In fact, it has been found that in many real-life PB datasets \cite{stolicki2020pabulib}, the number of distinct votes is less than $20\%$ of the number of voters. Information on some such datasets is included at \githublink. Assuming that the number of distinct votes is small enough, we first prove that \mmpb is not strongly \NPH, but is only weakly \NPH. To prove this, we employ the \subsum problem as defined below.
\begin{definition}[\subsum]\label{prob: subsum}
    Given an integer $Z$ and a set of integers $X= \{x_1,x_2,\ldots,x_n\}$, the \subsum problem is to determine if there exists a subset $X' \subseteq X$ such that $\sum_{x \in X'}{x} = Z$.
\end{definition}
The above \subsum problem is known to be weakly \NPH \cite{garey1979computers}. We are now ready to present our next result.
\begin{theorem}\label{the: weaknph}
The decision version of \mmpb is weakly \NPH when the number of distinct votes is constant.
\end{theorem}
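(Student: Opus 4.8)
The plan is to establish both conditions in the paper's definition of a weakly \NPH problem: a reduction proving \NP-hardness (so the problem is at least as hard as \subsum while keeping the number of distinct votes constant), together with a pseudo-polynomial algorithm for the constant-distinct-votes regime (which certifies that the problem is \emph{not} strongly \NPH). Only the two together yield the claimed classification.

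For the hardness half, I would reduce from \subsum. Given an instance $\langle Z, X=\{x_1,\ldots,x_n\}\rangle$, assuming without loss of generality that the integers are positive as is standard for the weakly \NPH version of \subsum, construct an \mmpb instance with one project $p_j$ of cost $\cof{p_j}=x_j$ per element and a \emph{single} equivalence class of voters, all approving every project (so $\aof{}=\proj$ for each voter $i$). Set $\bud=Z$ and threshold $s=Z$. Since every voter approves all projects, $\uof{}{S}=\cof{S}$, hence $\min_i \uof{}{S}=\cof{S}$. Feasibility forces $\cof{S}\le\bud=Z$, while the threshold requires $\cof{S}=\min_i\uof{}{S}\ge s=Z$; together these give $\cof{S}=Z$. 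Thus a feasible set with minimum utility at least $s$ exists if and only if some subset of $X$ sums to exactly $Z$. The reduced instance has exactly one distinct vote, so it lies within the promised regime.

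For the algorithmic half, let $d$ be the constant number of distinct votes with distinct approval sets $B_1,\ldots,B_d$. Since voters with identical votes have identical utilities, the objective equals $\max_{S\in\feasible}\min_{t\in\{1,\ldots,d\}}\cof{B_t\cap S}$. I would decide the problem by a dynamic program whose state is the vector $(w_1,\ldots,w_d)$ of accumulated per-class utilities, each capped at $s$, and whose stored value is the minimum total cost realizing that capped profile. Processing projects one at a time, each project is skipped or selected; selecting a project of cost $c$ updates $w_t\mapsto\min(w_t+c,\,s)$ for every class $t$ that approves it and adds $c$ to the total cost. The answer is \yes iff the all-$s$ state $(s,\ldots,s)$ is reachable with stored cost at most \bud. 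The number of states is at most $(s+1)^d\le(\bud+1)^d$, giving running time $O\!\left(m\,(\bud+1)^{d}\right)$, which is polynomial in the \emph{unary} encoding because $d$ is constant; hence the problem is pseudo-polynomial-time solvable.

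Combining the two halves yields weak \NPH. The reduction is the easy part, immediate once one observes that a single all-approving vote collapses the maxmin objective into plain total cost and thus into exact \subsum. The main obstacle is the algorithmic half: one must argue that capping each $w_t$ at $s$ preserves correctness (the threshold only cares whether each class reaches $s$, and the minimum-cost bookkeeping is monotone under capping), and, crucially, that the exponent in the state count depends solely on $d$ and on no other input quantity, so that the bound $(\bud+1)^d$ is genuinely pseudo-polynomial rather than exponential in the input size.
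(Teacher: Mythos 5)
Your proposal is correct and takes essentially the same route as the paper: the paper proves this theorem with exactly your \subsum reduction (one project of cost $x_i$ per integer, a single voter approving all projects, $\bud = s = Z$), and it establishes the pseudo-polynomial-time half---which you rightly fold into the same proof, given the paper's definition of weak \NP-hardness---as a separate, immediately following theorem using an essentially equivalent dynamic program over per-class utility vectors. The only differences are cosmetic: your DP stores the minimum cost per capped utility profile, whereas the paper's stores a feasibility bit indexed by the cost bound and uncapped per-class utilities.
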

\begin{proof}
We present a reduction from \subsum problem to our problem. Given an instance $\langle Z,X \rangle$ of \subsum, we construct an \mmpb instance as follows: For each $x_i \in X$, create a project $p_i$ with cost $x_i$. Create a single voter who approves all the projects. Set $\bud = s = Z$. We claim that both these instances are equivalent.

To prove the correctness, first assume that we are given a \yes instance of \subsum. That is, there exists $X' \subseteq X$ such that  $\sum_{x \in X'}{x} = Z$. Corresponding set of projects $\curly{p_i: x_i \in X'}$ is feasible since $\bud = Z$ and the utility of voter is the total cost of this set and is exactly $Z$. Thus, this is a \yes instance of our problem. Likewise, if we assume that the reduced instance is a \yes instance of our problem, we have a set $S$ of projects whose total cost is $Z$. The solution $\{x_i: p_i \in S\}$ makes it a \yes instance of \subsum
\end{proof}

\Cref{the: strongnph} implies that it is impossible to have a pseudo-polynomial time algorithm if the number of distinct votes is large. Since the \mmpb is only weakly \NPH when the number of distinct votes is small, we provide a  pseudo-polynomial time algorithm to solve \mmpb for that case.
\begin{theorem}
\mmpb can be solved in pseudo-polynomial time when the number of distinct votes is constant.
\end{theorem}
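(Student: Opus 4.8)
The plan is to exploit the fact that with a constant number $\distinct$ of distinct votes, every voter's utility is determined solely by its equivalence class, so the minimum over all $n$ voters collapses to a minimum over only $\distinct$ values. First I would partition \voters into the $\distinct$ classes, letting $B_1,\ldots,B_{\distinct}$ denote the distinct approval sets. Since $\uof{i}{S}=\cof{\aof{i}\cap S}$ depends only on $\aof{i}$, for every feasible $S$ we have $\min_{i\in\voters}\uof{i}{S}=\min_{k\in[\distinct]}\cof{B_k\cap S}$. Hence the \mmpb objective equals $\max_{S\in\feasible}\min_{k\in[\distinct]}\cof{B_k\cap S}$, an optimization governed by a $\distinct$-dimensional utility vector rather than an $n$-dimensional one.

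Next I would solve the decision version for a target score $s$ by dynamic programming over the projects $p_1,\ldots,p_m$. A state records the total cost $t$ committed so far together with a capped utility vector $(v_1,\ldots,v_{\distinct})$, where each $v_k$ equals $\min\{s,\cof{B_k\cap S}\}$ for the projects selected so far; truncating each coordinate at $s$ keeps the table small while still deciding the threshold test. Processing a project $p$ offers two transitions: exclude it, or include it when $t+\cof{p}\le\bud$, in which case $t$ grows by $\cof{p}$ and every coordinate $v_k$ with $p\in B_k$ grows by $\cof{p}$ (capped at $s$). The instance is a \yes instance for target $s$ precisely when some reachable state has $v_k=s$ for all $k$.

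For complexity, the number of states is $O(m\cdot(\bud+1)\cdot(s+1)^{\distinct})$, and each admits an $O(\distinct)$-time transition. Running this decision procedure for every integer threshold $s\in\{0,1,\ldots,\bud\}$ (the optimum is an achievable subset sum, hence an integer in this range) and returning the largest feasible $s$ solves \mmpb. The total running time is $O(m\,\bud^{\,\distinct+2}\,\distinct)$, which is polynomial in $m$ and in the numeric value of \bud; since $\distinct$ is constant, this is pseudo-polynomial, matching what \Cref{the: weaknph} leads us to expect. (One may equivalently run a single uncapped DP tracking the full utility vector up to \bud and then maximize the coordinate-minimum, shaving a factor of \bud.)

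The crux is the observation collapsing the $n$-fold minimum to a $\distinct$-fold minimum, which bounds the utility-vector dimension by the constant $\distinct$ and thereby makes the table polynomial in \bud. The main technical care lies in the coupling created by projects approved by several classes: such a project adds its cost to every coordinate $v_k$ with $p\in B_k$ yet must be charged only once against the budget, so $t$ genuinely has to be tracked as a separate dimension rather than being recovered from the utility vector (two selected projects of disjoint single-class types already push the coordinate sum to $2\bud$ while costing only their own total). Confining both each coordinate and $t$ to the range $\{0,\ldots,\bud\}$ is exactly what guarantees the state space is polynomial in \bud.
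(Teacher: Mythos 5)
Your proof is correct and takes essentially the same approach as the paper: a dynamic program over the projects whose state records the budget consumed together with one utility coordinate per distinct vote class, which is pseudo-polynomial once \distinct is constant. The only difference is cosmetic — you wrap the DP in a capped decision version and iterate over thresholds $s$, whereas the paper fills a single uncapped table of achievable utility vectors and extracts the max-min directly, which is exactly the variant you note in your closing parenthetical.
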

\begin{proof}
Let the number of distinct votes be \distinct. Let $A_1,\ldots,A_{\distinct}$ represent these distinct votes. We propose a dynamic programming algorithm. Construct a $\distinct+2$ dimensional binary matrix $Q$ such that $Q(i,j,u_1,\ldots,u_{\distinct})$ takes a value $1$ if and only if there exists a subset $S \subseteq \curly{p_1,\ldots,p_i}$ such that $\cof{S} \leq j$ and $\cof{S \cap A_t} = u_t$ for all $t \in [\distinct]$. Here, $i$ takes values from $1$ to $m$, $j$ takes values from $0$ to \bud, and the remaining entries take values from $0$ to $j$. Let the collection of all such $\distinct+2$ sized tuples be $X$. We fill the first row of the matrix as follows:
\begin{align*}
       Q(1,j,u_1,\ldots,u_{\distinct})  = \begin{cases} 
      1 & \text{if }j \geq \cof{p_1}, u_t \in \curly{0,\cof{p_1}}, \\&u_t = 0 \iff p_1 \notin A_t \;\forall t \in [\distinct] \\
      0 & \text{otherwise} 
   \end{cases}
\end{align*}
Now, we fill the matrix recursively as follows:
\begin{align*}
    Q(i,j,u_1,\ldots,u_{\distinct}) = \max\{&Q(i-1,j,u_1,\ldots,u_{\distinct}),\\&Q(i-1,j-\cof{p_i},v_1,\ldots,v_{\distinct})\}
\end{align*}
where for all $t \in [\distinct]$, $v_t$ is $u_t$ if $p_i \notin A_t$ and $u_t - \cof{p_i}$ otherwise. We know that there are $|X|$ entries in our matrix. The solution of \mmpb is as follows:
\begin{align*}
    \max\limits_{(m,\bud,u_1,\ldots,u_{\distinct}) \in X}{\lb Q(m,\bud,u_1,\ldots,u_{\distinct})\cdot\minl{t \in [\distinct]}{u_t}\rb}
\end{align*}
\paragraph{Running Time.} There are at most $m(\bud+1)^{\distinct+1}$ tuples in $X$ and computing each entry of our matrix takes constant time. The computation of \mmpb solution from the matrix takes $O(\distinct \bud^{\distinct}\log\bud)$ time. The total running time is $O(m\distinct \bud^{\distinct}\log\bud)$, which is pseudo-polynomial if \distinct is constant. Correctness follows from the definition of $Q$.
\end{proof}

\subsubsection{Constant Scalable Limit}\label{sec: gcd}
We introduce a new natural parameter, \textit{scalable limit}, that is reasonably small in several real-world PB elections.
\begin{definition}[\textbf{Scalable Limit}]
Given a PB instance \fullapinstance, the ratio $\frac{\max(\cof{p_1},\ldots,\cof{p_m})}{\tiny{GCD}(\cof{p_1},\ldots,\cof{p_m},\bud)}$ is referred to as {scalable limit}, denoted by \scalel.
\end{definition}
Often in many real-world settings, costs of the projects and the budget are expressed as multiples of some large value. For example, suppose a budget of $10$ billion dollars is to be distributed among a set of projects costing hundreds of millions each. That is, the cost of each project is a multiple of $\$100$M. This PB instance could be scaled down by dividing the costs and budget with $\$100$M to derive a new instance with a budget of $100$. If the cost of the most expensive project originally was $\$900$M, it would now cost $9$ in the scaled down instance. This number $9$ is what we call the scalable limit. In other words, the scalable limit of an instance is the cost of the most expensive project after scaling down the costs and budget to values as low as possible. This parameter takes quite low values in many real-world PB election datasets, e.g., Boston, New York District 8, Seattle District 1 (2019) etc. (see \burl{https://pbstanford.org/}).

From \Cref{the: weaknph}, we know that \mmpb is not polynomial time solvable even if the number of distinct votes is small. We now prove that, if the scalable limit is also small in conjunction with  the number of distinct votes being small, then \mmpb is polynomial time solvable. Before we prove this result, we state a theorem that we crucially use in the proof.

\begin{theorem}[Jansen and Rohwedder \cite{jansen2018integer}]\label{the: ILPconstraints}
  Solving an ILP instance is fixed parameter tractable parameterized by sum of the highest value in the coefficient matrix and the number of constraints.
\end{theorem}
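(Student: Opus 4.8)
The plan is to reprove this along the Steinitz-lemma dynamic-programming route (Papadimitriou; Eisenbrand--Weismantel), parameterized by the number of rows $m$ and the largest absolute coefficient $\Delta = \max_{i,j}|A_{ij}|$. First I would reduce to a clean normal form. Converting each inequality into an equality by introducing a slack variable of coefficient $1$ leaves $m$ unchanged and keeps the largest coefficient at $\max(\Delta,1)$, so it suffices to decide feasibility of $\{x \in \mathbb{Z}^n_{\ge 0} : Ax = b\}$. Optimization then reduces to feasibility by appending the objective as one extra row $c^\top x = \gamma$ and binary-searching over the input-bounded range of $\gamma$; this increases the number of constraints by one and contributes only a logarithmic factor, so the parameter grows by at most a constant.

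The combinatorial heart is to view a nonnegative integer solution as a multiset of columns: taking column $A_j$ with multiplicity $x_j$ yields $\sum_\ell A_{j_\ell} = b$, where $t = \sum_j x_j$ is the total multiplicity. After centering, the vectors $A_{j_\ell} - b/t$ each have $\ell_\infty$-norm $O(\Delta)$ and sum to $0$, so the Steinitz lemma supplies an ordering of the chosen columns under which every prefix sum $\sum_{\ell \le s} A_{j_\ell}$ stays within $\ell_\infty$-distance $O(m\Delta)$ of the segment joining $0$ to $b$. This is the step I expect to be the main obstacle: a priori the partial sums could wander over a region of diameter $\|b\|_\infty$, which is exponential in the encoding length, and it is precisely the Steinitz reordering that confines them to a tube whose radius depends only on $m$ and $\Delta$.

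With the tube in hand the remainder is a reachability computation. I would build a layered graph whose vertices are the lattice points inside the tube --- of which there are at most $(O(m\Delta))^{m}$, a function of the parameter alone --- and whose arcs add a single column; here one uses that the number of distinct column types is at most $(2\Delta+1)^m$, again a function of $m$ and $\Delta$ rather than of $n$. Feasibility of $Ax=b$ becomes reachability of the vertex $b$ from the origin, and the threshold $\gamma$ is recovered by a shortest (or longest) walk, whose length is kept finite by the same proximity bound. The total running time is $(m\Delta)^{O(m)}$ times a polynomial in the input size, which is exactly fixed-parameter tractability in $m + \Delta$.
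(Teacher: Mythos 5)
This theorem is imported verbatim from Jansen and Rohwedder \cite{jansen2018integer}; the thesis gives no proof of it, so your attempt has to be measured against the argument in the cited work, which is indeed the Steinitz-lemma route you follow (as is the earlier Eisenbrand--Weismantel result it builds on). Your reconstruction, however, has a genuine gap exactly at the step where fixed-parameter tractability is supposed to come from. You claim the tube around the segment $[0,b]$ contains at most $(O(m\Delta))^{m}$ lattice points, ``a function of the parameter alone.'' That is false: the tube has cross-section $(O(m\Delta))^{m-1}$ but length proportional to $\|b\|_\infty$, so it contains on the order of $\|b\|_\infty \cdot (O(m\Delta))^{m-1}$ lattice points, and your layered reachability computation therefore runs in time $(m\Delta)^{O(m)}\cdot \mathrm{poly}(\|b\|_\infty)$. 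Since $\|b\|_\infty$ can be exponential in the encoding length, this is only a pseudo-polynomial algorithm; under the paper's own definition of fixed-parameter tractability (\Cref{def: fpt}, running time $f(k)\cdot|I|^{O(1)}$ with $|I|$ the bit-length), it does not prove the statement. Closing this gap is precisely the technical contribution of Jansen and Rohwedder: rather than walking along the tube one column at a time, they split a solution into two halves whose partial sums lie near $b/2$ (again via Steinitz), recurse on the halves, and merge the resulting tables by convolution, so that the dependence on $\|b\|_\infty$ drops to $\log \|b\|_\infty$. Without that (or an equivalent device) your argument establishes pseudo-polynomial solvability, not FPT.

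A second, smaller flaw: reducing optimization to feasibility by appending $c^\top x=\gamma$ as an extra row breaks the parameterization, because the largest entry of the enlarged coefficient matrix becomes $\max(\Delta,\|c\|_\infty)$, and $\|c\|_\infty$ is not bounded by $m+\Delta$; the claim that ``the parameter grows by at most a constant'' is therefore wrong. The other mechanism you mention in passing --- keeping the objective out of the matrix and recovering the optimum as a longest walk with arc weights $c_j$ --- is the sound one (and is what the cited works do), so this part is repairable, but as written the two devices are conflated and only one of them works.
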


\begin{theorem}\label{the: scalablelimit}
\mmpb is fixed parameter tractable parameterized by sum of the scalable limit and the number of distinct votes.
\end{theorem}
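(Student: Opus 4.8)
The plan is to reuse the integer linear program \ref{word: ILP} set up at the start of \Cref{sec: d-re-comp} and to invoke the result of Jansen and Rohwedder (\Cref{the: ILPconstraints}): an ILP is solvable in time depending only on its number of constraints and its largest coefficient. It therefore suffices to massage \ref{word: ILP} into an equivalent program whose constraint matrix has entries bounded by \scalel and whose number of (non-box) constraints is bounded by a function of \distinct. Note that \Cref{the: weaknph} already rules out polynomial-time solvability when only \distinct is small, so the scaling step below is exactly what the second parameter \scalel contributes.

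First I would scale the instance. Let $g=\gcd(\cof{p_1},\ldots,\cof{p_m},\bud)$ and replace every cost by $\cof{p}/g$ and the budget by $\bud/g$; these are all integers by the choice of $g$. After this transformation the most expensive project has cost exactly $\max_j \cof{p_j}/g = \scalel$, so every cost now lies in $\{1,\ldots,\scalel\}$. Scaling does not change the problem: a set $S$ is feasible iff $\cof{S}/g \le \bud/g$, and since every utility $\uof{}{S}=\cof{\aof{}\cap S}$ is divided by the same factor $g$, the quantity $\minl{i \in \voters}{\uof{}{S}}$ is scaled uniformly, so the set of maximizers is unchanged. Hence solving the scaled instance solves the original one.

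Next I would collapse the per-voter utility constraints. Since there are only \distinct distinct votes, say with approval sets $A_1,\ldots,A_{\distinct}$, all voters sharing a vote have identical utility, so the $n$ constraints of type \eqref{eq: ipl1} are equivalent to the \distinct constraints $q \le \sum_{p\in A_t}\cxof{p}$, one per distinct vote. Together with the single budget constraint this gives $\distinct+1$ linear constraints. The integrality requirement $x_p\in\{0,1\}$ from \eqref{eq: ilp2} and the nonnegativity $q\ge 0$ are variable bounds, not linear constraints, and the number of variables need not be bounded for \Cref{the: ILPconstraints}. Writing the constraint matrix out, the coefficient of $q$ is $1$ and every other entry is some $\cof{p}\in\{1,\ldots,\scalel\}$, so its largest entry is \scalel. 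Applying \Cref{the: ILPconstraints} with constraint count $\distinct+1$ and largest coefficient \scalel then yields a running time of the form $f(\scalel+\distinct)\cdot|\instance|^{O(1)}$, which is precisely fixed parameter tractability in $\scalel+\distinct$.

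I expect the only delicate point to be the bookkeeping in the last step: one must verify that the box constraints $0\le x_p\le 1$ and $q\ge 0$ are absorbed as variable bounds rather than counted among the linear constraints (otherwise the constraint count would blow up to $\Theta(m)$ and the parameterization would be lost), and that it is \scalel, not the raw $\max_j\cof{p_j}$, that bounds the coefficients --- which is exactly what the scaling step secures. Everything else (equivalence of the scaled instance and correctness of \ref{word: ILP}) is routine and already established.
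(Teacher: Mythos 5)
Your proposal is correct and follows essentially the same route as the paper's own proof: scale the instance by the GCD of the costs and budget so that the maximum coefficient becomes \scalel, collapse the $n$ utility constraints into $\distinct$ constraints (one per distinct vote) plus the budget constraint, and invoke \Cref{the: ILPconstraints} on the resulting ILP with $\distinct+1$ constraints and largest coefficient \scalel. Your additional care about treating $0\le x_p\le 1$ and $q\ge 0$ as variable bounds rather than counted constraints is a detail the paper leaves implicit, but it does not change the argument.
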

\begin{proof}
Let $\distinct$ be the number of distinct votes and let $A_1,\ldots,A_{\distinct}$ represent these distinct votes. Divide the costs and budget of the instance by $\tiny{GCD}(\cof{p_1},\ldots,\cof{p_m},\bud)$ to obtain a new instance $\instance'$ with costs $c'$ and budget $\bud'$. Clearly, $\instance'$ has the same optimal \mmpb solution as that of \instance.

From \Cref{the: ILPconstraints}, it is known that the problem of solving an ILP is in FPT when parameterized by the number of constraints and the highest value in coefficient matrix \cite{ganian2019solving}. We modify the \ref{word: ILP} for \mmpb by replacing \voters with $[\distinct]$ and $c$ with $c'$. Now, the highest value in the coefficient matrix is $\max(c'(p))$, i.e., \scalel, and the number of constraints is \distinct+1. Clearly, the modified ILP is equivalent to the initial ILP and the theorem follows.
\end{proof}
\subsection{An Efficient Approximation Algorithm}\label{sec: algo}
Another popular way of coping with intractability of a problem is resorting to approximation algorithms (\Cref{sec: prelims-approximation}). In this section, we first introduce a family of PB algorithms called as Ordered-Fill algorithms. We then propose an approximation algorithm, called \lpalgo, from this family, that is based on LP-rounding. We prove that it achieves approximation guarantees for \mmpb for some restricted spaces of instances. We show empirically that it provides exact optimal solutions for \mmpb on real-world PB datasets.

\begin{definition}[\textbf{\textit{Ordered-Fill Algorithms}}] Given a participatory budgeting instance \instance, an ordered-fill algorithm with respect to a complete order $\succ$ over \proj selects the projects in the decreasing order of their ranks in $\succ$ until the next ranked project does not fit within the budget.\footnote{Greedy rules \cite{talmon2019framework} are ordered-fill algorithms where $\succ$ is based on utility from each \textit{affordable} project.}
\end{definition}
\begin{example}\label{eg: orderedfillalgo}
Consider an instance where $\proj = \curly{p_1,p_2,p_3}$, $\bud = 4$, $\cof{p_1} = \cof{p_3} = 2$, and $\cof{p_2} = 3$. An ordered-fill algorithm w.r.t. $p_1 \succ p_2 \succ p_3$ outputs $\curly{p_1}$.\footnote{Note that the outcome is not maximal since $\curly{p_1,p_3} \in \feasible$.}
\end{example}

We consider the LP-relaxation of the \ref{word: ILP} for \mmpb objective by relaxing \Cref{eq: ilp2} to $0 \leq x_p \leq 1$.
Consider the following LP-rounding algorithm: Solve the relaxed LP to get $(q^*,x^*)$. Let $S = \phi$ be the initial outcome. Add the project with the highest value of $\cxsof{p}$ to $S$, followed by the one with the second highest value, and so on, till the next project does not fit. Call this algorithm \lpalgo (Algorithm \ref{algo: orderedrelax}).
\begin{algorithm}
	\DontPrintSemicolon
	\KwIn{A PB instance under dichotomous preferences \fullapinstance}
	\KwOut{A feasible subset of projects $S$}
    $(q^*,x^*) \gets$ Solution of the relaxed ILP;\;
    Let $T$ be any array of project indices sorted according to their $\cof{p}x_{p}^*$;\;
    $t \gets 0$;\;
    $S \gets \emptyset$;\;
	\While{$\cof{S} \leq \bud$}{
        $S \gets S \cup T[p_t]$;\;
	}
	\Return{$S$}\;
	\caption{\lpalgo}
	\label{algo: orderedrelax}
\end{algorithm}
\begin{theorem}\label{lem: algobound}
For any PB instance $\instance$, \lpalgo outputs a set $S \in \feasible$ such that $ALG \geq OPT - \frac{|A_j \setminus S|}{|S \setminus A_j|}\cdot(\bud - OPT)$ where $j = \argmi{i \in \voters}{\cof{\aof{} \cap S}}$, $ALG = \cof{\aof{j} \cap S}$, and $OPT$ is the minimum utility in the optimal solution for \mmpb.
\end{theorem}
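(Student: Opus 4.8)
The plan is to exploit the fact that \lpalgo is built from an optimal fractional solution $(q^*,x^*)$ of the LP relaxation of the \ref{word: ILP}, and that the LP optimum overshoots the integral optimum, i.e.\ $q^* \geq OPT$ (the relaxation only enlarges the feasible region, so its optimal objective is at least the maxmin value $OPT$). Writing $j = \argmi{i \in \voters}{\cof{\aof{} \cap S}}$ for the worst-off voter in the algorithm's output $S$, the goal is to lower-bound the achieved value $ALG = \cof{\aof{j} \cap S}$ in terms of $OPT$, so the whole argument reduces to controlling how much of voter $j$'s fractional utility is ``lost'' when the fractional solution is rounded to the prefix $S$ produced by \lpalgo.

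First I would instantiate \Cref{eq: ipl1} at voter $j$ to get $\sum_{p \in \aof{j}} \cxsof{p} \geq q^* \geq OPT$, and split this sum over $\aof{j} \cap S$ and $\aof{j} \setminus S$. Since every $x_p^* \leq 1$, the first part is at most $\cof{\aof{j} \cap S} = ALG$, which immediately yields $ALG \geq OPT - \sum_{p \in \aof{j} \setminus S} \cxsof{p}$. It then remains to show that the residual mass $\sum_{p \in \aof{j}\setminus S}\cxsof{p}$ is at most $\frac{|\aof{j}\setminus S|}{|S \setminus \aof{j}|}(\bud - OPT)$.

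The residual bound is where the ordered (prefix) structure of \lpalgo does the real work, and is the main obstacle. I would introduce the pivot value $\mu = \min_{q \in S} c(q)x_q^*$. Because $S$ is exactly the top block of projects in decreasing order of $c(p)x_p^*$, every unselected project --- in particular every $p \in \aof{j}\setminus S$ --- satisfies $c(p)x_p^* \leq \mu$, whereas every selected project satisfies $c(q)x_q^* \geq \mu$. The first fact gives $\sum_{p \in \aof{j}\setminus S} c(p)x_p^* \leq |\aof{j}\setminus S|\cdot\mu$. The second, applied to the selected projects \emph{not} approved by $j$, gives $|S \setminus \aof{j}|\cdot \mu \leq \sum_{q \in S \setminus \aof{j}} c(q)x_q^*$; and since these projects lie outside $\aof{j}$, the budget constraint $\sum_{p \in \proj} c(p)x_p^* \leq \bud$ together with $\sum_{p\in\aof{j}} c(p)x_p^* \geq OPT$ bounds that sum by $\bud - OPT$. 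Chaining these estimates yields $\mu \leq (\bud-OPT)/|S\setminus\aof{j}|$ and hence the claimed residual bound. The only case needing separate mention is $S \subseteq \aof{j}$ (so $|S\setminus\aof{j}| = 0$), where voter $j$ already receives the entire cost $\cof{S}$ and the statement is read as vacuous; otherwise the chain above completes the proof.
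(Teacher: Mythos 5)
You are correct, and your argument is essentially the paper's own: both proofs rest on the chain $OPT \le q^* \le \sum_{p \in A_j} c(p)x_p^*$, the LP budget constraint, the bound $x_p^* \le 1$ applied to $A_j \cap S$, and the prefix-separation property that every project in $S$ has $c(p)x_p^*$ at least that of every project outside $S$. The only difference is bookkeeping: the paper sandwiches $\sum_{p \in S \setminus A_j} c(p)x_p^*$ between an averaging lower bound over $A_j \setminus S$ and the budget upper bound and then solves the resulting inequality for $\sum_{p \in A_j} c(p)x_p^*$, whereas you route the same separation through the pivot $\mu = \min_{q \in S} c(q)x_q^*$ and substitute $OPT$ early, yielding an algebraically equivalent chain and the identical bound.
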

\begin{proof}
Let $S$ be the outcome of \lpalgo, $j = \argmi{i \in \voters}{\cof{\aof{} \cap S}}$, and $\eta = \frac{|A_j \setminus S|}{|S \setminus A_j|}$.

Let the solution of the relaxed LP be $(q^*,x^*)$. Let $Y_i = S \cap \aof{i}$ for each voter $i$. Since $S \setminus Y_i \subseteq \proj \setminus \aof{i}$, for each $i$,
\begin{align}
    \nn
    \suml{p \in S \setminus Y_i}{\cxsof{p}} &\leq \suml{p \in \proj \setminus \aof{}}{\cxsof{p}}\\
    \label{eq: algo1}
    \suml{p \in S \setminus Y_i}{\cxsof{p}} &\leq \bud - \suml{p \in \aof{}}{\cxsof{p}}
\end{align}
From the design of this algorithm, every project $p$ not in $S$ has $\cxsof{p}$ not more than that of projects in $S$. \begin{align}
    \nn \forall p \in S \setminus Y_i \quad \cxsof{p} &\geq \maxl{p' \in \aof{} \setminus Y_i}{\cxsof{p'}}\\\nn
    &\geq \frac{\suml{p' \in \aof{} \setminus Y_i}{\cxsof{p'}}}{|\aof{}| - |Y_i|}\\\nn
    \suml{p \in S \setminus Y_i}{\cxsof{p}} &\geq \firstfrac{} \suml{p \in \aof{} \setminus Y_i}{\cxsof{p}}
\end{align}
Since $x_p^* \leq 1 \; \forall p$ and $\uof{}{S} = \suml{p \in Y_i}{\cof{p}}$, $\suml{p \in Y_i}{\cxsof{p}} \leq \uof{}{S}$.
\begin{align}
    \label{eq: algo2}
    \suml{ p \in S \setminus Y_i}{\cxsof{p}} \geq \firstfrac{} \lb \suml{p \in \aof{}}{\cxsof{p}} -  \uof{}{S} \rb
\end{align}
From \Cref{eq: algo1} and \Cref{eq: algo2}:
\begin{align}
    \nn
    \firstfrac{} \lb \suml{p \in \aof{}}{\cxsof{p}} -  \uof{}{S} \rb \leq \bud - \suml{p \in \aof{}}{\cxsof{p}}
\end{align}
\begin{align}
    \nn
    \suml{p \in \aof{}}{\cxsof{p}} &\leq \frac{\bud+\firstfrac{}\;\uof{}{S}}{\firstfrac{}+1}\\
    \label{eq: algo3}
    \suml{p \in \aof{}}{\cxsof{p}} &\leq \frac{\eta\bud+ALG}{1+\eta}
\end{align}
Since the optimal solution also belongs to the feasible region of the relaxed LP, we know that $\opt \leq q^*$. From \Cref{eq: ipl1}, we have $q^* \leq \sum_{p \in \aof{j}}{\cxsof{p}}$. Combining these observations with \Cref{eq: algo3}, we get,
\begin{align}
    \nn
    \opt &\leq \frac{\eta\bud+ALG}{1+\eta}\\\nn
    ALG &\geq \opt - \eta(\bud - \opt)
\end{align}
This proves the result.
\end{proof}

Given an instance \instance, let $\boldsymbol{\lo}$ and $\boldsymbol{\ho}$ denote respectively the minimum and maximum cardinalities of the outputs produced by all ordered-fill algorithms on $\instance$ (i.e., consider algorithms with respect to all the possible orderings over \proj).
\begin{lemma}\label{lem: loho}
For any instance \instance, \lo and \ho are polynomial-time computable.
\end{lemma}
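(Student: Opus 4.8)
The plan is to give explicit closed-form characterizations of \lo and \ho in terms of sorted prefix sums of the project costs, each computable by a single sort. First I would pin down the exact behaviour of an ordered-fill algorithm: running it with respect to an order $\succ$ simply outputs the longest prefix of $\succ$ whose cumulative cost stays within \bud, halting at the first project that overflows (as the \Cref{eg: orderedfillalgo} footnote confirms, it does not skip ahead). Consequently every ordered-fill output lies in \feasible, and, unless all of \proj fits, the project immediately following the output acts as a \emph{blocker} whose cost strictly exceeds the residual budget.

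For \ho, the key observation is that any feasible set has cardinality at most the maximum number of projects whose total cost is within \bud, and this maximum is realised by the cheapest projects (the minimum-cost $k$-subset is the $k$ cheapest). The ordered-fill algorithm run on the increasing-cost order produces exactly this cheapest prefix, so \ho equals that maximum, obtained by sorting the costs in ascending order and taking the longest prefix with sum $\leq \bud$.

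For \lo, I would analyse the decreasing-cost order: its output is the longest prefix of the descending order within budget, of some size $t_{\downarrow}$, which immediately gives $\lo \leq t_{\downarrow}$. The matching lower bound is the crux. For any ordered-fill output $T$ of size $s$ with blocker $q$, the set $T \cup \{q\}$ has $s+1$ projects and satisfies $\cof{T \cup \{q\}} > \bud$ (since $\cof{T} \leq \bud$ while $\cof{T} + \cof{q} > \bud$); as the $s+1$ most expensive projects have sum at least $\cof{T \cup \{q\}} > \bud$, the descending prefix of length $s+1$ already overflows, forcing $t_{\downarrow} \leq s$. Hence no ordering blocks earlier than the descending one, so $\lo = t_{\downarrow}$, computed by sorting the costs in descending order. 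The edge cases are routine: a project costing more than \bud can be placed first to yield the empty output, and if $\cof{\proj} \leq \bud$ every order outputs all of \proj, so both formulas remain correct.

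Finally, since each quantity reduces to one sort of the $m$ costs together with a prefix-sum scan, both \lo and \ho are polynomial-time computable. I expect the only genuine obstacle to be the lower-bound argument for \lo, namely certifying that no ordering can stall the fill with fewer projects than the descending order does; the $(s+1)$-subset cost comparison above is exactly what closes that gap.
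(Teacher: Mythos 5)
Your proof is correct, and it arrives at the same structural conclusion as the paper — the descending-cost order minimizes and the ascending-cost order maximizes the output cardinality, so \lo and \ho are obtained by two sorts — but your arguments for the two extremality claims are genuinely different and more elementary than the paper's. The paper proves both directions by contradiction via exchange arguments: for $|D| \leq |O|$ it intersects the two outputs, isolates the costliest project $y$ outside the intersection, and compares the costliest $|O|-|X|$ remaining projects of $D$ against $O \setminus X$; the argument for $|O| \leq |A|$ is symmetric. Your route replaces both exchanges with short certificates. For \ho you observe that every ordered-fill output is feasible, that the minimum-cost $k$-subset consists of the $k$ cheapest projects, and hence that the ascending fill attains the maximum cardinality of \emph{any} feasible set — a strictly stronger statement than the paper proves, since it bounds \ho against all of \feasible rather than only against ordered-fill outputs. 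For \lo, your blocker argument is the nicest step: an output $T$ of size $s$ with blocker $q$ exhibits an $(s+1)$-subset of cost exceeding \bud, and since the $s+1$ most expensive projects cost at least as much, the descending prefix of length $s+1$ overflows, giving $t_{\downarrow} \leq s$ directly. What the paper's approach buys is self-containedness within the ordered-fill framework; what yours buys is brevity, the stronger feasibility bound for \ho, and a transparent reason \emph{why} the descending order is extremal (every fill's stopping point is witnessed by an overflowing subset that the descending order meets at least as early). Your handling of the edge cases — all of \proj affordable, or a project exceeding \bud — is also sound.
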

\begin{proof}
Let $D$ and $A$ respectively denote the outputs of ordered-fill algorithms when the projects are arranged in the non-increasing order and the non-decreasing order of their costs. 
Let $O$ denote the output of any other ordered-fill algorithm. We claim that $|D| \leq |O| \leq |A|$.\\
~\\
\textbf{Part 1 :} $|D| \leq |O|$\\
For the sake of contradiction, let us assume that $|O| < |D|$.
\begin{figure}[ht]
  \centering
  \includegraphics[width=0.47\linewidth]{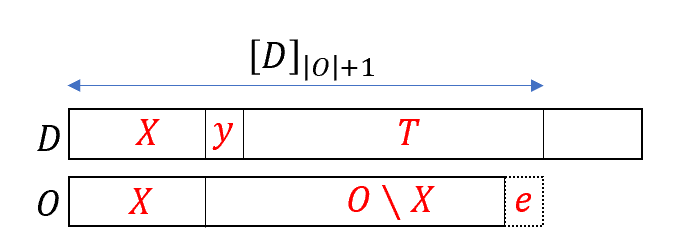}
  \caption{Illustration for $|D| \leq |O|$}
  \label{fig:lleqo}
\end{figure}

Let $X = D \cap O$ be the set of projects selected in both $O$ and $D$. If $D \setminus X = \phi$, our assumption is wrong and the first part of the proof is complete. 

Consider the case where $D \setminus X \neq \phi$. Let $y$ be the project with the highest cost in $\proj \setminus X$. That is,
\begin{align}
    \label{wysx}
    \cof{y} \geq \cof{p} \quad \forall p \in \proj \setminus X
\end{align}
By the definition of $D$,
\begin{align}
    \nn
    y \in D \setminus X
\end{align}
Since $|O| < |D|$, there are at least $|O|+1$ projects in $D$. Hence, there are at least $|O|-|X|$ projects in $D \setminus (X \cup \{y\})$. Let the set of costliest  $|O|-|X|$ projects in $D \setminus (X \cup \{y\})$ be denoted by $T$. By the definition of $D$, $T$ is the set of the costliest $|O| - |X|$ projects in $\proj \setminus (X \cup \{y\})$. Since $y \in D$ and $y \notin X$, $y \notin O$. Thus, $O \setminus X$ is also a set of exactly $|O| - |X|$ projects in $\proj \setminus (X \cup \{y\})$. Therefore,
\begin{align}
    \label{tox}
    \cof{T} \geq \cof{O \setminus X}
\end{align}
As $\{X \cup \{y\} \cup T\} \subseteq D$,
\begin{align}
    \label{xytb}
    \cof{X} + \cof{y} + \cof{T} \leq \bud
\end{align}
Let $e$ be the project where the algorithm stopped adding projects to $O$.
\begin{align}
    \nn
    \cof{O} + \cof{e} &> \bud\\
    \label{xoxeb}
    \cof{X} + \cof{O \setminus X} + \cof{e} &> \bud
\end{align}
Since $e \in \proj \setminus O$, $e \in \proj \setminus X$, from \Cref{wysx},
\begin{align}
    \label{wewy}
    \cof{y} \geq \cof{e}
\end{align}
From \Cref{xytb} and \Cref{xoxeb},
\begin{align}
    \nn
    \cof{X} + \cof{y} + \cof{T} &< \cof{X} + \cof{O \setminus X} + \cof{e}\\\nn
    \cof{T} + \cof{y} &< \cof{O \setminus X} + \cof{e}
\end{align}
From \Cref{wewy},
\begin{align}
    \nn
    \cof{O \setminus X} > \cof{T}
\end{align}
But this contradicts \Cref{tox}, hence contradicting our assumption that $|O| < |D|$. Thus, $|D| \leq |O|$.\\
~\\
\textbf{Part 2 :} $|O| \leq |A|$\\
For the sake of contradiction, let us assume that $|O| > |A|$.
\begin{figure}[ht]
  \centering
  \includegraphics[width=0.47\linewidth]{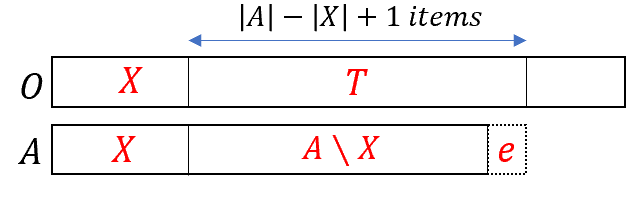}
  \caption{Illustration for $|O| \leq |A|$}
  \label{fig:oleqh}
\end{figure}

Let $X = A \cap O$ be the set of projects selected in both $O$ and $A$. Let $e$ be the first project where the algorithm stopped adding projects to $A$.
\begin{align}
    \nn
    \cof{A} + \cof{e} &> \bud\\
    \label{xaxeb}
    \cof{X} + \cof{A \setminus X} + \cof{e} &> \bud
\end{align}
Since $|O| > |A|$, $|O \setminus X| > |A|-|X|$. Consider a set, $T$, of any $|A|-|X|+1$ projects in $O \setminus X$. From the definition of $A$, $(A \setminus X) \cup \{e\}$ is the set of the least expensive $|A|-|X|+1$ projects in $\proj \setminus X$.
\begin{align}
    \label{tax}
    \cof{A \setminus X} + \cof{e} \leq \cof{T}
\end{align}
Since $T \cup X \subseteq O$,
\begin{align}
    \label{xtb}
    \cof{X} + \cof{T} \leq \bud
\end{align}
From \Cref{xaxeb} and \Cref{xtb},
\begin{align}
    \nn
    \cof{T} &< \cof{A \setminus X} + \cof{e}
\end{align}
This contradicts \Cref{tax}, hence contradicting our assumption that $|O| > |A|$. Thus, $|O| \leq |A|$.

Hence, $|D| \leq |O| \leq |A|$. By definition, $\lo = |D|$ and $\ho = |A|$. $D$ and $A$ can be computed in polynomial time by sorting the projects according to their costs.
\end{proof}

Let $\boldsymbol{\la}$ and $\boldsymbol{\ha}$ respectively denote the lowest and highest cardinalities of the approval votes, i.e., $\la = \minl{i \in \voters}{|\aof{}|}$ and $\ha = \maxl{i \in \voters}{|\aof{}|}$.

\Cref{lem: algobound} can be used to establish theoretical approximation guarantees for some restricted spaces of instances, e.g., let us look at a family of instances that satisfy a property that we call \textit{High Cardinality Budget Property} (\hcbp) which intuitively is a cardinal extension of all votes being knapsack votes. That is, it requires that the budget is high enough to fund more projects than the number of projects in any single approval vote. In other words, an instance satisfies HCBP if and only if $\lo > \ha$. Note that this property is natural in scenarios where budget is high enough to accommodate all projects approved by one voter, e.g., when the budget with the federal government is high enough to fund all proposed projects of one county or when each member in the audience approves less number of talks than the total capacity of the workshop, etc. 
\begin{corollary}\label{cor: hcbp}
    For the instances satisfying \hcbp, \lpalgo ensures that $ALG \geq OPT - \ha(\bud - OPT)$, where $OPT$ denotes the optimal value of \mmpb objective and $ALG$ denotes its value from the output of \lpalgo\footnote{Further, if we define the disutility of a voter $i$ from a set $S$ to be $\bud - \uof{}{S}$ and minimize the maximum disutility, our algorithm achieves $\lb 2 -\frac{1}{\ho} \rb$ approximation for \hcbp instances. Refer Appendix \ref{app: minimax}.}.
\end{corollary}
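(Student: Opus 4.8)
The plan is to obtain this corollary directly from \Cref{lem: algobound} by bounding the instance-dependent coefficient $\eta = \frac{|A_j \setminus S|}{|S \setminus A_j|}$ by \ha whenever the instance satisfies \hcbp. \Cref{lem: algobound} already furnishes $ALG \geq OPT - \eta(\bud - OPT)$ for the output $S$ of \lpalgo and the worst-off voter $j = \argmi{i \in \voters}{\cof{\aof{} \cap S}}$, so it suffices to establish two facts: (i) $\bud - OPT \geq 0$, and (ii) $\eta \leq \ha$. Once these are in hand, monotonicity of the right-hand side in $\eta$ closes the argument.

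For (i), I would observe that the utility of any voter from a feasible set is the cost of a subset of the funded projects, hence at most $\cof{S} \leq \bud$; in particular $OPT \leq \bud$, so $\bud - OPT \geq 0$. For (ii), the key structural observation is that \lpalgo is an ordered-fill algorithm: it fills greedily in decreasing order of the rounded LP values $\cxsof{p}$ until the next project no longer fits. Consequently its output $S$ has cardinality at least \lo, the minimum cardinality over all ordered-fill outputs (computable by \Cref{lem: loho}). Under \hcbp we have $\lo > \ha$ by definition, and since $|A_j| \leq \ha$, this gives $|S| \geq \lo > \ha \geq |A_j|$. Therefore $|S \setminus A_j| = |S| - |S \cap A_j| \geq |S| - |A_j| \geq 1$, so the denominator of $\eta$ is at least $1$, while the numerator satisfies $|A_j \setminus S| \leq |A_j| \leq \ha$; hence $\eta \leq \ha$.

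Combining the two facts, since $\bud - OPT \geq 0$ and $\eta \leq \ha$ we have $\eta(\bud - OPT) \leq \ha(\bud - OPT)$, and substituting into the bound of \Cref{lem: algobound} yields $ALG \geq OPT - \eta(\bud - OPT) \geq OPT - \ha(\bud - OPT)$, exactly as claimed. The one point I would treat carefully is the strictness in step (ii): the whole argument hinges on the denominator $|S \setminus A_j|$ being nonzero, and it is precisely the \emph{strict} inequality $\lo > \ha$ in the definition of \hcbp that guarantees $|S| > |A_j|$ and hence $|S \setminus A_j| \geq 1$. Without \hcbp the coefficient $\eta$ could in principle be unbounded, which is why the clean bound holds only for this family of instances.
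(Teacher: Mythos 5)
Your proof is correct and follows essentially the same route as the paper's: apply \Cref{lem: algobound}, bound the numerator by $|A_j \setminus S| \leq |A_j| \leq \ha$, and use the ordered-fill property of \lpalgo together with the \hcbp condition $\lo > \ha$ to conclude $|S \setminus A_j| \geq 1$, hence $\eta \leq \ha$. Your explicit check that $\bud - OPT \geq 0$ (which is needed to legitimately replace $\eta$ by the larger $\ha$ in the bound) is a small point of rigor the paper leaves implicit, but the argument is otherwise identical.
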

\begin{proof}
    Consider an instance that satisfies HCBP. From \Cref{lem: algobound}, we know that $ALG \geq OPT - \frac{|A_j \setminus S|}{|S \setminus A_j|}\cdot(\bud - OPT)$ where $j = \argmi{i \in \voters}{\cof{\aof{} \cap S}}$. By the definition of \ha, $|A_j \setminus S| \leq |A_j| \leq \ha$. Likewise, since \lpalgo is an ordered-fill algorithm, $|S| \geq \lo$ and $A_j \leq \ha$. Therefore, $|S-A_j| \geq \lo-\ha$. Since the instance satisfies HCBP, $\lo-\ha \geq 1$. Thus,  $|S-A_j| \geq 1$. Combining this with $|A_j \setminus S| \leq \ha$ proves that $\frac{|A_j \setminus S|}{|S \setminus A_j|} \leq \ha$.
\end{proof}

As another example, we note that the algorithm yields an optimal solution for the instances which guarantee $\aof{j} \subseteq S$. One such family of instances will be the one that includes instances  satisfying the following condition for all $S \in \feasible$: if there exists some $p \in \proj$ such that $\cof{S}+\cof{p}>\bud$, then all approved projects of the worse-off voter from $S$ are included in $S$. Note that this is a sufficient but not a necessary condition for the algorithm to yield an  optimal solution. Although the approximation ratio given by \Cref{lem: algobound} could be trivial for some instances theoretically, our empirical analysis compensates for this.
\begin{table}[h]
\begin{tabular}{c|c|c|c}
\multirow{2}{*}{\textbf{\begin{tabular}[c]{@{}c@{}}PB Election \\ Dataset Location\end{tabular}}} & \multirow{2}{*}{\textbf{Year}} & \multirow{2}{*}{\textbf{\begin{tabular}[c]{@{}c@{}}Minimum Utility in\\ the Optimal Outcome\end{tabular}}} & \multirow{2}{*}{\textbf{\begin{tabular}[c]{@{}c@{}}Minimum Utility in\\ the Outcome of \lpalgo\end{tabular}}} \\
                                                                                                  &                                &                                                                                                            &                                                                                                              \\
                                                                                                  \hline
                                                                                                
Zoliborz                                                                                          & 2020                           & 3531                                                                                                       & 3531                                                                                                         \\
Ursus                                                                                             & 2020                           & 4059                                                                                                       & 4059                                                                                                         \\
Zerzen                                                                                            & 2019                           & 8000                                                                                                       & 8000                                                                                                         \\
Elsnerow                                                                                          & 2019                           & 5000                                                                                                       & 5000                                                                                                         \\
Stare Miasto                                                                                      & 2019                           & 21800                                                                                                      & 21800                                                                                                        \\
Srodmiescie-Polnocne                                                                              & 2019                           & 30000                                                                                                      & 30000                                                                                                        \\
Kamionek                                                                                          & 2018                           & 10000                                                                                                      & 10000                                                                                                        \\
Slodowiec                                                                                         & 2018                           & 21912                                                                                                      & 21912                                                                                                        \\
Boernerowo                                                                                        & 2017                           & 18680                                                                                                      & 18680                                                                                                        \\
Piaski                                                                                            & 2017                           & 13400                                                                                                      & 13400                                                                                                       
\end{tabular}
\caption{Empirical results for arbitrary election datasets from various locations in Poland. The second column denotes the utility of the worst-off voter from the \mmpb rule outcome, while the third column displays the corresponding value from the \lpalgo outcome.}
\label{tab: empirical}
\end{table}
\subsubsection{Empirical Analysis}
Though the theoretical guarantee of our algorithm \lpalgo is limited in terms of approximation ratios and the spaces of instances it covers, empirically it is found to exhibit remarkable performance to give {\em exact optimal solutions\/} on all the PB datasets of real-world PB elections available online \cite{stolicki2020pabulib}. The datasets and corresponding results are included in detail at \githublink. Some of these election datasets are selected arbitrarily and the corresponding results are displayed in \Cref{tab: empirical}. This indicates that, although theoretically the worst cases do not allow the approximation ratio to get better, such worst cases are seldom encountered in real-life, thus explaining the excellent performance of our algorithm.

\subsection{Bound on the Performance of Exhaustive and Strategy-Proof PB Algorithms}\label{sec: sp}
In many cases, real-world voters tend to be strategic and self-interested. Designing rules that are resistant to manipulations by strategic voters is thus a common objective in social choice literature. This property is formally referred to as strategy-proofness. Another property that is of interest in particularly PB settings is exhaustiveness, which ensures that the available budget is utilized without any wastage. In other words, the goal is to design PB rules that allocate the budget without under-utilization. It would be useful to know how the requirements of strategy-proofness and exhaustiveness will degrade the performance of PB algorithms with respect to \mmpb objective.

We establish an upper bound on the approximation ratios of exhaustive strategy-proof PB algorithms by invoking an argument used in the mechanism design literature \cite{procaccia2009approximate,caragiannis2010approximation}. We construct a specific profile whose outcome can be derived using strategy-proofness and establish an upper bound on its approximation ratio.
\begin{definition}[\textbf{Strategy-Proofness}]\label{def: sp}
An algorithm \GG is said to be strategy-proof, if and only if, for any instance $\instance$,
\begin{align}
    \nn
    \forall i \quad \forall S \subseteq \proj \quad \cof{\aof{} \cap \GG(\aof{},\approf_{-i})} \geq \cof{\aof{} \cap \GG(S,\approf_{-i})}
\end{align}
where $\approf_{-i}$ denotes the dichotomous preference profile of all voters except $i$ and $\GG(\approf)$ denotes the outcome of $\GG$ at profile \approf.
\end{definition}
That is, if all voters except $i$ continue to approve the same sets of projects, $i$ should not obtain a better utility by approving any set other than \aof{}. An algorithm is exhaustive if it is impossible to fund an unselected project with the remaining budget.
\begin{definition}[\textbf{Exhaustiveness}]
An algorithm \GG, is said to be exhaustive if and only if for every instance \instance and every $p \notin \GG(\instance)$, it holds that $\cof{p}+ \cof{\GG(\instance)} > \bud$.
\end{definition}
\begin{theorem}\label{the: exsp}
No exhaustive strategy-proof algorithm can achieve an approximation guarantee better than $\frac{2}{3}$ for \mmpb even for instances with only knapsack votes and unit cost projects.
\end{theorem}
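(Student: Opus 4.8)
The plan is to argue by contradiction, following the mechanism-design template of \cite{procaccia2009approximate,caragiannis2010approximation}: assume an exhaustive strategy-proof algorithm \GG attains an approximation ratio $\zeta > \tfrac{2}{3}$ for \mmpb on all knapsack-vote, unit-cost instances, and exhibit a small instance on which this fails. First I would record the two structural simplifications that unit costs buy. With $\cof{p}=1$ for every $p$, feasibility is simply $|S| \le \bud$, exhaustiveness forces $|S| = \bud$ (no unselected project can be added), and a voter's utility collapses to $\uof{}{S} = |\aof{} \cap S|$. Hence \GG is merely a choice of which \bud projects to fund, and strategy-proofness becomes the statement that no voter can enlarge the number of her approved funded projects by misreporting. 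I would also exploit integrality: if an instance has $OPT = 3$, then $\zeta > \tfrac{2}{3}$ forces $ALG > 2$, i.e.\ $ALG = 3 = OPT$; so on every value-$3$ instance \GG must be exactly optimal, which pins its output to the balanced optimal configuration up to whatever freedom the instance leaves.

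Next I would construct a specific profile \approf on unit-cost projects whose \mmpb optimum equals $3$, built from voters whose approval sets overlap pairwise in a controlled, cyclically symmetric pattern, together with the deviation profiles obtained when a single voter switches her report. The budget would be set just large enough that covering every voter to level $3$ exactly exhausts the \bud funded slots, so those slots cannot be redistributed to keep all voters at level $3$ once a voter's report alters the overlap pattern. By the integrality remark, on \approf and on each deviation profile \GG is forced to return a balanced optimum, so I can read off, for the deviating voter, her true-type utility under \GG at both profiles.

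The contradiction then comes from two-cycle monotonicity. Writing $O = \GG(\approf)$ and $O'$ for \GG applied to the deviated profile (the two differ only in one voter $i$'s report, from the true $\aof{}$ to some $B$), strategy-proofness at the two instances yields $\cof{\aof{} \cap O} \ge \cof{\aof{} \cap O'}$ and $\cof{B \cap O'} \ge \cof{B \cap O}$. Substituting the balanced configurations forced on $O$ and $O'$, and using that exhaustiveness admits only \bud funded projects, I expect these inequalities to become jointly unsatisfiable: serving voter $i$ to level $3$ on both profiles would require the \bud slots to realize two incompatible coverings at once. Thus \GG cannot be optimal on both profiles, so on one of them its min utility is at most $2$ while the optimum is $3$, contradicting $\zeta > \tfrac{2}{3}$ and establishing the upper bound.

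The delicate step, which I expect to be the main obstacle, is the design of the profile itself. A two-voter construction does not suffice: there the relevant balanced optima remain strategy-proof to implement (for instance, by a rule that simply serves the reported types, or even a constant rule returning a fixed optimum), so the monotonicity inequalities stay satisfiable and no bound below $1$ emerges. The construction must therefore recruit enough voters and enough overlap to manufacture a genuine Condorcet-style cycle in the covering constraints, forcing the two monotonicity inequalities to fail; and the parameters (budget, approval-set cardinalities, and pairwise overlaps) must be tuned so that the optimum is exactly $3$ while the best min utility any exhaustive strategy-proof rule can guarantee is exactly $2$. Verifying that \emph{every} admissible funding pattern on the deviated profile really drops some voter to $2$, independent of tie-breaking, is the computation that carries the real content of the argument.
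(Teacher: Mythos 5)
Your proposal has a genuine gap, and you identify it yourself: everything hinges on a profile construction (cyclically symmetric overlaps, optimum pinned at $3$, every admissible allocation on the deviated profile dropping some voter to $2$ regardless of tie-breaking) that is never exhibited. You call its design ``the main obstacle'' and ``the computation that carries the real content of the argument'' --- which means the proof's content is absent. A two-cycle-monotonicity skeleton plus an unbuilt gadget is a plan, not a proof, and it is not clear the gadget you describe exists: forcing exact optimality on the original profile \emph{and} on every single-voter deviation, while simultaneously making the monotonicity inequalities infeasible, are requirements in tension that you never reconcile.

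Worse, the structural claim you use to justify hunting for a many-voter gadget --- that two-voter constructions cannot work --- is false, and it is exactly where the paper's proof lives. The paper takes two voters with disjoint knapsack votes $\aof{1},\aof{2}$, each of size \bud, over $2\bud$ unit-cost projects. Exhaustiveness forces the output $S=\GG(\instance_1)$ to satisfy $\cof{S}=\bud$, so without loss of generality $\cof{\aof{1}\cap S}\le \bud/2$. The move you rule out is \emph{adaptivity}: the second instance is built from the algorithm's own output, replacing voter $2$'s report by $S$ itself. Strategy-proofness then forces $\GG(\instance_2)=S$ (otherwise voter $2$, whose true vote in $\instance_2$ is $S$, strictly gains by misreporting $\aof{2}$ and receiving $S$), so the algorithm's minimum utility on $\instance_2$ is $\cof{\aof{1}\cap S}$, whereas the feasible set consisting of $\aof{1}\cap S$ together with equal halves of $\aof{1}\setminus S$ and $S\setminus\aof{1}$ certifies $\opt(\instance_2)\ge (\bud+\cof{\aof{1}\cap S})/2 \ge \tfrac{3}{2}\,\cof{\aof{1}\cap S}$, giving the ratio $\tfrac{2}{3}$. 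Your dismissal of two voters considers only rules tailored to a fixed pair of profiles (``serve the reported types,'' ``a constant rule returning a fixed optimum''), but such rules violate the approximation guarantee on \emph{other} instances; the quantifier structure is ``for every algorithm there exists an instance,'' so the hard instance may be chosen after seeing the algorithm's behavior. Once that adaptive trick is available, neither your integrality reduction to $\mathit{OPT}=3$ nor any Condorcet-style cycle is needed.
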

\begin{proof}
Consider an exhaustive strategy-proof algorithm \GG. Consider an instance $\instance_1$ with a budget \bud, $2\bud$ projects each costing $1$, two voters, and a dichotomous preference profile $\approf_1 = \curly{\aof{1},\aof{2}}$ such that $\aof{1} \cap \aof{2} = \emptyset$ and $\cof{\aof{1}} = \cof{\aof{2}} = \bud$. Let $\GG(\instance_1) = S$. Without loss of generality, assume that $\cof{\aof{1} \cap S} \leq \cof{\aof{2} \cap S}$. Modify the dichotomous preference profile of $\instance_1$ to get $\instance_2$ as follows: $\approf_2  = \curly{\aof{1},S}$. For $\instance_2$, \GG should output $S$. Else, voter $2$ would have the incentive to misreport $\aof{j}$ instead of the true vote $S$ and get the preferred outcome $S$. Hence, $\GG(\instance_2) = S$. Therefore, the minimum utility of the algorithm for $\instance_2$, $ALG(\instance_2)$, is $\cof{\aof{1} \cap S}$. Let $\opt(\instance_2)$ be the optimal minimum utility for $\instance_2$. That is,
\begin{align}
    \label{eq: ic1}
    \opt(\instance_2) \geq \min{(\cof{\aof{1} \cap X}, \cof{S \cap X})} \quad \forall X \in \feasible
\end{align}
Since \GG is exhaustive, $\cof{S} = \bud$. Consider a set $X \subseteq \proj$ such that $(\aof{1} \cap S) \subseteq X$, $|X \cap (\aof{1} \setminus S)| = \frac{\bud - \cof{\aof{1} \cap S}}{2}$, and $|X \cap (S \setminus \aof{1})| = \frac{\bud - \cof{\aof{1} \cap S}}{2}$. Clearly, $X \in \feasible$.
\begin{align}
    \nn
    \cof{\aof{1} \cap X} = \cof{S \cap X} &= \cof{\aof{1} \cap S} + \frac{\bud - \cof{\aof{1} \cap S}}{2}\\
    \label{eq: ic2}
    &=  \frac{\bud + \cof{\aof{1} \cap S}}{2}
\end{align}
Since $\cof{\aof{1} \cap S} \leq \cof{\aof{2} \cap S}$ and $\cof{S} = \bud$, $\cof{\aof{1} \cap S} \leq \frac{\bud}{2}$. Substituting $\bud \geq 2\cof{\aof{1} \cap S}$ in \Cref{eq: ic2}, we get,
\begin{align}
    \tag{From \Cref{eq: ic1}}
    \opt(\instance_2) &\geq \frac{3\cof{\aof{1} \cap S}}{2}\\\nn
    \frac{ALG(\instance_2)}{\opt(\instance_2)} &\leq \frac{2}{3}
\end{align}
Hence, the approximation guarantee of the algorithm is at most $\frac{2}{3}$. Clearly in both the instances $\instance_1$ and $\instance_2$, all the votes are knapsack votes and the theorem follows.
\end{proof}

\section{Axiomatic Results}\label{sec: d-re-axioms}
An axiomatic study of any voting rule provides valuable insights into the characteristics of the voting rule. We now undertake an axiomatic study of the \mmpb rule by exploring several axiomatic properties available in the literature.
Given an instance \instance and a PB rule \pbrule, we say a project $p$ wins if there exists $S \in \ruleof{\pbrule}{}$ such that $p \in S$. Let \winners{\pbrule}{} represent the set of all projects which win, i.e., $\winners{\pbrule}{} = \curly{p \in \proj: \exists S \in \ruleof{\pbrule}{} p \in S}$. Throughout this section, we represent the \mmpb rule by $\boldsymbol{\mmpbr}$.

First, we examine axioms in the PB literature for rules that output a single feasible subset of projects \cite{talmon2019framework}. We extend these axioms to irresolute rules (like \mmpb) that output multiple feasible subsets. We start by defining all the axioms.

The first axiom requires that, if a winning project is replaced by a set of projects whose total cost is the same, then at least one project of this set should continue to win. In other words, take any winning project $p \in \winners{\pbrule}{}$. Split $p$ into a set $P'$ of smaller projects such that $\cof{p} = \cof{P'}$. In every preference $\aof{}$ such that $p \in \aof{}$, replace $p$ with all the projects in $P'$. Let the new resultant instance be $\instance'$. Splitting monotonicity requires that at least one project from $P'$ must belong to $\winners{\pbrule}{'}$.
\begin{definition}[Splitting Monotonicity \cite{talmon2019framework}]\label{def: splittingmpb}
A PB rule \pbrule is said to satisfy splitting monotonicity iff, for any instance \instance, $\winners{\pbrule}{'} \cap P' \neq \emptyset$ whenever $\instance'$ is obtained from \instance by splitting a project $p \in \winners{\pbrule}{}$ into a set of projects $P'$ with $\cof{P'} = \cof{p}$ and changing every $\aof{}$ having $p$ to $(\aof{} \setminus \curly{p}) \cup P'$.
\end{definition}
The next axiom requires that, if we replace a set of winning projects that are approved by exactly same set of voters by a single project having the same total cost, then the single project wins. In other words, take a set $P'$ such that (i) it consists of only winning projects, i.e., $P' \subseteq \winners{\pbrule}{}$ and (ii) every voter either approves all the projects from $P'$ or approves none of them, i.e., $\aof{} \cap P' \in \{P',\emptyset\}$ for every $i$. Merge all the projects in $P'$ to get a new project $p$ such that $\cof{p} = \cof{P'}$. In every $\aof{}$ such that $P' \subseteq \aof{}$, replace $P'$ with the project $p$. Let the new resultant instance be $\instance'$. Merging monotonicity requires that $p \in \winners{\pbrule}{'}$.
\begin{definition}[Merging Monotonicity \cite{talmon2019framework}]\label{def: mergingmpb}
A PB rule \pbrule is said to satisfy merging monotonicity iff, for any instance \instance, $p \in \winners{\pbrule}{'}$ whenever $\instance'$ is obtained from \instance as follows: for any $S \in \ruleof{\pbrule}{}$ and $P' \subseteq \winners{\pbrule}{}$ such that $\aof{} \cap P' \in \curly{P',\emptyset}$ for every voter \ii, merge the projects in $P'$ into a single project $p$ with cost $\cof{P'}$ and change every $\aof{}$ with $P'$ to $(\aof{} \setminus P') \cup \curly{p}$.
\end{definition}
The next axiom requires that no winning project should be dropped if it becomes less expensive.
\begin{definition}[Discount Monotonicity \cite{talmon2019framework}]\label{def: discountmpb}
A PB rule \pbrule is said to satisfy discount monotonicity iff, for any instance \instance, $p \in \winners{\pbrule}{'}$ whenever $\instance'$ is obtained from \instance by reducing the cost of $p \in \winners{\pbrule}{}$ to $\cof{p}-1$.
\end{definition}
The following axiom requires that no winning project should be dropped if the budget increases. Note that we insist no project to cost exactly $\bud+1$ since that leads to a trivial result of selecting that project, thereby not capturing the essence of axiom.
\begin{definition}[Limit Monotonicity \cite{talmon2019framework}]\label{def: limitmpb}
A PB rule \pbrule is said to satisfy limit monotonicity iff, for any instance \instance such that no project in \proj costs $\bud+1$, $\winners{\pbrule}{} \subseteq \winners{\pbrule}{'}$ whenever $\instance'$ is obtained from \instance by increasing the budget to $\bud+1$.
\end{definition}
The next axiom requires that all sets in \ruleof{\pbrule}{} are maximal.
\begin{definition}[Strong Exhaustiveness \cite{aziz2021participatory}]\label{def: strongempb}
A PB rule \pbrule is said to satisfy strong exhaustiveness iff, for any instance \instance, $$\forall S \in \ruleof{\pbrule}{}\;\forall p \in \proj\setminus S \quad \cof{S} + \cof{p} > \bud$$
\end{definition}
Next, we introduce a new axiom, weak exhaustiveness, which requires that any winning non-maximal set can be made maximal without compromising on the win. In other words, suppose $S$ is a winning set of projects ($S \in \ruleof{\pbrule}{}$). If addition of another project $p$ to $S$ does not make $S$ infeasible, then adding $p$ must not also affect its winning status. That is, $S \cup \{p\}$ must also be a part of $\ruleof{\pbrule}{}$.
\begin{definition}[Weak Exhaustiveness]\label{def: weakempb}
A PB rule \pbrule is said to satisfy weak exhaustiveness iff, for any instance \instance, $$\forall S \in \ruleof{\pbrule}{},\;\forall p \in \proj\setminus S,\; \cof{S} + \cof{p} \leq \bud \implies S\cup\curly{p} \in \ruleof{\pbrule}{}$$
\end{definition}
\begin{theorem}\label{the: pbaxioms_satisfy}
The \mmpb rule satisfies: (a) splitting monotonicity (b) merging monotonicity (c) weak exhaustiveness.
\end{theorem}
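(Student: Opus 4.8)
The plan is to run every part through the common quantity $\opt(\instance) := \maxl{S \in \feasible}{\minl{i \in \voters}{\uof{}{S}}}$, so that a feasible set lies in $\ruleof{\mmpbr}{}$ precisely when its worst-off utility equals $\opt(\instance)$. The only structural facts I will need are that $\uof{}{S}=\cof{\aof{}\cap S}$ is monotone under enlarging $S$, and that deleting a project and replacing it by a set of equal total cost (splitting), or the reverse (merging), leaves each voter's utility unchanged \emph{provided the approval pattern is respected}. Weak exhaustiveness is then immediate: fix $S \in \ruleof{\mmpbr}{}$ and $p \in \proj \setminus S$ with $\cof{S}+\cof{p}\le \bud$; since $\aof{}\cap S \subseteq \aof{}\cap(S\cup\curly{p})$ and costs are nonnegative, $\uof{}{S\cup\curly{p}} \ge \uof{}{S}$ for every $i$, so $\minl{i \in \voters}{\uof{}{S\cup\curly{p}}} \ge \opt(\instance)$. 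As $S\cup\curly{p}\in\feasible$ the reverse bound holds too, forcing equality, hence $S\cup\curly{p}\in\ruleof{\mmpbr}{}$.

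For splitting monotonicity I would first transport an optimal witness across the split. Pick an optimal $S^*$ with $p\in S^*$ (it exists because $p\in\winners{\mmpbr}{}$) and set $S' := (S^*\setminus\curly{p})\cup P'$. Then $\cof{S'}=\cof{S^*}\le\bud$, and because $P'$ replaces $p$ in exactly the votes containing $p$, computing utilities in $\instance'$ gives $\utofd{}{}=\uof{}{S^*}$ for every $i$ (a voter who approved $p$ now collects $\cof{P'}=\cof{p}$ from $P'\subseteq S'$; all others are untouched). Hence $\opt(\instance') \ge \opt(\instance)$. Now split on whether this is tight. If $\opt(\instance')=\opt(\instance)$, then $S'$ is itself optimal for $\instance'$ and contains the nonempty set $P'$, so $\winners{\mmpbr}{'}\cap P'\neq\emptyset$. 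If $\opt(\instance')>\opt(\instance)$, then no optimal set of $\instance'$ can avoid $P'$: any feasible $T'$ with $T'\cap P'=\emptyset$ uses only projects common to both instances, so it is feasible in $\instance$ with identical utilities and value at most $\opt(\instance)<\opt(\instance')$; thus every optimal set meets $P'$, and again $\winners{\mmpbr}{'}\cap P'\neq\emptyset$.

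For merging monotonicity the pivotal lemma is that merging can only weaken the objective, $\opt(\instance')\le\opt(\instance)$. I would prove this by mapping any feasible $T'$ of $\instance'$ to $T:=(T'\setminus\curly{p})\cup P'$ when $p\in T'$ and $T:=T'$ otherwise; the all-or-none condition $\aof{}\cap P'\in\curly{P',\emptyset}$ is exactly what guarantees each voter's utility is preserved, while $\cof{T}=\cof{T'}$ keeps feasibility, so $\minl{i \in \voters}{\uof{}{T}}=\minl{i \in \voters}{(\text{utility of } i \text{ at } T' \text{ in } \instance')}$. Taking the winning set $S\in\ruleof{\mmpbr}{}$ furnished by the hypothesis (read so that $P'\subseteq S$) and mapping it forward to $S':=(S\setminus P')\cup\curly{p}$, the same all-or-none condition keeps every utility unchanged and $\cof{S'}=\cof{S}$, giving $\opt(\instance')\ge\opt(\instance)$. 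The two bounds yield $\opt(\instance')=\opt(\instance)$, and since $S'$ attains this value while containing $p$, we conclude $p\in\winners{\mmpbr}{'}$.

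The main obstacle, and the reason the resolute-style argument does not transfer verbatim, is that the maxmin value is \emph{not} invariant under these operations: splitting a winning project can strictly raise $\opt$ (a two-voter unit-cost example makes this concrete), so one cannot simply shuttle a single optimal set back and forth. The delicate points are therefore the case analysis on $\opt(\instance')$ versus $\opt(\instance)$ in the splitting proof, and, for merging, pinning down that $P'$ lies inside one winning set together with verifying that the all-or-none hypothesis is precisely what makes the forward and backward maps utility-preserving.
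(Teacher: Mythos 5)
Your proof is correct and follows essentially the same route as the paper's: transport an optimal set containing $p$ across the split/merge operation (utilities preserved by the approval-replacement rule, respectively by the all-or-none condition), and transfer sets avoiding $P'$ (resp.\ avoiding the merged project) back to the original instance with utilities intact. The paper packages these same two facts as proofs by contradiction, whereas you organize them as two-sided bounds on the maxmin optimum and a case analysis; the mathematical content is identical.
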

\begin{proof}
\begin{enumerate}[label=(\alph*)]
\item {\textbf{Splitting Monotonicity:}}\\
Let $p \in \winners{\mmpbr}{}$ be split into a set of projects $P'$ as above to produce $\instance'$. For the sake of contradiction, let us assume that $\winners{\mmpbr}{'} \cap P' = \emptyset$. Since $p \in \winners{\mmpbr}{}$, there exists $S \in \ruleof{\mmpbr}{}$ such that $p \in S$. Consider a set of projects $K = (S\setminus \{p\}) \cup P'$. Since $\cof{\aof{} \cap K}$ and $\cof{\aof{} \cap S}$ are same for every voter \ii, the minimum utility from $K$ is equal to that from $S$. From our assumption, for any set $T \in \ruleof{\mmpbr}{'}$, $T \cap P' = \emptyset$ and the minimum utility from $T$ is strictly greater than that from $K$ and $S$. This contradicts the fact that $S \in \ruleof{\mmpbr}{}$.
\item {\textbf{Merging Monotonicity:}}\\
Let $S \in \ruleof{\pbrule}{}$ and $P' \subseteq S$ satisfy the condition specified. That is, each voter either approves entire $P'$ or approves no project in $P'$. Let $P'$ be merged into a single project $p$ and the new instance thus produced be $\instance'$. For the sake of contradiction, let us assume that $p \notin \winners{\pbrule}{'}$. Consider the set $K = (S\setminus P') \cup \{p\}$. Since $\cof{\aof{} \cap K}$ and $\cof{\aof{} \cap S}$ are same for every voter \ii, the minimum utility from $K$ is equal to that from $S$. From our assumption, for any set $T \in \ruleof{\mmpbr}{'}$, $p \notin T$ and the minimum utility from $T$ is strictly greater than that from $K$ and $S$. This contradicts $S \in \ruleof{\mmpbr}{}$.
\item {\textbf{Weak Exhaustiveness:}}\\
For the sake of contradiction, let us assume that $\exists S \in \ruleof{\mmpbr}{}$ and $p \in \proj\setminus S$ such that $\cof{S} + \cof{p} \leq \bud$. Consider the feasible set $K = S \cup \{p\}$. Consider any arbitrary voter \ii. If $p \in \aof{}$, $\uof{}{K} = \uof{}{S} + \cof{p}$. Else if $p \notin \aof{}$, $\uof{}{K} = \uof{}{S}$. So, the minimum utility of any voter from $K$ is at least that from $S$. Since $S \in \ruleof{\mmpbr}{}$, $K \in \ruleof{\mmpbr}{}$.
\end{enumerate}
This completes the proof.
\end{proof}
\begin{theorem}\label{the: pbaxioms_nsatisfy}
The \mmpb rule does not satisfy: (a)  discount monotonicity (b) limit monotonicity (c) strong exhaustiveness.
\end{theorem}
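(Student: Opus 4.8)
The plan is to refute each of the three properties with an explicit counterexample instance, exploiting the fact that the cost-based utility $\uof{}{S} = \cof{\aof{} \cap S}$ makes the maxmin optimum highly sensitive to the interplay between project costs and the budget. For strong exhaustiveness the failure is essentially immediate: since \mmpb only cares about the worst-off voter, an optimal set need not be maximal when the unfunded projects are irrelevant to that voter. Concretely, I would take one voter with $\aof{1} = \curly{p_1}$, two unit-cost projects $p_1,p_2$, and $\bud = 2$. The maxmin value is $1$, attained by both $\curly{p_1}$ and $\curly{p_1,p_2}$, so $\curly{p_1} \in \ruleof{\mmpbr}{}$; but $\cof{\curly{p_1}} + \cof{p_2} = 2 = \bud$, which violates the strict inequality demanded by strong exhaustiveness.

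For discount monotonicity and limit monotonicity the counterexamples share a common substitution template with two voters $\aof{1} = \curly{p_1,p_2}$ and $\aof{2} = \curly{p_3}$, where the budget is tuned so that exactly one of $p_1,p_2$ can be funded alongside $p_3$, and voter $2$ is the worst-off voter. For discount monotonicity I set $\cof{p_1} = \cof{p_2} = \cof{p_3} = 3$ and $\bud = 6$; the maxmin value is $3$, achieved by both $\curly{p_1,p_3}$ and $\curly{p_2,p_3}$, so $p_1 \in \winners{\mmpbr}{}$. Reducing $\cof{p_1}$ to $2$ lowers the utility $p_1$ delivers to voter $1$, so $\curly{p_1,p_3}$ now yields min $2$ while $\curly{p_2,p_3}$ still yields min $3$ and $\curly{p_1,p_2,p_3}$ (cost $8$) is infeasible; the unique optimum excludes $p_1$, hence $p_1 \notin \winners{\mmpbr}{'}$. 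For limit monotonicity I keep the same approval structure but set $\cof{p_1} = 2$, $\cof{p_2} = \cof{p_3} = 3$, and $\bud = 5$ (so no project costs $\bud+1 = 6$); the only way to give voter $2$ positive utility within budget is $\curly{p_1,p_3}$, giving maxmin $2$, so $p_1$ wins. Raising the budget to $6$ makes the balanced set $\curly{p_2,p_3}$ affordable with min $3 > 2$, while every feasible set containing $p_1$ still has min at most $2$; thus $p_1$ is dropped, refuting limit monotonicity.

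The routine part is verifying that the claimed optima are the only ones: in each instance voter $2$ gains positive utility only if $p_3$ is funded, after which matching the stated maxmin value pins down which of $p_1,p_2$ accompanies it, and retaining $p_1$ would additionally require $p_2$ to lift voter $1$, forcing the infeasible triple $\curly{p_1,p_2,p_3}$. The conceptual obstacle, and the crux of both monotonicity failures, is the apparent paradox that \emph{relaxing} a constraint (cheapening a project or enlarging the budget) can remove a winner. The resolution the construction is designed to expose is that under the egalitarian objective the freed or added budget lets a strictly better-balanced allocation dominate, and the target project, if retained, either delivers too little after a discount or crowds out the balancing project after a budget increase. Engineering the costs so that the improved allocation is affordable exactly at the new budget while any $p_1$-containing alternative overshoots it is the one delicate step; the rest is arithmetic.
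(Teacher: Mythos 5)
Your proposal is correct and takes essentially the same approach as the paper: explicit counterexample instances for each of the three axioms, and all of your arithmetic checks out (in each case the claimed optima are indeed the only ones). The only cosmetic difference is that the paper's limit-monotonicity and strong-exhaustiveness examples exploit the degenerate situation where the maxmin value is $0$ so that \emph{every} feasible set wins, whereas your instances keep the optimum strictly positive; both are valid, and yours are somewhat smaller.
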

\begin{proof}
\begin{enumerate}[label=(\alph*)]
\item {\textbf{Discount Monotonicity:}}\\
Consider an instance $\instance$ with $\proj = \curly{p_1,p_2,p_3,p_4}$ each costing $4$, a budget $12$, and three voters with $\aof{1} = \curly{p_1,p_2}$, $\aof{2} = \curly{p_3}$, and $\aof{3} = \curly{p_4}$. It can be seen that $p_2 \in \winners{\mmpbr}{}$ since $\ruleof{\mmpbr}{}=\{\{p_1,p_3,p_4\}, \{p_2,p_3,p_4\}\}$. Now consider the case where $\cof{p_2}$ is reduced to $3$ to get new instance $\instance'$. Then, $\ruleof{\mmpbr}{'} = \{\{p_1,p_3,p_4\}\}$ and hence $p_2 \notin \winners{\mmpbr}{'}$.
\item {\textbf{Limit Monotonicity:}}\\
Consider an instance $\instance$ with $\proj = \curly{p_1,p_2,p_3,p_4,p_5,p_6}$ costing $\curly{3,1,3,3,3,6}$ respectively, a budget $12$, and four voters with $\aof{1} = \curly{p_1,p_2}$, $\aof{2} = \curly{p_3,p_4}$, $\aof{3} = \curly{p_5}$, and $\aof{4} = \curly{p_6}$. Clearly, $\{p_1,p_3,p_5\} \in \ruleof{\mmpbr}{}$. Therefore, $p_1 \in \winners{\mmpbr}{}$. If \bud is increased to $13$ to get $\instance'$, $\ruleof{\mmpbr}{'} = \{\{p_2,p_4,p_5,p_6\}, \{p_2,p_3,p_5,p_6\}\}$ and $p_1 \notin \winners{\mmpbr}{'}$.
\item {\textbf{Strong Exhaustiveness:}}\\
The above example also illustrates the counter example for strong exhaustiveness. Clearly, $S = \{p_1,p_3,p_5\} \in \ruleof{\mmpbr}{}$ but $\cof{p_2}+\cof{S} < 12$.
\end{enumerate}
This completes the proof.
\end{proof}

We now examine two important axioms from multi-winner voting literature. The first axiom is an analogue of unanimity, which requires that a project approved by all voters needs to win.
\begin{definition}[Narrow-Top Criterion \cite{faliszewski2017multiwinner}]\label{def: narrowtopmpb}
A PB rule \pbrule is said to satisfy narrow-top criterion iff, for any instance \instance, $p \in \winners{\pbrule}{}$ whenever $p \in \aof{}$ for every $i\in \voters$.
\end{definition}
\begin{proposition} \label{prop: narrowtop}
The \mmpb rule  does not satisfy narrow-top criterion.
\end{proposition}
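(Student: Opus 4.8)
The plan is to disprove the narrow-top criterion by exhibiting a single counterexample instance in which some project is approved by every voter yet lies outside all maxmin-optimal sets. The guiding intuition is a tension between the uniform benefit of a universally-approved project and budget indivisibility: a universal project raises every voter's utility by exactly the same amount, but if the budget is calibrated so that it is consumed entirely by ``private'' projects that each serve a single voter, then adjoining the universal project overflows the budget and forces some private project to be dropped, pushing the worst-off voter's utility strictly below the optimum.

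Concretely, I would take $\voters = \curly{1,2}$, $\bud = 4$, and $\proj = \curly{p_0, q_1, q_2}$ with $\cof{p_0} = 1$ and $\cof{q_1} = \cof{q_2} = 2$, where $\aof{1} = \curly{p_0, q_1}$ and $\aof{2} = \curly{p_0, q_2}$. Thus $p_0$ is approved by both voters, while $q_i$ is approved only by voter $i$. The next step is to evaluate $\minl{i \in \voters}{\uof{}{S}}$ over all feasible $S$. The set $\curly{q_1, q_2}$ has cost exactly $\bud$ and gives each voter utility $2$, so its min-utility is $2$. On the other hand, any feasible set containing $p_0$ leaves at most $\bud - \cof{p_0} = 3$ units of remaining budget, which suffices for at most one of $q_1, q_2$; hence at least one voter is left with utility only $\cof{p_0} = 1 < 2$. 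Therefore every feasible set containing $p_0$ has min-utility at most $1$, the unique optimum is $\curly{q_1, q_2}$, and we conclude $\ruleof{\mmpbr}{} = \curly{\curly{q_1, q_2}}$, so $p_0 \notin \winners{\mmpbr}{}$ despite $p_0 \in \aof{}$ for every $i$. This contradicts \Cref{def: narrowtopmpb}.

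The step deserving the most care is not the arithmetic but the quantifier hidden in the axiom: narrow-top requires only that $p_0$ belong to \emph{some} optimal set, so it is not enough to display one optimal set that avoids $p_0$. I must verify that \emph{no} feasible set containing $p_0$ attains the optimal value. This is precisely what the calibration $\cof{p_0} < \cof{q_i}$ together with $\bud = \cof{q_1} + \cof{q_2}$ guarantees: including $p_0$ leaves too little budget to fund both private projects, so the gain of $\cof{p_0}$ to the deprived voter cannot compensate for the loss of an entire private project. A cleaner way to anticipate why the construction must look like this is the following general observation: if an optimal set $S$ has enough slack that $S \cup \curly{p_0}$ is still feasible, then $\uof{}{S \cup \curly{p_0}} \geq \uof{}{S}$ for every $i$ (as all voters approve $p_0$), so $S \cup \curly{p_0}$ is also optimal and $p_0$ would win. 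Hence any counterexample is forced to make every optimal set budget-tight against $p_0$, which is exactly the configuration engineered above.
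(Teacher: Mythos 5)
Your proof is correct and follows essentially the same approach as the paper: both construct a small instance with one cheap unanimously approved project and two expensive ``private'' projects (one per voter) whose combined cost exactly exhausts the budget, so that funding the universal project forces dropping a private project and strictly lowers the minimum utility (the paper uses costs $\{1,3,3\}$ with budget $6$; you use $\{1,2,2\}$ with budget $4$). You also correctly handle the key quantifier — verifying that \emph{no} optimal set contains the universal project — which the paper settles by noting its optimal set is unique.
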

\begin{proof}
Our example uses the fact that the utility from an unanimously approved project could be low and selecting it could make the other projects unaffordable. Consider an instance $\instance$ with $\proj = \curly{p_1,p_2,p_3}$ costing $\curly{1,3,3}$ respectively, a budget $6$, and two voters with $\aof{1} = \curly{p_1,p_2}$ and $\aof{2} = \curly{p_1,p_3}$. Since $\ruleof{\mmpbr}{}=\{\{p_2,p_3\}\}$, we have $p_1 \notin \winners{\mmpbr}{}$ though $p_1 \in \aof{}$ for all \ii.
\end{proof}

The next axiom tries to capture diversity in the voting rules \cite{aziz2018egalitarian,brandt2016handbook}.
\begin{definition}[Clone Independence \cite{brandt2016handbook}]\label{def: clonempb}
A PB rule \pbrule is said to satisfy clone independence iff, for any instance \instance, $\ruleof{\pbrule}{}$ does not change when a group of voters, all having exactly the same approval vote $\aof{v}$, is replaced by a single voter with approval vote $\aof{v}$.
\end{definition}
\begin{proposition}\label{prop: clone}
The \mmpb rule satisfies clone independence.
\end{proposition}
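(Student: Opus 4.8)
The plan is to show that the voter-collapsing transformation described in \Cref{def: clonempb} leaves the \mmpb objective function pointwise unchanged on an unchanged feasible region, so that the set of maximizers, and hence \ruleof{\mmpbr}{}, cannot change. The first point to stress — and the one that deserves explicit mention to avoid any confusion with cloning of \emph{projects} — is that the definition as stated here operates purely on the \emph{voters}: it replaces a group $V$ of voters all reporting the same approval vote \aof{v} by a single voter reporting \aof{v}, while \proj, the cost function $c$, and the budget \bud are all untouched. Since feasibility $\cof{S} \leq \bud$ refers only to these untouched quantities, the feasible region is literally the same for \instance and the reduced instance $\instance'$; that is, $\feasible$ is identical in both, so \mmpb optimizes over the same domain before and after.

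Next I would fix an arbitrary $S \in \feasible$ and compare the objective $\minl{i \in \voters}{\uof{}{S}}$ across the two instances. Because $\uof{}{S} = \cof{\aof{} \cap S}$ depends on voter $i$ only through her approval set \aof{}, every voter in $V$ contributes the identical value $\cof{\aof{v} \cap S}$, and the single replacement voter in $\instance'$ contributes exactly this same value. Writing $\minl{i \notin V}{\uof{}{S}}$ for the minimum over the voters left untouched, the \mmpb objective equals $\min\curly{\minl{i \notin V}{\uof{}{S}},\; \cof{\aof{v} \cap S}}$ in \emph{both} instances, since the minimum of several identical copies of a value is just that value. Hence the objective function agrees at every $S \in \feasible$.

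Finally, since both the feasible region $\feasible$ and the objective $S \mapsto \minl{i \in \voters}{\uof{}{S}}$ coincide for \instance and $\instance'$, their maximizers coincide: $\argma{S \in \feasible}{\minl{i \in \voters}{\uof{}{S}}}$ is the same collection of subsets in the two instances, which is precisely $\ruleof{\mmpbr}{} = \ruleof{\mmpbr}{'}$. There is essentially no substantive obstacle in this argument; the only thing to handle carefully is the elementary but load-bearing fact that the minimum of a finite multiset is invariant under deleting duplicate entries, which is exactly what licenses discarding the redundant copies of \aof{v}. The argument makes no use of the size of $V$ or of which voters are collapsed, so it applies uniformly to any group of identical-voting voters.
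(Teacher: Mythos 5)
Your proof is correct and follows exactly the reasoning the paper relies on: the paper justifies this proposition with the single observation that ``redundant votes do not affect the value of the maxmin objective,'' and your argument is simply a careful elaboration of that fact (unchanged feasible region, objective invariant under deleting duplicate entries of the minimum, hence identical maximizer sets). No gap; same approach, just spelled out in full.
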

The above proposition follows from the fact that redundant votes do not affect the value of maxmin objective. Clearly, this axiom represents diversity, but it is rather narrow. Faliszewski et al. \cite{faliszewski2017multiwinner} identified that the fairness notion of diversity does not have a clear axiomatic representation in the literature. We address this gap by introducing a new axiom, called \emph{Maximal Coverage}, to capture diversity in PB as well as in multi-winner voting settings.

Let us define {\em covered voters\/}  as the voters with at least one approved project funded and a {\em redundant project\/} as a project whose removal from an outcome does not change the set of covered voters. To achieve diversity, we need to cover as many voters as possible. Our new axiom ensures that a redundant project is funded only when no more voters can be covered by doing otherwise. For example, while allocating time for plenary talks, the organizers might want to explore as many novel ideas as possible, without eliminating any key  area. They might allocate a time slot to a proposed plenary talk on a not-so-popular topic (hoping to prop up a new area), by dropping 1 out of 4 proposed talks in a popular area. This {\em covers\/} or reaches out to the audience from the new area and broadens the reach of conference. 

\begin{definition}[Maximal Coverage]\label{def: maximal}
A PB rule \pbrule is said to satisfy maximal coverage iff, for any instance \instance, $S \in \ruleof{\pbrule}{}$, $p \in S$ such that $\{j_1:p \in \aof{j_1}\} \subseteq \{j_2:(S\setminus\{p\})\cap \aof{j_2} \neq \emptyset\}$, and $i \in \voters$,
\begin{equation}
    \label{eq: maximal1}
    \winners{\pbrule}{} \cap \aof{} = \emptyset \implies \cof{a} > \bud -\cof{S\setminus\{p\}} \; \forall a \in \aof{}
\end{equation}
\end{definition}
We consider an example to understand this better. Suppose a budget of \$2.25B is to be allocated to projects in two counties, X and Y, with populations of 10000 and 6000, resp. County X proposes projects $\curly{X_1, X_2,X_3}$ costing $\curly{\$0.5B,\$1B,\$1B}$, and County Y proposes projects $\curly{Y_1, Y_2,Y_3}$ costing $\curly{\$0.7B,\$0.7B,\$0.8B}$ resp. Assume each voter approves all and only the projects of her county. A rule that maximizes utilitarian welfare selects the set $\{\{X_2, X_3\}\}$. Let $p$ be $X_2$. 
If $i = 2$ and $a = Y_1$, then the first part of \Cref{eq: maximal1} satisfies but $\cof{Y_1} \leq \bud - \cof{X_3}$. Hence it does not satisfy maximal coverage. Now, let us apply the \mmpb rule to this example. That leads to $\ruleof{\mmpbr}{} = \{\{X_1, X_2, Y_1\}, \{X_1, X_2, Y_2\}, \{X_1, X_3, Y_1\}, \{X_1, X_3, Y_2\}\}$ and $\winners{\mmpbr}{} = \{X_1, X_2, X_3, Y_1, Y_2\}$. It is notable that \Cref{eq: maximal1} holds.
\begin{proposition}\label{prop: maximal}
The \mmpb rule satisfies maximal coverage.
\end{proposition}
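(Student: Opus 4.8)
The plan is to argue by contradiction, and the decisive observation is that the hypothesis placed on voter $i$ forces the \mmpb objective value to be exactly $0$, which in turn trivializes the winner set. Suppose maximal coverage fails for \mmpb. Then there are an instance \instance, a set $S \in \ruleof{\mmpbr}{}$, a redundant project $p \in S$, a voter $i$ with $\winners{\mmpbr}{} \cap \aof{} = \emptyset$, and a project $a \in \aof{}$ with $\cof{a} \leq \bud - \cof{S \setminus \curly{p}}$. I would aim to contradict this last configuration by showing that such an affordable approved project $a$ is forced to win.

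First I would show that the optimal \mmpb value is $0$. Since $S \in \ruleof{\mmpbr}{}$, every project of $S$ wins, so $\winners{\mmpbr}{} \cap \aof{} = \emptyset$ forces $\aof{} \cap S = \emptyset$ and hence $\uof{}{S} = \cof{\aof{} \cap S} = 0$. As $S$ is optimal, $\opt = \minl{k \in \voters}{\uof{k}{S}} \leq \uof{}{S} = 0$, and since all utilities are nonnegative, $\opt = 0$.

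Next I would exploit the collapse caused by a zero optimum. Because $\opt = 0$, every feasible set $X$ satisfies $0 \leq \minl{k \in \voters}{\uof{k}{X}} \leq \opt = 0$, so every feasible set attains the optimum and $\ruleof{\mmpbr}{} = \feasible$. Consequently a project wins precisely when it is affordable, i.e.\ $\winners{\mmpbr}{} = \curly{p' \in \proj : \cof{p'} \leq \bud}$, because $\curly{p'} \in \feasible$ exactly when $\cof{p'} \leq \bud$. Finally, $\cof{a} \leq \bud - \cof{S \setminus \curly{p}} \leq \bud$ since costs are nonnegative, so $a \in \winners{\mmpbr}{}$; but $a \in \aof{}$, contradicting $\winners{\mmpbr}{} \cap \aof{} = \emptyset$. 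This completes the argument.

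I expect the main subtlety to lie in recognizing this route rather than in executing it. A more literal strategy---replacing $p$ by the affordable $a$ to form $K = (S \setminus \curly{p}) \cup \curly{a}$ and claiming $K$ strictly improves the minimum utility---runs into trouble when several voters, not just $i$, receive utility $0$ under $S$: the redundancy of $p$ only guarantees that voters approving $p$ still retain a funded project in $S \setminus \curly{p}$, and swapping in $a$ need not lift the \emph{other} uncovered voters above $0$, so $K$ might fail to beat $\opt$. The contradiction-via-$\opt=0$ argument sidesteps this bookkeeping entirely, which is why I would prefer it; notably, the redundancy hypothesis on $p$ turns out to be unnecessary for \mmpb, as it is never invoked.
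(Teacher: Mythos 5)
Your proof is correct and takes essentially the same route as the paper's: both hinge on the observation that the uncovered voter forces the \mmpb optimum to be $0$, so every feasible set is optimal, every affordable project wins, and hence any project approved by voter $i$ must cost more than \bud. The only cosmetic difference is that you argue by contradiction while the paper argues directly (concluding $\cof{a} > \bud$ outright), and your remark that the redundancy hypothesis on $p$ is never invoked applies equally to the paper's proof.
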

\begin{proof}
Let $S,p,i,a$ be as stated in \Cref{def: maximal}. Assume $\exists S$  such that the first part of \Cref{eq: maximal1} holds. Since $\winners{\mmpbr}{} \cap \aof{} = \emptyset$, $\uof{}{S} = 0$. Hence the minimum utility from any set in $\ruleof{\mmpbr}{}$ is $0$. But, since $\winners{\mmpbr}{} \cap \aof{} = \emptyset$ and $a \in \aof{}$, $\{a\} \notin \ruleof{\mmpbr}{}$. This is possible only if $\cof{a} > \bud$ and, thus, the second part of \Cref{eq: maximal1} holds.
\end{proof}
\section{Conclusion and Discussion}\label{sec: d-re-conclusion}
This chapter introduces the Maxmin Participatory Budgeting (\mmpb) rule to optimize egalitarian welfare when the costs of projects are restricted to admit only one value. We conduct a comprehensive computational and axiomatic study of this rule.

On the computational front, we demonstrate that \mmpb is strongly NP-hard. We look at two methods of coping up with this intractability: (i) identifying the tractable special cases and (ii) designing approximation algorithms. As a part of studying the first method, we show that \mmpb is only weakly NP-hard when the number of distinct votes is small, and propose a pseudo-polynomial time algorithm for solving it. We also introduce a novel parameter called \textit{scalable limit} and prove that when this parameter is also constant, \mmpb can be solved in polynomial time. As a part of second method, we devise an LP-rounding algorithm \lpalgo that provides distinct additive approximation guarantees for several families of instances, and empirically demonstrate that our algorithm yields optimal outcomes when applied to real-world PB datasets. Furthermore, we establish an upper bound on the achievable approximation ratio for \mmpb within the family of exhaustive and strategy-proof PB algorithms.

On the axiomatic front, we conduct a thorough analysis of \mmpb and introduce a new fairness axiom called \textit{maximal coverage} that aims to capture diversity in PB and multi-winner voting. We examine the compatibility of \mmpb with all other desirable properties and prove that \mmpb results in fair outcomes.

Going forward, an axiomatic characterization of \mmpb is worth exploring. We studied two egalitarian objectives: maxmin and minimax (Appendix \ref{app: minimax}). Both these objectives suffer a common disadvantage: consider a case where the utility of the worst-off voter is $0$ and it is impossible for any feasible subset of projects to give a non-zero utility to all the voters. In such a case, every feasible subset of voters is included in the outcome. To overcome this limitation of maxmin and minmax objectives, studying the egalitarian objective leximin is a promising direction. In the leximin objective, if two feasible subsets of projects give same utility to the worst-off voter, we pick the subset in which the utility of the second worst-off voter is higher, and so on until there is no tie between the sets of projects. This ensures that the ties between several sets of projects is dealt with most efficiency. From the axiomatic point of view, establishing connections between our new axiom maximal coverage and other existing axioms in the literature is an intriguing question to be pursued.

Throughout this chapter, we assumed that the cost of a project is restricted to only one value. In the next chapter, we study the model where the cost of each project is flexible and can take one of multiple possible values.
\blankpage
\chapter{Flexible Costs: Welfare Maximization when Projects have Multiple Degrees of Sophistication}\label{chap: d-fl}

\begin{quote}
\textit{In the realm of participatory budgeting, prior research on dichotomous preferences has largely focused on two scenarios: projects where costs are restricted to admit only one possible value or where costs are totally flexible permitting any amount to be allocated to each project. This chapter introduces a novel perspective by assuming that the costs are partially flexible. That is, we allow the projects to have a range of permissible costs, representing varying degrees of project sophistication. Each voter is allowed to express her preference by specifying upper and lower bounds on the cost she believes each project deserves.}

\textit{Within this framework, the outcome of a PB rule becomes a selection of projects accompanied by their corresponding costs. For such a model, we propose four utility notions and study the corresponding rules maximizing utilitarian welfare. We showcase that the existing positive findings from the model featuring projects with restricted costs can be extended to our enriched framework, where projects can have multiple permissible costs. Moreover, we undertake an analysis of fixed parameter tractability of computationally hard rules in this model.}

\textit{Further, we put forth a set of significant and intuitive axioms, aiming to capture essential aspects of multiple degrees of sophistication of projects. Through our analysis, we evaluate the extent to which the proposed PB rules satisfy these axioms, shedding light on the inherent properties and limitations of each rule.}
\end{quote}

\section{Motivation}\label{sec: d-fl-intro}
The cost of a project is the amount that needs to be allocated to the project in the event of it being selected for funding. As introduced in the \Cref{sec: intro-costs}, there are three possibilities with respect to the costs of the projects:
\begin{itemize}
    \item \textbf{\textit{Restricted costs:}} The cost of each project is restricted to a single value. For example, a dam construction project, if funded, may need a cost of exactly $\$10B$ to be completed.
    \item \textbf{\textit{Flexible costs:}} The amount allocated to each project is permitted to assume multiple values. This is further split into two possible scenarios:
    \begin{itemize}
        \item \textit{Partially flexible:} A project can be implemented upto different degrees of sophistication and every degree corresponds to a different cost. The amount allocated to the project by the PB rule must belong to this set of permissible costs.
        \item \textit{Totally flexible:} There is no restriction on the amount allocated to each project and any amount can be allocated to each project.
    \end{itemize}
\end{itemize}
The existing work on participatory budgeting under dichotomous preferences assume that the costs of projects are totally restricted \cite{talmon2019framework,fluschnik2019fair,goel2019knapsack,baumeister2020irresolute,rey2020designing,jain2020participatoryg,freeman2021truthful,benade2021preference,laruelle2021voting,aziz2018proportionally,fluschnik2019fair,laruelle2021voting,fairstein2021proportional,los2022proportional,brill2023proportionality} or that they are totally flexible \cite{bogomolnaia2005collective,duddy2015fair,aziz2019fair}. This chapter studies the case of costs being \emph{partially flexible}. This is especially applicable in scenarios where there are multiple ways of executing a project. For example, a building can be built with wood or stone, depending on the amount allocated to it. Similarly, a health service project can build a primary healthcare center, a clinic, or a multi-speciality hospital. We call each such option of a project as a possible \emph{degree of sophistication} of the project. Each of these options corresponds to a different cost. \Cref{fig: motivation_degrees} is another illustration of such a model.
\begin{figure}[!ht]
    \centering
    \includegraphics[width=0.6\linewidth]{motivation_degrees.png}
    \repeatcaption{fig: motivation_degrees}{A toy example of an MNC renovation to demonstrate the applicability of PB model under partially flexible costs. Each project could be executed in multiple ways: a sit-out area could be built with multiple materials; a food center could be a coffee cafe, or a fast-food point, or a restaurant; and so on. Each possible option corresponds to a different cost
}
\end{figure}
The goal in PB under partially flexible costs is to select a subset of projects to be funded and also determine the amount to be allocated to each selected project. 
\section{Prior Relevant Work}\label{sec: d-fl-lit}
From the perspective of flexibility of costs of the projects, several works in the literature study the case where costs are restricted to one value \cite{talmon2019framework,fluschnik2019fair,goel2019knapsack,baumeister2020irresolute,rey2020designing,jain2020participatoryg,freeman2021truthful,benade2021preference,laruelle2021voting,aziz2018proportionally,fluschnik2019fair,laruelle2021voting,fairstein2021proportional,los2022proportional,brill2023proportionality} or are totally flexible \cite{bogomolnaia2005collective,duddy2015fair,aziz2019fair}. Alternately, this chapter studies the case where costs are partially flexible. Notably, a work by Patel et al. \cite{patel2021group} on fairness for groups of projects can be viewed as our model: each degree of sophistication can be considered a separate project, all the degrees of same project are grouped, and a constraint that at most one project from each group can be funded is imposed. As a result, a particular result (\Cref{the: crule-fptas}) in this chapter can be derived from the results of Patel et al. \cite{patel2021group}, while all the other computational and axiomatic results are novel contributions.

From the perspective of preferences of the voters, preference elicitation methods typically studied in any voting framework include dichotomous preferences, ordinal preferences, and cardinal preferences. These methods also continue to be the most studied preference elicitation approaches in PB \cite{shapiro2017participatory,aziz2018proportionally,talmon2019framework,rey2020designing,jain2020participatory,benade2021preference,aziz2021proportionally,fairstein2022welfare,sreedurga2022maxmin}. However, PB is a setting with several attributes like costs associated with the projects. The preferences and utilities of the voters are thus much more complex. This propels the need to devise preference models specific to PB, as pointed out by Aziz and Shah \cite{aziz2021participatory}. One step in this direction is the introduction of a special case of dichotomous preferences, called knapsack votes, by Goel et al. \cite{goel2019knapsack}, where each voter reports her most favorite budget division. This idea however is criticised for its assumption that any project not in this division yields no utility to the voter.

We take another step in the direction of preference modeling for PB. We introduce the approach of \emph{ranged dichotomous preferences}, which strictly generalizes dichotomous preferences. Each voter reports a lower bound and an upper bound on the cost that she thinks each project deserves. All the bounds are initially set to $0$ by default. Voting proceeds in two steps. In the first step, voter starts by approving the projects she likes. For these approved projects, only the upper bounds automatically change to the highest permissible cost. In the second step, voter may \emph{optionally} change bounds for some of these approved projects, if she wishes to have a say on the amount they deserve. Note that this is cognitively not much more demanding than the standard dichotomous preferences since we do not force voters to report the bounds. 

Notably, all computational results in this chapter can also be extended to more general utility functions with minor tweaks. Nevertheless, we present the results for ranged dichotomous preferences due to their cognitive simplicity and natural relevance in the model.

It is worth mentioning that a work by Goel et al. \cite{goel2019knapsack} views divisible PB as a model of indivisible PB where every unit of money in the cost corresponds to one possible degree of sophistication, and proposes a greedy algorithm. Their work however imposes knapsack constraint on each vote and also assumes all lower bounds to be zero. Their model hence forms a very restricted special case of ours. The idea of having multiple permissible costs for each project was first proposed as a future direction in a survey by Aziz and Shah \cite{aziz2021participatory}, in which the authors called each of these costs a \emph{degree of completion} and assumed an existence of a directed graph between the degrees. That is, a degree is assumed to be successor of a less expensive degree and more preferred than it by default. In our model, degrees of projects are independent of each other (a degree could correspond to constructing with wood, while other one corresponds to constructing with cement).

\section{Contributions of the Chapter}\label{sec: d-fl-contri}
This chapter systematically studies the PB model where each project has a set of permissible costs. Such a study has been conducted by Talmon and Faliszewski \cite{talmon2019framework} for approval-based PB, which is a special case of our model with every project having only one permissible cost. We generalize the PB rules and all the positive results in \cite{talmon2019framework} to our model. Namely, we propose polynomial-time, pseudo-polynomial time, and PTAS algorithms. Followed by this, we present results on parameter tractability using the parameter scalable limit introduced in \Cref{chap: d-re} and another novel parameter, variance coefficient, we introduce in the further sections. It needs to be highlighted that, as a part of our study, we also introduce and investigate novel utility notions that are specific to our model and cannot be extended from any existing notions for restricted costs model.

We further propose budgeting axioms for our model and examine their satisfiability with respect to our PB rules. All the axiomatic results are summarized in \Cref{tab: results}. Finally, we discuss the impact of our computational and axiomatic results and make key observations.
\paragraph{\textbf{Organization of the chapter.}} We start by introducing the formal model in \Cref{sec: d-fl-prelims}. We define different utility notions (some of them are extended from the existing ones in the literature on restricted costs model; some are introduced by us explicitly for partially flexible costs model) and the corresponding PB rules in \Cref{sec: utilities}. In \Cref{sec: d-fl-comp}, we analyze the computational complexity of our PB rules and suggest ways to cope up with intractabilities. In \Cref{sec: d-fl-axioms}, we define budgeting axioms for our model and examine their satisfiability.

\section{Notations}\label{sec: d-fl-prelims}
Let us recall the necessary notations from \Cref{sec: prelims-notations} and introduce a few more. A budget \bud is available for allocation. There are $n$ voters $\voters = \curly{1,\ldots,n}$ and $m$ projects in a set \proj. Each project $p_j \in \proj$ has\mdof{}possible degrees of sophistication captured by the set $\dsetof{} = \curly{\pdof{}{0},\pdof{}{1}, \ldots,\pdof{}{\mdof{}}}$. The cost of each degree \pdof{}{} is indicated by \dcof{}{}. We assume that \dcof{}{0} is zero for all $p_j \in \proj$ and it corresponds to not funding the project $p_j$.

Each voter $i \in \voters$ reports for every project $p$, a lower bound\lowerb{}{}and an upper bound\upperb{}{}such that $\lowerb{}{},\upperb{}{} \in \curly{\dcof{}{0},\ldots,\dcof{}{\mdof{}}}$ and $\lowerb{}{} \leq \upperb{}{}$. Let \lbounds and \ubounds respectively denote the collection of all the lower bounds and upper bounds reported by all the voters. 

Let \degreeproj denote the set of all the possible degrees of all projects, or in other words, $\degreeproj = \bigcup_{p_j \in \proj}{\dsetof{}}$. We denote the cost of a set $S \subseteq \degreeproj$, $\sum_{\pdof{}{} \in S}{\dcof{}{}}$, by\csetof{}. Given a subset $S \subseteq \degreeproj$, we use\chodof{}{}to denote the chosen degree(s) of project $p$ in $S$. In other words, $\chodof{}{} = S \cap \dsetof{}$. We use the shorthand notation\chocof{}{}to denote\csetof{\chodof{}{}}. We say a subset $S \subseteq \degreeproj$ is \emph{\textbf{valid}} if $\csetof{} \leq \bud$ and $\cardof{\chodof{}{}} = 1$ for every $p_j \in \proj$. Let \valid denote the collection of all the valid subsets.

A PB instance with multiple degrees of sophistication is denoted by $\instance = \fulldeginstance$. Recall that in the PB model under flexible costs, PB rule outputs an allocation of costs to each project (\Cref{def: pbrule-flexible}). That is, it outputs the corresponding allocated cost $x_j \in \domainof{p_j} \cup \curly{0}$ for each project $p_j \in \proj$, where $\domainof{p_j}$ is the set of permissible costs of $p_j$. In our model with partially flexible costs, $\domainof{p_j} = \curly{\dcof{}{1},\ldots,\dcof{}{\mdof{}}}$ (\Cref{sec: prelims-costs}). Thus, a PB rule is defined as follows:
\begin{definition}[\textbf{PB Rule under Partially Flexible Costs}]
    Given a PB instance \instance with multiple degrees of sophistication, a PB rule \pbrule outputs a valid subset $S \in \valid$, thereby indicating the amount allocated to each project.
\end{definition}

\section{The Utility Measure and PB Rules}\label{sec: utilities}
The PB rules we study are rules maximizing utilitarian welfare. This implies that, given a function $u$ that measures utility a voter derives from a subset of projects, the corresponding PB rule outputs a valid set of projects that maximizes the sum of utilities of all the voters, i.e., it outputs some $S \in \valid$ such that $\sum_{i \in \voters}{\utof{}{}}$ is maximized.
We say a subset $S$ of projects is selected under a PB rule \pbrule if it maximizes the total utility of the voters. 

Now, we only need to define different utility functions. Given the lower and upper bounds reported by the voters, we define the utility of a voter from a valid set $S \in \valid$ in four ways. For each of these utility functions, we also give a shorthand notation for the utilitarian welfare maximizing PB rule associated with it.
\begin{enumerate}
    \item \textbf{Cardinal utility ($\boldsymbol{\nrule}$ rule)}: Each voter \ii derives a utility of $1$ from a project $p$ if the cost of the chosen degree falls within the bounds specified by the voter. Thus, $\utof{}{} = \cardof{\{p_j \in \proj : \lowerb{}{} \leq \chocof{}{} \leq \upperb{}{}, \chocof{}{} \neq 0\}}$.
    \item \textbf{Cost utility ($\boldsymbol{\crule}$ rule)}: A voter \ii derives a utility of\chocof{}{}from a project $p$ if its value falls within the bounds specified by the voter. \begin{center}
        $\utof{}{} = \suml{p_j: \lowerb{}{} \leq \chocof{}{} \leq \upperb{}{}}{\chocof{}{}}$.
    \end{center}
    \item \textbf{Cost capped utility ($\boldsymbol{\ccaprule}$ rule)}: Each voter \ii derives a utility of\chocof{}{}from a project $p$ if the value falls within the bounds reported by her, a utility of\upperb{}{}if $\chocof{}{} > \upperb{}{}$, and a utility of $0$ if $\chocof{}{} < \lowerb{}{}$. That is, with slight abuse of notation, $\utof{}{} = \sum_{p_j \in \proj}{\utof{}{j}}$, where\utof{}{j}is defined as:
    \begin{align*}
        \utof{}{j} =
        \begin{cases} 
        0 &  \chocof{}{} < \lowerb{}{}\\
        \chocof{}{} & \lowerb{}{} \leq \chocof{}{} \leq \upperb{}{}\\
        \upperb{}{} & \text{otherwise}
        \end{cases}
    \end{align*}
    \item \textbf{Distance disutility ($\boldsymbol{\drule}$ rule)}: From every project $p$, each voter derives a disutility of $0$ if the value falls within the bounds reported by her, a disutility of $\chocof{}{}-\upperb{}{}$ if $\chocof{}{} > \upperb{}{}$, and a disutility of $\lowerb{}{}-\chocof{}{}$ if $\chocof{}{} < \lowerb{}{}$. That is, with slight abuse of notation, $\dutof{}{} = \sum_{p_j \in \proj}{\dutof{}{j}}$, where\dutof{}{j}is defined as follows:
    \begin{align*}
        \dutof{}{j} =
        \begin{cases} 
        \lowerb{}{}-\chocof{}{} &  \chocof{}{} < \lowerb{}{}\\
        0 & \lowerb{}{} \leq \chocof{}{} \leq \upperb{}{}\\
        \chocof{}{}-\upperb{}{} & \text{otherwise}
        \end{cases}
    \end{align*}
    The corresponding PB rule $\boldsymbol{\drule}$ minimizes the total disutility.
\end{enumerate}
The first two utility notions are natural extensions of the existing notions \cite{talmon2019framework}, which were also discussed in \Cref{sec: prelims-dutilities} and \Cref{chap: d-re}. The former reflects that the voter is happy as long as the cost allocated to the project is acceptable to her, whereas the latter reflects that as the project gets more money, the voter gets happier if the cost is acceptable to her.

The second utility notion clearly assumes that if the project gets more money than what the voter thinks it deserves, voter derives zero utility. However, this need not always be the case in all applications. For example, suppose a voter will be happy to have an entertainment park in the neighborhood but feels that the park deserves any amount between 1000 and 5000 units. Now, if the park project is allocated 7000 units, the voter could still be happy that there is a park in the neighborhood but derive no more utility than 5000 (since it is the maximum value she thinks the park deserves). The third utility notion tackles such scenarios by modifying the second utility notion to cap the utility at\upperb{}{}instead of dropping it to $0$.

The first three utility notions assume that allocating any cost outside the range reported by the voter yields the same utility to her. However, in many situations, closer the cost of project is to the acceptable range of the voter, higher is the satisfaction voter derives from it. For example, suppose a voter feels that a certain project is worth at least 1000 units. An outcome that allocates 900 units to it is likely to be more preferred by the voter over something that allocates 50 units to it. The fourth rule handles such situations by considering the distance between the cost allocated and the closest acceptable cost as the \emph{disutility}. Farther the cost is from the acceptable range, higher is the amount misspent.
\section{Computational Results}\label{sec: d-fl-comp}
Here, we analyze the computational complexity of our PB rules. We strengthen the existing positive results in the literature \cite{talmon2019framework} and also present a few new results on fixed parameter tractability. All the exact algorithms we present are based on dynamic programming. The approximation schemes we present depend on both dynamic programming as well as a clever rounding scheme.

\subsection{The Rule ${\nrule}$}\label{sec: nrule}
Recall that \nrule outputs a valid set that maximizes the sum of the utilities of voters, where the utility of a voter is defined as $\utof{}{} = \cardof{\{p_j \in \proj : \lowerb{}{} \leq \chocof{}{} \leq \upperb{}{}, \chocof{}{} \neq 0\}}$. Talmon and Faliszewski \cite{talmon2019framework} prove that for the case with restricted cost for each project, a subset maximizing the utility can be computed in polynomial time. We strengthen this result and prove that even when there are multiple permissible costs for each project, a subset maximizing the total utility can be computed in polynomial time. Note that the dynamic programming used in the next result will be referenced multiple times in the chapter.
\begin{theorem}\label{the: nrule-p}
    For any instance \instance, a subset $S \in \valid$ that is selected under \nrule can be computed in polynomial time.
\end{theorem}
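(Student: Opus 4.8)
The plan is to exploit the fact that the objective of \nrule decomposes additively over projects, so that the only coupling between projects is the single budget constraint, and then to run a dynamic program along the \emph{utility axis} rather than the budget axis. The crucial point that buys genuine polynomial (not merely pseudo-polynomial) running time is that the total attainable utilitarian welfare is bounded by $nm$, a quantity polynomial in the input size.

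First I would rewrite the objective. For a valid set $S \in \valid$, swapping the order of summation in $\sum_{i \in \voters} u_i(S)$ gives $\sum_{p_j \in \proj} w_j(c^S(j))$, where for a cost $c \neq 0$ we set $w_j(c) = |\{i \in \voters : l_i(j) \leq c \leq h_i(j)\}|$ and $w_j(0) = 0$. Thus choosing a valid set amounts to choosing, for each project $p_j$, exactly one degree $\pdof{j}{t}$ (where the degree $\pdof{j}{0}$ of cost $0$ models not funding $p_j$), so as to maximise $\sum_j w_j(\dcof{j}{t})$ subject to the chosen costs summing to at most \bud. Writing $w_j^t := w_j(\dcof{j}{t})$, each $w_j^t$ is computable in $O(n)$ time by scanning the voters, and crucially $w_j^t \leq n$; consequently the welfare of any valid set is at most $nm$.

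Next I would set up the dynamic program indexed by attainable utility. Let $D[j][v]$ denote the minimum total cost of a selection of degrees for the first $j$ projects $p_1,\ldots,p_j$ (one degree per project) whose total utility equals exactly $v$, with $D[j][v] = \infty$ if no such selection exists. The base case is $D[0][0] = 0$ and $D[0][v] = \infty$ for $v > 0$, and the recurrence is
\begin{align*}
    D[j][v] = \min_{\substack{0 \leq t \leq \mdof{j}\\ w_j^t \leq v}} \left( D[j-1][v - w_j^t] + \dcof{j}{t} \right).
\end{align*}
The value of \nrule is then $\max\{v : D[m][v] \leq \bud\}$, and a maximising set is recovered by standard back-pointer bookkeeping during the fill. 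Since $v$ ranges only over $0,\ldots,nm$, summing the per-cell work over all projects fills the table in $O(nm\,|\degreeproj|)$ time, which is polynomial in the input size.

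Correctness follows from a routine induction on $j$ showing that $D[j][v]$ equals the claimed minimum cost; the additive decomposition of the objective is exactly what makes the recurrence valid. I do not anticipate a real obstacle here. The only genuinely substantive idea — and the reason this is a strengthening of the single-cost result rather than a restatement — is the decision to index the table by attainable utility (bounded by $nm$) instead of by budget (which would give only a pseudo-polynomial bound), together with the observation that the per-degree utilities $w_j^t$ are polynomially bounded because they merely count voters. Everything else is bookkeeping, and the same table will be reused for later results in the chapter.
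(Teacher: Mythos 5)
Your proposal is correct and is essentially the paper's own proof: the paper uses the same per-degree voter-count scores $\dscoreof{}{}$ (your $w_j^t$), the same dynamic program indexed by a prefix of projects and the total score, storing in each cell the minimum cost achieving that score, and the same key observation that the total score is bounded by $nm$, yielding the same polynomial bound (the paper states $O(m^2 n \tstar)$, matching your $O(nm\,|\degreeproj|)$). The only cosmetic difference is that the paper writes the "do not fund $p_j$" option as a separate term $A(x-1,y)$ in the recurrence, whereas you fold it in as the degree of cost $0$ with score $0$.
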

\begin{proof}
    We present an algorithm that uses dynamic programming. Let $\dscoreof{}{}$ denote the number of voters $i$ such that $\lowerb{}{} \leq \dcof{}{} \leq \upperb{}{}$. Construct a dynamic programming table such that $A(x,y)$ corresponds to the cost of cheapest valid subset of $\bigcup_{j = 1}^x{\dsetof{}}$ such that the total score of projects in the set is exactly $y$.

    Let $F(y) \subseteq \dsetof{1}$ such that $\dscoreof{1}{} = y$ for every $\pdof{1}{} \in \dsetof{1}$. We compute the first row as: $A(1,y) = \min(\dcof{}{}: \pdof{}{} \in F(y))$. All the remaining rows are computed recursively as follows: $A(x,y) = \min\Big(A\big(x-1,y\big),\;\min_{t = 1}^{\mdof{x}}{\big(A\big(x-1,y-\dscoreof{x}{t}\big)+\dcof{x}{t}\big)}\Big)$.
    In addition to computing this table, we store the sets corresponding to each entry in $A$ in a separate table $B$ as follows: if $A(x,y)$ achieves the minimum value at $A(x-1,y)$, we copy $B(x-1,y)$ to $B(x,y)$ and append $\pdof{}{0}$. If $A(x,y)$ achieves the minimum value at $A(x-1,y-\dscoreof{x}{t})+\dcof{x}{t}$ for some $t \in [1,\mdof{x}]$, we set $B(x,y) = B(x-1,y) \cup \curly{\pdof{x}{t}}$. Finally, we output the set at $B(x,y)$ such that $A(x,y)\leq \bud$ and $y$ is maximized.
    
     \paragraph{Correctness.} Recurrence ensures that for any entry in $B$, we select at most one project from \dsetof{}for every $p_j$. While selecting the outcome from 
    $B$, we ensured that its cost is within the budget. Hence, the output of the algorithm will be a valid subset. Optimality follows from the definition of $A$.
    
    \paragraph{Running time.} Each row in $A$ corresponds to making a decision about one project from \proj, and hence the number of rows is $m$. Each column corresponds to a possible total score. Since we select only one project from each\dsetof{}and maximum score of any degree of a project is $n$ by definition, maximum total score is $mn$. Thus, we have $mn$ columns. Computing each entry in row $x$ takes $O(\mdof{x})$ time. Thus, running time is $\boldsymbol{O(m^2n\tstar)}$, where $\tstar = \max_{p_j \in \proj}{\mdof{}}$.
\end{proof}

\subsection{The Rule ${\crule}$}\label{sec: crule}
Recall that \crule outputs a valid set that maximizes the sum of the utilities of voters, where the utility of a voter is: $$\utof{}{} = \suml{p_j: \lowerb{}{} \leq \chocof{}{} \leq \upperb{}{}}{\chocof{}{}}.$$ Talmon and Faliszewski \cite{talmon2019framework} prove that for the case where cost of each project is restricted to one permissible value, it is \NPH to determine if there exists a feasible subset that guarantees a total utility of at least a given value. Since their model can be modeled as a special case of ours, the hardness directly follows, as displayed in the next proof.
\begin{proposition}\label{the: crule-nph}
    For an instance \instance and a value $s$, it is \NPH to check if \crule outputs a set with at least a total utility $s$.
\end{proposition}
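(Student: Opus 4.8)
The plan is to realize the restricted-costs cost-utility model of Talmon and Faliszewski~\cite{talmon2019framework} --- for which the corresponding decision problem is already \NPH --- as a special case of our partially flexible model, and then to check that the cost utility is preserved under the embedding, so that the hardness transfers unchanged. First I would take an arbitrary restricted-costs instance in which every project $p_j$ has a single cost \cof{p_j} and each voter \ii reports an approval set \aof{}, and I would build a partially flexible instance on the same voters \voters, projects \proj, and budget \bud.

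For the embedding I would give each project $p_j$ exactly one fundable degree: set $\dsetof{} = \curly{\pdof{}{0},\pdof{}{1}}$ with $\dcof{}{0} = 0$ and $\dcof{}{1} = \cof{p_j}$, so $\mdof{} = 1$. The bounds translate the approval sets: a voter \ii with $p_j \in \aof{}$ is given $\lowerb{}{} = 0$ and $\upperb{}{} = \cof{p_j}$, while a voter with $p_j \notin \aof{}$ is given $\lowerb{}{} = \upperb{}{} = 0$ (all of these values lie in the permissible set $\curly{0,\cof{p_j}}$, as required). Since each project has a single nonzero degree, every valid subset $S \in \valid$ corresponds bijectively to a feasible funded set in the restricted instance, with $\chocof{}{} \in \curly{0,\cof{p_j}}$.

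The one thing to verify is that the cost utility \utof{}{} in the constructed instance equals the restricted-costs cost utility $\sum_{p \in \aof{}\cap S}\cof{p}$. For an approving voter a funded $p_j$ contributes $\chocof{}{} = \cof{p_j}$ (since $\cof{p_j} \in [0,\cof{p_j}]$) and an unfunded $p_j$ contributes $0$; for a non-approving voter a funded $p_j$ has $\chocof{}{} = \cof{p_j} > 0 = \upperb{}{}$ and is excluded from the sum, so it contributes $0$, and an unfunded $p_j$ again contributes $0$. Thus the two utilities agree term by term, the budget constraint is identical, and the map is computable in polynomial time; hence deciding whether \crule achieves total utility at least $s$ on the image instance is exactly the \NPH problem on the source instance.

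I do not expect a genuine obstacle, as this is essentially an ``is-a-special-case'' argument. The only point requiring care is the choice of bounds for non-approving voters: both bounds must be set so that every nonzero funded cost falls strictly outside $[\lowerb{}{},\upperb{}{}]$, ensuring such voters derive zero cost utility exactly as in the restricted model. Taking both bounds equal to $0$ (and assuming, without loss of generality, strictly positive project costs) accomplishes this cleanly.
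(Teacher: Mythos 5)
Your proposal is correct and follows essentially the same route as the paper's proof: a reduction from the \NPH cost-utility problem of Talmon and Faliszewski under restricted costs, giving each project exactly two degrees (cost $0$ and $\cof{p_j}$) and encoding approval via the bounds $[0,\cof{p_j}]$ for approved projects and $[0,0]$ otherwise. The paper states the same construction and skips the correctness verification that you spell out explicitly, so your write-up is, if anything, slightly more complete.
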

\begin{proof}
We reduce from a known \NPH rule for PB under dichotomous preferences and restricted costs \cite{talmon2019framework}. Given such a PB instance (where each project $p_j$ has some cost $c(p_j)$ and every voter reports a subset $A_i$ of projects she likes), we construct an instance for \crule as follows: For every project $p_j \in \proj$, create exactly two degrees such that $\dcof{}{0} = 0$ and $\dcof{}{1} = c(\pa)$. For each voter $i$, set $\lowerb{}{} = 0$ for every project $p$ and set $\upperb{}{} = \dcof{}{1}$ if and only if $p \in A_i$. Clearly, both the instances are equivalent and we skip the proof of correctness.
\end{proof}

To cope up with the intractability, Talmon and Faliszewski \cite{talmon2019framework} prove that if cost of each project is restricted to one permissible value, the problem has a pseudo-polynomial time algorithm and FPTAS. We extend both these results to our model with multiple permissible costs.
\begin{proposition}\label{the: crule-pseudo}
    For any instance \instance, a subset $S \in \valid$ selected under \crule can be computed in pseudo-polynomial time.
\end{proposition}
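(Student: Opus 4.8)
The plan is to observe that, after swapping the order of summation, the \crule objective is exactly a multiple-choice knapsack problem, and then to reuse a weighted variant of the dynamic program from \Cref{the: nrule-p}. Writing the total utility as $\sum_{i \in \voters} \utof{}{} = \sum_{p_j \in \proj} \chocof{}{} \cdot |\{i \in \voters : \lowerb{}{} \leq \chocof{}{} \leq \upperb{}{}\}|$, each chosen degree $\pdof{j}{t}$ contributes a \emph{weighted score} $\dcof{j}{t}\cdot\dscoreof{j}{t}$, where $\dscoreof{j}{t}$ is the number of voters whose reported range contains $\dcof{j}{t}$ (precisely the quantity defined in the proof of \Cref{the: nrule-p}). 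Thus the task reduces to choosing exactly one degree per project so that the total cost stays within \bud while the sum of these weighted scores is maximized.

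First I would precompute, for every project $p_j$ and degree $t \in \{0,\ldots,\mdof{j}\}$, the weighted score $\dcof{j}{t}\cdot\dscoreof{j}{t}$, which takes time polynomial in $m$, $n$, and \tstar. Next I would set up a table $A(x,y)$ storing the minimum cost of a valid degree assignment to $p_1,\ldots,p_x$ whose total weighted score equals exactly $y$, mirroring the table of \Cref{the: nrule-p} but placing the cost-weighted utility (rather than the cardinal score) on the columns. The first row is filled from $\dsetof{1}$, and the recurrence is $A(x,y) = \min\big(A(x-1,y),\ \min_{t=1}^{\mdof{x}} [A(x-1,\, y - \dcof{x}{t}\cdot\dscoreof{x}{t}) + \dcof{x}{t}]\big)$, where the first argument captures selecting the zero-cost degree $\pdof{x}{0}$. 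As in \Cref{the: nrule-p}, I would keep a parallel table recording the degree chosen at each step, and finally output the assignment attaining the largest $y$ with $A(m,y) \leq \bud$. Validity (one degree per project, cost within budget) and optimality follow exactly as in \Cref{the: nrule-p}.

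The one point needing care, and the only reason this is pseudo-polynomial rather than polynomial, is bounding the column range. In \Cref{the: nrule-p} the per-project score is at most $n$, so the total score is at most $mn$ and the table is polynomial; here, by contrast, the weighted score of a single degree can be as large as $n\cdot\dcof{j}{t}$, and the total weighted score of any valid assignment is at most $n\sum_j \dcof{j}{t_j} \leq n\bud$. Hence $y$ ranges over $\{0,1,\ldots,n\bud\}$, the table has $O(mn\bud)$ entries, and each entry costs $O(\tstar)$ time, giving an overall running time of $O(mn\bud\tstar)$. This is polynomial in the unary encoding of \bud and therefore pseudo-polynomial, which establishes the claim. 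Equivalently, one could index the columns by the consumed budget $b \in \{0,\ldots,\bud\}$ and store the maximum achievable utility, yielding the same bound.
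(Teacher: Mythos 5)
Your proof is correct and takes essentially the same route as the paper: the paper's proof likewise replaces each score $\dscoreof{}{}$ from \Cref{the: nrule-p} by the cost-weighted score $\dcof{}{}\cdot\dscoreof{}{}$, reuses the same dynamic program, and bounds the total achievable score by $n\bud$ to obtain a table of size $O(mn\bud)$ and running time $O(mn\tstar\bud)$. Your write-up merely makes explicit the swap-of-summation identity and the recurrence, which the paper leaves implicit.
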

\begin{proof}
    We multiply every $\operatorname{s}(P_j^t)$ in the proof of \Cref{the: nrule-p} with \dcof{}{}. DP tables are also constructed as explained in \Cref{the: nrule-p}. The maximum score achievable by each \pdof{}{} is $n\dcof{}{}$. The total score is upper bounded by $n\csetof{S}$ since at most a single project from each \degreeset{} is chosen into each $B(x,y)$. This is bounded by $n\bud$ since $S \in \valid$. Thus, the table size is $O(mn\bud)$ and the time is $\boldsymbol{O(mn\tstar\bud)}$, where $\tstar = \max_{p_j \in \proj}{\mdof{}}$.
\end{proof}
\begin{theorem}\label{the: crule-fptas}
    There is an FPTAS to compute the outcome of \crule for any instance \instance.
\end{theorem}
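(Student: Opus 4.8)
The plan is to build an FPTAS by combining the pseudo-polynomial dynamic program of \Cref{the: crule-pseudo} with a standard cost-scaling and rounding argument, exactly as in the classical FPTAS for \knap. The key observation is that the pseudo-polynomial running time $O(mn\tstar\bud)$ depends polynomially on the numerical value of the budget \bud, and the only source of non-polynomiality is the magnitude of the attainable utility values. So first I would note that the objective of \crule, namely $\sum_{i \in \voters}{\utof{}{}}$ where $\utof{}{} = \suml{p_j: \lowerb{}{} \leq \chocof{}{} \leq \upperb{}{}}{\chocof{}{}}$, is bounded above by $n\bud$, since each voter's utility is at most \bud (the total cost allocated) and there are $n$ voters. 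Let $OPT$ denote this optimal utilitarian welfare.

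Next I would introduce a scaling factor. Given the accuracy parameter $\epsilon > 0$, define a scaled instance where each degree's contribution to the objective is rounded down after dividing by a factor $K = \frac{\epsilon \cdot U_{\max}}{mn}$, where $U_{\max}$ is a polynomially-computable upper bound on $OPT$ (for instance $U_{\max} = n\bud$, or more tightly the sum over projects of the maximum permissible cost). Concretely, for each project degree $\pdof{}{}$ one replaces its per-degree welfare contribution $\dscoreof{}{}\cdot\dcof{}{}$ (the number of voters who accept that cost, times the cost) by its rounded-down value after division by $K$. I would then run the dynamic program of \Cref{the: crule-pseudo} on these rounded scores. Because every per-degree score is now bounded by $U_{\max}/K = mn/\epsilon$, and the DP runs in time polynomial in the maximum total score, the total running time becomes $O\!\left(mn\tstar \cdot \frac{mn}{\epsilon}\right) = O\!\left(\frac{m^2 n^2 \tstar}{\epsilon}\right)$, which is polynomial in the input size and in $\frac{1}{\epsilon}$, as required by the definition of an FPTAS.

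The correctness argument is the usual rounding-error accounting. Selecting one degree per project means the returned valid set $S$ aggregates at most $m$ rounded terms into the objective (one per project, across all voters the loss is controlled per degree), so the cumulative rounding loss from the floor operations is at most $m \cdot K$ in the scaled units, which translates back to an additive loss of at most $\epsilon \cdot U_{\max} / n$ per... more carefully, the total additive error is bounded by (number of rounded terms) $\times K$, and choosing $K$ so that this product is at most $\epsilon \cdot OPT$ yields $ALG \geq (1-\epsilon)\cdot OPT$. The feasibility (validity) of the output is inherited directly from the DP of \Cref{the: crule-pseudo}, which already enforces $\csetof{} \leq \bud$ and exactly one chosen degree per project, since we only rescale the objective values and leave the costs and the budget constraint untouched.

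The main obstacle I anticipate is getting the error-accounting constants right, because here the welfare is a sum over voters rather than a single knapsack value: each project contributes to many voters simultaneously, so the quantity being rounded is the aggregate $\dscoreof{}{}\cdot\dcof{}{}$ rather than a single item value. I must verify that rounding these aggregate per-degree scores (not per-voter values) still bounds the number of accumulated floor errors by $m$ (the number of projects, since validity selects one degree per project), and that the chosen $K$ makes the total error at most $\epsilon\cdot OPT$ while keeping scores polynomially bounded. A secondary subtlety is that the naive bound $U_{\max} = n\bud$ could make $K$ small and the analysis loose; using a tighter instance-dependent lower bound on $OPT$ (for example the best single-project allocation) to anchor the scaling would sharpen the guarantee, but is not strictly necessary for the FPTAS claim. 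Once these bookkeeping details are pinned down, the result follows by the same template as the standard \knap FPTAS.
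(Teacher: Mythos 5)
There is a genuine gap in your scaling step, and it is exactly the point your last paragraph waves away. You anchor the scaling factor to an \emph{upper} bound on the optimum, $K = \epsilon\, U_{\max}/(mn)$ with $U_{\max} = n\bud$, and then assert that "choosing $K$ so that (number of rounded terms) $\times K$ is at most $\epsilon\cdot OPT$" finishes the argument. But with your $K$ the total rounding error is only bounded by $m\cdot K = \epsilon\bud$, and $\epsilon\bud \leq \epsilon\cdot OPT$ requires $OPT \geq \bud$, which simply need not hold: the welfare of \crule counts only allocated costs that fall inside voters' bounds, so $OPT$ can be arbitrarily smaller than \bud. Concretely, take two projects: project $1$ has a single degree of cost $1$ accepted by all $n$ voters, and project $2$ has a single degree of cost \bud accepted by nobody. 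Then $OPT = n$, but with $\bud > 2n/\epsilon$ your rounded score of project $1$'s degree is $\lfloor 2n/(\epsilon\bud)\rfloor = 0$, the DP sees an all-zero objective, and it may return a valid set of true welfare $0$, violating $ALG \geq (1-\epsilon)OPT$. So the "tighter instance-dependent lower bound on $OPT$" you call a sharpening that is "not strictly necessary" is in fact the crux of correctness, not an optional refinement.

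The paper's proof follows your same template (round the aggregate per-degree scores, rerun the DP of the pseudo-polynomial algorithm), but it anchors the scaling to $M = \max_{j,t}\dscoreof{}{}$, the maximum per-degree score, \emph{after first pruning every degree that cannot fit in the budget by itself}. This pruning guarantees $M \leq OPT$ (the single surviving degree of maximum score is itself a valid allocation), the scale $K = \epsilon M/m$ then gives total error at most $mK = \epsilon M \leq \epsilon\cdot OPT$, and the rounded scores are bounded by $m/\epsilon$ (since every score is at most $M$), so the DP has at most $m^2/\epsilon$ columns and runs in $O(m^3\tstar/\epsilon)$ time. Your running-time accounting is fine in spirit (polynomial either way), and your observation that validity of the output is inherited from the DP is correct; the only, but fatal, defect is the choice of anchor. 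Replacing $U_{\max}/n$ by the pruned maximum per-degree score $M$ (equivalently, your "best single-project allocation") turns your sketch into essentially the paper's proof.
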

\begin{proof}
    The idea is inspired from one of the existing FPTAS algorithm techniques of the knapsack problem \cite{ibarra1975fast,lawler1977fast,patel2021group}. We first round the scores of all the projects and then use the DP table explained in \Cref{the: nrule-p} on the modified instance with rounded scores. 

    Given an instance \instance, let $M$ be the maximum score of a degree of project, i.e., $M = \max_{p_j \in \proj, t \in [1,\mdof{}]}{\dscoreof{}{}}$. We can easily ensure that $M \leq OPT$ by eliminating all the project degrees that cannot be part of any set with cost within \bud. Take any $\epsilon \in (0,1)$. Now, create new scores of all the projects as follows: $\msof{} = \floor*{\frac{s(P^t_j)m}{\epsilon M}}$. 
    We construct the DP tables similar to those in \Cref{the: nrule-p}, but while considering new scores. 
    By the definition of\msof{}, for every $p_j \in \proj$ and $t \in [1,\mdof{}]$:
    \begin{align}
        \label{eq: nsltos}
        \msof{} &\leq \frac{\dscoreof{}{}m}{\epsilon M}\\
        \label{eq: nsgtos}
        \dscoreof{}{} &\leq \frac{\epsilon M (\msof{}+1)}{m}
    \end{align}
     Let $S$ be the outcome of this DP algorithm. Suppose $O$ denotes the optimal solution for \instance under \crule. Please note that the DP ensures that $S$ is a valid set. Since $S$ is the optimal solution for the modified scores,
    \begin{align}
        \label{eq: sgto}
        \suml{j=1}^m{\msof{\chodof{}{O}}} \leq \suml{j=1}^m{\msof{\chodof{}{S}}}
    \end{align}
    Now, we prove that the approximation factor of $(1-\epsilon)$ holds.
    \begin{align}
        \tag{From \Cref{eq: nsgtos}}
        OPT = \suml{j=1}^m{\dscoreof{}{\chodof{}{O}}} &\leq \suml{j=1}^m{\frac{\epsilon M (\msof{\chodof{}{O}}+1)}{m}}\\
        \tag{From \Cref{eq: sgto}}
        &\leq \suml{j=1}^m{\frac{\epsilon M \msof{\chodof{}{S}}}{m}}+\epsilon M\\
        \tag{From \Cref{eq: nsltos}}
        &\leq \frac{\epsilon M}{m}\suml{j=1}^m{\frac{\dscoreof{}{\chodof{}{S}}m}{\epsilon M}}+\epsilon M\\
        \tag{$\because M \leq OPT$}
        \suml{j=1}^m{\dscoreof{}{\chodof{}{S}}} &\geq (1-\epsilon)OPT
    \end{align}
    \paragraph{Running Time.} The table has $m$ rows. The modified score of each degree of project is upper bounded by $\frac{s(P^t_j)m}{\epsilon M}$, which is further bounded by $\frac{m}{\epsilon}$ due to the definition of $M$. Since the output is a valid set, it has at most $m$ projects and the maximum possible total score is upper bounded by $\frac{m^2}{\epsilon}$, which is the number of columns. Computing each entry in row $x$ takes $O(\mdof{x})$ time. Thus, the running time is $\boldsymbol{O(\frac{m^3\tstar}{\epsilon})}$, where $\tstar = \max_{p_j \in \proj}{\mdof{}}$.
\end{proof}
\subsubsection{Fixed Parameter Tractability}\label{sec: crule-fpt}
We introduced a new parameter called scalable limit in \Cref{sec: gcd} for PB under restricted costs and observed that this value is often small in PB elections in real life (e.g., datasets at https://pbstanford.org/ where budget and all the costs are multiples of some very high value). Motivated by this, we prove that \crule is in FPT with respect to scalable limit. Before that, we define the scalable limit in flexible costs model as follows:
\begin{definition}[\textbf{Scalable Limit (}$\boldsymbol{\slimit}$\textbf{)}]\label{def: slimit}
    For any instance \instance, we refer to $\frac{\max_{\pdof{}{} \in \degreeproj}{\dcof{}{}}}{\tiny{GCD}(\dcof{1}{1},\ldots,\dcof{m}{\mdof{m}},\bud)}$ as the \textbf{scalable limit} (\slimit) of the instance.
\end{definition}
Intuitively, we scale down all the costs and budget as much as possible, while ensuring that all these values continue to be integers. The scalable limit then refers to the cost of the costliest degree in $\degreeproj$ in this scaled down instance.
\begin{proposition}\label{the: crule-sl}
    For any instance \instance, computing a subset $S \in \valid$ that is selected under \crule is fixed parameter tractable parameterized by the scalable limit \slimit.
\end{proposition}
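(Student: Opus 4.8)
The plan is to obtain a genuinely \FPT running time by first \emph{normalising} the instance and then re-running the pseudo-polynomial dynamic program of \Cref{the: crule-pseudo}, whose only troublesome dimension scales with the budget. First I would set $g = \mathrm{GCD}(\dcof{1}{1},\ldots,\dcof{m}{\mdof{m}},\bud)$ and divide every degree cost $\dcof{}{}$, every reported bound $\lowerb{}{},\upperb{}{}$, and the budget \bud by $g$, obtaining an instance $\instance'$. Since every $\dcof{}{}$ divides into $g$ by definition and every bound lies in $\curly{\dcof{}{0},\ldots,\dcof{}{\mdof{}}}$, all scaled quantities remain integral. The predicate $\lowerb{}{}\leq\chocof{}{}\leq\upperb{}{}$ is invariant under dividing both sides by $g$, so each voter's utility is merely multiplied by the fixed constant $g$; hence scaling is a utility-preserving bijection between valid sets of \instance and of $\instance'$ and the set of \crule maximisers is unchanged. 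By the definition of \slimit, the largest degree cost in $\instance'$ equals exactly \slimit.

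The crucial observation is that in $\instance'$ the cost of a valid set is bounded \emph{independently of the budget}: a valid set chooses one degree per project, so its cost is at most $\sum_{p_j\in\proj}\max_{t}\dcof{}{}\leq m\slimit$. Recall from \Cref{the: crule-pseudo} that the total welfare of a valid set $S$ equals its total score $\suml{p_j}{\dscoreof{}{}\,\chocof{}{}}$, which is at most $n\csetof{S}$. In that proof $\csetof{S}$ was bounded only by \bud, giving $n\bud$ columns; here, by the remark above, $\csetof{S}\leq m\slimit$, so the attainable total score, and thus the number of columns, is bounded by $nm\slimit$. Running the identical dynamic program therefore gives a table with $m$ rows and $O(nm\slimit)$ columns, each entry computed in $O(\tstar)$ time, for a total of $O(m^2 n \tstar \slimit)$ time. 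As $n$ and $\tstar$ are at most $|\instance|$, this has the form $\slimit\cdot|\instance|^{O(1)}$, establishing membership in \FPT parameterized by \slimit.

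I expect no deep obstacle: the two points that need genuine care are the correctness of the scaling step (already handled by the bijection-plus-uniform-scaling argument above) and the budget-free cost bound $m\slimit$, which is what decouples the score axis from \bud even when $\bud/g$ is exponentially larger than \slimit. This is exactly why, unlike the egalitarian analogue \Cref{the: scalablelimit}, no integer-programming machinery and no additional parameter for the number of distinct votes is required: the utilitarian objective folds linearly into the per-degree scores $\dscoreof{}{}\,\dcof{}{}$ and is captured directly by the dynamic program, rather than through one constraint per voter as the $\min$ objective of \mmpb would demand.
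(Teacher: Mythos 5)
Your proposal is correct and follows essentially the same route as the paper: scale all costs and the budget down by the GCD, then rerun the pseudo-polynomial dynamic program of \Cref{the: crule-pseudo} on the scaled instance, where the score axis is now bounded by $mn\slimit$ (you bound the valid set's cost by $m\slimit$ and multiply by $n$; the paper bounds each degree's score by $n\slimit$ and multiplies by $m$ — the same bound), yielding the identical $O(m^2 n \slimit \tstar)$ running time. Your added care about scaling the voters' bounds and the utility-preserving bijection is implicit in the paper (since bounds are drawn from the set of degree costs), so the two arguments coincide.
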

\begin{proof}
    We again prove this by constructing a DP table. Let $k =\frac{1}{\tiny{GCD}(\dcof{1}{1},\ldots,\dcof{m}{\mdof{m}},\bud)}$. First, we get a new instance $\instance'$ by scaling down all the costs and budget as follows: $\dcof{}{'t} =k\dcof{}{}$ and $\bud' = k\bud$. Then, we may construct the DP table similar to that in the proof of \Cref{the: nrule-p}. 

    \paragraph{Running Time.} The table has $m$ rows. The modified cost of each degree of project is upper bounded by \slimit and hence the maximum score possible from each degree of a project is upperbounded by $n\slimit$. Since the output is a valid set, there can be at most $m$ projects in it and hence the maximum possible total score is upper bounded by $mn\slimit$, which is the number of columns. Computing each entry in row $x$ takes $O(\mdof{x})$ time and the running time is $\boldsymbol{O(m^2n\slimit \tstar)}$.
\end{proof}
\begin{proposition}
    For any instance \instance, computing a subset $S \in \valid$ that is selected under \crule is fixed parameter tractable parameterized by the number of projects $m$.
\end{proposition}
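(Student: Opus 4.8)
The plan is to mirror the approach used for the restricted-cost analogue (\Cref{prop: projects}) and cast the problem as an integer linear program to which Lenstra's theorem (\Cref{the: ILPvariables}) can be applied. First I would rewrite the objective of \crule in a form that is linear in the choice of degrees. Swapping the order of summation in the utilitarian welfare, for any valid $S \in \valid$,
\[
\sum_{i \in \voters}\utof{}{} \;=\; \sum_{p_j \in \proj} \chocof{}{}\cdot\bigl|\{\,i \in \voters : \lowerb{}{} \le \chocof{}{} \le \upperb{}{}\,\}\bigr|,
\]
so the contribution of a project $p_j$ depends only on its own chosen degree. Writing $\dscoreof{}{}$ for the number of voters accepting the cost $\dcof{}{}$ (exactly as in \Cref{the: nrule-p}), the degree $t$ of $p_j$ contributes the constant $\dcof{}{}\,\dscoreof{}{}$ to the welfare.

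Next I would introduce a binary variable $x_{j,t}$ for every project $p_j$ and every degree $t \in \{0,\ldots,\mdof{}\}$, and formulate
\begin{align*}
\max\; & \sum_{j=1}^{m}\sum_{t=0}^{\mdof{}} \dcof{}{}\,\dscoreof{}{}\,x_{j,t}\\
\text{s.t. }& \sum_{t=0}^{\mdof{}} x_{j,t} = 1 \quad \forall p_j \in \proj,\\
& \sum_{j=1}^{m}\sum_{t=0}^{\mdof{}} \dcof{}{}\,x_{j,t} \le \bud,\\
& x_{j,t}\in\{0,1\}.
\end{align*}
The ``exactly one per project'' constraints guarantee $\cardof{\chodof{}{}}=1$ and the budget constraint guarantees $S\in\valid$, so feasible solutions correspond precisely to valid subsets and the objective equals the utilitarian welfare under the cost utility. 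Applying \Cref{the: ILPvariables} then solves this ILP in time $f(\#\text{variables})\cdot|\instance|^{O(1)}$.

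The main obstacle is the variable count. The encoding above uses $\sum_{j}(\mdof{}+1)\le m(\tstar+1)$ variables, so Lenstra's theorem as stated yields fixed-parameter tractability only in the combined parameter $m+\tstar$ (equivalently, in $m$ once the number of degrees per project is bounded), not in $m$ alone with $\tstar$ unbounded. The crux is therefore to cut the number of integer variables down to a function of $m$. I would first discard every degree $t$ with $\dscoreof{}{}=0$, since it has zero value yet positive cost and is dominated by not funding the project, and more generally prune cost/value-dominated degrees within each $\dsetof{}$. If such pruning (or, alternatively, an argument exploiting the single-fractional-class structure of the multiple-choice-knapsack LP relaxation together with an integer-programming proximity bound) leaves only $g(m)$ useful degrees per project, the ILP would have $O(m\,g(m))$ variables and \Cref{the: ILPvariables} would close the argument; otherwise one retreats to the weaker guarantees already established via the scalable limit (\Cref{the: crule-sl}) and the pseudo-polynomial algorithm (\Cref{the: crule-pseudo}). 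Verifying that the per-project degree count can be bounded purely in terms of $m$ is the step I expect to be genuinely hard.
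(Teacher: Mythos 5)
Your proposal does not close the proof, and you say so yourself: everything hinges on cutting the number of integer variables down to a function of $m$, and you flag exactly that step as the one you cannot verify. That gap is real and, unfortunately, unfixable. The ILP you write has one variable per project--degree pair, i.e.\ $O(m\,\tstar)$ variables, so \Cref{the: ILPvariables} yields fixed-parameter tractability only in the combined parameter $m+\tstar$. The pruning idea cannot rescue this: after deleting dominated degrees, a single project can still carry arbitrarily many undominated (cost, score) pairs on its efficiency frontier, and depending on the budget and on the other projects any one of them can be the unique optimal choice, so no bound $g(m)$ on the per-project degree count exists. Your fallback position (settle for the combined parameter, or for \Cref{the: crule-pseudo} and \Cref{the: crule-sl}) is where the argument genuinely ends.

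For comparison, the paper's own proof takes a different route --- brute force rather than an ILP --- and stumbles on precisely the obstacle you identified: it enumerates all valid subsets, of which there are at most $(\tstar)^m$ (one degree choice per project), and evaluates each in polynomial time. But $(\tstar)^m\cdot\mathrm{poly}(|\instance|)$ is not of the form $f(m)\cdot|\instance|^{O(1)}$, because $\tstar$ is an input quantity not bounded by any function of $m$; that running time establishes only XP membership in $m$ (equivalently, \FPT in $m+\tstar$, which is exactly what your ILP already gives). Your suspicion that the missing step is ``genuinely hard'' can in fact be sharpened: with a single voter whose bounds accept every permissible cost, the welfare of a valid set under \crule equals its total cost, so the decision version with $m$ projects, each carrying a list of permissible costs, encodes the problem of picking one number from each of $m$ lists so as to hit a target sum --- a multipartite variant of \subsum that is known to be \WH parameterized by the number of lists. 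This is essentially the reduction in \Cref{the: weaknph}, reorganized so that the integers become degrees of $m$ projects rather than $m$ single-degree projects. Hence neither your route nor the paper's can establish the proposition as stated unless $\FPT = \mathsf{W[1]}$; the statement that both arguments actually support is fixed-parameter tractability in the combined parameter $m+\tstar$.
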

The above proposition follows from the fact the total number of valid subsets is exponential in $m$ (upper bounded by $(\tstar)^m$, where $\tstar = \max_{p_j \in \proj}{\mdof{}}$) and computing the total utility of any set under \crule can be done in polynomial time.

\subsection{The Rule ${\ccaprule}$}\label{sec: ccaprule}
This rule is similar to the previous rule \crule, with the only difference being the way a voter is assumed to view the projects allocated higher cost than her approved limit. Recall that the utility is defined as $\utof{}{} = \sum_{p_j \in \proj}{\utof{}{j}}$, where\utof{}{j}is as follows:
\begin{align*}
        \utof{}{j} =
        \begin{cases} 
        0 &  \chocof{}{} < \lowerb{}{}\\
        \chocof{}{} & \lowerb{}{} \leq \chocof{}{} \leq \upperb{}{}\\
        \upperb{}{} & \text{otherwise}
        \end{cases}
\end{align*}

\begin{proposition}\label{the: ccaprule}
    Given an instance \instance, the following statements hold:
    \begin{enumerate}[(a)]
        \item For any value $s$, it is \NPH to determine if \ccaprule outputs a set that has a total utility of at least $s$.
        \item A subset $S \in \valid$ that is selected under \ccaprule can be computed in pseudo-polynomial time.
        \item There is an FPTAS for \ccaprule.
        \item Computing a subset $S \in \valid$ that is selected under \ccaprule is fixed parameter tractable parameterized by the scalable 
 limit \slimit.
        \item Computing a subset $S \in \valid$ that is selected under \ccaprule is fixed parameter tractable parameterized by the number of projects $m$.
    \end{enumerate}
\end{proposition}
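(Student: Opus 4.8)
The plan is to exploit the fact that \ccaprule differs from \crule only in how the utility a voter draws from a single project depends on the chosen cost, and that in both rules this contribution is a \emph{fixed} nonnegative number once a degree is fixed. Concretely, for every project $p_j$ and degree $\pdof{}{t}$ I would define the capped score
\[
\dscoreof{}{} = \suml{i \in \voters}{\phi_i(j,t)}, \quad \phi_i(j,t) = \begin{cases} 0 &  \dcof{}{} < \lowerb{}{}\\ \dcof{}{} & \lowerb{}{} \leq \dcof{}{} \leq \upperb{}{}\\ \upperb{}{} & \text{otherwise} \end{cases}
\]
which is computable in polynomial time by one pass over the voters. Since choosing one degree per project contributes its score additively and independently to the objective, each algorithm built for \crule applies verbatim once these capped scores replace the cost-utility scores; only the bookkeeping bounds need rechecking.

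For part (a), I would reuse the reduction of \Cref{the: crule-nph} unchanged: for each project $p_j$ of the source instance create two degrees with $\dcof{}{0}=0$ and $\dcof{}{1}=\cof{p_j}$, set $\lowerb{}{}=0$ for every project, and set $\upperb{}{}=\dcof{}{1}$ iff $p_j \in \aof{}$. The only thing to verify is that capping does not alter the induced utilities: when $p_j \notin \aof{}$ we have $\upperb{}{}=0$, so funding the nonzero degree gives $\chocof{}{}=\cof{p_j}>0=\upperb{}{}$ and the capped utility equals $\upperb{}{}=0$, exactly as under \crule; when $p_j \in \aof{}$ the chosen cost lies in $[0,\cof{p_j}]$ and the utility equals $\chocof{}{}$. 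Hence the instances remain equivalent and \NPH follows.

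For parts (b)--(e), I would invoke the same dynamic program, rounding scheme, GCD-scaling, and brute-force enumeration as for \crule in \Cref{the: crule-pseudo,the: crule-fptas,the: crule-sl} and the subsequent proposition, the crucial point being that the a priori bounds on the objective are preserved. The key inequality is $\phi_i(j,t) \leq \dcof{}{}$ (capping can only lower the within-range value), so the total utility of any $S \in \valid$ is at most $n\csetof{} \leq n\bud$, matching the bound used for \crule. This keeps the number of dynamic-programming columns at $O(mn\bud)$ for (b); preserves both the $M \leq OPT$ guarantee and the $(1-\epsilon)$ error analysis for (c) after rounding $\msof{} = \floor*{\dfrac{\dscoreof{}{}m}{\epsilon M}}$; bounds each scaled degree score by $n\slimit$ for (d); and leaves the enumeration of the at most $(\tstar)^m$ valid subsets, each with polynomial-time utility evaluation, valid for (e).

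The main obstacle --- really the only place where anything could break --- is confirming that the capped utility still admits the clean upper bound $\phi_i(j,t) \leq \dcof{}{}$ together with $\dscoreof{}{} \geq 0$ on every degree, since these two facts are exactly what the column bounds of the dynamic program and the rounding-error analysis of the FPTAS depend on. Once this monotone-capping observation is secured, all five claims reduce to restating the corresponding arguments for \crule with the substituted score function, and I would not expect any genuinely new technical difficulty.
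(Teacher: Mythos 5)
Your proposal is correct and follows essentially the same route as the paper's proof: both replace the \crule score with the capped per-degree score $\suml{i \in \voters}{\utof{}{j}}$, observe that the NP-hardness reduction and the brute-force enumeration carry over verbatim (since capping is vacuous in the reduced two-degree instances), and note that the bound $\utof{}{j} \leq \chocof{}{}$ preserves the column bounds of the dynamic program, the FPTAS rounding analysis, and the scalable-limit argument. Your explicit verification that capping does not alter utilities in the reduction for part (a) is a slightly more detailed justification of what the paper states in one line, but it is not a different argument.
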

\begin{proof}
    All the above statements follow from the proofs in \Cref{sec: crule} with some minor changes. Statements (a) and (e) follow without any changes in the proof (\ccaprule is the same as \crule when every project has only permissible cost).

    For (b), we modify the definition of\dscoreof{}{}as $\dscoreof{}{} = \suml{i \in \voters}{\utof{}{j}}$, where $\utof{}{j}$ is as defined in the \ccaprule rule. The maximum possible total score of $S$ is again $n\csetof{S}$ (since $\utof{}{j} \leq \chocof{}{S}$ always) and the rest follows. For (c), we replace $\dscoreof{}{}$ in the proof of \Cref{the: crule-fptas} with the above value. Rest of the proof remains the same. The proof for (d) follows from \Cref{the: crule-sl} since $\utof{}{j}$ is upper-bounded by \dcof{}{} for every $i \in \voters$ and $p_j \in \proj$.
\end{proof}

\subsection{The Rule ${\drule}$}\label{sec: drule}
Recall that \drule outputs the feasible set that minimizes the sum of disutilities of all the voters, where the disutility of voter $i$ is defined as $\dutof{}{} = \sum_{p_j \in \proj}{\dutof{}{j}}$, such that\dutof{}{j}is as follows: 
\begin{align*}
        \dutof{}{j} =
        \begin{cases} 
        \lowerb{}{}-\chocof{}{} &  \chocof{}{} < \lowerb{}{}\\
        0 & \lowerb{}{} \leq \chocof{}{} \leq \upperb{}{}\\
        \chocof{}{}-\upperb{}{} & \text{otherwise}
        \end{cases}
    \end{align*}
It is worth bearing in mind that the results from \Cref{sec: crule} and \Cref{sec: ccaprule} cannot be transferred since we have a minimization problem and also because we cannot upper bound the disutilities by \bud. We start the section by proving that this rule too is \NPH, using a slightly different but simple reduction from the same problem (used in \Cref{the: crule-nph}) solved by Talmon and Faliszewski \cite{talmon2019framework}. Then, we give a parameterized approximation algorithm, more specifically a parameterized FPTAS, which guarantees the approximation ratio when the parameter is small \cite{feldmann2020survey}.
\begin{proposition}\label{the: drule-nph}
    For an instance \instance and a value $s$, it is \NPH to determine if \drule outputs a set that has a total disutility of at most $s$.
\end{proposition}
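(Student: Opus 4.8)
The plan is to reduce from the same \NPH decision problem invoked in \Cref{the: crule-nph}: deciding, for a PB instance under dichotomous preferences with restricted costs (each project $p_j$ has a single cost $\cof{p_j}$ and each voter reports an approval set $A_i$), whether some feasible subset attains total cost utility at least a given value $s$. The only new wrinkle is that \drule \emph{minimizes} disutility, so I would design the reduction so that the total disutility of every outcome equals a fixed constant minus the cost-utilitarian welfare of the source instance; minimizing the former then amounts to maximizing the latter.

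Given such a source instance, construct a \drule instance as follows. For each project $p_j$ create exactly two degrees, with $\dcof{}{0} = 0$ and $\dcof{}{1} = \cof{p_j}$, keeping the budget \bud unchanged; then every valid subset $S \in \valid$ corresponds precisely to a feasible collection of funded projects (those assigned degree $1$) of total cost at most \bud. For the bounds, whenever $p_j \in A_i$ set $\lowerb{}{} = \upperb{}{} = \dcof{}{1}$, and whenever $p_j \notin A_i$ set $\lowerb{}{} = 0$ and $\upperb{}{} = \dcof{}{1}$.

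The heart of the argument is an accounting identity. If $p_j \notin A_i$, then the chosen cost $\chocof{}{}$ (either $0$ or $\dcof{}{1}$) always lies in $[0,\dcof{}{1}]$, so this project contributes nothing to $i$'s disutility. If $p_j \in A_i$ and $p_j$ is funded, then $\chocof{}{} = \dcof{}{1}$ lies in the singleton acceptable range and again contributes $0$; if $p_j \in A_i$ but $p_j$ is unfunded, then $\chocof{}{} = 0 < \lowerb{}{}$, contributing $\lowerb{}{} - 0 = \cof{p_j}$. Summing over all voters and projects, the total disutility of $S$ equals $C - W(S)$, where $C = \sum_{i \in \voters}\cof{A_i}$ is independent of $S$ and $W(S)$ is exactly the cost-utilitarian welfare of the corresponding funded set in the source instance. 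Setting the target value $s' = C - s$, I would conclude that a valid set of disutility at most $s'$ exists if and only if a feasible set of cost utility at least $s$ exists, establishing the stated hardness.

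The step requiring the most care is verifying this offset identity cleanly, in particular checking that unapproved projects never incur disutility (which is exactly why I assign the full range $[0,\dcof{}{1}]$ there rather than a singleton) and that the validity/feasibility correspondence is exact so that the budget constraint carries over unchanged. Once this complementation trick is in place, the remainder is routine bookkeeping, and I do not anticipate any genuine combinatorial obstacle.
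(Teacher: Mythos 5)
Your proposal is correct and follows essentially the same route as the paper's own proof: the paper also reduces from the cost-utility problem used in \Cref{the: crule-nph}, creates two degrees $\dcof{}{0}=0$ and $\dcof{}{1}=c(p_j)$ per project, sets $\lowerb{}{}=\upperb{}{}=\dcof{}{1}$ for approved projects and $\lowerb{}{}=0$, $\upperb{}{}=\dcof{}{1}$ for unapproved ones, and uses the same offset identity with threshold $Z-x$ where $Z=\sum_{i\in\voters}\cof{A_i}$. No gaps; your writeup of the accounting identity is in fact cleaner than the paper's.
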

\begin{proof}
For a PB instance under restricted costs (each project $p$ has some cost $c(p)$ and every voter reports a subset $A_i$ of projects that she likes) and a positive value $x$, we construct an instance for \crule as follows: For every project $p_j \in \proj$, create exactly two degrees such that $\dcof{}{0} = 0$ and $\dcof{}{1} = c(\pa)$. For each voter $i$, set $\upperb{}{} = c(\pa)$ for every project $p$ and set $\lowerb{}{} = \dcof{}{1}$ if and only if $p \in A_i$. Let $Z = \sum_{i \in \voters}{\csetof{A_i}}$. Set $s = Z-x$.

\paragraph{Correctness.} To prove the forward direction, suppose the PB instance under restricted costs is a \yes instance. That implies, there exists a set $S$ of projects such that $\sum_{i \in \voters}{\csetof{S \cap A_i}} \geq x$. Now calculate the disutility of $S$ under \drule. Then, $\dutof{}{j} = c(\pa)$ for every $j \in A_i \setminus S$ and $0$ otherwise. Therefore, $\dutof{}{} = -c(A_i)-c(A_i \cap S)$. Then, total utility from $S$ = $Z-\sum_{i \in \voters}{c(A_i \cap S)}$. Since the former is at least $x$, this total disutility is at most $Z-x$. To prove the backward direction, assume that the resultant instance of our problem is a \yes instance. We can construct a set $S$ of projects such that a project $p_j$ is included in $S$ if and only if $p^1_j$ is in the solution. Set $S$ is a solution for PB under restricted costs, making it a \yes instance.
\end{proof}
\subsubsection{Parameterized Approximation Algorithm (FPTAS)}\label{sec: drule-pfptas}
The very high disutilities that are hard to be bound motivate us to propose a \emph{parameterized FPTAS}, which has been of great interest recently \cite{feldmann2020survey}. We consider a parameter that we call \emph{\variance}, \vpara, which, on being given an instance \instance, intuitively shows how divergent the disutilities of different degrees of projects in the instance are. It is captured by measuring the highest disutility a single project can have, relative to sum of the least possible disutilities from all the projects. We explain this formally below.

Given an instance \instance, for each degree of a project $\pdof{}{} \in \degreeproj$, we use $\qof{}{}$ to denote the disutility project \pdof{}{} will contribute to any set that contains it. We call this \emph{disutility contribution} of \pdof{}{}. Suppose for example, $\pdof{}{} \in S$. That implies, $\chocof{}{} = \dcof{}{}$ ($\because S$ is valid). Therefore, $$\qof{}{} = \sum_{i: \dcof{}{}<\lowerb{}{}}{(\lowerb{}{}-\dcof{}{})}+\sum_{i: \upperb{}{}<\dcof{}{}}{(\dcof{}{}-\upperb{}{})}.$$ It can be observed that, $\sum_{i \in \voters}\dutof{}{} = \sum_{\pdof{}{} \in S}{\qof{}{}}$.

Let $\qmax = \max_{\pdof{}{} \in \degreeproj}{\qof{}{}}$. That is, $\qmax$ is the maximum of disutility contributions of all the degrees of all the projects. We use $\qsum$ to denote the sum of minimum disutility contributions from each \dsetof{}, i.e., $\qsum = \sum_{p_j \in \proj}{\min_{t \in \dsetof{}}{\qof{}{}}}$. Our parameter \vpara is the ratio $\qmax/\qsum$. Intuitively, this parameter means that we want the disutility contribution from a single project not to be much higher than the sum of the least disutility contributions from all the $m$ projects.
\begin{theorem}
    For any instance \instance and $\epsilon\in [0,1]$, a subset $S \in \valid$ such that $\sum_{i \in \voters}{\dutof{}{}} \leq (1+\epsilon)OPT$ can be computed in $O(\frac{m^3\vpara\tstar}{\epsilon})$ time, where $OPT$ is the optimal possible total disutility under \drule, \vpara is the \variance, and $\tstar = \max_{p_j \in \proj}{\mdof{}}$. 
\end{theorem}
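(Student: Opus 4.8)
The plan is to adapt the rounding-plus-dynamic-programming FPTAS of \Cref{the: crule-fptas} to this \emph{minimization} setting. The obstacle that blocks a direct transfer is exactly the one flagged just before the statement: the disutility contributions $\qof{}{}$ can be arbitrarily large, so there is no analogue of the bound ``total score $\leq n\bud$'' that kept the dynamic-programming table polynomial for the earlier rules. The parameter $\vpara = \qmax/\qsum$ is introduced precisely to tame this, as it lets me simultaneously fix a rounding granularity and bound the number of table columns.

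The first step is to record a lower bound on the optimum. Since every valid set selects exactly one degree from each \dsetof{}, and that degree contributes at least $\min_{t}{\qof{}{}}$ to the total disutility, summing over all projects gives $OPT \geq \qsum$. The second step is to round. Set the granularity $K = \frac{\epsilon\,\qsum}{m}$ and, for each degree $\pdof{}{} \in \degreeproj$, define the rounded contribution $\hat{q}_j^t = \floor*{\qof{}{}/K}$. By construction $K\hat{q}_j^t \leq \qof{}{} \leq K\hat{q}_j^t + K$, and the largest rounded value is $\qmax/K = \frac{m\vpara}{\epsilon}$, so the rounded disutility of any valid set is at most $\frac{m^2\vpara}{\epsilon}$. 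The third step is the dynamic program: mirroring the table of \Cref{the: nrule-p} but indexing by disutility rather than score, let $A(x,d)$ be the minimum cost \csetof{} of a selection of one degree from each of $p_1,\ldots,p_x$ whose total rounded disutility equals $d$. The recurrence ranges a degree $t \in \{0,\ldots,\mdof{x}\}$ for project $x$, adding $\dcof{x}{t}$ to the cost and $\hat{q}_x^t$ to $d$, and a companion table $B$ stores the witnessing sets as before. I would then output the set at the entry with the smallest $d$ for which $A(m,d) \leq \bud$ and report its true disutility $\sum_{i \in \voters}{\dutof{}{}}$.

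For the approximation guarantee, let $O$ be an optimal valid set and $S$ the set returned. Since $O$ is feasible, the dynamic program ensures $\sum_{\pdof{}{} \in S}{\hat{q}_j^t} \leq \sum_{\pdof{}{} \in O}{\hat{q}_j^t}$, and using the two-sided rounding bound together with $|S| = m$,
\begin{align*}
    \sum_{i \in \voters}{\dutof{}{}} = \sum_{\pdof{}{} \in S}{\qof{}{}} &\leq K\sum_{\pdof{}{} \in S}{(\hat{q}_j^t+1)}\\
    &\leq K\sum_{\pdof{}{} \in O}{\hat{q}_j^t} + Km\\
    &\leq \sum_{\pdof{}{} \in O}{\qof{}{}} + Km = OPT + \epsilon\,\qsum \leq (1+\epsilon)OPT,
\end{align*}
where the final inequality uses $Km = \epsilon\,\qsum$ and $\qsum \leq OPT$. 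The running time is the product of the $m$ rows, the $O(\frac{m^2\vpara}{\epsilon})$ possible disutility values, and the $O(\tstar)$ cost per entry, giving $O(\frac{m^3\vpara\tstar}{\epsilon})$.

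The delicate point I would check most carefully is that the single choice $K = \frac{\epsilon\,\qsum}{m}$ makes both halves of the argument work at once: it keeps the additive rounding error $Km$ below $\epsilon\,OPT$ through the lower bound $OPT \geq \qsum$, while keeping the column count at $\frac{m^2\vpara}{\epsilon}$ through the ratio $\qmax/\qsum = \vpara$. I would also verify the harmless technical points that every valid set contributes exactly $m$ summands (one degree per project, including the zero degree $\pdof{}{0}$), and that eliminating degrees incompatible with any within-budget set does not affect $\qsum$ or $OPT$.
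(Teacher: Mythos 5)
Your proposal is correct and follows essentially the same route as the paper: the rounding $\hat{q}_j^t = \lfloor \qof{}{}m/(\epsilon\qsum)\rfloor$ is identical to the paper's $\msof{}$, the cost-indexed dynamic program over rounded disutilities, the choice of the minimal feasible column, the three ingredients of the analysis (two-sided rounding bounds, DP optimality against the optimal valid set, and $OPT \geq \qsum$ from validity), and the $O(\frac{m^3\vpara\tstar}{\epsilon})$ accounting all match the paper's argument, with only cosmetic differences in how the inequality chain is arranged.
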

\begin{proof}
    The idea is similar to that in \Cref{the: crule-fptas}. We define new disutilities for each project by rounding their disutility contributions. We then construct a DP table using which we select a set. For each project \pdof{}{}, we define\msof{}as follows: $$\msof{} = \floor*{{\qof{}{}m}/{\epsilon\qsum}}.$$
    
    We construct the DP tables $A$ and $B$ similar to those in \Cref{the: nrule-p}, with a slight change the each column now represents the total disutility, and\dscoreof{}{}is replaced by\msof{}defined in the above paragraph. 
    We output the set at $B(x,y)$ such that $A(x,y) \leq \bud$ and $y$ is \emph{minimized}. Let $S$ be the resultant outcome.
    Suppose $O$ denotes the optimal solution for \instance under \drule. Please note that the DP ensures that $S$ is a valid set. Since $S$ is the optimal solution with respect to new disutilities (i.e., w.r.t. $\alpha$'s),
    \begin{align}
        \label{eq: sgto'}
        \suml{j=1}^m{\msof{\chodof{}{S}}} \leq \suml{j=1}^m{\msof{\chodof{}{O}}}
    \end{align}
    By the definition of $\alpha$, we have
    \begin{align}
        \label{eq: ndltod}
        \qof{}{} &\geq \frac{\epsilon\qsum\msof{}{}}{m}\\
        \label{eq: ndgtod}
        \msof{}{} &\geq \frac{\qof{}{}m}{\epsilon\qsum}-1
    \end{align}
    Now, we prove that the approximation factor of $(1+\epsilon)$ holds.
    \begin{align}
        \tag{From \Cref{eq: ndltod}}
        OPT = \suml{j=1}^m{\qof{}{\chodof{}{O}}} &\geq \suml{j=1}^m{\frac{\epsilon \qsum \msof{\chodof{}{O}}}{m}}\\
        \tag{From \Cref{eq: sgto'}}
        &\geq \suml{j=1}^m{\frac{\epsilon \qsum \msof{\chodof{}{S}}}{m}}\\
        \tag{From \Cref{eq: ndgtod}}
        &\geq \left(\frac{\epsilon \qsum}{m}\suml{j=1}^m{\frac{m\qof{}{\chodof{}{S}}}{\epsilon \qsum}}\right)-\epsilon \qsum\\
        \label{eq: lastbutone}
        \suml{j=1}^m{\qof{}{\chodof{}{S}}} &\leq OPT+\epsilon \qsum
    \end{align}
Note that \qsum is the sum of minimum possible disutility contribution from each\dsetof{}. Any valid set should have one degree (degree $0$ corresponds to not funding the project) from each\dsetof{}. Since the optimal solution has to be a valid set, the optimal disutility must be at least \qsum, i.e., $OPT \geq \qsum$. Applying this in \Cref{eq: lastbutone}, results in $\sum_{j=1}^m{\qof{}{\chodof{}{S}}} \leq (1+\epsilon)OPT$.

\paragraph{Running Time} Now, we need to check the running time of constructing the DP table. The table has $m$ rows. The disutilities calculated for each degree of project is upper bounded by $\frac{\qof{}{}m}{\epsilon\qsum}$, which is further bounded by $\frac{m\vpara}{\epsilon}$ (by the definition of $\vpara$, $\because \qof{}{}\leq \qmax$). Since the output is a valid set, there can be at most $m$ projects in the output and hence the maximum possible total disutility is upper bounded by $\frac{m^2\vpara}{\epsilon}$, which is the number of columns. Computing each entry in row $x$ takes $O(\mdof{x})$ time. Thus, the running time is $\boldsymbol{O(\frac{m^3\vpara\tstar}{\epsilon})}$, where $\tstar = \max_{p_j \in \proj}{\mdof{}}$.
\end{proof}
\paragraph{Fixed Parameter Tractability} Here, we prove that \drule is also in FPT with respect to the parameters \slimit or $m$.
\begin{proposition}\label{the: drule-sl}
    For any instance \instance, computing a subset $S \in \valid$ that is selected under \drule is fixed parameter tractable parameterized by the scalable limit \slimit.
\end{proposition}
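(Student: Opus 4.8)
The plan is to mirror the argument for \Cref{the: crule-sl}, exploiting the fact that once the instance is scaled by the GCD, all relevant quantities --- and crucially the disutilities --- become bounded by \slimit, so that we can afford an \emph{exact} dynamic program indexed by the total disutility itself (with no rounding, unlike the parameterized FPTAS above). First I would scale the instance: let $k = \frac{1}{\text{GCD}(\dcof{1}{1},\ldots,\dcof{m}{\mdof{m}},\bud)}$ and form $\instance'$ by setting $\dcof{}{'t} = k\dcof{}{}$ and $\bud' = k\bud$. Multiplying every cost and the budget by the same positive rational leaves the collection \valid of valid subsets unchanged and scales every disutility by the common factor $k$, so the minimiser of total disutility in \instance and $\instance'$ coincide; hence it suffices to solve $\instance'$, in which every degree cost lies in $\curly{0,1,\ldots,\slimit}$ by the definition of \slimit.

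The key observation I would establish is that in $\instance'$ the disutility any single degree contributes to any single voter is at most \slimit. Indeed, each reported bound \lowerb{}{} and \upperb{}{} is drawn from the permissible cost set $\curly{\dcof{}{0},\ldots,\dcof{}{\mdof{}}}$, so after scaling both the bounds and the chosen cost \chocof{}{} lie in $[0,\slimit]$; since \dutof{}{j} is merely the distance from \chocof{}{} to the nearest endpoint of the interval $[\lowerb{}{},\upperb{}{}]$, it is a difference of two numbers in $[0,\slimit]$ and therefore at most \slimit. Summing over the $n$ voters shows that the total disutility any degree contributes is at most $n\slimit$, and since a valid set selects exactly one degree from each of the $m$ projects, the total disutility of any valid set is at most $mn\slimit$.

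With this bound in hand I would build a dynamic-programming table exactly as in \Cref{the: nrule-p}, except that the second index now ranges over the \emph{total disutility}: let $A(x,y)$ be the minimum cost of a valid subset of $\bigcup_{j=1}^{x}\dsetof{}$ whose total disutility is exactly $y$, filled by the same recurrence that decides which degree of $p_x$ to include (using the per-degree disutility contribution in place of the score), and track the realising sets in a companion table $B$. I would then output the set stored at the entry minimising $y$ subject to $A(m,y)\le\bud'$. Correctness follows from the recurrence and the definition of $A$ as before. For the running time, the table has $m$ rows and, by the bound above, at most $mn\slimit + 1$ columns; computing each entry in row $x$ costs $O(\mdof{x})$; hence the total time is $\boldsymbol{O(m^2 n \slimit \tstar)}$, where $\tstar = \max_{p_j \in \proj}{\mdof{}}$, which is polynomial for fixed \slimit.

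The step I expect to be the main obstacle is precisely the disutility bound: the very reason \drule needed its own parameterized FPTAS (through the \variance) is that disutilities are not bounded by \bud in general, so the entire argument hinges on the fact that scaling by the GCD simultaneously caps the degree costs \emph{and} the reported bounds at \slimit, which in turn caps every per-voter disutility contribution. Once that is secured, the dynamic program and its analysis are routine adaptations of \Cref{the: nrule-p} and \Cref{the: crule-sl}.
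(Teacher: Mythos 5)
Your proposal is correct and follows essentially the same route as the paper: scale all costs and the budget by the GCD, then run the dynamic program of \Cref{the: nrule-p} with the second index ranging over total disutility, arriving at the same running time of $O(m^2 n \slimit \tstar)$. The paper's own proof is terser---it simply defers to \Cref{the: crule-sl}---so your explicit observation that the reported bounds themselves lie in the permissible cost set, whence every per-voter disutility contribution is capped at \slimit after scaling, supplies the key justification the paper leaves implicit.
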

\begin{proof}
    The proof of the above proposition is same as the proof of \Cref{the: crule-sl}. Let $k =\frac{1}{\tiny{GCD}(\dcof{1}{1},\ldots,\dcof{m}{\mdof{m}},\bud)}$. First, we get a new instance $\instance'$ by scaling down all the costs and budget as follows: $\dcof{}{'t} =k\dcof{}{}$ and $\bud' = k\bud$. Then, we may construct the DP table similar to that in the proof of \Cref{the: nrule-p}. Running time is again ${O(m^2n\slimit \tstar)}$.
\end{proof}

\begin{proposition}
    For any instance \instance, computing a subset $S \in \valid$ that is selected under \drule is fixed parameter tractable parameterized by the number of projects $m$.
\end{proposition}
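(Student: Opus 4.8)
The plan is to establish fixed parameter tractability of \drule with respect to the number of projects $m$ by the same brute-force reasoning used for the analogous statements about \crule and \ccaprule (namely the propositions immediately following \Cref{the: crule-sl} and the statement (e) in \Cref{the: ccaprule}). The key observation is that the total number of valid subsets is bounded by a function of $m$ alone, independent of all other input parameters. Indeed, a valid set $S \in \valid$ chooses exactly one degree from each $\dsetof{}$, so the number of candidate sets is at most $\prod_{p_j \in \proj}(\mdof{}+1) \leq (\tstar+1)^m$, where $\tstar = \max_{p_j \in \proj}{\mdof{}}$. This is exponential in $m$ but treats $m$ as the parameter, which is exactly what \FPT permits.

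First I would note that, for each fixed candidate valid set $S$, the total disutility $\sum_{i \in \voters}{\dutof{}{}}$ can be evaluated in polynomial time: for each of the $m$ projects and each of the $n$ voters, computing $\dutof{}{j}$ requires only comparing $\chocof{}{}$ against the reported bounds $\lowerb{}{}$ and $\upperb{}{}$ and taking the appropriate difference, which is $O(mn)$ work in total per set. The algorithm then simply enumerates all valid sets, evaluates the total disutility of each, and outputs one achieving the minimum. Since \drule is a minimization rule, picking the enumerated set with least total disutility yields a set selected under \drule by definition.

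The overall running time is therefore $O\big((\tstar+1)^m \cdot mn\big)$, which is of the form $f(m)\cdot |I|^{O(1)}$ with $f(m) = (\tstar+1)^m$ depending only on $m$ (here $\tstar$ is itself bounded as part of the input and can be folded into $f$, or one bounds $\tstar \leq \max_j \mdof{}$ and absorbs it). This matches \Cref{def: fpt}, establishing membership in \FPT parameterized by $m$.

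I do not anticipate a genuine obstacle here, since this is the most routine of the three parallel tractability claims; the only point needing slight care is justifying that $\tstar$ may be treated as bounded by a function of $m$ (or else noting it appears polynomially and so does not spoil the \FPT bound), and confirming that the per-set evaluation of disutility is genuinely polynomial rather than hiding any dependence on the magnitudes of the costs. Both are immediate once the valid-set structure is spelled out, so the proof will essentially mirror the earlier propositions verbatim with disutility in place of utility and $\min$ in place of $\max$.
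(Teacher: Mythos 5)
Your proposal is correct and takes essentially the same approach as the paper: the paper's entire justification is the one-line observation that when $m$ is the parameter one can exhaustively search all the valid sets for the one achieving optimal disutility, which is exactly your enumeration argument. You are in fact more explicit than the paper---it never spells out the $(\tstar+1)^m$ bound, the $O(mn)$ per-set evaluation, or the caveat that \tstar is input-dependent---so the one reservation you raise (whether \tstar can legitimately be absorbed into $f(m)$) is a gap the paper silently shares rather than resolves.
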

The above result follows from the fact that when $m$ is constant, we can exhaustively search all the valid sets for the one achieving optimal disutility.

\section{Axiomatic Results}\label{sec: d-fl-axioms}
An enormous amount of work has been done on the axiomatic study of PB with approval votes \cite{aziz2018proportionally,aziz2019proportionally,talmon2019framework,rey2020designing,baumeister2020irresolute,sreedurga2022maxmin}. However, our PB model in which each project has a set of permissible costs is unique technically as well as realistically. This uniqueness demands the development of novel axioms applicable explicitly in such a setting. We introduce different axioms and investigate which of the PB rules introduced in \Cref{sec: utilities} satisfy our axioms and which of them do not. We call $[\lowerb{}{},\upperb{}{}]$ to be the interval approved by voter $i$.
\subsection{Monotonicity Properties}\label{sec: monotonicity}
We start by introducing monotonicity properties, which capture how a change in one parameter related to the problem affects the outcome. The first axiom, shrink-resistance, implies that if the voters are narrowing down their interval towards a degree that was winning, then the degree still continues to win.
\begin{definition}[Shrink-resistance]\label{def: shrink}
	A PB rule \pbrule is said to be \emph{shrink-resistant} if for any instance \instance, a voter \ii, a project $p_j \in \proj$, it holds that a set $S$ selected under \pbrule continues to be selected even if\lowerb{}{}and\upperb{}{}are shifted closer to\chocof{}{}.
\end{definition}
\begin{proposition}\label{the: shrink}
	All the four rules \nrule, \crule, \ccaprule, and \drule are shrink-resistant.
\end{proposition}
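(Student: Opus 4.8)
The plan is to prove all four cases simultaneously by isolating the one quantity that the perturbation changes — voter $i$'s contribution from project $p_j$ — and then invoking a single optimality argument. Write $[l,h]=[\lowerb{i}{j},\upperb{i}{j}]$ for voter $i$'s original interval on $p_j$, let $v:=\chocof{}{}$ be the cost that the selected set $S$ assigns to $p_j$, and read ``shifting $\lowerb{i}{j}$ and $\upperb{i}{j}$ closer to $v$'' as passage to a sub-interval $[l',h']\subseteq[l,h]$ that still contains $v$ (shrinking towards an interior target). Since only voter $i$'s bounds on $p_j$ change, for every valid set the contribution of every project other than $p_j$, and the whole utility of every voter other than $i$, are untouched; the feasible region $\valid$ also depends only on costs and \bud, so it is identical in both instances. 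Thus it suffices to control the single $p_j$-contribution of voter $i$.

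The key step is a pair of monotonicity facts, verified rule by rule. First, \emph{invariance at $S$}: because $v\in[l',h']\subseteq[l,h]$, the $p_j$-contribution of voter $i$ to the outcome $S$ is the same before and after the shrink — it equals $1$ for \nrule, $v$ for \crule and \ccaprule, and $0$ for \drule in both instances. Second, \emph{domination elsewhere}: for any valid set $S'$ assigning cost $w:=\chocof{}{S'}$ to $p_j$, passing to a sub-interval weakly lowers voter $i$'s (cardinal / cost / cost-capped) utility contribution and weakly raises her distance disutility. For \nrule and \crule this is immediate, since $w$ either stays in the interval (contribution unchanged) or leaves it (contribution drops to $0$). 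For \ccaprule it reduces to checking that the clamped map $w\mapsto f_{[l,h]}(w)$ obeys $f_{[l',h']}(w)\le f_{[l,h]}(w)$ whenever $[l',h']\subseteq[l,h]$, which follows from a short case analysis over the position of $w$ relative to $l\le l'\le h'\le h$. For \drule it is the elementary inclusion-monotonicity of distance, $A\subseteq B\Rightarrow \operatorname{dist}(w,A)\ge\operatorname{dist}(w,B)$, with $A=[l',h']$ and $B=[l,h]$.

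These two facts give the conclusion uniformly. Write $W(T)$ and $W'(T)$ for the total utility $\sum_{i\in\voters}u_i(T)$ of a set $T$ before and after the shrink. By invariance $W'(S)=W(S)$, and by domination $W'(S')\le W(S')$ for every other valid $S'$. Since $S$ was optimal in the original instance, for every $S'\in\valid$ we obtain $W'(S)=W(S)\ge W(S')\ge W'(S')$, where the first inequality is optimality and the second is domination; for \drule the identical chain holds with every $\ge$ reversed, as the objective is then minimized disutility. Hence $S$ remains a maximizer (resp.\ minimizer) in the perturbed instance and therefore continues to be selected.

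The main obstacle I expect is pinning down the semantics of ``shifted closer to $\chocof{}{}$'' precisely enough to guarantee that $v$ stays inside the shrunk interval, since this is exactly what makes the invariance fact hold and is what the informal phrasing glosses over. Once the shrink is read as passage to a sub-interval containing $v$, the only genuinely non-routine verifications are the clamped rule \ccaprule and the distance rule \drule, both dispatched by the two elementary monotonicity lemmas above.
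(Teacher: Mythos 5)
Your core argument --- voter $i$'s contribution from $p_j$ is unchanged at $S$, weakly worse (resp.\ weakly larger disutility) at every competitor, hence optimality is preserved, with the chain of inequalities reversed for \drule --- is the same strategy as the paper's proof, and you execute it more carefully: the paper asserts that utilities of sets without \chodof{}{} ``remain unchanged,'' whereas under your sub-interval reading they can strictly drop (which only helps $S$), and the paper says nothing at all about competing sets in the \drule case; your two monotonicity lemmas fill both of these holes.

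The genuine divergence is scope, and it is exactly the issue you flag at the end. The paper's definition does not assume $\chocof{}{} \in [\lowerb{i}{j},\upperb{i}{j}]$, and its proof explicitly treats the case $\chocof{}{} < \lowerb{i}{j}$ or $\upperb{i}{j} < \chocof{}{}$, arguing that there \utof{}{} stays the same or increases while competitors are untouched. Your reading (pass to a sub-interval of the old interval still containing \chocof{}{}) makes that case vacuous, so your proof never addresses it. This is not a pedantic difference: under the liberal reading in which the shifted bounds may cross the costs of \emph{other} degrees of $p_j$, your domination lemma fails and the proposition itself is false. Example: one project with permissible costs $\{5,7,10\}$, budget $10$, three voters with intervals $[5,5]$, $[10,10]$, $[7,7]$. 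Under \crule the cost-$10$ degree is uniquely selected (total utility $10$ vs.\ $7$ vs.\ $5$); shifting voter $1$'s bounds from $[5,5]$ to $[7,7]$ moves both endpoints strictly closer to $10$, yet afterwards the cost-$7$ degree has total utility $14>10$, so $S$ is no longer selected. Hence a complete proof must commit to a reading under which the statement is true --- yours, or the paper's implicit one in which the shift never flips the membership status of any other degree's cost --- and, under the latter, must still supply the short argument for the out-of-interval case (the utility of $S$ and of every set containing \chodof{}{} rises by the same amount, all other sets are unaffected) that your write-up omits.
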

\begin{proof}
 	First, we look at \nrule. Consider a set $S$ selected under \nrule. If initially $\lowerb{}{} \leq \chocof{}{} \leq \upperb{}{}$, then the condition continues to satisfy even if\lowerb{}{}and\upperb{}{}are shifted closer to\chocof{}{}. Thus,\utof{}{}does not change. If initially $\chocof{}{} < \lowerb{}{}$ or $\upperb{}{} < \chocof{}{}$, then shifting\lowerb{}{}and\upperb{}{}closer to\chocof{}{}may cause\utof{}{}to remain the same or increase by $1$. Utilities of sets without\chodof{}{}remain unchanged thus proving the claim. This logic can also be extended to \crule and \ccaprule. For both these rules, shifting will make\utof{}{}remain the same or increased by\chocof{}{}. Finally, look at \drule. By the definition of\dutof{}{}, shifting\lowerb{}{}and\upperb{}{}closer to\chocof{}{}will cause a strict decrease in\dutof{}{}if it was a non-zero value. Else,\dutof{}{}remains the same.
\end{proof}

Next, we extend discount montonicity \cite{talmon2019framework} to ensure that a winning degree of project should not be omitted if it becomes less expensive.
\begin{definition}[Discount-proofness]\label{def: discount}
    A PB rule \pbrule is said to be \emph{discount-proof} if for any instance \instance, a project $p_j \in \proj$, and a set $S$ that is selected under \pbrule, $S$ continues to be selected if\chocof{}{}is decreased by $1$.
\end{definition}
\begin{proposition}\label{the: discount}
    The rule \nrule is discount-proof, whereas \crule, \ccaprule, and \drule are not.
\end{proposition}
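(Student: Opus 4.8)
The plan is to treat the statement as two independent claims and to exploit the structural difference that \nrule assigns each voter a bounded $0/1$ contribution per project, whereas \crule, \ccaprule, and \drule all scale the (dis)utility with the allocated cost. I would therefore establish the positive claim for \nrule by a monotonicity argument and refute discount-proofness (\Cref{def: discount}) for the remaining three rules by explicit counterexamples, reusing a single small gadget wherever possible.

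For \nrule, I would fix an instance \instance, a selected set $S$, and a project $p_j$, and let the modified instance be obtained by lowering \chocof{}{} by one. First I would observe that $S$ stays valid, since its total cost only drops. The key step is a per-voter analysis confined to $p_j$: the contribution of $p_j$ to \utof{}{} is just the indicator of the event $\lowerb{}{} \leq \chocof{}{} \leq \upperb{}{}$, so lowering the cost by one can only keep a currently satisfied voter satisfied (or turn a voter who was just above the interval into a satisfied one), whence the \nrule-utility of $S$ does not decrease. Since every other project's contribution is untouched, I would finish by comparing $S$ in the modified instance against two kinds of competitors: those that reuse the discounted degree, whose utilities all shift by the same amount as that of $S$, and those that do not, whose utility and feasibility are unchanged, and conclude that $S$ remains a maximiser.

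To refute discount-proofness for the other three rules, I would build instances in which discounting the chosen degree frees exactly enough budget to additionally fund a further approved project, strictly improving a competitor. For \crule and \ccaprule take $\bud=6$, one project with permissible costs $\{0,6\}$ and one with $\{0,1\}$, and a single voter approving both over the intervals $[0,6]$ and $[0,1]$: funding the expensive project alone (utility $6$) beats funding the cheap one alone (utility $1$), so $S$ funds only the former, yet after discounting its cost to $5$ the assignment funding both projects becomes valid and attains $5+1=6>5$, displacing $S$ (no capping occurs here, so the same instance settles \ccaprule). For \drule I would instead use tight intervals so the discount reshapes distances: $\bud=10$, projects with costs $\{0,6\}$ and $\{0,5\}$, and a single voter demanding exactly $6$ and exactly $5$; the minimum-disutility valid set funds only the first project before the discount, but lowering its cost to $5$ makes the assignment funding both valid with strictly smaller total disutility, so $S$ is no longer selected.

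The principal obstacle is the positive part, and within it the change of feasible region: the freed unit of budget can render sets that were infeasible in \instance valid in the modified instance, and ruling out that such a newly valid set overtakes $S$ is the delicate point — indeed it is exactly this effect that the counterexamples for the cost-weighted rules exploit. The argument for \nrule must therefore lean on the fact that each voter contributes at most one unit per project, so that one extra unit of budget cannot buy a competitor more count-utility than the discount preserves for $S$; I would try to formalise this by an exchange argument mapping any newly valid competitor to the set obtained by replacing its chosen degree of $p_j$ with the zero degree (which is already valid in \instance), and I expect this reconciliation of the bounded-contribution property with the enlarged feasible region to be the step demanding the most care.
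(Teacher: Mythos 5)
Your refutations for \crule, \ccaprule, and \drule are correct: in each of your gadgets the originally selected set is strictly overtaken after the discount, so all three negative claims go through. The paper handles all three rules at once with a single symmetric instance --- two projects whose only nonzero degrees each cost $2$, budget $\bud = 2$, and every voter demanding both projects at exactly cost $2$ --- so that before the discount the two singletons are tied optima, and after discounting one of them its cost-based (dis)utility strictly drops, making the other singleton the unique optimum under all three rules simultaneously. Your examples instead exploit the freed unit of budget to fund an additional approved project; both mechanisms are sound, yours simply needs separate gadgets where the paper needs one.

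The positive claim for \nrule is where the proposal breaks down, and you correctly located the danger but your repair does not work. The exchange argument (replace the chosen degree of $p_j$ in a newly valid competitor $T$ by the zero degree, giving a set $T'$ valid in \instance) only yields $u(T) \leq u(T') + a \leq u(S) + a$, where $a$ is the number of voters approving the discounted degree; nothing bounds $a$ by the single unit of freed budget, precisely because the cardinal utility of \nrule is decoupled from cost. In fact the obstacle you flag is fatal rather than merely delicate. Take $\bud = 4$ and three projects whose single nonzero degrees cost $2$, $3$, $2$ and are approved at exactly those costs by $10$, $10$, and $6$ voters respectively: the unique optimum is $S = \{p_1, p_3\}$ with utility $16$; discounting the chosen degree of $p_1$ to cost $1$ (under the paper's convention that bounds lie in the permissible-cost set, so bounds equal to the old cost move down with it, keeping all $10$ approvals intact) makes the previously infeasible set $\{p_1, p_2\}$ valid at cost $4$ with utility $20$, so $S$ is no longer selected. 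For comparison, the paper's own proof of this direction is a two-line utility-invariance argument --- the bounds travel with the discounted cost, so every voter's approval of the chosen degree is unchanged --- and it never engages with the enlarged feasible region at all; your proposal is more candid about where the difficulty sits, but the exchange argument you sketch cannot close it for the axiom as stated in \Cref{def: discount}.
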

\begin{proof}
    Let the rule \pbrule be \crule, \ccaprule, or \drule. Consider an instance such that: (i) there are two projects with $\dsetof{1} = \curly{\pdof{1}{0},\pdof{1}{1}}$ and $\dsetof{2} = \curly{\pdof{2}{0},\pdof{2}{1}}$ (ii) $\dcof{1}{1} = 2$ and $\dcof{2}{1} = 2$, $\bud=2$ (iii) for every voter $i\in\voters$, $\lowerb{}{1} = \upperb{}{1} = \dcof{1}{1}$ and $\lowerb{}{2} = \upperb{}{2} = \dcof{2}{1}$. Clearly, $S= \curly{\pdof{1}{1}}$ is a set that is selected under \pbrule. If \dcof{1}{1}is changed to $1$, $S$ is not selected since only $\curly{\pdof{2}{1}}$ is the unique set to be selected.

    Now, consider \nrule. This is straight-forward since the utilities are unaffected by costs as long as they are in the acceptable range. Please note that by their definitions, both\lowerb{}{}and\upperb{}{}belong to the set of permissible costs and hence if the cost that is reduced is equal to the lower bound reported by some voter, the lower bound is also automatically considered to be reduced by $1$. Thus, the number of voters finding the chosen degree to be acceptable remains unaffected.
\end{proof}

\subsection{Unanimity Properties}\label{sec: unanimity}
Now, we introduce properties that are based on unanimous approval of all the voters. Given an instance \instance, for each project $p\in \proj$, we define $\conrange = \curly{\dcof{}{}: \pdof{}{} \in \degreeproj,\;\forall i \in \voters\;\; \lowerb{}{} \leq \dcof{}{} \leq \upperb{}{}}$. Let $\maxconrange = \max{(\conrange)}$ and $\minconrange = \min{(\conrange)}$. The first axiom, range-abidingness, requires that if some range of costs is found to be acceptable unanimously by all voters, then the cost allocated to the project must not go beyond this range.
\begin{definition}[Range-abidingness]\label{def: rangeabiding}
	A PB rule \pbrule is said to be \emph{range-abiding} if for any instance \instance, a project $p_j \in \proj$, and a set $S$ selected under \pbrule, it holds that $$\conrange \neq \emptyset \implies \chocof{}{} \leq \maxconrange.$$
\end{definition}
\begin{proposition}\label{the: range-abiding}
	The rules \nrule and \drule are range-abiding, whereas \crule and \ccaprule are not.
\end{proposition}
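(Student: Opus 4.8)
The plan is to dispatch the two positive claims (\nrule and \drule) by a common exchange argument and to refute the two negative claims (\crule and \ccaprule) with a single explicit instance. For the positive direction, fix an instance \instance, a project $p_j$, and a set $S \in \valid$ that is selected under the rule, and suppose for contradiction that $\conrange \neq \emptyset$ yet $\chocof{}{} > \maxconrange$. Let $\pdof{}{\star}$ be the degree of $p_j$ whose cost equals $\maxconrange$; by definition of $\conrange$ this cost is unanimously acceptable, i.e.\ $\lowerb{}{} \leq \maxconrange \leq \upperb{}{}$ for every $i \in \voters$. Form $S'$ from $S$ by replacing the chosen degree of $p_j$ with $\pdof{}{\star}$ while keeping every other project's degree unchanged. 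Since $\csetof{S'} = \csetof{S} - \chocof{}{} + \maxconrange < \csetof{S} \leq \bud$ and $S'$ still contains exactly one degree per project, we have $S' \in \valid$. The argument then reduces to showing that $S'$ is strictly better than $S$, contradicting optimality.

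Establishing strictness is the crux, and it is where the two rules differ. Because $\chocof{}{} > \maxconrange = \max \conrange$, the funded cost lies outside $\conrange$, so some voter $i_0$ has $\chocof{}{} > \upperb{i_0}{}$ (it cannot fall below her lower bound, as $\chocof{}{} > \maxconrange \geq \lowerb{i_0}{}$). For \nrule the project $p_j$ contributes $n$ to the utilitarian count under $S'$ (every voter's interval contains $\maxconrange$) but at most $n-1$ under $S$ (voter $i_0$ is excluded), while all remaining projects contribute identically, so the welfare strictly increases. For \drule the project $p_j$ contributes total disutility $0$ under $S'$ but strictly positive disutility under $S$, since voter $i_0$ alone contributes $\chocof{}{} - \upperb{i_0}{} > 0$, so the total disutility strictly decreases; either way $S$ could not have been selected. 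The main obstacle is the corner case $\maxconrange = 0$: for \drule it is harmless (the zero degree still yields zero distance to every voter), but for \nrule the clause $\chocof{}{} \neq 0$ makes the zero degree contribute nothing, so one must argue that a genuinely unanimous \emph{positive} range forces $\maxconrange > 0$ (a singleton $\conrange = \curly{0}$ encodes ``no common funded level'', and the statement should be read over positive funding).

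For the negative direction it suffices to exhibit one instance per rule that violates the conclusion, and a single instance serves both. Take one project $p_j$ with degree costs $\curly{0,1,10}$, budget $\bud = 10$, and two voters with approved intervals $[0,1]$ and $[0,10]$, so that $\conrange = \curly{0,1}$ and $\maxconrange = 1$. Funding $p_j$ at cost $1$ yields total utility $2$ under both \crule and \ccaprule, whereas funding at cost $10$ yields total utility $10$ under \crule (the second voter alone contributes $10$) and $11$ under \ccaprule (the first voter's contribution is capped at her upper bound $1$ rather than dropped to $0$). Hence the selected set funds $p_j$ at $10 > 1 = \maxconrange$, so neither rule is range-abiding. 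The intuition is that the cost-proportional reward these two rules grant to the accepting voters outweighs the utility lost from the voter who rejects the inflated cost, which is precisely the behaviour range-abidingness forbids and which the capped utility of \ccaprule, far from preventing, actually amplifies.
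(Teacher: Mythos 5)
Your proof follows essentially the same route as the paper's: the identical exchange argument for the positive part (replace the over-priced chosen degree of $p_j$ with the degree costing \maxconrange, observe the swap preserves validity and strictly improves the objective because the inflated cost lies outside \conrange and is therefore rejected by some voter), and a single counterexample serving both \crule and \ccaprule for the negative part, in which one voter's large upper bound makes the cost-proportional utility of an expensive degree outweigh the unanimously acceptable cheap one. If anything, your write-up is slightly more careful than the paper's: you exhibit the witnessing voter $i_0$ explicitly to get strictness, and you flag the corner case $\maxconrange = 0$ where the clause $\chocof{}{} \neq 0$ in the definition of \nrule invalidates the ``utility $n$'' claim — a case the paper's proof silently glosses over and which indeed requires the interpretive fix you describe.
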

\begin{proof}
	First, we prove that \crule and \ccaprule are not range-abiding. Recall that \bud denotes the budget. Suppose we have a single project $p$ with exactly two permissible costs, respectively equal to $\floor*{\frac{\bud-1}{n}}$ and \bud. Suppose all the voters set $\lowerb{}{} = \dcof{}{1}$. One voter \ii sets $\upperb{}{} = \dcof{}{2}$, whereas all the remaining voters report $\dcof{}{1}$ as the upper bound. Clearly, in a set $S$ containing \pdof{}{1}, the utility from project $p$ is at most $\bud-1$. Whereas, if $S$ selects \pdof{}{2}, the utility from $p$ is equal to $\bud$ (entirely due to the utility of a single voter). Thus $\pdof{}{2}$ is selected. Note that, in this example, $\conrange = \floor*{\frac{\bud-1}{n}}$ and clearly $\dcof{}{2} > \conrange$.

    Moving on, the rules \nrule and \drule are range-abiding. For the sake of contradiction, assume that they are not. That is, in a set $S$ selected under the rule, $\chocof{}{} > \maxconrange$. Consider the set $S' = S\setminus\curly{\chodof{}{}} \cup \curly{\pdof{}{}:\maxconrange = \dcof{}{}}$. Clearly, $S'$ is valid since $\chocof{}{} > \maxconrange$ and $S$ is valid. Also, $\utofd{}{p} > \utof{}{p}$ since the degree chosen in $S'$ is within the bounds reported by every voter (it gives the optimal utility of $n$ for \nrule and the optimal disutility of $0$ for \drule). Thus, $S'$ is a strictly better set and $S$ must not be selected. This forms a contradiction.
\end{proof}

The next axiom, range-convergingness, requires that increasing the budget should result in the winning degree of some project moving closer to its unanimously approved range of costs. That is, increasing the budget must increase the chance of maximally satisfying \emph{all} the voters, as opposed to dissatisfying some voters.
\begin{definition}[Range-convergingness]\label{def: rangeconverging}
    A PB rule \pbrule is said to be \emph{range-converging} if for any instance \instance, a set $S$ selected for \instance under \pbrule, and a set $S' \neq S$ selected under \pbrule on increasing the budget, it holds that whenever there is at least one project $k$ with $\tau_k \neq \emptyset$, there also exists some project $p_j \in \proj$ such that: $$\chocof{}{} \notin \conrange \implies |\chocof{}{} - \minconrange| > |\chocof{}{S'} - \minconrange|.$$
\end{definition}
\begin{proposition}\label{the: range-converging}
    All the four rules \nrule, \crule, \ccaprule, and \drule are range-converging.
\end{proposition}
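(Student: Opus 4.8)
The plan is to exploit the fact that the convergence requirement is only an implication, so it is satisfied vacuously by any project whose chosen cost in $S$ already lies in its unanimously approved range. Set $P = \{p_j \in \proj : \tau_{p_j} \neq \emptyset\}$; the hypothesis that some $k$ has $\tau_k \neq \emptyset$ gives $P \neq \emptyset$. If some $p_j \in P$ has $\chocof{}{} \in \conrange$, then the antecedent of the implication is false and that $p_j$ witnesses range-convergingness immediately. So the whole content lies in the remaining case, where every $p_j \in P$ has $\chocof{}{} \notin \conrange$ in the set $S$ selected at budget \bud.

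In that hard case I would first pin down the direction of the violation. For \nrule and \drule I can invoke range-abidingness (\Cref{the: range-abiding}), which guarantees $\chocof{}{} \leq \maxconrange$ for every $p_j \in P$; together with $\chocof{}{} \notin \conrange$ this forces $\chocof{}{} < \minconrange$, i.e. every unanimous project is strictly \emph{under}funded in $S$. For \crule and \ccaprule, which are not range-abiding, I would additionally allow the over-funded situation $\chocof{}{} > \maxconrange$ and treat it separately, since there a budget increase cannot by itself pull the cost down, and so the converging project must be sought among the under-funded ones.

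The engine of the proof is an optimality/exchange argument comparing $S$ at budget \bud with the set $S'$ selected at the larger budget. Since $S$ remains valid after the budget is raised, optimality of $S'$ gives that its welfare is at least that of $S$ (at least as small, for the disutility rule \drule). I would argue the contrapositive: if no project of $P$ were brought strictly closer to \minconrange by $S'$, one can rewrite $S'$ into a valid set that is no worse and agrees with $S$ on all of $P$, from which the optimality of $S$ at budget \bud would already be matched without the extra budget, contradicting the fact that $S' \neq S$ was selected only after the increase. The key degree swap is, for an under-funded $p_j \in P$, replacing its chosen degree by the degree costing \minconrange (the cheapest in-range degree), which raises that project's contribution to the per-project optimum (value $n$ for \nrule, the full cost for \crule and \ccaprule, zero disutility for \drule), with the freed or required budget controlled by $\bud' - \bud$.

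The step I expect to be the main obstacle is handling ties together with the literal appearance of \minconrange (rather than membership in the whole range) in the definition: when several degrees yield the same welfare, a selected $S'$ could land on an in-range degree far above \minconrange, so the clean ``moves toward \minconrange'' conclusion needs the rule read with a cost-minimal tie-break, i.e. each project takes the cheapest welfare-maximizing degree. I would make this convention explicit and check it is consistent with all four utility definitions. The secondary difficulty is \crule and \ccaprule, where over-funding a unanimous project can strictly increase cost-based welfare; there one must verify that an under-funded unanimous project exists to absorb the extra budget, which is exactly where the exchange argument has to be carried out most carefully.
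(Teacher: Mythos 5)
Your opening reduction is fine: a project whose cost in $S$ already lies in its unanimous range witnesses the property vacuously, so one may assume every project with $\conrange \neq \emptyset$ is allocated outside its range by $S$. The genuine gap is in your concluding step. Exhibiting a valid set $S''$ that is no worse than $S'$ and agrees with $S$ on $P$ contradicts nothing: under these rules \emph{every} valid set optimizing the total objective is ``selected,'' so $S'$, $S''$, and even $S$ itself can all be optimal at the larger budget simultaneously, and ``$S' \neq S$ was selected only after the increase'' is not a fact that can be violated. To argue by contradiction you need a \emph{strict} welfare inequality against $S'$, and that is exactly the paper's route: assuming no project witnesses convergence, it claims that for \nrule, \crule, and \ccaprule the per-project total utility $\sum_{i \in \voters}\utof{}{p}$ strictly decreases as \chocof{}{} moves farther from the costs in \conrange (for \drule, the total disutility increases), sums these inequalities over the changed projects to get $\sum_{i \in \voters}\utofd{}{} < \sum_{i \in \voters}\utof{}{}$, and concludes $S'$ cannot be selected because $S$ remains valid under the increased budget. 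Your proposal never produces such a strict inequality --- your own discussion of ties shows it cannot --- and the proposed fix (reading each rule with a cost-minimal tie-break) is not available, since it changes the rules whose range-convergingness the proposition asserts; the statement quantifies over all selected $S$ and $S'$, ties included, and the paper's proof lives with that quantification.

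Second, the case you yourself flag as hardest --- \crule and \ccaprule with an over-funded project --- is deferred, not solved, and it is exactly where the content lies. The heuristic ``a budget increase cannot by itself pull the cost down, so the converging project must be sought among the under-funded ones'' is not a proof step; moreover, for \ccaprule the per-project utility is weakly \emph{increasing} in the allocated cost, so over-funding can be welfare-optimal, and an under-funded project to absorb the extra budget need not exist (consider a single-project instance). Whatever closes this case must engage with the aggregate per-project utility as a function of the allocated cost, which is the single lemma the paper's argument rests on. Your appeal to range-abidingness (\Cref{the: range-abiding}) is sound but covers only \nrule and \drule --- precisely the two rules where this difficulty never arises --- so it does not transfer to the rules for which the exchange argument was actually needed.
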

\begin{proof}
    First, consider the first three rules, i.e., let \pbrule be \nrule, \crule, or \ccaprule. Note that for these rules, the value $\sum_{i \in \voters}{\utof{}{p}}$ decreases as $\chocof{}{}$ moves farther from all the costs in \conrange (the farther\chocof{}{}moves, more is the number of voters for whom\chocof{}{}falls out of their acceptable range). We prove the claim by contradiction. Assume that for any project $p_j \in \proj$ such that $\chocof{}{} \notin \conrange$ and $\chocof{}{} \neq \chocof{}{S'}$, $\chocof{}{}$ is closer to \conrange than\chocof{}{S'}is. This implies that $\sum_{i \in \voters}{\utofd{}{p}} < \sum_{i \in \voters}{\utof{}{p}}$. By adding these inequalities for all such $p$, we have $\sum_{i \in \voters}{\utofd{}{}} < \sum_{i \in \voters}{\utof{}{}}$. Thus, $S'$ cannot be selected by \pbrule since $S$ continues to be feasible under the increased budget.  
    The proof for rule \drule follows similar idea. This is because, $\sum_{i \in \voters}{\dutof{}{p}}$ increases as $\chocof{}{}$ moves farther from all the costs in \conrange (as\chocof{}{}falls out of the acceptable range for more voters).
\end{proof}

The last monotonicity axiom, range-unanimity, requires that if some range of costs is found to be acceptable unanimously by all the voters, the project must be allocated the maximum amount in this range, or in other words, the highest unanimously approved amount.
\begin{definition}[Range-unanimity]\label{def: rangeunanimous}
    A PB rule \pbrule is said to be \emph{range-unanimous} if for any instance \instance, whenever $\sum_{p_j \in \proj}{\maxconrange} \leq \bud$, the set $\curly{\pdof{}{}: \;$p$ \in \proj,\;\dcof{}{} = \maxconrange}$ is selected under \pbrule.
\end{definition}
Note that the above holds by default if \maxconrange is not defined for some $p$. Range-unanimity and range-abidingness do not imply each other, though they seem closely related. For example, take the rule that picks for each project $p$, degree with minimum cost in \conrange (whenever this set costs lesser than \bud). This is range-abiding but not range-unanimous. In an instance where the maximum costs in \conrange together cost more than the budget, range-unanimity is satisfied by default by any rule. However, a rule that selects a degree for project $p$ whose cost is greater than maximum cost in \conrange, does not satisfy range-abidingness. Thus, range-unanimity does not imply range-abidingness.
\begin{proposition}\label{the: range-unanimous}
    The rules \nrule and \drule are \emph{range-unanimous}, whereas \crule and \ccaprule are not.
\end{proposition}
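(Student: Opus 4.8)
The plan is to treat the two positive rules and the two negative rules by entirely different mechanisms. For \nrule and \drule I would exploit the fact that both objectives are \emph{additively separable across projects}, so that a global optimizer can be assembled project by project; for \crule and \ccaprule I would simply recycle the single-project instance already constructed in the proof of \Cref{the: range-abiding}.

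For the positive direction, assume the hypothesis $\sum_{p_j \in \proj} \maxconrange \leq \bud$ and consider the candidate outcome $S^\star = \curly{\pdof{}{} : \dcof{}{} = \maxconrange}$, which assigns each project its highest unanimously acceptable cost (and the zero degree to any project for which $\conrange$ is empty, where the axiom is vacuous). This $S^\star$ is valid precisely because of the budget hypothesis. The crux is that, for any project allocated the cost $\maxconrange$, every voter satisfies $\lowerb{}{} \leq \maxconrange \leq \upperb{}{}$ by the definition of $\conrange$; hence under \nrule that project contributes the maximal possible $n$ to the total utility, while under \drule it contributes the minimal possible disutility $0$. Since no valid set can give a project more than $n$ satisfied voters (respectively, less than $0$ disutility), $S^\star$ maximizes every per-project term simultaneously while staying within budget, so it is a welfare optimizer and is therefore selected.

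For the negative direction I would reuse the single-project instance from \Cref{the: range-abiding}: two positive degrees of cost $\dcof{}{1} = \floor*{\frac{\bud-1}{n}}$ and $\dcof{}{2} = \bud$, with every voter setting $\lowerb{}{} = \dcof{}{1}$, one voter setting $\upperb{}{} = \dcof{}{2}$, and the remaining $n-1$ voters setting $\upperb{}{} = \dcof{}{1}$. Then $\conrange = \curly{\dcof{}{1}}$ and $\maxconrange = \dcof{}{1} \leq \bud$, so range-unanimity would force $\curly{\pdof{}{1}}$ to be selected. A short computation refutes this: under \crule the set $\curly{\pdof{}{1}}$ scores $n\dcof{}{1} \leq \bud - 1$ whereas $\curly{\pdof{}{2}}$ scores $\bud$ (only the voter with upper bound $\dcof{}{2}$ contributes, the rest falling out of range), and under \ccaprule the set $\curly{\pdof{}{2}}$ scores $\bud + (n-1)\dcof{}{1}$ against $n\dcof{}{1}$ for $\curly{\pdof{}{1}}$. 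In both cases the strictly better, hence selected, set allocates cost $\dcof{}{2} \neq \maxconrange$, so range-unanimity fails.

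The main obstacle is the positive direction, and more precisely the degenerate ranges. For \drule these are harmless, since funding a project at cost $0$ gives every voter zero disutility, so the zero degree is genuinely optimal. For \nrule the clause $\chocof{}{} \neq 0$ bites: a project funded at cost $0$ yields utility $0$ to every voter, so if $\maxconrange = 0$ (that is, $\conrange = \curly{0}$, which happens exactly when some voter leaves the project unapproved) the rule would strictly prefer to fund a positive degree, and $S^\star$ would not be optimal. I would resolve this by making explicit the intended convention that a project whose only commonly acceptable cost is $0$ is treated as having $\conrange$ undefined, rendering the axiom vacuous there on par with the empty-range case. Once this convention is fixed, the separability argument goes through cleanly, and the only remaining work is the routine verification that $n$ and $0$ really are the per-project optima.
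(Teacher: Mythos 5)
Your proof is correct and takes essentially the same route as the paper's: for \nrule and \drule you argue that the set $\curly{\pdof{}{} : \dcof{}{} = \maxconrange}$ attains the global optimum ($mn$ total utility, respectively $0$ total disutility) and is valid under the budget hypothesis, hence selected, and for \crule and \ccaprule you reuse the single-project counterexample from the range-abidingness proof, exactly as the paper does. The only difference is your explicit handling of the $\maxconrange = 0$ degeneracy for \nrule (where the clause $\chocof{}{} \neq 0$ breaks the ``utility $n$ per project'' claim); the paper's proof passes over this case silently, so your convention of treating $\conrange = \curly{0}$ like an undefined range is a small but genuine tightening rather than a deviation.
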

\begin{proof}
    The proof for \crule and \ccaprule follow from the example in the proof of \Cref{the: range-abiding}. Now, consider the rules \nrule and \drule. Note that the set $\curly{\pdof{}{}: p \in \proj,\;\dcof{}{} = \maxconrange}$ achieves the optimal total utility $mn$ and optimal total disutility of $0$ respectively for both the rules. Hence, if $S'$ not being selected implies that $\csetof{S'} > \bud$. This contradicts $\sum_{p_j \in \proj}{\maxconrange} \leq \bud$ and completes the argument. 
\end{proof}

\subsection{Efficiency Properties}\label{sec: efficiency}
We finally introduce efficiency properties, which capture how a slight betterment of the outcome in one dimension is not possible (similar to the idea behind classic pareto-efficiency property). The first axiom, degree-efficiency, essentially implies that if two valid sets differ only on the degree of one project, then the set with higher degree needs to be preferred.

\begin{definition}[Degree-efficiency]\label{def: degreeefficient}
    A PB rule \pbrule is said to be \emph{degree-efficient} if for any instance \instance, any project $p_j \in \proj$, any set $S$ selected under \pbrule, and the degree $x \in \chodof{}{}$, it holds that $$k>x\implies\csetof{}-\chocof{}{}+\dcof{}{k} > \bud.$$
\end{definition}
\begin{proposition}\label{the: degreeefficient}
    The \ccaprule is degree-efficient, whereas \nrule, \crule, and \drule are not.
\end{proposition}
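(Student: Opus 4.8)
The claim (Proposition, "degree-efficiency") asserts that the rule $\ccaprule$ is degree-efficient, while $\nrule$, $\crule$, and $\drule$ are not. Degree-efficiency says: for any selected set $S$, any project $p_j$, and its chosen degree $x \in \chodof{}{}$, replacing $x$ by any strictly higher degree $k>x$ must violate validity, i.e. $\csetof{}-\chocof{}{}+\dcof{}{k} > \bud$. Intuitively, this means a degree-efficient rule never leaves room to "upgrade" a single project to a costlier degree within the budget. Let me sketch how I would prove each of the four parts.

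**The positive part: $\ccaprule$ is degree-efficient.**

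The plan is to argue by contradiction. Suppose $S$ is selected under $\ccaprule$, some project $p_j$ has chosen degree $x$, and there is a strictly higher degree $k>x$ with $\csetof{}-\chocof{}{}+\dcof{}{k} \leq \bud$. Form $S' = (S \setminus \curly{x}) \cup \curly{\pdof{}{k}}$, which is then valid. The key observation is that the $\ccaprule$ per-project utility $\utof{}{j}$ is monotonically non-decreasing in the chosen cost $\chocof{}{}$: for any voter, if the new cost $\dcof{}{k}$ stays below $\lowerb{}{}$ the utility stays $0$; within $[\lowerb{}{},\upperb{}{}]$ it equals $\chocof{}{}$ and so increases with cost; and above $\upperb{}{}$ it is capped at $\upperb{}{}$. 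Crucially, because the function is the $\min$ of the cost and the cap (and is $0$ below the lower bound), raising the cost weakly raises each voter's utility and hence the total. Since $\dcof{}{k} > \dcof{}{x}$, the total utility of $S'$ is at least that of $S$. This alone is not a contradiction (it could be a tie). To get a strict contradiction I would argue that the statement is about \emph{every} selected set; one must check the definition carefully, since degree-efficiency as phrased requires the property to hold for each selected $S$. The cleanest route is to show that whenever an upgrade is feasible, the upgraded set is weakly better and remains selected, so degree-efficiency holds trivially when upgrading is infeasible, and when it is feasible the utility does not strictly decrease — but the axiom demands infeasibility, so I expect the real content is that feasible upgrades \emph{never strictly decrease} utility under $\ccaprule$, forcing any maximizer to be saturated. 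The main obstacle will be handling ties correctly: I must confirm whether "selected" means "a maximizer" and whether the axiom tolerates a selected non-saturated set. If ties are an issue, I would lean on the monotonicity to show a saturated set is always among the selected sets, matching how the paper treats irresoluteness.

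**The negative part: $\nrule$, $\crule$, $\drule$ are not degree-efficient.**

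For each of the three rules I would exhibit a small counterexample instance where a selected set has a project with an unfilled upgrade. For $\nrule$ (cardinal utility), once the chosen cost lies inside $[\lowerb{}{},\upperb{}{}]$ for all voters, pushing to a higher degree within the same acceptable band does not change the utility (it stays $1$ per voter), so a selected set may keep a lower degree while a costlier degree still fits in the budget — giving a clean violation. For $\crule$, a similar instance works where raising the degree beyond every voter's $\upperb{}{}$ drops the per-project utility to $0$, so the rule prefers a lower, cheaper degree even though a higher degree fits; here the non-monotonicity above the upper bound is what breaks degree-efficiency. For $\drule$ (which minimizes disutility), I would construct an instance where the disutility-minimizing degree is low and cheap, yet a costlier degree still fits in the budget without being forced. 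In each case I would pick concrete small numbers (two degrees per project, two or three voters, an explicit budget) and verify by direct computation that the rule's selected set leaves a feasible upgrade, refuting the implication in \Cref{def: degreeefficient}. I expect these counterexamples to be routine to construct; the only care needed is ensuring the witness set is genuinely selected (a true optimizer) and that the upgrade is genuinely feasible, mirroring the counterexample style already used in \Cref{the: discount} and \Cref{the: range-abiding}.
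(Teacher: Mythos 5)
Your negative part is essentially the paper's own approach: the paper uses a single instance (one project, some degree lying inside every voter's bounds, and a top degree that exceeds every voter's upper bound yet costs less than \bud) which simultaneously defeats \nrule, \crule, and \drule, whereas you sketch one counterexample per rule; the constructions you describe are the same idea and would go through once made concrete.

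The genuine gap is in the positive part. You establish only that the per-project utility of \ccaprule is \emph{weakly} non-decreasing when the chosen degree is upgraded, you explicitly concede this "could be a tie," and your fallback plan --- showing that a saturated set is always \emph{among} the selected sets --- does not prove the axiom: \Cref{def: degreeefficient} quantifies over \emph{every} selected set, so exhibiting one saturated maximizer says nothing about a non-saturated maximizer that is also selected and still admits a feasible upgrade; that situation is precisely a violation. The paper's proof closes exactly this step by claiming a \emph{strict} utility increase for the swapped set $S' = (S\setminus\curly{\chodof{}{}})\cup\curly{\pdof{}{k}}$: for a voter with $\lowerb{}{}\leq\chocof{}{}\leq\upperb{}{}$ it asserts the gain $\min(\dcof{}{k},\upperb{}{})-\chocof{}{}$ is positive, and it asserts a positive gain in the remaining cases as well, so that "$S$ is a maximizer" forces $S'$ to be invalid, i.e.\ $\csetof{}-\chocof{}{}+\dcof{}{k}>\bud$. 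Without strictness there is no contradiction and the implication cannot be derived, so your positive direction is left unproven. (Your caution is not baseless --- the strict-increase claims are imprecise at boundary cases such as $\chocof{}{}=\upperb{}{}$ for all voters --- but identifying the tie problem without supplying the strictness argument, or some replacement for it, is where your proposal stops short of a proof.)
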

\begin{proof}
    Let \pbrule be \nrule, \crule, or \drule. Consider the example with single project $p$ such that for every $i$: (i) $\dcof{}{\mdof{}} < \bud$ and $\upperb{}{} < \dcof{}{\mdof{}}$ (ii) there exists $t < \mdof{}$ such that $\lowerb{}{} \leq \dcof{}{} \leq \upperb{}{}$. The set $\curly{\pdof{}{}: \dcof{}{} = \maxconrange}$ is selected under \pbrule since we know that $\conrange \neq \emptyset$ and $\maxconrange < \bud$. Assume that the axiom is satisfied. Then, by the definiton of \mdof{}, $\dcof{}{\mdof{}}>\bud$ contradicting (i).
    
    Now, we look at the rule \ccaprule. Let $S$ be a selected set and $k < \mdof{}$ be such that $k > \chodof{}{}$. Consider the set $S' = S\setminus\curly{\chodof{}{}}\cup \curly{\pdof{}{k}}$. Take any voter $i$. If $\chocof{}{} < \lowerb{}{}$, $\utofd{}{j}$ remains zero like $\utof{}{j}$ or increases by \dcof{}{k}. If $\lowerb{}{}\leq\chocof{}{}\leq\upperb{}{}$, $\utofd{}{j}$ is exactly $\min{(\dcof{}{k},\upperb{}{})}-\chocof{}{}$ more than $\utof{}{j}$. This value is positive since $k > \chodof{}{}$. Finally, if $\chocof{}{} > \upperb{}{}$, $\utofd{}{j}$ is exactly $\dcof{}{k}$ more than $\utof{}{j}$. Thus,\utofd{}{}is clearly greater than\utof{}{}. Since $S$ is selected under \drule, $S'$ must be infeasible. Thus, the given condition holds.
\end{proof}

The next two axioms insist that the valid set closer to the bounds reported by all the voters must be preferred over a valid set farther from them.
\begin{definition}[Lower bound-sensitivity]\label{def: lboundsensitive}
    A PB rule \pbrule is said to be \emph{lower bound-sensitive} if for any instance \instance, any project $p_j \in \proj$, and any two valid set $S,S'$ such that for every voter $i$ we have $\chocof{}{S}<\chocof{}{S'}<\lowerb{}{}$, it holds that $S$ is not selected under \pbrule.
\end{definition}
\begin{proposition}\label{the: lboundsensitive}
    The \drule is lower bound-sensitive, whereas \nrule, \crule, and \ccaprule are not.
\end{proposition}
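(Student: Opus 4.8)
The plan is to prove the two halves separately: that \drule is lower bound-sensitive, and that \nrule, \crule, \ccaprule are not. Throughout I would read the axiom in the only way that makes it non-vacuous, namely that $S$ and $S'$ are two valid sets agreeing on the chosen degree of every project except $p_j$; if $S'$ were allowed to differ arbitrarily elsewhere, its disutility on other projects could dominate and the property would fail even for \drule. Under this reading the hypothesis $\chocof{}{S}<\chocof{}{S'}<\lowerb{}{}$ for every voter $i$ says precisely that $S'$ pushes the cost of $p_j$ strictly closer to the common lower bounds while still staying below all of them.

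For the positive direction I would exploit the strict monotonicity of the distance-disutility below the lower bound. Fix such $S,S',p_j$. For every project $p_k\neq p_j$ the chosen degrees coincide, so the per-project disutility contributions of $S$ and $S'$ agree on $p_k$ for each voter. For $p_j$ itself, since $\chocof{}{S}<\lowerb{}{}$ and $\chocof{}{S'}<\lowerb{}{}$ hold for every $i$, the contribution of $p_j$ to $\dutof{}{}$ equals $\lowerb{}{}-\chocof{}{S}$ and its contribution to $\dutofd{}{}$ equals $\lowerb{}{}-\chocof{}{S'}$; because $\chocof{}{S}<\chocof{}{S'}$ we get $\lowerb{}{}-\chocof{}{S}>\lowerb{}{}-\chocof{}{S'}$ for each voter. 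Summing over voters yields $\sum_{i\in\voters}\dutof{}{}>\sum_{i\in\voters}\dutofd{}{}$, so the valid set $S'$ has strictly smaller total disutility. Since \drule outputs only sets minimizing total disutility, $S$ cannot be selected, which is exactly lower bound-sensitivity.

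For the negative direction I would give one common counterexample defeating all three ``cost-in-range'' notions at once, using the fact that each of \nrule, \crule, \ccaprule assigns the same value (zero) to any cost strictly below $\lowerb{}{}$ and hence cannot discriminate among such costs. Take a single project $p$ with permissible costs $0,1,2,3$, set $\bud=2$, and let every voter report $\lowerb{}{}=\upperb{}{}=3$. The degree of cost $3$ is infeasible, so every valid set chooses a cost in $\{0,1,2\}$, each strictly below every voter's lower bound; thus each voter derives utility $0$ from $p$ under all three rules, the optimal total utility is $0$, and it is attained by every valid set. Taking $S$ to select cost $1$ and $S'$ to select cost $2$ gives $\chocof{}{S}=1<2=\chocof{}{S'}<3=\lowerb{}{}$ for all $i$, yet $S$ is selected, so none of the three rules is lower bound-sensitive. (The same instance is consistent with \drule being sensitive, since there $S'$ has total disutility $n$ against $2n$ for $S$.)

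The main obstacle is definitional rather than computational: pinning down the correct reading of ``two valid sets $S,S'$'' (agreement outside $p_j$) so that the statement is true and provable. The real crux of the proposition is the contrast between \drule, whose disutility is strictly monotone (piecewise linear) as the cost moves through the sub-lower-bound region, and the three utility notions, which are flat (identically zero) there; once this contrast is isolated, both directions are short. I would therefore concentrate the care on stating the comparison cleanly and on verifying the per-voter strict inequality, not on any elaborate construction.
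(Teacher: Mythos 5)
Your proof is correct, and at the top level it follows the same two-part strategy as the paper: a per-voter monotonicity argument for \drule, plus one explicit counterexample that defeats \nrule, \crule, and \ccaprule simultaneously. Two differences are worth recording. First, your counterexample is simpler than the paper's: a single project with permissible costs $0,1,2,3$, budget $2$, and $l_i = u_i = 3$ for every voter, so that every valid set has total utility $0$ under all three rules and hence all of them (including the one choosing cost $1$) are selected; the paper instead uses a two-project instance (project $1$ with degree costs $1$, $2$, $\bud-3$ and project $2$ with cost $\bud-2$, bounds pinned at $\bud-3$ and $\bud-2$) in which $\{p_1^1,p_2^1\}$ ties for the optimum. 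Both work, but yours isolates the underlying phenomenon — all three utility notions are identically zero below the lower bound, so they cannot discriminate among underfunded costs — more crisply. Second, and more substantively, your insistence that $S$ and $S'$ agree outside $p_j$ is not a cosmetic reading: under the literal definition, where $S'$ may differ arbitrarily on other projects, even \drule fails the axiom (let $S$ be the unique disutility minimizer and let $S'$ raise the cost of $p_j$ slightly while wrecking some other project), so the proposition as stated would be false. The paper's proof of the \drule half implicitly adopts your reading, since it compares only the project-$j$ disutility contributions; it also writes the key inequality backwards ($d_i(S) < d_i(S')$, which would make $S$ the better set), whereas your computation $d_i(S) - d_i(S') = c^{S'}(j) - c^S(j) > 0$ for every voter, with equality on every other project, is the correct argument.
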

\begin{proof}
    Let \pbrule be \nrule, \crule, or \ccaprule. For lower-bound sensitivity, consider a counter example as follows: (i) there are two projects with $\dsetof{1} = \curly{\pdof{1}{0},\pdof{1}{1},\pdof{1}{2},\pdof{1}{3}}$ and $\dsetof{2} = \curly{\pdof{2}{0},\pdof{2}{1}}$ (ii) $\dcof{1}{1} = 1$, $\dcof{1}{2} = 2$, $\dcof{1}{3} = \bud-3$, and $\dcof{2}{1} = \bud-2$ (iii) for every voter $i\in\voters$, $\lowerb{}{1} = \upperb{}{1} = \dcof{1}{3}$ and $\lowerb{}{2} = \upperb{}{2} = \dcof{2}{1}$. Clearly, $S= \curly{\pdof{1}{1},\pdof{2}{1}}$ is a set that is selected under \pbrule. Set $S' = \curly{\pdof{1}{2},\pdof{2}{1}}$ and $j = 1$.

    Now, let us consider the rule \drule. The proof is straight-forward. Since the disutility of $i$ from $S$ depends on $\lowerb{}{}-\chocof{}{}$ and $\chocof{}{}-\upperb{}{}$, clearly, $\dutof{}{} < \dutofd{}{}$ for every voter $i$. Hence, $S$ does not get selected by \drule.
\end{proof}
\begin{definition}[Upper bound-sensitivity]\label{def: uboundsensitive}
    A PB rule \pbrule is said to be \emph{upper bound-sensitive} if for any instance \instance, any project $p_j \in \proj$, and any two valid set $S,S'$ such that for every voter $i$ we have $\chocof{}{S}>\chocof{}{S'}>\upperb{}{}$, it holds that $S$ is not selected under \pbrule.
\end{definition}
\begin{proposition}\label{the: uboundsensitive}
    The rules \nrule,\crule, and \drule are upper bound-sensitive, whereas \ccaprule is not.
\end{proposition}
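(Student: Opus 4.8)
The plan is to handle all four rules through one local-modification argument and then specialize. Fix an instance \instance, a project $p_j$, and two valid sets $S,S'$ with $\chocof{}{S}>\chocof{}{S'}>\upperb{}{}$ for every voter $i$. The degree of $p_j$ chosen in $S'$ witnesses a permissible cost $\chocof{}{S'}$ lying strictly between the common threshold $\max_i\upperb{}{}$ and $\chocof{}{S}$. I would therefore form the surgically modified set $S'' = \big(S\setminus\chodof{}{}\big)\cup\{\text{the degree of }p_j\text{ used in }S'\}$, which is again valid (it keeps exactly one degree per project and is strictly cheaper than $S$, hence still within budget), and compare the objective value of $S$ against $S''$. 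Crucially, $S'$ is used only to certify that such an intermediate degree exists; I never compare $S$ to $S'$ directly, since those two could differ on many projects.

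For \drule this comparison is immediate and decisive. Because $\chocof{}{S}$ and $\chocof{}{S'}$ both exceed $\upperb{}{}$ for every $i$, the per-voter disutility contributed by $p_j$ is $\chocof{}{S}-\upperb{}{}$ in $S$ and $\chocof{}{S'}-\upperb{}{}$ in $S''$, while every other project contributes identically. As $\chocof{}{S'}<\chocof{}{S}$, each voter's disutility strictly drops, so the total disutility of $S''$ is strictly smaller than that of $S$; hence $S$ does not minimize disutility and is not selected. This is the cleanest of the three positive claims.

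For \nrule and \crule the same swap only shows that $S$ and $S''$ are \emph{tied}: above every voter's upper bound, $p_j$ contributes $0$ under both rules, so swapping in the cheaper above-bound degree preserves every voter's utility while freeing the budget $\chocof{}{S}-\chocof{}{S'}>0$. The intended conclusion is that a utility-maximizing rule never squanders this surplus, so one argues that the freed budget can be redirected (to fund some degree that falls inside a voter's interval, or by collapsing $p_j$ to its zero degree to liberate still more) to produce a set of strictly larger total utility, whence $S$ cannot be a maximizer. I expect this to be the \textbf{main obstacle}: turning the freed budget into a \emph{strict} gain is not automatic. In degenerate instances where no remaining degree of any project lies within any voter's interval and the surplus is simply unusable, $S$ only ties with the cheaper set rather than being strictly dominated, and the bare ``maximize total utility'' criterion does not by itself exclude $S$. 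Making the argument airtight therefore hinges on an auxiliary efficiency or tie-breaking stipulation on the rule, or on restricting to instances that admit a profitable reallocation; pinning down the exact hypothesis that guarantees the strict improvement is the crux.

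Finally, for \ccaprule I would exhibit an explicit failure. Take a single project with degrees of costs $0,5,10$, budget $\bud=10$, and let every voter report the interval $[\lowerb{}{},\upperb{}{}]=[1,3]$. Both above-bound degrees yield the capped per-voter utility $\upperb{}{}=3$, so $S=\{\text{cost-}10\text{ degree}\}$ and $S'=\{\text{cost-}5\text{ degree}\}$ attain the same maximum total utility $3n$, whereas funding at cost $0$ gives $0$. Thus $S$ is selected even though $\chocof{}{S}=10>\chocof{}{S'}=5>\upperb{}{}$ for all $i$, violating upper bound-sensitivity. The essential distinction from \drule is that capping flattens the objective once the upper bound is exceeded, removing any incentive to shrink the over-allocation; I would note that this very same degenerate single-project instance is exactly the obstructive case flagged above for \nrule and \crule, so the precise boundary between the positive and negative claims deserves careful statement.
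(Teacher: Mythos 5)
Your treatment of \drule is correct, and it is in fact more careful than the paper's own: the paper argues that the disutility of $S$ exceeds that of $S'$ and concludes $S$ is not selected, a direct comparison that is only sound when $S$ and $S'$ coincide outside $p_j$, whereas your swap set $S''=(S\setminus \chodof{}{})\cup \chodof{}{S'}$ makes the argument valid for an arbitrary witness $S'$. Your \ccaprule counterexample has exactly the same structure as the paper's (capping flattens the objective above the upper bound, so an over-funded degree ties with a cheaper over-funded degree and both maximize), but as written it violates the model: bounds must be drawn from the permissible costs $\curly{\dcof{}{0},\ldots,\dcof{}{\mdof{}}}$, so the interval $[1,3]$ cannot be reported when the degrees cost $0,5,10$. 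The paper's example avoids this by using two projects with $\dcof{1}{1}=1$, $\dcof{1}{2}=2$, $\dcof{1}{3}=3$, $\dcof{2}{1}=\bud-3$, and every voter reporting $\lowerb{}{1}=\upperb{}{1}=\dcof{1}{1}$ and $\lowerb{}{2}=\upperb{}{2}=\dcof{2}{1}$; your instance is repaired the same way, e.g.\ degrees costing $0,3,5,10$ with every bound set to $3$.

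The substantive divergence is on \nrule and \crule, and there you have not missed an idea that the paper supplies --- you have flagged a step that the paper does not justify either. After observing that $p_j$ contributes zero utility in both sets, the paper's entire argument is the sentence ``however, none of them chooses $S$ or $S'$.'' Since both rules select \emph{every} valid set maximizing total utility, that sentence requires exhibiting a valid set of \emph{strictly} larger utility than $S$, which is precisely the strict-improvement step you could not close; and your degenerate scenario shows it cannot be closed in general. Concretely, take a single project with degrees costing $0,1,2$, budget $2$, and every voter reporting $\lowerb{}{}=\upperb{}{}=0$: every valid set has total utility $0$ under both \nrule and \crule, so all of them, including $S=\curly{\pdof{}{2}}$, are selected, while $S'=\curly{\pdof{}{1}}$ satisfies $\chocof{}{S}>\chocof{}{S'}>\upperb{}{}$ for every voter, violating upper bound-sensitivity as literally defined. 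So the positive claim for these two rules requires an additional hypothesis (some profitable reallocation of the freed budget must exist, or ties must be broken toward cheaper sets) that neither your proposal nor the paper makes explicit; your identification of this as the crux is accurate, and the paper resolves it by assertion rather than by argument.
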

\begin{proof}
    First, let the rule \pbrule be \nrule, \crule, or \drule. Since $\chocof{}{} > \upperb{}{}$, $\utof{}{j} = 0$ and $\dutof{}{} = \chocof{}{}-\upperb{}{}$. Clearly,  utility of $S$ and $S'$ will be same for \nrule and \crule. However, none of them chooses $S$ or $S'$. For \drule, utility from $S$ will be lesser than that from $S'$. Thus, $S$ cannot be selected. Finally, consider \ccaprule. Consider an instance such that: (i) there are two projects with $\dsetof{1} = \curly{\pdof{1}{0},\pdof{1}{1},\pdof{1}{2},\pdof{1}{3}}$ and $\dsetof{2} = \curly{\pdof{2}{0},\pdof{2}{1}}$ (ii) $\dcof{1}{1} = 1$, $\dcof{1}{2} = 2$, $\dcof{1}{3} = 3$, and $\dcof{2}{1} = \bud-3$ (iii) for every voter $i\in\voters$, $\lowerb{}{1} = \upperb{}{1} = \dcof{1}{1}$ and $\lowerb{}{2} = \upperb{}{2} = \dcof{2}{1}$. Clearly, $S= \curly{\pdof{1}{3},\pdof{2}{1}}$ is a set that is selected under \ccaprule. Set $S' = \curly{\pdof{1}{2},\pdof{2}{1}}$ and $j = 1$. The condition is not met.
\end{proof}
\begin{table}
\centering
\begin{tabular}{l"c|c|c|c}
\textbf{PROPERTIES\textbackslash{}RULES} & $\boldsymbol{\nrule}$     & $\boldsymbol{\crule}$ & $\boldsymbol{\ccaprule}$ & $\boldsymbol{\drule}$ \\
\thickhline
{Shrink-resistance}                 & $\checkmark$ & $\checkmark$                             & $\checkmark$                                & $\checkmark$                             \\
\hline
{Discount-proofness}                  & $\checkmark$ & $\times$                                 & $\times$                                    & $\times$                                 \\
\hline
{Range-abidingness}                   & $\checkmark$ & $\times$                                 & $\times$                                    & $\checkmark$                             \\
\hline
{Range-convergingness}                & $\checkmark$ & $\checkmark$                             & $\checkmark$                                & $\checkmark$                             \\
\hline
{Range-unanimity}                 & $\checkmark$ & $\times$                                 & $\times$                                    & $\checkmark$                             \\
\hline
{Degree-efficiency}                & $\times$     & $\times$                                 & $\checkmark$                                & $\times$                                 \\
\hline
{Lower bound-sensitivity}           & $\times$     & $\times$                                 & $\times$                                    & $\checkmark$       \\
\hline
{Upper bound-sensitivity}           & $\checkmark$ & $\checkmark$                             & $\times$                                    & $\checkmark$ 
\end{tabular}
\caption{Results for budgeting axioms in \Cref{sec: d-fl-axioms}}
\label{tab: results}
\end{table}
\section{Conclusion and Discussion}\label{sec: d-fl-conclusion}
Many times, there are multiple ways of executing a public project and hence several, but limited number of, choices for the amount to be allocated to this project. Unfortunately, the existing preference elicitation methods and aggregation rules for participatory budgeting do not take this factor into account and we bridge this gap. We generalized two utility notions defined for PB under restricted costs to our model. We also proposed two other utility notions unique to our model. We analyzed all the corresponding utilitarian welfare maximizing rules computationally and axiomatically.

Our computational part strengthens all the existing positive results, and also introduces several new parameterized tractability results (FPT, parameterized FPTAS) taking into account the parameters recently introduced in the PB literature. It is worth highlighting that all our computational results in \Cref{sec: d-fl-comp} can be generalized by replacing the utilities with cardinal utility for every degree of each project. However, we present all the results for ranged approval votes due to their practical relevance, simplicity, and deep association with axiomatic analysis.

Followed by this, we introduce several axioms for our model with ranged approval votes and investigate which of these are satisfied by our PB rules. Note that, though none of the proposed PB rules satisfies all the axioms, every rule satisfies some axioms. Axiomatic analysis reflects the properties of each rule, using which the PB organizer can pick a rule based on the context. Also, it is worth bearing in mind that the novel disutility notion and PB rule \drule we proposed for our model satisfies as many axioms as any simple approval-based PB rule satisfies. One of the key takeaways of this chapter is hence a conclusion that \drule is a very good choice when each voter approves a range of costs.

In this chapter, we studied welfare maximization for PB under partially flexible costs. An intriguing direction to explore further would be the concept of fairness within the context of partially flexible costs.

Thus far, our emphasis has been on participatory budgeting under dichotomous preferences. In the subsequent part of this thesis, we will delve into the study of participatory budgeting under ordinal preferences, considering both restricted and flexible costs as the cases of interest.

\newpage

\part{Ordinal Preferences}\label{part: ordinal}
\thispagestyle{empty}

\blankpage

\chapter{Restricted Costs: Welfare Maximization and Fairness under Incomplete Weakly Ordinal Preferences}\label{chap: o-re}

\begin{quote}
\textit{We study the participatory budgeting model where the cost of each project is restricted to one value and the voters report incomplete weakly ordinal preferences. The chapter is structured into two distinct parts: one emphasizing the maximization of welfare and the other dedicated to exploring the concept of fairness.}

\textit{In the first part, we extend the existing welfare maximizing rules in the literature on dichotomous and strictly ordinal preferences to propose a class of rules, called \emph{dichotomous translation rules} and another rule named \emph{PB-CC rule}. We prove that our extensions mostly preserve and also enhance the computational and axiomatic properties of the rules. As a part of this, we also introduce a new axiom, \emph{pro-affordability}, explicitly relevant to PB under weakly ordinal preferences. In the latter part of the chapter, we introduce fresh mechanisms to address fairness, called average rank-share guarantee (ARSG) rules. This novel class of rules consists of two distinct families: average rank guarantee rules and share guarantee rules. By employing ARSG rules, we are able to overcome the limitations associated with existing fairness notions in the literature on participatory budgeting under ordinal preferences.}
\end{quote}

\section{Motivation}\label{sec: o-re-intro}
Though dichotomous preferences are well studied both for the cases of restricted costs \cite{bogomolnaia2005collective,duddy2015fair,aziz2019fair} as well as flexible costs \cite{talmon2019framework,goel2019knapsack,jain2020participatory,rey2020designing,sreedurga2022maxmin,fairstein2022welfare}, the dichotomous preferences inherently have limited expressibility since all the approved projects are assumed to be equally preferred by the voter. This motivates eliciting ordinal preferences (rankings) of the projects from the voters. Ordinal preferences are proven to be cognitively easy as well as the most liked ballot by the voters since they get to express more information \cite{benade2018efficiency}.  Various works study strictly ordinal preferences when the costs are totally flexible \cite{aziz2014generalization,aziz2018rank,airiau2019portioning} and also when the costs are restricted to one single value \cite{shapiro2017participatory}. Undeniably, the inherent ranking of projects of a voter is usually weak since ties among the projects are very natural, especially when the number of projects is high. This highlights the need for studying weakly ordinal preferences in participatory budgeting.

Though weakly ordinal preferences are well studied when the costs are flexible \cite{aziz2014generalization,aziz2018rank,ebadian2022optimized}, their study when the costs are restricted has been very limited \cite{aziz2021proportionally}. This chapter skillfully bridges this gap. It is worth emphasizing that we also allow the weakly ordinal preferences to be incomplete, thereby making the elicitation method no more cognitively harder than standard dichotomous preferences (voters can \emph{optionally} rank the approved projects further).

\subsection{Contributions and Organization of the Chapter}\label{sec: o-re-contribution}
The two most desired objectives in social choice literature are \emph{welfare maximization} and \emph{fairness}. While the welfare maximization in participatory budgeting (PB) is studied under dichotomous as well as strictly ordinal preferences (the latter is studied only for multi-winner voting, a special case of PB under restricted costs), it remained to be studied under weakly ordinal preferences. We propose PB rules which extend the existing rules to the incomplete weakly ordinal preferences setting, such that the computational and axiomatic properties of the rules are mostly preserved and even enhanced. We justify the significance of each of our extended rules through an exhaustive axiomatic analysis. \Cref{sec: o-re-welfare} of this chapter studies the PB rules that maximize the utilitarian welfare.

Unlike welfare, fairness for PB under restricted costs and weakly ordinal preferences has been studied in the literature \cite{aziz2021proportionally}. However, these existing fairness notions can be unreasonable in several scenarios, as illustrated in the upcoming sections. We resolve this drawback by proposing a different perspective on fairness and introducing a novel interesting class of fair rules. Also, unlike the work by Aziz et al. \cite{aziz2021proportionally}, we allow for the ordinal preferences to be incomplete (i.e., voters can rank only a few projects and ignore the rest, as explained in \Cref{sec: intro-preferences}). \Cref{sec: o-re-fair} of this chapter studies the fair PB rules. We conclude by making important and compelling observations in \Cref{sec: o-re-conclusion}.
\section{Notations and Preliminaries}\label{sec: o-re-prelims}
Let us recall the necessary notations from \Cref{sec: prelims-notations} and introduce a few more. Recall that $\voters = \{1,\ldots,n\}$ is the set of $n$ voters and $\proj = \{p_1,\ldots,p_m\}$ is the set of $m$ projects. Each voter \ii gives a weakly ordinal preference $\suci$ over a subset $A_i$ of projects.
That is, \suci partitions a set $A_i$ into equivalence classes such that $E^i_1 \succ_i E^i_2 \succ_i \ldots$ (note that every project need not be included in some equivalence class). The rank of a project $p$ is said to be $r$ if exactly $r-1$ projects are strictly preferred over $p$. 
We use $r_i(p)$ to denote the rank of $p$ and $\pat{t}$ to denote all the projects ranked exactly $t$ in $\succeq_i$. We denote by $\ptill{j}$ the set of projects in the first $j$ equivalence classes of $\succeq_i$, i.e., $\ptill{j} = \bigcup_{k \leq j} E^i_k$. A preference profile \rprof is a vector of ordinal preferences of all the voters, i.e., $\rprof = {(\suci)}_{i \in \voters}$. A cost function $c: \proj \to \mathbb{N}$ gives the cost of each project. Cost of a subset $S \subseteq \proj$, $\sum_{p \in S}{\cof{p}}$, is denoted by \cof{S}. \bud denotes the total budget available. A set $S \subseteq \proj$ is said to be \textit{feasible} if $c(S) \leq \bud$. Let \feasible be the set of all feasible subsets of $\proj$.

\begin{example}
	Suppose $\proj = \{p_1,p_2,p_3,p_4\}$ is the set of projects such that $\cof{p_1} = 3, \cof{p_2} = 4, \cof{p_3} = 2$, and $\cof{p_4} = 6$. Let the ranking of a certain voter, suppose $i$, be $\curly{p_2, p_3} \succ_i \curly{p_1}$. Now, $A_i = \curly{p_1,p_2,p_3}$, $r_i(p_1) = 3$, $\pat{2} = \emptyset, \pat{3}=\{p_1\}$, $\ptill{2} = E^i_1 \cup E^i_2 = \curly{p_1,p_2,p_3}$.
\end{example}
A PB instance under incomplete weakly ordinal preferences \fullrinstance is represented by \instance. The rules we present are \emph{irresolute}, i.e., they output multiple feasible subsets. For an instance \instance, a PB rule \pbrule outputs a set of feasible subsets of projects \ruleof{\pbrule}{}.
\section{Welfare Maximizing Rules}\label{sec: o-re-welfare}
In this section, we study PB rules that maximize utilitarian welfare. For this, we first need to define the utility notion. When the preferences are weakly ordinal, utility of a voter from a set of projects must ideally depend on the following factors: (i) the number/costs of the selected projects and (ii) ranks of those projects in her preference. Including both factors (i) and (ii) in the definition of utility may lead to a multi-objective optimization problem. On the other hand, we propose two kinds of PB rules, each considering one of the above two factors. PB rules discussed in \Cref{sec: o-re-dtr} consider the factor (i) while defining the utility, whereas the one in \Cref{sec: o-re-pbcc} considers the factor (ii). Each PB rule outputs all the feasible subsets of projects that maximize the utilitarian welfare (i.e., the sum of utilities of all voters). 

We define the rules by carefully extending the existing rules in the literature to our model of PB under weakly ordinal preferences, without compromising much on their axiomatic properties. We justify the significance of our extended rules by conducting an exhaustive axiomatic analysis towards the end of this section. In fact, taking a close look at our axiomatic results reveals an interesting observation as will be explained in \Cref{sec: o-re-conclusion}.
\subsection{Dichotomous Translation Rules}\label{sec: o-re-dtr}
We start by extending the rules defined for PB under dichotomous preferences \cite{talmon2019framework}. In the model with dichotomous preferences, every voter $i$ reports a subset $A_i$ of projects that she likes and no ranking is involved. Thus, the utility of a voter here depends on factor (i), i.e., on the number of approved projects that got selected or their costs. Talmon and Faliszewski \cite{talmon2019framework} define three such utility notions and study the corresponding utilitarian rules. The utility of $i$ from a set $S$ of projects is defined as $u_i(S) = f(A_i \cap S)$, where $f(S)$ could be $|S|, \cof{S},$ or  $\bool(|S| > 0)$ (takes $1$ if $S$ is non-empty and $0$ otherwise). These utility notions are also explained in \Cref{sec: prelims-dutilities}.

Dichotomous translation rules progress in two phases. In the first phase, a \emph{translation scheme} is used to convert weakly ordinal preferences into dichotomous preferences satisfying certain property. In the second phase, the existing utilitarian rules described in the previous paragraph are applied on the resultant instance of dichotomous preferences. The translation scheme in the first phase must be chosen carefully such that most properties are preserved in the outcome after the second phase. Additionally, our translation also resolves the drawbacks suffered by PB under dichotomous preferences. We discuss two such translation schemes in this section.

\begin{definition}\label{def: lar}
	A \textbf{dichotomous translation rule} $\langle \TT,f \rangle$ is a two-phase protocol with the following phases: (i) translation scheme \TT converts the weak ranking profile \rprof into an approval vote profile $(A_i)_{i \in N}$ (ii) Every $S$ that maximizes $\sum_{i \in \voters}{f(A_i \cap S)}$ is selected in the outcome.   
\end{definition}
\begin{figure}[h]
\centering
	\includegraphics[width=0.7\columnwidth, height = 2.5cm]{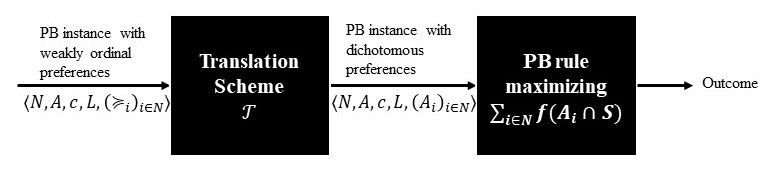}
	\caption{Block diagram to show the working of a dichotomous translation rule $\langle \TT,f \rangle$}
\end{figure}
\subsubsection{Multi-knapsack translation scheme}\label{sec: mt}
In the PB under dichotomous preferences, all the approved projects are interpreted to be equally desirable. Thus, the PB rules fail to capture the feasible subsets of projects most desired by the voter. This motivated Goel et al. \cite{goel2019knapsack} to introduce knapsack voting, where for every voter $i$, $\cof{A_i} \leq \bud$. That is, every voter approves her most desired feasible subset of projects. However, this may not perform well when there are ties among the projects and there are multiple equally desirable feasible subsets for the voter. This is illustrated below.

\begin{example}\label{eg1: dtr}
	Let  $A = \{p_1,p_2,\ldots,p_{8}\}$  and $\bud = 5$, with costs $\{3,3,2,\ldots,2\}$, respectively. Let there be two voters whose inherent preferences for the projects are as follows: $p_1 \succ_1 \curly{p_2,p_3,p_4} \succ_1 p_5$ and $\curly{p_1,p_3} \succ_2 \curly{p_4,p_8}$.
	
	If the voters are asked for standard dichotomous preferences, voter $1$ approves $\{p_1,p_2,p_3,p_4,p_5\}$ and voter $2$ approves $\curly{p_1,p_3,p_4,p_8}$. Here, the information that voter $1$ prefers an outcome $\{p_1,p_4\}$ over $\{p_1,p_5\}$ is lost. Now, suppose the voters are asked to report knapsack votes instead (i.e., each $A_i$ must cost within the budget). Then, voter $1$ is forced to approve $p_1$ and exactly one project from $\{p_3,p_4\}$, thereby implying incorrectly that the remaining project yields no utility to her (note that both $\curly{p_1,p_3}$ and $\curly{p_1,p_4}$ are equally desirable feasible outcomes).
\end{example}

In our model, we allow the voter $i$ to also report a weakly ordinal preference over projects in $A_i$. Our translation scheme uses this weakly ordinal preference to further shrink down $A_i$ such that it captures all and only those projects present in \emph{at least one} of the most desired feasible subsets of $i$. The \emph{\textbf{multi-knapsack translation scheme ($\boldsymbol{MT}$)}} greedily adds the projects following the rank until the budget constraint is respected. At the point where the constraint gets violated, $MT$ includes only those projects whose inclusion will not make the subset infeasible. This translation scheme is demonstrated in Algorithm \ref{algo: mtscheme}. The corresponding dichotomous translation rule is further elucidated below in \Cref{eg2: dtr}.
\begin{algorithm}
	\DontPrintSemicolon
	\KwIn{An ordinal PB instance \fullrinstance}
	\KwOut{A dichotomous preference profile \approf}
        \For{each voter $i$}{
        $A_i \gets \emptyset$;\;
        $j \gets 1$;\;
            \While{$\cof{\aof{}}+\cof{E^i_j} \leq \bud$}{
		      $A_i \gets A_i \cup E^i_j$;\;
                $j \gets j+1$;\;
		  }
            $O_i \gets \emptyset$;\;
            \For{each $p \in E^i_j$}{
                \If{$\cof{p} < \bud-\cof{\aof{}}$}{
                    $O_i \gets O_i \cup \{p\}$;\;
                }
            }
        $A_i \gets A_i \cup O_i$;\;    
        }
	\Return{$(A_i)_{i \in N}$}\;
	\caption{Multi-knapsack translation scheme}
	\label{algo: mtscheme}
\end{algorithm}
\begin{example}\label{eg2: dtr}
	 Consider \Cref{eg1: dtr}. Let us look at the first phase. For the first voter, the multi-knapsack translation scheme starts by adding $E^1_1$ (i.e., $p_1$) to $A_1$. The left over after this is $5-\cof{p_1} = 2$. Cost of $E^1_2$ is $7$, which is greater than left over of $2$ and hence $E^1_2$ does not fit entirely into $A_1$. Thus, we mark only those projects in $E^1_2$ whose costs are within the left over limit, i.e., $\curly{p_3,p_4}$ and add these marked projects into $A_1$ (In Algorithm \ref{algo: mtscheme}, we used $O_i$ to denote the marked projects for $i$). Finally, $A_1 = \curly{p_1,p_3,p_4}$. Similarly, for the second voter, set the left over limit to the budget $5$. Since $E^2_1$ costs $5$, add it to $A_2$. The left over limit is now $0$. Thus, $A_2 = \curly{p_1,p_3}$.
	 
	 In the second phase, if $f(S)$ is defined as $|S|$ or $\cof{S}$, then the rule $\langle MT, f \rangle$ outputs $\curly{\{p_1,p_3\}}$. If $f(S)$ is defined as $\bool(|S| > 0)$, then the $\langle MT, f \rangle$ rule outputs all the sets containing at least one of $p_1$ and $p_3$.
\end{example}

It needs to be mentioned that adding the projects greedily with respect to rank may not always result in an optimal knapsack vote. However, solving the knapsack problem is \NPH and cognitively hard for the voters \cite{benade2021preference}. Thus, especially when costs of the projects are close to each other, the greedy approach provides a close-to-ideal outcome with much less cognitive load. Consequently, the greedy approach is commonly used for knapsack and PB in real-world. It is worth highlighting that $\langle MT ,f \rangle$ generalizes the existing multi-winner voting rules \cite{faliszewski2018multiwinner}. An ordinal multi-winner voting instance can be represented as a PB instance with $\bud = k$ and unit cost projects. If $f(S)$ is $|S|$ or $\cof{S}$, $\langle \MM,f \rangle$ is equivalent to multi-winner bloc rule. If $f(S) = \bool(|S| > 0)$, it is equivalent to $\alpha_k$-CC rule.

We now analyze the computational complexity of the decision problem of the dichotomous translation rule with multi-knapsack translation scheme. That is, the problem is to decide if the sets selected by the rule $\langle MT,f \rangle$ guarantee at least a given utility $s$.
\begin{theorem}\label{the: mtnp}
	Deciding $\langle MT,f \rangle$ is \NPH if any of the following conditions holds:
    \begin{enumerate}[(a)]
        \item $f(S) = c(S)$
        \item $f(S) = \bool(|S| > 0)$.
    \end{enumerate}
\end{theorem}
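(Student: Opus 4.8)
The plan is to exploit the two-phase structure of $\langle MT,f\rangle$: since the multi-knapsack translation scheme runs in polynomial time, all the hardness must live in the second (welfare-maximizing) phase, which over a \emph{dichotomous} profile is exactly the classically hard utilitarian problem of Talmon and Faliszewski \cite{talmon2019framework}. Moreover, because a welfare-maximizing rule outputs only sets attaining the maximum welfare, asking whether the selected sets guarantee utility at least $s$ is the same as asking whether $\max_{S \in \feasible}\sum_{i}f(A_i \cap S) \geq s$, where $A_i$ is the dichotomous vote that $MT$ produces. Thus it suffices, for each of (a) and (b), to exhibit a family of ordinal instances whose $MT$-translation realizes a known hard dichotomous profile while leaving $\bud$, $c$, and $s$ intact.

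For (a) I would reduce from \subsum. Given $\langle Z,X\rangle$ with $X=\{x_1,\ldots,x_n\}$, I create one project $p_i$ of cost $x_i$ for each $x_i$, a single voter whose weak order is the one indifference class $E^1_1=\{p_1,\ldots,p_n\}$ (indifferent among all projects, ranking nothing else), and set $\bud=s=Z$. After discarding every $x_i>Z$ (never selectable) and answering \yes immediately if some $x_i=Z$, I may assume every project costs strictly less than $\bud$. For (b) I would mirror the \setcov reduction of \Cref{the: strongnph}: one unit-cost project $p_C$ per set $C$, one voter per element with single indifference class $E^i_1=\{p_C : i\in C\}$, budget $\bud=k$ (assuming $k\geq 2$, the case $k=1$ being trivial), and threshold $s=|U|$.

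The crux, and the step I expect to be the main obstacle, is verifying that $MT$ faithfully returns $A_i=E^i_1$ in both constructions. Tracing Algorithm~\ref{algo: mtscheme} on a single-class preference: if $c(E^i_1)\leq \bud$ the while-loop absorbs the entire class and halts with $A_i=E^i_1$; if $c(E^i_1)>\bud$ the loop never executes and the $O_i$-step instead collects exactly those $p\in E^i_1$ with $c(p)<\bud$. Hence, provided every project costs strictly less than $\bud$ (which both constructions arrange), $MT$ reproduces $E^i_1$ verbatim in either case. This faithfulness is precisely what makes the translation transparent to the reduction, so that the second-phase optimization is applied to the intended dichotomous profile.

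It then remains to check the two equivalences, which are routine. In (a) the single voter approves all projects, so $\sum_i f(A_i\cap S)=c(S)$; a feasible set attains utility $\geq Z$ iff $c(S)=Z$ (feasibility forces $c(S)\leq Z$), i.e.\ iff $X$ has a subset summing to $Z$. In (b) the quantity $\sum_i \bool(|A_i\cap S|>0)$ counts the voters (elements) covered by $S$, and with unit costs a feasible $S$ is any set of at most $k$ projects, so the maximum coverage reaches $|U|$ iff some $k$ sets cover $U$. Both reductions are computable in polynomial time, establishing \NPH in each case.
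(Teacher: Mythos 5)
Your proof is correct, but it takes a genuinely different route from the paper's. For (a), the paper uses $n$ voters with \emph{strict} rankings $p_i \succ_i p_{n+1} \succ_i \text{others}$, where the dummy project $p_{n+1}$ of cost exactly $H$ acts as a blocker: it can never fit after $p_i$, so $MT$ truncates each vote to $A_i=\{p_i\}$. You instead use a single voter with one big indifference class and rely on the overflow branch of Algorithm~\ref{algo: mtscheme} (the $O_i$-step collects every project of cost strictly below the residual budget), together with the preprocessing that eliminates elements with $x_i \geq Z$ so that the strict inequality $\cof{p} < \bud$ is satisfied by all remaining projects --- a subtlety you correctly identified and handled. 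For (b), the paper reduces from \vercov (again with strict orders and a dummy project $p_d$ of cost $k-1$ doing the blocking), whereas you reduce from \setcov in the style of \Cref{the: strongnph}, again via single indifference classes and the assumption $k \geq 2$. Both strategies are sound; the trade-off is that your constructions are leaner (one voter in (a), no dummy projects anywhere), but they lean essentially on ties, i.e., on the weakly ordinal nature of the domain, while the paper's reductions use strict rankings and therefore establish the slightly stronger fact that hardness persists even on the subdomain of strict preferences. Your opening observation --- that the decision question about the rule's selected sets is equivalent to asking whether $\max_{S \in \feasible}\sum_{i}f(A_i \cap S) \geq s$ because the second phase is welfare-maximizing --- matches the problem formulation the paper works with implicitly.
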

\begin{proof}
Let $\approf = (A_i)_{i \in N}$ be the dichotomous preference profile obtained from the incomplete weak rankings profile \rprof using the translation scheme $MT$. 
    \begin{enumerate}[(a)]
        \item We reduce \subsum (\Cref{prob: subsum}) to our problem of deciding if there exists $S \in \feasible$ such that $\suml{i \in \voters}{c(A_i \cap S)} \geq s$ for any score $s$. Given an integer $H$ and a set of integers $X= \{x_1,x_2,\ldots,x_n\}$, the \subsum problem is to decide if there exists a subset $X' \subseteq X$ such that $\sum_{x \in X'}{x} = H$. This is known to be \NPH \cite{garey1979computers}. W.l.o.g., we can assume that $X$ is sorted in non-decreasing order.

        Construct a PB instance as follows: set $\bud = s = H$. Create $n$ projects $p_1,p_2,\ldots,p_n$ such that $c(p_i) = x_i$ and another project $p_{n+1}$ with $c(p_{n+1}) = H$. Create $n$ voters such that the preference of voter \ii is $p_i \succ_i p_{n+1} \succ_i others$ (`others' could be ordered arbitrarily). We claim that \subsum is equivalent to our problem. Let us prove the correctness. The scheme $MT$ approves only the top-ranked project for each voter \ii. Hence, for each \ii, $A_i$ is $\{p_i\}$. If the given instance is a \yes instance of \subsum and $X'$ is the required subset, then $\sum_{i \in \voters}{f\big({A_i\;\cap\;S}\big)} = \sum_{x_i \in X'}{x_i} = H$ when $S = \{p_i : x_i \in X'\}$. Hence, this is a \yes instance of the given problem. Now, let us assume that the given instance is a \no instance of \subsum. Therefore, there is no subset $X' \subseteq X$ such that $\sum_{x_i \in X'}{x_i} = H$. Thus, for every $S \in \feasible$, $\sum_{i \in \voters}{f\big({A_i\;\cap\;S}\big)} < s$ (since $\bud = H$), making this a \no instance of the given problem.

        \item We reduce \vercov to our problem of deciding if there exists $S \in \feasible$ such that $\suml{i \in \voters}{\bool(|A_i \cap S|>0)} \geq s$ for any score $s$.
        \begin{definition}[\vercov]\label{prob: vercov}
        Given an undirected graph $G = (V,E)$ and an integer $k$, the \vercov problem is to decide if there exists a $V' \subseteq V$ such that $|V'| \leq k$ and $E = \{(v_1,v_2): (v_1 \in V') \lor (v_2 \in V')\}$. 
        \end{definition}
        The \vercov problem is known to be \NPH \cite{garey1979computers}. We can assume without loss of generality that $k > 2$ (for constant values of $k$, \vercov becomes tractable).

        Construct a PB instance as follows: set $\bud = k$. For each vertex $v$, add a project $p_v$ with cost $1$. Add another dummy project $p_d$ with cost $k-1$. For each edge $e_i = (v^1_i,v^2_i)$, add a voter $i$ with preference $p_{v^1_i} \succ_i p_{v^2_i} \succ_i p_d \succ_i others$ (`others' could be ordered arbitrarily). Set $s = |E|$. We claim that \vercov is equivalent to our problem. Let us prove the correctness. For each voter $i$, the scheme $MT$ gives an output such that $A_i = \{p_{v^1_i},p_{v^2_i}\}$. Suppose the given instance $(G,k)$ is a \yes instance of \vercov. Let $V' \subseteq V$ cover entire $E$ such that $|V'| \leq k$. Consider the feasible subset of projects $S = \{p_v: v \in V'\}$. Since $V'$ is a vertex cover, for any voter \ii, $|A_i \cap S| > 0$. Therefore, $\sum_{i \in \voters}{f\big({A_i\;\cap\;S}\big)} = |E|$ making it a \yes instance. Now, let us assume that $(G,k)$ is a \no instance. Any feasible subset of projects gives a total score less than $|E|$, making it a \no instance.
    \end{enumerate}
This completes the proof.
\end{proof}

To find the computational complexity when $f(S)=|S|$, we rely on a result presented by Talmon and Faliszewski \cite{talmon2019framework}.
\begin{proposition}[Talmon and Faliszewski \cite{talmon2019framework}]\label{cor: talmon}
    Given any PB instance \instance with dichotomous preference profile, a set $S$ maximizing $\suml{i \in \voters}{|A_i \cap S|}$ can be computed in polynomial time.
\end{proposition}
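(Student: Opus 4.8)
The plan is to exploit the fact that the cardinality objective decouples additively over projects. For each project $p \in \proj$ let $n_p = |\{i \in \voters : p \in A_i\}|$ denote its number of approvals. Then for any $S \subseteq \proj$ we have $\sum_{i \in \voters} |A_i \cap S| = \sum_{i \in \voters}\sum_{p \in S} \bool(p \in A_i) = \sum_{p \in S} n_p$, so the objective is simply the total approval count of the selected projects. Hence the task reduces to finding a feasible set $S$ (i.e.\ one with $\cof{S} \leq \bud$) that maximizes $\sum_{p \in S} n_p$, which is exactly a $0/1$ knapsack instance in which item $p$ has profit $n_p$ and weight $\cof{p}$.

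Although knapsack is \NPH in general, the crucial observation is that here the profits are polynomially bounded: $n_p \leq n$ for every $p$, so the total profit of any subset lies in $\{0,1,\ldots,mn\}$. I would therefore run the classical ``minimum weight per profit value'' dynamic program. Define $D(j,v)$ to be the minimum total cost of a subset of $\{p_1,\ldots,p_j\}$ whose approvals sum to exactly $v$, for $0 \leq j \leq m$ and $0 \leq v \leq mn$, setting $D(j,v)=\infty$ when no such subset exists (and treating out-of-range entries as $\infty$). The base case is $D(0,0)=0$ and $D(0,v)=\infty$ for $v>0$, and the include/exclude recurrence is $D(j,v) = \min\big(D(j-1,v),\, D(j-1,v-n_{p_j}) + \cof{p_j}\big)$.

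The table has $O(m^2 n)$ entries, each filled in constant time, so the whole computation is polynomial; the optimal welfare is $\max\{v : D(m,v) \leq \bud\}$, and an optimal set is recovered by standard back-tracking (or by carrying the witnessing subsets alongside $D$, exactly as in the proof of \Cref{the: nrule-p}). Correctness is the usual knapsack-DP invariant. The only point worth emphasizing — and essentially the sole place a proof could go astray — is the contrast with the cost objective: there the profit of $p$ is $n_p\cof{p}$, which may be exponential in the input size, so the analogous DP is merely pseudo-polynomial and the problem is in fact \NPH (\Cref{the: crule-nph}). It is precisely the boundedness of the approval counts by $n$ that renders the cardinality version genuinely polynomial; indeed this proposition is just the restricted-cost specialization of the dynamic program used in \Cref{the: nrule-p}.
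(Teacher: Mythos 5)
Your proof is correct and follows essentially the same route as the paper: the paper cites this result from Talmon and Faliszewski without reproving it, but its own generalization in \Cref{the: nrule-p} uses exactly your dynamic program (minimum cost per achievable total score, with the score bounded by $mn$), of which your argument is the restricted-cost specialization. Your closing remark correctly pinpoints why the cardinality objective is polynomial while the cost objective (\Cref{the: crule-nph}) is not.
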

The next result is a direct consequence of \Cref{cor: talmon}.
\begin{corollary}\label{the: mtp}
	If $f(S) = |S|$, then $\langle MT,f \rangle$ can be decided in polynomial time.
\end{corollary}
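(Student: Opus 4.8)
The plan is to reduce the decision problem for $\langle MT, f\rangle$ with $f(S) = |S|$ directly to the problem already solved by Talmon and Faliszewski in \Cref{cor: talmon}. The corollary is phrased as the computation of a set maximizing $\sum_{i \in \voters}{|A_i \cap S|}$, but the decision version we care about asks whether the sets selected by $\langle MT, f\rangle$ guarantee utility at least a given threshold $s$. So first I would observe that the two phases of the dichotomous translation rule are independent: the output of the rule depends on \rprof only through the dichotomous profile \approf produced by the multi-knapsack translation scheme $MT$.

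The key step is then to split the argument along the two phases. I would first argue that \approf is computable in polynomial time from \instance, which is immediate since Algorithm \ref{algo: mtscheme} runs a single greedy pass per voter, touching each equivalence class and each project at most a constant number of times; thus the translation scheme contributes only polynomial overhead. Second, once \approf is in hand, the resulting instance is exactly an instance of PB under dichotomous preferences with the cardinality utility $f(S)=|S|$, and \Cref{cor: talmon} guarantees that a utility-maximizing feasible set can be found in polynomial time. Comparing the maximum achievable value $\max_{S \in \feasible}\sum_{i \in \voters}{|A_i \cap S|}$ against the threshold $s$ then decides the problem.

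The only subtlety — and the step I would be most careful about — is reconciling the fact that $\langle MT, f\rangle$ is irresolute (it outputs \emph{all} sets attaining the maximum) with the decision formulation, which concerns the guaranteed utility of the selected sets. Since every selected set attains the same optimal value of $\sum_{i \in \voters}{|A_i \cap S|}$, the guaranteed utility of the rule is precisely this optimal value, so the decision question collapses to a single comparison of the optimum with $s$. Because both the translation phase and the optimization phase run in polynomial time, composing them keeps the whole procedure polynomial, which is exactly what \Cref{the: mtp} (stated as \Cref{cor: talmon}'s consequence) asserts. I do not anticipate a genuine obstacle here; the result is essentially a composition argument, and its brevity is precisely why the authors present it as a corollary rather than a standalone theorem.
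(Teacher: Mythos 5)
Your proof is correct and takes essentially the same route as the paper: the paper dispatches this corollary as a ``direct consequence of \Cref{cor: talmon},'' leaving implicit exactly the steps you spell out (polynomial-time execution of the $MT$ translation, applying \Cref{cor: talmon} to the resulting dichotomous profile, and comparing the optimum against the threshold $s$). Your observation about irresoluteness being harmless—since every selected set attains the same optimal value—is a routine point the paper silently assumes, so there is no substantive difference between the two arguments.
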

Next, we discuss another good translation scheme, called cost-worthy translation scheme (CT), which results in an outcome satisfying many good properties. 

\subsubsection{Cost-worthy translation scheme}\label{sec: ct}
The CT scheme represents the  desirability of economical projects. A major disadvantage of existing PB rules under dichotomous preferences is that some of them treat all the approved projects same, while others prefer an expensive project assuming that the cost reflects its prestige and quality  \cite{aziz2018proportionally,talmon2019framework,goel2019knapsack}. While this could be reasonable in some contexts, often there are real-world scenarios in which inexpensive projects are to be preferred and an expensive project is to be funded only if it is \emph{preferred over other projects} by a high enough number of voters. The CT scheme captures the above requirement using the weakly ordinal preferences of the voter and a parameter $\cwpara: [m] \to \mathbb{N}$. The function \cwpara is such that $\cwpara(m) = 0$ and it is monotonically non-increasing function, i.e., $\cwpara(i) \geq \cwpara(j)$ whenever $i < j$. This is called a \emph{worth function} and is used to capture the relation between ranks and the costs of the projects. For any $j \in [m]$, $\cwpara(j)$ denotes the maximum amount a project ranked at $j$ deserves. The scheme approves, for each voter, only those projects that are affordable according to their ranks and the worth function, as shown in Algorithm \ref{algo: ctscheme}. We illustrate this with an example.
\begin{algorithm}
\DontPrintSemicolon
\KwIn{Incomplete weakly ordinal preference profile \rprof, a worth function \cwpara}
\KwOut{Dichotomous preference profile \approf}
\For{each voter $i$}{
    $A_i \gets \emptyset$\\
    \For{$j = 1 ; j\leq m; j++$}{
        \If{$\cof{p_j} \leq \cwpara(r_i(p_j))$}{
            $A_i \gets A_i$ $\cup \{p_j\}$\;
        }
    }
}
\Return{$(A_i)_{i \in N}$}\;
\caption{Cost-worthy translation scheme}
\label{algo: ctscheme}
\end{algorithm}
\begin{example}\label{eg1: cwr}
	Organizers of a 2-hour seminar must select a subset of candidate talks (each of different duration) based on the preferences of audience. They may decide that only one of the talks, the plenary talk, be given 60 min., while the others be allocated at most 50 or 30 or 20 min. Based on this, they could set the worth function such that $\cwparaof{1} = 60, \cwparaof{2} = 50, \cwparaof{3} = 30, \cwparaof{4} = \cwparaof{5} = 20, $ and $\cwparaof{x} = 0$ for $x>5$.
	\begin{table}[H]
		\scalebox{0.85}{
			\begin{tabular}{c|ccccc}
				$\boldsymbol{\alpha}$ & $\mathbf{60}\:\;\;\;$ & $\mathbf{50}\:\;\;\;$ & $\mathbf{30}\:\;\;\;$ & $\mathbf{20}\:\;\;\;$ & $\mathbf{20}\:\;\;\;$\\
				\hline
				$40\%$ & $\textcolor{orange}{A}\:\;\;\succ$ & $\textcolor{orange}{B}\:\;\;\succ$ & $\{\textcolor{orange}{C},\textcolor{orange}{D}\}\:\;\;\succ$ & $\;\:\;\;\;\:\;\;\;\:\;\;\:\:\;\succ$ & $\textcolor{orange}{E}\:\;\;$\\
				{\small{of voters}}& $30\:\;\;\;$ & $50\:\;\;\;$ & $30,20\:\;\;\;$ & $\;\:\;\;\;$ & $20\:\;\;\;$ \\
				\hline
				$60\%$ & $\textcolor{orange}{A}\:\;\;\succ$ & $\textcolor{orange}{C}\:\;\;\succ$ & $\:\;\;\;\:{B}\:\;\;\:\;\;\:\succ$ & $\{\textcolor{orange}{D},\textcolor{orange}{E}\}\:\;\;$ & $\;\:\;\;\;$\\
				{\small{of voters}}& $30\:\;\;\;$ & $30\:\;\;\;$ & $50\:\;\;\:\;\;$ & $\;20,20\:\;\;\;$ & $\;\:\;\;\;$\\
			\end{tabular}
		}
	\end{table}
	All projects ranked adequately high are marked above (in the first ranking, rank of $E$ is $5$ since four projects are preferred over it). $60\%$ of the voters do not rank $B$ high enough to justify its cost. If $f(S)$ is $|S|$ or $\cof{S}$, the dichotomous translation rule $ \langle CT,f \rangle$ outputs $\{A,C,D,E\}$. If $f(S)$ is $\bool(|S| > 0)$, the rule outputs all the sets having at least one of $\curly{A,C,D,E}$.
\end{example}

One may wonder why we use a separate worth function instead of asking the voters to simply approve only the projects they consider worth the cost. However, asking so would assume that a voter deeming a project to be cost-worthy is same as the organizer deeming it to be so. But, a voter decides the worth of a project based on the value derived from it with respect to cost, whereas the organizer may decide the worth also based on how it fares in \emph{comparison} to other projects. A voter may feel that a park worth 900 does bring the proportional value and hence approve or rank it decently high. But if there are several projects she prefers over the park, the organizer may decide spending 900 on it is not worth it. The cost-worthy translation scheme offers the organizer the flexibility to decide whether the degree of preference for an expensive project (i.e., its rank) is good enough to warrant a huge cost.

We describe a few situations where CT scheme provides a desirable choice, by specifying the corresponding worth functions. If a funding agency wants no project costing higher than $x$ to be selected unless it is of the highest quality, setting $\cwparaof{1} = \bud$ and all the others to be $x$ captures the requirement. If the funding agency believes that the projects in the top $p$ positions have stiff competition among themselves and there is a consensus regarding the remaining projects, a function whose top $p$ entries are almost similar while the rest of the entries have larger gaps can be the parameter. If the projects ranked less than a certain threshold should never be approved, we can achieve this by setting to zero, all the entries in the range of worth function to be zero after the threshold. Such a vector can also be used to avoid an inexpensive project being approved by all the voters trivially.

Note that while deciding the worth function parameter can be done intuitively and heuristically in the real-world in several scenarios (organizing a seminar, funding research projects etc.), it may be hard in some other scenarios. This is the primary and only drawback of this rule. Thus, there is an interesting optimization problem underlying their choice. Learning optimal parameters of desirable rules such as median and min-max rules and threshold approval rules \cite{benade2021preference} are considered important open directions in the voting literature. Similarly, the question of determining optimal parameters for dichotomous translation rules with cost-worthy translation scheme is an important problem which we preserve for our future work.

The proofs of the next two results is similar to that of \Cref{the: mtnp} and \Cref{the: mtp}. A similar approach as that of \Cref{the: mtp} can be used to prove \Cref{the: ctp1}.
\begin{theorem}\label{the: ctnp}
	Deciding $\langle CT,f \rangle$ is \NPH if any of the following conditions holds:
    \begin{enumerate}[(a)]
        \item $f(S) = c(S)$ and $\cwparaof{1} = \bud$
        \item $f(S) = \bool(|S| > 0)$ and $\cwparaof{1} \neq \cwparaof{m}$.
    \end{enumerate}
\end{theorem}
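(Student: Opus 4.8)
The plan is to prove both hardness results by closely mirroring the reductions used in \Cref{the: mtnp}, but with the worth function $\cwpara$ engineered so that the $CT$ translation scheme produces exactly the dichotomous profiles that arose in those earlier proofs. The key observation is that the $CT$ scheme approves a project $p_j$ for voter $i$ precisely when $\cof{p_j} \leq \cwparaof{r_i(p_j)}$, so by choosing the worth function appropriately we can force each voter to approve exactly the set of projects we want, after which the second phase is identical to the corresponding $\langle MT,f\rangle$ analysis.

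For part (a), where $f(S) = c(S)$ and $\cwparaof{1} = \bud$, I would reduce from \subsum exactly as in the proof of \Cref{the: mtnp}(a). Given a \subsum instance $\langle H, X \rangle$, construct the same PB instance: projects $p_1,\ldots,p_n$ with $\cof{p_i} = x_i$, a project $p_{n+1}$ with $\cof{p_{n+1}} = H$, budget $\bud = s = H$, and voter $i$ ranking $p_i \succ_i p_{n+1} \succ_i \text{others}$. The point is to set the worth function so that the $CT$ scheme approves only the top-ranked project for each voter, reproducing $A_i = \{p_i\}$. Since $\cwparaof{1} = \bud = H$ guarantees the rank-$1$ project is always affordable, I need the remaining entries $\cwparaof{2},\cwparaof{3},\ldots$ to be small enough that $p_{n+1}$ (cost $H$) and all lower-ranked projects fail the affordability test; setting $\cwparaof{j} = 0$ for all $j \geq 2$ accomplishes this while respecting that $\cwpara$ is non-increasing with $\cwparaof{1}=\bud$. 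The equivalence argument then carries over verbatim.

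For part (b), where $f(S) = \bool(|S|>0)$ and $\cwparaof{1} \neq \cwparaof{m}$, I would reduce from \vercov as in \Cref{the: mtnp}(b). Given $(G,k)$ with $k>2$, create unit-cost projects $p_v$ for each vertex, a dummy project $p_d$ of cost $k-1$, budget $\bud = k$, and for edge $e_i = (v^1_i, v^2_i)$ a voter with preference $p_{v^1_i} \succ_i p_{v^2_i} \succ_i p_d \succ_i \text{others}$, with $s = |E|$. Here I want the $CT$ scheme to yield $A_i = \{p_{v^1_i}, p_{v^2_i}\}$, i.e. it should approve the two rank-$1$ and rank-$2$ vertex projects (cost $1$) but reject $p_d$ (cost $k-1$). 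This is achieved by setting $\cwparaof{1} = \cwparaof{2} = 1$ (so unit-cost projects at ranks $1,2$ are approved but $p_d$ at rank $3$ is not, since $\cof{p_d}=k-1 \geq 2 > 0$) and $\cwparaof{j} = 0$ for $j \geq 3$; this is consistent with the hypothesis $\cwparaof{1} \neq \cwparaof{m}$ as long as $m$ is large enough that $\cwparaof{m}=0$, which holds. The remainder of the vertex-cover equivalence is identical to the earlier proof.

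The main obstacle, and the step requiring the most care, is verifying that the chosen worth functions are \emph{consistent with the constraints imposed in the hypotheses} while simultaneously forcing the intended translation output. In particular I must check that the monotonicity requirement ($\cwpara$ non-increasing with $\cwparaof{m}=0$) and the specific side conditions ($\cwparaof{1}=\bud$ in (a); $\cwparaof{1}\neq\cwparaof{m}$ in (b)) can all be met by a single admissible worth function, and that lower-ranked and unranked projects never sneak into any $A_i$ through the affordability test. Once the translation output is pinned down to match the profiles of \Cref{the: mtnp}, the correctness arguments transfer directly, so the real work is entirely in the design and validation of $\cwpara$.
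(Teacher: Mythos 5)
There is a genuine gap, and it sits precisely where you say ``the real work'' is: in who chooses the worth function. In $\langle CT,f\rangle$, the worth function \cwpara is a fixed parameter of the \emph{rule}; the theorem claims that deciding $\langle CT,f\rangle$ is \NPH for \emph{every} \cwpara satisfying the stated side condition. This universal reading is forced by the complementary tractability results: \Cref{the: ctp1} gives an algorithm polynomial in $\cwparaof{1}$ for $f(S)=\cof{S}$, and \Cref{the: ctp2} gives a polynomial algorithm for $f(S)=\bool(|S|>0)$ whenever $\cwparaof{1}=\cwparaof{m}$, so parts (a) and (b) are meant to cover all the remaining parameter settings. Your reductions instead \emph{design} \cwpara (in (a): $\cwparaof{j}=0$ for $j\geq 2$; in (b): $\cwparaof{1}=\cwparaof{2}=1$ and $\cwparaof{j}=0$ for $j\geq 3$), so they only show that \emph{some} admissible worth function yields a hard problem --- a strictly weaker claim. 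Concretely, take $f(S)=\cof{S}$ and the admissible worth function with $\cwparaof{1}=\cwparaof{2}=\bud$ and $\cwparaof{j}=0$ for $j\geq 3$: in your \subsum construction every voter then approves $p_{n+1}$ as well, so the feasible set $\{p_{n+1}\}$ alone has total utility $nH\geq s$, every reduced instance is a \yes instance regardless of the \subsum answer, and the equivalence collapses --- yet the theorem still asserts hardness for that worth function. The analogous failure occurs in (b) for worth functions that are positive at many ranks.

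The repair is to make the reduction uniform in an arbitrary given \cwpara, which is what the paper does. For (a) it uses a single voter who ranks \emph{all} the \subsum projects tied at rank $1$: then the only worth value that ever matters is $\cwparaof{1}=\bud$, so $A_1$ is the full project set no matter what the rest of \cwpara looks like, and the decision problem is verbatim \subsum. For (b) it defines the costs of the constructed instance \emph{in terms of} the given worth function: the vertex projects, tied at rank $1$, cost $\cwparaof{m}+1$ (approvable exactly because $\cwparaof{1}\neq\cwparaof{m}$ together with monotonicity forces $\cwparaof{1}\geq\cwparaof{m}+1$), the dummy project occupying rank $j$ costs just above the worth at that rank so it is rejected, and the budget is scaled to $k(\cwparaof{m}+1)$. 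Your constructions are fine for the particular worth functions you picked, but as written they do not prove the theorem.
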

\begin{proof}
Let $\approf = (A_i)_{i \in N}$ be the dichotomous preference profile obtained from the incomplete weak rankings profile \rprof using the translation scheme $CT$. 
    \begin{enumerate}[(a)]
        \item We reduce \subsum (\Cref{prob: subsum}) to our problem of deciding if there exists $S \in \feasible$ such that $\suml{i \in \voters}{c(A_i \cap S)} \geq s$ for any score $s$. Given an integer $H$ and a set of integers $X= \{x_1,x_2,\ldots,x_n\}$, the \subsum problem is to decide if there exists a subset $X' \subseteq X$ such that $\sum_{x \in X'}{x} = H$. This is known to be \NPH \cite{garey1979computers}. Without loss of generality, we can assume that $X$ is sorted in non-decreasing order.

        Construct a PB instance as follows: set $\bud = s = H$. For each integer $x_i$, create a project $p_i$ costing $x_i$. Create a single voter who ranks all these $n$ projects in the first place. We claim that \subsum is equivalent to our problem. Let us prove the correctness. The scheme $CT$ approves all the projects and hence is equivalent to \subsum by its definition.

        \item We reduce \vercov (\Cref{prob: vercov}) to our problem of deciding if there exists $S \in \feasible$ such that $\suml{i \in \voters}{\bool(|A_i \cap S|>0)} \geq s$ for any score $s$. Given an undirected graph $G = (V,E)$ and an integer $k$, the \vercov problem is to decide if there exists a $V' \subseteq V$ such that $|V'| \leq k$ and $E = \{(v_1,v_2): (v_1 \in V') \lor (v_2 \in V')\}$. The \vercov problem is known to be \NPH \cite{garey1979computers}. We can assume without loss of generality that $k > 2$ (for constant values of $k$, \vercov becomes tractable).

        Construct a PB instance as follows: set $\bud = k(\cwpara(m)+1)$. For each vertex $v$, add a project $p_v$ with cost $\cwpara(m)+1$. Add $m-2$ dummy projects $d_1,\ldots,d_{m-2}$ such that $\cof{d_i} = \cwpara(i+1) + 1$. For each edge $e_i = (v^1_i,v^2_i)$, add a voter $i$ with preference $\{p_{v^1_i}, p_{v^2_i}\} \succ_i d_1 \succ_i \ldots \succ_i d_{m-2} \succ_i others$. Set $s = |E|$. We claim that \vercov is equivalent to our problem. Let us prove the correctness. Suppose the given instance $(G,k)$ is a \yes instance of \vercov. Let $V' \subseteq V$ cover entire $E$ such that $|V'| \leq k$. Consider the feasible subset of projects $S = \{p_v: v \in V'\}$. Since $V'$ is a vertex cover, for any voter \ii, $|A_i \cap S| > 0$. Therefore, $\sum_{i \in \voters}{f\big({A_i\;\cap\;S}\big)} = |E|$ making it a \yes instance. Now, let us assume that $(G,k)$ is a \no instance. Any feasible subset of projects gives a total score less than $|E|$, making it a \no instance.
    \end{enumerate}
This completes the proof.
\end{proof}
\begin{corollary}\label{the: ctp}
	If $f(S) = |S|$, then $\langle CT,f \rangle$ can be decided in polynomial time.
\end{corollary}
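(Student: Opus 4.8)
The plan is to exploit the two-phase structure of the dichotomous translation rule (\Cref{def: lar}) and argue that each phase runs in polynomial time, exactly mirroring the argument behind \Cref{the: mtp}. Recall that when $f(S) = |S|$, deciding $\langle CT,f \rangle$ amounts to first translating the incomplete weakly ordinal profile \rprof into a dichotomous profile $\approf = (A_i)_{i \in N}$ via the cost-worthy scheme, and then selecting every $S \in \feasible$ maximizing $\suml{i \in \voters}{|A_i \cap S|}$. Since computing an optimal set immediately settles the decision question (for any target $s$ one simply compares the optimal welfare against $s$), it suffices to show both phases are polynomial-time.

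For the first phase, I would verify that the cost-worthy translation scheme (Algorithm \ref{algo: ctscheme}) terminates in polynomial time. For each voter $i$, the ranks $r_i(p_j)$ of all projects under $\succeq_i$ are computable in time polynomial in $m$ by a single pass over the equivalence classes of $\succeq_i$; evaluating the worth function $\cwpara$ at each such rank and comparing $\cof{p_j} \leq \cwpara(r_i(p_j))$ costs $O(1)$ per project. Hence constructing each $A_i$ takes $O(m)$ time, and building the entire profile $\approf$ takes $O(nm)$ time, which is polynomial in the input size.

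For the second phase, I would simply invoke \Cref{cor: talmon} (Talmon and Faliszewski), which guarantees that, given any dichotomous preference profile, a set $S$ maximizing $\suml{i \in \voters}{|A_i \cap S|}$ is computable in polynomial time. Applying this result to the profile $\approf$ produced in the first phase yields the desired optimal set. Composing the two polynomial-time phases gives an overall polynomial-time procedure, establishing the corollary.

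I do not expect any genuine obstacle here, since the result is a direct chaining of a polynomial-time preprocessing step with the known tractability result of \Cref{cor: talmon}; the only point requiring a line of care is confirming that the translation scheme is genuinely polynomial, i.e.\ that rank computation for incomplete weakly ordinal preferences does not hide any cost (it does not, as ranks follow directly from the equivalence-class structure). This is precisely the same reduction-free, ``compute then apply the dichotomous result'' template used for \Cref{the: mtp}, and the contrast with the hardness cases of \Cref{the: ctnp} is that the $|S|$-based objective avoids the \subsum/\vercov structure that drives intractability for the cost and boolean utilities.
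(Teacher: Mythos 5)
Your proposal is correct and follows essentially the same route as the paper: the paper's proof simply notes the corollary is an immediate consequence of \Cref{cor: talmon}, exactly the ``translate, then apply the dichotomous tractability result'' composition you describe. You merely spell out the (easy) details the paper leaves implicit, namely that the cost-worthy translation runs in $O(nm)$ time and that computing the optimal welfare settles the decision question.
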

Similar to \Cref{the: mtp}, the aforementioned corollary is also an immediate consequence of \Cref{cor: talmon}.
\begin{theorem}\label{the: ctp1}
	If $f(S) = \cof{S}$, then $\langle CT,f \rangle$ can be decided in the time polynomial in $\cwparaof{1}$.
\end{theorem}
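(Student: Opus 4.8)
The plan is to reduce the second phase of $\langle CT, f \rangle$ to a knapsack problem and then exploit the fact that the worth function caps the cost of every approved project. First I would run the cost-worthy translation scheme (Algorithm~\ref{algo: ctscheme}) on the input to obtain the dichotomous profile $(A_i)_{i \in \voters}$, which takes polynomial time. The key structural observation is that a project $p_j$ is placed in $A_i$ only when $\cof{p_j} \leq \cwpara(r_i(p_j))$; since $\cwpara$ is non-increasing and $r_i(p_j) \geq 1$, this forces $\cof{p_j} \leq \cwparaof{1}$. Hence every project approved by at least one voter has cost at most $\cwparaof{1}$.

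Next I would rewrite the objective. Writing $n_j = |\{i \in \voters : p_j \in A_i\}|$ for the number of voters approving $p_j$, the welfare of a feasible set $S$ under $f(S)=\cof{S}$ is
$$\sum_{i \in \voters} \cof{A_i \cap S} = \sum_{p_j \in S} \cof{p_j}\, n_j.$$
Thus maximizing welfare subject to $\cof{S} \leq \bud$ is exactly a knapsack instance whose items are the projects, with weight (cost) $\cof{p_j}$ and value $\cof{p_j}\,n_j$. Projects with $n_j = 0$ contribute no value and only consume budget, so they can be discarded; let $P'$ denote the surviving projects, for which $n_j \geq 1$ and, by the observation above, $\cof{p_j} \leq \cwparaof{1}$.

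The final step is a budget-aware case split. Since $|P'| \leq m$ and each project in $P'$ costs at most $\cwparaof{1}$, the total cost satisfies $\cof{P'} \leq m\cwparaof{1}$. If $\bud \geq m\cwparaof{1}$ then $P'$ is itself feasible and, as every project in $P'$ has strictly positive value, selecting all of $P'$ maximizes the welfare; the optimum is $\sum_{p_j \in P'} \cof{p_j}\,n_j$, which I would compare against the target $s$. Otherwise $\bud < m\cwparaof{1}$, and the standard cost-indexed knapsack dynamic program decides the instance in $O(|P'|\cdot \bud) = O(m^2\cwparaof{1})$ time. In both cases the total running time, including the translation, is polynomial in the input size and in $\cwparaof{1}$.

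I expect the main obstacle to be exactly the point that separates this result from a routine pseudo-polynomial knapsack bound: the complexity must depend on $\cwparaof{1}$ rather than on $\bud$. A naive dynamic program indexed by budget is only pseudo-polynomial in $\bud$, and Theorem~\ref{the: ctnp} already shows that $\langle CT, f \rangle$ is \NPH when $\cwparaof{1} = \bud$, so some extra leverage is essential. The cost cap $\cof{p_j} \leq \cwparaof{1}$ on approved projects is precisely what supplies this leverage: it bounds the total cost of all relevant projects by $m\cwparaof{1}$, which lets me either select everything or truncate the effective budget to $m\cwparaof{1}$. Equivalently, one could index the dynamic program by attained value rather than by cost, bounding the value range by $\sum_{p_j} \cof{p_j}\, n_j \leq mn\cwparaof{1}$; either formulation yields a running time polynomial in $\cwparaof{1}$.
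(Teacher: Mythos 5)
Your proof is correct, and it reaches the theorem by a genuinely different route than the paper. Both arguments hinge on the same structural fact---every approved project has cost at most $\cwparaof{1}$ because $\cwpara$ is non-increasing---but they exploit it differently. The paper runs a single \emph{value-indexed} dynamic program: it defines $score(p) = |\{i \in \voters : p \in A_i\}|\cdot \cof{p}$, builds a table $T(i,j)$ recording the cheapest subset of $\{p_1,\ldots,p_i\}$ with total score exactly $j$, and bounds the number of columns by $mn\cwparaof{1}$ (each approved project contributes score at most $n\cwparaof{1}$), giving a running time of $O(m^2 n \cwparaof{1})$; this is exactly the alternative you sketch in your closing sentence. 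You instead keep the standard \emph{cost-indexed} knapsack DP and make it work via a budget-truncation case split: when $\bud \geq m\cwparaof{1}$ the entire set $P'$ of positively-valued projects fits within the budget and is trivially optimal, and otherwise the effective budget is below $m\cwparaof{1}$, so the cost-indexed table has only $O(m\cwparaof{1})$ columns and the DP runs in $O(m^2\cwparaof{1})$. Your formulation buys a slightly better running time (no factor of $n$ in the table size) at the price of a case analysis; the paper's buys a single uniform algorithm whose column bound is immediate. Both are polynomial in $\cwparaof{1}$ and the input size, which is all the theorem requires, and both correctly sidestep the obstruction you identify: a budget-indexed DP alone is only pseudo-polynomial in $\bud$, and by Theorem~\ref{the: ctnp} that dependence cannot be removed when $\cwparaof{1}=\bud$.
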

\begin{proof}
    Let $\approf = (A_i)_{i \in N}$ be the approval-vote profile obtained from the incomplete weakly ordinal preference profile \rprof using Algorithm \ref{algo: ctscheme}. We present a dynamic programming based algorithm. Let us define score of a project $p$ as $score(p) = |\{i \in \voters: p \in A_i\}|*\cof{p}$. Construct a table $T$ with $m$ rows and $mn\cwparaof{1}$ columns. The entry $T(i,j)$ corresponds to the cost of cheapest subset of $\{p_1,\ldots,p_i\}$ for which the total score of projects is exactly $j$. We fill the first row as follows:
    \[T(1,j) = \begin{cases} 
          c(p_1) & j = score(p_1) \\
          0 & \text{otherwise} 
       \end{cases} \quad \forall j \in [mn\cwparaof{1}].
    \]
    Now the remaining rows can be filled recursively as follows:
    \begin{equation}
        \label{eqn: btcardinality-recurrence}
        T(i,j) = \min\Big\{T\big(i-1,j\big)\;,\;T\big(i-1,j-score(p_i)\big)+c(p_i)\Big\}
    \end{equation}
    The maximum score possible is $\max\{j \in [mn\cwparaof{1}]: T(m,j) \leq \bud\}$. 
    \paragraph{Running time.} Note that computing each entry of $T$ takes constant time and there are $m^2n\cwparaof{1}$ entries. The running time is $O(m^2n\cwparaof{1})$. Correctness follows from the definition of $T$.
\end{proof}

The next theorem follows trivially since the translation scheme approves all and only those projects whose cost is at most $\cwparaof{1} = \cwparaof{m} = Q$. Hence, any such project, when it exists will be an optimal solution.
\begin{theorem}\label{the: ctp2}
	If $f(S) = \bool(|S|>0)$, then $\langle CT,f \rangle$ can be decided in polynomial time if $\cwparaof{1} = \cwparaof{m}$.
\end{theorem}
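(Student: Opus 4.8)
The plan is to show that the hypothesis $\cwparaof{1} = \cwparaof{m}$ forces the cost-worthy translation scheme to produce \emph{identical} approval sets for all voters, after which the coverage objective $f(S) = \bool(|S| > 0)$ becomes all-or-nothing and is trivially optimizable. This is the precise structural reason the problem flips from \NPH (Theorem~\ref{the: ctnp}(b)) to tractable.

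First I would argue that the worth function is constant. Since \cwpara is monotonically non-increasing on $[m]$ and its endpoints agree, $\cwparaof{1} = \cwparaof{m} =: Q$, every value is squeezed between the two equal endpoints, so $\cwparaof{r} = Q$ for all $r \in [m]$. Crucially, with $m$ projects in total, the rank $r_i(p_j)$ of any project under any voter lies in $\{1,\ldots,m\}$, so the approval test in Algorithm~\ref{algo: ctscheme}, namely $\cof{p_j} \leq \cwparaof{r_i(p_j)}$, collapses to $\cof{p_j} \leq Q$ irrespective of the voter $i$ and the rank. Hence the translated profile satisfies $A_i = \{p \in \proj : \cof{p} \leq Q\}$ for every voter, i.e., all approval sets coincide; call this common set $A$. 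Next I would exploit this uniformity: for any $S$, since every $A_i$ equals $A$, we have $\sum_{i \in \voters}{\bool(|A_i \cap S| > 0)} = n\cdot\bool(S \cap A \neq \emptyset)$, so the objective takes only the values $0$ and $n$. A feasible set attains the maximum $n$ iff it contains at least one project of cost at most $Q$, the cheapest witness being a singleton $\{p\}$ with $\cof{p} \leq Q$, which is feasible precisely when $\cof{p} \leq \bud$. The deciding algorithm therefore scans the projects once: if some $p$ has $\cof{p} \leq \min(Q,\bud)$ then the optimal coverage is $n$ (witnessed by $\{p\}$), and otherwise no feasible set covers any voter and the optimum is $0$. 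Comparing this optimum against the queried threshold $s$ settles the decision problem in $O(m)$ time, plus the $O(mn)$ cost of running the translation scheme, which is polynomial.

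The point requiring care, rather than a genuine obstacle, is justifying that constancy makes the approval sets voter-independent even under \emph{incomplete} preferences, where the rank of an unranked project is ambiguous. I would stress that because $\cwpara$ is flat at $Q$ on the whole domain $[m]$ and $r_i(p_j)$ necessarily lies in $[m]$, the value $\cwparaof{r_i(p_j)} = Q$ applies uniformly to every project (including those a voter leaves unranked), so the approval test reduces to $\cof{p_j}\le Q$ regardless of which convention one adopts for unranked projects. This closes the only gap, and it is exactly the feature that was absent in Theorem~\ref{the: ctnp}(b): there the reduction from \vercov used a strictly decreasing worth function to carve out voter-specific approval sets (each edge-voter approving only its two endpoint projects), encoding the covering structure; once the worth is flat, that per-voter structure vanishes and the instance degenerates to the trivial question of whether a single affordable approved project exists.
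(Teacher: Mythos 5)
Your proof is correct and follows essentially the same route as the paper, which disposes of this theorem in a brief remark: since $\cwpara$ is monotonically non-increasing with $\cwparaof{1}=\cwparaof{m}=Q$, the scheme approves for every voter exactly the projects of cost at most $Q$, so any single affordable such project is already optimal for the coverage objective. Your write-up merely makes explicit what the paper leaves implicit (the squeeze argument for constancy, the voter-independence of the approval sets under incomplete preferences, and the $O(m)$ scan), so there is nothing to add.
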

\subsection{The PB-CC Rule}\label{sec: o-re-pbcc}
In this section, we generalize the Chamberlin-Courant (CC) rule defined for multi-winner voting under strictly ordinal preferences to our model of PB under weakly ordinal preferences. As explained previously, multi-winner voting is a special case of indivisible PB in which there are no costs involved and exactly $k$ projects are chosen. Clearly, the utilities in this model are determined from the ranks alone. In their groundbreaking work, Chamberlin and Courant introduced one of the most popular multi-winner voting rules, called CC-rule \cite{chamberlin1983representative}.

We generalize the CC-rule to our model where the preferences could be weak, the projects are allowed to have different costs, and a budget constraint is enforced on the outcome. The utility function of our rule however is same as that of standard CC-rule: every voter derives a utility equal to $m-r$ from a subset of projects, where $r$ is the rank of the voter's most favorite project in the subset. This is illustrated in the below example.

\begin{example}
	Suppose the preference of a voter is $\curly{p_1,p_3} \succ \curly{p_2,p_4} \succ p_5$. The utility of this voter from a set $S= \curly{p_2,p_5}$ is $2$ since $p_2$ is the most preferred project of the voter in $S$ and the rank of $p_2$ is $3$.
\end{example}

\begin{definition}\label{def: pbcc}
	Given a PB instance \fullrinstance, the PB-CC rule selects every feasible subset $S \in \feasible$ which maximizes the value $\sum\limits_{i \in N} (m-\min\limits_{p \in S}{r_i(p)})$.
\end{definition}

The following result is a consequence of the fact that deciding CC-rule is \WWH even for multi-winner voting under strictly ordinal preferences \cite{betzler2013computation}, which is a special case of our model where every project costs $1$ and the total budget $\bud = k$.

\begin{proposition}\label{the: pbcc}
	Deciding PB-CC rule is \WWH with respect to the budget \bud.
\end{proposition}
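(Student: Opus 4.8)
The plan is to prove hardness by exhibiting the \WWH Chamberlin--Courant multi-winner problem as a parameter-preserving special case of deciding PB-CC, rather than building a fresh reduction from scratch. First I would recall the source of hardness: for committees under complete strictly ordinal preferences, Betzler et al.~\cite{betzler2013computation} show that deciding whether a size-$k$ committee achieves a given total Chamberlin--Courant representation score is \WWH when parameterized by the committee size $k$. Since a complete strict order is a special case of an incomplete weakly ordinal preference, every such instance is already a legitimate preference profile $\rprof$ in our model, so no syntactic conversion of the ballots is needed.

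Next I would describe the embedding. Given a multi-winner instance with candidate set $\proj$, $m = |\proj|$, committee size $k$, and profile $(\succ_i)_{i \in \voters}$, I map it to a PB instance by setting $\cof{p} = 1$ for every $p \in \proj$ and $\bud = k$, leaving $\proj$, $\voters$, and the profile unchanged. Under unit costs a subset $S$ is feasible exactly when $|S| \le k$, so $\feasible$ is precisely the family of committees of size at most $k$. Moreover the PB-CC objective $\sum_{i \in \voters}\bigl(m - \min_{p \in S} r_i(p)\bigr)$ is, up to the fixed affine normalization standard for Chamberlin--Courant scores, literally the representation score optimized in~\cite{betzler2013computation}; the two objectives therefore induce the same ordering on every common subset, and a score threshold in one translates to a matching threshold in the other.

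The one point needing care is that $\feasible$ contains committees of size at most $k$, whereas the classical decision problem fixes the committee size to exactly $k$. I would resolve this by noting that the PB-CC objective is monotone non-decreasing under set inclusion: enlarging $S$ can only lower each voter's $\min_{p \in S} r_i(p)$, hence only raise her utility. Consequently, whenever $m \ge k$, the maximum score over subsets of size at most $k$ is attained by some subset of size exactly $k$ (pad any smaller optimum with arbitrary unused projects), so a target score $s$ is achievable in the PB instance if and only if it is achievable by a size-$k$ committee in the original instance.

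Finally I would verify that this is a genuine parameterized reduction in the sense of \Cref{def: parareduction}: the construction runs in polynomial time, and the PB parameter $\bud = k$ depends only on (indeed equals) the source parameter $k$, so the parameter map $h_1$ is simply the identity. Therefore the \WWH-ness of the Chamberlin--Courant committee problem transfers verbatim, establishing that deciding PB-CC is \WWH with respect to \bud. I expect essentially no combinatorial difficulty here; the only work is the bookkeeping of the ``size $\le k$'' versus ``size $= k$'' discrepancy via monotonicity and checking that $k \mapsto \bud$ meets the formal requirements of a parameterized reduction.
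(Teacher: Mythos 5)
Your proposal is correct and matches the paper's approach exactly: the paper also observes that \WWH Chamberlin--Courant multi-winner voting \cite{betzler2013computation} is the special case of PB-CC with unit costs and $\bud = k$, so the hardness transfers directly. Your additional care about the ``size $\le k$'' versus ``size $= k$'' discrepancy and the identity parameter map only makes explicit what the paper leaves implicit.
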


\subsection{Axiomatic Justification}\label{sec: o-re-axioms}
Here, we further justify the PB rules proposed in Sections \ref{sec: o-re-dtr} and \ref{sec: o-re-pbcc} using an axiomatic analysis. We prove that by extending the existing rules in the literature, we do not compromise much on their properties and that, in fact, our rules additionally satisfy a few other interesting properties. As a part of this exercise, we generalize the axioms defined for dichotomous preferences as well as strictly ordinal preferences, both of which are special cases of weakly ordinal preferences. We also introduce a novel axiom, pro-affordability, explicitly for PB under weakly ordinal preferences. The key observations from the axiomatic results are explained in detail in \Cref{tab: o-re-results} and \Cref{sec: o-re-conclusion}.

Throughout the section, for any given PB rule \pbrule and an instance \instance, we use \winners{\pbrule}{} to denote the set of all projects included in the output of \pbrule. Formally, $\winners{\pbrule}{} = \{p \in \proj: \exists S \in \ruleof{\pbrule}{}, p \in S\}$. We say a project $p$ wins if and only if $p \in \winners{\pbrule}{}$. We use $\boldsymbol{\scoreof{p}}$ to denote the number of voters such that $p$ is included in $A_i$ by the translation scheme in dichotomous translation rules. We use $\boldsymbol{\trunkkcc{}{S}}$ to denote $\min\limits_{p \in S}{r_i(p)}$. We defer the axiomatic analysis of dichotomous translation rules with cost-worthy translation scheme to Appendix \ref{app: cwaxioms}. It must be noted that almost all the axioms are satisfied for this family of rules, but the proofs are deferred to the appendix due to their simplicity.
\subsubsection{Generalization of Axioms in the Literature on Dichotomous Preferences}\label{axioms: dichotomous}
Talmon and Faliszewski \cite{talmon2019framework} identify compelling axioms for PB under dichotomous preferences. In \Cref{sec: d-re-axioms}, we extended these axioms to the case where PB rules under dichotomous preferences are irresolute. Now, we further extend these axioms to allow weakly ordinal preferences.

The first axiom, \emph{splitting monotonicity}, asserts that, if a winning project is replaced by a set of multiple new projects that together cost the same, then at least one of the new projects continues to win. In other words, take any winning project $x \in \winners{\pbrule}{}$. Split $x$ into a set $X$ of smaller projects such that $\cof{x} = \cof{X}$. In every preference $\succeq_i$ with $x \in E^i_j$, replace $x$ in $E^i_j$ with all the projects of $X$. Let the new resultant instance be $\instance'$. Splitting monotonicity requires that at least one project from $X$ must belong to $\winners{\pbrule}{'}$.
\begin{definition}[Splitting Monotonicity]\label{def: splittingm}
	A PB rule \pbrule satisfies splitting monotonicity if for any instance \instance and $x \in \winners{\pbrule}{}$, we have $X \cap \winners{\pbrule}{'} \neq \emptyset$ whenever $\instance'$ is obtained from $\instance$ by splitting $x$ into a set $X$ of new projects such that $\cof{x} = \cof{X}$ and replacing $x$ by $X$ in every $\succeq_i$ in \rprof.
\end{definition}
\begin{theorem}\label{the: splittingmp}
	All the following rules satisfy splitting monotonicity:
	\begin{enumerate}[(a)]
		\item A dichotomous translation rule $\langle MT,f \rangle$ such that $f(S)$ is any of the following: (i) $|S|$ (ii) $\cof{S}$ or (iii) $\bool(|S|>0)$. 
		\item The PB-CC rule
	\end{enumerate}
\end{theorem}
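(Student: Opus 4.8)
The plan is to prove splitting monotonicity separately for the two families, in each case by contradiction: assume that after splitting a winning project $x$ into a set $X$ with $\cof{X}=\cof{x}$, \emph{no} project of $X$ wins, and derive that the original outcome could not have been winning. The unifying idea is that splitting preserves the relevant aggregate quantities (total cost and, depending on the rule, the induced dichotomous approvals or the CC-utilities), so a reconstructed set using $X$ in place of $x$ achieves the same objective value as the original winning set $S$, which forces at least one piece of $X$ into the winning set of $\instance'$.

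First I would handle the dichotomous translation rules $\langle MT, f\rangle$. The key observation is how the translation scheme and the three choices of $f$ behave under a split. Take $S \in \ruleof{\pbrule}{}$ with $x \in S$, and form $K = (S\setminus\{x\})\cup X$, which is feasible since $\cof{X}=\cof{x}$ and $S$ was feasible. The crucial step is to verify that for every voter the contribution $f(A_i\cap K)$ in $\instance'$ equals $f(A_i\cap S)$ in $\instance$ — because $x$ and the entire block $X$ occupy the same rank position $E^i_j$, the translation scheme (whether $MT$ or, with the deferred appendix argument, $CT$) approves $X$ exactly when it approved $x$, and then $|A_i\cap K| = |A_i\cap S|$, $\cof{A_i\cap K}=\cof{A_i\cap S}$, and $\bool(|A_i\cap K|>0)=\bool(|A_i\cap S|>0)$ all follow. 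Hence $\sum_i f(A_i\cap K)$ equals the optimum. If no project of $X$ wins in $\instance'$, then every optimal set $T$ of $\instance'$ avoids $X$ and has strictly larger objective value than $K$, contradicting that $K$ attains the optimum. This mirrors exactly the splitting-monotonicity argument already carried out for \mmpb in \Cref{the: pbaxioms_satisfy}(a).

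For the PB-CC rule the same contradiction template applies, but the quantity to preserve is the CC-utility $m - \trunkkcc{}{S}$. Here a subtlety enters: splitting $x$ into $|X|$ projects increases the total number of projects $m$ and shifts the ranks of projects strictly below the block $E^i_j$. The careful step will be to check that the CC-utility of each voter from $K=(S\setminus\{x\})\cup X$ in the enlarged instance equals that from $S$ in the original: the most-preferred funded project's rank is unchanged for voters whose top funded project lies at or above $E^i_j$ (since the whole block sits at one rank), and the additive constant $m$ grows by exactly the same amount as the ranks of the funded projects shift, so the differences $m - r$ are invariant. I would make this precise by arguing that $r_i$ of the top funded project in $K$, measured against the new $m$, yields the identical utility value. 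Once utility invariance is established, the contradiction argument closes as before.

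The main obstacle I anticipate is precisely the bookkeeping for PB-CC: unlike the dichotomous case where ranks are irrelevant to the objective, here I must argue that the simultaneous inflation of $m$ and of the ranks cancels cleanly, and I must confirm the claim holds even for voters whose most-preferred funded project is strictly worse-ranked than $x$ (their top funded project and the constant both shift, and I need the shifts to match). Handling the case where $x$ itself was the voter's top funded project — so that after the split the new top funded project is whichever piece of $X$ is funded, all at the same rank — is the delicate point, and I would treat it as the central lemma before assembling the contradiction.
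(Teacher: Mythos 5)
Your overall route is the same as the paper's: form $K=(S\setminus\{x\})\cup X$, argue that the translated approvals and the utilities behave well under the split, show that sets avoiding $X$ have unchanged value, and close by optimality of $S$ in \instance. But two of the equalities you rest this on are false as stated, and one of them is precisely what you call your central lemma. For $f(S)=|S|$, the claim $|A_i\cap K|=|A_i\cap S|$ fails whenever $|X|>1$: a voter with $x\in A_i$ gets \emph{all} of $X$ approved in $\instance'$ (each piece is cheaper than $x$ and the stopping point of the MT scheme is unchanged because class costs are preserved), so $|A_i'\cap K|=|A_i\cap S|+|X|-1$ and the cardinality utility strictly increases. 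Also, "approves $X$ exactly when it approved $x$" is only one direction: a piece of $X$ can become approved by a voter who did not approve $x$, when $\cof{x}$ exceeded the leftover budget in the boundary class but the piece's cost does not. Neither issue is fatal — what the argument actually needs is that $K$'s value does not decrease while every $X$-avoiding set $T$ keeps its old value (for such $T$ one indeed has $A_i'\cap T=A_i\cap T$), so an optimal $T$ avoiding $X$ in $\instance'$ would already have beaten $S$ in \instance — but you should state these as one-sided inequalities, not equalities.

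More seriously, the PB-CC invariance you plan to prove would fail. Take any voter who ranks $x$; since $x\in S$, her best funded project lies in $E^i_j$ or an earlier class, so its rank does not shift at all, while $m$ grows by $|X|-1$; hence her utility $m-r$ increases by exactly $|X|-1$ and is \emph{not} invariant. Voters who do not rank $x$ have entirely unchanged preferences, so the same $+(|X|-1)$ shift applies to them. The cancellation you describe (rank and $m$ shifting by equal amounts) never occurs in the comparison of $S$ in \instance against $K$ in $\instance'$; it only occurs when comparing an $X$-avoiding set against itself across the two instances. The repair is either to note that the increase for $K$ is uniform, $n(|X|-1)$ in total, while any $X$-avoiding set gains at most $n(|X|-1)$, or — cleaner, and what the paper does — to observe that PB-CC is equivalent to minimizing $\sum_{i}\min_{p\in S}r_i(p)$ and that this min-rank quantity is exactly invariant for $K$ (the pieces of $X$ sit at the rank $x$ had) and can only increase for $X$-avoiding sets. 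With that substitution, your contradiction closes and the proof coincides with the paper's.
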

\begin{proof}
For an instance \instance and PB rule \pbrule, let $x$ be a project in \winners{\pbrule}{} and let $S_x$ be a set such that $S_x \in \ruleof{\pbrule}{}$ and $x \in S_x$. Let $X$ be a new set of projects such that $\cof{X} = \cof{x}$. Let $\instance'$ be an instance obtained by replacing $x$ with $X$ in every ranking $\suci \in \rprof$. Let $S' = (S_x \setminus \curly{x}) \cup X$. It is enough to prove that $S' \in \ruleof{\pbrule}{'}$.
    \begin{enumerate}[(a)]
        \item First, we prove that for the dichotomous translation rule \gtr, for any $x_1 \in X$, if $x \in A_i$ in \instance, then $x_1 \in A_i$ in $\instance'$. This follows since the point where adding of projects to $A_i$ stops remains the same in both the instances. Since $\cof{x_1} < \cof{x}$ and $x \in A_i$ in \instance, $x_1 \in A_i$ in $\instance'$.
        
        To prove (i) and (ii), suppose $f(S)$ is $|S|$ or $\cof{S}$. For any $x_1 \in X$, we know from the above that $\scoreof{x} = \scoreof{x_1}$. This implies, if $f(S) = |S|$, total utility of $S'$ in $\instance'$ is exactly $|X|-1$ times the total utility of $S_x$ in $\instance$. If $f(S) = \cof{S}$, for any $x_1 \in X$, we know from the above that $\scoreof{x_1} = (\cof{x_1}/\cof{x}) \scoreof{x}$. Therefore, total utility of $S'$ in $\instance'$ is the same as the total utility of $S_x$ in $\instance$.
        
        To prove (iii), suppose $f(S) = \bool(|S|>0)$. From the above, the number of voters who have some project of $S_x$ in $A_i$ is the same as those who have some project of $S'$ before it. In all the above scenarios, since $S_x \in \ruleof{\gtr}{}$, $S' \cap \ruleof{\gtr}{'} \neq \emptyset$.

        \item Finally, let \pbrule be the PB-CC rule. For any voter \ii, $\trunkkcc{\ebpara}{S_x}$ in $\instance$ is the same as $\trunkkcc{\ebpara}{S'}$ in $\instance'$. Therefore, total utility of $S_x$ in \instance is the same as total utility of $S'$ in $\instance'$. Since $S_x \in \ruleof{\pbrule}{}$, $S' \cap \ruleof{\pbrule}{'} \neq \emptyset$.
    \end{enumerate}
    This completes the proof.
\end{proof}

The next axiom, discount monotonicity, asserts that a winning project continues to win if it becomes less expensive. The condition of insisting $\cof{x} \geq 2$ is to ensure that it is possible for the cost of $x$ to decrease.
\begin{definition}[Discount Monotonicity]\label{def: discountm}
	A PB rule \pbrule satisfies discount monotonicity if for any instance \instance and $x \in \winners{\pbrule}{}$ such that $\cof{x} \geq 2$, we have $x \in \winners{\pbrule}{'}$ whenever $\instance'$ is obtained from $\instance$ by reducing the cost of $x$ by 1.
\end{definition}
\begin{theorem}\label{the: discountmp}
	The PB-CC rule satisfies discount monotonicity.
\end{theorem}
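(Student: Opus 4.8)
The plan is to exploit the fact that the PB-CC objective $\sum_{i \in \voters}(m - \min_{p \in S} r_i(p))$ depends only on the voters' rankings and not at all on the project costs. Consequently, reducing $\cof{x}$ by $1$ to obtain $\instance'$ leaves the objective function completely unchanged; the only thing that is affected is the collection of feasible sets. First I would record this invariance explicitly, so that the PB-CC value assigned to any fixed set $S$ is literally the same number in $\instance$ and in $\instance'$.

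Next I would analyze how feasibility changes. Since only the cost of $x$ decreases, every set that was feasible in $\instance$ remains feasible in $\instance'$, so $\feasible \subseteq \mathcal{F}'$, where $\mathcal{F}'$ denotes the feasible region of $\instance'$. Moreover, for any set $T$ with $x \notin T$ the total cost is identical in both instances, so $T \in \feasible \iff T \in \mathcal{F}'$. The key structural fact I would extract is therefore: every set in $\mathcal{F}' \setminus \feasible$ must contain $x$.

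With these two observations the argument splits into two cases according to whether the optimum value strictly increases. Let $OPT$ and $OPT'$ denote the optimal PB-CC values over $\feasible$ and $\mathcal{F}'$ respectively; since $\feasible \subseteq \mathcal{F}'$ and the objective is unchanged, $OPT' \geq OPT$. If $OPT' = OPT$, then any optimal set $S \in \ruleof{\pbrule}{}$ with $x \in S$ (one exists because $x \in \winners{\pbrule}{}$) is still feasible in $\instance'$ and still attains value $OPT = OPT'$, so it witnesses $x \in \winners{\pbrule}{'}$. If instead $OPT' > OPT$, then every optimal set $T$ of $\instance'$ has value strictly exceeding $OPT$; since every set in $\feasible$ has value at most $OPT$, such a $T$ cannot lie in $\feasible$, hence $T \in \mathcal{F}' \setminus \feasible$, and by the structural fact $x \in T$. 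In both cases $x$ remains a winner.

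I expect no serious obstacle here: the crux is simply recognizing the cost-independence of the CC objective, combined with the observation that discounting $x$ can only enlarge the feasible region and that every newly feasible set is forced to contain $x$. The hypothesis $\cof{x} \geq 2$ plays only the bookkeeping role of guaranteeing that the reduced cost remains a positive integer, so it requires no separate treatment in the argument.
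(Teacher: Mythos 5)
Your proof is correct, and it actually takes a more careful route than the paper's own argument. Both proofs rest on the same core observation---the PB-CC objective depends only on ranks, so discounting $x$ changes no set's utility and can only enlarge the feasible region---but they part ways in handling that enlargement. The paper fixes an optimal set $S_x$ containing $x$ and asserts it is enough to show $S_x \in \ruleof{\pbrule}{'}$, concluding this from the fact that the utilities of $S_x$ and of all sets avoiding $x$ are unchanged. That reasoning overlooks precisely the scenario of your second case: a set containing $x$ that was infeasible in $\instance$ but becomes feasible in $\instance'$ may have strictly larger utility than $S_x$, in which case $S_x$ drops out of $\ruleof{\pbrule}{'}$ and the paper's intermediate claim fails. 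For instance, with budget $4$, $\cof{x}=3$, $\cof{y}=\cof{z}=2$, and two voters with preferences $x \succ y \succ z$ and $y \succ x \succ z$, the set $\{x\}$ is optimal in $\instance$ (total utility $3$), but after discounting $x$ to cost $2$ the newly feasible set $\{x,y\}$ has total utility $4$ and becomes the unique optimum. The theorem nevertheless holds, for exactly the reason you isolate: every set in $\mathcal{F}' \setminus \feasible$ must contain $x$, so any set that strictly beats $S_x$ in $\instance'$ itself contains $x$. Your two-case decomposition (optimum unchanged versus optimum strictly increased) is what closes this gap, so your write-up is not merely a valid alternative but a repaired version of the paper's argument.
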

\begin{proof}
	For an instance \instance, let $x$ be a project in \winners{\pbrule}{} such that $\cof{x} \geq 2$, where \pbrule is the PB-CC rule. Let $S_x$ be a set such that $S_x \in \ruleof{\pbrule}{}$ and $x \in S_x$. Let $\instance'$ be an instance obtained from \instance by reducing \cof{x} by $1$. It is enough to prove that $S_x \in \ruleof{\pbrule}{'}$. For any set $S$ and voter \ii, we know that $\trunkkcc{\ebpara}{S}$ is the rank of the best ranked project of $S$ in $\suci$. Consider an arbitrary voter \ii. If some project of $S_x$ is ranked before $x$ in $\suci$, $\trunkkcc{\ebpara}{S_x}$ is the same in \instance and $\instance'$. If $x$ is the best ranked project of $S_x$ in $\suci$ in \instance, then $\trunkkcc{\ebpara}{S_x}$ remains the same in $\instance'$ since $\cof{x}$ in $\instance'$ is at least $1$. Therefore, total utility of $S_x$ in $\instance'$ is same as that in \instance. For any arbitrary voter \ii and a set $S$ without $x$ in it, \trunkkcc{\ebpara}{S} remains the same in $\instance'$. Hence, the total utility of any set without $x$ remains the same in $\instance'$. Therefore, any set in $\ruleof{\pbrule}{}$ with $x$ in it continues to be in $\ruleof{\pbrule}{'}$ and hence $x \in \winners{\pbrule}{'}$.
\end{proof}
\begin{theorem}\label{the: discountmn}
	No dichotomous translation rule $\langle MT,f \rangle$ such that $f(S)$ is (i) $|S|$, (ii) $\cof{S}$, or (iii) $\bool(|S|>0)$ satisfies discount monotonicity.
\end{theorem}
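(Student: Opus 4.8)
The plan is to refute discount monotonicity for all three choices of $f$ at once by exhibiting a single counterexample instance. The crucial structural feature to exploit is that, unlike the PB-CC rule of \Cref{the: discountmp} (whose rank-based utilities are cost-insensitive), the multi-knapsack scheme $MT$ reads the costs when forming the translated profile \approf. Concretely, if for some voter $i$ the project $x$ lies in a fully-admitted class, then shrinking \cof{x} enlarges the leftover budget $\bud-\cof{A_i}$ available at the boundary class, and this can push a hitherto-unmarked co-ranked project across the \emph{strict} affordability threshold $\cof{p}<\bud-\cof{A_i}$ used in the $O_i$-marking step. Thus a discount on $x$ can \emph{increase} the approval score of an unrelated project, and that is what will flip the societal outcome away from $x$.

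First I would fix $\proj=\{a,x,y,d\}$ with $\cof{a}=2$, $\cof{x}=3$, $\cof{y}=2$, $\cof{d}=4$ and $\bud=5$, together with four voters: voter $1$ with $x\succ\{y,d\}$, voter $2$ ranking only $y$, and voters $3,4$ each ranking only $a$. Running $MT$ on \instance gives $A_1=\{x\}$ (after admitting $x$ the leftover is $2$, and neither $y$ nor $d$ satisfies the strict inequality $\cof{p}<2$), $A_2=\{y\}$, and $A_3=A_4=\{a\}$, so $\scoreof{a}=2$, $\scoreof{x}=\scoreof{y}=1$, $\scoreof{d}=0$. Reducing \cof{x} to $2$ yields the discounted instance $\instance'$; now voter $1$'s leftover becomes $3$, so $y$ (cost $2$) is marked while $d$ (cost $4$) is not, giving $A_1=\{x,y\}$ and hence $\scoreof{y}=2$, every other approval set being unchanged.

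Next I would verify, for each $f$, that $x\in\winners{\gtr}{}$ but $x\notin\winners{\gtr}{'}$. Before the discount the set $\{a,x\}$ (cost $5$) is selected under all three rules, so $x$ wins. After the discount the boosted project $y$ makes $\{a,y\}$ (cost $4$) strictly dominate every feasible set containing $x$: for $f=|S|$ it scores $\scoreof{a}+\scoreof{y}=4$ against at most $3$ for any $x$-set; for $f=\cof{S}$ it scores $\scoreof{a}\cof{a}+\scoreof{y}\cof{y}=8$ against at most $6$; and for $f=\bool(|S|>0)$ it covers all four voters (voter $1$ now via $y$), whereas no feasible $x$-set does. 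The point common to all three is that an $x$-set cannot simultaneously retain the anchor $a$ and capture the now-valuable $y$, because $\cof{a}+\cof{x}+\cof{y}=6>\bud$ even at the reduced cost of $x$; hence $x$ is excluded from every selected set of $\instance'$.

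The main obstacle, and the part needing the most care, is engineering the score boost so that it is routed through the $O_i$-marking step rather than the wholesale while-loop: the beneficiary $y$ must share its equivalence class with the dummy $d$ so that the class fails to fit in full, for otherwise the while-loop would admit $y$ in both instances and no flip would occur. Equally delicate is that the boost is only a single unit, so the pre-discount instance must be balanced at a tie (here $\scoreof{x}=\scoreof{y}$ with the anchor present) so that one extra unit for $y$ is decisive, and the costs must be tuned (via the anchor $a$ and the exact equality $\cof{y}=\bud-\cof{x}$ before the discount) so that the same single instance is decisive for all three utilities simultaneously. The strictness of the inequality $\cof{p}<\bud-\cof{A_i}$ is load-bearing throughout.
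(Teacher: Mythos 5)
Your proof is correct: I verified the $MT$ translations ($A_1=\{x\}$, then $A_1=\{x,y\}$ after the discount, with $A_2=\{y\}$, $A_3=A_4=\{a\}$ throughout) and all six outcome computations (three utilities, before and after the discount), and your single instance does eject $x$ from the winner set in every case. Your route is the same in spirit as the paper's --- an explicit counterexample exploiting the fact that $MT$ reads costs when building the approval profile, so discounting $x$ can raise a \emph{rival's} score --- but the execution genuinely differs, in your favor. The paper uses two separate instances: for $f(S)=|S|$ and $f(S)=\cof{S}$ it makes the discount push voter $1$'s entire second equivalence class through the while-loop, and for $f(S)=\bool(|S|>0)$ it uses a two-project instance; you instead route the boost through the strict-inequality $O_i$-marking step and cover all three utilities at once. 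Moreover, your construction is the more reliable one. In the paper's first instance, under cost utility the claimed tie fails before the discount: $\{p_2,p_4\}$ yields $(\bud-1)+2=\bud+1>4$, the utility of $\{p_1,p_4\}$, so $p_1$ never wins and that example does not establish case (ii). In the paper's second instance, the discount changes neither the translated approval sets nor the set of feasible outcomes, so both singletons remain selected and the outcome never flips, which undermines case (iii). Your example --- with the pre-discount tie anchored at $\scoreof{x}=\scoreof{y}$ alongside the high-score anchor $a$, and with $\cof{a}+\cof{x}+\cof{y}=6>\bud$ keeping $\{a,x,y\}$ infeasible even after the discount --- closes all three cases simultaneously, so it could in fact replace the paper's proof.
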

\begin{proof}
	Let \pbrule be the rule \gtr. For (i) and (ii), consider an instance \instance with budget $\bud > 3$, projects $\proj = \curly{p_1,p_2,p_3,p_4,p_5}$ costing $\curly{2,\bud-1,2,1,\bud}$ respectively, and four voters with preferences $p_1 \succ_1 p_2 \succ_1 others$, $p_2 \succ_2 p_3 \succ_2 others$, $p_4 \succ_3 p_5 \succ_3 others$, and $p_4 \succ_4 p_5 \succ_4 others$ respectively. Now, $\ruleof{\pbrule}{} = \curly{\curly{p_1,p_4},\curly{p_2,p_4}}$ and $p_1 \in \winners{\pbrule}{}$. If $\cof{p_1}$ is reduced by $1$, $\ruleof{\pbrule}{'} = \curly{\curly{p_2,p_4}}$ and $p_1 \notin \winners{\pbrule}{'}$.
	
	For (iii), consider an instance \instance with budget $\bud > 3$, projects $\proj = \curly{p_1,p_2}$ each costing \bud, and two voters with preferences $p_1 \succ_1 p_2$ and $p_2 \succ_2 p_1$ respectively. Then, $\ruleof{\pbrule}{} = \curly{\curly{p_1},\curly{p_2}}$ and $p_1 \in \winners{\pbrule}{}$. However, if $\cof{p_1}$ is reduced by $1$, $\ruleof{\pbrule}{'} = \curly{\curly{p_2}}$ and $p_1 \notin \winners{\pbrule}{'}$.
\end{proof}

The next axiom, limit monotonicity, requires that a winning project will continue to win if the budget is increased.
\begin{definition}[Limit Monotonicity]\phantomsection\label{def: limitm}
	A PB rule \pbrule satisfies limit monotonicity if for any instance \instance with no project costing exactly $\bud+1$, we have $x \in \winners{\pbrule}{'}$ whenever $x \in \winners{\pbrule}{}$ and $\instance'$ is obtained from $\instance$ by increasing the budget by 1.
\end{definition}
\begin{theorem}\label{the: limitmn}
	The following rules do not satisfy limit monotonicity:
	\begin{enumerate}[(a)]
		\item A dichotomous translation rule $\langle MT,f \rangle$ such that $f(S)$ is any of the following: (i) $|S|$ (ii) $\cof{S}$, or (iii) $\bool(|S|>0)$
		\item The PB-CC rule
	\end{enumerate}
\end{theorem}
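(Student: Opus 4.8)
The statement is a non-satisfaction claim, so the plan is to exhibit, for each rule, a single explicit instance \instance together with a project $x$ that wins (i.e.\ $x\in\winners{\pbrule}{}$) at budget \bud, no project costing exactly $\bud+1$, yet with $x\notin\winners{\pbrule}{'}$ once the budget is raised to $\bud+1$ as in \Cref{def: limitm}. The unifying design principle I would use is a knapsack obstruction: costs are chosen so that at budget \bud every two-project feasible set is forced to contain a cheap, heavily-supported project $x$, while at budget $\bud+1$ a pair of medium-cost projects becomes affordable and yields strictly higher welfare while excluding $x$. Concretely, in every case I take $\proj=\{p_1,p_2,p_3,p_4\}$ with costs $1,2,2,2$ and $\bud=3$, so that $\bud+1=4$ matches no project cost, the only feasible pairs at budget $3$ all contain $p_1$, and $\{p_2,p_3\}$ first becomes affordable at budget $4$.

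For the family $\langle MT,f \rangle$ I would make every voter rank a \emph{single} project. Then the multi-knapsack scheme returns $A_i$ equal to that singleton at both budgets \bud and $\bud+1$ (each ranked project costs at most \bud), so the translation is budget-independent and $\langle MT,f \rangle$ collapses to a weighted knapsack over \proj. Taking $3$ voters who rank only $p_1$, five who rank only $p_2$, five who rank only $p_3$, and one who ranks only $p_4$, the objective under each of $|S|$, $\cof{S}$, and $\bool(|S|>0)$ is maximized at budget $3$ by $\{p_1,p_2\}$ or $\{p_1,p_3\}$ (both containing $p_1$), whereas at budget $4$ the set $\{p_2,p_3\}$ is the unique optimum for all three choices of $f$ and excludes $p_1$. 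A short comparison of the induced weights ($3,5,5,1$ for the cardinality/boolean objectives and $3,10,10,2$ for the cost objective) settles (a)(i)--(iii) simultaneously.

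For the PB-CC rule the same obstruction works, but utility is measured by best rank, so I would track $\trunkkcc{}{S}=\min_{p\in S} r_i(p)$ and compare $\sum_{i\in\voters}\big(m-\min_{p\in S} r_i(p)\big)$ across feasible sets. Keeping the costs and budget above, I take five voters with strict orders $p_2\succ p_3\succ p_1\succ p_4$, two copies of $p_3\succ p_2\succ p_1\succ p_4$, $p_1\succ p_2\succ p_3\succ p_4$, and $p_2\succ p_4\succ p_3\succ p_1$. A direct evaluation gives welfare $13$ for $\{p_1,p_2\}$ at budget $3$, strictly beating every other feasible set, so $p_1\in\winners{\pbrule}{}$; at budget $4$ the newly affordable $\{p_2,p_3\}$ attains welfare $14$, strictly dominating every feasible set and in particular every set containing $p_1$ (whose best value is $13$), so $p_1\notin\winners{\pbrule}{'}$.

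The step needing the most care is guaranteeing that the budget-\bud optimum is genuinely \emph{improvable}. Because each PB-CC voter's utility is capped at $m-1$, if the budget-\bud optimum already assigns every voter a rank-$1$ project then raising the budget cannot alter the winner set; the construction must therefore deliberately leave at least one voter under-served at budget \bud (here the two $p_3$-first voters and the $p_2$-first voter cannot all be simultaneously topped out by any $p_1$-containing feasible pair), so that only the unaffordable $\{p_2,p_3\}$ repairs this while dropping $x=p_1$. For the $\langle MT,f \rangle$ rules the analogous delicate point is that the translation scheme is itself budget-dependent, so a careless instance could have $A_i$ change when \bud grows; I avoid this by using single-project rankings, which make the translation invariant on the relevant budget range. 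Verifying these two properties---improvability and translation-invariance---is the only conceptual content; the remaining welfare comparisons are routine arithmetic.
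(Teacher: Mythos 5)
Your proposal is correct --- I verified both constructions. For part (a), with approval weights $(3,5,5,1)$ under $|S|$ and $\bool(|S|>0)$ and $(3,10,10,2)$ under $\cof{S}$, the budget-$3$ optima are exactly $\{p_1,p_2\}$ and $\{p_1,p_3\}$ while $\{p_2,p_3\}$ is the unique budget-$4$ optimum; for part (b), the welfare values are indeed $13$ for $\{p_1,p_2\}$ (strictly above $12$ for $\{p_1,p_3\}$ and $\{p_2\}$) at budget $3$ and $14$ for $\{p_2,p_3\}$ at budget $4$. However, your route differs from the paper's in an interesting way for part (a): the paper's counterexamples are built so that the multi-knapsack translation itself \emph{changes} when the budget grows (e.g.\ a voter's approval set expands from $\{p_1\}$ to $\{p_1,p_2\}$ because the next equivalence class now fits), and this shift in the approval profile is what dethrones the winning project; you instead deliberately neutralize the translation by giving every voter a single-project ranking, so the rule collapses to a weighted knapsack and the counterexample is driven purely by feasibility (every affordable pair at budget $3$ must contain the cheap $p_1$, while $\{p_2,p_3\}$ becomes affordable at budget $4$). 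Your approach buys uniformity --- one instance settles (i), (ii), (iii) simultaneously, where the paper needs two separate instances --- and makes the verification routine arithmetic; the paper's approach, by contrast, exposes the budget-sensitivity of the translation scheme itself, which is arguably the more informative failure mode for this family of rules. For part (b) the two arguments are the same in spirit (a newly affordable set strictly dominates every set containing the old winner), but the paper's instance is more minimal: unit costs, budget $1$, and two opposed voters, exploiting the fact that all singletons tie at the small budget, whereas yours avoids ties and produces unique optima at both budgets.
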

\begin{proof}
    \begin{enumerate}[(a)]
        \item For (i), consider an instance \instance with budget $\bud > 3$, projects $\curly{p_1,p_2,p_3,p_4,p_5}$ respectively costing $\curly{2,\bud-1,3,1,\bud}$, and four voters with rankings: $p_1 \succ p_2 \succ \ldots \succ p_5$, $p_2 \succ p_3 \succ p_1 \succ p_4 \succ p_5$, $p_4 \succ p_5 \succ p_3 \succ p_1 \succ p_2$, and $p_4 \succ p_5 \succ p_3 \succ p_2 \succ p_1$. Then, $\ruleof{\gtr}{} = \curly{\curly{p_1,p_4},\curly{p_2,p_4}}$. Now if the budget is increased by $1$ to get $\instance'$, we have, $\ruleof{\gtr}{'} = \curly{\curly{p_2,p_4}}$. Hence, $p_1 \notin \winners{\gtr}{'}$. The same example also works for (iii). 

        For (ii), consider an instance \instance with budget $\bud > 4$, projects $\curly{p_1,p_2,p_3,p_4}$ respectively costing $\curly{2,\bud-1,2,\bud}$, and three voters with rankings: $p_1 \succ p_2 \succ p_3 \succ p_4$, $p_3 \succ p_4 \succ p_1 \succ p_2$, and $p_3 \succ p_4 \succ p_2 \succ p_1$. Then, $\ruleof{\gtr}{} = \curly{\curly{p_1,p_3}}$. Now if the budget is increased by $1$ to get $\instance'$, since $\bud+3 > 7$, we have, $\ruleof{\gtr}{'} = \curly{\curly{p_2,p_3}}$. Hence, $p_1 \notin \winners{\gtr}{'}$.

        \item Let \br be the PB-CC rule. Construct an instance with budget $\bud = 1$, projects $\proj = \curly{p_1,p_2,p_3}$ costing $1$ each, and two voters with preferences $p_1 \succ p_2 \succ p_3$ and $p_3 \succ p_2 \succ p_1$ respectively. Note that no project in $\proj$ costs $\bud+1$. Clearly no non-singleton set is feasible and utility of every singleton set is $2$. Therefore, $\ruleof{\br}{} = \curly{\curly{p_1},\curly{p_2},\curly{p_3}}$. Now, construct an instance $\instance'$ by increasing \bud to $2$. Now though $\curly{p_2}$ is feasible, it has a utility of $2$ whereas the feasible set $\curly{p_1,p_3}$ has an optimal utility of $4$. Therefore, $\ruleof{\br}{'} = \curly{\curly{p_1,p_3}}$ and $p_2 \notin \winners{\br}{'}$.
    \end{enumerate}
This completes the proof.
\end{proof}

Next, we study inclusion maximality \cite{talmon2019framework}, which implies that whenever some budget is remaining, voters would wish to use it.
\begin{definition}[Inclusion Maximality]\label{def: inclusionmax}
	A PB rule \pbrule satisfies inclusion maximality if for any instance \instance and $S, S' \in \feasible$ such that $S \subset S'$ and $S \in \ruleof{\pbrule}{}$, we have $S' \in \ruleof{\pbrule}{}$.
\end{definition}
\begin{proposition}\label{the: inclusionmp}
	All the following rules satisfy inclusion maximality:
	\begin{enumerate}[(a)]
		\item Any dichotomous translation rule $\langle MT,f \rangle$ such that $f(S)$ is any of the following: (i) $|S|$, (ii) $\cof{S}$, or (iii) $\bool(|S|>0)$
		\item The PB-CC rule
	\end{enumerate}
\end{proposition}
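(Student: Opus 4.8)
The plan is to show that adding an affordable project to a winning set never decreases the total utilitarian welfare, so that any feasible superset of a winning set is also a maximizer. Since all the rules in question maximize $\sum_{i \in \voters}{\utof{}{}}$ over $\feasible$, it suffices to argue that the relevant utility functions are \emph{monotone} with respect to set inclusion: whenever $S \subseteq S'$, we have $\utof{}{S} \leq \utof{}{S'}$ for every voter $i$. Given monotonicity, if $S \in \ruleof{\pbrule}{}$ and $S \subset S' \in \feasible$, then the welfare of $S'$ is at least that of $S$; but $S$ is a maximizer, so welfare of $S'$ equals welfare of $S$, whence $S'$ is also a maximizer and lies in $\ruleof{\pbrule}{}$. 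This is the single unifying observation driving all four cases.

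First I would handle the dichotomous translation rule $\langle MT,f \rangle$. The key point is that the translation scheme $MT$ produces a fixed dichotomous profile $(A_i)_{i \in N}$ that depends only on \instance, \emph{not} on the candidate outcome $S$. Thus for a fixed profile the objective is $\sum_{i \in \voters}{f(A_i \cap S)}$, and I need only check that each of the three choices of $f$ is monotone in $S$. For $f(S) = |S|$ and $f(S) = \cof{S}$, enlarging $S$ to $S'$ can only enlarge $A_i \cap S$ to $A_i \cap S'$, so $f(A_i \cap S) \leq f(A_i \cap S')$ by monotonicity of cardinality and of cost (costs are non-negative naturals). For $f(S) = \bool(|S| > 0)$, if $A_i \cap S \neq \emptyset$ then $A_i \cap S' \neq \emptyset$ too, so the Boolean value cannot drop from $1$ to $0$. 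Summing over voters gives monotone welfare in each sub-case.

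Next I would treat the PB-CC rule, whose utility is $\utof{}{S} = m - \trunkkcc{}{S} = m - \min_{p \in S}{r_i(p)}$. Here enlarging $S$ to $S'$ can only introduce more projects into the minimization, so $\min_{p \in S'}{r_i(p)} \leq \min_{p \in S}{r_i(p)}$, whence $\utof{}{S} \leq \utof{}{S'}$. (One should note the mild subtlety that a voter may have no ranked project in $S$; but since the same convention for handling empty intersections applies uniformly to $S$ and $S'$, and adding projects can only help, monotonicity is preserved.) This completes the per-voter monotonicity and hence the welfare monotonicity for PB-CC.

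I do not expect any real obstacle here, since inclusion maximality for welfare-maximizing rules is essentially immediate from monotonicity of the underlying utilities. The one point requiring minor care is the $MT$ scheme: I must emphasize that $A_i$ is computed once from the instance and is unaffected by the choice of $S'$, so that comparing welfare of $S$ and $S'$ is legitimately comparing the \emph{same} objective at two inputs rather than re-running the translation. A secondary point worth stating explicitly is that feasibility of $S'$ is given by hypothesis ($S' \in \feasible$), so monotonicity of welfare plus maximality of $S$ immediately forces $S' \in \ruleof{\pbrule}{}$; no further budget or feasibility argument is needed.
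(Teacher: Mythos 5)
Your proof is correct and follows essentially the same approach as the paper: the paper's proof also observes that all four utility functions are subset-monotone (for PB-CC, that $\min_{p \in S} r_i(p)$ can only decrease as $S$ grows), from which inclusion maximality follows immediately since the welfare of a feasible superset of a maximizer must equal that of the maximizer. Your added remarks — that the $MT$ translation is computed once from the instance and does not depend on the candidate outcome, and the uniform handling of voters with empty intersection under PB-CC — are sound clarifications of the same argument.
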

\begin{proof}
    \begin{enumerate}[(a)]
        \item It may be observed that, for all the three utility notions studied, utility from a set $S$ is at least as much as the utility from any subset of $S$. Such functions are said to be subset monotone functions. For any subset monotone functions, inclusion maximality is satisfied by default since adding additional projects may only increase the utility.
        \item The function \trunkkcc{}{S} is subset-monotone in $S$. This is because, $\trunkkcc{}{S} \leq \trunkkcc{}{S'}$ for any $S' \subseteq S$. Since the goal is to minimize $\sum_i\trunkkcc{}{S}$, inclusion-maximality follows.
    \end{enumerate}
    This completes the proof.
\end{proof}
\subsubsection{Generalization of Axioms in the Literature on Strictly Ordinal Preferences}\label{axioms: strict}
We extend the axioms defined for strictly ordinal preferences \cite{elkind2017properties} to allow for weakly ordinal preferences. The first axiom, candidate monotonicity, asserts that if a winning project is exchanged with a project ranked in the equivalence class just before it, it continues to win.
\begin{definition}[Candidate Monotonicity]\label{def: candidatem}
	A PB rule \pbrule satisfies candidate monotonicity if for any instance \instance and a project $x \in \winners{\pbrule}{}$, we have $x \in \winners{\pbrule}{'}$ whenever $\instance'$ is obtained from \instance by exchanging $x \in E^i_j$ with a project $x' \in E^i_{j-1}$ in some preference $\succeq_i$.
\end{definition}
\begin{theorem}\label{the: candidatemp}
	All the following rules satisfy candidate monotonicity:
	\begin{enumerate}[(a)]
		\item A dichotomous translation rule $\langle MT,f \rangle$ such that $f(S)$ is any of the following: (i) $|S|$, (ii) $\cof{S}$, or (iii) $\bool(|S|>0)$
		\item The PB-CC rule
	\end{enumerate}
\end{theorem}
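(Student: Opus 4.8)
The plan is to verify candidate monotonicity separately for the two families of rules, since they define utility differently. In both cases the strategy is the same: take a project $x \in \winners{\pbrule}{}$, fix a witnessing set $S_x \in \ruleof{\pbrule}{}$ with $x \in S_x$, and show that after the exchange of $x \in E^i_j$ with $x' \in E^i_{j-1}$ in the preference $\succeq_i$ (producing $\instance'$), the set $S_x$ (or some set containing $x$) is still optimal in $\instance'$, hence $x \in \winners{\pbrule}{'}$. The exchange only alters a single voter's ranking, and only the relative positions of $x$ and $x'$, so I would track how this affects that voter's utility for each candidate outcome.

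First I would handle the PB-CC rule. Here voter $i$'s contribution to the objective from a set $S$ is $m - \trunkkcc{\ebpara}{S}$ where $\trunkkcc{\ebpara}{S} = \min_{p \in S} r_i(p)$. Promoting $x$ from equivalence class $E^i_j$ to $E^i_{j-1}$ (and demoting $x'$) can only \emph{decrease} $r_i(x)$ and only \emph{increase} $r_i(x')$; all other ranks are unchanged. For the witnessing set $S_x$ containing $x$, the min-rank either stays the same or drops, so voter $i$'s utility for $S_x$ weakly increases, while for any set not containing $x$ the utility weakly decreases or is unchanged (because only $x'$ could have worsened). Thus the gap between $S_x$ and any competitor set cannot move against $S_x$, so $S_x$ remains optimal and $x$ keeps winning. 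I would write this as a direct comparison of $\sum_i(m - \trunkkcc{\ebpara}{\cdot})$ before and after the swap.

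For the dichotomous translation rules $\langle MT, f\rangle$, the subtlety is that the exchange happens at the ranking level, but the rule acts on the \emph{translated} approval profile, so I must track how the swap propagates through the $MT$ scheme. The key observation is that promoting $x$ to a higher (better) equivalence class can only help $x$ get included in $A_i$ by the greedy multi-knapsack scheme, and cannot remove any previously approved project other than possibly $x'$. So after translation, $\scoreof{x}$ weakly increases while the only project whose score can drop is $x'$. I would argue that for each of the three choices of $f$ ($|S|$, $\cof{S}$, $\bool(|S|>0)$), the total utility of $S_x$ weakly increases relative to any set avoiding $x$, mirroring the PB-CC argument, so $x$ stays winning.

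The main obstacle I anticipate is precisely the translation step for $\langle MT, f\rangle$: I must show carefully that moving $x$ up one equivalence class in $\succeq_i$ does not \emph{dislodge} $x$ from the approved set $A_i$ and does not disrupt the inclusion of other projects in a way that could strictly help a competing set not containing $x$. Because $MT$ is a greedy, budget-sensitive scheme, the boundary equivalence class (where the budget is partially consumed) requires case analysis: I would check whether $x$ was already fully inside $A_i$, whether it was a marked ``partial'' project in $O_i$, and how demoting $x'$ shifts the greedy cutoff. Establishing monotonicity of the map ``ranking swap $\mapsto$ translated approval score'' for $x$ is the crux; once that monotonicity is in hand, the three utility cases follow by the same subset/score comparison used for splitting monotonicity in \Cref{the: splittingmp}.
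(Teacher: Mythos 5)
Your treatment of the PB-CC rule is correct and is essentially the paper's own argument: the exchange only moves $r_i(x)$ down and $r_i(x')$ up, leaving every other rank of every voter untouched, so every feasible set containing $x$ weakly gains utility, every set containing $x'$ but not $x$ weakly loses, and all remaining sets are unchanged; hence some optimal set containing $x$ survives.

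The half concerning $\langle MT,f\rangle$ has a genuine gap: the lemma you lean on --- that the swap ``cannot remove any previously approved project other than possibly $x'$'', so that only \scoreof{x} can rise and only \scoreof{x'} can fall --- is false. Take $\bud=10$ and let voter $i$'s preference be $a \succ_i \{x',b\} \succ_i \{x,c,d\}$ with $\cof{a}=2$, $\cof{x'}=1$, $\cof{b}=3$, $\cof{x}=5$, $\cof{c}=3$, $\cof{d}=2$. The MT scheme approves $A_i=\{a,x',b,c,d\}$: the first two classes cost $6$, the third does not fit, and with leftover $4$ the projects $c$ and $d$ are marked while $x$ is not. After exchanging $x$ and $x'$, the classes become $\{a\},\{x,b\},\{x',c,d\}$ with prefix costs $2,10,16$; the first two classes now exhaust the budget exactly, the leftover is $0$, and MT returns $A'_i=\{a,x,b\}$. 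So $c$ and $d$, previously approved and distinct from $x'$, lose approval. The structural reason is that the swap changes the cost of class $E^i_{j-1}$ by $\cof{x}-\cof{x'}$, which shifts the greedy cutoff and re-decides the entire marked set. Your monotonicity claim for $x$ itself (\scoreof{x} weakly increases) does hold, but the claim about third projects fails in both directions: when $\cof{x}>\cof{x'}$ third projects can lose approval (as above), and when $\cof{x}<\cof{x'}$ third projects can gain approval even while \scoreof{x} stays put (with $\bud=10$, $x'\succ_i\{x,w\}$, $\cof{x}=2$, $\cof{x'}=\cof{w}=5$, the approval set goes from $\{x',x\}$ to $\{x,x',w\}$, so \scoreof{w} rises while \scoreof{x} does not change).

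Once third projects can move, your concluding step --- ``the total utility of $S_x$ weakly increases relative to any set avoiding $x$'' --- collapses: a competitor avoiding $x$ can absorb newly approved projects, or $S_x$ can bleed utility through its dropped members, so mirroring the PB-CC comparison is not possible. Be aware also that the paper's own proof does not use your lemma but a different claim (any third project whose score rises is ranked after $x$ in the new instance and rises by no more than \scoreof{x} does); the second example above violates that claim as well, so simply substituting the paper's statement will not close the hole. The interaction between the exchange and the greedy cutoff of the MT scheme is the genuine difficulty here, and any complete proof would need a careful case analysis of where that cutoff sits before and after the swap --- this is precisely the part that neither your sketch nor a direct transcription of the paper's argument settles.
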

\begin{proof}
	For any instance \instance and a PB rule \pbrule, let $x$ be a project such that $x \in \winners{\pbrule}{}$. Consider any $\succeq_i \in \rprof$. Let $x \in E^i_j$ and $x'$ be any project whose rank in $E^i_{j-1}$. Construct $\instance'$ by exchanging $x$ and $x'$ in $\succeq_i$.
    \begin{enumerate}[(a)]
        \item For (i) and (ii), it can be observed that in $\instance'$, \scoreof{x} will increase or remain the same, whereas, \scoreof{x'} will remain the same or decrease. This is because, depending on \cof{x} and \cof{x'}, the point where we stop adding projects to $A_i$ will either stay the same in $\instance'$ or change from $E^i_j$ to $E^i_{j-1}$ or from $E^i_{j-1}$ to $E^i_j$. Now, consider any other project $p$ whose score in $\instance'$ is greater than that in $\instance$. This is possible only when $p$ is ranked after $x$ in the new instance. So, the increase in the score of $p$ cannot be greater than the increase in the score of $x$. Since $x \in \winners{\pbrule}{}$, there exists $S_x \in \ruleof{\pbrule}{}$ such that $x \in S_x$. By definition, the utility of any set is the sum of scores of all projects in that set. Hence, some set containing $x$ will continue to win. Therefore, $x \in \winners{\pbrule}{'}$.

        For (iii), the utility of a set $S$ is the number of voters who have some project of $S \in A_i$. Utility of any $S_x \in \ruleof{\pbrule}{}$ increases by $1$ in $\instance'$ (when $x$ is the only project in $S_x$ in $A_i$) or remains the same, otherwise. Similarly, the utility of any set without $x$ stays the same or decreases $\instance'$. Therefore, $x \in \winners{\pbrule}{'}$.
        \item The total utility of set $S$ can change only because of \trunkkcc{\ebpara}{S} since preferences of all the other voters are unperturbed. Consider any set $S$. Let $x \in S$ and $x' \notin S$. Then, $\trunkkcc{\ebpara}{S}$ decreases from \instance to $\instance'$ or remains the same. Hence, utility of $S$ in $\instance'$ is at least that in $\instance$. Now, suppose $x \notin S$ and $x' \in S$. Then, $\trunkkcc{\ebpara}{S}$ decreases from \instance to $\instance'$ or remains the same. Hence, utility of $S$ in $\instance'$ is at most that in $\instance$. If both $x$ and $x'$ are present or absent together in $S$, $\trunkkcc{\ebpara}{S}$ and the total utility of $S$ are always the same in both \instance and $\instance'$. Therefore, utility of any set with $x$ in $\instance'$ will be at least that in \instance and utility of remaining sets will remain the same or decrease. Since $x \in \winners{\pbrule}{}$, some set with $x$ continues to win and $x \in \winners{\pbrule}{'}$.
    \end{enumerate}
This completes the proof.
\end{proof}

The next axiom, non-crossing monotonicity, asserts that if some project in a selected set $S$ is exchanged with a project in the equivalence class just before it without disturbing any other project in $S$, the set $S$ continues to win.
\begin{definition}[Non-crossing Monotonicity]\label{def: noncrossingm}
	A PB rule \pbrule satisfies non-crossing monotonicity if for any instance \instance, a set $S \in \ruleof{\pbrule}{}$, and a project $x \in S$, we have $S \in \ruleof{\pbrule}{'}$ whenever $\instance'$ is obtained from \instance by exchanging $x \in E^i_j$ with a project $x' \in E^i_{j-1}$ in some preference $\succeq_i$ such that $x' \notin S$.
\end{definition}
\begin{theorem}\label{the: noncrossingmp}
	 A dichotomous translation rule $\langle MT,f \rangle$ such that $f(S)$ is (i) $|S|$ or (ii) $\cof{S}$ satisfies non-crossing monotonicity.
\end{theorem}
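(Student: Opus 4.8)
The plan is to use the fact that the exchange perturbs only the preference $\succeq_i$ of the single voter $i$, so the multi-knapsack translation scheme of Algorithm~\ref{algo: mtscheme} alters at most the approval set $\aof{}$ of that one voter; every other approval set $\aof{i'}$ is untouched. Writing the objective of \gtr{} in additive per-project form, for $f(S)=|S|$ the rule selects the feasible sets maximizing $\sum_{p\in S}\scoreof{p}$, and for $f(S)=\cof{S}$ those maximizing $\sum_{p\in S}\cof{p}\,\scoreof{p}$, where $\scoreof{p}$ is the number of approvers of $p$. Let $\aof{}'$ be the new approval set of $i$, and put $G=\aof{}'\setminus\aof{}$ (projects $i$ newly approves) and $L=\aof{}\setminus\aof{}'$ (projects $i$ drops). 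Each $p\in G$ raises its per-project value by $w(p)$ and each $p\in L$ lowers it by $w(p)$, with $w(p)=1$ or $w(p)=\cof{p}$. The target structural claim is $L\cap S=\emptyset$ and $G\subseteq S$: granting it, the change in the total value of $S$ is $\sum_{p\in G}w(p)$, which dominates the change of any competing feasible $T$ (whose change is at most $\sum_{p\in T\cap G}w(p)\le\sum_{p\in G}w(p)$), so the optimality of $S$ in \instance{} transfers to $\instance'$ and $S\in\ruleof{\pbrule}{'}$.

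To establish this claim I would run the greedy and track the stopping class. Let $k$ be the index of the last fully added equivalence class in \instance{} and observe that the exchange changes the cumulative cost only through class $j-1$ (by $\cof{x}-\cof{x'}$), leaving every other cumulative cost intact. The routine cases are: (a) the greedy halts strictly before class $j-1$, where $\aof{}'=\aof{}$ and there is nothing to prove; and (b) the partial class is $j-1$ (resp.\ $j$) both before and after the exchange, where the only membership flips are $x$, possibly gained, and $x'$, possibly lost, giving $G\subseteq\{x\}\subseteq S$ and $L\subseteq\{x'\}$ with $x'\notin S$ by hypothesis. In each of these situations only $x$ and $x'$ can change status, and since $x\in S$ while $x'\notin S$ the structural claim is immediate. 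These cases are bookkeeping on the stopping index and should pose no difficulty.

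The main obstacle is the cascade case, which arises precisely when $\cof{x}<\cof{x'}$ and promoting the cheaper project $x$ into class $j-1$ lowers the cumulative cost enough that class $j-1$, previously only partially funded, now fits entirely; the partial class then cascades to $j$, and voter $i$ newly approves projects other than $x$, namely members of class $j-1$ that had overrun the budget together with cheap members of the new partial class $j$ (a symmetric upward cascade occurs when $\cof{x}>\cof{x'}$). Here $G$ may contain projects outside $S$, so $G\subseteq S$ fails and the simple transfer of optimality breaks. The heart of the proof is to show these cascaded approvals cannot produce a feasible set that strictly beats $S$ in $\instance'$, and I expect this to require more than the additive comparison above: one must combine the monotonicity of the objective, the optimality of $S$ in \instance, and the budget freed by the promotion. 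I regard this cascade analysis as the genuinely delicate step on which the whole argument turns, and I would scrutinize it carefully, since it is exactly the configuration in which the naive reasoning about the translation scheme is no longer valid.
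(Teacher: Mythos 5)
Your setup --- the per-project score decomposition and the additive transfer-of-optimality argument --- is exactly the paper's approach, and your treatment of the non-cascade cases is sound. But the proposal is incomplete precisely where you say it is: the structural claim $G \subseteq S$ fails in the cascade case, and you leave that case open. The deeper problem is that this gap cannot be closed, because the cascade configuration yields a counterexample to the statement itself. Take $\bud = 10$ and projects $x,y,a,x'$ with $\cof{x}=1$, $\cof{y}=2$, $\cof{a}=8$, $\cof{x'}=10$. Voter $i$ reports $\{x'\} \succ_i \{x,y\}$, and three further voters rank only $a$, only $y$, only $x$, respectively. The translation gives $A_i=\{x'\}$ (the class $\{x,y\}$ does not fit and the leftover budget is $0$), so all four projects have score $1$; with $f(S)=|S|$ the selected sets are exactly $\{x,y\}$, $\{x,a\}$, $\{y,a\}$, since any three scored projects cost at least $11$. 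Choose $S=\{x,a\}$, so $x \in S$ and $x' \notin S$. Exchanging $x \in E^i_2$ with $x' \in E^i_1$ gives $\{x\}\succ_i\{x',y\}$: now the first class fits, the stopping class is $\{x',y\}$ with leftover $9$, and the translation returns $A_i'=\{x,y\}$ --- the cascade you describe, gaining $y \notin S$. The new scores are $2,2,1,0$ for $x,y,a,x'$ respectively, so $\{x,y\}$ scores $4$ while $S$ scores only $3$, hence $S \notin \ruleof{\pbrule}{'}$ and non-crossing monotonicity fails. (Candidate monotonicity is unharmed, since the winning set $\{x,y\}$ still contains $x$; it is only the set-wise requirement that breaks.)

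You should also know that the paper's own proof does not resolve the case you isolated: it is explicitly only an ``outline,'' and its final step --- if $\scoreof{x}$ increases then, since $x'\notin S_x$, the utility of $S_x$ also increases, ``which implies'' $S_x \in \ruleof{\pbrule}{'}$ --- is a non sequitur. An increase in the utility of $S_x$ does not preserve its optimality when a competing feasible set gains strictly more, which is exactly what the cascaded approval of $y$ produces above. So your instinct that the cascade is ``exactly the configuration in which the naive reasoning about the translation scheme is no longer valid'' is correct. Your additive argument does establish the conclusion on the subfamily of instances where the exchange changes $A_i$ only on $\{x,x'\}$ (your cases (a) and (b)), and that is the most one can salvage: no completion of the argument can cover the general statement, because the statement is false in the cascade case.
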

\begin{proof}
	Let \pbrule denote the rule $\langle MT,f \rangle$. Let $S_x$ be a set such that $S_x \in \ruleof{\pbrule}{}$ and $x \in S_x$. Let $\succeq_i \in \rprof$ be a preference such that $x \in E^i_j$ for some $j$ and $S_x \setminus E^i_{j-1} \neq \emptyset$. Construct $\instance'$ by exchanging $x$ and $x'$ in $\succeq_i$. For the sake of contradiction, assume $S_x \notin \ruleof{\pbrule}{'}$. Let $f(S)$ be $|S|$ or $c(S)$.
	
	The proof is an extension to the proof of \Cref{the: candidatemp} and we use the fact that for the said rules, the utility of a set can be expressed as the sum of scores of projects in the set. We give an outline of the argument. From \Cref{the: candidatemp}, we know that the utility of $S_x$ increases or stays the same. We also know that there exists $S \in \ruleof{\pbrule}{'}$ such that $x \in S$. This is possible only if \scoreof{x} increases since $S_x \notin \ruleof{\pbrule}{'}$. But if \scoreof{x} increases, since $x' \notin S_x$, utility of $S_x$ also increases. This implies $S_x \in \ruleof{\pbrule}{'}$. 
\end{proof}
\begin{theorem}\label{the: noncrossingmn}
    The following rules do not satisfy non-crossing monotonicity:
    \begin{enumerate}[(a)]
        \item The dichotomous translation rule $\langle MT,f \rangle$ with $f(S) = \bool(|S|>0)$
        \item The PB-CC rule
    \end{enumerate}
\end{theorem}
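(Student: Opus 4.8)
The plan is to prove \Cref{the: noncrossingmn} by exhibiting two explicit counterexamples, one for each rule, following the same template used throughout this section: construct a small instance \instance, identify a winning set $S$ and a project $x \in S$, then perform the prescribed exchange of $x \in E^i_j$ with some $x' \in E^i_{j-1}$ (with $x' \notin S$) to obtain $\instance'$, and finally show $S \notin \ruleof{\pbrule}{'}$. The key is to engineer the instance so that bumping $x'$ above $x$ in a single voter's preference causes $x'$ (or some other project) to enter the winning pool in a way that strictly beats $S$, thereby dislodging $S$ from the output.

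For part (a), the rule is $\langle MT, f\rangle$ with $f(S) = \bool(|S|>0)$, i.e.\ the Boolean (coverage-type) utility after the multi-knapsack translation. Here I would build on the intuition that the Boolean objective only cares about \emph{covering} voters (having at least one approved project selected), so it is fragile to changes in which projects the translation scheme $MT$ approves. First I would set up a handful of voters and projects so that $S$ is one of possibly several sets achieving maximum coverage. The exchange $x \leftrightarrow x'$ in one voter's ranking should change that voter's translated approval set $A_i$ (via Algorithm~\ref{algo: mtscheme}) so that in $\instance'$ a different set, not containing $x$, now covers strictly more voters (or covers the same number while $S$ covers fewer because some voter who was previously covered by $x$ through $A_i$ is no longer covered). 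The main obstacle here is controlling the greedy $MT$ translation: I must choose costs so that the swap actually moves a project in or out of $A_i$ at the budget boundary — the swap only matters if it changes which projects fall inside the knapsack cutoff — and simultaneously ensure $x' \notin S$ as the definition demands. A clean way is to make the budget tight enough that exactly one of $x, x'$ fits per voter, so the swap flips $A_i$ from $\{x\}$-covering to $\{x'\}$-covering.

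For part (b), the PB-CC rule, the exchange $x \in E^i_j$ with $x' \in E^i_{j-1}$ improves $x'$'s rank for voter $i$. Since PB-CC utility is $m - \trunkkcc{}{S}$, raising $x'$ makes sets containing $x'$ more attractive to voter $i$; if $x' \notin S$, then in $\instance'$ a competing feasible set containing $x'$ can overtake $S$ in total CC-welfare, killing $S$. I would choose a budget and costs so that $S$ and a rival set $S^\dagger \ni x'$ are tied (or nearly tied) in CC-welfare at \instance, with the single-voter rank improvement of $x'$ tipping the balance strictly in favor of $S^\dagger$, so that $\ruleof{\pbrule}{'} = \{S^\dagger\}$ and hence $S \notin \ruleof{\pbrule}{'}$. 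The delicate point is guaranteeing $S$ is genuinely selected at \instance (so the hypothesis $S \in \ruleof{\pbrule}{}$ holds) while making the rival set strictly better only after the swap; I would verify this by directly computing $\sum_i (m - \trunkkcc{}{\cdot})$ for both $S$ and $S^\dagger$ in \instance and $\instance'$. In both parts the genuine work is purely in the arithmetic of the small instances, so I would present the instances compactly and let the reader check the welfare computations, exactly in the style of the preceding theorems such as \Cref{the: discountmn} and \Cref{the: limitmn}.
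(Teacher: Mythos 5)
Your proposal misreads the direction of the exchange in \Cref{def: noncrossingm}, and as a result both of your constructions aim at scenarios that are provably impossible. In the definition, the project $x \in S$ sits in $E^i_j$ and $x' \notin S$ sits in $E^i_{j-1}$, so the exchange moves $x$ \emph{up} into the better class and $x'$ \emph{down}; your plan assumes the opposite. For part (b) this is fatal: after the swap the ranks of all projects other than $x,x'$ are unchanged, $r_i(x)$ improves and $r_i(x')$ worsens, so the CC-utility of $S$ (which contains $x$ but not $x'$) weakly increases for voter $i$ and is unchanged for everyone else, while the utility of any rival $S^\dagger$ containing $x'$ but not $x$ weakly decreases. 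Hence if $S$ was optimal in \instance, then $S^\dagger$ can never overtake it in $\instance'$ --- the tie-tipping toward $S^\dagger \ni x'$ that you describe cannot occur, no matter how the costs are engineered. The same misreading undermines part (a): under the $MT$ scheme, classes are processed from best to worst, so after the swap $A_i$ can contain $x'$ only if the whole of the new class $E^i_{j-1}$ (which now contains $x$) fit within the budget, i.e., only if $x \in A_i$ as well; your ``clean way'' of flipping $A_i$ from $\{x\}$-covering to $\{x'\}$-covering is therefore impossible. Moreover, a short case analysis (using that the combined cost of classes $j-1$ and $j$ is unchanged and $x$ only moves earlier) shows $x \in A_i$ before the swap implies $x \in A_i$ after it, so your fallback --- that a voter previously covered by $x$ through $A_i$ loses coverage --- is impossible too.

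The working counterexamples must exploit the opposite and more subtle phenomenon: \emph{promoting} $x \in S$ can help a rival set that \emph{also contains} $x$ strictly more than it helps $S$, because $S$ already serves voter $i$ through another project, so the gain is wasted on $S$. This is exactly what the paper does for (a): with budget $\bud > 2$, costs $(1,\bud-1,\bud-1,\bud,1,\bud)$ and preferences $p_1 \succ_1 p_2 \succ_1 p_3$, $p_3 \succ_2 p_6$, $p_5 \succ_3 p_4$, the set $S = \{p_1,p_3\}$ is selected; exchanging $p_2$ and $p_3$ in $\succeq_1$ moves $x = p_3$ \emph{into} $A_1$ (and $x' = p_2$ out), which does not raise the coverage of $S$ (voter $1$ was already covered via $p_1$) but raises the coverage of the rival $\{p_3,p_5\}$ from two voters to three, so $S \notin \ruleof{\pbrule}{'}$. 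For part (b) the paper constructs nothing at all: it observes that the CC rule of Elkind et al.\ is the special case of PB-CC with unit costs and budget $k$, and invokes their proof that CC violates non-crossing monotonicity. If you want a self-contained proof of (b), your template can be repaired by reversing the roles: arrange a tie between $S$ and a rival $T \ni x$ such that voter $i$ ranks some project of $S$ in class $j-1$ or better but ranks no project of $T$ better than class $j$; then the swap strictly improves $T$'s total utility while leaving $S$'s unchanged, and $S$ drops out.
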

\begin{proof}
    \begin{enumerate}[(a)]
        \item Consider an instance \instance with budget \bud$(>2)$, a set of projects $\proj = \curly{p_1,p_2,p_3,p_4,p_5,p_6}$ costing $1$, $\bud-1$, $\bud-1$, $\bud$, $1$, and $\bud$ respectively, and three voters whose preferences are: $p_1 \succ_1 p_2 \succ_1 p_3 \succ_1 others$, $p_3 \succ_2 p_6 \succ_2 others$, and $p_5 \succ_3 p_4 \succ_3 others$ respectively. Clearly, $\ruleof{\pbrule}{} = \curly{\curly{p_1,p_3},\curly{p_2,p_5},\{p_3,p_5\},\curly{p_1,p_5}}$. Now let $S = \curly{p_1,p_3}$. See that $p_2 \notin S$. Exchange $p_2$ and $p_3$ in $\succeq_1$ to obtain a new instance $\instance'$. In $\instance'$, $\curly{p_3,p_5}$ has a strictly higher utility than $S$ and hence $S \notin \ruleof{\pbrule}{'}$.
        \item The proof for PB-CC follows from the proof for CC rule by Elkind et al. \cite{elkind2017properties}. Their CC rule is a special case of our PB-CC, where the budget is $k$ and all the projects cost $1$ each. The authors prove that the CC rule does not satisfy non-crossing monotonicity.
    \end{enumerate}
    This completes the proof.
\end{proof}
\subsubsection{Axioms in the Literature on Generic Preferences}\label{axioms: generic}
We examine the axioms in the literature applicable for any voting model \cite{brandt2016handbook,skowron2019axiomatic,lackner2021consistent}. The first two axioms we study ensure an impartial treatment of all the voters and projects by enforcing that the outcome should not depend on the indexing of voters or projects. Let $\Sigma_S$ be the set of all permutations on a set $S$.
\begin{definition}[Anonymity]\label{def: anonymity}
	A PB rule \pbrule is said to be anonymous if for any instance \instance and $\sigma \in \Sigma_\voters$, we have $\pbrule(\instance)=\pbrule(\instance')$ whenever $\instance'$ is obtained from \instance by replacing $\succ_i$ with $\succ_{\sigma(i)}$ for every $i \in \voters$.
\end{definition}
\begin{definition}[Neutrality]\label{def: neutrality}
	A PB rule \pbrule is said to be neutral if for any instance \instance and $\sigma \in \Sigma_\proj$, we have $\pbrule(\instance)=\pbrule(\instance')$ whenever $\instance'$ is obtained from \instance by replacing \cof{j} with \cof{\sigma(j)} and \pat{j} with \pat{\sigma(j)} in every $\succ_i$ for every $j \in [m]$.
\end{definition}
\begin{theorem}\label{the: anonymityneutrality}
	Following rules satisfy both anonymity and neutrality:
	\begin{enumerate}
		\item Any dichotomous translation rule $\langle MT,f \rangle$ such that $f(S)$ is $|S|$, $\cof{S}$, or $\bool(|S|>0)$
		\item The PB-CC rule
	\end{enumerate}
\end{theorem}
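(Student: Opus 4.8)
The plan is to treat both properties uniformly by exploiting that every rule in question optimizes a \emph{separable} objective, namely a sum over voters of a per-voter term, and that both the translation schemes and the per-voter terms are insensitive to the labels of voters and projects, consulting only the weak rankings, the costs $c$, and the budget $\bud$.

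First, for anonymity, I would observe that for a dichotomous translation rule $\langle MT,f\rangle$ the approval set $A_i$ produced in the first phase (Algorithm~\ref{algo: mtscheme}) is a function of the single preference $\succeq_i$ together with the fixed data $c$ and $\bud$; it never consults the index $i$ or any other voter. Hence if $\instance'$ is obtained by relabelling voters via $\sigma\in\Sigma_\voters$, the approval set attached to voter $\sigma(i)$ in $\instance'$ is exactly $A_i$, so the multiset $\{A_i\}_{i\in\voters}$ is preserved. Since the second-phase objective $\sum_{i\in\voters} f(A_i\cap S)$ is a sum over voters, it takes the same value on every feasible $S$ in $\instance$ and in $\instance'$, hence has the same set of maximizers, giving $\pbrule(\instance)=\pbrule(\instance')$. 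The identical argument applies to PB-CC, whose objective $\sum_{i\in\voters}\big(m-\trunkkcc{}{S}\big)$ is again a sum whose $i$-th summand depends only on $\succeq_i$ and $S$.

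For neutrality I would first establish the commutation $A_i'=\sigma(A_i)$, i.e.\ relabelling the projects by $\sigma\in\Sigma_\proj$ (carrying the costs along, as the definition specifies) relabels each translated approval set by the same $\sigma$. This holds because Algorithm~\ref{algo: mtscheme} makes every decision purely from the cost of a project and its position in the weak ranking, both transported consistently by $\sigma$, with no decision depending on a project's identity. Granting this, for any feasible $S$ the set $\sigma(S)$ is feasible in $\instance'$, and for each voter $f\big(A_i'\cap\sigma(S)\big)=f\big(\sigma(A_i)\cap\sigma(S)\big)=f\big(\sigma(A_i\cap S)\big)=f(A_i\cap S)$, since cardinality and the indicator $\bool(|\cdot|>0)$ are invariant under any bijection and the cost measure is invariant precisely because $\sigma$ transports the costs. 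Summing over $i$ shows the welfare of $\sigma(S)$ in $\instance'$ equals that of $S$ in $\instance$, so $\sigma$ induces a bijection between the maximizer sets, yielding the required equality. The PB-CC case is even more direct, as it uses no translation scheme: $\min_{p\in\sigma(S)} r_i'(p)=\min_{p\in S} r_i(p)$ under the relabelling, so each per-voter utility is preserved.

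The argument is essentially bookkeeping, and the only genuine point requiring care is the commutation claim $A_i'=\sigma(A_i)$ for the translation scheme, together with the observation that cost utility is neutral \emph{only because} the cost function is permuted in lockstep with the projects (exactly what the neutrality definition builds in); were $c$ held fixed, $f(S)=\cof{S}$ would fail neutrality. For completeness the same commutation should be checked for the cost-worthy scheme (Algorithm~\ref{algo: ctscheme}), whose test $\cof{p_j}\le\cwpara(r_i(p_j))$ likewise reads only cost and rank, but as noted this family's axiomatic verification is deferred to the appendix.
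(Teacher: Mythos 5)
Your proposal is correct and follows essentially the same route as the paper, whose own justification is exactly the two observations you elaborate: anonymity because each rule maximizes a voter-indexed sum that is invariant under permuting the voters, and neutrality because both the translation schemes and the per-voter utilities consult only costs and ranks, which the project relabelling transports consistently. The only difference is one of detail: you make explicit the commutation claim $A_i' = \sigma(A_i)$ for the $MT$ scheme and the caveat that cost utility is neutral only because $c$ is permuted in lockstep with the projects, both of which the paper leaves implicit.
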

Anonymity follows since all the rules are utilitarian and permuting the voters does not affect the summation. Neutrality follows since all rules depend only on the cost of projects and their ranks in preferences. The next popularly studied axiom, consistency, requires that for a voting rule, if two disjoint groups of voters $N_1$ and $N_2$ both choose an outcome $a$, $a$ continues to be chosen if the groups participate together in an election, i.e., for $N_1 \cup N_2$.
\begin{definition}[Consistency]\label{def: consistency}
	A PB rule \pbrule is said to be consistent if for any two ordinal PB instances $\instance_1 = \langle \voters_1,\proj,c,\bud,\rprof_1 \rangle$ and $\instance_2 =\langle \voters_2,\proj,c,\bud,\rprof_2 \rangle$ such that $\voters_1 \cap \voters_2 = \emptyset$, we have, $$\pbrule(\instance_1) \cap \pbrule(\instance_2) \subseteq \pbrule(\langle \voters_1 \cup \voters_2,\proj,c,\bud,\rprof_1 \cup \rprof_2 \rangle).$$
\end{definition}
\begin{theorem}\label{the: consistency}
	All the following rules satisfy consistency:
	\begin{enumerate}[(a)]
		\item A dichotomous translation rule $\langle MT,f \rangle$ such that $f(S)$ is any of the following: (i) $|S|$, (ii) $\cof{S}$, or (iii) $\bool(|S|>0)$
		\item The PB-CC rule
	\end{enumerate}
\end{theorem}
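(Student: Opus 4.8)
The plan is to exploit the common structure shared by all the listed rules: each one is a utilitarian welfare maximizer whose objective decomposes as a sum of per-voter contributions, and each voter's contribution depends only on her own preference together with the shared data $\proj$, $c$, and $\bud$. Once this separability is isolated, consistency follows from a single additivity argument that is uniform across parts (a) and (b).

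First I would record the key structural fact. For a dichotomous translation rule $\langle MT,f \rangle$, the translation scheme $MT$ (Algorithm~\ref{algo: mtscheme}) processes each voter \emph{independently}, so the approval vote $A_i$ derived from $\succeq_i$ is identical whether voter $i$ appears in $\instance_1$, in $\instance_2$, or in the merged instance $\instance^\ast = \langle \voters_1 \cup \voters_2,\proj,c,\bud,\rprof_1 \cup \rprof_2 \rangle$. Writing the welfare of a feasible set $S$ in an instance as $\sum_{i}{f(A_i \cap S)}$, the disjointness $\voters_1 \cap \voters_2 = \emptyset$ gives, for every $S$, that the welfare in $\instance^\ast$ equals the sum of the welfares in $\instance_1$ and $\instance_2$. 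The same holds verbatim for the PB-CC rule, whose per-voter term $m - \trunkkcc{}{S}$ depends only on $\succeq_i$ and $S$, so the objective again splits additively across the two disjoint voter sets.

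Next I would run the optimality argument. Suppose $S \in \pbrule(\instance_1) \cap \pbrule(\instance_2)$, and let $OPT_1$ and $OPT_2$ be the optimal welfare values in $\instance_1$ and $\instance_2$; then the welfare of $S$ equals $OPT_1$ in $\instance_1$ and $OPT_2$ in $\instance_2$. Crucially, the feasible family \feasible is the same in all three instances because $\proj$, $c$, and $\bud$ are shared, so I may compare $S$ against an arbitrary $T \in \feasible$. By the additivity established above, the welfare of any such $T$ in $\instance^\ast$ is at most $OPT_1 + OPT_2$, which is exactly the welfare of $S$ in $\instance^\ast$. Hence $S$ maximizes welfare in $\instance^\ast$, i.e. $S \in \pbrule(\instance^\ast)$, which is the required inclusion $\pbrule(\instance_1) \cap \pbrule(\instance_2) \subseteq \pbrule(\instance^\ast)$. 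Since all three variants of $\langle MT,f\rangle$ and the PB-CC rule fit this framework, the theorem follows.

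I do not expect a genuinely hard step here; the result is a routine consequence of the additive separability of utilitarian objectives. The only subtlety worth flagging explicitly is the verification that each per-voter utility depends on nothing external to that voter, so that merging the two profiles merely concatenates the summands without altering any individual $A_i$ or rank. For the translation rules this is immediate from the per-voter nature of $MT$, and for PB-CC it is immediate from the definition of $\trunkkcc{}{S}$; confirming these two facts is the entirety of what needs care.
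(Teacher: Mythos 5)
Your proof is correct and follows essentially the same route as the paper's: sum the two per-instance optimality inequalities over the disjoint voter sets to conclude $S$ maximizes welfare in the merged instance. Your explicit observation that $MT$ (and the PB-CC utility) processes each voter independently, so merging profiles leaves each $A_i$ and each rank unchanged, is left implicit in the paper but is the same underlying fact.
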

\begin{proof}
    Consider any $\instance_1$ and $\instance_2$ as given in \Cref{def: consistency}. Take any dichotomous translation rule \gtr, a set $S \in \gtr(\instance_1) \cap \gtr(\instance_2)$, and the instance $\instance' = \langle \voters_1 \cup \voters_2,\proj,c,\bud,\rprof_1 \cup \rprof_2 \rangle$. Since $S \in \gtr(\instance_1)$, $\sum_{i \in \voters_1}{f\big({A_i\;\cap\;S}\big)} \geq \sum_{i \in \voters_1}{f\big({A_i\;\cap\;S'}\big)}$ for any $S' \subseteq \proj$. Likewise, $\sum_{i \in \voters_2}{f\big({A_i\;\cap\;S}\big)} \geq \sum_{i \in \voters_2}{f\big({A_i\;\cap\;S'}\big)}$. 
By including both these inequalities, we get $S \in \ruleof{\gtr}{'}$.

The proof for part (b) follows the same idea.
\end{proof}
\subsubsection{Pro-Affordability}\label{sec: proafford}
We propose an axiom called pro-affordability explicitly for indivisible PB under weakly ordinal preferences, to assert that we always prefer a project that is less expensive and ranked higher by everyone.
\begin{definition}[Pro-affordability]\label{def: proafford}
	A PB rule \pbrule satisfies pro-affordability if for any instance \instance, $x \in \winners{\pbrule}{}$, and $x' \in \proj$ such that $\cof{x'} < \cof{x}$ and $x' \suci x$ for every voter \ii, we have $x' \in \winners{\pbrule}{}$.
\end{definition}
\begin{theorem}\label{the: proaffordp}
	All the following rules satisfy pro-affordability:
	\begin{enumerate}[(a)]
		\item A dichotomous translation rule $\langle MT,f \rangle$ such that $f(S)$ is any of the following: (i) $|S|$ or (ii) $\bool(|S|>0)$
		\item The PB-CC rule
	\end{enumerate}
\end{theorem}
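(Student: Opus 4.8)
The plan is to handle all three rules with one common swap argument and then discharge a rule-specific utility comparison. Suppose $x \in \winners{\pbrule}{}$ and let $x'$ satisfy $\cof{x'} < \cof{x}$ and $x' \suci x$ for every voter $i$. Fix a selected set $S \in \ruleof{\pbrule}{}$ with $x \in S$; if $x' \in S$ already then $x' \in \winners{\pbrule}{}$ and we are done, so assume $x' \notin S$. Consider $S' = (S \setminus \curly{x}) \cup \curly{x'}$. Since $\cof{x'} < \cof{x}$, we have $\cof{S'} = \cof{S} - \cof{x} + \cof{x'} < \cof{S} \leq \bud$, so $S' \in \feasible$. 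The entire argument then reduces to showing that the total utility of $S'$ is at least that of $S$: because $S$ is utility-maximal and $S'$ is feasible, this forces $S'$ to be utility-maximal as well, hence $S' \in \ruleof{\pbrule}{}$ and $x' \in \winners{\pbrule}{}$.

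For the PB-CC rule this last step is immediate. From $x' \suci x$ we get $r_i(x') \le r_i(x)$ for every voter $i$, and since $S'$ only replaces $x$ by $x'$, $\trunkkcc{}{S'} = \min\curly{\min_{p \in S \setminus \curly{x}} r_i(p),\, r_i(x')} \le \min\curly{\min_{p \in S \setminus \curly{x}} r_i(p),\, r_i(x)} = \trunkkcc{}{S}$. Thus each voter's CC-utility $m - \trunkkcc{}{}$ weakly increases, and so does the total, completing this case.

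For the two dichotomous translation rules $\gtr$, the crux is a monotonicity lemma about the multi-knapsack scheme $MT$: under $\cof{x'} < \cof{x}$ and $r_i(x') \le r_i(x)$, every voter approving $x$ also approves $x'$, i.e. $\curly{i : x \in A_i} \subseteq \curly{i : x' \in A_i}$, and in particular $\scoreof{x'} \ge \scoreof{x}$. I would prove this by examining where $x$ sits relative to the breaking equivalence class $E^i_j$ at which the greedy loop of Algorithm \ref{algo: mtscheme} halts. Either $x$ lies in a fully added class (rank class $\le j-1$), or $x$ lies in $E^i_j$ and was picked into $O_i$, so $\cof{x} < \bud - \cof{A_i}$. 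In the first case $x'$, whose rank class is at most that of $x$, also lies in a fully added class. In the second case, $x'$ lies either in an earlier fully added class, or in $E^i_j$ too, and then $\cof{x'} < \cof{x} < \bud - \cof{A_i}$ places $x'$ into $O_i$ as well; either way $x' \in A_i$. The potentially troublesome configuration, $x'$ in the breaking class while $x$ is in a strictly later class, cannot arise, because $x \in A_i$ forces $x$'s rank class to be at or before $E^i_j$.

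Granting the lemma, the utility counts finish quickly. For $f(S) = |S|$ the total utility is $\sum_{p \in S} \scoreof{p}$, so the swap changes it by $\scoreof{x'} - \scoreof{x} \ge 0$. For $f(S) = \bool(|S|>0)$ the total utility counts voters with $A_i \cap S \neq \emptyset$; any such voter either already has an approved project in $S \setminus \curly{x} \subseteq S'$, or has $A_i \cap S = \curly{x}$, whence $x \in A_i$ and the lemma gives $x' \in A_i \cap S'$, so the voter stays covered. In both subcases the utility of $S'$ is at least that of $S$, and the common argument concludes. The main obstacle is precisely this $MT$-approval monotonicity lemma and its case split around the breaking class; the rest is the routine optimality-preserving swap.
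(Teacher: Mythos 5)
Your proof is correct and takes essentially the same route as the paper's: the feasibility-preserving swap $S' = (S\setminus\curly{x})\cup\curly{x'}$, followed by a rule-specific check that total utility does not decrease (score comparison for $f(S)=|S|$, coverage preservation for $f(S)=\bool(|S|>0)$, and rank comparison for PB-CC). Your MT-approval monotonicity lemma, with its case split around the breaking equivalence class, is precisely the step the paper dispatches in one line (``if $x \in A_i$, then $x' \in A_i$ since $x' \suci x$''), and your more careful version is warranted, since in the breaking-class case the conclusion genuinely requires $\cof{x'} < \cof{x}$ and not merely $x' \suci x$; your explicit handling of the degenerate case $x' \in S$ is likewise a small gap the paper leaves implicit.
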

\begin{proof}
For any rule \br, an instance \instance, and a project $x \in \winners{\br}{}$, let $x'$ be a project such that $\cof{x'} < \cof{x}$ and $x' \suci x$ for every voter \ii. Let $S_x$ be a set such that $S_x \in \ruleof{\br}{}$ and $x \in S_x$. Let $S' = (S_x \setminus \curly{x}) \cup \curly{x'}$. Note that $S' \in \feasible$ since $S_x \in \feasible$ and $\cof{x'} < \cof{x}$. It is enough to prove that $S' \in \ruleof{\br}{}$.
    \begin{enumerate}[(a)]
        \item Let us prove for (i). For any arbitrary voter \ii, if $x \in A_i$, then $x' \in A_i$ since $x' \suci x$. Utility of the set $S'$ is at least that of $S_x$ since $\scoreof{x'} \geq \scoreof{x}$. Therefore, $S' \in \ruleof{\gtr}{}$. Now, we prove for (ii). For any arbitrary voter \ii, if $S_x \cap A_i \neq \emptyset$, then $S' \cap A_i \neq \emptyset$ since $x' \suci x$. Thus, the utility of $S'$ is at least as much as that of $S_x$ and $S' \in \ruleof{\gtr}{}$.
        \item Let \br be the PB-CC rule. Take any arbitrary voter $i$. If $\rof{x'} < \trunkkcc{\ebpara}{S_x} \leq \rof{x}$, then $\trunkkcc{\ebpara}{S'} < \trunkkcc{\ebpara}{S_x}$. Else, $\trunkkcc{\ebpara}{S'} = \trunkkcc{\ebpara}{S_x}$. Hence, utility of $S'$ is at least that of $S_x$. Since $S_x \in \ruleof{\br}{}$, $S' \in \ruleof{\br}{}$.
    \end{enumerate}
This completes the proof.
\end{proof}
\begin{theorem}\label{the: proaffordn}
	The dichotomous translation rule $\langle MT,f \rangle$ such that $f(S) = \cof{S}$ does not satisfy pro-affordability.
\end{theorem}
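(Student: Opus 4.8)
The plan is to refute the property with a single counterexample, guided by contrasting this rule against the $f(S)=|S|$ case already handled in Theorem \ref{the: proaffordp}. In that positive case the argument is that whenever a voter's translated vote $A_i$ contains the expensive project $x$, it must also contain the cheaper, weakly-more-preferred project $x'$: since $x' \suci x$ puts $x'$ at a weakly better rank and $\cof{x'} < \cof{x}$, the scheme $MT$ admits $x'$ whenever it admits $x$. Hence $x'$ always has at least as many approvers as $x$, so under cardinality utility the swap $x \mapsto x'$ can only help. Under $f(S)=\cof{S}$ the per-project contribution is cost-weighted: the total welfare of a set $S$ equals $\sum_{p \in S}\cof{p}\cdot|\{i : p \in A_i\}|$, so an expensive project sharing the same approver set as a cheap one strictly dominates it. I will exploit exactly this asymmetry, forcing a budget-level choice between $x$ and $x'$ that the cost-weighting resolves in favour of $x$.

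Concretely I would take $\proj = \{x,x'\}$ with $\cof{x}=3$, $\cof{x'}=2$, budget $\bud=4$, and a single voter whose weakly ordinal preference places $x$ and $x'$ in one tied equivalence class, so that $x' \suci x$ holds. The key computation is to run Algorithm \ref{algo: mtscheme}: the tied class has cost $\cof{x}+\cof{x'}=5>\bud$, so the greedy \texttt{while}-loop adds nothing, and the violation-class step then checks each project individually against the full remaining budget. Because $\cof{x}=3<4$ and $\cof{x'}=2<4$, both $x$ and $x'$ are admitted, giving $A_1=\{x,x'\}$. Thus $x$ and $x'$ have identical (singleton) approver sets, yielding scores $\cof{x}\cdot 1 = 3$ and $\cof{x'}\cdot 1 = 2$.

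It then remains to evaluate the rule: the feasible sets are $\emptyset$, $\{x\}$ (cost $3$, welfare $3$), and $\{x'\}$ (cost $2$, welfare $2$), while $\{x,x'\}$ has cost $5>\bud$ and is infeasible. Hence $\ruleof{\langle MT,f\rangle}{}=\{\{x\}\}$ and $\winners{\langle MT,f\rangle}{}=\{x\}$. Now $x \in \winners{\langle MT,f\rangle}{}$, and $x'$ satisfies $\cof{x'}<\cof{x}$ together with $x' \suci x$ for the only voter, yet $x' \notin \winners{\langle MT,f\rangle}{}$, so pro-affordability fails.

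The step I expect to be the crux is the joint tuning of costs and budget so that $MT$ approves \emph{both} projects while the outcome is simultaneously barred from containing both. This requires the chain $\cof{x'}<\cof{x}<\bud<\cof{x}+\cof{x'}$, which relies on $MT$'s violation-class rule testing each project's cost individually (strictly) against the remaining budget rather than cumulatively; this is precisely what lets two individually affordable but jointly unaffordable projects both enter $A_1$. Note also that inclusion maximality (Proposition \ref{the: inclusionmp}) cannot rescue $x'$ here, since the only winning set is $\{x\}$ and $\{x,x'\}$ is infeasible, so there is no feasible superset through which $x'$ could re-enter the winners.
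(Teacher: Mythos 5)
Your counterexample is correct and the computations check out: with the tied class $\{x,x'\}$ costing $5>\bud$, the while-loop of Algorithm \ref{algo: mtscheme} adds nothing, the marking step admits both projects individually (since $3<4$ and $2<4$), so $A_1=\{x,x'\}$; the cost-utility rule then uniquely selects $\{x\}$ (welfare $3$ versus $2$, with $\{x,x'\}$ infeasible), so $x\in\winners{\langle MT,f\rangle}{}$ while the cheaper, weakly preferred $x'$ is excluded, violating Definition \ref{def: proafford}.

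Your construction is genuinely different from the paper's, and the difference is worth noting. The paper's proof uses four projects costing $\{1,\bud-1,1,\bud\}$ and three voters with \emph{strict} rankings ($p_1 \succ p_2 \succ p_3 \succ p_4$ for one voter, $p_3 \succ p_4 \succ p_1 \succ p_2$ for two); the translation yields $A_1=\{p_1,p_2\}$ and $A_2=A_3=\{p_3\}$, and the rule selects $\{p_2,p_3\}$, dropping the unit-cost project $p_1$ that every voter strictly prefers to the winning $p_2$. Your instance is far more economical (one voter, two projects) and isolates the failure in its purest form: two projects with the identical approver set, where cost-weighting mechanically favors the expensive one. But this economy is bought by invoking the hypothesis $x' \suci x$ through a \emph{tie}: since \suci denotes the weak order $\succeq_i$, placing $x$ and $x'$ in one equivalence class formally satisfies the premise. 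That is legitimate under Definition \ref{def: proafford} as written, but it makes your counterexample sensitive to the reading of the axiom. Under the informal gloss ``less expensive and ranked higher by everyone,'' or any strict-preference variant $x' \succ_i x$, your instance no longer witnesses a violation: with $x'\succ_1 x$ the scheme gives $A_1=\{x'\}$, the rule selects $\{x'\}$, and pro-affordability is not contradicted. The paper's strict-preference instance remains a counterexample under either reading. So both proofs establish the theorem as formally stated; the paper's is more robust to the axiom's interpretation, while yours is minimal and pinpoints the cost-weighting asymmetry most cleanly.
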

\begin{proof}
	Suppose $f(S) = \cof{S}$ and \pbrule is \gtr. Consider an instance \instance with budget $\bud > 2$, four projects $\curly{p_1,p_2,p_3,p_4}$ respectively costing $\curly{1,\bud-1,1,\bud}$. Suppose there are three voters such that one voter has a ranking $p_1 \succ p_2 \succ p_3 \succ p_4$ and two voters have a ranking $p_3 \succ p_4 \succ p_1 \succ p_2$. Clearly $\ruleof{\pbrule}{} = \curly{\curly{p_2,p_3}}$ though $p_1$ is preferred over $p_2$ by all the voters and $\cof{p_1} < \cof{p_2}$.
\end{proof}
\begin{table*}[h!]
        \scalebox{0.8}{
	\begin{tabularx}{\textwidth}{c|cc|cc|cc|cc}
 \centering
		$\boldsymbol{f(S)}$                                           & \multicolumn{2}{c|}{$\boldsymbol{|S|}$}                                                                                             & \multicolumn{2}{c|}{$\boldsymbol{c(S)}$}                                                                                            & \multicolumn{2}{c|}{$\boldsymbol{\bool(|S|>0)}$}                                                                                    & \multicolumn{1}{c}{{\color[HTML]{009901} }}                     & \multicolumn{1}{c}{{\color[HTML]{009901} }}                        \\ \cline{1-7}
		\textbf{Property} $\boldsymbol{\downarrow}\;\;$ \textbf{Rule} $\boldsymbol{\rightarrow}$         & \multicolumn{1}{c}{{\color{orange} ${R_f}$}} & \multicolumn{1}{c|}{{\color{orange} $\boldsymbol{\langle MT,f \rangle}$}} & \multicolumn{1}{c}{{\color{orange} ${R_f}$}} & \multicolumn{1}{c|}{{\color{orange} $\boldsymbol{\langle MT,f \rangle}$}} & \multicolumn{1}{c}{{\color{orange} ${R_f}$}} & \multicolumn{1}{c|}{{\color{orange} $\boldsymbol{\langle MT,f \rangle}$}} & \multicolumn{1}{c}{\multirow{-2}{*}{{\color{blue} {CC}}}} & \multicolumn{1}{c}{\multirow{-2}{*}{{\color{blue} \textbf{PB-CC}}}} \\ \cline{1-9}
		Computational Complexity                         & P                                                 & P                                                                  & \NPH                               & \NPH                                                & \NPH                               & \NPH                                                & \WWH                                              & \WWH                                                 \\ \cline{1-9}
		{\color{orange} Splitting Monotonicity}    & $\checkmark$                                      & $\boldsymbol{\checkmark}$                                                       & $\checkmark$                                      & $\boldsymbol{\checkmark}$                                                       & $\checkmark$                                      & $\boldsymbol{\checkmark}$                                                       &                                                                  & {$\boldsymbol{\checkmark}$}                                               \\ \cline{1-7}\cline{9-9}
		{\color{orange} Discount Monotonicity}     & $\checkmark$                                      & $\boldsymbol{\times}$                                                           & $\times$                                          & $\boldsymbol{\times}$                                                           & $\checkmark$                                      & $\boldsymbol{\times}$                                                           &                                                                  & \textbf{$\boldsymbol{\checkmark}$}                                               \\ \cline{1-7}\cline{9-9}
		{\color{orange} Limit Monotonicity}        & $\times$                                          & $\boldsymbol{\times}$                                                           & $\times$                                          & $\boldsymbol{\times}$                                                           & $\times$                                          & $\boldsymbol{\times}$                                                           &                                                                  & \textbf{$\boldsymbol{\times}$}                                                   \\ \cline{1-7}\cline{9-9}
		{\color{orange} Inclusion Maximality}      & $\checkmark$                                      & $\boldsymbol{\checkmark}$                                                       & $\checkmark$                                      & $\boldsymbol{\checkmark}$                                                       & $\checkmark$                                      & $\boldsymbol{\checkmark}$                                                       & \multirow{-5}{*}{N.A.}                                           & \textbf{$\boldsymbol{\checkmark}$}                                               \\ \cline{1-9}
		{\color{blue} Candidate Monotonicity}    &                                                   & \textbf{$\boldsymbol{\checkmark}$}                                              &                                                   & \textbf{$\boldsymbol{\checkmark}$}                                              &                                                   & \textbf{$\boldsymbol{\checkmark}$}                                              & $\checkmark$                                                     & $\boldsymbol{\checkmark}$                                                        \\ \cline{1-1}\cline{3-3}\cline{5-5}\cline{7-9}
		{\color{blue} Non-crossing Monotonicity} & \multirow{-2}{*}{N.A.}                            & \textbf{$\boldsymbol{\checkmark}$}                                              & \multirow{-2}{*}{N.A.}                            & \textbf{$\boldsymbol{\checkmark}$}                                              & \multirow{-2}{*}{N.A.}                            & \textbf{$\boldsymbol{\times}$}                                                  & $\times$                                                         & $\boldsymbol{\times}$                                                            \\ \cline{1-9}
		Anonymity                                        & $\checkmark$                                      & $\boldsymbol{\checkmark}$                                                       & $\checkmark$                                      & $\boldsymbol{\checkmark}$                                                       & $\checkmark$                                      & $\boldsymbol{\checkmark}$                                                       & $\checkmark$                                                     & $\boldsymbol{\checkmark}$                                                        \\ \cline{1-9}
		Neutrality                                       & $\checkmark$                                      & $\boldsymbol{\checkmark}$                                                       & $\checkmark$                                      & $\boldsymbol{\checkmark}$                                                       & $\checkmark$                                      & $\boldsymbol{\checkmark}$                                                       & $\checkmark$                                                     & $\boldsymbol{\checkmark}$                                                        \\ \cline{1-9}
		Consistency                                      & $\checkmark$                                      & $\boldsymbol{\checkmark}$                                                       & $\checkmark$                                      & $\boldsymbol{\checkmark}$                                                       & $\checkmark$                                      & $\boldsymbol{\checkmark}$                                                       & $\checkmark$                                                     & $\boldsymbol{\checkmark}$                                                        \\ \cline{1-9}
		\textit{Pro-affordability}                       & N.A.                                              & $\boldsymbol{\checkmark}$                                                       & N.A.                                              & $\boldsymbol{\times}$                                                           & N.A.                                              & $\boldsymbol{\checkmark}$                                                       & N.A.                                                             & $\boldsymbol{\checkmark}$                                                        \\ 
	\end{tabularx}}
	\caption{Axiomatic properties of welfare-maximizing PB rules. \textcolor{orange}{Orange} color is used to mark the axioms and rules that are the extensions of those existing in the literature on \textcolor{orange}{dichotomous} preferences. In particular, $R_f$ denotes the PB rule under dichotomous preferences \cite{talmon2019framework}, whereas $\langle MT,f \rangle$ denotes our rule which extends $R_f$ to allow for weakly ordinal preferences (Sec. \ref{sec: mt}). \textcolor{blue}{Blue} color is used to mark the axioms and rules that extend the ones existing in the literature on \textcolor{blue}{strictly ordinal} preferences. In particular, CC denotes the multi-winner voting rule under strictly ordinal preferences \cite{chamberlin1983representative,elkind2017properties} and the PB-CC rule is our extension of the CC rule which allows the projects to have costs and the preferences to be weakly ordinal (Sec. \ref{sec: o-re-pbcc}). Orange-colored axioms are not applicable to the CC rule since the latter does not consider costs or budget. Blue-colored axioms are not applicable to the $R_f$ rules since a $R_f$ does not permit any ranking of projects. All axioms are applicable to our $\langle MT,f \rangle$ and PB-CC rules, since weakly ordinal preferences capture dichotomous and strictly ordinal preferences as special cases.}
	\label{tab: o-re-results}
\end{table*}
\subsubsection{Conclusions from the Axiomatic Analysis}\label{sec: axiomtakeaway}
As seen in \Cref{tab: o-re-results}, our proposed extensions preserve most properties satisfied by their corresponding parent rules. It is important to highlight that our dichotomous translation rules $\langle MT,f \rangle$, despite originating from a PB model with dichotomous preferences, outperform the CC and PB-CC rules in relation to axioms based on models with strictly ordinal preferences. Interestingly, the PB-CC rule, derived from the multi-winner voting model under strictly ordinal preferences, surpasses the existing dichotomous PB rules in relation to the axioms based on PB under dichotomous preferences. In other words, our orange-colored rules excel in relation to the blue-colored axioms, while our blue-colored rule performs exceptionally well with respect to the orange-colored axioms (in fact, even outperforming rules of the same color). This solidifies the significance of each extension: rankings play a vital role in the dichotomous translation rules, and similarly, costs, budget, and indifferent preferences between projects significantly influence the PB-CC rule. Notably, the PB-CC rule emerges as the most favorable in terms of satisfying axioms among the four rules examined.

\section{Fair Rules}\label{sec: o-re-fair}
Two fairness notions exist in the literature for indivisible PB under weakly ordinal preferences: Comparative Proportional Solid Coalitions (CPSC) and Inclusion Proportional Solid Coalitions (IPSC) introduced by Aziz and Lee \cite{aziz2021proportionally}.

\subsection{CPSC and IPSC}\label{sec: cpscipsc}
We briefly introduce the fairness notions of CPSC and IPSC, and discuss their drawbacks. We start by defining some necessary terminology to understand the definitions of CPSC and IPSC.
\begin{definition}[\textbf{Generalized Solid Coalition} \cite{dummett1984voting}]
    Given a PB instance \fullrinstance, we say that a subset of voters $N'$ \textbf{solidly supports} a set $P' \subseteq \proj$ if for any $i \in N'$ and $p' \in P'$, it holds that $p' \succeq_i p$ for every $p \in \proj \setminus P'$. Such a pair $(N',P')$ is said to be a \textbf{generalized solid coalition}.
\end{definition}
In other words, a set of projects $P'$ is solidly supported by a set of voters $N'$ if every voter in $N'$ prefers any project in $P'$ at least as much she prefers a project outside $P'$. We now define a \emph{periphery}, which is the collection all projects ranked equal to some project in $P'$ by some voter in $N'$. An example is illustrated in \Cref{fig: periphery}.
\begin{definition}[\textbf{Periphery} \cite{aziz2021proportionally}]\label{def: periphery}
    Given a PB instance \fullrinstance, \textbf{periphery} of a generalized solid coalition $(N',P')$ is the set of all projects ranked equal to some project in $P'$ by some voter in $V'$. In other words, periphery of $(N',P')$, denoted by $\boldsymbol{per(N',P')}$, is defined as $\curly{p: \exists i \in N', p' \in P' s.t.\; p \succeq_i p'}$.
\end{definition}
\begin{figure}
    \centering
    \includegraphics[width=0.7\linewidth]{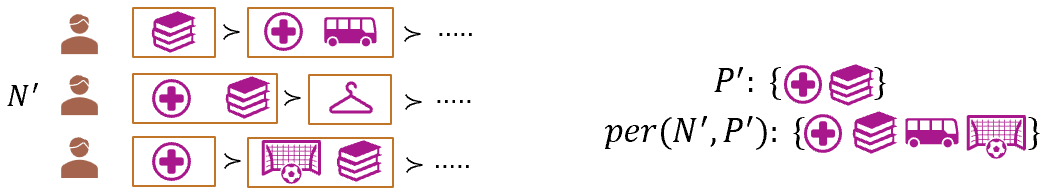}
    \caption{An example to illustrate \Cref{def: periphery}.}
    \label{fig: periphery}
\end{figure}
We are now ready to define the CPSC and IPSC notions. Intuitively, both the notions require that $\cof{S \cap per(N',P')}$ must be at least $\frac{\bud |N'|}{n}$. CPSC notion requires that, at the very least, if $\cof{S \cap per(N',P')} < \frac{\bud |N'|}{n}$, there is no subset $P'' \subseteq P'$ such that $\cof{S \cap per(N',P')} < \cof{P''} \leq \frac{\bud |N'|}{n}$.
\begin{definition}[\textbf{Comparative Proportional Solid Coalition (CPSC)} \cite{aziz2021proportionally}]\label{def: cpsc}
    Given an instance \instance, a set of projects $S$ is said to satisfy CPSC if for every generalized solid coalition $(N',P')$ and every set $P'' \subseteq P'$ such that $\cof{P''} \leq \frac{\bud |N'|}{n}$, it holds that $\cof{S \cap per(N',P')} \geq \cof{P''}$.
\end{definition}
 IPSC notion requires that, at the very least, if $\cof{S \cap per(N',P')} < \frac{\bud |N'|}{n}$, there is no project $p^* \in P'\setminus S$ such that $\cof{p^* \cup (S \cap per(N',P'))} \leq \frac{\bud |N'|}{n}$.
\begin{definition}[\textbf{Inclusive Proportional Solid Coalition (IPSC)} \cite{aziz2021proportionally}]\label{def: ipsc}
    Given an instance \instance, a set of projects $S$ is said satisfy IPSC if for every generalized solid coalition $(N',P')$ and every project $p^* \in P' \setminus S$, it holds that $\cof{p^* \cup (S \cap per(N',P'))} >\frac{\bud |N'|}{n}$.
\end{definition}
Notably, several rules in the literature, such as the Method of Equal Shares by Pierczyński et al. \cite{pierczynski2021proportional} and the Expanding Approval Rules by Aziz and Lee \cite{aziz2021proportionally} satisfy CPSC and/or IPSC.
\subsubsection{Limitations of CPSC and IPSC}\label{sec: limitations}
A drawback of both these notions is that both of them assume the preferences to be complete. Moreover, CPSC is not always guaranteed to exist \cite{aziz2021proportionally}. Another drawback is that, these notions look at coalitions of voters, not paying enough attention to individual fairness. Thus, many a time, CPSC and IPSC can unreasonably upset almost all the voters as illustrated in the below example.
\begin{example}\label{eg1: arsg}
	Let $\bud = 100$ and $n = 10$. \Cref{fig: motivation_args} below demonstrates the preferences of the voters. Let $p_1$ be the project whose cost is $10$, and $p_2$ be the be the library project costing $91$. Let all the other projects also cost $91$ each.
    \begin{figure}[!h]
    \centering
    \includegraphics[width=0.3\linewidth]{motivation_args.png}
    \repeatcaption{fig: motivation_args}{Preferences of voters in \Cref{eg1: arsg}}
    \end{figure}
    It is easy to note that $(1,p_1)$, i.e., the first voter and the project costing $10$, form a generalized solid coalition whose periphery is $\{p_1\}$. Any set $S$ that satisfies CPSC or IPSC must have $p_1$ (otherwise, \Cref{def: cpsc} does not hold for $P'' = \{p_1\}$ and  \Cref{def: ipsc} does not hold for $p^* = p_1$). Thus, $\{p_1\}$ is the only feasible set that satisfies CPSC or IPSC. However, $p_1$ is the least preferred project for $90\%$ of the voters. Moreover, $90\%$ of the budget remains unused. Whereas, the outcome $\curly{p_2}$ could be much more reasonable since it is the most preferred project for $90\%$ of the voters and the second best project for the remaining $10\%$ of them. It also makes use of $91\%$ of the budget.
\end{example}
From the above, we observe that CPSC and IPSC notions suffer from the drawback of encoding fairness as a hard constraint, which in turn may lead to sub-optimal choices. However, fairness is often not a hard constraint. For instance, in \Cref{eg1: arsg}, selecting $p_2$ is not really unfair to the group of $10$ similar voters. This is because, when the number of projects is high, it is safe to assume that the second ranked project is also satisfactory to the voters. This idea motivates the formulation of rules that \emph{optimize} fairness.

Our family of fair rules operate on two parameters: $k$ and $\theta$. The core idea of our fairness is to ensure that, on an average, at least an amount of $\theta$ is collectively allocated to the projects ranked at most $k$ by any voter. We introduce a class of rules, called Average Rank-Share Guarantee (ARSG) rules, that encapsulates this idea. It is worth mentioning that such share-based fairness notions are well studied in PB under flexible costs due to their wide relevance and flexibility (e.g., \cite{sreedurga2022characterization}). Here, we study such fairness for PB under restricted costs.

\subsection{Average Rank-Share Guarantee Rules}\label{sec: arsg}
We introduce the class of ARSG rules which offers the PB organizer a high degree of freedom to define fairness. The idea, as explained above, is to ensure that on an average, for each voter, at least a share of $\theta$ is allocated to the projects ranked at most $k$ in her preference. Before defining the rules formally, we explain a notation needed to understand the rules. Given a value $\theta \in \curly{1,2,\ldots,\bud}$ (called \emph{share}) and a set $S \subseteq A$, let us define \trunkk{\ebpara}{S} as the smallest \emph{rank} $k$ such that an amount \ebpara is allocated to projects in $S$ ranked at most $k$ by the voter $i$:
\begin{align*}
	\trunkk{\ebpara}{S} = \begin{cases} 
		\min \Big\{j : c\big(S \cap \bigcup\limits_{q \leq j}\pat{q}\big) \geq \ebpara\Big\} & \text{if }\cof{S} \geq \ebpara \\
		m+1 & \text{otherwise} 
	\end{cases}
\end{align*}
The higher \trunkk{\ebpara}{S} is, the farther is the rank before which at least the share of \ebpara is allocated for the voter $i$. This is further illustrated below with an example.
\begin{example}\label{eg2: arsg}
	Let $\bud = 12$ and $A = \{p_1,p_2,p_3,p_4,p_5\}$ with costs $4,2,5,3,$ and $2$ respectively. Let us suppose $\ebpara = 7$. Suppose there are two voters whose preferences are as below.
	\begin{table}[h!]
        \centering
		\delimitershortfall=0pt
		\setlength{\dashlinegap}{1pt}
		\scalebox{0.9}{
			\begin{tabular}{ccc!{\color{orange}\vrule}cccc}
				${\textcolor{orange}{p_1}}$ & $\succ_1$ & $\{{p_2},\textcolor{orange}{p_4}\}$ & $\succ_1$ & ${\textcolor{orange}{p_3}}$\\
				$4$ & & $2,\;3$ & & $5$\\
		\end{tabular}}\\
		~\\
		\scalebox{0.9}{
			\begin{tabular}{c!{\color{orange}\vrule}cccc}
				$\{{\textcolor{orange}{p_3}},\textcolor{orange}{p_4}\}$ & $\succ_2$ & ${\textcolor{orange}{p_1}}$ & $\succ_2$ & $p_5$\\
				$5,\;3$ & & $4$ & & $2$\\
		\end{tabular}}
	\end{table}
	{A set $S = \curly{p_1,p_3,p_4}$ is represented by orange color. Then for voter $1$, the rank before which the share of \ebpara is guaranteed is $\trunks_1(7,S) = 2$ since $\cof{p_1}+\cof{p_4} \geq 7$ and the rank of $p_4$ is $2$. Ranks before which the shares are allocated are marked for both the voters: $\trunks_1(7,S) = 2$ and $\trunks_2(7,S) = 1$.}
\end{example}
We are now ready to define two families of ARSG rules: Average Rank Guarantee (ARG) rules and Share Guarantee (SG) rules.
\subsubsection{Average Rank Guarantee Rules}\label{sec: o-re-arg}
An average rank guarantee (ARG) rule takes a parameter $k$ (a rank) and selects an outcome which, on average, maximizes the share guaranteed to be allocated to the projects ranked at most $k$ by each voter. The intuition behind the rule is that the PB organizer considers the top $k$ ranks to be satisfactory for a voter (e.g., if there are $100$ projects, an outcome that allocates high amount to the top $3$ ranked projects may be safely assumed to be reasonably fair towards the voter).

The rule is clearly explained with the pseudocode in Algorithm \ref{algo: arg}. Condition in the line 2 of Algorithm \ref{algo: arg} calculates the optimal sum of ranks of all the voters before which a share of \ebpara is allocated and checks if it is at most $kn$. That is, the condition checks if it is possible, on an average, to allocate a share of \ebpara to the top $k$ projects of all the voters. Lines 1 and 3 of the code together find the maximum \ebpara such that this condition in line 2 holds. Line 4 outputs all the optimal subsets for such a \ebpara. This is also illustrated in \Cref{eg3: arsg}.
\begin{algorithm}[h!]
	\DontPrintSemicolon
	\KwIn{An ordinal PB instance \fullrinstance, a rank $k \in \curly{1,2,\ldots,m}$}
	\KwOut{A feasible subset of projects $S$}
	\For{$\ebpara = L$ \texttt{to} $1$}{
		\If{$\min\limits_{S \in \feasible}{\suml{i \in \voters}{\trunkk{\ebpara}{S}}} \leq kn$}{
			break;\;
		}
	}
	$X \gets \{S^*: S^* \in \argmi{S \in \feasible}{\suml{i \in \voters}{\trunkk{\ebpara}{S}}}\}$;\;
	\Return{$X$}\;
	\caption{ARG rule}
	\label{algo: arg}
\end{algorithm}
\begin{example}\label{eg3: arsg}
	Consider the \Cref{eg1: arsg}. Let us consider the average rank guarantee (ARG) rule with the rank parameter given by $k = 2$. That is, the rule considers the top two ranks to be satisfactory ranks and tries to maximize the share guaranteed within the top two ranks to all the voters.
	
	We start with $\ebpara = \bud = 100$, as given on line 1 of Algorithm \ref{algo: arg}. If $S = \curly{p_1}$, $\trunkk{\ebpara}{S} = m+1$ for every voter $i$, where $m$ is the number of projects. Assuming that $m > 2$, $\sum_{i \in \voters}{\trunkk{\ebpara}{\curly{p_1}}} > 3n$. If $S = \curly{p_2}$, $\trunks_1(\ebpara,S) = 2$ and $\trunkk{\ebpara}{S} = 1$ for every other voter. Therefore, $\sum_{i \in \voters}{\trunkk{\ebpara}{\curly{p_2}}} < 2n$. Thus, the condition on the line 2 of Algorithm \ref{algo: arg} is satisfied and we see that the for loop is broken for $\ebpara = L$ itself. As seen above, the set $S=\curly{p_2}$ is that set that minimizes $\sum_{i \in \voters}{\trunkk{100}{S}}$ and hence $X = \curly{\curly{p_2}}$ is the outcome of the ARG rule with rank parameter $k = 2$. Notably, it can be checked that $X$ continues to be the outcome for the ARG rule with any parameter $k$ (When $k = 1$, no \ebpara satisfies condition on line 2 in Algorithm \ref{algo: arg} and hence the loop ends with $\ebpara = 1$. Even then, $\curly{p_2}$ continues to be the unique set that minimizes $\sum_{i \in \voters}{\trunkk{1}{S}}$).
\end{example}
\subsubsection{Share Guarantee Rules}\label{sec: o-re-sg}
A share guarantee (SG) rule takes a parameter $\ebpara$ (called share) and selects an outcome which minimizes the average rank before which the projects are guaranteed to be allocated an amount of \ebpara for every voter. The intuition behind it is that the PB organizer considers the amount $\ebpara$ to be the share of budget required to make a voter happy and tries to minimize the rank within which this share is allocated to each voter.

Clearly, given a share parameter \ebpara, the optimal sum of ranks before which \ebpara gets allocated for every voter is given by $\min\limits_{S \in \feasible}\sum\limits_{i \in \voters}\trunkk{\ebpara}{S}$. Also, the minimum average rank before which \ebpara is guaranteed to get allocated for every voter must be $\large\lceil{\frac{\min\limits_{S \in \feasible}\sum\limits_{i \in \voters}\trunkk{\ebpara}{S}}{n}}\large\rceil$ and the corresponding set $S$ achieving this value must be included in the outcome. We formally define the SG rule below and illustrate it in \Cref{eg4: arsg}.
\begin{definition}\label{def: sg}
	A share guarantee (SG) rule with a parameter $\ebpara \in \curly{1,2,\ldots,\bud}$ selects every feasible subset $S \in \feasible$ that minimizes $\sum_{i \in \voters}{\trunkk{\ebpara}{S}}$.
\end{definition}

\begin{example}\label{eg4: arsg}
	Consider the \Cref{eg1: arsg}. For any share parameter $\ebpara \geq 1$, it can be observed that $\trunks_1(\ebpara,\curly{p_2}) = 2$ and $\trunkk{\ebpara}{\curly{p_2}} = 1$ for every other voter $i$. Thus, the SG rule outputs $\curly{\curly{p_2}}$.
	
	Similarly, consider the \Cref{eg3: arsg}. The set $S$ has the optimum value of $\sum_{i \in \voters}{\trunkk{7}{S}}$, equal to $3$ (Note that $\trunks_1(7,S) > 1$ for any set since $\cof{p_1} = 4$).
\end{example}
\subsubsection{Computational Complexity}\label{sec: faircc}
We study the computational complexity of average rank-share guarantee rules. We start by making a small yet important observation. The next theorem follows from this observation and \Cref{the: pbcc}.
\begin{observation}\label{obs: faircc}
	The SG rule with share parameter $\ebpara = 1$ is equivalent to the PB-CC rule.
\end{observation}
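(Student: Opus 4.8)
The plan is to show that, for $\ebpara = 1$, the objective minimised by the SG rule and the objective maximised by the PB-CC rule coincide as functions of $S$ up to a fixed additive constant, so that the two rules return identical collections of winning feasible subsets. The starting point is to unpack $\trunkk{1}{S}$ from its definition. For $\ebpara = 1$ the outer guard $\cof{S} \geq \ebpara$ is vacuous for any nonempty $S$, so $\trunkk{1}{S}$ is the smallest rank $j$ with $\cof{S \cap \ptill{j}} \geq 1$, where $\ptill{j} = \bigcup_{q \leq j}\pat{q}$ is the union of the first $j$ equivalence classes of $\succeq_i$.

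The key elementary step is to exploit integrality of the costs. Since $c : \proj \to \mathbb{N}$ assigns every project a cost of at least $1$, any nonempty set of projects has cost at least $1$; hence $\cof{S \cap \ptill{j}} \geq 1$ holds if and only if $S \cap \ptill{j} \neq \emptyset$. Therefore $\trunkk{1}{S} = \min\{\, j : S \cap \ptill{j} \neq \emptyset \,\}$, which is exactly the best (smallest) rank attained by a project of $S$ in $\succeq_i$, i.e. $\trunkk{1}{S} = \min_{p \in S} r_i(p) = \trunkkcc{}{S}$, whenever $S$ contains at least one project that $i$ has ranked. I would then record the boundary convention: when $S$ meets none of voter $i$'s ranked projects, both quantities take their maximal ``worst-rank'' value $m+1$, so the pointwise identity $\trunkk{1}{S} = \trunkkcc{}{S}$ holds for every voter $i$ and every $S \in \feasible$.

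With this identity in hand the conclusion is immediate: summing over voters gives $\sum_{i \in \voters}\trunkk{1}{S} = \sum_{i \in \voters}\trunkkcc{}{S}$ for all feasible $S$. The SG rule with $\ebpara = 1$ (Definition \ref{def: sg}) selects the feasible sets minimising the left-hand sum, while the PB-CC rule (Definition \ref{def: pbcc}) selects those maximising $\sum_{i \in \voters}\big(m - \trunkkcc{}{S}\big) = m|\voters| - \sum_{i \in \voters}\trunkkcc{}{S}$. Because $m|\voters|$ is a constant independent of $S$, maximising the PB-CC objective is equivalent to minimising $\sum_{i \in \voters}\trunkkcc{}{S}$, and hence to minimising $\sum_{i \in \voters}\trunkk{1}{S}$. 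The two rules therefore have the same set of optimisers, proving the observation; combined with \Cref{the: pbcc}, this also transfers the $\WWH$ness to the SG rule with $\ebpara = 1$.

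The only delicate point — and the one I would treat most carefully — is the boundary case where a set contains no project ranked by some voter. Here the argument hinges on the rank convention for unranked projects under incomplete preferences: I would fix this convention (an unranked project is treated as strictly less preferred than every ranked one, so its effective rank is $m+1$, matching both the $m+1$ branch of $\trunkk{1}{S}$ and the empty-$\min$ case $\min\emptyset = m+1$) and verify that under it the pointwise equality $\trunkk{1}{S} = \trunkkcc{}{S}$ is exact rather than merely off by a set-dependent amount. Everything else is a direct substitution, once integrality of the costs has been used to replace the threshold $\cof{S \cap \ptill{j}} \geq 1$ by nonemptiness of $S \cap \ptill{j}$.
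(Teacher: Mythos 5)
The paper records this statement as a bare observation with no accompanying proof, so there is no argument of the authors' to compare against; your write-up supplies the justification the paper leaves implicit, and its core is correct. For $\ebpara = 1$, integrality of the costs turns the threshold $\cof{S \cap \bigcup_{q \leq j}\pat{q}} \geq 1$ into nonemptiness, so $\trunkk{1}{S}$ is the least rank of a project of $S$ in $\succeq_i$; summing over voters and using the affine map $x \mapsto mn - x$ then identifies the optimizer sets of the two rules. Your insistence on the uncovered-voter boundary case is also exactly right, not pedantry: the paper never fixes a rank convention for projects outside $A_i$, and the equivalence genuinely requires the two sides to use the \emph{same} convention. If SG charges $m+1$ to an uncovered voter while PB-CC effectively charges $m$ (utility $0$), the two objectives differ by the number of uncovered voters, which is set-dependent, and one can build three-project, two-voter instances where the optimizer sets then disagree.

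One imprecision should be fixed. You gloss $\bigcup_{q \leq j}\pat{q}$ as $\ptill{j}$, ``the union of the first $j$ equivalence classes.'' Under weak orders these are different objects: $\pat{q}$ is the set of projects of rank exactly $q$, where rank is one plus the number of strictly preferred projects, so $\bigcup_{q \leq j}\pat{q}$ is the set of projects of rank at most $j$, whereas $\ptill{j}$ is indexed by equivalence classes. For the preference $\{p_1,p_2\} \succ_i p_3$, the project $p_3$ lies in the second equivalence class but has rank $3$. Under your gloss, $\min\{j : S \cap \ptill{j} \neq \emptyset\}$ would be the class index of the best project of $S$ rather than its rank, and minimizing summed class indices is a genuinely different rule that is \emph{not} equivalent to PB-CC. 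Your final identity $\trunkk{1}{S} = \min_{p \in S} r_i(p)$ is correct precisely because the SG definition is rank-indexed, not class-indexed; restate the intermediate step in those terms and the proof is complete.
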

\begin{lemma}\label{the: sgcc}
	Deciding the SG rule is \WWH with respect to its share parameter \ebpara.
\end{lemma}
Now, let us again look at Algorithm \ref{algo: arg}. SG rule needs to be solved at line 2 and 4 of Algorithm \ref{algo: arg}. However, line 1 could further be optimized significantly. For example, we could use binary search instead of linear search since the function $\sum_{i \in \voters}{\trunkk{\ebpara}{S}}$ is monotonic in \ebpara. Thus, executing the ARG rules requires executing at least one, and at most $O(\log \bud)$ SG rules as subroutines. The next result follows from \Cref{the: sgcc}.
\begin{lemma}\label{the: argcc}
	Deciding the ARG rule is \WWH with respect to its rank parameter $k$.
\end{lemma}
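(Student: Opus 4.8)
The plan is to establish \WWH-hardness of the ARG rule by reducing from the SG rule, whose \WWH-hardness with respect to the share parameter was just established in \Cref{the: sgcc}. The essential observation driving this reduction is already embedded in the discussion preceding the statement: executing an ARG rule amounts to finding the largest share \ebpara for which the average optimal guarantee-rank falls within the target rank $k$, and the inner optimization at each candidate \ebpara is precisely an SG computation. So the natural strategy is to show that a single SG decision query can be encoded as an ARG query, thereby transferring the hardness.

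First I would make the connection between the two rules precise at the level of \emph{decision} problems. The decision version of SG, parameterized by \ebpara, asks whether there exists $S \in \feasible$ with $\sum_{i \in \voters}{\trunkk{\ebpara}{S}}$ at most some target $T$. By \Cref{obs: faircc} and \Cref{the: sgcc}, this is \WWH with respect to \ebpara. The decision version of ARG, parameterized by the rank $k$, asks whether the rule's chosen share (the maximum \ebpara satisfying $\min_{S \in \feasible}\sum_{i \in \voters}\trunkk{\ebpara}{S} \leq kn$) and its associated optimal sets meet a given criterion. The key structural fact I would exploit is monotonicity: for a fixed instance, $\min_{S \in \feasible}\sum_{i \in \voters}\trunkk{\ebpara}{S}$ is non-decreasing in \ebpara, since demanding a larger share can only push the guarantee-ranks outward. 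This monotonicity means the ARG computation performs a search over \ebpara, and a single threshold test ``does there exist $S$ with $\sum_i \trunkk{\ebpara}{S} \leq kn$'' is exactly an SG feasibility query with target $kn$.

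The reduction itself would take an SG instance $\langle \voters,\proj,c,\bud,\rprof \rangle$ together with its share parameter \ebpara and target $T$, and produce an ARG instance on essentially the same profile with a carefully chosen rank parameter $k$. Concretely, given the SG query ``is $\min_{S}\sum_i \trunkk{\ebpara}{S} \leq T$?'', I would set the ARG rank parameter to $k = \lfloor T/n \rfloor$ (or handle the general target by observing that ARG naturally compares against multiples $kn$; if $T$ is not a multiple of $n$ one pads the profile with dummy voters so that the threshold $kn$ lands exactly at $T$). Since the parameterized reduction must map the parameter \ebpara of the source to a function of the parameter $k$ of the target depending only on \ebpara, I must ensure that \ebpara is preserved or transferred correctly: the cleanest route is to keep the \emph{same} share value \ebpara active inside the constructed ARG instance by engineering the profile so that \ebpara is precisely the share the ARG search settles on. This is the step requiring the most care, because the ARG rule chooses \ebpara itself rather than receiving it as input, so I must design the instance — using auxiliary projects and voters whose guarantee-ranks force the search to halt at the intended \ebpara — so that the ARG outcome encodes the answer to the SG query at that particular share.

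The main obstacle, therefore, is not the monotonicity or the threshold arithmetic, both of which are routine, but rather controlling \emph{which} \ebpara the ARG rule selects. Because ARG optimizes over \ebpara internally, a faithful reduction must gadget the instance so that the internal maximization is pinned to the target share coming from the SG instance, and simultaneously ensure the parameter bound $k$ remains a function of \ebpara alone (as \Cref{def: parareduction} demands). I expect to resolve this by adding a controlled block of dummy projects with tailored costs and a block of dummy voters whose rankings make the guarantee-rank sum jump sharply precisely as \ebpara crosses the desired value, so the $\leq kn$ condition first fails exactly at the share of interest. Once that gadget is in place, verifying correctness of both directions of the equivalence and confirming that the construction runs in the time permitted by a parameterized reduction will be straightforward, and \Cref{the: argcc} follows.
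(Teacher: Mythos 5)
Your high-level instinct---that hardness should transfer from SG to ARG via the subroutine relationship and the monotonicity of $\min_{S \in \feasible}\sum_{i \in \voters}\trunkk{\ebpara}{S}$ in \ebpara---matches the paper, but the reduction you sketch breaks at exactly the step you dismiss as routine arithmetic. You set the ARG rank parameter to $k = \lfloor T/n \rfloor$, where $T$ is the SG decision target. By \Cref{def: parareduction}, the parameter of the constructed instance must be a function $h_1(\ebpara)$ of the source parameter alone. But in the instances witnessing \Cref{the: sgcc} (which, via \Cref{obs: faircc}, are PB-CC instances with $\ebpara = 1$, inheriting hardness from \Cref{the: pbcc}), the target $T$ is an unrestricted part of the input, growing on the order of $nm$, and is bounded by no function of \ebpara. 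So your $k$ is unbounded in the source parameter, the map is not a parameterized reduction, and even if the rest worked it could only yield classical $\mathsf{NP}$-hardness, not $\mathsf{W[2]}$-hardness with respect to $k$. The root cause is a conflation of two different thresholds: $kn$ is part of the \emph{rule's} internal search, while $T$ belongs to the \emph{decision} asked about the rule's output; encoding $T$ as $kn$ forces $k$ to blow up. Your pinning gadget has a second, logical problem: if the dummy voters force the line-2 condition of \Cref{algo: arg} to fail at the target share irrespective of the original instance, then ARG's selected share, and hence its output, no longer depends on the SG answer at \ebpara, so the output cannot encode that answer; if instead that condition's status is left to the original instance, then which share ARG settles on is precisely the unknown being decided, and you still owe a mechanism for reading it off.

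Both difficulties evaporate once you use the fact that the hardness in \Cref{the: sgcc} is already concentrated at the \emph{constant} share $\ebpara = 1$, where SG coincides with the PB-CC rule (\Cref{obs: faircc}). This is what the paper's one-line argument relies on: ARG must execute SG as a subroutine (lines 2 and 4 of \Cref{algo: arg}), and one can arrange for that subroutine to be SG at share $1$ while the rank parameter stays constant. Concretely, keep $k = 1$ and add a single dummy voter whose only ranked project costs more than \bud. Her guarantee rank takes its maximum value for every share and every feasible subset, so the condition on line 2 of \Cref{algo: arg} fails for every \ebpara, the loop terminates at $\ebpara = 1$, and ARG outputs exactly the optimal sets of the PB-CC rule on the original instance (the dummy voter adds the same constant to every sum, and her project fits in no feasible set). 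Since the constructed rank parameter is the constant $1$, this is a valid parameterized reduction from deciding PB-CC, which is \WWH with respect to \bud by \Cref{the: pbcc}, and \Cref{the: argcc} follows.
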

\section{Conclusion and Discussion}\label{sec: o-re-conclusion}
We bridge the gap in the indivisible PB literature by studying weakly ordinal preferences (we also allow the preferences to be incomplete). We propose PB rules that maximize utilitarian welfare and those that achieve fairness.

\paragraph*{Welfare maximization.} For the welfare maximization, we extend the existing rules in the literature. The rules maximizing utilitarian welfare are only defined in the literature for some special cases of our model: PB under dichotomous preferences and multi-winner voting under strictly ordinal preferences. We extend the former rules to allow for incomplete weakly ordinal preferences (\Cref{sec: o-re-dtr}) and the latter to allow for weakly ordinal preferences, the projects having costs, and imposition of a budget constraint (\Cref{sec: o-re-pbcc}). We justify the significance and novelty of our extensions using an axiomatic analysis by proving that our extensions enhance the axiomatic properties. 

As depicted in \Cref{tab: o-re-results}, our proposed extensions preserve the properties exhibited by their respective parent rules. Moreover, it is worth highlighting that our dichotomous translation rules $\langle MT,f \rangle$ surpass the CC and PB-CC rules in terms of the axioms derived from models with strictly ordinal preferences, despite originating from a model with dichotomous preferences. Conversely, the PB-CC rule, derived from the multi-winner voting model under strictly ordinal preferences, outperforms the existing dichotomous PB rules in relation to the axioms based on PB under dichotomous preferences. Essentially, our orange-colored rules exhibit exceptional performance in relation to blue-colored axioms, and our blue-colored rule excels with respect to orange-colored axioms (in fact, their performance even surpasses rules of the same color). This underscores the significance of each extension: rankings play a crucial role in the dichotomous translation rules, while costs, budget, and indifference preferences between projects greatly influence the PB-CC rule. It is worth noting that among the four rules studied, the PB-CC rule emerges as the most favorable axiomatically.

\paragraph{Fairness.} Unlike welfare maximization, fairness for PB under weakly ordinal preferences is already studied in the literature \cite{aziz2021proportionally}. However, we identify some major drawbacks suffered by the existing fairness notions (\Cref{sec: limitations}) and propose two families of rules that fill this gap (Examples \ref{eg3: arsg} and \ref{eg4: arsg}). Rank-share based fairness (guaranteeing a share of \ebpara within the first $k$ ranks) is extensively studied in PB under totally flexible costs \cite{aziz2014generalization,aziz2018rank,airiau2019portioning,sreedurga2022characterization}. Here, we study rank-share based fairness for PB under restricted costs and these results are in fact, the first to focus exclusively on guaranteeing fairness at individual level for each voter (much like several fairness axioms in the fair division literature). Our perspective on fairness also offers PB organizer the flexibility to choose the parameters $k$ and \ebpara.

\paragraph{Future work.} This the first step to systematically study indivisible PB under weakly ordinal preferences, which leaves a lot of room for future directions. Welfare maximization is studied for more sophisticated models of PB with dichotomous preferences (projects grouped into categories \cite{jain2020participatory,jain2020participatoryg} etc). Also, there are rules other than the CC rule (e.g., positional scoring rules) for multi-winner voting under ordinal preferences that maximize utilitarian welfare. An interesting direction would be to extend these rules to our model and check if the new rules perform satisfactorily w.r.t. all the axioms. From the fairness point of view, since the proposed ARSG rules are proved to be computationally intractable, it will be interesting to find tractable special cases with structured preferences.

Throughout this chapter, we assumed that the cost of a project is restricted to only one value. In the next chapter, we study the model where the cost of each project is totally flexible and can take any value.
\blankpage
\chapter{Flexible Costs: Characterization of Group-Fair and Individual-Fair Rules under Single-Peaked Preferences}\label{chap: o-fl}

\begin{quote}
\textit{We delve into the concept of fairness in participatory budgeting when project costs are totally flexible. In this scenario, participatory budgeting can be seen as a random social choice problem, where a voting rule outputs a probability distribution over the set of projects indicating the fraction of budget allocated to each project. We assume that the preferences of voters belong to a special domain of ordinal preferences, referred to as single-peaked domain in the literature. In the single-peaked domain, construction and characterization of random social choice rules that satisfy properties such as unanimity and strategy-proofness have been extensively studied in prior works.}

\textit{Expanding on the current research, we include fairness considerations in the study of random social choice. This chapter specifically addresses fairness among groups of voters. We consider an existing partition of the voters into logical groups, based on natural attributes such as race and location.  To capture intra-group fairness, we introduce the concept of group-wise anonymity. To ensure fairness across groups, we propose two novel notions: weak group entitlement guarantee (weak-GEG) and strong group entitlement guarantee (strong-GEG). We provide two separate characterizations of  random social choice rules that satisfy group-fairness: (i) direct characterization (ii) extreme point characterization (expressed as probability mixtures of rules that allocate probability $1$ to a single project).}
\end{quote}

\section{Introduction}\label{sec: o-fl-intro}
Participatory budgeting (PB) involves allocating an available budget to a set of projects by aggregating the preferences of voters over the projects. Dichotomous preferences allow the voters to report a set of projects they like. Alternatively, ordinal preferences allow a voter to rank the projects according to the value she has for each of them. Ordinal preferences turn out to be the most-liked preference elicitation method by the voters due to their expressibility and cognitive simplicity \cite{benade2018efficiency}. 

In this chapter, we focus on ordinal preferences and the case where the project costs are totally flexible. In other words, projects are not associated with fixed costs and a PB rule may allocate \emph{any amount} to each project (\Cref{sec: prelims-costs}). For example, the projects could be to award scholarships to different researchers or to fund different environment-friendly causes. Any amount, however little or large, can be allocated to each of these causes. This model is also referred to as \emph{bounded divisible PB, portioning,} and \emph{fair mixing} in the literature \cite{aziz2021participatory}.

The outcome of a PB rule when the costs are totally flexible shall be a $m$-sized vector (recall that $m$ is the number of projects), such that all entries in the vector sum up to the value of available budget. The budget can be normalized to be $1$. After such a normalization, the outcome of a PB rule shall be a vector whose entries are fractions between $0$ and $1$, such that sum of all the entries is $1$. Clearly, such an output can be viewed as a probability distribution over the set of projects. This makes our model equivalent to \emph{random social choice} model, which is a probabilistic version of \emph{deterministic social choice} model. In deterministic social choice, preferences of the voters are aggregated to output one of the available alternatives. Contrarily, random social choice outputs a probability distribution over the available alternatives, and is thus equivalent to our model where projects are considered to be the alternatives.

In the conventional social choice literature, there are two fundamental properties that are considered essential and non-negotiable: unanimity and strategy-proofness. Unanimity entails that if all voters rank a particular project as their top choice, that project must be selected with a probability of 1. Failing to uphold this property can undermine the persuasiveness and credibility of the rule. Alternately, strategy-proofness requires that no voter can manipulate the outcome by misrepresenting their preferences to achieve a more favorable result. This property is regarded as indispensable, especially in settings where strategic behavior is prevalent. Unfortunately, unanimity and strategy-proofness have been proven to be incompatible in both deterministic and random social choice, unless the rules are dictatorships \cite{gibbard1973manipulation,satterthwaite1975strategy,gibbard1977manipulation}.

The incompatibility of unanimity and strategy-proofness prompted economists to identify different structures on the ordinal preferences for which the two properties become compatible. During this pursuit, seminal work by Black \cite{black1948rationale} introduced \emph{single-peaked domain}. Single-peakedness is an inherent structure on ordinal preferences that is often naturally exhibited in scenarios where alternatives can be ordered based on their intrinsic characteristics, such that every voter prefers an alternative closer to her top-ranked alternative over an alternative farther from it. For example, political parties can be ordered based on their ideology \cite{black1948rationale}, products based on their utility \cite{barbera1997strategy,bade2019matching}, and public facilities based on their target audience and locations \cite{bochet2012priorities}. Tideman \cite{tideman2017collective} accessed numerous instances of  real-life ordinal preference ballots and found out that most of them were single-peaked.

Wide applicability of the single-peaked domain led to its extensive study in social choice \cite{fotakis2016conference,brandt2012computational,sprumont1991division,barbera1997strategy}.  In this domain, Moulin \cite{moulin1980strategy} established that all deterministic social choice rules that are unanimous and strategy-proof can be classified as min-max rules, while those that additionally satisfy anonymity are known as median rules. Building upon this foundation, Ehlers et al. \cite{ehlers2002strategy}, Peters et al. \cite{peters2014probabilistic}, and Pycia and Ünver \cite{pycia2015decomposing} provided characterizations for all random social choice rules adhering to these properties. This chapter views random social choice as participatory budgeting model, introduces novel fairness notions based on this perspective, and characterizes all the fair rules that satisfy unanimity and strategy-proofness.
\begin{note}
    Throughout the chapter, for the sake of uniformity in the thesis, we present the existing results on deterministic and random social choice also in terms of participatory budgeting. Particularly, we refer to the alternatives as \emph{projects}.
\end{note}

\section{Prior Relevant Work}\label{sec: o-fl-lit}
In this section, we summarize the existing results on individual-fairness and group-fairness in random social choice and emphasize the gaps in the literature.

Under dichotomous preferences, each voter $i$ reports a set $A_i$ of projects she likes. For such a model, Bogomolnaia et al. \cite{bogomolnaia2005collective} introduced two notions of individual-fairness: fair outcome share and fair welfare share. Fair outcome share ensures that each set $A_i$ is assigned a probability of at least $|A_i|/m$, where $m$ is the number of projects. Fair welfare share, on the other hand, guarantees that $A_i$ receives a probability of at least $1/n$, where $n$ is the number of voters. The concept of fair welfare share is also known as individual fair share \cite{aziz2019fair}. Fair outcome share and fair welfare share have been extended to weakly ordinal preferences by Aziz and Stursberg \cite{aziz2014generalization} and Aziz et al. \cite{aziz2018rank}, respectively. Additionally, Airiau et al. \cite{airiau2019portioning} extended fair welfare share to strictly ordinal preferences. All these studies primarily focus on fairness at the level of individual voters.

The study of fairness for groups of voters has been rather limited in random social choice. The first kind of group-fairness considered is proportional sharing, studied by Duddy \cite{duddy2015fair} for dichotomous preferences and Aziz et al. \cite{aziz2018rank} for weakly ordinal preferences. This notion requires that for any subset $S$ of voters, the union of the most-liked projects of each of them receives a probability of at least $|S|/n$. The second kind of group-fairness considered is the core, which is primarily studied for cardinal preferences but can be applied both for dichotomous and ordinal preferences \cite{aziz2019fair,fain2016core}. The core guarantees that for any subset $S$ of voters, there will not exist a partial distribution of $|S|/n$ (i.e., probabilities sum to $|S|/n$ instead of to $1$) which gives strictly higher utility to the voters in $S$. Both core and proportional sharing, when applied to strictly ordinal preferences, simply reduce to random dictatorship rule: a rule that selects a dictator from the voters uniformly at random and assigns probability of $1$ to her most preferred project.

\subsection{Limitations of Random Dictatorship}\label{sec: rdlimitations}
To understand one of the drawbacks of random dictatorship rule, consider the following simple example.
\begin{example}\label{eg: rdrule}
    Suppose there are $5$ voters and $10$ projects ordered as $p_1 \triangleleft \ldots \triangleleft p_{10}$. Let us suppose voters $\{1,2\}$ have the same preference $p_3
    \succ p_2 \succ p_4\ldots \succ p_{10} \succ p_1$, voter $3$ has the preference $p_2\succ p_3\succ p_4\ldots \succ p_{10} \succ p_1$, whereas the voters $\{4,5\}$ have the same preference $p_1 \succ p_2\ldots \succ p_{10}$. The random dictatorial rule allocates $\frac{2}{5}$ probability to $p_1$ and $p_3$ each, and $\frac{1}{5}$ probability to $p_2$. As can be seen, $p_1$ is the least preferred project by two of the voters whereas $p_2$ is in the top two ranks of all the five preferences. Clearly, it is more desirable to  allocate higher probability to $p_2$ compared to $p_1$ from a societal viewpoint.
\end{example}
The above example motivates the idea of considering an existing partition of the voters into groups and having a set of \emph{representatives} for each group, instead of distributing the entire probability among only the top ranked projects of the voters. For instance, suppose the voters in \Cref{eg: rdrule} are naturally partitioned into two groups $\{1,2,3\}$ and $\{4,5\}$ and the voters in the same group have closely related preferences. Suppose $\{p_2,p_3\}$ are the representatives of the first group and $\{p_1,p_2,p_3\}$ are the representatives of the second group. Then, an outcome that distributes the probability between $p_2$ and $p_3$ will be collectively fairer towards all the voters since they will likely be happy with a probability close to, if not equal to, $1$.

As clear from the above discussion, imposing a fairness constraint on every subset of voters is a strong requirement that could lead to trivial and undesirable outcomes for strictly ordinal preferences. Often in real-world, we can find a natural partition of voters into groups based on factors such as gender, race, economic status, and location. It will be sensible and adequate to guarantee fairness, both within and across these existing groups.

Another serious drawback of the random dictatorial rule is its assumption of entitlement of exactly $\frac{1}{n}$ for each voter and exactly $\frac{|S|}{n}$ for every subset of voters $S$. However, in many real-world scenarios, the entitlement could be different for different groups. For example, the government may want underprivileged sections of society to have a suppose on higher fraction of budget. Such measures are often referred to with names such as affirmative actions, reservations etc. (\burl{https://en.wikipedia.org/wiki/Affirmative_action}).

We assume that a partition of the voters exists and introduce novel fairness notions overcoming both these drawbacks. We then characterize the fair rules satisfying other desiderata.
\section{Contributions of the Chapter}\label{sec: o-fl-contri}
We study the model where voters are naturally partitioned into groups. Our first key contribution is to propose three notions that capture fairness for groups of voters. To ensure fairness within each group or intra-group fairness, we propose \emph{group-wise anonymity}, which implies that the voters within any given group are treated symmetrically. To ensure fairness across groups, we first propose a weak notion of fairness, weak group entitlement guarantee (\emph{weak-GEG}) followed by a stronger variant (\emph{strong-GEG}) of the same. The PB organizer gets to choose three parameters: (1) the number of representatives to be selected for each group ($\kappa$); (2) a method of selecting them ($\psi$); and (3) the \resquota of each group ($\eta$). For every group $q$, the function $\psi_q$ selects some $\kappa_q$ projects to represent the preferences of voters in $q$. Our weak-GEG notion ensures that all these representatives collectively receive a probability of at least $\eta_q$, whereas strong-GEG ensures that at least one project from the $\kappa_q$ representatives receives a probability of at least $\eta_q$ (demonstrated in \Cref{fig: geg}).
\begin{figure}[!ht]
    \centering
    \includegraphics[width=0.12\linewidth, height=4.3cm]{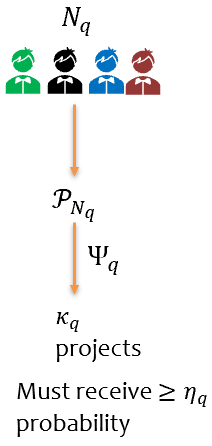}
    \caption{Illustration of Group-Entitlement Guarantee. $\rprof_{N_q}$ denotes the preference profile of the voters in the group $N_q$. The function $\psi_q$ selects $\kappa_q$ representative projects which together have an entitlement over $\eta_q$ probability.}
    \label{fig: geg}
    \end{figure}
    
Our second key contribution is to characterize the space of unanimous and strategy-proof social choice rules, both deterministic and random, that satisfy the proposed fairness notions. Random Social Choice Rules (RSCRs) are typically expressed in two ways in the literature. One way is to express them directly \cite{chatterji2022probabilistic,morimoto2022group} and the other is to express them as probability mixtures of deterministic social choice rules (DSCRs) which allocate the entire probability to a single project \cite{brand2023strategyproof}. It is crucial to emphasize that both these representations are incomparable, and the choice between them is purely a matter of personal preference and convenience. We present characterizations of fair RSCRs in both these ways: \emph{direct characterizations}  and \emph{extreme point characterizations} in which RSCRs are expressed as convex combinations of DSCRs.  To the best of our knowledge, we are the first ones to provide a complete characterization of group-fair rules on single-peaked domain. The characterization also serves as a foundation for identifying families of instances with algorithmically tractable fair rules. A few such families, along with the corresponding fair rules, are mentioned in \Cref{sec: o-fl-examples}.
\begin{figure}[h]
    \begin{subfigure}{.45\textwidth}
      \begin{tikzpicture}
        \draw [black] (-2.8,1.25) rectangle (3.75,-1.25) 
        node at (2.5,0.9) {\textbf{Strong-GEG}};
        \node at (0.5,0.2) {\textcolor{red}{\Cref{sec: dc_sfgwa}}};
        \node at (0.5,-0.3) {\textcolor{brown}{\Cref{sec: ep_sfgwa}}};

        \draw [black] (-3.0,3.75) rectangle (4.2,-1.6) 
        node at (3.1,3.45) {\textbf{Weak-GEG}};
        \node at (0.5,2.7) {\textcolor{red}{\Cref{sec: dc_wfgwa}}};
        \node at (0.5,2.2) {\textcolor{brown}{\Cref{sec: ep_wfgwa}}};

        \draw [black] (-3.2,7.75) rectangle (4.5,-1.95) 
        node at (1.38,7.40) {\textbf{Unanimity, Strategy-Proofness,}};
        \node at (2.08,6.85) {\textbf{Group-Wise Anonymity}};
        \node at (0.5,5.8) {\textcolor{red}{\Cref{sec: dc_gwa}}};
        \node at (0.5,5.3) {\textcolor{brown}{\Cref{sec: ep_gwa}}};
    \end{tikzpicture}
    \caption{Organization of results on the characterization of rules satisfying \textbf{\textit{group-fairness}}}
    \label{fig: gforganization}
    \end{subfigure}
    \hfill
    \begin{subfigure}{.45\textwidth}
      \begin{tikzpicture}
        \draw [black] (-2.8,1.25) rectangle (3.75,-1.25) 
        node at (2.57,0.9) {\textbf{Strong-IEG}};
        \node at (2.3,-0.3) {\textcolor{red}{\Cref{sec: dc_sieg}}};
        \node at (2.3,-0.8) {\textcolor{brown}{\Cref{sec: epc_sieg}}};

        \draw [black] (-3.0,3.75) rectangle (4.2,-1.6) 
        node at (3.17,3.45) {\textbf{Weak-IEG}};
        \node at (2.8,2.7) {\textcolor{red}{\Cref{sec: dc_wieg}}};
        \node at (2.8,2.2) {\textcolor{brown}{\Cref{sec: epc_wieg}}};

        \draw [black] (-3.2,7.75) rectangle (4.5,-1.95) 
        node at (1.40,7.40) {\textbf{Unanimity, Strategy-Proofness}};
        \node at (3.0,5.8) {\textcolor{red}{\Cref{sec: un_sp_dc}}};
        \node at (3.0,5.3) {\textcolor{brown}{\Cref{sec: un_sp_ep}}};

        \draw [black] (-2.5,5.75) rectangle (1.25,0) 
        node at (0.1,5.45) {\textbf{Anonymity}};
        \node at (-0.5,4.5) {\textcolor{brown}{\Cref{sec: median}}};
        \node at (-0.5,2.5) {\textcolor{brown}{\Cref{sec: wiegta}}};
        \node at (-0.5,0.72) {\textcolor{brown}{\Cref{sec: siegta}}};
    \end{tikzpicture}
    \caption{Organization of results on the characterization of rules satisfying \textbf{\textit{individual-fairness}}}
    \label{fig: iforganization}
    \end{subfigure}
\caption{Demonstration of organization of the chapter. Note that when each group is a singleton group and has only one voter, the rectangles in \Cref{fig: gforganization} coincide with the corresponding rectangles in \Cref{fig: iforganization}. The sections colored in \textcolor{red}{red} represent direct characterizations and those colored in \textcolor{brown}{brown} represent extreme point characterizations.}
\label{fig: organization}
\end{figure}
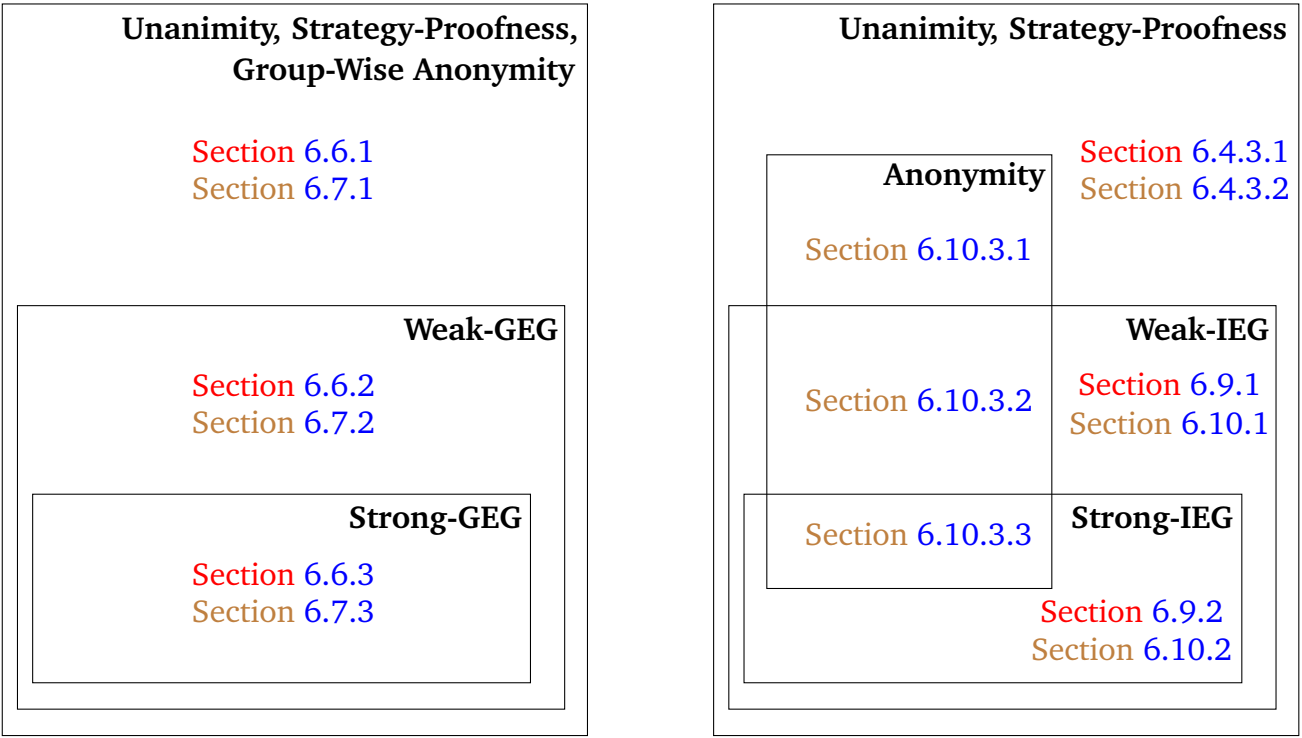
\paragraph{\textbf{Organization of the chapter.}} We commence by introducing essential preliminaries in \Cref{sec: o-fl-prelims} and describing the existing characterizations of unanimous and strategy-proof RSCRs. Following this, in \Cref{sec: o-fl-notions}, we introduce our model with the voters partitioned into groups, explain the parameters related to each group, and introduce our novel fairness notions. 
Subsequently, we provide direct (\Cref{sec: o-fl-direct}) and extreme point (\Cref{sec: o-fl-extreme}) characterizations of unanimous and strategy-proof RSCRs that satisfy these fairness notions.

Following this, in \Cref{sec: o-fl-indnotions}, we proceed to a special case of our model where there are no groups, or in other words, each group has exactly one voter. We propose the fairness notions \textit{weak-IEG} and \textit{strong-IEG} to guarantee individual-fairness and provide the characterizations of the unanimous and strategy-proof rules that satisfy these notions. We also discuss the case where, in addition to groups having only voter, anonymity holds across all the voters. Finally, we provide illustrative examples of some algorithmically tractable fair RSCRs for some families of instances (\Cref{sec: o-fl-examples}).
\section{Notations and Preliminaries}\label{sec: o-fl-prelims}
Let $N=\{1,\ldots,n\}$ be a finite set of voters. We assume that $n \geq 2$. Let $\proj =\{p_1,\ldots,p_m\}$ be a finite set of projects with a prior ordering $\triangleleft$ given by $p_1\triangleleft  \ldots \triangleleft p_m$. Such a prior ordering exists with respect to an intrinsic attribute of the projects such as location. Whenever we write minimum or maximum of a subset of \proj, we mean it with respect to the ordering $\triangleleft$ over \proj. By $p\trianglelefteq p'$, we mean $p=p'$ or $p \triangleleft p'$. For $p, p' \in \proj$, we define $[p,p']=\{\tilde{p} \mid \mbox{ either } p \trianglelefteq \tilde{p} \trianglelefteq p' \mbox{ or } p' \trianglelefteq \tilde{p} \trianglelefteq p\}$. By $(p,p')$, we define $[p,p'] \setminus \{p,p'\}$. For notational convenience, whenever it is clear from the context, we do not use braces for singleton sets, i.e., we denote $\{i\}$ by $i$.

The budget \bud is normalized to $1$. There are no costs associated with projects, that is, $\domainof{p} = (0,1]$. A project $p$ may receive any amount from $\domainof{p} \cup \{0\}$, i.e., from $[0,1]$.

\subsection{Preferences}\label{sec: pref}
A complete, reflexive, anti-symmetric, and transitive binary relation (also called a complete linear order) on a(ny) set $S$ is called a strictly ordinal preference on $S$. We denote by \comporder the set of all strictly ordinal preferences on \proj. For $\succ \in \comporder$ and $p,p' \in \proj$, $p \succ p'$ is interpreted as $p$ being strictly preferred over $p'$ according to $\succ$. For $\succ \in \comporder$ and $k \leq |\proj|$, by $\opat{}{k}$ we refer to  the $k$-th ranked project according to $\succ$, i.e., $\opat{}{k} = p$ if and only if $|\{p' \in \proj \mid p'\succ p \}|=k-1$. For $\succ \in \comporder$ and $p \in \proj$, the \textit{upper contour set} of $p$ at $\succ$, denoted by $U(p,\succ)$, is defined as the set of projects that are as good as $p$ in $\succ$, i.e., $U(p,\succ) = \{p\} \cup \{p' \in \proj \mid p'\succ p\}$.\footnote{Observe that $p\in U(p,\succ)$ by reflexivity.}
\begin{definition}\label{def: sp}
	A preference $\succ \in \comporder$ is called \textbf{single-peaked} if for all $p,p'\in \proj$, $[ \opat{}{1} \triangleleft p \triangleleft p' \mbox{ or } p' \triangleleft p \triangleleft \opat{}{1}]$ implies 
	$p\succ p'$.
	A set of preferences is called \textbf{single-peaked} if each preference in it is single-peaked.
\end{definition}
Intuitively, it means that the preference $\succ$ is such that a project closer to \opat{}{1} (according to the order $\triangleleft$) is strictly preferred over a project farther from it. Let $\mathcal{D}$ be the set of all single-peaked preferences on \proj. Any upper contour set at a single-peaked preference $\succ \in \mathcal{D}$ forms an interval w.r.t. the prior ordering $\triangleleft$ of projects. Collection of preferences of all voters in $N$ is denoted by $\rprof_N$ (we use subscript in this chapter since we need profiles of subsets of voters very often). That is, $\rprof_N \in \mathcal{D}^n$. For $\rprof_N\in \mathcal{D}^n$ and a \community $N_q$, we denote a preference profile $(\succ_i)_{i\in N_q}$ of the members of the \community $N_q$ by $\rprof_{N_q}$.
\subsection{Social Choice Rules and Their Properties}\label{sec: sc}
The set of all probability distributions over the set of projects \proj is denoted by $\Delta \proj$. For a probability distribution $\delta \in \Delta \proj$ and a project $p \in \proj$, $\delta(p)$ denotes the probability of $p$ in the distribution $\delta$.  A \textit{\textbf{Deterministic Social Choice Rule (DSCR)}} on the single-peaked domain $\mathcal{D}^n$ is a function $f:\mathcal{D}^n \to \proj$, and 
a \textit{\textbf{Random Social Choice Rule (RSCR)}} on $\mathcal{D}^n$ is a function $\varphi:\mathcal{D}^n \to \Delta \proj$. For a RSCR $\varphi:\mathcal{D}^n \to \Delta \proj$ and a project $p \in \proj$, $\varphi_p(\rprof_N)$ denotes the probability allocated to $p$ at the preference profile $\rprof_N \in \mathcal{D}^n$. In other words, $\varphi_p(\rprof_N)$ is a short-hand notation to denote $\varphi(\rprof_N)\large(p\large)$. Similarly, for a set $A \subseteq \proj$, we define $\varphi_A(\rprof_N)=\sum_{p \in A} \varphi_p(\rprof_N)$.

\begin{note}\label{note: rscr-pb}
    A RSCR $\varphi$ is equivalent to a PB rule under flexible costs (\Cref{def: pbrule-flexible}), where $\varphi_p(\rprof_N)$ denotes the fraction $x_p$ of budget allocated to project $p$ at the preference profile $\rprof_N$. Henceforth in this chapter, we refer to the PB rules as random social choice rules and the fraction of budget allocated to a project as the probability allocated to it.
\end{note}

In order to choose between various random social choice rules, we need a way for the voter to compare the outcomes of different rules. One possible way to do this is to define a utility notion which quantifies the benefit derived by each voter from a given probability distribution over the projects \cite{airiau2019portioning}. Another popular way of comparing two probability distributions is stochastic dominance \cite{bogomolnaia2001new}.

\begin{definition}
	A probability distribution $\delta \in \Delta \proj$ is said to \textbf{stochastically dominate} another probability distribution $\delta' \in \Delta \proj$ with respect to an ordinal preference $\succ$, denoted by $\delta \sd{} \delta'$, if and only if $$\delta(U(p,\succ)) \geq \delta'(U(p,\succ)) \qquad \forall p \in \proj.$$
 \end{definition}

Following are the two properties considered to be non-negotiable in the literature on random social choice. The first of them, unanimity, implies that a project unanimously ranked at the top by all the voters must be allocated a probability of $1$, i.e., the entire budget.
\begin{definition}
	A RSCR $\varphi:\mathcal{D}^n \to \Delta \proj$ is called \textbf{unanimous} if for all projects $p \in \proj$ and all $\rprof_N \in \mathcal{D}^n$, $$[\opat{i}{1}=p \mbox{ for all }i \in N ]  \Rightarrow [\varphi_p(\rprof_N)=1].$$ 
\end{definition}
The second property, strategy-proofness, implies that it is impossible for any voter $i$ to report a lie $\succ'_i$ instead of her true preference $\succ_i$ and get a stochastically dominant outcome.
\begin{definition}
    A RSCR $\varphi:\mathcal{D}^n \to \Delta \proj$ is called \textbf{strategy-proof} if for all $i\in N$, all $\rprof_N \in \mathcal{D}^n$, and all $\succ'_i \in \mathcal{D}$, we have $$\varphi(\succ_i,\succ_{-i}) \; \sd{i} \;\varphi(\succ'_i,\succ_{-i}),$$ where $\succ_{-i}$ denotes $(\succ_j)_{j \in N \setminus \{i\}}$.
\end{definition}

\subsection{Unanimous and Strategy-Proof RSCRs}\label{sec: un_sp}
The RSCRs satisfying unanimity and strategy-proofness are characterized in the literature using two approaches. The first approach gives a direct definition and characterization of RSCRs, whereas the second approach expresses RSCRs as a convex combination of DSCRs.
\subsubsection{Direct Characterization: Unanimity and Strategy-Proofness}\label{sec: un_sp_dc}
All the unanimous and strategy-proof RSCRs are characterized to be probabilistic fixed ballot rules \cite{ehlers2002strategy}. We present their definition below. Let $S(t; \rprof_N)$ denote the set of voters whose most preferred project lies at or before $p_t$, i.e., $\{i \in N: \opat{i}{1} \trianglelefteq p_t\}$.
\begin{example}\label{eg: dc_basic}
	Assume that there are four voters $\{1,2,3,4\}$ and there are three projects $\{p_1,p_2,p_3\}$. Consider a preference profile $\rprof_N = (\succ_1,\succ_2,\succ_3,\succ_4)$ such that $p_1\succ_1p_2\succ_1p_3$, $p_2\succ_3p_3\succ_3p_1$, and $p_3\succ p_2 \succ p_1$ for $P \in \{\succ_2,\succ_4\}$. The top-ranked projects are $(p_1,p_3,p_2,p_3)$. For this profile, $S(1; \rprof_N)= \{1\}$, $S(2; \rprof_N)=\{1,3\}$, $S(3; \rprof_N)=\{1,2,3,4\}$.
\end{example}

\begin{definition}\label{def: pfbr}
    A RSCR $\varphi$ on $\mathcal{D}^n$ is said to be a \textbf{probabilistic fixed ballot rule (PFBR)} if there is a collection $\{\delta_{S}\}_{S \subseteq N}$ of probability distributions satisfying the following two properties:
	\begin{enumerate}[(i)]
		\item  \textbf{Ballot Unanimity:} $\delta_{\emptyset}(p_m)=1$ and $\delta_{N}(p_1)=1$, and
		\item \textbf{Monotonicity}: for all $p_t \in \proj$, $S \subset T \subseteq N \implies \delta_S([p_1,p_t]) \leq \delta_T([p_1,p_t])$
	\end{enumerate}
	such that for all $\rprof_N\in \mathcal{D}^n$ and $p_t\in \proj$, we have
    $$\varphi_{p_t}(\rprof_N)= \delta_{S(t; \rprof_N)}([p_1, p_t])-\delta_{S(t-1; \rprof_N)}([p_1, p_{t-1}]);$$
    where $\delta_{S(0; \rprof_N)}([p_1, p_{0}])= 0$.
\end{definition}

\begin{example}\label{eg: pfbr}
	Consider \Cref{eg: dc_basic}. Consider a PFBR corresponding to the probabilistic ballots $\{\delta_{S}\}_{S \subseteq N}$ listed in Table \ref{tab: pfbr}. Clearly, the ballots satisfy ballot unanimity and monotonicity. The PFBR with this collection of probabilistic ballots works as follows: in \Cref{eg: dc_basic}, $S(1; \rprof_N)= \{1\}$, $S(2; \rprof_N)=\{1,3\}$, and $S(3; \rprof_N)=\{1,2,3,4\}$. We know that, the probability allocated to $p_2$ at this profile is $\delta_{S(2; \rprof_N)}([p_1, p_2])-\delta_{S(1; \rprof_N)}([p_1,p_1])$. From Table \ref{tab: pfbr}, $\delta_{\{1,3\}}([p_1,p_2])=0.8$ and $\delta_{\{1\}}([p_1,p_1])=0.3$. Thus, the probability of $p_2$ at $(\succ_1,\succ_2,\succ_3,\succ_4)$ is $0.5$. Similarly, we can compute other probabilities.
 \begin{table}
  \centering
	\begin{tabular}{ c | c | c | c ? c | c | c | c }
			 & $p_1$ & $p_2$ & $p_3$ && $p_1$ & $p_2$ & $p_3$ \\ 
			 \hline
			 $\delta_{\emptyset}$ & $0$ & $0$ & $1$ & $\delta_{\{1\}}$ & $0.3$ & $0.2$ & $0.5$ \\ 
			 \hline
			 $\delta_{\{2\}}$ & $0.1$ & $0.5$ & $0.4$ & $\delta_{\{3\}}$ & $0.2$ & $0.4$ & $0.4$\\ 
			 \hline
			  $\delta_{\{4\}}$ & $0.2$ & $0.4$ & $0.4$ & $\delta_{\{1,2\}}$ & $0.4$ & $0.3$ & $0.3$ \\ 
			  \hline
			   $\delta_{\{1,3\}}$ & $0.5$ & $0.3$ & $0.2$ & $\delta_{\{1,4\}}$ & $0.3$ & $0.4$ & $0.3$\\ 
			   \hline
			    $\delta_{\{2,3\}}$ & $0.4$ & $0.3$ & $0.3$ & $\delta_{\{2,4\}}$ & $0.5$ & $0.3$ & $0.2$\\ 
			    \hline
			     $\delta_{\{3,4\}}$ & $0.3$ & $0.4$ & $0.3$ & $\delta_{\{1,2,3\}}$ & $0.8$ & $0.2$ & $0$\\ 
			     \hline
			      $\delta_{\{1,2,4\}}$ & $0.8$ & $0.2$ & $0$ & $\delta_{\{1,3,4\}}$ & $0.9$ & $0.1$ & $0$\\
			      \hline
			       $\delta_{\{2,3,4\}}$ & $0.9$ & $0.1$ & $0$ & $\delta_{\{1,2,3,4\}}$ & $1$ & $0$ & $0$\\ 
			     \end{tabular}
		 		\caption{The probabilistic ballots $\{\delta_{S}\}_{S \subseteq N}$ for the PFBR in \Cref{eg: pfbr}}\label{tab: pfbr}
	 \end{table}
\end{example}

\begin{lemma}\label{the: un_sp_dc}
    A RSCR on $\mathcal{D}^n$ is unanimous and strategy-proof if and only if it is a probabilistic fixed ballot rule \cite{ehlers2002strategy}.
\end{lemma}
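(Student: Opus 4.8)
The plan is to prove this known characterization (due to Ehlers, Peters, and Storcken \cite{ehlers2002strategy}) as a biconditional, treating the two implications separately. The reverse direction---that every probabilistic fixed ballot rule (PFBR) is unanimous and strategy-proof---is the routine one, whereas the forward direction---that every unanimous and strategy-proof RSCR on $\mathcal{D}^n$ arises from some collection of ballots $\{\delta_S\}_{S \subseteq N}$ satisfying ballot unanimity and monotonicity---carries the real weight. Throughout I would lean on single-peakedness in the form that every upper contour set $U(p,\succ)$ is an interval of $\proj$ with respect to $\triangleleft$, since this is exactly what links the combinatorics of the sets $S(t;\rprof_N)$ to the stochastic-dominance comparisons.

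For the reverse direction I would first check unanimity. If $\opat{i}{1}=p_t$ for every $i \in N$, then $S(s;\rprof_N)=\emptyset$ for $s<t$ and $S(s;\rprof_N)=N$ for $s \geq t$; plugging this into the PFBR formula and using ballot unanimity ($\delta_{\emptyset}([p_1,p_{t-1}])=0$ since $\delta_{\emptyset}(p_m)=1$, and $\delta_{N}([p_1,p_t])=1$ since $\delta_N(p_1)=1$) makes the telescoping sum collapse to $\varphi_{p_t}(\rprof_N)=1$. For strategy-proofness, I would fix a voter $i$, a profile, and a misreport $\succ'_i$, and show $\varphi(\succ_i,\succ_{-i})\,\sd{i}\,\varphi(\succ'_i,\succ_{-i})$ by verifying $\varphi_{U(p,\succ_i)}(\succ_i,\succ_{-i}) \geq \varphi_{U(p,\succ_i)}(\succ'_i,\succ_{-i})$ for every $p$. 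The key point is that changing $i$'s report only shifts membership of $i$ across the threshold sets $S(t;\cdot)$ in a single direction relative to any fixed interval, and the monotonicity condition $S \subset T \Rightarrow \delta_S([p_1,p_t]) \leq \delta_T([p_1,p_t])$ then forces the truthful probability mass on each single-peaked upper contour interval to be at least as large.

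For the forward direction I would reconstruct the ballots from the rule. For each coalition $S \subseteq N$, define $\delta_S := \varphi(\rprof_N^S)$, where $\rprof_N^S$ is the profile in which every voter in $S$ reports a preference peaked at $p_1$ and every voter in $N\setminus S$ reports one peaked at $p_m$. Ballot unanimity for $\delta_{\emptyset}$ and $\delta_N$ is then immediate from unanimity of $\varphi$. The crucial intermediate step is to establish the \emph{tops-only} property: that $\varphi(\rprof_N)$ depends only on the peaks $(\opat{i}{1})_{i\in N}$, which I would extract from strategy-proofness by a sequence of one-voter deviations that move each voter's reported preference to an extreme peak without changing its top element. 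Monotonicity of the $\delta_S$ would follow by comparing $\rprof_N^S$ and $\rprof_N^T$ for $S \subset T$ through successive single-voter deviations and invoking strategy-proofness at each step. Finally, I would show that $\varphi$ agrees with the PFBR induced by $\{\delta_S\}$ at an arbitrary profile by an induction on the profile, repeatedly using tops-onlyness and the interval structure of single-peaked upper contour sets to pin down the probability on each $p_t$ via the prescribed difference $\delta_{S(t;\rprof_N)}([p_1,p_t]) - \delta_{S(t-1;\rprof_N)}([p_1,p_{t-1}])$.

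I expect the main obstacle to be the forward direction, and within it specifically the tops-only reduction together with the inductive reconstruction: one must argue carefully that no strategy-proof deviation can alter the outcome distribution except through the peaks, and that the reconstructed ballots genuinely reproduce $\varphi$ on \emph{all} of $\mathcal{D}^n$ rather than only on the extreme profiles used to define them. Since this is a verbatim restatement of the theorem of \cite{ehlers2002strategy}, in the write-up I would either defer the full argument to that reference or reproduce only the ballot construction and the two structural checks, citing their induction for the reconstruction step.
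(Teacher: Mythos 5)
The paper does not actually prove this lemma: it is imported verbatim from Ehlers, Peters, and Storcken \cite{ehlers2002strategy} (hence the citation inside the statement), and is only ever invoked as a black box later (e.g., in the proof of the characterization of group-wise anonymous rules). So there is no in-paper argument to compare yours against; the benchmark is the original published proof. Measured against that, your sketch is sound and follows the same architecture. Your reverse direction is correct and essentially complete in outline: the telescoping computation for unanimity works (for $s<t$ one gets $\delta_{\emptyset}([p_1,p_s])-\delta_{\emptyset}([p_1,p_{s-1}])=0$ and for $s>t$ one gets $1-1=0$, so all mass lands on $p_t$), and for strategy-proofness the key reduction is exactly the one you name: every upper contour set of a single-peaked preference is an interval $[p_a,p_b]$ containing the peak, its PFBR-probability is $\delta_{S(b;\cdot)}([p_1,p_b])-\delta_{S(a-1;\cdot)}([p_1,p_{a-1}])$, and since voter $i$'s true peak lies in $[p_a,p_b]$, any misreport can only delete $i$ from $S(b;\cdot)$ or add $i$ to $S(a-1;\cdot)$, whence monotonicity of the ballots yields the stochastic-dominance inequality. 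Your forward direction likewise matches the published proof's structure: ballots extracted at the boundary profiles $\rprof^S_N$, ballot unanimity from unanimity, monotonicity via single-voter deviations, and a tops-only lemma feeding an inductive reconstruction on arbitrary profiles. You are right that the tops-only property and the reconstruction are where the real work lies; both are substantial arguments in \cite{ehlers2002strategy}, and your one-sentence descriptions of them are placeholders rather than proofs. Since the thesis itself treats the result as a citation, your stated fallback --- reproducing only the ballot construction and the two structural checks while deferring the induction to the reference --- is exactly the appropriate resolution, and nothing in your outline would fail if carried out in full.
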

\subsubsection{Extreme Point Characterization: Unanimity and Strategy-Proofness}\label{sec: un_sp_ep}
All the unanimous and strategy-proof DSCRs are characterized to be min-max rules and the RSCRs that satisfy these two properties are characterized to be random min-max rules \cite{peters2014probabilistic,pycia2015decomposing}. We first define min-max rules.

\begin{definition}\label{def: mmr}
	A DSCR $f$ on $\mathcal{D}^n$ is a \textbf{min-max} rule if  for all $S \subseteq N$, there exists $\beta_S \in \proj$ satisfying $$\beta_{\emptyset}= p_m, \beta_N=p_1,  \mbox{ and  } \beta_T  \trianglelefteq \beta_{S} \mbox{ for all }S \subseteq T$$ such that $$f(\rprof_N)=\min_{S \subseteq N}\left [\max_{i \in S}\{\opat{i}{1}, \beta_S\}\right].$$ 	
\end{definition}

\begin{example}\label{eg: mmr}
	Consider the instance specified in \Cref{eg: dc_basic}. Consider a min-max rule corresponding to the probabilistic ballots $\{\beta_{S}\}_{S \subseteq N}$ listed in Table \ref{tab: mmr}. It can be seen that they satisfy required properties. For the profile in \Cref{eg: dc_basic}, $\opat{1}{1} = p_1$, $\opat{2}{1} = p_3$, $\opat{3}{1} = p_2$, and $\opat{4}{1} = p_3$. For any set $S$ with voters $2$ or $4$, $\max_{i \in S}\{\opat{i}{1}, \beta_S\}$ is $p_3$. For the sets $\{3\}$ and $\emptyset$, the value continues to be $p_3$ since their corresponding parameters are $p_3$. For the sets $\{1,3\}$ and $\{1\}$, the value is $p_2$ since the parameters are $p_2$. Therefore, the outcome of the rule is $\min\{p_2,p_3\}$, that is $p_2$.
    \begin{table}
  \centering
	\begin{tabular}{ c | c ? c | c }
			 $\beta_{\emptyset}$ & $p_3$ & $\beta_{\{1\}}$ & $p_2$\\ 
			 \hline
			 $\beta_{\{2\}}$ & $p_2$ & $\beta_{\{3\}}$ & $p_3$\\ 
			 \hline
			  $\beta_{\{4\}}$ & $p_3$ & $\beta_{\{1,2\}}$ & $p_1$\\ 
			  \hline
			   $\beta_{\{1,3\}}$ & $p_2$ & $\beta_{\{1,4\}}$ & $p_2$\\ 
			   \hline
			    $\beta_{\{2,3\}}$ & $p_2$ & $\beta_{\{2,4\}}$ & $p_2$\\ 
			    \hline
			     $\beta_{\{3,4\}}$ & $p_3$ & $\beta_{\{1,2,3\}}$ & $p_1$\\ 
			     \hline
			      $\beta_{\{1,2,4\}}$ & $p_1$ & $\beta_{\{1,3,4\}}$ & $p_2$\\
			      \hline
			       $\beta_{\{2,3,4\}}$ & $p_2$ & $\beta_{\{1,2,3,4\}}$ & $p_1$\\ 
			     \end{tabular}
		 		\caption{The parameters $\{\beta_{S}\}_{S \subseteq N}$ of the min-max rule in \Cref{eg: mmr}}\label{tab: mmr}
	 \end{table}
\end{example}

A \textbf{random min-max rule} is a convex combination of min-max rules. That is, it can be expressed in the form of $\varphi=\sum_{w\in W} \lambda_w \varphi_w$ where $\sum_{w\in W} \lambda_w = 1$, and for every $j \in W$, $\varphi_j$ is a min-max rule and $0\leq\lambda_j\leq 1$.

\begin{lemma}\label{the: un_sp_ep}
    A RSCR on $\mathcal{D}^n$ is unanimous and strategy-proof if and only if it is a random min-max rule \cite{pycia2015decomposing}.
\end{lemma}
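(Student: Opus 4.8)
The plan is to deduce this characterization from the direct characterization already established in \Cref{the: un_sp_dc}, so that the real work is a purely combinatorial decomposition of a probabilistic fixed ballot rule into min-max rules. I would prove the two implications separately.

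For the easy direction, I would show that every random min-max rule $\varphi=\sum_{w\in W}\lambda_w\varphi_w$ is unanimous and strategy-proof. Unanimity is immediate: each min-max rule $\varphi_w$ is unanimous (if $\opat{i}{1}=p$ for all $i$, then $\max_{i\in S}\{\opat{i}{1},\beta_S\}=p$ for $S=N$ and the outer minimum equals $p$ by ballot unanimity $\beta_N=p_1$), so the convex combination places probability $1$ on $p$. For strategy-proofness, the key observation is that each deterministic $\varphi_w$ is strategy-proof, and the upper contour set $U(p,\succ_i)$ is upward closed under $\succ_i$; hence if $\varphi_w(\succ'_i,\succ_{-i})\in U(p,\succ_i)$ then, since $\varphi_w(\succ_i,\succ_{-i})\succeq_i\varphi_w(\succ'_i,\succ_{-i})$, also $\varphi_w(\succ_i,\succ_{-i})\in U(p,\succ_i)$. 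Summing this componentwise inequality with the weights $\lambda_w$ yields $\varphi_{U(p,\succ_i)}(\succ_i,\succ_{-i})\geq\varphi_{U(p,\succ_i)}(\succ'_i,\succ_{-i})$ for every $p$, which is exactly $\varphi(\succ_i,\succ_{-i})\sd{i}\varphi(\succ'_i,\succ_{-i})$.

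For the hard direction, by \Cref{the: un_sp_dc} any unanimous, strategy-proof RSCR is a PFBR with a collection $\{\delta_S\}_{S\subseteq N}$ obeying ballot unanimity and monotonicity. I would decompose it using a quantile (layer-cake) construction: introduce a uniform parameter $\lambda\in(0,1]$ and, for each $S$, set $\beta_S(\lambda)=\min\{p_t:\delta_S([p_1,p_t])\geq\lambda\}$. First I would check that for each fixed $\lambda$ the family $\{\beta_S(\lambda)\}_{S}$ satisfies the requirements of \Cref{def: mmr}: ballot unanimity $\beta_\emptyset(\lambda)=p_m$ and $\beta_N(\lambda)=p_1$ follow from the degenerate CDFs forced by ballot unanimity of $\{\delta_S\}$, and the parameter monotonicity $\beta_T(\lambda)\trianglelefteq\beta_S(\lambda)$ for $S\subseteq T$ follows because $\delta_S([p_1,p_t])\leq\delta_T([p_1,p_t])$ makes the larger CDF reach level $\lambda$ no later. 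Thus each $\lambda$ yields a genuine min-max rule $f_\lambda$.

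The crux is then the identity $f_\lambda(\rprof_N)\trianglelefteq p_t\iff\delta_{S(t;\rprof_N)}([p_1,p_t])\geq\lambda$. The forward analysis is that $\max_{i\in S}\{\opat{i}{1},\beta_S(\lambda)\}\trianglelefteq p_t$ requires every peak in $S$ to lie at or before $p_t$ and $\beta_S(\lambda)\trianglelefteq p_t$; among all sets with peaks at or before $p_t$ the largest is $S(t;\rprof_N)$, and by parameter monotonicity this is the set minimizing $\beta_S(\lambda)$, so the outer minimum is achieved there, giving $f_\lambda(\rprof_N)\trianglelefteq p_t\iff\beta_{S(t;\rprof_N)}(\lambda)\trianglelefteq p_t$, which by definition of the quantile is $\delta_{S(t;\rprof_N)}([p_1,p_t])\geq\lambda$. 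Integrating over $\lambda$, the measure of $\{\lambda:f_\lambda(\rprof_N)\trianglelefteq p_t\}$ equals $\delta_{S(t;\rprof_N)}([p_1,p_t])$, and taking successive differences in $t$ recovers precisely the PFBR probability $\varphi_{p_t}(\rprof_N)=\delta_{S(t;\rprof_N)}([p_1,p_t])-\delta_{S(t-1;\rprof_N)}([p_1,p_{t-1}])$. Since $\proj$ and the power set of $N$ are finite, $\lambda\mapsto(\beta_S(\lambda))_S$ is a step function taking finitely many values, so the integral collapses to a finite convex combination of min-max rules, exhibiting $\varphi$ as a random min-max rule. I expect the main obstacle to be a fully rigorous justification of the min-max identity above, namely verifying that the outer minimum is genuinely attained at $S(t;\rprof_N)$ uniformly over all profiles and all $\lambda$; the ballot-unanimity and monotonicity bookkeeping and the reduction from an integral to a finite mixture are routine once that identity is secured.
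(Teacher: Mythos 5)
Your proof is correct, but it is not the paper's route: the paper does not prove this lemma at all---it is imported as a black-box citation to Pycia and \"Unver \cite{pycia2015decomposing}---and the closest argument the paper actually writes out is the proof of \Cref{the: ep_gwa} (the group generalization, that every probabilistic fixed group ballot rule is a random group min-max rule), which proceeds by an entirely different method: restriction to boundary profiles, a reformulation of the decomposition as a linear system handled by Farkas' lemma, and a network-flow construction resolved by the max-flow min-cut theorem. Your argument is instead a constructive quantile (layer-cake) decomposition built directly on top of the PFBR characterization of \Cref{the: un_sp_dc}, and it is sound: the family $\beta_S(\lambda)=\min\{p_t:\delta_S([p_1,p_t])\geq\lambda\}$ inherits ballot unanimity and the reversed inclusion-monotonicity required by \Cref{def: mmr}, and the key identity $f_\lambda(\rprof_N)\trianglelefteq p_t\iff\beta_{S(t;\rprof_N)}(\lambda)\trianglelefteq p_t$ holds in both directions (the direction you flagged as the obstacle is fine: any $S$ witnessing $\max_{i\in S}\{\opat{i}{1},\beta_S(\lambda)\}\trianglelefteq p_t$ satisfies $S\subseteq S(t;\rprof_N)$, so parameter monotonicity transfers $\beta_S(\lambda)\trianglelefteq p_t$ to $\beta_{S(t;\rprof_N)}(\lambda)\trianglelefteq p_t$; conversely $S=S(t;\rprof_N)$ itself is a witness). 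The trade-off between the two methods is real. Your approach is elementary and explicit---it produces the constituent min-max rules and their weights algorithmically, with no duality or combinatorial optimization machinery---whereas the paper's flow/Farkas argument only asserts existence of a decomposition. Conversely, the paper's machinery is deployed where it is needed, namely for the group-indexed version; but it is worth noting that your construction generalizes essentially verbatim to that setting (replace subsets $S$ by vectors $\gamma\in\Gamma$ and use dominance monotonicity of $\{\delta_\gamma\}$), so it would also yield a shorter, constructive proof of \Cref{the: ep_gwa}.
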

\section{Group-Fairness Notions}\label{sec: o-fl-notions}
There are many real-life scenarios where there is a natural partition of voters based on factors such as gender, region, race, and economic status. For example, faculty members in a university voting to select projects to be granted funds can be naturally grouped based on their departments or areas of expertise. Similarly, the citizens of a state can be grouped based on the districts or counties they belong to. We model such natural settings and ensure fairness within and across these groups. Let $G = \{1,\ldots, g\}$ and let $\mathcal{N}$ be a partition of the set $N$, that is, $\mathcal{N}=(N_1,\ldots,N_g)$ where $\cup_{q \in G }N_q=N$ and $N_p\cap N_q=\emptyset$ for all distinct $p, q\in G$. Each $N_q$ is referred to as a \community.

\subsection{Intra-Group Fairness}\label{sec: intrafair}
The primary fairness requirement is to ensure intra-group fairness, that is fairness \emph{within} each \community. We ensure that all the voters in the same \community are treated symmetrically. This property sounds familiar to a social choice theorist - it is same as anonymity, except that it is now required only within each \community. This is applied in many countries in USA, Europe, and Asia under the label of affirmative actions, which treat people in the same group symmetrically but favour weaker groups over the others. We call this property \emph{group-wise anonymity} and explain it formally below.

A permutation $\pi$ of $N$ is \textit{\community preserving} if for any $q  \in G$ and any $i \in N_q$, it holds that $\pi(i) \in N_q$. The property of \community-wise anonymity requires that permuting the preferences of voters within the \community does not change the outcome.

\begin{definition}
    A RSCR $\varphi: \mathcal{D}^n \to \Delta \proj$ is \textbf{\community-wise anonymous} if for all \community preserving permutations $\pi$ of $N$ and all $\rprof_N\in \mathcal{D}^n$, we have $\varphi(\rprof_N)=\varphi(\rprof_{\pi(N)})$ where $\rprof_{\pi(N)}=(\succ_{\pi(1)},\ldots,\succ_{\pi(n)})$.
\end{definition}

\subsection{Inter-Group Fairness}\label{sec: interfair}
We now discuss fairness \emph{across} the groups. While there are situations in which all the groups receive equal weightage, in many real-life situations, the quota for each group could differ. For example, the federal government may choose to give a higher probability to the projects preferred by underprivileged groups or a department might set higher threshold of funds for experimental subjects. We propose two novel fairness notions, weak and strong group entitlement guarantees (weak-GEG and strong-GEG), to capture such requirements. 

Each of our fairness notions takes three parameters, $\kappa_G$, $\psi_G$, and $\eta_G$. Here $\kappa_G = (\kappa_q)_{q \in G}$, where every \community $N_q$ is associated with a value $\kappa_q \in [1,m]$, which we call \emph{\reprange} of $N_q$. Similarly, $\eta_G = (\eta_q)_{q \in G}$, where every \community $N_q$ is assumed to be entitled to a certain probability $\eta_{q} \in [0,1]$, which we call \emph{\resquota} of the \community. Both these parameters are decided by the PB organizer based on the context and various attributes of groups such as size and status. The parameter $\psi_G = (\psi_q)_{q \in G}$, called \emph{\repscene}, is a collection of representative functions of every group. Our fairness notions assume that all \repscenes satisfy some properties. We discuss these properties of \repscenes in the following section.

\subsubsection{Selecting Representative Projects of Each Group}\label{sec: repfun}
Each \community $N_q$ is associated with a function $\psi_q$, called representative function, which on receiving the preferences of voters in $N_q$, selects a set of projects as the \emph{representatives} of all those preferences. We impose the following property on the representative functions of all the groups.

\begin{definition}\label{def: topi}
    A representative function $\psi_q: \mathcal{D}^{|N_q|} \to {2^\proj}$ for a \community $N_q$ is said to be \textbf{\topi} if for every profile $\rprof_{N_q} \in \mathcal{D}^{|N_q|}$, there exist a project $p \in \psi_q(\rprof_{N_q})$ and voters $i,j \in N_q$ (could be the same voter) such that $\opat{i}{1} \trianglelefteq p \trianglelefteq \opat{j}{1}$.
\end{definition}
In other words, \topiness implies that at least one representative project of each group lies between the minimum and maximum top-ranked projects of the voters in that group. A collection of representative functions of all the groups, $(\psi_q)_{q \in G}$, is denoted by $\psi_G$ and is called a \emph{\repscene}. A \repscene in which every representative function $\psi_q$ is \topi and selects exactly $\kappa_q$ consecutive projects is said to be \compliant with \repranges $\kappa_G$.

\begin{definition}\label{def: compliance}
    A \repscene $(\psi_q)_{q \in G}$ is said to be \textbf{\compliant} with $\kappa_G$ if for every $q \in G$, $\psi_q$ is \topi and for every $\rprof_{N_q} \in \mathcal{D}^{N_q}$, $\psi_q(\rprof_{N_q})$ is a $\kappa_q$ sized interval of $\triangleleft$.
\end{definition}

The fairness notions we introduce are defined only for \compliant \repscenes. Imposing that the representative functions must be intervals is to ensure that the functions preserve single-peakedness, thereby ensuring that the representatives chosen are also in accordance with the ordering $\triangleleft$. Restricting the output size guarantees that the PB organizer gets to decide the number of representatives each \community deserves, based on factors such as size and diversity of the \community. Functions being \topi will make sure that at least one representative of each group is either a top-ranked project or is sandwiched between two top-ranked projects.

Examples of several \compliant \repscenes that also satisfy a few other desirable properties are given in \Cref{sec: o-fl-compliant}. In all the following sections, we assume that \repscenes are \compliant.
\subsubsection{Inter-Group Fairness: Group Entitlement Guarantee Notions}\label{sec: gegs}
We define two fairness notions called \textit{\textbf{group entitlement guarantees (GEGs)}} for a given $\kappa_G$, $\eta_G$, and a \repscene $\psi_G$ \compliant with $\kappa_G$. The idea of these fairness notions is to ensure that for every \community $N_q$, at least $\eta_q$ probability is assigned to its $\kappa_q$ representatives. The first notion, \weakfness, guarantees that the $\kappa_q$ representatives of $N_q$ together receive a probability of at least $\eta_q$.

\begin{definition}\label{def: weakfair}
	A RSCR $\varphi$ satisfies $\boldsymbol{(\kappa_G,\psi_G,\eta_G)}$-\textbf{weak-GEG} if for all $\rprof_N\in \mathcal{D}^n$ and all $q \in G$,
	$$\varphi_{\psi_q(\rprof_{N_q})}(\rprof_N)\geq \eta_q.$$
\end{definition}

In weak-GEG, the probability $\eta_q$ can be distributed among all the $\kappa_q$ representatives, which could result in an insignificant probability for each of them, especially if $\kappa_q$ is large. The second notion, \strofness, ensures that at least one of the $\kappa_q$ representatives of $N_q$ receives a probability of at least $\eta_q$. That is, the whole of $\eta_q$ is concentrated on one representative of the group.

\begin{definition}\label{def: strongfair}
	A RSCR $\varphi$ satisfies $\boldsymbol{(\kappa_G,\psi_G,\eta_G)}$-\textbf{strong-GEG} if for all $\rprof_N\in \mathcal{D}^n$ and all $q \in G$, there exists $ p \in \correspondence(N_q) $ such that 
		$$\varphi_{p}(\rprof_N)\geq \eta_q.$$
\end{definition}

It is trivial to see that any RSCR that satisfies strong-GEG also satisfies weak-GEG. Also, note that we are guaranteeing fairness at group level. Guaranteeing fairness at individual level can be easily modeled as a special case of this setting where every group is a singleton set (\Cref{sec: o-fl-indnotions}). In the latter case, fair rules take a simpler form and we discuss them in detail in \Cref{sec: o-fl-inddirect} and \Cref{sec: o-fl-indextreme}.
\subsubsection{Some Shorthand Notations for Characterizations}\label{sec: shorthand}
Before we move to the characterizations of fair rules, we introduce some shorthand notations related to the partition $\mathcal{N}$ of voters. We crucially use these notations in our characterizations.

\noindent{\textbf{(i) The value $\boldsymbol{\Gamma}$:}} Let $\Gamma$ be the set of all $g$ dimensional vectors $\gamma:= (\gamma_1,\ldots, \gamma_g)$ such that $\gamma_q \in \{0,\ldots,|N_q|\}$ for all $q \in G$. For $\gamma,\gamma' \in \Gamma$, we suppose $\gamma \gg \gamma'$ (in other words, $\gamma$ dominates $\gamma'$) if $\gamma_q \geq \gamma'_q$ for all $q \in G$. We denote by $\underline{\gamma}\in \Gamma$, the vector of all zeros and by $\overline{\gamma}\in \Gamma$, the vector whose components take the maximum value. More formally, for all $q \in G$, $\underline{\gamma}_q=0$ and $\overline{\gamma}_q=|N_q|$.  For a preference profile $\rprof_N$ and  $1\leq t\leq m$, let us denote by  $\alpha(t;\rprof_N)$ the element $(\alpha_1,\ldots,\alpha_g)$ of $\Gamma$ such that $\alpha_q$ is the number of voters in $N_q$ whose peaks are not to the right of $p_t$, i.e., $\alpha_q = |\{i\in N_q\mid \opat{i}{1}\trianglelefteq p_t\}|$.

\begin{example}\label{eg: dc_alpha}
    Consider the scenario explained in \Cref{eg: dc_basic}. Suppose the voters $\{1,2,3,4\}$ are partitioned into two groups $\{N_1,N_2\}$ such that $N_1=\{1\}$ and $N_2=\{2,3,4\}$. Consider the same preference profile $\rprof_N = (\succ_1,\succ_2,\succ_3,\succ_4)$ where top-ranked projects of the voters are $(p_1,p_3,p_2,p_3)$. For this profile, $\alpha(1;\rprof_N)=(1,0)$, $\alpha(2;\rprof_N)=(1,1)$, $\alpha(3;\rprof_N)=(1,3)$.
\end{example}

\noindent{\textbf{(ii) The function $\boldsymbol{\tau_i}$:}} For a preference profile $\rprof_{N_q}$ and $t \leq |N_q|$, we denote by $\tau_t(\rprof_{N_q})$ the project at the $t^{th}$ position when the top-ranked projects in $\rprof_{N_q}$ are arranged in increasing order (with repetition), i.e., $\tau_t(\rprof_{N_q})$ is such that  $|\{i\in N_q \mid \opat{i}{1} \triangleleft \tau_t(\rprof_{N_q})\}|< t$ and $|\{i\in N_q \mid \opat{i}{1}\trianglelefteq \tau_t(\rprof_{N_q})\}| \geq t$. Note that by definition, $\tau_0(\rprof_{N_q})=p_1$ for all $q \in G$.
\begin{example}\label{eg: ep_tau}
    Consider the scenario in \Cref{eg: dc_alpha}. By sorting these top-ranked projects (with repetition), we get $(p_1)$ for $N_1$ and $(p_2,p_3,p_3)$ for $N_2$. Therefore, $\tau_1(\rprof_{N_1})=p_1$, $\tau_1(\rprof_{N_2})=p_2$, and $\tau_2(\rprof_{N_2})=\tau_3(\rprof_{N_2})=p_3$.
\end{example}

\noindent{\textbf{(iii) Notion of feasibility:}} Consider a representative function $\psi_q$ and two positive integers $z_1$ and $z_2$. A project $p \in \proj$ is said to be \textbf{feasible at }$\boldsymbol{(z_1,z_2;\psi_q)}$ if there exists a profile $\rprof_{N_q} \in \mathcal{D}^{|N_q|}$ such that:
\begin{enumerate}
    \item $p = \min{\{\psi_q(\rprof_{N_q})\}}$
    \item the top ranked projects of exactly $z_1$ voters lie before $\min{\{\psi_q(\rprof_{N_q})\}}$, and
    \item the top ranked projects of exactly $|N_q|-z_2$ voters lie after $\max{\{\psi_q(\rprof_{N_q})\}}$.
\end{enumerate}
The idea is to capture the possible outcomes of the representative function given the top ranked projects of the voters. For example, suppose we have a \compliant \repscene. Then, any representative function $\psi_q$ is \topi and hence any $p \in \proj$ is not feasible at $(0,0; \psi_q)$ as well as $(|N_q|,|N_q|; \psi_q)$. That is, the interval starting at $p$ cannot be the outcome if all the top-ranked projects lie on the same side of the interval.
\section{Direct Characterization (DC) of Group-Fair Rules}\label{sec: o-fl-direct}
As seen in \Cref{the: un_sp_dc}, all the unanimous and strategy-proof RSCRs are characterized to be probabilistic fixed ballot rules. But, these rules do not take into account the partition of voters into groups. Motivated by this, in this section, we characterize unanimous and strategy-proof rules that account for the partition of voters and ensure fairness for all the groups of voters.
\subsection{DC: Group-Wise Anonymity}\label{sec: dc_gwa}
We start by characterizing unanimous, strategy-proof rules that are group-wise anonymous (i.e., fair within each group). We generalize the idea of probabilistic fixed ballot rules  (PFBRs) and introduce \emph{probabilistic fixed group ballot} rules (PFGBRs) that satisfy \community-wise anonymity along with unanimity and strategy-proofness.

The family of RSCRs \emph{probabilistic fixed group ballot rules} is a generalization of probabilistic fixed ballot rules. The key difference between them is that, in a PFBR, probabilistic ballot is defined for each subset of projects whereas in a PFGBR, it is defined for each element in $\Gamma$. That is, a PFGBR is based on a collection of pre-specified parameters $\{\delta_{\gamma}\}_{\gamma\in \Gamma}$, where for every $\gamma\in \Gamma$, a probabilistic ballot $\delta_\gamma\in \Delta \proj$ is specified.

\begin{definition}\label{def: pfgbr}
	A RSCR $\varphi$ on $\mathcal{D}^n$ is said to be a \textbf{probabilistic fixed group ballot rule} if there is a collection of probabilistic ballots $\{\delta_{\gamma}\}_{\gamma\in \Gamma}$ which satisfies
	\begin{enumerate}[(i)]
		\item \textbf{Ballot Unanimity:} $\delta_{\underline{\gamma}}(p_m)=1$ and $\delta_{\overline{\gamma}}(p_1)=1$, and
		\item \textbf{Monotonicity}: for all $\gamma,\gamma' \in \Gamma$,   $\gamma \gg \gamma'$ implies $\delta_\gamma([p_1,p_t]) \geq \delta_{\gamma'}([p_1,p_t]) $ for all $t\in [1,m]$,
	\end{enumerate}
	such that for all $\rprof_N\in \mathcal{D}^n$ and all $p_t\in \proj$,
	$$\varphi_{p_t}(\rprof_N)= \delta_{\alpha(t,\rprof_N)}([p_1, p_t])-\delta_{\alpha(t-1,\rprof_N)}([p_1, p_{t-1}]);$$
	where $\delta_{\alpha(0,\rprof_N)}([p_1, p_{0}])= 0$.
\end{definition}

\begin{example}\label{eg: pfgbr}
	Consider the instance specified in Example \ref{eg: dc_alpha}. Note that $\Gamma$ has eight elements. In Table \ref{tab: pfgbr}, we list down a collection of probabilistic ballots $\{\delta_{\gamma}\}_{\gamma\in \Gamma}$ satisfying ballot unanimity and monotonicity. The PFGBR with these parameters works as follows: 
	In \Cref{eg: dc_alpha}, $\alpha(1; \rprof_N)=(1,0)$, $\alpha(2; \rprof_N)=(1,1)$, and $\alpha(3; \rprof_N)=(1,3)$. From Table \ref{tab: pfgbr}, $\delta_{(1,3)}([p_1,p_3])=1$ and $\delta_{(1,1)}([p_1,p_2])=0.5$. Thus, the probability of $p_3$ at $(\succ_1,\succ_2,\succ_3,\succ_4)$ is $0.5$. Similarly, we can compute other probabilities.
		\begin{table}
  \centering
	\begin{tabular}{ c | c | c | c }
			 & $p_1$ & $p_2$ & $p_3$ \\ 
			 \hline
			 $\delta_{(0,0)}$ & $0$ & $0$ & $1$ \\ 
			 \hline
			 $\delta_{(0,1)}$ & $0$ & $0.1$ & $0.9$ \\ 
			 \hline
			  $\delta_{(0,2)}$ & $0.1$ & $0.1$ & $0.8$ \\ 
			  \hline
			   $\delta_{(0,3)}$ & $0.2$ & $0$ & $0.8$ \\ 
			   \hline
			    $\delta_{(1,0)}$ & $0.4$ & $0.1$ & $0.5$ \\ 
			    \hline
			     $\delta_{(1,1)}$ & $0.5$ & $0$ & $0.5$ \\ 
			     \hline
			      $\delta_{(1,2)}$ & $0.7$ & $0.2$ & $0.1$ \\ 
			      \hline
			       $\delta_{(1,3)}$ & $1$ & $0$ & $0$ \\ 
			     \end{tabular}
		 		\caption{The probabilistic ballots $\{\delta_{\gamma}\}_{\gamma\in \Gamma}$ for the PFGBR in \Cref{eg: pfgbr}}\label{tab: pfgbr}
	 \end{table}
\end{example}
It needs to be highlighted that when all the groups are singleton, every element in $\Gamma$ is a $n$-dimensional binary vector which can be directly interpreted as a subset of voters (a voter is present in the subset if and only if her corresponding entry in the vector is $1$). This interpretation results in the equivalence of PFGBRs with PFBRs (\Cref{def: pfbr}).
\begin{observation}\label{obs: pfgbr}
    If each group has only one voter, then a probabilistic fixed group ballot rule coincides with a probabilistic fixed ballot rule (\Cref{def: pfbr}). Conversely, if a single group contains all the voters, then a probabilistic fixed group ballot rule coincides with a random median rule, which is simply a convex combination of the median rules \cite{pycia2015decomposing}.
\end{observation}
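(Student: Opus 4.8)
The plan is to prove Observation~\ref{obs: pfgbr} by directly comparing the definitions of the two rule families under the two extreme partition structures, since in both cases the set $\Gamma$ collapses onto a set already familiar from the PFBR and random median rule characterizations. No deep machinery is needed; the work is bookkeeping, verifying that ballot unanimity and monotonicity translate correctly across the identifications.

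\textbf{Singleton groups.} First I would treat the case where every group $N_q$ is a singleton, so $g=n$ and each $N_q=\{q\}$. Then $|N_q|=1$ for all $q$, which forces each coordinate $\gamma_q\in\{0,1\}$. Hence $\Gamma=\{0,1\}^n$, and each $\gamma\in\Gamma$ is the indicator vector of a subset $S_\gamma=\{q\in G:\gamma_q=1\}\subseteq N$; this correspondence $\gamma\mapsto S_\gamma$ is a bijection between $\Gamma$ and $2^N$. Under this bijection I would check the three defining ingredients match. For the ballot collection, set $\delta_{S_\gamma}:=\delta_\gamma$; ballot unanimity $\delta_{\underline{\gamma}}(p_m)=1,\ \delta_{\overline{\gamma}}(p_1)=1$ becomes exactly $\delta_{\emptyset}(p_m)=1,\ \delta_N(p_1)=1$, and $\gamma\gg\gamma'$ is precisely $S_{\gamma'}\subseteq S_\gamma$, so monotonicity in Definition~\ref{def: pfgbr} coincides with monotonicity in Definition~\ref{def: pfbr}. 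The only genuine point to verify is that the aggregation formula agrees, i.e.\ that $\alpha(t;\rprof_N)$ as a vector corresponds under the bijection to the set $S(t;\rprof_N)=\{i\in N:\opat{i}{1}\trianglelefteq p_t\}$. This holds because $\alpha_q=|\{i\in N_q:\opat{i}{1}\trianglelefteq p_t\}|$ equals $1$ iff the unique voter $q$ has her peak at or before $p_t$, i.e.\ iff $q\in S(t;\rprof_N)$. Thus $\varphi_{p_t}$ defined via $\{\delta_\gamma\}$ equals $\varphi_{p_t}$ defined via $\{\delta_{S}\}$ termwise, and the two rule families coincide.

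\textbf{Single group.} Next I would treat the case $g=1$, $N_1=N$. Now $\Gamma=\{0,1,\ldots,n\}$ is just a linearly ordered set of counts, so a PFGBR is specified by ballots $\delta_0,\ldots,\delta_n$ with $\delta_0(p_m)=1$, $\delta_n(p_1)=1$, and $\delta_j([p_1,p_t])$ nondecreasing in $j$. The aggregation now reads $\varphi_{p_t}(\rprof_N)=\delta_{\alpha(t)}([p_1,p_t])-\delta_{\alpha(t-1)}([p_1,p_{t-1}])$ where $\alpha(t)=|\{i:\opat{i}{1}\trianglelefteq p_t\}|$ depends only on the multiset of peaks. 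This is exactly an anonymous PFBR (the ballot depends only on the cardinality of $S(t;\rprof_N)$), and by the existing characterizations such anonymous unanimous strategy-proof rules are the random median rules of Pycia and \"Unver~\cite{pycia2015decomposing}. I would cite \Cref{the: un_sp_ep} and the anonymity reduction to conclude the PFGBR coincides with a random median rule.

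The main obstacle, such as it is, is purely notational: making the bijection $\gamma\leftrightarrow S$ in the singleton case fully precise and confirming that $\alpha(t;\rprof_N)$ and $S(t;\rprof_N)$ index the \emph{same} ballot at every $t$, including the boundary convention $\delta_{\alpha(0,\rprof_N)}([p_1,p_0])=0=\delta_{S(0;\rprof_N)}([p_1,p_0])$. For the single-group case the only subtlety is invoking that an anonymous probabilistic fixed ballot rule is the same object as a random median rule, which I would justify by appeal to the established literature rather than reprove. I do not anticipate any step requiring nontrivial computation, so the proof will be short.
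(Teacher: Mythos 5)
Your proposal is correct and follows essentially the same route as the paper: the paper justifies the singleton-group case by exactly the same identification of $\Gamma=\{0,1\}^n$ with subsets of $N$ (so that $\alpha(t;\rprof_N)$ indexes the same ballot as $S(t;\rprof_N)$), and justifies the single-group case by noting the rule depends only on peak counts, i.e.\ is anonymous, and appealing to the known characterization of unanimous, strategy-proof, anonymous rules as random median rules \cite{pycia2015decomposing}. Your write-up just makes the bookkeeping explicit, which is fine.
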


\begin{theorem}\label{the: dc_gwa}
	A RSCR \rscr is unanimous, strategy-proof, and \community-wise anonymous if and only if it is a probabilistic fixed group ballot rule.
\end{theorem}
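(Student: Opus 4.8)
The plan is to prove both directions by reducing to the already-established characterization of unanimous and strategy-proof RSCRs as probabilistic fixed ballot rules (\Cref{the: un_sp_dc}), and then layering the group-wise anonymity condition on top. Throughout, the central bridge is the relationship between the two parameter families: a PFBR carries ballots $\{\delta_S\}_{S \subseteq N}$ indexed by subsets of voters, while a PFGBR carries ballots $\{\delta_\gamma\}_{\gamma \in \Gamma}$ indexed by the ``group-count'' vectors in $\Gamma$. The key combinatorial fact I would isolate first is that for a preference profile $\rprof_N$, the subset $S(t;\rprof_N)$ used in \Cref{def: pfbr} and the vector $\alpha(t;\rprof_N)$ used in \Cref{def: pfgbr} are linked by $\alpha_q(t;\rprof_N) = |S(t;\rprof_N) \cap N_q|$ for each group $q$; that is, $\alpha(t;\rprof_N)$ records exactly the per-group cardinalities of $S(t;\rprof_N)$.

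For the ``if'' direction, I would first show that every PFGBR is in fact a PFBR, hence unanimous and strategy-proof by \Cref{the: un_sp_dc}. Given a PFGBR with ballots $\{\delta_\gamma\}$, define subset-indexed ballots by $\delta_S := \delta_{\gamma(S)}$ where $\gamma(S) := (|S \cap N_1|,\ldots,|S\cap N_g|)$. I would verify that ballot unanimity and monotonicity for $\{\delta_\gamma\}$ translate into the corresponding PFBR conditions for $\{\delta_S\}$: $\gamma(\emptyset)=\underline\gamma$ and $\gamma(N)=\overline\gamma$ give ballot unanimity, and $S \subset T \Rightarrow \gamma(S) \gg \gamma(T)$ is false in general but $S \subseteq T \Rightarrow \gamma(S) \ll \gamma(T)$ holds coordinatewise, so PFGBR-monotonicity yields PFBR-monotonicity. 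Since the evaluation formulas agree (using the identity $\alpha(t;\rprof_N)=\gamma(S(t;\rprof_N))$), the two rules compute the same probabilities. For group-wise anonymity, I would observe that a \community-preserving permutation $\pi$ leaves each per-group count $|S(t;\rprof_N)\cap N_q|$ unchanged, hence $\alpha(t;\rprof_N)=\alpha(t;\rprof_{\pi(N)})$, so the rule's output is invariant.

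For the ``only if'' direction, I would start from an arbitrary unanimous, strategy-proof, \community-wise anonymous RSCR $\varphi$. By \Cref{the: un_sp_dc} it is a PFBR with some ballots $\{\delta_S\}_{S\subseteq N}$. The crux is to show that \community-wise anonymity forces $\delta_S$ to depend on $S$ only through its per-group count vector $\gamma(S)$, i.e.\ $\delta_S = \delta_{S'}$ whenever $\gamma(S)=\gamma(S')$. I expect this to be the main obstacle, because the ballots of a PFBR are uniquely pinned down by $\varphi$ only up to the way they are probed by profiles, so I must construct, for any two subsets $S,S'$ with equal group counts, a profile and a \community-preserving permutation that exhibits $\delta_S$ and $\delta_{S'}$ as outputs of $\varphi$ at permuted profiles. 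Concretely, since $|S\cap N_q|=|S'\cap N_q|$ for every $q$, there is a \community-preserving permutation $\pi$ with $\pi(S)=S'$; I would then pick a profile whose peaks realize $S=S(t;\rprof_N)$ at a suitable threshold $p_t$, apply group-wise anonymity $\varphi(\rprof_N)=\varphi(\rprof_{\pi(N)})$, and read off $\delta_S([p_1,p_t])=\delta_{S'}([p_1,p_t])$ for all $t$, giving $\delta_S=\delta_{S'}$.

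Once that equality is established, I would define $\delta_\gamma := \delta_S$ for any $S$ with $\gamma(S)=\gamma$ (well-defined by the previous step), and check that ballot unanimity and monotonicity transfer back from the PFBR conditions to the PFGBR conditions, using again that $\gamma(\cdot)$ is surjective onto $\Gamma$ and order-reversing-to-order-preserving under inclusion in the sense above. Finally, substituting $\delta_{S(t;\rprof_N)}=\delta_{\alpha(t;\rprof_N)}$ into the PFBR evaluation formula recovers exactly the PFGBR formula, so $\varphi$ is a probabilistic fixed group ballot rule, completing the proof. A minor technical point I would flag is ensuring the profiles used in the permutation argument are single-peaked and attain the desired sets $S(t;\rprof_N)$ for the full range of thresholds $t$, which is routine given that peaks can be placed freely among the projects $p_1,\ldots,p_m$.
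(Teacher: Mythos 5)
Your proposal is correct and follows essentially the same route as the paper's proof: both directions reduce to the existing PFBR characterization (\Cref{the: un_sp_dc}), with the ``only if'' direction showing via group-preserving permutations and boundary profiles (peaks at $p_1$ for voters in $S$, at $p_m$ otherwise) that group-wise anonymity forces $\delta_S = \delta_{S'}$ whenever the per-group counts agree, and the ``if'' direction using the invariance of $\alpha(t;\cdot)$ under group-preserving permutations. The only difference is presentational: you spell out the verification that the $\gamma(S)$-indexed ballots inherit ballot unanimity and monotonicity, which the paper compresses into the remark that every PFGBR is a PFBR.
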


\begin{proof}
\textbf{(Only if:)} First, we prove that for a RSCR to satisfy unanimity, strategy-proofness, and group-wise anonymity, it needs to be a PFGBR. Assume that $\varphi:\mathcal{D}^n \to \Delta \proj$ is unanimous, strategy-proof, and group-wise anonymous. We must show that $\varphi$ is a PFGBR. Note that since $\varphi$ is unanimous and strategy-proof, by \Cref{the: un_sp_dc}, $\varphi$ is a PFBR. Assume that $\{\delta_S\}_{S\subseteq N}$ are the probabilistic ballots of $\varphi$.

For any set $S\subseteq N$, let $c(S)=(\alpha^S_1,\ldots,\alpha^S_g)$ where $\alpha^S_q=|\{i\in S\cap N_q\}|$. We claim that for any two sets $S$ and $S'$ such that $c(S)=c(S')$, we have $\delta_S=\delta_{S'}$. Take the preference profile $\rprof_N$ where $\opat{i}{1}=p_1$ for all $i\in S$ and $\opat{i}{1}=p_m$ for all $i\in N\setminus S$, and $\rprof_N'$ where $\opat{i}{1}=p_1$ for all $i\in S'$ and $\opat{i}{1}=p_m$ for all $i\in N\setminus S'$. By the definition of PFBR, this means $\varphi(\rprof_N)=\delta_S$ and $\varphi(\rprof_N')=\delta_{S'}$. Note that as $c(S)=c(S')$, $\rprof_N$ and $\rprof_N'$ are group-wise equivalent. Combining these observations with group-wise anonymity of $\varphi$, we have  $\delta_S=\delta_{S'}$. This completes the proof of the claim.
	Since for any $S\subseteq N$, $c(S)\in \Gamma$, in view of our above claim, we can write $\{\delta_S\}_{S\subseteq N}$ as $\{\delta_{\gamma}\}_{\gamma\in\Gamma}$, which in turn implies that $\varphi$ is PFGBR.

	\textbf{(If:)} Now, we show that every PFGBR satisfies unanimity, strategy-proofness, and group-wise anonymity. Let $\varphi:\mathcal{D}^n \to \Delta \proj$ be a PFGBR. We will show that $\varphi$ satisfies unanimity, strategy-proofness, and group-wise anonymity. Note that since $\varphi$ is a PFGBR, it is a PFBR (see \Cref{def: pfbr}). Thus, by \Cref{the: un_sp_dc}, $\varphi$ is unanimous and strategy-proof. To show group-wise anonymity, consider two profiles $\rprof_N$ and $\rprof_N'$ which are group-wise equivalent, that is, there exists a group-preserving permutation $\pi_N$ such that $\rprof'_N=\rprof_{\pi(N)}$. Take $p_k\in \proj$. Since $\rprof_N$ and $\rprof_N'$ are group-wise equivalent, we have $\alpha(k,\rprof_N)=\alpha(k,\rprof_N')$ and $\alpha(k-1,\rprof_N)=\alpha(k-1,\rprof_N')$. Therefore, by the definition of PFGBR, $\varphi_{p_k}(\rprof_N) = \delta_{\alpha(k,\rprof_N)}([p_1,p_t])-\delta_{\alpha(k-1,\rprof_N)}([p_1,p_{t-1}]) = \delta_{\alpha(k,\rprof'_N)}([p_1,p_t])-\delta_{\alpha(k-1,\rprof'_N)}([p_1,p_{t-1}]) = \varphi_{p_k}(\rprof'_N)$. Hence, $\varphi$ is group-wise anonymous.
\end{proof}

\subsection{DC: Weak-GEG and Group-Wise Anonymity}\label{sec: dc_wfgwa}
It is evident from \Cref{the: dc_gwa} that the unanimous, strategy-proof, and \community-wise anonymous RSCRs are characterized to be probabilistic fixed group ballot rules. In this section, we further impose weak-GEG and characterize the probabilistic fixed group ballot rules that are \weakf (\Cref{def: weakfair}). The characterization follows from the definition, and intuitively requires that for any $\kappa_q$-sized interval of $\triangleleft$, the difference of probabilities allotted by two probabilistic ballots is at least $\eta_q$ if $\gamma$ corresponding to one of them dominates the other and the interval could be the outcome of $\psi_q$ under some conditions. 

\begin{theorem}\label{the: dc_weak}
	A RSCR \rscr is unanimous, strategy-proof, \community-wise anonymous, and satisfies \weakf if and only if it is a probabilistic fixed group ballot rule such that for all $q \in G$, for all $\gamma,\gamma' \in \Gamma$ such that $\gamma \gg \gamma'$, and for all $p_x \in \proj$ feasible at $(\gamma'_q,\gamma_q;\psi_q)$, we have
	$$\delta_{\gamma}([p_1, p_{x+\kappa_q-1}])-\delta_{\gamma'}([p_1, p_{x-1}]) \geq \eta_q.$$
\end{theorem}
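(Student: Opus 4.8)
The plan is to invoke \Cref{the: dc_gwa}, which already characterizes the unanimous, strategy-proof, and group-wise anonymous RSCRs as exactly the probabilistic fixed group ballot rules (PFGBRs). Hence it suffices to show that, \emph{among} PFGBRs, weak-GEG is equivalent to the stated inequality on the ballots $\{\delta_\gamma\}_{\gamma\in\Gamma}$. The workhorse is a telescoping identity: summing the defining equation of \Cref{def: pfgbr} over the projects $p_s$ with $s\le t$, the negative term $\delta_{\alpha(s-1,\rprof_N)}([p_1,p_{s-1}])$ of index $s$ cancels the positive term of index $s-1$, and using $\delta_{\alpha(0,\rprof_N)}([p_1,p_0])=0$ this leaves $\varphi_{[p_1,p_t]}(\rprof_N)=\delta_{\alpha(t,\rprof_N)}([p_1,p_t])$. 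Consequently, for any interval $[p_a,p_b]$,
\[
\varphi_{[p_a,p_b]}(\rprof_N)=\delta_{\alpha(b,\rprof_N)}([p_1,p_b])-\delta_{\alpha(a-1,\rprof_N)}([p_1,p_{a-1}]).
\]
I would establish this interval identity first, since both directions rely on it.

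For the ``if'' direction, assume $\varphi$ is a PFGBR whose ballots satisfy the displayed inequality; it is automatically unanimous, strategy-proof, and group-wise anonymous by \Cref{the: dc_gwa}. Fix a profile $\rprof_N$ and a group $q$. Since the \repscene is \compliant, $\psi_q(\rprof_{N_q})$ is a $\kappa_q$-sized interval, say $[p_x,p_{x+\kappa_q-1}]$. Applying the interval identity with $a=x$, $b=x+\kappa_q-1$ and setting $\gamma=\alpha(x+\kappa_q-1,\rprof_N)$, $\gamma'=\alpha(x-1,\rprof_N)$, monotonicity of $\alpha$ in its first argument gives $\gamma\gg\gamma'$, and the profile $\rprof_{N_q}$ itself witnesses that $p_x$ is feasible at $(\gamma'_q,\gamma_q;\psi_q)$ (exactly $\gamma'_q$ voters of $N_q$ have peaks before $p_x$ and $|N_q|-\gamma_q$ after $p_{x+\kappa_q-1}$). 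The ballot inequality then yields $\varphi_{\psi_q(\rprof_{N_q})}(\rprof_N)\ge\eta_q$, which is precisely weak-GEG.

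For the ``only if'' direction, assume $\varphi$ satisfies all four properties, so by \Cref{the: dc_gwa} it is a PFGBR with ballots $\{\delta_\gamma\}$. Fix $q\in G$, $\gamma\gg\gamma'$ in $\Gamma$, and $p_x$ feasible at $(\gamma'_q,\gamma_q;\psi_q)$. The goal is to build a single profile $\rprof_N$ with $\psi_q(\rprof_{N_q})=[p_x,p_{x+\kappa_q-1}]$, $\alpha(x-1,\rprof_N)=\gamma'$, and $\alpha(x+\kappa_q-1,\rprof_N)=\gamma$; weak-GEG for this profile together with the interval identity then delivers exactly the desired inequality. For group $q$ I would take the feasibility witness $\rprof_{N_q}$. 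For each other group $r$ I would place $\gamma'_r$ voters with peak at $p_1$, $\gamma_r-\gamma'_r$ voters with peaks inside $[p_x,p_{x+\kappa_q-1}]$, and the remaining $|N_r|-\gamma_r$ voters with peak at $p_m$ (all single-peaked), which realizes $\alpha_r(x-1)=\gamma'_r$ and $\alpha_r(x+\kappa_q-1)=\gamma_r$.

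The main obstacle is the boundary behaviour of this construction: placing peaks strictly before $p_x$ requires $x\ge2$, and strictly after $p_{x+\kappa_q-1}$ requires $x+\kappa_q-1<m$, so degenerate triples must be treated separately. When $x=1$, feasibility forces $\gamma'_q=0$, and the identity makes the left-hand side equal to $\delta_\gamma([p_1,p_{\kappa_q}])$, which is \emph{independent} of $\gamma'$; thus it suffices to prove the inequality for the realizable choice $\gamma'_r=0$ for all $r$, whence it holds for every dominated $\gamma'$. The symmetric right-boundary case $x+\kappa_q-1=m$ uses $\delta_\gamma([p_1,p_m])=1$ together with monotonicity to reduce to a realizable profile. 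After disposing of these two degenerate cases, the generic construction above applies and weak-GEG closes the argument.
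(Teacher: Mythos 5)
Your proposal is correct and follows essentially the same route as the paper's proof: both reduce to PFGBRs via \Cref{the: dc_gwa}, both rely on the telescoping identity that the probability a PFGBR assigns to an interval $[p_a,p_b]$ is $\delta_{\alpha(b,\rprof_N)}([p_1,p_b])-\delta_{\alpha(a-1,\rprof_N)}([p_1,p_{a-1}])$, both take the profile itself as the feasibility witness in the ``if'' direction, and both use the same peak-placement construction in the ``only if'' direction. The only difference is that you explicitly dispose of the boundary cases $x=1$ and $x+\kappa_q-1=m$ (where peaks cannot be placed strictly before, respectively strictly after, the interval), which the paper's construction silently glosses over; this is a refinement of, not a departure from, the paper's argument.
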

\begin{proof}
    \textbf{(Only if:)} We prove that for a PFGBR to satisfy \weakf, the given condition needs to hold. Consider any $q, \gamma, \gamma',$ and $p_x$ as given in the theorem. Since $p_x$ is feasible at $(\gamma'_q,\gamma_q; \psi_q)$, we know that there exists a profile $\rprof_{N_q}$ such that exactly $\gamma'_q$ voters have their top-ranked projects before $p_x$, exactly $\gamma_q - \gamma'_q$ voters have their top-ranked projects in the interval $[p_x,p_{x+\kappa_q-1}]$, and $\psi_q(\rprof_{N_q}) = [p_x,p_{x+\kappa_q-1}]$. For every other group $h \in G \setminus \{q\}$, construct an arbitrary profile $\rprof_{N_h}$ such that exactly $\gamma'_h$ voters have their top-ranked projects before $p_x$ and exactly $\gamma_h - \gamma'_h$ voters have their top-ranked projects in the interval $[p_x,p_{x+\kappa_q-1}]$. Let the combined preference profile be $\rprof_N$.

Since the fairness requirement of group $q$ is met at $\rprof_N$, the probability allocated to $[p_x,p_{x+\kappa_q-1}]$ is at least $\eta_q$. That is, $\delta_{\alpha(x+\kappa_q-1,\rprof_N)}([p_1,p_{x+\kappa_q-1}]) - \delta_{\alpha(x-1,\rprof_N)}([p_1,p_{x-1}]) \geq \eta_q$. By the definition of $\alpha$ and construction of $\rprof_N$, we know that $\alpha(x+\kappa_q-1,\rprof_N) = \gamma$ and $\alpha(x-1,\rprof_N) = \gamma'$. The condition follows.

\textbf{(If:)} Now we prove the converse. That is, we prove that if the stated condition holds for a PFGBR, then it satisfies \weakf. Consider any group $q \in G$ and an arbitrary preference profile $\rprof_N$. Let $p_x = \min{\{\psi_q(\rprof_{N_q})\}}$. Set $\gamma$ and $\gamma'$ such that for every $h \in G$, $\gamma'_h = |i \in N_h: \opat{i}{1} \triangleleft p_x|$ and $\gamma_h = |i \in N_h: \opat{i}{1} \trianglelefteq p_{x+\kappa_q-1}|$. By construction, $\gamma \gg \gamma'$. Also, since $\psi_q(\rprof_{N_q}) = [p_x,p_{x+\kappa_q-1}]$, $p_x$ is feasible at $(\gamma'_q,\gamma_q;\psi_q)$. Thus $q,\gamma,\gamma',$ and $p_x$ satisfy all the given conditions. Therefore, $\delta_{\gamma}([p_1, p_{x+\kappa_q-1}])-\delta_{\gamma'}([p_1, p_{x-1}]) \geq \eta_q$. From the construction of $\gamma$ and $\gamma'$, it can be seen that $\alpha(x+\kappa_q-1,\rprof_N) = \gamma$ and $\alpha(x-1,\rprof_N) = \gamma'$. Therefore, $\delta_{\alpha(x+\kappa_q-1,\rprof_N)}([p_1, p_{x+\kappa_q-1}])-\delta_{\alpha(x-1,\rprof_N)}([p_1, p_{x-1}]) \geq \eta_q$. That is, the probability allocated to $[p_x,p_{x+\kappa_q-1}]$ is at least $\eta_q$. Hence, the fairness requirement of the group $q$ is met.
\end{proof}

\subsection{DC: Strong-GEG and Group-Wise Anonymity}\label{sec: dc_sfgwa}
Here, we characterize the probabilistic fixed group ballot rules that are \strof (\Cref{def: strongfair}).
For this, we extend the concept of feasibility of a project to the feasibility of a set of projects as follows: a set of projects $\boldsymbol{\{p^1,p^2,\ldots,p^t\}}$ \textbf{is feasible at} $\boldsymbol{(z_0,z_1,z_2,\ldots,z_{t}; \psi_q)}$ if there exists a profile $\rprof_{N_q}$ such that $p^1 = \min\{\psi_q(\rprof_{N_q})\}$, $|\{i \in N_q: \opat{i}{1} \triangleleft p^1\}| = z_0$, and $|\{i \in N_q: \opat{i}{1} \trianglelefteq p^j\}| = z_j$ for every $j \in {1,\ldots,t}$. That is, $p_x$ being feasible at $(z_1,z_2; \psi_q)$ is another way of saying that there exists $z$ such that $\{p_x,p_{x+\kappa_q-1}\}$ is feasible at $(z_1,z,z_2; \psi_q)$.

This characterization also follows from the definition, and intuitively requires that there cannot be $\kappa_q$ consecutive projects which could be selected by $\psi_q$ under some conditions and are allotted a probability less than $\eta_q$ each.

\begin{theorem}\label{the: dc_strong}
	A RSCR \rscr is unanimous, strategy-proof, \community-wise anonymous, and satisfies \strof if and only if it is a probabilistic fixed group ballot rule such that for all $q \in G$, for all $\gamma^0,\gamma^1,\ldots,\gamma^{\kappa_q} \in \Gamma$ such that $\gamma^{\kappa_q} \gg \ldots \gg \gamma^1 \gg \gamma^0$, and for all $p_x \in \proj$ such that $\{p_x,p_{x+1},\ldots,p_{x+\kappa_q-1}\}$ is feasible at $(\gamma^0,\gamma^1,\ldots,\gamma^{\kappa_q};\psi_q)$, there exists $t \in [0,{\kappa_q-1}]$ such that
	$$\delta_{\gamma^{t+1}}([p_1, p_{x+t}])-\delta_{\gamma^{t}}([p_{1}, p_{x+t-1}]) \geq \eta_q.$$
\end{theorem}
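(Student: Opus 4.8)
The plan is to prove \Cref{the: dc_strong} by following exactly the same template as the proof of \Cref{the: dc_weak}, since both are characterizations of PFGBRs satisfying a GEG condition, and the only difference lies in what ``the fairness requirement is met'' unpacks to. The starting point is that by \Cref{the: dc_gwa}, a RSCR is unanimous, strategy-proof, and \community-wise anonymous if and only if it is a PFGBR with some collection of ballots $\{\delta_\gamma\}_{\gamma \in \Gamma}$; so it suffices to show that such a PFGBR additionally satisfies \strof if and only if the stated inequality condition on consecutive ballots holds. The key observation that drives the equivalence is that strong-GEG for group $q$ at a profile $\rprof_N$ means \emph{at least one} representative project $p \in \psi_q(\rprof_{N_q})$ receives probability $\varphi_p(\rprof_N) \geq \eta_q$, and since $\psi_q(\rprof_{N_q})$ is the $\kappa_q$-sized interval $[p_x, p_{x+\kappa_q-1}]$, this is exactly saying there exists $t \in [0,\kappa_q-1]$ such that the telescoped probability $\varphi_{p_{x+t}}(\rprof_N) = \delta_{\alpha(x+t;\rprof_N)}([p_1,p_{x+t}]) - \delta_{\alpha(x+t-1;\rprof_N)}([p_1,p_{x+t-1}]) \geq \eta_q$.

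For the (\textbf{Only if}) direction, I would take an arbitrary group $q$, arbitrary $\gamma^0 \gg \cdots \gg \gamma^{\kappa_q}$ reversed appropriately (note $\gamma^{\kappa_q} \gg \cdots \gg \gamma^0$ in the statement), and a project $p_x$ such that $\{p_x,\ldots,p_{x+\kappa_q-1}\}$ is feasible at $(\gamma^0,\ldots,\gamma^{\kappa_q};\psi_q)$. Feasibility hands me a profile $\rprof_{N_q}$ realizing exactly $\gamma^j_q = |\{i \in N_q : \opat{i}{1} \trianglelefteq p_{x+j-1}\}|$ counts with $\psi_q(\rprof_{N_q}) = [p_x,p_{x+\kappa_q-1}]$; I then extend each other group $h$ to a profile $\rprof_{N_h}$ with matching peak counts $\gamma^j_h$ at each cutoff $p_{x+j-1}$, assembling the full profile $\rprof_N$ so that $\alpha(x+j-1;\rprof_N) = \gamma^j$ for each relevant $j$. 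Applying strong-GEG at $\rprof_N$ yields some representative $p_{x+t}$ with probability at least $\eta_q$, and rewriting that probability via the PFGBR formula gives precisely $\delta_{\gamma^{t+1}}([p_1,p_{x+t}]) - \delta_{\gamma^t}([p_1,p_{x+t-1}]) \geq \eta_q$, which is the desired conclusion.

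For the (\textbf{If}) direction, I would fix a group $q$ and an arbitrary profile $\rprof_N$, set $p_x = \min\{\psi_q(\rprof_{N_q})\}$, and define $\gamma^j$ for $j \in \{0,\ldots,\kappa_q\}$ by counting, in every group $h$, the voters whose peaks lie at or before $p_{x+j-1}$. By construction these satisfy $\gamma^{\kappa_q} \gg \cdots \gg \gamma^0$, the consecutive-interval $\{p_x,\ldots,p_{x+\kappa_q-1}\}$ is feasible at $(\gamma^0,\ldots,\gamma^{\kappa_q};\psi_q)$ (witnessed by $\rprof_{N_q}$ itself, whose representative output is this very interval), and $\alpha(x+j-1;\rprof_N) = \gamma^j$. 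The hypothesis then produces a $t$ with $\delta_{\gamma^{t+1}}([p_1,p_{x+t}]) - \delta_{\gamma^t}([p_1,p_{x+t-1}]) \geq \eta_q$, and unpacking the telescoping shows $\varphi_{p_{x+t}}(\rprof_N) \geq \eta_q$ with $p_{x+t} \in \psi_q(\rprof_{N_q})$, establishing strong-GEG for group $q$ at $\rprof_N$.

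The main obstacle I anticipate is bookkeeping the extended feasibility notion correctly: unlike the weak case, which tracked only two cutoffs $(\gamma', \gamma)$, here I must simultaneously match $\kappa_q + 1$ nested cutoff vectors $\gamma^0 \gg \cdots \gg \gamma^{\kappa_q}$ to the actual peak positions inside the interval, and verify that the constructed profile genuinely has $\psi_q$ output equal to $[p_x,p_{x+\kappa_q-1}]$ (which relies on \topiness and the compliance of the \repscene, \Cref{def: compliance}). Care is needed to ensure that the intermediate peak-count constraints are jointly realizable by \emph{some} single-peaked profile for each group, and that the $\alpha(\cdot;\rprof_N) = \gamma^j$ identifications hold at every index $x+j-1$ rather than only at the two endpoints; once this indexing is handled cleanly, the telescoping identity for $\varphi_{p_{x+t}}$ and the logical ``there exists $t$'' quantifier translate directly between the two sides.
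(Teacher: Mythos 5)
Your proposal is correct and follows essentially the same route as the paper's own proof: in the ``only if'' direction you use feasibility to construct a profile realizing the nested cutoff vectors $\gamma^0,\ldots,\gamma^{\kappa_q}$ with $\psi_q$ output equal to $[p_x,p_{x+\kappa_q-1}]$, and in the ``if'' direction you reverse-engineer the $\gamma^j$ from an arbitrary profile, in both cases invoking the identification $\alpha(x+t;\rprof_N)=\gamma^{t+1}$ and the telescoping PFGBR formula exactly as the paper does. The bookkeeping concern you flag is handled in the paper simply by appealing to the definition of feasibility of the set $\{p_x,\ldots,p_{x+\kappa_q-1}\}$ at $(\gamma^0,\ldots,\gamma^{\kappa_q};\psi_q)$, which is precisely the guarantee that such a single-peaked profile exists.
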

\begin{proof}
    \textbf{(Only if:)} First, we prove that for a PFGBR to satisfy \strof, it is necessary that it satisfies the given condition. Consider any $q, \gamma^0,\gamma^1,\ldots,\gamma^{\kappa_q},$ and $p_x$ as given in the theorem. Construct a profile $\rprof_{N_q}$ such that: (i) For every group $N_h$, set the top-ranked projects of exactly $\gamma^0_h$ voters before $p_x$ and those of exactly $\gamma^{i+1}_h - \gamma^{i}_h$ voters at $p_{x+i}$ for every $i \in \{0,\ldots,\kappa_q-1\}$ (ii) $\psi_q(\rprof_{N_q}) = [p_x,p_{x+\kappa_q-1}]$. Note that such construction is possible since $\{p_x,p_{x+1},\ldots,p_{x+\kappa_q-1}\}$ is feasible at $(\gamma^0,\gamma^1,\ldots,\gamma^{\kappa_q};\psi_q)$. 

Since the fairness requirement of group $q$ is met at $\rprof_N$, there exists $p_t \in [p_x,p_{x+\kappa_q-1}]$ such that probability allocated to $p_t$ is at least $\eta_q$. This implies that there exists $t \in [0,{\kappa_q-1}]$ such that $ \delta_{\alpha(x+t,\rprof_N)}([p_1, p_{x+t}])-\delta_{\alpha(x+t-1,\rprof_N)}([p_1, p_{x+t-1}]) \geq \eta_q$. By the construction of $\rprof_N$, $\alpha(x+t,\rprof_N) = \gamma^{t+1}$ and $\alpha(x+t-1,\rprof_N) = \gamma^{t}$. The condition follows.

\textbf{(If:)} Now, we prove that any PFGBR satisfying the given condition also satisfies \strof. Consider any group $q \in G$ and an arbitrary preference profile $\rprof_N$. Let $p_x = \min{\{\psi_q(\rprof_{N_q})\}}$. For every $h \in G$, set $\gamma^0_h = |\{i \in N_h: \opat{i}{1} \triangleleft p_x\}|$. For every $h \in G$ and $j \in \{1,\ldots,\kappa_q\}$, set $\gamma^j_h = |\{i \in N_h: \opat{i}{1} \trianglelefteq p_{x+j-1}\}|$. By the construction, $\gamma^{\kappa_q} \gg \ldots \gg \gamma^1 \gg \gamma^0$. Also, $\{p_x,p_{x+1},\ldots,p_{x+\kappa_q-1}\}$ is feasible at $(\gamma^0,\gamma^1,\ldots,\gamma^{\kappa_q};\psi_q)$. Thus, the condition in the theorem holds. This implies that there exists $t \in [0,\kappa_q-1]$ such that $\delta_{\gamma^{t+1}}([p_1, p_{x+t}])-\delta_{\gamma^{t}}([p_{1}, p_{x+t-1}]) \geq \eta_q$. By the construction of $\gamma^{t+1}$ and $\gamma^{t}$, $\gamma^{t+1} = \alpha(x+t,\rprof_N)$ and $\gamma^{t} = \alpha(x+t-1,\rprof_N)$. Therefore, there exists $p_t \in \psi_q(\rprof_{N_q})$ such that $\varphi_{p_t}(\rprof_N) \geq \eta_q$. 
\end{proof}
\begin{example}\label{eg: dc_fair}
    Consider \Cref{eg: pfgbr}. You may recall the preferences of voters from \Cref{eg: dc_basic}. We know $N_1 = \{1\}$ and $N_2 = \{2,3,4\}$. Suppose $\kappa_1 = 1$, $\kappa_2 = 2$, $\eta_1 = \frac{1}{3}$, and $\eta_2 = \frac{2}{5}$. Suppose $\psi_1(\rprof_{N_1}) = \{p_1\}$ and $\psi_2(\rprof_{N_2}) = \{p_2,p_3\}$. The PFGBR gives the outcome $(0.4,0.1,0.5)$ and satisfies \strof.
\end{example}

\section{Extreme Point Characterization (EPC) of Group-Fair Rules}\label{sec: o-fl-extreme}
Often, it is cognitively hard to express RSCRs as probability distributions over the projects, especially when the number of projects is large. Besides, the probabilistic fixed ballot rules need many such probability distributions to be decided. This motivates the direction of defining RSCRs in terms of extreme points, or in other words, as the convex combinations of DSCRs. Such a RSCR is easily expressible, owing to the lucidness and simplicity of DSCRs.

Inspired by this, many works in the literature express RSCRs as convex combinations of DSCRs \cite{picot2012extreme,peters2014probabilistic,pycia2015decomposing,peters2017extreme,roy2021unified}. The \Cref{the: un_sp_ep} provides such a characterization of all the unanimous and strategy-proof RSCRs by proving that they are equivalent to random min-max rules. However, these rules do not take into account the partition of voters into groups. Motivated by this, in this section, we characterize, in terms of extreme points or DSCRs, all the unanimous and strategy-proof rules that consider the partition of voters and ensure fairness for all the groups of voters.

\subsection{EPC: Group-Wise Anonymity}\label{sec: ep_gwa}
We start by characterizing unanimous, strategy-proof, and group-wise anonymous RSCRs in terms of DSCRs. For this, we introduce the family of DSCRs, \emph{group min-max rules}, by generalizing the min-max rules (\Cref{def: mmr}). The key difference between them again is that, in a min-max rule, parameters are defined for every subset of projects whereas in a group min-max rule, they are defined for every element in $\Gamma$. That is, a GMMR is based on a collection of pre-specified parameters $\{\beta_{\gamma}\}_{\gamma\in \Gamma}$, where for every $\gamma\in \Gamma$, a parameter $\beta_\gamma\in \proj$ is specified.

\begin{definition}\label{def: gmmr}
	A DSCR  $f:\mathcal{D}^n \to \proj$ is said to be a \textbf{group min-max rule (GMMR)} if for every $\gamma\in \Gamma$, there exists $\beta_\gamma\in \proj$ satisfying 
	$$\beta_{\underline{\gamma}}= p_m, \beta_{\overline{\gamma}}=p_1,  \mbox{ and  } \beta_\gamma  \trianglelefteq \beta_{\gamma'} \mbox{ for all }\gamma \gg \gamma'$$ such that $$f(\rprof_N)=\min_{\gamma \in \Gamma}\left [\max\{\tau_{\gamma_1}(\rprof_{N_1}),\ldots,\tau_{\gamma_g}(\rprof_{N_g}), \beta_\gamma\}\right].$$ 	
\end{definition}

\begin{example}\label{eg: gmmr}
	Consider the framework specified in \Cref{eg: dc_alpha}. In Table \ref{tab: gmmr}, we specify the parameter values of a group min-max rule.
	\begin{table}[H]
		\centering
		\begin{tabular} {c | c c c c c c c c}
			$\beta$ & $\beta_{(0,0)}$ & $\beta_{(0,1)}$ & $\beta_{(0,2)}$ & $\beta_{(0,3)}$ & $\beta_{(1,0)}$ & $\beta_{(1,1)}$& $\beta_{(1,2)}$ & $\beta_{(1,3)}$\\
			\hline
			& $p_3$ & $p_2$ & $p_2$ & $p_1$ & $p_3$ & $p_3$ & $p_3$ & $p_1$\\
		\end{tabular}
		\caption{Parameters of the group min-max rule  in \Cref{eg: gmmr}} \label{tab: gmmr}
	\end{table}
	\noindent{Recall that, in the profile, top ranked projects of the voters $1$, $2$, $3$, and $4$ are $p_1$, $p_3$, $p_2$, and $p_3$ respectively. Also recall from \Cref{eg: ep_tau} that $\tau_1(\rprof_{N_1})=p_1$, $\tau_1(\rprof_{N_2})=p_2$, and $\tau_2(\rprof_{N_2})=\tau_3(\rprof_{N_2})=p_3$. The outcome of $f$ at this profile is determined as follows.}
	\begin{align}
		f(\rprof_N) & =\min_{\gamma \in \Gamma}\left[\max\{\tau_{\gamma_1}(\rprof_{N_1}),\ldots,\tau_{\gamma_g}(\rprof_{N_g}), \beta_\gamma\}\right] \nonumber \\
		& =\min\big[\max\{\tau_0(\rprof_{N_1}),\tau_0(\rprof_{N_2}),\beta_{(0,0)}\}, \max\{\tau_0(\rprof_{N_1}),\tau_1(\rprof_{N_2}),\beta_{(0,1)}\},\nonumber\\
		&\hspace{16mm} \max\{\tau_0(\rprof_{N_1}),\tau_2(\rprof_{N_2}),\beta_{(0,2)}\},\max\{\tau_0(\rprof_{N_1}),\tau_3(\rprof_{N_2}),\beta_{(0,3)}\},\nonumber\\
		& \hspace{16mm} \max\{\tau_1(\rprof_{N_1}),\tau_0(\rprof_{N_2}),\beta_{(1,0)}\}, \max\{\tau_1(\rprof_{N_1}),\tau_1(\rprof_{N_2}),\beta_{(1,1)}\}, \nonumber \\ 
		& \hspace{16mm}\max\{\tau_1(\rprof_{N_1}),\tau_2(\rprof_{N_2}),\beta_{(1,2)}\},  \max\{\tau_1(\rprof_{N_1}),\tau_3(\rprof_{N_2}),\beta_{(1,3)}\}\big] \nonumber \\
		& =\min\big[\max\{p_1,p_1,p_3\}, \max\{p_1,p_2,p_2\},\max\{p_1,p_3,p_2\},\max\{p_1,p_3,p_1\},\nonumber\\
		& \hspace{16mm} \max\{p_1,p_1,p_3\}, \max\{p_1,p_2,p_3\},\max\{p_1,p_3,p_3\},  \max\{p_1,p_3,p_1\}\big] \nonumber \\
		&=p_2. \nonumber
		\hspace{15cm} 	\square  
	\end{align}
\end{example}

\begin{theorem}\label{the: ep_dscr_gwa}
	A DSCR on $\mathcal{D}^n$ is unanimous, strategy-proof, and \community-wise anonymous if and only if it is a group min-max rule.
\end{theorem}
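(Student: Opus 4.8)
The plan is to prove \Cref{the: ep_dscr_gwa} by combining the known extreme-point characterization of unanimous and strategy-proof DSCRs as min-max rules (\Cref{the: un_sp_ep}, restricted to its deterministic part) with an averaging/collapsing argument over group-preserving permutations, exactly paralleling the direct-characterization proof of \Cref{the: dc_gwa}. The key structural fact I would exploit is that a min-max rule is determined by its parameters $\{\beta_S\}_{S\subseteq N}$, and that group-wise anonymity should force $\beta_S$ to depend on $S$ only through the vector $c(S)=(|S\cap N_1|,\ldots,|S\cap N_g|)\in\Gamma$. Once that reduction is established, rewriting $\{\beta_S\}$ as $\{\beta_\gamma\}_{\gamma\in\Gamma}$ and checking that the min-max formula collapses into the group min-max formula of \Cref{def: gmmr} (using that $\max_{i\in S}\opat{i}{1}$ over a set $S$ with $c(S)=\gamma$ equals $\max_q \tau_{\gamma_q}(\rprof_{N_q})$ when peaks are placed appropriately) completes both directions.

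For the \emph{(Only if)} direction I would proceed as follows. Assume $f$ is unanimous, strategy-proof, and \community-wise anonymous. By \Cref{the: un_sp_ep} (deterministic case), $f$ is a min-max rule with parameters $\{\beta_S\}_{S\subseteq N}$. The first and main step is to show $c(S)=c(S')\implies \beta_S=\beta_{S'}$. I would recover $\beta_S$ as the outcome of $f$ on a canonical profile: take $\rprof_N$ with $\opat{i}{1}=p_1$ for $i\in S$ and $\opat{i}{1}=p_m$ for $i\notin S$, and verify from \Cref{def: mmr} that $f(\rprof_N)=\beta_S$ (on this profile the inner $\max_{i\in S'}\{\opat{i}{1},\beta_{S'}\}$ simplifies, and the outer minimum selects $\beta_S$ by monotonicity of the parameters). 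If $c(S)=c(S')$, the canonical profiles for $S$ and $S'$ are group-wise equivalent, so group-wise anonymity gives $\beta_S=f(\rprof_N)=f(\rprof_N')=\beta_{S'}$. Hence we may index the parameters by $\gamma\in\Gamma$, write $\beta_\gamma$, and the induced boundary and monotonicity conditions $\beta_{\underline\gamma}=p_m$, $\beta_{\overline\gamma}=p_1$, and $\beta_\gamma\trianglelefteq\beta_{\gamma'}$ for $\gamma\gg\gamma'$ follow directly from the corresponding properties of $\{\beta_S\}$. The remaining step is to show the min-max formula over subsets equals the group min-max formula over $\Gamma$; this rests on the identity that for any $S$ with $c(S)=\gamma$ one has $\max_{i\in S}\opat{i}{1}=\max_q\tau_{\gamma_q}(\rprof_{N_q})$, since the largest peak among $\gamma_q$ voters of group $N_q$ is exactly $\tau_{\gamma_q}(\rprof_{N_q})$, and grouping the outer minimum by the value of $c(S)$ then yields the formula of \Cref{def: gmmr}.

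For the \emph{(If)} direction I would show every group min-max rule is a min-max rule, and is group-wise anonymous. Given $\{\beta_\gamma\}_{\gamma\in\Gamma}$, define $\beta_S:=\beta_{c(S)}$ for every $S\subseteq N$; the boundary conditions and monotonicity of the GMMR parameters translate to the min-max parameter conditions of \Cref{def: mmr}, and the identity $\max_{i\in S}\opat{i}{1}=\max_q\tau_{c(S)_q}(\rprof_{N_q})$ shows the two formulas agree pointwise, so $f$ is a min-max rule and hence unanimous and strategy-proof by \Cref{the: un_sp_ep}. Group-wise anonymity is immediate because $f(\rprof_N)$ depends on $\rprof_N$ only through the sorted peak vectors $\tau(\rprof_{N_q})$ of each group, which are invariant under group-preserving permutations.

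The main obstacle I anticipate is the bookkeeping in the collapse of $\min_{S\subseteq N}\max_{i\in S}$ into $\min_{\gamma\in\Gamma}\max_q\tau_{\gamma_q}$: one must argue that restricting the outer minimum to one representative $S$ per value of $c(S)$ loses nothing, which requires checking that among all $S$ with a fixed $c(S)=\gamma$ the quantity $\max\{\max_{i\in S}\opat{i}{1},\beta_\gamma\}$ is minimized precisely by choosing the $\gamma_q$ smallest peaks in each group (so that $\max_{i\in S}\opat{i}{1}=\max_q\tau_{\gamma_q}(\rprof_{N_q})$), and that this minimizing choice is always attainable. Establishing this monotone-selection lemma carefully — together with the degenerate cases $\gamma_q=0$ where $\tau_0(\rprof_{N_q})=p_1$ contributes nothing to the max — is the technical heart of the argument; everything else is a faithful transcription of the PFBR-to-PFGBR proof of \Cref{the: dc_gwa} into the min-max setting.
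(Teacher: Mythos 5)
Your proposal is correct and follows essentially the same route as the paper: invoke the deterministic min-max characterization, use boundary profiles (peaks at $p_1$ on $S$, $p_m$ off $S$) to recover $\beta_S=f(\rprof_N)$, apply group-wise anonymity to collapse the parameters to $\{\beta_\gamma\}_{\gamma\in\Gamma}$, and then identify the two min-max formulas via the sorted-peak functions $\tau_{\gamma_q}$; the converse likewise unfolds $\beta_S:=\beta_{c(S)}$ and reads off anonymity from the invariance of the $\tau$'s. If anything, you are more careful than the paper on the one delicate point — that $\min_{S:\,c(S)=\gamma}\max_{i\in S}\opat{i}{1}$ equals $\max_q\tau_{\gamma_q}(\rprof_{N_q})$ only because the minimum is attained by the $\gamma_q$ smallest peaks per group — which the paper states somewhat loosely.
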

\begin{proof}
    \textbf{(Only if:)} Assume that $f$ is a unanimous, strategy-proof, and group-wise anonymous. We show that it is a GMMR. Note that since $f$ is unanimous and strategy-proof, by Lemma 2.10, $f$ is a MMR. Let $\{\beta_{S}\}_{S\subseteq N}$ be the parameters of $f$. 

For any set $S\subseteq N$, let $c(S)=(\alpha^S_1,\ldots,\alpha^S_g)$ where $\alpha^S_q=|\{i\in S\cap N_q\}|$. We first show that $\beta_S=\beta_{S'}$ for all $S$ and $S'$ with $c(S)=c(S')$. Take the preference profile $\rprof_N$ where $\opat{i}{1}=p_1$ for all $i\in S$ and $\opat{i}{1}=p_m$ for all $i\in N\setminus S$, and $\rprof_N'$ where $\opat{i}{1}=p_1$ for all $i\in S'$ and $\opat{i}{1}=p_m$ for all $i\in N\setminus S'$. Consider the profile $\rprof_N$. Since $\opat{i}{1}=p_1$ if $i\in S$ and $\opat{i}{1}=p_m$ if $i\in N\setminus S$, we have for all $T$ with $T\subsetneq S$, $\max_{i\in T}\{\opat{i}{1},\beta_T\}=p_m$ and for all $T\subseteq S$, $\max_{i\in T}\{\opat{i}{1},\beta_T\}=\beta_T$. This together with $\beta_{T'}\trianglelefteq \beta_{T''}$ for $T''\subseteq T'$ implies $f(\rprof_N)=\beta_S$. Similarly, $f(\rprof_N')=\beta_{S'}$. However, as $c(S)=c(S')$, $\rprof_N$ and $\rprof_N'$ are group-wise equivalent. Combining these observations with group-wise anonymity of $f$, we have  $\beta_S=\beta_{S'}$. 

Since for any $S\subseteq N$, $c(S)\in \Gamma$, the parameter set of $f$ can be written as $\{\beta_\gamma\}_{\gamma\in \Gamma}$. Consider a set $S\subseteq N$ and a profile $\rprof_N$. The value $\max_{i\in S}\{\opat{i}{1}\}$ is same as $\max_{q\in G}\{\tau_{\gamma_q}(\rprof_{N_q})\}$. On combining this together with the fact that $\beta_S=\beta_{S'}$ for all $S$ and $S'$ with $c(S)=c(S')$, we have for all $\rprof_N\in \mathcal{D}^n$ $$f(\rprof_N)=\min_{\gamma \in \Gamma}\left [\max\{\tau_{\gamma_1}(\rprof_{N_1}),\ldots,\tau_{\gamma_g}(\rprof_{N_g}), \beta_\gamma\}\right].$$

\textbf{(If:)} Let $f:\mathcal{D}^n \to \proj$ be a GMMR.  We have to show that $f$ is unanimous, strategy-proof, and group-wise anonymous. Since $f$ is GMMR, it is a MMR (see Definition 2.8). Hence, by Lemma 2.10, $f$ is strategy-proof and unanimous.  To show that $f$ is group-wise anonymous, take two group-wise equivalent profiles $\rprof_N$ and $\rprof'_N$, that is,  there exists a group-preserving permutation $\pi_N$ such that $\rprof_N'=\rprof_{\pi(N)}$. Let $\{\beta_\gamma\}_{\gamma\in \Gamma}$ be the parameters of $f$. Since $\rprof_N$ and $\rprof_N'$ are group-wise equivalent, for any $\gamma\in \Gamma$ and any group $q\in G$, we have $\tau_{\gamma_q}(\rprof_{N_q})=\tau_{\gamma_q}(\rprof_{N'_q})$. Therefore, by the definition of GMMR, $f(\rprof_N)=f(\rprof_N')$.
\end{proof}

We express a RSCR as a convex combination of DSCRs $\phi_1,\ldots,\phi_q$. That is, $\varphi=\sum_{w\in W} \lambda_w \varphi_w$ where $W = \{1,2,\ldots,q\}$, $\sum_{w\in W} \lambda_w = 1$, and for every $j \in W$, $0\leq\lambda_j\leq 1$ and $\varphi_j$ is a DSCR. If every $\varphi_j$ such that $\lambda_j > 0$ is a group min-max rule, such a RSCR is called \textbf{random group min-max rule}. Next, we prove that random group min-max rules are equivalent to probabilistic fixed group ballot rules.

\begin{theorem}\label{the: ep_gwa}
	Every probabilistic fixed group ballot rule on $\mathcal{D}^n$ is also a random group min-max rule on $\mathcal{D}^n$. 
\end{theorem}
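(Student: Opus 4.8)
The plan is to express the given probabilistic fixed group ballot rule $\varphi$, with its collection of ballots $\{\delta_\gamma\}_{\gamma \in \Gamma}$, as a finite convex combination of group min-max rules through a quantile (threshold) decomposition of the ballots. For each $\gamma \in \Gamma$ write $F_\gamma(t) := \delta_\gamma([p_1,p_t])$ for the cumulative mass of the ballot, and for each level $\ell \in (0,1]$ define the deterministic parameter $\beta^\ell_\gamma := \min\{p_t \in \proj : F_\gamma(t) \geq \ell\}$, i.e., the $\ell$-quantile of $\delta_\gamma$.

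The first step is to check that for every fixed $\ell$, the family $\{\beta^\ell_\gamma\}_{\gamma \in \Gamma}$ satisfies the defining conditions of a group min-max rule (\Cref{def: gmmr}); call the resulting GMMR $f_\ell$. Ballot unanimity ($\delta_{\underline{\gamma}}(p_m)=1$ and $\delta_{\overline{\gamma}}(p_1)=1$) forces $\beta^\ell_{\underline{\gamma}} = p_m$ and $\beta^\ell_{\overline{\gamma}} = p_1$, and the monotonicity of the ballots ($\gamma \gg \gamma'$ implies $F_\gamma(t) \geq F_{\gamma'}(t)$ for all $t$) immediately yields $\beta^\ell_\gamma \trianglelefteq \beta^\ell_{\gamma'}$ whenever $\gamma \gg \gamma'$, which is exactly the inequality required of a GMMR.

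The crux is a cumulative identity: for every profile $\rprof_N$ and project $p_t$, I would show
$$f_\ell(\rprof_N) \trianglelefteq p_t \iff \beta^\ell_{\alpha(t,\rprof_N)} \trianglelefteq p_t \iff F_{\alpha(t,\rprof_N)}(t) \geq \ell.$$
To prove the first equivalence I unfold the min-max form: $f_\ell(\rprof_N) \trianglelefteq p_t$ holds iff some $\gamma$ satisfies $\max\{\tau_{\gamma_1}(\rprof_{N_1}),\ldots,\tau_{\gamma_g}(\rprof_{N_g}),\beta^\ell_\gamma\} \trianglelefteq p_t$. Since $\tau_{\gamma_q}(\rprof_{N_q}) \trianglelefteq p_t$ holds iff at least $\gamma_q$ voters of $N_q$ have their peak at or before $p_t$, the requirement that all $\tau_{\gamma_q} \trianglelefteq p_t$ is equivalent to $\alpha(t,\rprof_N) \gg \gamma$; and by the monotonicity of $\beta^\ell$ just established, $\gamma = \alpha(t,\rprof_N)$ is the dominating choice, so the existential over $\gamma$ collapses to $\beta^\ell_{\alpha(t,\rprof_N)} \trianglelefteq p_t$. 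The second equivalence is immediate from the definition of $\beta^\ell$ as the smallest $p_s$ with $F_\gamma(s) \geq \ell$. I expect this bookkeeping — translating $\tau_{\gamma_q}(\rprof_{N_q}) \trianglelefteq p_t$ into a component inequality on $\alpha(t,\rprof_N)$ and then reducing the quantifier over $\gamma$ to the single dominating index — to be the main obstacle.

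Finally I would integrate over $\ell$. Using $\bool[f_\ell(\rprof_N) = p_t] = \bool[f_\ell(\rprof_N) \trianglelefteq p_t] - \bool[f_\ell(\rprof_N) \trianglelefteq p_{t-1}]$ together with the identity and the defining formula of a PFGBR,
$$\int_0^1 \bool[f_\ell(\rprof_N) = p_t]\, d\ell = F_{\alpha(t,\rprof_N)}(t) - F_{\alpha(t-1,\rprof_N)}(t-1) = \varphi_{p_t}(\rprof_N).$$
Because $\Gamma$ and $\proj$ are finite, each $\beta^\ell_\gamma$ is a step function of $\ell$ whose only breakpoints are the finitely many values $F_\gamma(t)$; hence $\ell \mapsto f_\ell$ is constant on the intervals between consecutive breakpoints $0 = \ell_0 < \cdots < \ell_K = 1$ and takes only finitely many distinct GMMR values. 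The displayed integral is therefore the finite convex combination $\varphi = \sum_{k=1}^K (\ell_k - \ell_{k-1})\, f_{\ell_k}$ (with weights summing to $1$), exhibiting $\varphi$ as a random group min-max rule and completing the argument.
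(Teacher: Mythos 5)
Your proof is correct, but it takes a genuinely different route from the paper's. The paper first reduces \Cref{the: ep_gwa} to the finitely many \emph{boundary profiles} $\rprof^{\gamma}$ (where every peak sits at $p_1$ or $p_m$), invoking the fact that a PFGBR's outcome at an arbitrary profile is determined linearly by its outcomes at these profiles, and then establishes the existence of a convex decomposition there via Farkas' lemma together with a max-flow min-cut argument on a purpose-built network (with a separate lemma showing the minimum cut capacity is $1$). Your argument instead is a direct quantile (layer-cake) decomposition of the ballots: you define $\beta^{\ell}_{\gamma}$ as the $\ell$-quantile of $\delta_{\gamma}$, observe that ballot unanimity and ballot monotonicity transfer verbatim to the boundary and monotonicity conditions of \Cref{def: gmmr} (the quantile map is order-reversing in the right sense, and this needs only pointwise dominance of the CDFs, not that $\Gamma$ be a chain), prove the cumulative identity $f_{\ell}(\rprof_N)\trianglelefteq p_t \iff \delta_{\alpha(t,\rprof_N)}([p_1,p_t])\geq \ell$ by collapsing the existential over $\gamma$ to the dominating index $\alpha(t,\rprof_N)$, and integrate over $\ell$; since the breakpoints $\delta_{\gamma}([p_1,p_t])$ depend only on the ballots and not on the profile, the same finite set of GMMRs and weights works uniformly across all profiles, which is exactly what a convex combination of rules requires. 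What each approach buys: yours is constructive and self-contained --- it produces explicit extreme rules and explicit weights (at most roughly $|\Gamma|\cdot m$ of them) without appealing to LP duality, flows, or the boundary-profile linearity --- whereas the paper's duality/flow argument is non-constructive but follows the template of prior extreme-point characterizations in this literature and is the kind of machinery that can survive in settings where no monotone quantile coupling of the ballots is available.
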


\begin{proof}
	For every $\gamma\in \Gamma$, let $\rprof^\gamma$ denote the profile where from group $q$, $\gamma_q$ number of voters have top-ranked projects at $p_1$ and $|N_q|-\gamma_q$ number of voters have top-ranked projects at $p_m$. We call these profiles \textit{boundary profiles}. Since the outcome of a probabilistic fixed group ballot rule at any profile is linearly dependent on the outcomes at the boundary profiles (see \cite{ehlers2002strategy}), it is sufficient to show that any unanimous and strategy-proof RSCR on the boundary profiles can be written as a convex combination of unanimous and strategy-proof DSCRs on the boundary profiles.

	 Let $\varphi$ be a unanimous and strategy-proof RSCR defined on these boundary profiles. We will show that there are unanimous and strategy-proof DSCRs such that $\varphi$ can be written as a convex combination of those deterministic rules. More formally, we show that  there exist $f_1,\ldots,f_r$ unanimous and strategy-proof DSCRs and non-negative numbers $\lambda_s$ where $s\leq r$ with $\sum_{s=1}^r=1$ such that for all $\gamma\in \Gamma$, $$\varphi_{p_k}(\rprof^\gamma)=\sum_{s=1}^{r}\lambda_sf_{sp_k}(\rprof^\gamma),$$
	for all $p_k\in \proj$. Here $f_{sp_k}(\rprof^\gamma)=1$ if $f_s(\rprof^\gamma)=p_k$ otherwise $f_{sp_k}(\rprof^\gamma)=0$. Let $z=\sum_{q=1}^{g}2^{|N_q|}$. These system of equations can be represented in matrix form  as $Z\lambda=d$ where $Z$ is a $z  m\times r$ size matrix of $0-1$, $\lambda$ is column vector of length $r$ with $\lambda_s$ in row $s$, and $d$ is a column vector of length $zm$ with $\varphi_{p_k}(\rprof^\gamma)$ in the row corresponding to $(\gamma,k)$. By Farka's Lemma, having a solution to this system of equations is equivalent to show that $d'y\geq 0$ for any $y\in \mathbb{R}^{z m}$ with $Z'y\geq 0\in \mathbb{R}^r$.
	
	We will prove this by using a network flow formulation of the problem and using the max-flow min-cut theorem.  Consider an arbitrary numbering of  $\gamma\in \Gamma$, $\gamma_1,\ldots,\gamma_{z}$ with $\gamma_1=\underline{\gamma}$ and $\gamma_{z}=\overline{\gamma}$. The set of vertices is $V=\{x,y\}\cup \{(S_i,j)\mid i=1,\ldots,z,j=1,\ldots,m\}$ where $x$ is the source and $y$ is the sink. The edges are defined as follows:
	
	\begin{itemize}
        \itemsep0em
		\item For every $s\in \{1,\ldots, r\}$, let $E_s=\{((\gamma_i,j),(\gamma_{i+1},k))\mid i=1,\ldots, z-1,f_{sp_j}(P^{\gamma_i})=f_{sp_k}(P^{\gamma_{i+1}})=1\}$.
		
		\item There is an edge $(x,(\gamma_1,j))$ for all $j=1,\ldots,m$.
		\item There is an edge $((\gamma_{z,j}),y)$ for all $j=1,\ldots,m$.
	\end{itemize} 

Now the set of edges $E$ is the union of sets $E_s$. We can define a path for every $s$, it is defined as $E_s\cup \{(x,(\gamma_1,j)),((\gamma_{z},j'),y)\mid f_{sj}(\rprof^\gamma_1)=f_{sj'}(P^{z})=1\}$.  Hence, every strategy-proof deterministic rule has a path from source $x$ to sink $y$. The capacities of the vertices are defined as $c(\gamma_i,j)=\varphi_{p_j}(P^{\gamma_i})$ and $c(x)=c(y)=1$.

A cut is a set of vertices such that every deterministic rule intersects the cut at least once.

\begin{lemma}\label{le_1}
	The minimum capacity of a cut is equal to 1.
\end{lemma}

We prove the above lemma in Appendix \ref{app4: le_1}. By Lemma \ref{le_1} and the max-flow min-cut theorem, it follows that the maximal flow through the network is  1. Since the total capacity of the nodes corresponding to any given profile is $1$ and every path will intersect one such node, it must be that the flow through each node in the network will be exactly the capacity of the node. 

We are now ready to complete the proof of the theorem. Consider a
maximum flow through the network. It follows from the  definition of the network that each path is determined by a deterministic  strategy-proof rule. Consider such a rule $f_s$. Suppose that  $Fl(s)$ denotes the flow through the path induced by the rule. 
Since it is a flow, we have  $Fl(s)\geq0$, and hence
\begin{equation}\label{e_1}
	\sum_{s=1}^{r}\sum_{\gamma\in \Gamma}y(\gamma,r(\Gamma))Fl(s)\geq 0.
\end{equation}
Consider the coefficient of an arbitrary term $y(\gamma, j)$ at the left-hand side of \Cref{e_1}. Note that the total flow through an edge of the network is the sum of the flows through all paths containing the edge. Hence, the
total flow at the vertex $(\gamma, j)$ in the network is the sum of the flows through the vertex through all paths containing $(\gamma, j)$. We have already shown  $\varphi_{p_j}(\rprof^\gamma)$, which yields 
$$\sum_{\gamma\in \Gamma}\sum_{j=1}^{m}y(\gamma,j)\varphi_{p_j}(\rprof^\gamma)\geq 0.$$
\end{proof}	

From \Cref{the: dc_gwa} and \Cref{the: ep_gwa}, it can be concluded that a RSCR is unanimous, strategy-proof, and \community-wise anonymous if and only if it can be expressed as a convex combination of group min-max rules.

\subsection{EPC: Weak-GEG and Group-Wise Anonymity}\label{sec: ep_wfgwa}
It is evident from \Cref{sec: ep_gwa} that the unanimous, strategy-proof, and group-wise anonymous RSCRs are characterized to be convex combinations of group min-max rules. In this section, we further impose weak-GEG and characterize the random group min-max rules that are \weakf (\Cref{def: weakfair}). The characterization intuitively requires that for any $\kappa_q$-sized interval of $\triangleleft$ that can be the outcome of $\psi_q$ under some conditions, the weightage allotted to DSCRs having both the parameters in the interval is at least $\eta_q$ whenever $\gamma$ corresponding to one of them dominates the other.

\begin{theorem}\label{the: ep_weak}
	A RSCR \rscr is unanimous, strategy-proof, \community-wise anonymous, and \weakf if and only if it is a random group min-max rule $\varphi=\sum_{w\in W} \lambda_w \varphi_w$ such that for all $q \in G$, for all $\gamma,\gamma' \in \Gamma$ such that $\gamma \gg \gamma'$, and for all $p_x \in \proj$ feasible at $(\gamma'_q,\gamma_q;\psi_q)$, we have
	$$\sum_{\{w\;\mid \; p_x \trianglelefteq \beta^{\varphi_w}_{\gamma'},\; \beta^{\varphi_w}_{\gamma} \trianglelefteq p_{x+\kappa_q-1}\}}{\lambda_w} \geq \eta_q.$$
\end{theorem}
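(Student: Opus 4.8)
The plan is to prove both directions by leveraging the already-established equivalence between weak-GEG for probabilistic fixed group ballot rules (\Cref{the: dc_weak}) and the equivalence between PFGBRs and random group min-max rules (\Cref{the: dc_gwa} together with \Cref{the: ep_gwa}). Since \Cref{the: ep_gwa} shows that every PFGBR is a random group min-max rule (and the converse follows by combining \Cref{the: dc_gwa} with the fact that every random group min-max rule is unanimous, strategy-proof, and group-wise anonymous), the only genuinely new content here is to show that the weak-GEG condition stated in terms of the probabilistic ballots $\{\delta_\gamma\}$ in \Cref{the: dc_weak} translates exactly into the weightage condition on the min-max parameters $\{\beta^{\varphi_w}_\gamma\}$ stated above. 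Thus the heart of the argument is a dictionary between the ballot-difference quantity $\delta_\gamma([p_1,p_{x+\kappa_q-1}]) - \delta_{\gamma'}([p_1,p_{x-1}])$ and the sum of weights $\sum \lambda_w$ over the relevant min-max rules.

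First I would fix a random group min-max rule $\varphi = \sum_{w \in W} \lambda_w \varphi_w$ and recover its probabilistic ballots. The key computational observation is how the ballot $\delta_\gamma$ of the underlying PFGBR relates to the component rules $\varphi_w$: evaluating $\varphi$ at the boundary profile $\rprof^\gamma$ (where from each group $q$, exactly $\gamma_q$ voters peak at $p_1$ and the rest peak at $p_m$) yields $\delta_\gamma$, and at such a profile each group min-max rule $\varphi_w$ outputs exactly $\beta^{\varphi_w}_\gamma$ (since $\tau_{\gamma_q}(\rprof^\gamma_{N_q}) = p_1$ for the $\gamma_q$ voters at $p_1$ and the max with $\beta_\gamma$ collapses to $\beta_\gamma$, as in the computation of \Cref{eg: gmmr}). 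Hence $\delta_\gamma(\{p_t\}) = \sum_{\{w \,\mid\, \beta^{\varphi_w}_\gamma = p_t\}} \lambda_w$, so that $\delta_\gamma([p_1,p_t]) = \sum_{\{w \,\mid\, \beta^{\varphi_w}_\gamma \trianglelefteq p_t\}} \lambda_w$. This is the translation I need.

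With this dictionary in hand, I would compute the ballot difference directly:
\begin{align*}
\delta_{\gamma}([p_1, p_{x+\kappa_q-1}]) - \delta_{\gamma'}([p_1, p_{x-1}]) = \sum_{\{w \,\mid\, \beta^{\varphi_w}_\gamma \trianglelefteq p_{x+\kappa_q-1}\}} \lambda_w - \sum_{\{w \,\mid\, \beta^{\varphi_w}_{\gamma'} \trianglelefteq p_{x-1}\}} \lambda_w.
\end{align*}
Since $\gamma \gg \gamma'$, monotonicity of the min-max parameters gives $\beta^{\varphi_w}_\gamma \trianglelefteq \beta^{\varphi_w}_{\gamma'}$ for every $w$, so the index set of the second sum is contained in that of the first; subtracting leaves exactly the weight over those $w$ with $\beta^{\varphi_w}_\gamma \trianglelefteq p_{x+\kappa_q-1}$ and $\beta^{\varphi_w}_{\gamma'} \not\trianglelefteq p_{x-1}$, i.e. $p_x \trianglelefteq \beta^{\varphi_w}_{\gamma'}$. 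This is precisely $\sum_{\{w \,\mid\, p_x \trianglelefteq \beta^{\varphi_w}_{\gamma'},\; \beta^{\varphi_w}_{\gamma} \trianglelefteq p_{x+\kappa_q-1}\}} \lambda_w$. Therefore the ballot-difference condition of \Cref{the: dc_weak} holds for all relevant $q, \gamma, \gamma', p_x$ if and only if the weightage condition in the present statement holds, and both are equivalent to weak-GEG via \Cref{the: dc_weak}, \Cref{the: dc_gwa}, and \Cref{the: ep_gwa}.

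The main obstacle I anticipate is carefully handling the subtraction of the two nested index sets, specifically verifying that the boundary edge cases (where $x = 1$, so $p_{x-1}$ is undefined, or where the interval reaches $p_m$) are treated consistently with the conventions $\delta_{\gamma'}([p_1,p_0]) = 0$ and the feasibility condition that $p_x$ is feasible at $(\gamma'_q, \gamma_q; \psi_q)$, which ensures the interval genuinely can arise as $\psi_q(\rprof_{N_q})$. I would also need to confirm that the component-wise computation of $\beta^{\varphi_w}_\gamma$ at boundary profiles is valid for \emph{every} $w$ with $\lambda_w > 0$ simultaneously, which follows because the decomposition in \Cref{the: ep_gwa} is determined entirely by the outcomes on boundary profiles. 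Once the index-set bookkeeping is pinned down, the remainder is a direct substitution requiring no further machinery.
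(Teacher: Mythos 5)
Your proposal is correct, but it takes a genuinely different route from the paper. The paper proves \Cref{the: ep_weak} directly: for the ``only if'' direction it uses feasibility of $p_x$ at $(\gamma'_q,\gamma_q;\psi_q)$ to construct an explicit profile $\rprof_N$, invokes \weakf at that profile, and then argues via the min-max formula that every component rule $\varphi_w$ whose outcome lands in $[p_x,p_{x+\kappa_q-1}]$ must have $p_x \trianglelefteq \beta^{\varphi_w}_{\gamma'}$ and $\beta^{\varphi_w}_{\gamma} \trianglelefteq p_{x+\kappa_q-1}$; for the ``if'' direction it reverses this, showing every $\varphi_w$ with those parameter bounds selects into the interval at an arbitrary profile. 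You instead reduce the whole theorem to the already-proven direct characterization (\Cref{the: dc_weak}) by building a dictionary at boundary profiles: since $\varphi(\rprof^\gamma)=\delta_\gamma$ for any PFGBR and each group min-max rule outputs exactly $\beta^{\varphi_w}_\gamma$ at $\rprof^\gamma$ (the non-dominated $\hat{\gamma}$ contribute $p_m$ to the max, and parameter monotonicity makes $\beta_\gamma$ the minimum over the dominated ones), you get the exact identity $\delta_\gamma([p_1,p_t]) = \sum_{\{w \mid \beta^{\varphi_w}_\gamma \trianglelefteq p_t\}}\lambda_w$, and the nested-index-set subtraction (valid because $\gamma \gg \gamma'$ forces $\beta^{\varphi_w}_{\gamma} \trianglelefteq \beta^{\varphi_w}_{\gamma'}$) turns the ballot difference of \Cref{the: dc_weak} into precisely the weighted sum in the statement. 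What your approach buys: it is shorter, modular, decomposition-independent (the weighted sum equals the same ballot difference for every decomposition of $\varphi$), and it upgrades the paper's two one-sided containment arguments into an exact equality, which also makes clear that the analogous reduction would \emph{not} work as cleanly for \Cref{the: ep_strong}, where no such additive identity is available. What the paper's approach buys: it is self-contained within the extreme-point framework --- it does not lean on the PFGBR--random-group-min-max equivalence, of which \Cref{the: ep_gwa} explicitly proves only one direction (the converse, that every random group min-max rule is unanimous, strategy-proof, and group-wise anonymous, is asserted but not spelled out) --- and it shows concretely, profile by profile, which component rules carry the fairness mass at non-boundary profiles. Your flagged edge cases ($x=1$ via the convention $\delta_{\gamma'}([p_1,p_0])=0$, and simultaneity of the dictionary across all $w$) are indeed non-issues and resolve exactly as you say.
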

\begin{proof}
    \textbf{(Only if:)} We prove that for a random group min-max rule to satisfy \weakf, it is necessary that the given condition holds. Consider any $q, \gamma, \gamma',$ and $p_x$ as given in the theorem. Since $p_x$ is feasible at $(\gamma'_q,\gamma_q; \psi_q)$, we know that there exists a profile $\rprof_{N_q}$ such that exactly $\gamma'_q$ voters have their top-ranked projects before $p_x$, exactly $\gamma_q - \gamma'_q$ voters have their top-ranked projects in the interval $[p_x,p_{x+\kappa_q-1}]$, and $\psi_q(\rprof_{N_q}) = [p_x,p_{x+\kappa_q-1}]$. For every other group $h \in G \setminus \{q\}$, construct an arbitrary profile $\rprof_{N_h}$ such that exactly $\gamma'_h$ voters have their top-ranked projects before $p_x$ and exactly $\gamma_h - \gamma'_h$ voters have their top-ranked projects in the interval $[p_x,p_{x+\kappa_q-1}]$. Let the combined preference profile be $\rprof_N$.

Since the fairness requirement of group $q$ is met at $\rprof_N$, $\sum_{\{w: \varphi_w(\rprof_N) \in [p_x,p_{x+\kappa_q-1}]\}}{\lambda_w} \geq \eta_q$. Consider any $w$ such that $\varphi_w(\rprof_N) \in [p_x,p_{x+\kappa_q-1}]$. Since $\varphi_w$ is a group min-max rule, for any $\hat{\gamma} \in \Gamma$, $p_x \trianglelefteq \varphi_w(\rprof_N) \trianglelefteq \max\{\tau_{\hat{\gamma}_1}(\rprof_{N_1}),\ldots,\tau_{\hat{\gamma}_g}(\rprof_{N_g}), \beta^{\varphi_w}_{\hat{\gamma}}\}$. This implies $p_x \trianglelefteq \beta^{\varphi_w}_{\gamma'}$ since $\tau_{\gamma'_h}(\rprof_{N_h}) \triangleleft p_x$ for any $h \in G$. Similarly, since $\varphi_w$ is a group min-max rule, there exists $\hat{\gamma} \in \Gamma$ such that $\max\{\tau_{\hat{\gamma}_1}(\rprof_{N_1}),\ldots,\tau_{\hat{\gamma}_g}(\rprof_{N_g}), \beta^{\varphi_w}_{\hat{\gamma}}\} = \varphi_w(\rprof_N)$. This is not possible if $\hat{\gamma}_h > \gamma_h$ for any $h \in G$ since $\varphi_w(\rprof_N) \trianglelefteq p_{x+\kappa_q-1}$ and $p_{x+\kappa_q-1} \triangleleft \tau_{\gamma_h+1}(\rprof_{N_h})$ by construction. Therefore, there exists $\hat{\gamma}$ such that $\gamma \gg \hat{\gamma}$ and $\max\{\tau_{\hat{\gamma}_1}(\rprof_{N_1}),\ldots,\tau_{\hat{\gamma}_g}(\rprof_{N_g}), \beta^{\varphi_w}_{\hat{\gamma}}\}  = \varphi_w(\rprof_N)$. By the definition of group min-max rules, $\beta^{\varphi_w}_\gamma \trianglelefteq \beta^{\varphi_w}_{\hat{\gamma}}$. This implies, $\beta^{\varphi_w}_{\gamma} \trianglelefteq \beta^{\varphi_w}_{\hat{\gamma}} \trianglelefteq \varphi_w(\rprof_N) \trianglelefteq p_{x+\kappa_q-1}$. Hence, we have $p_x \trianglelefteq \beta^{\varphi_w}_{\gamma'}$ and $\beta^{\varphi_w}_{\gamma} \trianglelefteq p_{x+\kappa_q-1}$.

\textbf{(If:)} We now prove that any random group min-max rule that satisfies the given condition also satisfies \weakf. Consider any group $q \in G$ and an arbitrary preference profile $\rprof_N$. Let $p_x = \min{\{\psi_q(\rprof_{N_q})\}}$. Set $\gamma$ and $\gamma'$ such that for every $h \in G$, $\gamma'_h = |i \in N_h: \opat{i}{1} \triangleleft p_x|$ and $\gamma_h = |i \in N_h: \opat{i}{1} \trianglelefteq p_{x+\kappa_q-1}|$. By construction, $\gamma \gg \gamma'$. Also, since $\psi_q(\rprof_{N_q}) = [p_x,p_{x+\kappa_q-1}]$, $p_x$ is feasible at $(\gamma'_q,\gamma_q;\psi_q)$. Thus $q,\gamma,\gamma',$ and $p_x$ satisfy all the given conditions.

Consider any group min-max rule $\varphi_w$ such that $p_x \trianglelefteq \beta^{\varphi_w}_{\gamma'}$ and $\beta^{\varphi_w}_{\gamma} \trianglelefteq p_{x+\kappa_q-1}$. Since $\beta^{\varphi_w}_{\gamma} \trianglelefteq p_{x+\kappa_q-1}$ and also $\tau_{\gamma_h}(\rprof_{N_h}) \trianglelefteq p_{x+\kappa_q-1}$ for any $h \in G$, $\max\{\tau_{\gamma_1}(\rprof_{N_1}),\ldots,\tau_{\gamma_g}(\rprof_{N_g}), \beta^{\varphi_w}_\gamma\} \trianglelefteq p_{x+\kappa_q-1}$. This implies, $\varphi_w(\rprof_N) \trianglelefteq p_{x+\kappa_q-1}$. By construction, $p_x \trianglelefteq \tau_{\gamma'_h+1}$ for any $h \in G$. This implies, for any $\hat{\gamma}$ such that $\hat{\gamma}_h > \gamma_h$ for some $h \in G$, $p_x \trianglelefteq \max\{\tau_{\hat{\gamma}_1}(\rprof_{N_1}),\ldots,\tau_{\hat{\gamma}_g}(\rprof_{N_g}), \beta^{\varphi_w}_{\hat{\gamma}}\}$. For any $\hat{\gamma}$ such that $\gamma' \gg \hat{\gamma}$, since $p_x \trianglelefteq \beta^{\varphi_w}_{\gamma'}$, by definition of group min-max rule, $p_x \trianglelefteq \beta^{\varphi_w}_{\hat{\gamma}}$ and thus, $p_x \trianglelefteq \max\{\tau_{\hat{\gamma}_1}(\rprof_{N_1}),\ldots,\tau_{\hat{\gamma}_g}(\rprof_{N_g}), \beta^{\varphi_w}_{\hat{\gamma}}\}$. Therefore, for any $\hat{\gamma} \in \Gamma$, $p_x \trianglelefteq \max\{\tau_{\hat{\gamma}_1}(\rprof_{N_1}),\ldots,\tau_{\hat{\gamma}_g}(\rprof_{N_g}), \beta^{\varphi_w}_{\hat{\gamma}}\}$. This implies, $p_x \trianglelefteq \varphi_w(\rprof_N)$. Combining this with $\varphi_w(\rprof_N) \trianglelefteq p_{x+\kappa_q-1}$ gives $\varphi_w(\rprof_N) \in [p_x,p_{x+\kappa_q-1}]$. Fairness requirement of the group $q$ is met and this completes the proof.
\end{proof}

\subsection{EPC: Strong-GEG and Group-Wise Anonymity}\label{sec: ep_sfgwa}
Here, we characterize the random group min-max rules that are \strof (Def. \ref{def: strongfair}). Recall the notion of feasibility of a set of projects introduced in \Cref{sec: dc_sfgwa}. Basically, the following characterization ensures that there cannot be $\kappa_q$ consecutive projects which could be selected by $\psi_q$ under some conditions and are allotted a probability less than $\eta_q$ each.
\begin{theorem}\label{the: ep_strong}
  A RSCR \rscr is unanimous, strategy-proof, \community-wise anonymous, and \strof if and only if it is a random group min-max rule $\varphi=\sum_{w\in W} \lambda_w \varphi_w$ such that for all $q \in G$, for all $\gamma^0,\gamma^1,\ldots,\gamma^{\kappa_q} \in \Gamma$ such that $\gamma^{\kappa_q} \gg \ldots \gg \gamma^1 \gg \gamma^0$, and for all $p_x \in \proj$ such that $\{p_x,p_{x+1},\ldots,p_{x+\kappa_q-1}\}$ is feasible at $(\gamma^0,\gamma^1,\ldots,\gamma^{\kappa_q};\psi_q)$, there exists $t \in [0,{\kappa_q-1}]$ such that $$\sum_{\left\{w\;\mid \; p_{x+t} \trianglelefteq \beta^{\varphi_w}_{\gamma^t}\;,\; \beta^{\varphi_w}_{\gamma^{t+1}}\trianglelefteq p_{x+t}\right\}}\lambda_w\geq \eta_q.$$
\end{theorem}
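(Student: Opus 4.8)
The plan is to mirror the argument for the weak-GEG extreme point characterization (\Cref{the: ep_weak}), adapting its interval argument so that it pins down a \emph{single} project rather than membership in a $\kappa_q$-sized interval. First I would invoke \Cref{the: dc_gwa} and \Cref{the: ep_gwa}, which together establish that a RSCR is unanimous, strategy-proof, and \community-wise anonymous if and only if it is a random group min-max rule $\varphi = \sum_{w \in W} \lambda_w \varphi_w$. Hence it suffices to prove that such a rule additionally satisfies \strof (\Cref{def: strongfair}) exactly when the displayed weight condition holds.

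The key technical step is a single-project min-max lemma. For fixed $q$, $\gamma^0 \ll \gamma^1 \ll \cdots \ll \gamma^{\kappa_q}$, and $p_x$ with $\{p_x,\ldots,p_{x+\kappa_q-1}\}$ feasible at $(\gamma^0,\ldots,\gamma^{\kappa_q};\psi_q)$, I would use feasibility to fix a profile $\rprof_N$ in which, for every group $N_h$, exactly $\gamma^0_h$ voters have peaks strictly before $p_x$ and exactly $\gamma^{i+1}_h-\gamma^i_h$ voters have peaks at $p_{x+i}$ for each $i\in\{0,\ldots,\kappa_q-1\}$, with $\psi_q(\rprof_{N_q})=[p_x,p_{x+\kappa_q-1}]$ (this is precisely the profile built in the proof of \Cref{the: dc_strong}). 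By construction $\tau_{\gamma^{t+1}_h}(\rprof_{N_h}) \trianglelefteq p_{x+t}$ for all $h$ and $t$, while $\tau_{\gamma_h}(\rprof_{N_h}) \trianglelefteq p_{x+t}$ forces $\gamma_h \leq \gamma^{t+1}_h$. The lemma I would establish is: a group min-max rule $\varphi_w$ satisfies $\varphi_w(\rprof_N)=p_{x+t}$ if and only if $p_{x+t} \trianglelefteq \beta^{\varphi_w}_{\gamma^t}$ and $\beta^{\varphi_w}_{\gamma^{t+1}} \trianglelefteq p_{x+t}$. To prove it I decompose the equality into $\trianglelefteq$ and $\trianglerighteq$. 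For $\varphi_w(\rprof_N) \trianglelefteq p_{x+t} \iff \beta^{\varphi_w}_{\gamma^{t+1}} \trianglelefteq p_{x+t}$: the converse evaluates the max in \Cref{def: gmmr} at $\gamma^{t+1}$ (all $\tau$-terms and the parameter lie at or before $p_{x+t}$), and the forward direction takes a $\gamma$ attaining the min with every $\tau_{\gamma_h}\trianglelefteq p_{x+t}$, deduces $\gamma^{t+1} \gg \gamma$, and applies parameter monotonicity $\beta^{\varphi_w}_{\gamma^{t+1}} \trianglelefteq \beta^{\varphi_w}_{\gamma} \trianglelefteq p_{x+t}$. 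The companion equivalence $\varphi_w(\rprof_N) \trianglerighteq p_{x+t} \iff p_{x+t} \trianglelefteq \beta^{\varphi_w}_{\gamma^t}$ then follows by applying the first equivalence with $t-1$ in place of $t$ and negating.

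With the lemma in hand, both directions are routine. For the \textbf{only if} direction, given that $\varphi$ is \strof, I would take arbitrary $q,\gamma^0,\ldots,\gamma^{\kappa_q},p_x$ as in the statement, build the profile $\rprof_N$ above, and apply \strof to obtain some $t\in[0,\kappa_q-1]$ with $\varphi_{p_{x+t}}(\rprof_N) \geq \eta_q$. Writing $\varphi_{p_{x+t}}(\rprof_N) = \sum_{\{w\,\mid\,\varphi_w(\rprof_N)=p_{x+t}\}}\lambda_w$ and substituting the lemma's characterization of the index set yields exactly the claimed sum $\sum_{\{w\,\mid\,p_{x+t}\trianglelefteq\beta^{\varphi_w}_{\gamma^t},\,\beta^{\varphi_w}_{\gamma^{t+1}}\trianglelefteq p_{x+t}\}}\lambda_w \geq \eta_q$. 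For the \textbf{if} direction, given an arbitrary profile $\rprof_N$ and group $q$, I would set $p_x=\min\{\psi_q(\rprof_{N_q})\}$, define $\gamma^0_h=|\{i\in N_h:\opat{i}{1}\triangleleft p_x\}|$ and $\gamma^j_h=|\{i\in N_h:\opat{i}{1}\trianglelefteq p_{x+j-1}\}|$, verify $\gamma^{\kappa_q}\gg\cdots\gg\gamma^0$ and feasibility, and invoke the hypothesis to get a $t$ whose weight sum is at least $\eta_q$; the lemma then shows every $w$ in that index set has $\varphi_w(\rprof_N)=p_{x+t}$, so $\varphi_{p_{x+t}}(\rprof_N)\geq\eta_q$, confirming \strof.

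The main obstacle I anticipate is the single-project lemma, specifically handling the boundary cases $t=0$ and $t=\kappa_q-1$ (where one of $\gamma^t,\gamma^{t+1}$ sits at an extreme of the admissible range) and making sure the equivalence $\varphi_w(\rprof_N)\trianglerighteq p_{x+t} \iff p_{x+t}\trianglelefteq\beta^{\varphi_w}_{\gamma^t}$ is applied with the correct index shift; this is the exact-identification analogue of the interval containment argument in \Cref{the: ep_weak} and is where the monotonicity of the $\beta_\gamma$ parameters and the $\tau$-structure of the constructed profile must be combined most carefully.
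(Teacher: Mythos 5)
Your proposal is correct, and its skeleton matches the paper's proof: both reduce to random group min-max rules, build the same canonical profile from feasibility in the only-if direction, define $\gamma^j_h = |\{i \in N_h : \opat{i}{1} \trianglelefteq p_{x+j-1}\}|$ (with $\gamma^0_h$ counting peaks strictly before $p_x$) from an arbitrary profile in the if direction, and hinge on the same two parameter inequalities at the consecutive vectors $\gamma^t,\gamma^{t+1}$. Where you genuinely differ is in how the pointwise claim about component rules is established. The paper never isolates your single-project lemma; it proves the forward implication (that $\varphi_w(\rprof_N) = p_{x+t}$ yields the two inequalities) inside its only-if direction, and for the converse it first shows $p_{x+t} \trianglelefteq \max\{\tau_{\gamma_1}(\rprof_{N_1}),\ldots,\tau_{\gamma_g}(\rprof_{N_g}),\beta^{\varphi_w}_{\gamma}\}$ for every $\gamma \in \Gamma$ and then proves exact attainment by a case analysis on whether $p_{x+t}$ is the peak of some voter (Case 1) or not (Case 2, where $\gamma^{t+1} = \gamma^t$ forces $\beta^{\varphi_w}_{\gamma^t} = \beta^{\varphi_w}_{\gamma^{t+1}} = p_{x+t}$). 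Your decomposition into two monotone equivalences replaces that attainment analysis: the upper bound $\varphi_w(\rprof_N) \trianglelefteq p_{x+t}$ is read off directly by evaluating the max at $\gamma^{t+1}$ (every $\tau$-term and the parameter lie at or before $p_{x+t}$), and the lower bound comes from the level-$(t-1)$ equivalence by negation. This buys a cleaner, case-free argument for exactness and makes both directions of the theorem one-line instantiations of a single lemma; the only cost is the boundary bookkeeping you already flagged ($t=0$ with $x=1$, where the convention $\tau_0 = p_1$ renders both sides of the companion equivalence trivially true), which does go through.
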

\begin{proof}
    \textbf{(Only if:)} We first prove that a random group min-max rule satisfies \strof only when the given condition is satisfied. Consider any $q, \gamma^0,\gamma^1,\ldots,\gamma^{\kappa_q},$ and $p_x$ as given in the theorem. Construct a profile $\rprof_{N_q}$ such that: (i) For every group $N_h$, set the top-ranked projects of exactly $\gamma^0_h$ voters before $p_x$ and those of exactly $\gamma^{i+1}_h - \gamma^{i}_h$ voters at $p_{x+i}$ for every $i \in \{0,\ldots,\kappa_q-1\}$ (ii) $\psi_q(\rprof_{N_q}) = [p_x,p_{x+\kappa_q-1}]$. Note that such construction is possible since $\{p_x,p_{x+1},\ldots,p_{x+\kappa_q-1}\}$ is feasible at $(\gamma^0,\gamma^1,\ldots,\gamma^{\kappa_q};\psi_q)$.

Since the fairness requirement of group $q$ is met at $\rprof_N$, there exists $t \in [0,{\kappa_q-1}]$ such that $\sum_{\{w: \varphi_w(\rprof_N) = p_{x+t}\}}{\lambda_w} \geq \eta_q$. Consider any $\varphi_w$ such that $\varphi_w(\rprof_N) = p_{x+t}$.

Since $\varphi_w(\rprof_N) = p_{x+t}$ and $\varphi_w$ is a GMMR, $p_{x+t} \trianglelefteq \max\{\tau_{\gamma^t_1}(\rprof_{N_1}),\ldots,\tau_{\gamma^t_g}(\rprof_{N_g}), \beta^{\varphi_w}_{\gamma^t}\}$. But we know that, for any $h \in G$, $\tau_{\gamma^t_h}(\rprof_{N_h}) \trianglelefteq p_{x+t-1}$ by construction of $\rprof_N$. Therefore, $p_{x+t} \trianglelefteq \beta^{\varphi_w}_{\gamma^t}$. Since $\varphi_w(\rprof_N) = p_{x+t}$, there exists $\hat{\gamma}$ such that $\max\{\tau_{\hat{\gamma}_1}(\rprof_{N_1}),\ldots,\tau_{\hat{\gamma}_g}(\rprof_{N_g}), \beta^{\varphi_w}_{\hat{\gamma}}\} = p_{x+t}$. This is not possible if $\hat{\gamma}_h > \gamma^{t+1}_h$ for some $h \in G$ since $\tau_{\gamma^{t+1}_h+1}(\rprof_{N_h}) \succ p_{x+t}$ by the construction of $\rprof_N$. Thus, $\gamma^{t+1} \gg \hat{\gamma}$. Since $\beta^{\varphi_w}_{\hat{\gamma}} \trianglelefteq p_{x+t}$, by the definition of group min-max rules, $\beta^{\varphi_w}_{\gamma^{t+1}} \trianglelefteq p_{x+t}$. Thus it is proved that $p_{x+t} \trianglelefteq \beta^{\varphi_w}_{\gamma^t}$ and $\beta^{\varphi_w}_{\gamma^{t+1}} \trianglelefteq p_{x+t}$.

\textbf{(If:)} Now, we prove the converse direction by considering a random group min-max rule that satisfies the condition in the theorem. We prove that the rule also satisfies \strof. Consider any group $q \in G$ and an arbitrary preference profile $\rprof_N$. Let $p_x = \min{\{\psi_q(\rprof_{N_q})\}}$. For every $h \in G$, set $\gamma^0_h = |\{i \in N_h: \opat{i}{1} \triangleleft p_x\}|$. For every $h \in G$ and $j \in \{1,\ldots,\kappa_q\}$, set $\gamma^j_h = |\{i \in N_h: \opat{i}{1} \trianglelefteq p_{x+j-1}\}|$. By the construction, $\gamma^{\kappa_q} \gg \ldots \gg \gamma^1 \gg \gamma^0$. Also, $\{p_x,p_{x+1},\ldots,p_{x+\kappa_q-1}\}$ is feasible at $(\gamma^0,\gamma^1,\ldots,\gamma^{\kappa_q};\psi_q)$. Thus, they satisfy all the required conditions. Therefore, there exists $t \in [0,\kappa_q-1]$ such that the condition in the theorem holds.

Consider any group min-max rule $\varphi_w$ such that $p_{x+t} \trianglelefteq \beta^{\varphi_w}_{\gamma^t}$ and $\beta^{\varphi_w}_{\gamma^{t+1}}\trianglelefteq p_{x+t}$. It is enough to prove that $\varphi_w$ selects $p_{x+t}$.

For any $\gamma$ such that $\gamma_h > \gamma^{t}_h$ for some $h \in G$, $p_{x+t} \trianglelefteq \tau_{\gamma_h}(\rprof_{N_h})$ by the construction of $\gamma^{t}$. Hence for such a $\gamma$, $p_{x+t} \trianglelefteq \max\{\tau_{\gamma_1}(\rprof_{N_1}),\ldots,\tau_{\gamma_g}(\rprof_{N_g}), \beta^{\varphi_w}_\gamma\}$. Now consider any $\gamma$ such that $\gamma^{t} \gg \gamma$. By definition of group min-max rule, $\beta^{\varphi_w}_{\gamma^{t}} \trianglelefteq \beta^{\varphi_w}_{\gamma}$. Since $p_{x+t} \trianglelefteq \beta^{\varphi_w}_{\gamma^t}$, $p_{x+t} \trianglelefteq \beta^{\varphi_w}_{\gamma}$. Therefore, for any $\gamma \in \Gamma$, $p_{x+t} \trianglelefteq \max\{\tau_{\gamma_1}(\rprof_{N_1}),\ldots,\tau_{\gamma_g}(\rprof_{N_g}), \beta^{\varphi_w}_{\gamma}\}$. To prove that $\varphi_w$ selects $p_{x+t}$, it is now enough to prove that there exists a $\gamma \in \Gamma$ such that $\max\{\tau_{\gamma_1}(\rprof_{N_1}),\ldots,\tau_{\gamma_g}(\rprof_{N_g}), \beta^{\varphi_w}_\gamma\} = p_{x+t}$.

\noindent{\textbf{Case 1:} $p_{x+t}$ is the top-ranked project of some voter at $\rprof_N$.}

Therefore, there exists a group $N_d$ such that $\tau_{\gamma^{t+1}_d}(\rprof_{N_d}) = p_{x+t}$. By the construction of $\gamma^{t+1}$, for any $h \in G$, $\tau_{\gamma^{t+1}_h}(\rprof_{N_h}) \trianglelefteq p_{x+t}$. We know that $\beta^{\varphi_w}_{\gamma^{t+1}}\trianglelefteq p_{x+t}$. Combining all these observations, we have $\max\{\tau_{\gamma^{t+1}_1}(\rprof_{N_1}),\ldots,\tau_{\gamma^{t+1}_g}(\rprof_{N_g}), \beta^{\varphi_w}_{\gamma^{t+1}}\} = p_{x+t}$.

\noindent{\textbf{Case 2:} $p_{x+t}$ is not a top-ranked project for any voter.}

By construction of $\rprof_N$, exactly $\gamma^{t+1}_h - \gamma^{t}_h$ voters of every group $N_h$ have their top-ranked project at $p_{x+t}$. Therefore, $\gamma^{t+1} = \gamma^{t}$. Since $p_{x+t} \trianglelefteq \beta^{\varphi_w}_{\gamma^t}$ and $\beta^{\varphi_w}_{\gamma^{t+1}}\trianglelefteq p_{x+t}$, $\beta^{\varphi_w}_{\gamma^t} = \beta^{\varphi_w}_{\gamma^{t+1}} = p_{x+t}$.
Thus, either way, $\varphi_w(\rprof_N) = p_{x+t}$.
\end{proof}
\section{Individual-Fairness Notions}\label{sec: o-fl-indnotions}
When each group has exactly one voter (i.e., $G = N$), clearly, the concept of \community-wise anonymity is not relevant as any RSCR satisfies it trivially. In this case, the most natural and reasonable \compliant \repscene is that in which every representative function $\psi_i$ selects top $\kappa_i$ projects of voter $i$. We now define two fairness notions that are special cases of our \weakfness and \strofness notions where $G = N$ and the \repscene is as described above. $\boldsymbol{(\kappa_N,\eta_N)}$-\textbf{weak-IEG} (weak individual entitlement guarantee) ensures that the top $\kappa_i$ projects of every voter $i$ together receive a probability of at least $\eta_i$ while $\boldsymbol{(\kappa_N,\eta_N)}$-\textbf{strong-IEG} ensures that at least one project in the top $\kappa_i$ projects of each voter $i$ receives a probability of $\eta_i$. 
\begin{definition}\label{def: sp_weak}
	A RSCR $\varphi$ satisfies $\boldsymbol{(\kappa_N,\eta_N)}$-\textbf{weak-IEG} if for any voter $i \in N$ and any $\rprof_N \in \mathcal{D}^n$,
	$$\varphi_{U(\opat{i}{\kappa_i},\succ_i)}(\rprof_N)\geq \eta_i.$$
\end{definition}

\begin{definition}\label{def: sp_strong}
	A RSCR $\varphi$ satisfies $\boldsymbol{(\kappa_N,\eta_N)}$-\textbf{strong-IEG} if for any voter $i \in N$ and any $\rprof_N \in \mathcal{D}^n$, there exists $p \in U(\opat{i}{\kappa_i},\succ_i)$ such that
	$$\varphi_{p}(\rprof_N)\geq \eta_i.$$
\end{definition}
\section{Direct Characterizations (DC) of Individual-Fair Rules}\label{sec: o-fl-inddirect}
We may recall from \Cref{sec: un_sp_dc} that all the unanimous and strategy-proof RSCRs are characterized to be probabilistic fixed ballot rules whose definition is presented below. Let $S(t; \rprof_N)$ denote $\{i \in N: \opat{i}{1} \trianglelefteq p_t\}$.
\begin{definition}\label{def: pfbr-rep}
    A RSCR $\varphi$ on $\mathcal{D}^n$ is said to be a \textbf{probabilistic fixed ballot rule (PFBR)} if there is a collection $\{\delta_{S}\}_{S \subseteq N}$ of probability distributions satisfying the following two properties:
	\begin{enumerate}[(i)]
		\item  \textbf{Ballot Unanimity:} $\delta_{\emptyset}(p_m)=1$ and $\delta_{N}(p_1)=1$, and
		\item \textbf{Monotonicity}: for all $p_t \in \proj$, $S \subset T \subseteq N \implies \delta_S([p_1,p_t]) \leq \delta_T([p_1,p_t])$
	\end{enumerate}
	such that for all $\rprof_N\in \mathcal{D}^n$ and $p_t\in \proj$, we have
    $$\varphi_{p_t}(\rprof_N)= \delta_{S(t; \rprof_N)}([p_1, p_t])-\delta_{S(t-1; \rprof_N)}([p_1, p_{t-1}]);$$
    where $\delta_{S(0; \rprof_N)}([p_1, p_{0}])= 0$.
\end{definition}
As interpreted in \Cref{obs: pfgbr}, when the groups are singleton, PFBRs and PFGBRs become equivalent since each element in $\Gamma$ becomes a $n$-dimensional binary vector that can be directly interpreted as a subset of voters. In the subsequent sections, we present direct characterizations of the probabilistic fixed ballot rules that satisfy $\boldsymbol{(\kappa_N,\eta_N)}$-\textbf{weak-IEG} and $\boldsymbol{(\kappa_N,\eta_N)}$-\textbf{strong-IEG}. For this, we crucially use the characterization results for PFGBRs that satisfy \weakf (\Cref{the: dc_weak}) and \strof (\Cref{the: dc_strong}).

\subsection{DC: Weak-IEG}\label{sec: dc_wieg}
We characterize the probabilistic fixed ballot rules that satisfy our weak-IEG and strong-IEG notions. The notion of \spweakfness ensures that the top $\kappa_i$ projects of a voter $i$ collectively receive a probability of at least $\eta_i$. The characterization of rules satisfying this notion follows as a corollary from \Cref{the: dc_weak}, where every voter is considered to be a separate group.
\begin{corollary}\label{the: special_dc_weak}
    A RSCR \rscr is unanimous, strategy-proof, and satisfies \spweakf if and only if it is a probabilistic fixed ballot rule such that for all $S_2 \subseteq S_1 \subseteq N$ with $|S_1| \geq 1$, for all $i \in S_1 \setminus S_2$, and for all $x \in [1,{m-\kappa_i+1}]$, we have
    $$\delta_{S_1}([p_1, p_{x+\kappa_i-1}])-\delta_{S_2}([p_1, p_{x-1}]) \geq \eta_i.$$
\end{corollary}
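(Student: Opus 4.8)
The statement I must establish is \Cref{the: special_dc_weak}, which specializes the group-fairness characterization \Cref{the: dc_weak} to the case where every group is a singleton, i.e., $G = N$ and each $N_q = \{i\}$. The plan is to show that under this specialization, all the objects in \Cref{the: dc_weak} degenerate into their individual-level counterparts, so that the characterization of \weakf PFGBRs transforms verbatim into the characterization of \spweakf PFBRs. First I would invoke \Cref{obs: pfgbr}, which already records that when each group has a single voter, a probabilistic fixed group ballot rule coincides with a probabilistic fixed ballot rule; this takes care of reconciling \Cref{def: pfgbr} with \Cref{def: pfbr-rep} and identifying the index set $\Gamma$ with the power set $2^N$. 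Concretely, each $\gamma \in \Gamma$ is an $n$-dimensional vector with $\gamma_i \in \{0,1\}$, which I read as the subset $S = \{i : \gamma_i = 1\} \subseteq N$, and the domination relation $\gamma \gg \gamma'$ becomes set inclusion $S' \subseteq S$.

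The second step is to translate each hypothesis of \Cref{the: dc_weak} into the language of \Cref{the: special_dc_weak}. Under $G = N$, the \repscene must select, for the singleton group $\{i\}$, the top $\kappa_i$ projects of voter $i$; since the domain is single-peaked and the upper contour set $U(\opat{i}{\kappa_i},\succ_i)$ is an interval with respect to $\triangleleft$, the chosen representatives form exactly a $\kappa_i$-sized interval, confirming \compliance and \topiness. I would then check the feasibility condition: for a singleton group $\{i\}$, a project $p_x$ is feasible at $(\gamma'_i, \gamma_i; \psi_i)$ precisely when $\gamma'_i = 0$ and $\gamma_i = 1$ (the voter's peak lies inside the interval $[p_x, p_{x+\kappa_i-1}]$ starting at $p_x$), which is the same as requiring $i \in S_1 \setminus S_2$ with the interval starting at $p_x$. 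This is the crux of the translation and I expect it to be the step demanding the most care: I must verify that the $\topi$ feasibility constraint, when restricted to one voter, is neither vacuous nor over-restrictive, and that it matches the simple index range $x \in [1, m-\kappa_i+1]$ (the range ensuring $p_{x+\kappa_i-1}$ exists) together with the membership condition $i \in S_1 \setminus S_2$.

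With these translations in hand, the inequality $\delta_{\gamma}([p_1, p_{x+\kappa_q-1}]) - \delta_{\gamma'}([p_1, p_{x-1}]) \geq \eta_q$ of \Cref{the: dc_weak} becomes $\delta_{S_1}([p_1, p_{x+\kappa_i-1}]) - \delta_{S_2}([p_1, p_{x-1}]) \geq \eta_i$, which is exactly the asserted condition. I would then note that the quantifier ``for all $q \in G$, for all $\gamma \gg \gamma'$'' becomes ``for all $i \in N$, for all $S_2 \subseteq S_1$,'' and that the constraint $|S_1| \geq 1$ simply records that $S_1$ must contain the voter $i$ whose peak falls in the interval (so that $S_1 \setminus S_2$ is nonempty). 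The main obstacle, as flagged, is the precise bookkeeping of the feasibility notion for a single voter; once that is settled, the corollary follows directly from \Cref{the: dc_weak} by substitution, with no new combinatorial content, so I would keep the remaining argument brief and present it as an instantiation rather than a fresh proof.
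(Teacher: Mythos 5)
Your proposal is correct and follows essentially the same route as the paper: the paper's own proof is precisely the instantiation of \Cref{the: dc_weak} obtained by reading each $\gamma \in \Gamma$ as the binary indicator vector of a subset ($S_1$ for $\gamma$, $S_2$ for $\gamma'$) and fixing $\psi_N$ to the \repscene selecting the top $\kappa_i$ projects of each voter. Your additional bookkeeping — checking that feasibility at $(\gamma'_i,\gamma_i;\psi_i)$ for a singleton group reduces to $i \in S_1 \setminus S_2$ together with $x \in [1,m-\kappa_i+1]$ — is exactly the detail the paper leaves implicit, and it is carried out correctly.
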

The proof follows from \Cref{the: dc_weak} where $\gamma$ is a binary vector corresponding to the subset $S_1$, $\gamma'$ is a binary vector corresponding to the subset $S_2$, and $\psi_N$ is fixed to be the \repscene that chooses top $\kappa_i$ projects for every voter $i$.

\subsection{DC: Strong-IEG}\label{sec: dc_sieg}
The notion of \spstrofness ensures that some project in the top $\kappa_i$ projects receives a probability of at least $\eta_i$. Therefore, we ensure that there cannot be $\kappa_i$ consecutive projects which are allotted a probability less than $\eta_i$ each.
The characterization of rules satisfying this notion follows as a corollary from \Cref{the: dc_strong}, where every voter is considered to be a separate group.
\begin{corollary}\label{the: special_dc_strong}
	A RSCR \rscr is unanimous, strategy-proof, and satisfies \spstrof if and only if it is a probabilistic fixed ballot rule such that for all $S_2 \subseteq S_1 \subseteq N$ with $|S_1| \geq 1$, for all $i \in S_1 \setminus S_2$, and for all $p_t,p_u$ such that $p_t \triangleleft p_u$, it holds that
	$$\delta_{S_1}([p_1, p_j])-\delta_{S_2}([p_1, p_{j-1}])\! < \eta_i \; \forall p_j \in [p_t,p_u] \;\implies\; u-t \leq \kappa_i.$$
\end{corollary}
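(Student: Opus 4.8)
The plan is to derive Corollary~\ref{the: special_dc_strong} as an immediate specialization of the group-fairness characterization in Theorem~\ref{the: dc_strong}, by treating every voter as a singleton group. First I would invoke Observation~\ref{obs: pfgbr}: when each group contains exactly one voter, the index set $\Gamma$ collapses to $\{0,1\}^n$, every $\gamma \in \Gamma$ is a binary vector that we identify with the subset $S = \{i : \gamma_i = 1\} \subseteq N$, and the domination relation $\gamma \gg \gamma'$ becomes ordinary set inclusion $S' \subseteq S$. Under this identification a PFGBR is exactly a PFBR (Definition~\ref{def: pfbr-rep}), so the family of rules to characterize is unchanged. I would also fix the \repscene $\psi_N$ to be the canonical \compliant scene in which $\psi_i$ selects the top $\kappa_i$ projects of voter $i$; since the single voter's peak always lies in its own top-$\kappa_i$ interval, each $\psi_i$ is automatically \topi and outputs a $\kappa_i$-sized interval of $\triangleleft$, hence \compliant.

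Next I would translate the feasibility hypothesis and the quantifier over $\gamma^0, \ldots, \gamma^{\kappa_i}$ from Theorem~\ref{the: dc_strong} into the two-subset, interval form of the corollary. For a singleton group $\{i\}$, a chain $\gamma^{\kappa_i} \gg \cdots \gg \gamma^1 \gg \gamma^0$ in $\{0,1\}$-coordinate $i$ can only increase by one somewhere, so the whole chain is pinned down by a pair $S_2 \subseteq S_1$ with $i \in S_1 \setminus S_2$: the coordinate flips from $0$ to $1$ at the single step where $p_{x+t}$ becomes voter $i$'s peak. The statement ``$\{p_x,\ldots,p_{x+\kappa_i-1}\}$ is feasible at $(\gamma^0,\ldots,\gamma^{\kappa_i};\psi_i)$'' then reduces to the mere existence of a single-voter single-peaked profile realizing those peaks with the stated interval as $\psi_i$'s output, which for the top-$\kappa_i$ scheme holds for every window $[p_x, p_{x+\kappa_i-1}]$ of length $\kappa_i$ inside $\proj$. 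The conclusion ``there exists $t \in [0,\kappa_i-1]$ with $\delta_{\gamma^{t+1}}([p_1,p_{x+t}]) - \delta_{\gamma^t}([p_1,p_{x+t-1}]) \geq \eta_i$'' then says: along any window of $\kappa_i$ consecutive projects, at least one project receives probability $\geq \eta_i$ (with the appropriate $S_1, S_2$ ballots). Contrapositively, this is precisely the stated implication: if every project $p_j \in [p_t, p_u]$ has $\delta_{S_1}([p_1,p_j]) - \delta_{S_2}([p_1,p_{j-1}]) < \eta_i$, then the run $[p_t,p_u]$ cannot contain a full window of length $\kappa_i$, i.e.\ $u - t \leq \kappa_i$.

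I would carry this out by proving both directions simultaneously through the contrapositive bookkeeping. For the ``only if'' direction I assume \spstrof, apply Theorem~\ref{the: dc_strong} with the singleton-group interpretation, and rewrite the ``$\exists t$'' guarantee as: no block of $\kappa_i$ consecutive projects is entirely below threshold $\eta_i$ under the ballot pair $(S_1,S_2)$; this is exactly the contrapositive stated in the corollary. For the ``if'' direction I assume the PFBR satisfies the interval condition and show that any window of $\kappa_i$ consecutive projects, realized as the output of $\psi_i$ on some profile, must contain at least one above-threshold project, which is \spstrofness by Definition~\ref{def: sp_strong}. The main obstacle, and the only step requiring genuine care, is verifying that the feasibility predicate of Theorem~\ref{the: dc_strong} really does range over exactly the windows $[p_x, p_{x+\kappa_i-1}]$ needed, and that the single coordinate-flip structure of a singleton chain correctly maps $(\gamma^t, \gamma^{t+1})$ onto the fixed pair $(S_2, S_1)$ for the relevant index $t$ while collapsing to $(S_1,S_1)$ or $(S_2,S_2)$ elsewhere; getting this index-matching exactly right (so that the displayed difference in the corollary coincides term-by-term with the one in Theorem~\ref{the: dc_strong}) is where the bookkeeping must be done attentively, but it is routine once the singleton identification is fixed.
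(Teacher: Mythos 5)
Your overall route coincides with the paper's own proof: the paper derives this corollary by exactly the one-line specialization you describe (singleton groups, $\Gamma\cong 2^N$, and the representation scenario in which $\psi_i$ returns voter $i$'s top $\kappa_i$ projects). However, the step you defer as ``routine'' index-matching is precisely where both your argument and the paper's sketch have a genuine gap, and the gap is fatal for the ``if'' direction. The chain $\gamma^{\kappa_i}\gg\cdots\gg\gamma^0$ in \Cref{the: dc_strong} is \emph{not} pinned down by a pair: feasibility under the top-$\kappa_i$ scheme constrains only coordinate $i$ of the chain (a single $0\to 1$ flip), while the remaining entries range over arbitrary nested subsets of $N$, so quantifying over all chains is strictly stronger than quantifying over one fixed pair $(S_2,S_1)$. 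The ``only if'' direction does survive the collapse: instantiate the chain as $T_0=S_2$ and $T_1=\cdots=T_{\kappa_i}=S_1$ (peak at $p_x$), and use ballot monotonicity to replace $\delta_{S_1}([p_1,p_{x+t-1}])$ by $\delta_{S_2}([p_1,p_{x+t-1}])$ at the steps where both chain entries equal $S_1$; this yields the displayed condition.

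The ``if'' direction cannot be recovered from the fixed-pair condition. \spstrofness concerns the probabilities $\varphi_{p_j}(\rprof_N)=\delta_{S(j;\rprof_N)}([p_1,p_j])-\delta_{S(j-1;\rprof_N)}([p_1,p_{j-1}])$ computed along the \emph{profile-induced} chain; for $S_2\subseteq S(j-1;\rprof_N)\subseteq S(j;\rprof_N)\subseteq S_1$ the quantity $\delta_{S_1}([p_1,p_j])-\delta_{S_2}([p_1,p_{j-1}])$ only \emph{upper}-bounds this probability, and moreover the corollary's hypothesis is vacuous on the window $U(\opat{i}{\kappa_i},\succ_i)$ itself, which has exactly $\kappa_i$ projects while the condition only bites when $u-t>\kappa_i$. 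Concretely, take $n=2$ singleton groups, $m=4$, $\kappa_1=\kappa_2=2$, $\eta_1=\eta_2=\tfrac{1}{2}$, and the PFBR whose ballots (probabilities on $(p_1,p_2,p_3,p_4)$) are $\delta_{\emptyset}=(0,0,0,1)$, $\delta_{\{1\}}=(0,0,1,0)$, $\delta_{\{2\}}=(0,0,0,1)$, $\delta_{N}=(1,0,0,0)$. Ballot unanimity and monotonicity hold, and one checks that for every admissible $(S_1,S_2,i)$ each run of consecutive projects on which $\delta_{S_1}([p_1,p_j])-\delta_{S_2}([p_1,p_{j-1}])<\tfrac{1}{2}$ has $u-t\leq 2$, so the corollary's condition is satisfied; yet at the single-peaked profile $p_2\succ_1 p_1\succ_1 p_3\succ_1 p_4$ and $p_3\succ_2 p_4\succ_2 p_2\succ_2 p_1$ the rule outputs $(0,0,1,0)$, giving both of voter $1$'s top-two projects probability $0<\tfrac{1}{2}$. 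So the fixed-pair condition is necessary but not sufficient, and the bookkeeping you call routine cannot be completed as planned: a sound statement must retain the chain quantification of \Cref{the: dc_strong} (over all nested $T_0\subseteq\cdots\subseteq T_{\kappa_i}$ whose $i$-th coordinate flips exactly once), not a single pair. The paper's own one-line proof, which likewise compresses the chain to ``$\gamma$ and $\gamma'$,'' shares this defect.
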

The proof follows from \Cref{the: dc_strong} where $\gamma$ is a binary vector corresponding to the subset $S_1$, $\gamma'$ is a binary vector corresponding to the subset $S_2$, and $\psi_N$ is fixed to be the \repscene that chooses top $\kappa_i$ projects for every voter $i$.

\section{Extreme Point Characterizations (EPC) of Individual-Fair Rules}\label{sec: o-fl-indextreme}
We may recall from \Cref{sec: un_sp_ep} that all the unanimous and strategy-proof RSCRs are characterized to be random min-max rules, which are the convex combinations of min-max rules defined below.
\begin{definition}\label{def: mmrr}
	A DSCR $f$ on $\mathcal{D}^n$ is a \textbf{min-max} rule if  for all $S \subseteq N$, there exists $\beta_S \in \proj$ satisfying $$\beta_{\emptyset}= p_m, \beta_N=p_1,  \mbox{ and  } \beta_T  \trianglelefteq \beta_{S} \mbox{ for all }S \subseteq T$$ such that $$f(\rprof_N)=\min_{S \subseteq N}\left [\max_{i \in S}\{\opat{i}{1}, \beta_S\}\right].$$ 	
\end{definition}
In the subsequent sections, we first present extreme point characterizations of the random min-max rules that satisfy $\boldsymbol{(\kappa_N,\eta_N)}$-\textbf{weak-IEG} and $\boldsymbol{(\kappa_N,\eta_N)}$-\textbf{strong-IEG}. Followed by it, we also present characterizations of the rules that additionally satisfy another property called \emph{anonymity}.

\subsection{EPC: Weak-IEG}\label{sec: epc_wieg}
As we saw in \Cref{sec: o-fl-indnotions}, the notion of \spweakfness ensures that the top $\kappa_i$ projects of a voter $i$ collectively receive a probability of at least $\eta_i$. Below, we characterize the random min-max rules that satisfy \spweakfness.
\begin{theorem}\label{the: special_ep_weak}
    A RSCR \rscr is unanimous, strategy-proof, and satisfies \spweakf if and only if it is a random min-max rule $\varphi=\sum_{w\in W} \lambda_w \varphi_w$ such that for all $S_1,S_2 \subseteq N$ with $S_1 \cap S_2 = \emptyset$ and $|S_2| \geq 1$, for all $i \in S_2$, and for all $x \in [1,{m-\kappa_i+1}]$,
    $$\sum_{\{w\;\mid \;p_x \trianglelefteq \beta^{\varphi_w}_{S_1},\; \beta^{\varphi_w}_{S_1 \cup S_2} \trianglelefteq p_{x+\kappa_i-1}\}}{\lambda_w} \geq \eta_i$$
\end{theorem}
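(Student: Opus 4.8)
The plan is to derive this as a direct analogue of \Cref{the: ep_weak}, specializing the group setting to singleton groups. The key observation, already flagged in \Cref{obs: pfgbr} and the discussion preceding \Cref{sec: dc_wieg}, is that when every group is a singleton voter, each $\gamma \in \Gamma$ is an $n$-dimensional binary vector, which we identify with a subset $S \subseteq N$ (voter $i$ is in $S$ iff $\gamma_i = 1$). Under this identification, the dominance relation $\gamma \gg \gamma'$ becomes set inclusion $S_2 \subseteq S_1$, random group min-max rules become random min-max rules, and $\tau_{\gamma_i}(\rprof_{N_i})$ reduces to either $p_1$ (if $i \notin S$) or $\opat{i}{1}$ (if $i \in S$), so that the group min-max expression collapses to the min-max expression of \Cref{def: mmrr}. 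The representative function $\psi_i$ is fixed to select the top $\kappa_i$ projects of voter $i$, i.e., $\psi_i(\succ_i) = U(\opat{i}{\kappa_i}, \succ_i)$.

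First I would confirm that this fixed $\psi_N$ is \compliant with $\kappa_N$: each $\psi_i$ selects exactly $\kappa_i$ projects, and since $U(\opat{i}{\kappa_i},\succ_i)$ forms an interval of $\triangleleft$ containing $\opat{i}{1}$ (this is precisely the interval property of upper contour sets in the single-peaked domain, noted after \Cref{def: sp}), it is both a $\kappa_i$-sized interval and \topi. Next I would translate the feasibility condition. In the group statement of \Cref{the: ep_weak}, the hypothesis is that $p_x$ is feasible at $(\gamma'_i, \gamma_i; \psi_i)$; with singleton groups $\gamma_i, \gamma'_i \in \{0,1\}$, and the only relevant case is $\gamma'_i = 0$, $\gamma_i = 1$ (the pair with $i \in S_2 = S_1 \setminus S_2$-complement playing the dominating role). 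Because $\psi_i$ picks the top $\kappa_i$ projects, the interval $[p_x, p_{x+\kappa_i-1}]$ arises as $\psi_i(\succ_i)$ for some single-peaked $\succ_i$ whenever $\opat{i}{1} = p_x$, so every $x \in [1, m-\kappa_i+1]$ is feasible and the feasibility quantifier simply becomes the range constraint on $x$. With these substitutions, $\beta^{\varphi_w}_{\gamma'}$ becomes $\beta^{\varphi_w}_{S_1}$, $\beta^{\varphi_w}_{\gamma}$ becomes $\beta^{\varphi_w}_{S_1 \cup S_2}$ (writing the dominating vector as $S_1 \cup S_2$ with $S_1 \cap S_2 = \emptyset$ and the smaller as $S_1$), and the weight-sum condition matches the claimed inequality exactly.

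The main obstacle — and the step I would write most carefully — is verifying that the reindexing of the two nested subsets is consistent between the group statement and the individual statement. In \Cref{the: ep_weak} the pair is $(\gamma', \gamma)$ with $\gamma \gg \gamma'$, whereas \Cref{the: special_ep_weak} is phrased with disjoint $S_1, S_2$ (so $S_1$ corresponds to $\gamma'$ and $S_1 \cup S_2$ corresponds to $\gamma$), and one must check that the voter $i$ for whom fairness is demanded is exactly an element of $S_2 = \gamma \setminus \gamma'$, matching the ``$i \in S_1 \setminus S_2$''-style condition of the corollaries in \Cref{sec: dc_wieg}. Once this bookkeeping is pinned down, I would state the proof as: apply \Cref{the: dc_gwa}, \Cref{the: ep_gwa}, and \Cref{the: ep_weak} with $G = N$ (all groups singleton), observe via \Cref{obs: pfgbr} that the rule families coincide with random min-max rules, substitute the top-$\kappa_i$ \repscene, and simplify the feasibility and parameter conditions as above. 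The bulk is routine specialization; no new combinatorial or network-flow argument is needed beyond what \Cref{the: ep_weak} already supplies.
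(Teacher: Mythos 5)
Your derivation is correct, but it is not the route the paper takes. The paper proves \Cref{the: special_ep_weak} from scratch, re-running the profile-construction argument of the group-level proof in the singleton setting: in the only-if direction it builds a profile with $U(\opat{i}{\kappa_i},\succ_i)=[p_x,p_{x+\kappa_i-1}]$, $\opat{j}{1} \triangleleft p_x$ for $j \in S_1$, $\opat{j}{1} \in [p_x,p_{x+\kappa_i-1}]$ for $j \in S_2$, and $p_{x+\kappa_i-1} \triangleleft \opat{j}{1}$ otherwise, and then argues directly on the min-max parameters; in the if direction it instantiates $S_1,S_2,x$ from an arbitrary profile. You instead obtain the statement as a pure specialization of \Cref{the: ep_weak} with $G=N$, and all the ingredients of that reduction are sound: with singleton groups every group-preserving permutation is the identity, so group-wise anonymity is vacuous; $\Gamma$ becomes $2^N$ with $\gg$ becoming inclusion; $\tau_{\gamma_i}$ collapses to $p_1$ or $\opat{i}{1}$, so group min-max rules coincide with min-max rules (as \Cref{obs: pfgbr} anticipates on the direct side); the top-$\kappa_i$ scenario is \compliant because upper contour sets in the single-peaked domain are intervals containing the peak (the paper records exactly this in the Note closing \Cref{sec: o-fl-compliant}); weak-IEG is by definition weak-GEG for this scenario; and feasibility at $(\gamma'_i,\gamma_i;\psi_i)$ forces $\gamma'_i=0,\gamma_i=1$ (i.e., $i \in S_2$) and holds for precisely the $x \in [1,m-\kappa_i+1]$. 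Notably, your approach is exactly how the paper itself handles the direct characterizations — Corollaries \ref{the: special_dc_weak} and \ref{the: special_dc_strong} are stated as one-line specializations of \Cref{the: dc_weak} and \Cref{the: dc_strong} — so it buys uniformity and avoids duplicating the min-max argument; what the paper's self-contained proof buys in exchange is that the reader need not trace the $\Gamma$-to-subsets and feasibility bookkeeping, which is indeed the delicate part of your write-up (your parenthetical ``$i \in S_2 = S_1 \setminus S_2$-complement'' is garbled, though the later paragraph pins the indexing down correctly, and the DC corollaries' nested-subset phrasing $S_2 \subseteq S_1$, $i \in S_1\setminus S_2$ versus the disjoint phrasing here is, as you say, only a reindexing).
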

\begin{proof}
    \textbf{(Only if:)} We first prove that for a random min-max rule to satisfy \spweakf, the above condition must be satisfied. Consider any $S_1, S_2, i,$ and $x$ as given in the theorem. Construct a profile $\rprof_N$ as follows: (i) $U(\opat{i}{\kappa_i},\succ_i) = [p_x,p_{x+\kappa_i-1}]$, (ii) $\opat{j}{1} \triangleleft p_x$ for all $j \in S_1$, (iii) $\opat{j}{1} \in [p_x,p_{x+\kappa_i-1}]$ for all $j \in S_2$, and (iv) $p_{x+\kappa_i-1} \triangleleft \opat{j}{1}$ for all $j \notin S_1\cup S_2$.

Since the fairness requirement of voter $i$ is met at $\rprof_N$, $\sum_{\{w: \varphi_w(\rprof_N) \in [p_x,p_{x+\kappa_i-1}]\}}{\lambda_w} \geq \eta_i$. Consider any $w$ such that $\varphi_w(\rprof_N) \in [p_x,p_{x+\kappa_i-1}]$. Since $\varphi_w$ is a min-max rule, for any $S \subseteq N$, $\max_{j \in S}\{\opat{j}{1}, \beta^{\varphi_w}_S\} \succeq \varphi_w(\rprof_N) \succeq p_x$. This implies $\beta^{\varphi_w}_{S_1} \succeq p_x$ since $\opat{j}{1} \triangleleft p_x$ for any $j \in S_1$. Similarly, since $\varphi_w$ is a min-max rule, there exists $S \subseteq N$ such that $\max_{j \in S}\{\opat{j}{1}, \beta^{\varphi_w}_S\} = \varphi_w(\rprof_N)$. For any $S \nsubseteq (S_1 \cup S_2)$, this is not possible since $\opat{j}{1} \succ p_{x+\kappa_i-1}$ for any $j \notin  S_1 \cup S_2$ and $\varphi_w(\rprof_N) \trianglelefteq p_{x+\kappa_i-1}$. Therefore, there exists $S \subseteq (S_1 \cup S_2)$ such that $\max_{j \in S}\{\opat{j}{1}, \beta^{\varphi_w}_S\} = \varphi_w(\rprof_N)$. By the definition of min-max rules, $\beta^{\varphi_w}_{S_1 \cup S_2} \trianglelefteq \beta^{\varphi_w}_S$. This implies, $\beta^{\varphi_w}_{S_1 \cup S_2} \trianglelefteq \beta^{\varphi_w}_S \trianglelefteq \varphi_w(\rprof_N) \trianglelefteq p_{x+\kappa_i-1}$. Hence, we have $p_x \trianglelefteq \beta^{\varphi_w}_{S_1}$ and $\beta^{\varphi_w}_{S_1 \cup S_2} \trianglelefteq p_{x+\kappa_i-1}$.

\textbf{(If:)}
Consider any arbitrary voter $i$. Set $S_1 = \{j \in N: \opat{j}{1} \triangleleft \min U(\opat{i}{\kappa_i},\succ_i)\}$ and $S_2 = \{j \in N: \opat{j}{1} \in U(\opat{i}{\kappa_i},\succ_i)\}$. Observe that $S_1 \cap S_2 = \emptyset$ by construction, and $|S_2| \geq 1$ since $i \in S_2$. Set $x$ such that $p_x = \min U(\opat{i}{\kappa_i},\succ_i)$. Since $|U(\opat{i}{\kappa_i},\succ_i)| = \kappa_i$, $x \in [1,m-\kappa_i+1]$. Hence, $S_1,S_2, i,$ and $x$ satisfy all the required conditions.

Consider any min-max rule $\varphi_w$ such that $p_x \trianglelefteq \beta^{\varphi_w}_{S_1}$ and $\beta^{\varphi_w}_{S_1 \cup S_2} \trianglelefteq p_{x+\kappa_i-1}$. Since $\beta^{\varphi_w}_{S_1 \cup S_2} \trianglelefteq p_{x+\kappa_i-1}$ and also $\opat{j}{1} \trianglelefteq p_{x+\kappa_i-1}$ for any $j \in (S_1 \cup S_2)$ by construction, $\max\limits_{j \in S_1 \cup S_2}\{\opat{j}{1}, \beta^{\varphi_w}_{S_1 \cup S_2}\} \trianglelefteq p_{x+\kappa_i-1}$. This implies, $\varphi_w(\rprof_N) \trianglelefteq p_{x+\kappa_i-1}$. By construction, $p_x \triangleleft \opat{j}{1}$ for any $j \notin S_1$. This implies, for any $S \nsubseteq S_1$, $p_x \triangleleft \max_{j \in S}\{\opat{j}{1}, \beta^{\varphi_w}_{S}\}$. Since $p_x \trianglelefteq \beta^{\varphi_w}_{S_1}$, by definition of min-max rule, $p_x \trianglelefteq \beta^{\varphi_w}_S$ for any $S \subseteq S_1$. This implies, $p_x \trianglelefteq \max_{j \in S}\{\opat{j}{1}, \beta^{\varphi_w}_{S}\}$ for any $S \subseteq S_1$. Therefore, for any $S \subseteq N$, $p_x \trianglelefteq \max_{j \in S}\{\opat{j}{1}, \beta^{\varphi_w}_{S}\}$. This implies, $p_x \trianglelefteq \varphi_w(\rprof_N)$. Combining this with $\varphi_w(\rprof_N) \trianglelefteq p_{x+\kappa_i-1}$ gives $\varphi_w(\rprof_N) \in U(\opat{i}{\kappa_i},\succ_i)$. Fairness requirement of $i$ is met and this completes the proof.
\end{proof}

\subsection{EPC: Strong-IEG}\label{sec: epc_sieg}
As we saw in \Cref{sec: o-fl-indnotions}, the notion of \spstrofness ensures that some project in top $\kappa_i$ projects of a voter $i$ receives a probability of at least $\eta_i$. Below, we characterize the random min-max rules that satisfy \spstrofness.
\begin{theorem}\label{the: special_ep_strong}
    A RSCR \rscr is unanimous, strategy-proof, and satisfies \spstrof if and only if it is a random min-max rule $\varphi=\sum_{w\in W} \lambda_w \varphi_w$ such that for all $S_1,S_2 \subseteq N$ with $S_1 \cap S_2 = \emptyset$ and $|S_2| \geq 1$, for all $i \in S_2$, for all $x \in [1,{m-\kappa_i+1}]$, and for all functions $f: S_2 \to [p_x,p_{x+\kappa_i-1}]$, at least one of the following conditions holds:
    \begin{enumerate}[label=(C\arabic*)]
		\item there exists $b_t \in range(f)$ such that $$\sum_{\left\{w\;\mid \; b_{t} \trianglelefteq \beta^{\varphi_w}_{S_1 \cup \{j \in S_2: f(j) \triangleleft b_t\}}\;,\; \beta^{\varphi_w}_{S_1 \cup \{j \in S_2: f(j) \trianglelefteq b_t\}}\trianglelefteq b_{t}\right\}}\lambda_w\geq \eta_i$$
		\item there exists $c \in [p_x,p_{x+\kappa_i-1}] \setminus range(f)$ such that $$\sum_{\left\{w \;\mid \; \beta^{\varphi_w}_{S_1 \cup \{j \in S_2: f(j) \trianglelefteq u\}}=c\right\}}\lambda_w\geq \eta_i,$$
		where $u=p_{x-1}$ if $c \triangleleft \min(range(f))$, $u = p_{x+\kappa_i-1}$ if $c \succ \max(range(f))$, and else $u = p_l$ such that $p_l \triangleleft c \triangleleft p_{l+1}$ and $[p_l,p_{l+1}] \cap range(f) = \{p_l,p_{l+1}\}$.
	\end{enumerate}
\end{theorem}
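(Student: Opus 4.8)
\textbf{Proof proposal for Theorem \ref{the: special_ep_strong}.}
The plan is to derive this as a specialization of the group-fairness strong-GEG extreme point characterization (\Cref{the: ep_strong}), exactly as the weak case (\Cref{the: special_ep_weak}) specializes \Cref{the: ep_weak}. I would set $G = N$, so each group is a singleton $N_i = \{i\}$, and fix the \repscene $\psi_N$ so that each $\psi_i$ selects the top $\kappa_i$ projects of voter $i$; this $\psi_i$ is \topi (its minimum element lies at or before $\opat{i}{1}$) and returns a $\kappa_i$-sized interval, hence \compliant. Under this identification, each vector $\gamma \in \Gamma$ is a binary vector, i.e. a subset $S \subseteq N$, with $\gamma \gg \gamma'$ corresponding to $S' \subseteq S$, and the parameters $\beta^{\varphi_w}_\gamma$ become the min-max parameters $\beta^{\varphi_w}_S$ of \Cref{def: mmrr}. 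So the template I would follow is: state that by \Cref{the: un_sp_ep} the rule is a random min-max rule, then translate the strong-GEG condition of \Cref{the: ep_strong} into the singleton language.

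The key translation step concerns the chain $\gamma^{\kappa_i} \gg \cdots \gg \gamma^1 \gg \gamma^0$ together with feasibility of $\{p_x,\ldots,p_{x+\kappa_i-1}\}$ at $(\gamma^0,\ldots,\gamma^{\kappa_i};\psi_i)$. For a singleton group, each $\gamma^t$ becomes a subset $S_1 \cup \{j \in S_2 : f(j) \trianglelefteq p_{x+t-1}\}$, where $S_1$ is the set of voters whose peaks lie strictly before $p_x$, $S_2$ is the set of voters whose peaks lie in the interval $[p_x,p_{x+\kappa_i-1}]$, and $f: S_2 \to [p_x,p_{x+\kappa_i-1}]$ records the peak location of each voter in $S_2$. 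The function $f$ is precisely the bookkeeping device that encodes an arbitrary admissible placement of peaks inside the target interval, which is what the chain of $\gamma$'s ranges over; enumerating all such $f$ reproduces enumerating all feasible chains. I would verify that the monotone increments $\gamma^{t+1} \gg \gamma^t$ correspond exactly to incrementing the threshold $p_{x+t-1} \mapsto p_{x+t}$, and that the feasibility constraint (each $p_{x+j}$ attainable as the relevant $\tau$) is automatically met because $\psi_i$ picks the voter's own top projects.

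The genuinely new feature, and the main obstacle, is the case split into (C1) and (C2). In \Cref{the: ep_strong} the existential ``there exists $t \in [0,\kappa_q-1]$'' picks a project $p_{x+t}$ on which the $\eta_q$ mass concentrates; for singletons this project may or may not itself be some voter's peak, and that dichotomy is what (C1) versus (C2) captures. In (C1) the concentrating project $b_t$ is in $\mathrm{range}(f)$, i.e. it coincides with the peak of some voter in $S_2$, so the two bracketing subsets $S_1 \cup \{j : f(j) \triangleleft b_t\}$ and $S_1 \cup \{j : f(j) \trianglelefteq b_t\}$ differ and mirror the $\gamma^t, \gamma^{t+1}$ pair directly. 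In (C2) the concentrating project $c$ lies strictly between consecutive peaks (or outside $\mathrm{range}(f)$ on either end), so $\gamma^{t} = \gamma^{t+1}$ collapse to a single subset and the condition $\beta^{\varphi_w}_{\gamma^t} = \beta^{\varphi_w}_{\gamma^{t+1}} = p_{x+t}$ of the group proof's Case 2 forces $\beta^{\varphi_w}_{S_1 \cup \{j : f(j) \trianglelefteq u\}} = c$. The careful part is pinning down $u$ correctly in each subcase (left of $\min(\mathrm{range}(f))$, right of $\max(\mathrm{range}(f))$, or strictly between two consecutive peaks $p_l \triangleleft c \triangleleft p_{l+1}$), so that the subset indexing $\beta$ matches the $\gamma$ that arises. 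I expect the bulk of the work to be checking that these two cases are jointly exhaustive and each soundly reproduces the single existential clause of \Cref{the: ep_strong}; once that correspondence is established, both directions (only-if and if) follow verbatim from the proof of \Cref{the: ep_strong} with $\tau_{\gamma_h}(\rprof_{N_h})$ replaced by the voter's own peak $\opat{j}{1}$.
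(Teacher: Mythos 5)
Your proposal is correct, but it takes a genuinely different route from the paper. The paper proves \Cref{the: special_ep_strong} from scratch: in the only-if direction it constructs a profile realizing $(S_1,S_2,f)$ (peaks of $S_1$ before $p_x$, peaks of $S_2$ at $f(\cdot)$, the rest after the interval) and then does a case analysis on where the concentrated project $p$ sits relative to $range(f)$ (Case 1: $p \in range(f)$, giving (C1); Cases 2.1--2.3: $p$ left of, right of, or strictly between consecutive elements of $range(f)$, giving (C2) with the three choices of $u$); in the if direction it builds $(S_1,S_2,f)$ from an arbitrary profile's peaks and verifies the outcome lands in $U(\opat{i}{\kappa_i},\succ_i)$. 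You instead derive the theorem as a specialization of \Cref{the: ep_strong}, and your translation is the right one: with singleton groups, a dominating chain $\gamma^{\kappa_i} \gg \cdots \gg \gamma^0$ is exactly a family of nested subsets, each voter's non-decreasing $0$--$1$ column being ``all ones'' ($S_1$), ``all zeros'' (outside $S_1 \cup S_2$), or a jump at position $t_j$ (membership in $S_2$ with $f(j) = p_{x+t_j-1}$); feasibility for group $i$ under the top-$\kappa_i$ \repscene collapses to $i \in S_2$; and the single existential of \Cref{the: ep_strong} splits into (C1) when $p_{x+t} \in range(f)$ (so the two bracketing subsets differ) and (C2) when $p_{x+t} \notin range(f)$ (so $\gamma^t = \gamma^{t+1}$ and the two inequalities force $\beta^{\varphi_w}_{\gamma^t} = c$, with $u$ identifying that collapsed subset). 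What each approach buys: yours reuses the general result, eliminates the duplicated profile constructions, and makes the individual/group relationship explicit --- exactly parallel to how the paper itself obtains the direct characterizations (\Cref{the: special_dc_strong} from \Cref{the: dc_strong}) as corollaries; the paper's direct proof is self-contained and avoids having to verify the somewhat delicate quantifier bijection (the chain-to-$(S_1,S_2,f)$ correspondence, the feasibility translation, and the exhaustiveness of the (C1)/(C2) split), which is where all the real content of your reduction lives and which you would need to write out carefully rather than assert. One small imprecision: once that correspondence is established, the two directions follow from the \emph{statement} of \Cref{the: ep_strong}, not ``verbatim from its proof''; this is harmless and in fact makes your argument cleaner.
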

\begin{proof}
    \textbf{(Only if:)} Consider any $S_1,S_2,i,x,$ and $f$ as given in the theorem. Construct a profile $\rprof_N$ as follows: (i) $U(\opat{i}{\kappa_i},\succ_i) = [p_x,p_{x+\kappa_i-1}]$, (ii) $\opat{j}{1} \triangleleft p_x$ for all $j \in S_1$, (iii) $\opat{j}{1} = f(j)$ for all $j \in S_2$, and (iv) $p_{x+\kappa_i-1} \triangleleft \opat{j}{1}$ for all $j \notin S_1\cup S_2$. Since the fairness requirement of $i$ is met at $\rprof_N$, there exists $p \in [p_x,p_{x+\kappa_i-1}]$ such that $\sum_{\{w: \varphi_w(\rprof_N) = p\}}{\lambda_w} \geq \eta_i$. Consider any $\varphi_w$ such that $\varphi_w(\rprof_N) = p$.

\noindent{\textbf{Case 1:} $p \in range(f)$.}

Let $S = S_1 \cup \{j \in S_2: f(j) \triangleleft p\}$. Since by construction $\opat{j}{1} \triangleleft p_x$ for any $j \in S_1$, $\max_{j \in S}\{\opat{j}{1}\} \triangleleft p$. Since $\varphi_w(\rprof_N) = p$ and $\varphi_w$ is a min-max rule, $p \trianglelefteq \max_{j \in S}\{\opat{j}{1}, \beta^{\varphi_w}_S\}$. This implies $S \trianglelefteq \beta^{\varphi_w}_S$. Since $\varphi_w(\rprof_N) = p$, there exists some $S' \subseteq N$ such that $\max_{j \in S'}\{\opat{j}{1}, \beta^{\varphi_w}_{S'}\} = p$. This is not possible if $S' \nsubseteq S_1 \cup \{j \in S_2: f(j) \trianglelefteq p\}$ since $\opat{j}{1} \succ p$ for any $j \notin (S_1 \cup \{j \in S_2: f(j) \trianglelefteq p\})$. So, there exists $S' \subseteq S_1 \cup \{j \in S_2: f(j) \trianglelefteq p\}$ such that $\max_{j \in S'}\{\opat{j}{1}, \beta^{\varphi_w}_{S'}\} = p$. This implies, $\beta^{\varphi_w}_{S'} \trianglelefteq p$. By the definition of min-max rule, $\beta^{\varphi_w}_{S_1 \cup \{j \in S_2: f(j) \trianglelefteq p\}} \trianglelefteq \beta^{\varphi_w}_{S'} \trianglelefteq p$. Hence, (C1) holds for $b_t = p$.

\noindent{\textbf{Case 2:} $p \notin range(f)$.}

\textit{Case 2.1:} $p \triangleleft \min(range(f))$.

Since $\varphi_w(\rprof_N) = p$ and $\varphi_w$ is a min-max rule, $p \trianglelefteq \max_{j \in S_1}\{\opat{j}{1}, \beta^{\varphi_w}_{S_1}\}$. This implies $p \trianglelefteq \beta^{\varphi_w}_{S_1}$ since $\opat{j}{1} \triangleleft p_x$ for any $j \in S_1$. Since $\varphi_w(\rprof_N) = p$, there exists $S \subseteq N$ such that $\max_{j \in S}\{\opat{j}{1}, \beta^{\varphi_w}_S\} = p$. Since $p \triangleleft \min(range(f))$, $\opat{j}{1} \succ p$ for any $j \notin S_1$. Therefore, $S \subseteq S_1$. Since $\max_{j \in S}\{\opat{j}{1}, \beta^{\varphi_w}_S\} = p$ and $\opat{j}{1} \triangleleft p$ for any $j \in S$, $\beta^{\varphi_w}_S \trianglelefteq p$. By the definition of min-max rule, $\beta^{\varphi_w}_{S_1} \trianglelefteq \beta^{\varphi_w}_S \trianglelefteq p$. Combining this with $p \trianglelefteq \beta^{\varphi_w}_{S_1}$ implies (C2) is satisfied with $c = p$ and $u = p_{x-1}$.

\textit{Case 2.2:} $\max(range(f)) \triangleleft p$.

Since $\varphi_w(\rprof_N) = p$ and $\varphi_w$ is a min-max rule, $p \trianglelefteq \max_{j \in S_1 \cup S_2}\{\opat{j}{1}, \beta^{\varphi_w}_{S_1 \cup S_2}\}$. This implies $p \trianglelefteq \beta^{\varphi_w}_{S_1 \cup S_2}$ since $\opat{j}{1} \triangleleft p_x$ for any $j \in S_1$ and $\opat{j}{1} \triangleleft \max(range(f)) \triangleleft p$ for any $j \in S_2$. Since $\varphi_w$ is a min-max rule, there exists $S \subseteq N$ such that $\max_{j \in S}\{\opat{j}{1}, \beta^{\varphi_w}_S\} = p$. Since $p_{x+\kappa_i-1} \triangleleft \opat{j}{1}$ for any $j \notin (S_1 \cup S_2)$, there exists $S \subseteq (S_1 \cup S_2)$ such that $\max_{j \in S}\{\opat{j}{1}, \beta^{\varphi_w}_S\} = p$. Since $\max(range(f)) \triangleleft p$, $\opat{j}{1} \triangleleft p$ for any $j \in S$. This implies, $\beta^{\varphi_w}_S = p$. So, by the definition of min-max rule, $\beta^{\varphi_w}_{S_1 \cup S_2} \trianglelefteq p$. Combining this with $\beta^{\varphi_w}_{S_1 \cup S_2} \succeq p$ implies (C2) is satisfied with $c = p$ and $u = p_{x+\kappa_i-1}$.

\textit{Case 2.3:} $p$ lies in between two projects in $range(f)$.

Select $b_l,b_{l+1} \in range(f)$ such that $b_l \triangleleft p \triangleleft b_{l+1}$ and $[b_l,b_{l+1}] \cap range(f) = \{b_l,b_{l+1}\}$. Let $S = S_1 \cup \{j \in S_2: f(j) \trianglelefteq b_l\}$. Since $\varphi_w(\rprof_N) = p$ and $\varphi_w$ is a min-max rule, $p \trianglelefteq \max_{j \in S}\{\opat{j}{1}, \beta^{\varphi_w}_{S}\}$. This implies $p \trianglelefteq \beta^{\varphi_w}_{S}$ since $\opat{j}{1} \trianglelefteq b_l \triangleleft p$ for any $j \in S$. Since $\varphi_w(\rprof_N) = p$, there exists $S' \subseteq N$ such that $\max_{j \in S'}\{\opat{j}{1}, \beta^{\varphi_w}_{S'}\} = p$. This is not possible if $S' \nsubseteq S$ since $p \triangleleft b_{l+1} \trianglelefteq \opat{j}{1}$ for all $j \notin S$. This implies, there exists $S' \subseteq S$ such that $\max_{j \in S'}\{\opat{j}{1}, \beta^{\varphi_w}_{S'}\} = p$. Since $\opat{j}{1} \trianglelefteq b_l \triangleleft p$ for any $j \in S$, $\beta^{\varphi_w}_{S'} = p$. By the definition of median rule, $\beta^{\varphi_w}_{S} \trianglelefteq \beta^{\varphi_w}_{S'} = p$. Combining this with $p \trianglelefteq \beta^{\varphi_w}_{S}$ implies (C2) is satisfied with $c = p$ and $u = b_l$.

\textbf{(If:)} Consider any arbitrary voter $i$. Set $S_1 = \{j \in N: \opat{j}{1} \triangleleft \min U(\opat{i}{\kappa_i},\succ_i)\}$ and $S_2 = \{j \in N: \opat{j}{1} \in U(\opat{i}{\kappa_i},\succ_i)\}$. Observe that $S_1 \cap S_2 = \emptyset$ by construction, and $|S_2| \geq 1$ since $i \in S_2$. Set $x$ such that $p_x = \min U(\opat{i}{\kappa_i},\succ_i)$. Since $|U(\opat{i}{\kappa_i},\succ_i)| = \kappa_i$, $x \in [1,m-\kappa_i+1]$. Define a function $f$ such that $f(j) = \opat{j}{1}$ for every $j \in S_2$. Hence, $S_1,S_2, i,x,$ and $f$ satisfy all the required conditions.

suppose (C1) holds for some $b_t$ as given. Take a $\varphi_w$ such that $b_{t} \trianglelefteq \beta^{\varphi_w}_{S_1 \cup \{j \in S_2: f(j) \triangleleft b_t\}}$ and $\beta^{\varphi_w}_{S_1 \cup \{j \in S_2: f(j) \trianglelefteq b_t\}}\trianglelefteq b_{t}$. For any $j \notin (S_1 \cup S_2)$, $b_t \trianglelefteq p_{x+\kappa_i-1} \triangleleft \opat{j}{1}$. For any $j \in S_2$ such that $b_t \trianglelefteq f(j)$, $b_t \trianglelefteq \opat{j}{1}$ by construction. From $b_{t} \trianglelefteq \beta^{\varphi_w}_{S_1 \cup \{j \in S_2: f(j) \triangleleft b_t\}}$, any set $S \subseteq (S_1 \cup \{j \in S_2: f(j) \triangleleft b_t\})$ will have $b_{t} \trianglelefteq \max_{j \in S}\{\opat{j}{1}, \beta^{\varphi_w}_{S}\}$. Combining all the above together, we have, for all $S \subseteq N$, $b_{t} \trianglelefteq \max_{j \in S}\{\opat{j}{1}, \beta^{\varphi_w}_{S}\}$. Since $b_t \in range(f)$, there exists $j \in S_2$ such that $f(j) = b_t$. Since $\beta^{\varphi_w}_{S_1 \cup \{j \in S_2: f(j) \trianglelefteq b_t\}}\trianglelefteq b_{t}$, this implies, $\max_{j \in S}\{\opat{j}{1}, \beta^{\varphi_w}_{S}\} = b_{t}$ for $S = (S_1 \cup \{j \in S_2: f(j) \triangleleft b_t\})$. Hence, $\varphi_w(\rprof_N) = b_t$. Fairness requirement of $i$ is met.

suppose (C2) holds for some $c$ and $u$ as given in the theorem. If $u = p_{x-1}$, $S_1 \cup \{j \in S_2: f(j) \trianglelefteq u\} = S_1$. Consider any $\varphi_w$ such that $\beta^{\varphi_w}_{S_1} = c$. Then, $\max_{j \in S_1}\{\opat{j}{1}, \beta^{\varphi_w}_{S_1}\} = c$ since $\opat{j}{1} \triangleleft p_x$ for all $j \in S_1$ by construction. Since $c \triangleleft \min(range(f))$, $c \triangleleft \opat{j}{1}$ for any $j \notin S_1$. This implies, $\varphi_w(\rprof_N) = c$. Else if $u = p_{x+\kappa_i-1}$, $S_1 \cup \{j \in S_2: f(j) \trianglelefteq u\} = S_1 \cup S_2$. Consider any $\varphi_w$ such that $\beta^{\varphi_w}_{S_1 \cup S_2} = c$. Then, $\max_{j \in  S_1 \cup S_2}\{\opat{j}{1}, \beta^{\varphi_w}_{S_1 \cup S_2}\} = c$. For any $j \notin S_1 \cup S_2$, $c \triangleleft \opat{j}{1}$ by construction. This implies, $\varphi_w(\rprof_N) = c$. Finally, suppose $u = b_l$. Let $S = S_1 \cup \{j \in S_2: f(j) \trianglelefteq u\}$. For any $j \in S$, $\opat{j}{1} \trianglelefteq b_l \triangleleft c$. Since $\beta^{\varphi_w}_{S} = c$, $\max_{j \in S}\{\opat{j}{1}, \beta^{\varphi_w}_{S}\} = c$. For any $j \notin S$, $c \triangleleft b_{l+1} \trianglelefteq \opat{j}{1}$ by construction. This implies, $\varphi_w(\rprof_N) = c$. Hence, $c$ is allocated at least $\eta_i$. Fairness requirement of $i$ is met and this completes the proof.
\end{proof}
\subsection{Total Anonymity of the Voters}\label{sec: total_anonymity}
Up until now, we discussed the rules which are not necessarily anonymous across the groups. We now look at a more restricted special case where, in addition to groups having exactly one voter, anonymity across all the voters is required. This requirement is referred to as \emph{anonymity} of RSCR in the literature (i.e., permutation of the preferences of the voters do not change the outcome). This property is formally defined below. Let $\Sigma$ be the set of all permutations on $N$. For $\sigma \in \Sigma$ and $\rprof_N\in \mathcal{D}^n$, we define $\rprof^{\sigma}_N$ as $(\succ_{\sigma(1)},\ldots,\succ_{\sigma(n)})$. 
\begin{definition} \label{def: special_ta_anonymity}
	A RSCR $\varphi:\mathcal{D}^n \to \Delta \proj$ is said to be \textbf{anonymous} if for all $\rprof_N \in \mathcal{D}^n$ and all $\sigma \in \Sigma$, we have $$\varphi(\rprof_N)=\varphi(\rprof^{\sigma}_N).$$
\end{definition}
\subsubsection{EPC: Total Anonymity}\label{sec: median}
All the unanimous, strategy-proof, and anonymous RSCRs are characterized to be random median rules \cite{pycia2015decomposing}. In fact, as remarked in \Cref{obs: pfgbr}, when a single group has all the voters, our probabilistic fixed group ballot rules become equivalent to random median rules. Random median rules are convex combinations of DSCRs called median rules \cite{moulin1980strategy}. A median rule operates by selecting the median of all top-ranked projects of the voters and additional $n+1$ dummy projects fixed apriori as parameters by the PB organizer.

\begin{definition}\label{def: special_ta_median}
	A DSCR $f$ on $\mathcal{D}^n$ is a \textbf{median} rule if there are $n+1$ (dummy) projects  $p_1=\beta^{f}_0 \!\trianglelefteq\! \beta^{f}_1 \!\trianglelefteq\! \cdots \!\trianglelefteq\! \beta^{f}_{n-1} \!\trianglelefteq\! \beta^{f}_n=p_m$ such that for any profile $\rprof_N$, $f$ selects $$\mathsf{med}\!\left(\!\succ_1(1),\ldots,\succ_n(1),\beta^{f}_0,\beta^{f}_1,\ldots,\beta^{f}_n\!\right).$$
\end{definition}

\begin{example}\label{eg: median}
    Consider the scenario in \Cref{eg: dc_basic}. As there are four voters, a median rule $\varphi$ must have three parameters as $\beta^{\varphi}_{1}=p_1$, $\beta^{\varphi}_{2}=p_2$, and $\beta^{\varphi}_{3}=p_2$ ($\beta^{\varphi}_0$ and $\beta^{\varphi}_4$ are always fixed at $p_1$ and $p_3$ respectively). The median rule $\varphi$ selects the median of the set $\{p_1,p_3,p_2,p_3,p_1,p_1,p_2,p_2,p_3\}$, which is $p_2$.
\end{example}
A \textbf{random median rule} is a convex combination of median rules. Using this, we now present an extreme point characterization of the unanimous, strategy-proof, and anonymous RSCRs that satisfy our \spweakfness and \spweakfness notions.
\subsubsection{EPC: Weak-IEG and Total Anonymity}\label{sec: wiegta}
 The notion of \spweakfness ensures that the top $\kappa_i$ projects of a voter $i$ collectively receive a probability of at least $\eta_i$. Our characterization states that for any $\kappa_i$-sized interval $I$ of $\proj$ and any $r \in [0,n-1]$, the total probability given to the median rules having some element of $I$ in between their $r$-th and $(r+1)$-th parameters must be at least $\eta_i$. Let $\mathcal{I}_\kappa$ denote all the intervals of size $\kappa$. Before we present the characterization, we prove a necessary lemma.
 \begin{lemma}\label{lem: nrnr1}
For any median rule $\varphi$, a single-peaked profile $\rprof_N$, and a voter $i$ such that $|\{j\in N \mid \opat{j}{1}\trianglelefteq \opat{i}{1}\}|=r$,
$$\big[\beta^{\varphi}_{n-r},\beta^{\varphi}_{n-r+1}\big]\cap U(\opat{i}{\kappa_i},\succ_i) \neq \emptyset \implies \varphi(\rprof_N)\in U(\opat{i}{\kappa_i},\succ_i).$$
\end{lemma}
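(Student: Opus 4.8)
The plan is to reduce the claim to an elementary counting argument about the $(2n+1)$-element multiset whose median the rule returns. First I would record that, by single-peakedness, the upper contour set $U(\opat{i}{\kappa_i},\succ_i)$ is an interval of exactly $\kappa_i$ consecutive projects under $\triangleleft$ that contains the peak $\opat{i}{1}$; write it as $[p_a,p_b]$, so that $p_a\trianglelefteq\opat{i}{1}\trianglelefteq p_b$. Since, by \Cref{def: special_ta_median}, $\varphi(\rprof_N)=\mathsf{med}(\opat{1}{1},\dots,\opat{n}{1},\beta^{\varphi}_0,\dots,\beta^{\varphi}_n)$ is the $(n+1)$-th smallest of these $2n+1$ values, the core observation is that $\varphi(\rprof_N)\in[p_a,p_b]$ holds precisely when at least $n+1$ of the values are $\trianglelefteq p_b$ and at least $n+1$ of them are $\trianglerighteq p_a$. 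It therefore suffices to establish both of these counts.

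Next I would extract the two consequences of the hypothesis $[\beta^{\varphi}_{n-r},\beta^{\varphi}_{n-r+1}]\cap U(\opat{i}{\kappa_i},\succ_i)\neq\emptyset$. Fixing a project $p^\ast$ in the intersection, the inequality $\beta^{\varphi}_{n-r}\trianglelefteq p^\ast\trianglelefteq p_b$ gives $\beta^{\varphi}_{n-r}\trianglelefteq p_b$, and since the parameters are $\triangleleft$-sorted this shows that the $n-r+1$ parameters $\beta^{\varphi}_0,\dots,\beta^{\varphi}_{n-r}$ are all $\trianglelefteq p_b$. Symmetrically, $p_a\trianglelefteq p^\ast\trianglelefteq\beta^{\varphi}_{n-r+1}$ gives $\beta^{\varphi}_{n-r+1}\trianglerighteq p_a$, so the $r$ parameters $\beta^{\varphi}_{n-r+1},\dots,\beta^{\varphi}_n$ are all $\trianglerighteq p_a$.

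Then I would count peaks. From the definition of $r$, exactly $r$ peaks satisfy $\opat{j}{1}\trianglelefteq\opat{i}{1}$, and since at least voter $i$ itself achieves $\opat{j}{1}=\opat{i}{1}$, at least $n-r+1$ peaks satisfy $\opat{j}{1}\trianglerighteq\opat{i}{1}$. Because $p_a\trianglelefteq\opat{i}{1}\trianglelefteq p_b$, each of the $r$ peaks with $\opat{j}{1}\trianglelefteq\opat{i}{1}$ is $\trianglelefteq p_b$, and each of the (at least) $n-r+1$ peaks with $\opat{j}{1}\trianglerighteq\opat{i}{1}$ is $\trianglerighteq p_a$. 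Adding the peak counts to the parameter counts from the previous step (which is legitimate since peaks and parameters occupy disjoint slots of the multiset), the values $\trianglelefteq p_b$ number at least $r+(n-r+1)=n+1$, and the values $\trianglerighteq p_a$ number at least $(n-r+1)+r=n+1$. By the median fact this forces $\varphi(\rprof_N)\in[p_a,p_b]=U(\opat{i}{\kappa_i},\succ_i)$.

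I do not expect a deep obstacle; the argument is essentially a two-sided median bound. The only delicate points are the weak-versus-strict inequality bookkeeping — in particular obtaining the bound of $n-r+1$ peaks $\trianglerighteq\opat{i}{1}$ correctly when several peaks coincide with $\opat{i}{1}$ — and being explicit that the peak contributions and the parameter contributions are counted over disjoint entries of the $2n+1$ values so their cardinalities may be summed without double-counting.
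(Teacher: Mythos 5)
Your proof is correct, and it takes a genuinely different route from the paper's. The paper argues by a three-way case analysis on where the peak $\opat{i}{1}$ sits relative to the parameter interval $[\beta^{\varphi}_{n-r},\beta^{\varphi}_{n-r+1}]$: if the peak lies inside, the rule returns the peak itself; if $\opat{i}{1}\triangleleft\beta^{\varphi}_{n-r}$ (resp.\ $\beta^{\varphi}_{n-r+1}\triangleleft\opat{i}{1}$), it invokes sandwich facts about median rules to get $\opat{i}{1}\triangleleft\varphi(\rprof_N)\trianglelefteq\beta^{\varphi}_{n-r}$ (resp.\ $\beta^{\varphi}_{n-r+1}\trianglelefteq\varphi(\rprof_N)\triangleleft\opat{i}{1}$), and then uses single-peakedness in each case to show that the whole stretch between $\opat{i}{1}$ and the relevant parameter is contained in $U(\opat{i}{\kappa_i},\succ_i)$. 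You instead unfold the definition of the median directly and run one unified two-sided counting argument over the $(2n+1)$-element multiset, invoking single-peakedness only once, up front, to identify $U(\opat{i}{\kappa_i},\succ_i)$ with an interval $[p_a,p_b]$ containing the peak; your counts ($r$ peaks plus the $n-r+1$ parameters $\beta^{\varphi}_0,\dots,\beta^{\varphi}_{n-r}$ are $\trianglelefteq p_b$, and the $n-r$ peaks strictly right of $\opat{i}{1}$ plus voter $i$'s own peak plus the $r$ parameters $\beta^{\varphi}_{n-r+1},\dots,\beta^{\varphi}_n$ are $\trianglerighteq p_a$) each reach $n+1$ over disjoint slots, correctly forcing the median into the interval. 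What each approach buys: yours is more elementary and self-contained, since the ``by the definition of a median rule'' sandwich facts the paper treats as black boxes are exactly the kind of counting you make explicit, and you avoid the case split entirely; the paper's version is shorter on the page but only because it defers that verification to the cited facts. One cosmetic remark: you state the median criterion as an equivalence (``precisely when''), but only the sufficiency direction is needed or used, which is fine.
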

\begin{proof}
If $\opat{i}{1}\in \big[\beta^{\varphi}_{n-r},\beta^{\varphi}_{n-r+1}\big]$, then by the definition of a median rule, $\varphi(\rprof_N)=\opat{i}{1}$. Thus, the claim holds. Now suppose $\opat{i}{1}\triangleleft \beta^{\varphi}_{n-r}$. By the definition of a median rule this means, $\varphi(\rprof_N)\trianglelefteq \beta^{\varphi}_{n-r}$ and $\opat{i}{1}\triangleleft \varphi(\rprof_N)$. Combining these two observations, we have $\opat{i}{1}\triangleleft \varphi(\rprof_N)\trianglelefteq \beta^{\varphi}_{n-r}$. Additionally, as $\big[\beta^{\varphi}_{n-r},\beta^{\varphi}_{n-r+1}\big]\cap U(\opat{i}{\kappa_i},\succ_i)\neq \emptyset$ and $\succ_i$ is single-peaked with $\opat{i}{1}\triangleleft\beta^{\varphi}_{n-r}$, it follows that $[\opat{i}{1},\beta^{\varphi}_{n-r}]\subseteq U(\opat{i}{\kappa_i},\succ_i)$. This, together with $\opat{i}{1}\triangleleft \varphi(\rprof_N)\trianglelefteq \beta^{\varphi}_{n-r}$, implies $\varphi(\rprof_N)\in U(\opat{i}{\kappa_i},\succ_i)$. 
	
Finally, suppose $\beta^{\varphi}_{n-r+1}\triangleleft \opat{i}{1}$. By the definition of a median rule this means, $ \beta^{\varphi}_{n-r+1}\trianglelefteq \varphi(\rprof_N)\triangleleft \opat{i}{1}$. Moreover, as $\big[\beta^{\varphi}_{n-r},\beta^{\varphi}_{n-r+1}\big]\cap U(\opat{i}{\kappa_i},\succ_i)\neq \emptyset$ and $\succ_i$ is single-peaked with $\beta^{\varphi}_{n-r+1}\triangleleft \opat{i}{1}$, we have $[\beta^{\varphi}_{n-r+1},\opat{i}{1}]\subseteq U(\opat{i}{\kappa_i},\succ_i)$. All the above observations imply that $\varphi(\rprof_N)\in U(\opat{i}{\kappa_i},\succ_i)$.
\end{proof}

We are now ready to characterize all the fair rules.
 \begin{theorem}\label{the: special_ta_weak}
     A RSCR \rscr is unanimous, strategy-proof, anonymous, and satisfies \spweakf if and only if it is a random median rule $\varphi=\sum_{w\in W} \lambda_w \varphi_w$ such that for all $ r \in [0,n-1]$, all $i \in N$, and all $I  \in \mathcal{I}_{\kappa_i}$,  $$\sum_{\big\{w \; \mid \;   \big[ \beta^{\varphi_w}_{r}, \beta^{\varphi_w}_{r+1}\big]\cap I \neq \emptyset\big\}}\lambda_w\geq \eta_i.$$
\end{theorem}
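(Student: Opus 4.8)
The plan is to prove this as a corollary-style argument that mirrors the structure of the earlier extreme point characterizations (\Cref{the: special_ep_weak} and \Cref{the: special_ta_weak}'s predecessors), now specialized to the anonymous setting where min-max rules collapse to median rules. The key reduction is that under anonymity, the relevant data of a profile is captured entirely by the multiset of peaks, and a median rule $\varphi_w$ is determined by its dummy parameters $\beta^{\varphi_w}_0 \trianglelefteq \cdots \trianglelefteq \beta^{\varphi_w}_n$. The bridge connecting the fairness condition to the parameters is precisely \Cref{lem: nrnr1}: if voter $i$ has exactly $r$ peaks at or before her own, then $\varphi_w(\rprof_N) \in U(\opat{i}{\kappa_i},\succ_i)$ whenever $[\beta^{\varphi_w}_{n-r},\beta^{\varphi_w}_{n-r+1}]$ meets the target interval $U(\opat{i}{\kappa_i},\succ_i)$. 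So the characterization condition, with the index shift $r \mapsto n-r$ built into the indexing convention of the statement, is exactly the sum of weights $\lambda_w$ over those median rules that are forced into the interval.

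For the \textbf{(Only if)} direction, I would fix a voter $i$, an interval $I = U(\opat{i}{\kappa_i},\succ_i) \in \mathcal{I}_{\kappa_i}$, and a value $r$, and construct a witnessing profile: place exactly $r$ voters (including $i$) with peaks at or before $\min I$ arranged so that $i$ is the $r$-th peak, and the remaining $n-r$ voters with peaks strictly after $\max I$. By anonymity the outcome depends only on this peak-count data, and the weak-IEG requirement forces $\varphi_{I}(\rprof_N) \geq \eta_i$, i.e. $\sum_{\{w : \varphi_w(\rprof_N) \in I\}} \lambda_w \geq \eta_i$. Then I would argue that for a median rule $\varphi_w$, the outcome lands in $I$ on this constructed profile exactly when $[\beta^{\varphi_w}_{r},\beta^{\varphi_w}_{r+1}]$ (in the statement's indexing) meets $I$ — one inclusion is \Cref{lem: nrnr1} directly, and the reverse (if the parameter window misses $I$ then the median is pushed out of $I$) follows from the definition of $\mathsf{med}$ applied to the specific peak configuration, since all $r$ small peaks lie before $I$ and all $n-r$ large peaks lie after $I$. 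This identifies the set of "forced-in" rules with the index set in the claimed sum.

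For the \textbf{(If)} direction, I would take an arbitrary voter $i$ and profile $\rprof_N$, set $I = U(\opat{i}{\kappa_i},\succ_i)$, let $r$ be the number of peaks at or before $\opat{i}{1}$, and apply the hypothesis to this $i$, this $I$, and this $r$. For every median rule $\varphi_w$ in the summation index set, $[\beta^{\varphi_w}_{r},\beta^{\varphi_w}_{r+1}] \cap I \neq \emptyset$, so \Cref{lem: nrnr1} gives $\varphi_w(\rprof_N) \in I$; hence $\varphi_I(\rprof_N) = \sum_w \lambda_w \varphi_{w,I}(\rprof_N) \geq \sum_{\{w : [\beta^{\varphi_w}_r,\beta^{\varphi_w}_{r+1}] \cap I \neq \emptyset\}} \lambda_w \geq \eta_i$, establishing weak-IEG. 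Unanimity and strategy-proofness come for free from the fact that $\varphi$ is a convex combination of median rules (via the anonymous specialization of \Cref{the: un_sp_ep}).

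The main obstacle I anticipate is the careful bookkeeping of the index shift between $r$ (peaks at or before $i$) and the $n-r$ appearing in \Cref{lem: nrnr1}, together with verifying the reverse implication in the (Only if) direction — namely that when the parameter window fails to intersect $I$, the median genuinely falls \emph{outside} $I$ rather than merely being unconstrained. This requires checking that the witnessing profile is extremal enough (all peaks strictly separated from $I$ on the correct sides) that $\mathsf{med}$ of the peaks-plus-dummies is governed entirely by whether $\beta^{\varphi_w}_r$ or $\beta^{\varphi_w}_{r+1}$ reaches into $I$. I would handle this by explicitly counting: among the $2n+1$ values fed to $\mathsf{med}$, the $n+1$-th largest is computed, and the construction pins down how many of the peaks and dummy parameters lie on each side of $I$, forcing the median to the boundary parameter when the window misses $I$.
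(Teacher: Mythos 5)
Your proposal is correct and takes essentially the same route as the paper's own proof: the (If) direction is \Cref{lem: nrnr1} applied with the index shift $r \mapsto n-r$, and the (Only if) direction uses the same kind of extremal witnessing profile (peaks pushed to the extremes, voter $i$'s top-$\kappa_i$ interval equal to $I$) together with the counting argument that sandwiches each median rule's outcome inside its parameter window $\big[\beta^{\varphi_w}_{r},\beta^{\varphi_w}_{r+1}\big]$. The only excess in your plan is the two-sided ``exactly when'' identification in (Only if): the sandwich property alone already gives the needed inclusion $\{w \mid \varphi_w(\hat{\rprof}_N)\in I\}\subseteq\{w \mid [\beta^{\varphi_w}_{r},\beta^{\varphi_w}_{r+1}]\cap I\neq\emptyset\}$, so the reverse implication (window misses $I$ forces the outcome out of $I$) never has to be proved.
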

\begin{proof}
    \textbf{(Only if:)} We first show that any random median rule $\varphi=\sum_{w\in W}\lambda_w\varphi_w$ satisfying \spweakf also satisfies the condition in the theorem. Take $r\in \{0,\ldots,n-1\}$, $i \in N$, and $x\in \{1,\ldots,m-\kappa_i+1\}$. Consider a profile $\hat{\rprof}_N$ such that (a) the top-ranked project of exactly $n-r-1$ voters is $p_1$, (b) $U(\opat{i}{\kappa_i},\succ_i)=[p_x,p_{x+\kappa_i-1}]$, and (c) $p_m$ is the top-ranked project of the remaining voters. By the definition of a median rule, for all $w\in W$,
	\begin{equation}\label{eq_1}
		\varphi_w(\hat{\rprof}_N)\in [\beta^{\varphi_w}_r,\beta^{\varphi_w}_{r+1}].
	\end{equation}
	Moreover, since $\varphi$ satisfies \spweakf, we have 
	\begin{equation}\label{eq_2}
		\sum_{\big\{w \in W \mid  \varphi_w(\hat{\rprof}_N)\in[p_x,p_{x+\kappa_i-1}]\big\}}\lambda_w\geq \eta_i.
	\end{equation}
	Combining \Cref{eq_1} and \Cref{eq_2}, we have 
	\begin{equation*}
		\sum_{\big\{w\in W \mid   \big[ \beta^{\varphi_w}_{r}, \beta^{\varphi_w}_{r+1}\big]\cap\big[p_x,p_{x+\kappa_i-1}\big]\neq \emptyset\big\}}\lambda_w\geq \eta_i.
	\end{equation*}

\textbf{(If:)} Suppose a random median $\varphi=\sum_{w\in W} \lambda_w \varphi_w$ satisfies the condition in the theorem. We show that $\varphi$ is \spweakf. Take $\rprof_N\in \mathcal{D}^n$ and a voter $i\in N$. Set $r = |\{j\in N \mid \opat{j}{1}\trianglelefteq \opat{i}{1}\}|$. Since $\succ_i$ is single-peaked, $U(\opat{i}{\kappa_i},\succ_i)\in \mathcal{I}_{\kappa_i}$. Let $U(\opat{i}{\kappa_i},\succ_i)=[p_t,p_{t+\kappa_i-1}]$ for some $p_t\in \proj$.  This implies $t\in [1,m-\kappa_i+1]$. Moreover, as $r\in  \{1,\ldots,n\}$, $n-r\in \{0,\ldots,n-1\}$.  Therefore, by the condition of the theorem, we have  $$\sum_{\big\{w \mid   \big[ \beta^{\varphi_w}_{n-r}, \beta^{\varphi_w}_{n-r+1}\big]\cap\big[p_t,p_{t+\kappa_i-1}\big]\neq \emptyset\big\}}\lambda_w\geq \eta_i.$$ By Lemma \ref{lem: nrnr1}, $\big[ \beta^{\varphi_w}_{n-r}, \beta^{\varphi_w}_{n-r+1}\big]\cap U(\opat{i}{\kappa_i},\succ_i)\neq \emptyset$ implies $\varphi^w(\rprof_N)\in U(\opat{i}{\kappa_i},\succ_i)$. Combining these observations together, we have $\varphi_{U(\opat{i}{\kappa_i},\succ_i)}(\rprof_N)\geq \eta_i$. Thus, the fairness requirement of voter $i$ is met.
\end{proof}
\subsubsection{EPC: Strong-IEG and Total Anonymity}\label{sec: siegta}
Our \spstrofness ensures that some project in the top $\kappa_i$ projects receives a probability of at least $\eta_i$. 
\begin{theorem}\label{the: special_ta_strong}
     A RSCR \rscr is unanimous, strategy-proof, anonymous, and \spstrof if and only if it is a random median rule $\varphi=\sum_{w\in W} \lambda_w \varphi_w$ such that for all $r \in [1,n]$, all $s \in [0,n-r]$, all $i \in N$, all $I \in \mathcal{I}_{\kappa_i}$, and all $\{b_{r},\ldots,b_{r+s}\} \subseteq I$ with $b_r\trianglelefteq \cdots \trianglelefteq b_{r+s}$,  at least one of the following holds:
	\begin{enumerate}[label=(C\arabic*)]
		\item there exists $c \in \{b_r,\ldots, b_{r+s}\}$ such that $$\sum_{\{w\;\mid \; \beta^{\varphi_w}_{n-t}\trianglelefteq b_{t} \trianglelefteq \beta^{\varphi_w}_{n-t+1} \mbox{ for some } b_t=c\}}\lambda_w\geq \eta_i,$$
		\item there exists $c \in I \setminus \{b_r,\ldots, b_{r+s}\}$ such that $$\sum_{\{w \;\mid \; \beta^{\varphi_w}_{n-u}=c\}}\lambda_w\geq \eta_i,$$
		where $u=r-1$ if $c \triangleleft b_{r}$, $u = r+s$ if $b_{r+s} \triangleleft c $, and else $u$ is such that $b_u \triangleleft c \triangleleft b_{u+1}$.
	\end{enumerate}
\end{theorem}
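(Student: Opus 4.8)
The plan is to prove Theorem~\ref{the: special_ta_strong} exactly as a strong-IEG analogue of Theorem~\ref{the: special_ta_weak}, reducing everything to the behaviour of median rules via the median characterization and Lemma~\ref{lem: nrnr1}. The overall structure is the standard two-directional argument. For the (Only if) direction, I would start from a random median rule $\varphi=\sum_{w\in W}\lambda_w\varphi_w$ that satisfies \spstrof, fix $r,s,i,I$, and a chain $b_r\trianglelefteq\cdots\trianglelefteq b_{r+s}$ in $I$, and construct a witnessing profile $\rprof_N$. The idea is to place the top-ranked projects so that exactly $r-1$ voters peak strictly before $I$, the voters indexed $r,\ldots,r+s$ have peaks at $b_r,\ldots,b_{r+s}$ respectively, the remaining voters peak strictly after $I$, and the voter $i$ has $U(\opat{i}{\kappa_i},\succ_i)=I$. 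Since $\varphi$ is \spstrof, there is a project $c^*\in I$ with $\varphi_{c^*}(\rprof_N)\ge\eta_i$, i.e. $\sum_{\{w:\varphi_w(\rprof_N)=c^*\}}\lambda_w\ge\eta_i$. The bulk of the work is the case analysis showing that the event $\{\varphi_w(\rprof_N)=c^*\}$ coincides, up to null $\lambda_w$-mass, with the parameter condition in (C1) when $c^*\in\{b_r,\ldots,b_{r+s}\}$ and with the condition in (C2) otherwise. This mirrors the three sub-cases in the proof of Theorem~\ref{the: special_ep_strong} (peak coincides with a listed project; $c^*$ lies before, after, or strictly between listed peaks) but is translated into the language of consecutive median parameters $\beta^{\varphi_w}_{n-t},\beta^{\varphi_w}_{n-t+1}$ rather than min-max parameters indexed by coalitions.

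For the (If) direction I would take a random median rule satisfying the stated parameter condition and an arbitrary profile $\rprof_N$ with a fixed voter $i$. Set $r=|\{j\in N\mid\opat{j}{1}\trianglelefteq\opat{i}{1}\}|$ and read off the ordered peaks inside $I=U(\opat{i}{\kappa_i},\succ_i)$, calling them $b_r\trianglelefteq\cdots\trianglelefteq b_{r+s}$ (the starting index is forced to be $r$ because exactly $r-1$ voters peak before $I$ and $\opat{i}{1}$ is itself in $I$, giving the $(r,s)$ ranges in the theorem). Applying the theorem's condition to this specific chain yields either (C1) or (C2). In either case I would argue, using Lemma~\ref{lem: nrnr1} together with the single-peakedness of $\succ_i$, that every median rule $\varphi_w$ counted in the relevant sum actually outputs the witness project $c$, hence $\varphi_c(\rprof_N)\ge\eta_i$ and the strong-IEG requirement for $i$ is met. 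The key lemma does the heavy lifting: it converts the statement ``$[\beta^{\varphi_w}_{n-r},\beta^{\varphi_w}_{n-r+1}]$ meets $U(\opat{i}{\kappa_i},\succ_i)$'' into ``$\varphi_w(\rprof_N)$ lands in that upper contour set,'' and a refined version pins the output to the exact project $c$ when the parameter window is sandwiched appropriately.

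The main obstacle I anticipate is the bookkeeping in the intermediate case where the witness $c$ is \emph{not} itself the peak of any voter (the analogue of Case~2.3 of Theorem~\ref{the: special_ep_strong}). Here one must identify the correct consecutive pair of listed projects $b_u\triangleleft c\triangleleft b_{u+1}$ and show that $\varphi_w(\rprof_N)=c$ is equivalent to the single parameter $\beta^{\varphi_w}_{n-u}$ equalling $c$; the indexing shift between coalition sizes and median-parameter positions (the $n-t$ versus $t$ convention) is where sign/off-by-one errors creep in, and the boundary conventions $u=r-1$ and $u=r+s$ need to be checked against the dummy parameters $\beta^{\varphi}_0=p_1$ and $\beta^{\varphi}_n=p_m$. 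I would verify these edge cases explicitly by instantiating $c\triangleleft b_r$ and $b_{r+s}\triangleleft c$ separately. A secondary subtlety is confirming that the profile constructed in the (Only if) direction is genuinely single-peaked and realizes the prescribed upper contour set for voter $i$; since $I$ is an interval of $\triangleleft$ and the remaining peaks are pushed to the two extremes, single-peakedness of a suitable completion is automatic, so this should be routine once stated carefully.

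\begin{proof}
\textbf{(Only if:)} Fix $r\in[1,n]$, $s\in[0,n-r]$, a voter $i$, an interval $I=[p_t,p_{t+\kappa_i-1}]\in\mathcal{I}_{\kappa_i}$, and projects $b_r\trianglelefteq\cdots\trianglelefteq b_{r+s}$ in $I$. Construct a single-peaked profile $\rprof_N$ in which exactly $r-1$ voters have peak $p_1$, the voters can be indexed so that $s+1$ of them have peaks $b_r,\ldots,b_{r+s}$, the remaining $n-r-s$ voters have peak $p_m$, and $U(\opat{i}{\kappa_i},\succ_i)=I$ (such a profile exists because $I$ is an interval of $\triangleleft$). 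By \spstrof, there is a project $c\in I$ with $\sum_{\{w:\varphi_w(\rprof_N)=c\}}\lambda_w\ge\eta_i$. For each such $\varphi_w$, by the definition of a median rule and the placement of peaks, if $c=b_t$ for some listed peak then $\beta^{\varphi_w}_{n-t}\trianglelefteq b_t\trianglelefteq\beta^{\varphi_w}_{n-t+1}$, giving (C1); if $c$ is not a listed peak, then with $u$ the index dictated by the theorem one checks $\beta^{\varphi_w}_{n-u}=c$, giving (C2). Hence the required condition holds.

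\textbf{(If:)} Take a random median rule satisfying the condition, an arbitrary $\rprof_N$, and a voter $i$. Put $r=|\{j\in N\mid\opat{j}{1}\trianglelefteq\opat{i}{1}\}|$, let $I=U(\opat{i}{\kappa_i},\succ_i)\in\mathcal{I}_{\kappa_i}$, and let $b_r\trianglelefteq\cdots\trianglelefteq b_{r+s}$ be the peaks of voters lying in $I$. The condition yields (C1) or (C2) with some witness $c\in I$. Using Lemma~\ref{lem: nrnr1} and single-peakedness of $\succ_i$, every median rule $\varphi_w$ in the corresponding sum satisfies $\varphi_w(\rprof_N)=c$, so $\varphi_c(\rprof_N)\ge\eta_i$ and the fairness requirement of $i$ is met.
\end{proof}
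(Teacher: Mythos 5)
Your overall plan follows the paper's own proof: the (Only if) direction uses the same witnessing profile (exactly $r-1$ peaks at $p_1$, the peaks $b_r,\ldots,b_{r+s}$ placed inside $I$ with voter $i$ among them, the rest at $p_m$) and the same case split on whether the strong-IEG witness $c$ is a listed peak. That direction is essentially right, though you gloss over the tie case: when several indices $t$ satisfy $b_t=c$, one must show the sandwich $\beta^{\varphi_w}_{n-t}\trianglelefteq c\trianglelefteq\beta^{\varphi_w}_{n-t+1}$ holds for \emph{some} such $t$, which the paper proves by contradiction using the indices $s_1,s_2$; your phrasing asserts it without argument.

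The (If) direction, however, contains a genuine error. You instantiate $r=|\{j\in N\mid\opat{j}{1}\trianglelefteq\opat{i}{1}\}|$, which is the quantity used in \Cref{the: special_ta_weak} and \Cref{lem: nrnr1}; the correct instantiation (the paper's) is $r = |\{j \in N \mid \opat{j}{1} \triangleleft \min U(\opat{i}{\kappa_i},\succ_i)\}|+1$. The difference is not cosmetic. First, the theorem quantifies over $s\in[0,n-r]$, i.e.\ $r+s\leq n$: if all $n$ voters peak inside $I$ and $\opat{i}{1}$ is the third smallest peak, your $r$ equals $3$ while the chain of peaks in $I$ has length $n$, so $s=n-1$ and $r+s=n+2>n$, and the theorem's condition cannot even be applied to your instantiation. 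Second, conditions (C1)/(C2) encode a median computation that is only valid when $b_t$ is the $t$-th smallest peak of the \emph{whole} profile: then $\beta^{\varphi_w}_{n-t}\trianglelefteq b_t\trianglelefteq\beta^{\varphi_w}_{n-t+1}$ gives at least $n+1$ of the $2n+1$ points weakly before $b_t$ and at least $n+1$ weakly after, forcing $\varphi_w(\rprof_N)=b_t$. With your indexing, $b_r$ is the smallest peak in $I$ but generally not the $r$-th smallest peak overall, so membership in the (C1)/(C2) sums no longer pins down the output. Relatedly, \Cref{lem: nrnr1} cannot do the work you assign to it: it only yields $\varphi_w(\rprof_N)\in U(\opat{i}{\kappa_i},\succ_i)$, which suffices for weak-IEG but not for strong-IEG, where the output must equal the single project $c$; the "refined version" you invoke is exactly the direct median computation above, and it requires the paper's choice of $r$. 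Replacing your $r$ and this appeal to \Cref{lem: nrnr1} with the paper's direct argument repairs the proof.
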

\begin{proof}
    \textbf{(Only if:)} Let $\varphi=\sum_{w\in W}\lambda_w \varphi_w$ be a \spstrof random median rule. We show that it satisfies (C1) and (C2). Take $r\in [1,n]$, $s\in [0,n-r]$, $i \in N$, $I\in \mathcal{I}_{\kappa_i}$, and $\{b_{r},\ldots,b_{r+s}\} \subseteq I$ with $b_r\trianglelefteq \cdots \trianglelefteq b_{r+s}$. Construct a preference profile $\rprof_N$ such that (i) $p_1$ is the top-ranked project of exactly $r-1$ voters, (ii) $U(\opat{i}{\kappa_i},\succ_i)=I$, (iii) $b_{r},\ldots,b_{r+s}$ are the top-ranked projects of any $s+1$ voters including $i$, and (iv) $p_m$ is the top-ranked project of the remaining voters. Since $\varphi$ is \spstrof, there exists a project $p\in I$ such that $\varphi_p(\rprof_N)\geq \eta_i$. This, together with $\varphi=\sum_{w\in W}\lambda_w\varphi_w$, implies, 
\begin{equation}\label{eq_-1}
\sum_{\{w\mid \varphi_w(\rprof_N)=p\}}\lambda_w\geq \eta_i.
\end{equation}

Suppose $p\in \{b_{r},\ldots,b_{r+s}\}$. This means there exist $s_1,s_2$ with $0\leq s_1\leq s_2\leq s$ such that $b_{r+s_1}=\cdots=b_{r+s_2}=p$. Let $\hat{\varphi}$ be a median rule such that $\hat{\varphi}(\rprof_N)=p$. We claim there exists $t\in [s_1,s_2]$ such that $\beta^{\hat{\varphi}}_{n-r-t}\trianglelefteq p\trianglelefteq \beta^{\hat{\varphi}}_{n-r-t+1}$. Suppose not. Then, either $\beta^{\hat{\varphi}}_{n-r-s_1+1}\triangleleft p$ or $p \triangleleft \beta^{\hat{\varphi}}_{n-r-s_2}$. Assume $\beta^{\hat{\varphi}}_{n-r-s_1+1}\triangleleft p$. This means $p_1\triangleleft p$. Moreover, at least $r+s_1-1$ voters have top-ranked projects before $p$. Combining these observations with the definition of a median rule, we have $\hat{\varphi}(\rprof_N)\triangleleft p$, which is a contradiction. Similarly, if $p \triangleleft \beta^{\hat{\varphi}}_{n-r-s_2}$, we can show that $p \triangleleft \hat{\varphi}(\rprof_N)$. Therefore,  there exists $t\in [s_1,s_2]$ such that $\beta^{\hat{\varphi}}_{n-r-t}\trianglelefteq a\trianglelefteq \beta^{\hat{\varphi}}_{n-r-t+1}$. This, together with \Cref{eq_-1}, implies  $$\sum_{\{w\;\mid \; \beta^{\varphi_w}_{n-t}\trianglelefteq b_{t} \trianglelefteq \beta^{\varphi_w}_{n-t+1} \mbox{ for some } b_t=a\}}\lambda_w\geq \eta_i.$$ Hence, (C1) holds with $c=p$.

Now suppose $p\in I\setminus \{b_{r},\ldots,b_{r+s}\}$. This implies either $p \in [\min I, b_r)$, or $p \in (b_{r+s}, \max I]$, or there exists  $b_{r+t} \in I$  such that $b_{r+t} \triangleleft p \triangleleft b_{r+t+1}$. Suppose  $p \in [\min I, b_r)$. Since $r>1$ implies there are exactly $r-1$ voters with top-ranked projects before $I$, for any median rule $\varphi$ with $\varphi(\rprof_N)=p$, we have $\beta^{\varphi}_{n-r+1} = p$. This, together with \Cref{eq_-1}, implies $$\sum_{\{w \;\mid \; \beta^{\varphi_w}_{n-r+1}=p\}}\lambda_w\geq \eta_i,$$  Thus, (C2) holds for $u = r-1$ and $c=p$. Similarly, if $p \in (b_{r+s}, \max I]$, then  (C2) holds for $u = r+s$ and $c=p$. Finally, if there exists $b_{r+t} \in I$ such that  $b_{r+t} \triangleleft p \triangleleft b_{r+t+1}$, then $\varphi_w(\rprof_N)=p$ implies $\beta^{\varphi_w}_{n-r-t} = p$. Therefore, (C2) holds for $u = r+t$ and $c=p$.

\textbf{(If:)}  Let $\varphi=\sum_{w\in W}\lambda_w\varphi_w$ be a random median rule such that for all $r \in [1,n]$, all $s \in [0,n-r]$, all $i \in N$, all $I \in \mathcal{I}_{\kappa_i}$, and all $\{b_{r},\ldots,b_{r+s}\} \subseteq I$ with $b_r\trianglelefteq \cdots \trianglelefteq b_{r+s}$,  at least one of (C1) and (C2) holds. We show that $\varphi$ is \spstrof. Take a preference profile $\rprof_N$ and an arbitrary voter $i$. We have to show that 
\begin{equation}\label{eq_-2}
	 \exists p\in U(\opat{i}{\kappa_i},\succ_i) \mbox{ such that } \varphi_p(\rprof_N)\geq \eta_i.
\end{equation}
Let $r = |j \in N: \opat{j}{1} \triangleleft \min{U(\opat{i}{\kappa_i},\succ_i)}|+1$, $T = \{j \in N: \opat{j}{1} \in U(\opat{i}{\kappa_i},\succ_i)\}$, and $s = |T|-1$. Let $I = U(\opat{i}{\kappa_i},\succ_i)$. Let $b_r,\ldots,b_{r+s}$ be the top-ranked projects of voters in $T$ (sorted in increasing order, with repetition). Now, for these values, at least one of (C1) and (C2) holds.
 Suppose (C1) holds. This means that there exists $c \in \{b_r,\ldots, b_{r+s}\}$ such that $$\sum_{\{w\;\mid \; \beta^{\varphi_w}_{n-t}\trianglelefteq b_{t} \trianglelefteq \beta^{\varphi_w}_{n-t+1} \mbox{ for some } b_t=c\}}\lambda_w\geq \eta_i.$$
 By the definition of a median rule, for all $\hat{\varphi}\in \{\varphi_w\mid w\in W \mbox{ and } \beta^{\varphi_w}_{n-t}\trianglelefteq b_t \trianglelefteq \beta^{\varphi_w}_{n-t+1} \mbox{ for some } b_t=c\}$, we have $\hat{\varphi}(\rprof_N)=c$. Therefore, by (C1), $\varphi_{c}(\rprof_N)\geq \eta_i$. Hence, \Cref{eq_-2} holds. Now suppose (C2) holds. This means there exists $c \in U(\opat{i}{\kappa_i},\succ_i) \setminus \{b_r,\ldots, b_{r+s}\}$ such that $$\sum_{\{w \;\mid \; \beta^{\varphi_w}_{n-u}=c\}}\lambda_w\geq \eta_i,$$
		where $u=r-1$ if $c \triangleleft b_{r}$, $u = r+s$ if $b_{r+s} \triangleleft c$, and else $u$ is such that $b_u \triangleleft c \triangleleft b_{u+1}$. First assume $u=r-1$. Since $\min U(\opat{i}{\kappa_i},\succ_i) \trianglelefteq c \triangleleft b_{r}$ and exactly $r-1$ voters have their top-ranked project before $\min U(\opat{i}{\kappa_i},\succ_i)$, for any $\varphi\in \{\varphi_w\mid w\in W \mbox{ and } \beta^{\varphi_w}_{n-p+1}=c\}$, $\varphi(\rprof_N)=c$. Hence, \Cref{eq_-2} holds with $p=c$. Now assume $u = r+s$ where $b_{r+s} \triangleleft c$. Since $b_{r+s}\triangleleft c \trianglelefteq \max U(\opat{i}{\kappa_i},\succ_i)$ and exactly $n-r-s$ voters have their top-ranked projects after $\max U(\opat{i}{\kappa_i},\succ_i)$, we have for all $\varphi\in \{\varphi_w\mid w\in W \mbox{ and } \beta^{\varphi_w}_{n-p-q}=c\}$, $\varphi(\rprof_N)=c$. Hence, \Cref{eq_-2} holds with $p=c$. Finally assume, $u$ is such that $b_u \triangleleft c \triangleleft b_{u+1}$. Since the number of voters with their top-ranked projects before $c$ is exactly $u$, for all $\varphi\in \{\varphi_w\mid w\in W \mbox{ and } \beta^{\varphi_w}_{n-q}=c\}$, $\varphi(\rprof_N)=c$. Hence, \Cref{eq_-2} holds with $p=c$.
\end{proof}
\section{Compliant Representation Scenarios Satisfying Additional Desiderata}\label{sec: o-fl-compliant}
Let us recall that a \repscene $\psi_G$ is said to be compliant with \repranges $\kappa_G$, if every representative function $\psi_q$ is \topi and the outcome of $\psi_q$ is always $\kappa_q$-sized interval (\Cref{def: compliance}). In this section, we discuss some \repscenes that are \compliant with $\kappa_G$ and also satisfy some additional desirable properties.

Given a parameter $k$ and the preferences of voters over the projects, the problem of multi-winner voting is to aggregate these preferences and choose exactly $k$ projects. This is very well studied in computer science and economics literature \cite{faliszewski2017multiwinner}. Each representative function $\psi_q$ can be viewed as a multi-winner voting rule with parameter $\kappa_q$. Motivated by this, we impose some well studied properties of multi-winner voting rules on the representative functions in the \repscene, in addition to ensuring that it is complaint with $\kappa_G$. The first property we look at is anonymity, which ensures that the permutation of preferences of voters in a \community does not change the representatives chosen for that \community.

\begin{definition}\label{def: rep_anonymity}
    A representative function $\psi_q: \mathcal{D}^{|N_q|} \to {2^\proj}$ is said to be {anonymous} if for any permutation $\pi$ of $N_q$ and $\rprof_{N_q}\in \mathcal{D}^{|N_q|}$, we have $\correspondence(\rprof_{N_q})=\correspondence(\rprof_{\pi(N_q)})$ where $\rprof_{\pi(N)}=(\succ_{\pi(1)},\ldots,\succ_{\pi(n)})$. A \repscene is said to be \textbf{anonymous} if $\psi_q$ is anonymous for every $q \in G$.
\end{definition}

The next property we define requires that for any group $N_q$ and a profile $\rprof_{N_q}$, there exists a voter whose top ranked project is selected by $\psi_q$.

\begin{definition}\label{def: rep_topcontaining}
    A representative function $\psi_q: \mathcal{D}^{|N_q|} \to {2^\proj}$ is said to be top-containing if for any $\rprof_{N_q} \in  \mathcal{D}^{|N_q|}$, there exists $i \in N_q$ such that $\opat{i}{1} \in \psi_q(\rprof_{N_q})$. A \repscene is said to be \textbf{top-containing} if $\psi_q$ is top-containing for every $q \in G$.
\end{definition}

The next property we look at is candidate monotonicity \cite{elkind2017properties}, which requires that for any group $N_q$, if a project selected by $\psi_q$ is shifted forward in the preference of some voter in $N_q$, that project should continue to be selected by $\psi_q$.

\begin{definition}\label{def: rep_candidatem}
    A representative function $\psi_q: \mathcal{D}^{|N_q|} \to {2^\proj}$ is said to satisfy candidate monotonicity if for any $\rprof \in  \mathcal{D}^{|N_q|}$, $p \in \psi_q(P)$, and $\rprof' \in  \mathcal{D}^{|N_q|}$ obtained by shifting $p$ forward in some preference in $\rprof$, we have $p \in \psi_q(\rprof')$. A \repscene is said to satisfy \textbf{candidate monotonicity} if $\psi_q$ satisfies candidate monotonicity for every $q \in G$.
\end{definition}

Finally, we introduce the concept of pareto-efficiency, a well-known property that ensures no voter can be improved without causing a detriment to another voter. To formally define pareto-efficiency, we need to consider various methods through which a voter can compare two intervals of projects. We use the term `x-comparison' to denote the specific way in which a voter compares intervals based on their preferences. The formal definition of pareto-efficiency is then established in relation to this x-comparison.

\begin{definition}\label{def: rep_pe}
    A representative function $\psi_q: \mathcal{D}^{|N_q|} \to {2^\proj}$ is said to be pareto-efficient w.r.t. x-comparison if for every $i \in N_q$ and every $\kappa_q$ sized interval $I$ of $\proj$, it holds that whenever $\correspondence(\rprof_{N_q})$ is preferred over $I$ by voter $i$ w.r.t. x-comparison, we have some voter $j \in N_q$ who prefers $I$ over $\correspondence(\rprof_{N_q})$ w.r.t. x-comparison. A \repscene is said to be \textbf{pareto-efficient} if $\psi_q$ is pareto-efficient for every $q \in G$.
\end{definition}

We examine two types of x-comparisons: b-comparison (also called best comparison) and lb-comparison (also called lexicographic best comparison). In the case of b-comparison, a voter favors an interval $I$ over another interval $I'$ if the highest-ranked project in $I$ is at least as desirable as the highest-ranked project in $I'$. If the highest-ranked projects in both intervals are the same, the voter considers both $I$ and $I'$ to be equally preferred. On the other hand, lb-comparison operates similarly to b-comparison, except that two intervals cannot be considered equally preferred by a voter. Therefore, in the event of a tie between two intervals, the next highest-ranked projects in the intervals are compared.


\begin{lemma}\label{the: rep_ape_b}
    A \repscene $\psi_G$ is compliant with $\kappa_G$ and pareto-efficient w.r.t. b-comparison or lb-comparison if and only if for every $q \in G$ and $\rprof_{N_q} \in \mathcal{D}^{|N_q|}$, $\psi_q$ selects a $\kappa_q$-sized interval $I$ such that $I \subseteq [\tau_1(\rprof_{N_q}),\tau_{|N_q|}(\rprof_{N_q})]$ (or $[\tau_1(\rprof_{N_q}),\tau_{|N_q|}(\rprof_{N_q})] \subseteq I$).
\end{lemma}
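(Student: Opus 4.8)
The plan is to prove the equivalence \community by \community, since compliance (\Cref{def: compliance}), pareto-efficiency (\Cref{def: rep_pe}), and the stated containment condition are all quantified over each \community separately. So I would fix a \community $N_q$, abbreviate $\ell := \tau_1(\rprof_{N_q})$ and $h := \tau_{|N_q|}(\rprof_{N_q})$ (the leftmost and rightmost peaks), and show that $\psi_q$ is \topi, interval-valued, and pareto-efficient at $\rprof_{N_q}$ if and only if its output $O := \psi_q(\rprof_{N_q})$ is a $\kappa_q$-sized interval with $O \subseteq [\ell,h]$ or $[\ell,h]\subseteq O$. The combinatorial backbone is that a $\kappa_q$-interval $O$ meeting $[\ell,h]$ (which, for a $\kappa_q$-interval, is exactly \topiness, since the peaks span $[\ell,h]$) falls into exactly one of four cases: $O\subseteq[\ell,h]$; $[\ell,h]\subseteq O$; a left overhang $\min O \triangleleft \ell \trianglelefteq \max O \triangleleft h$; or a right overhang $\ell \triangleleft \min O \trianglelefteq h \triangleleft \max O$. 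The entire content of the lemma is that pareto-efficiency excludes the two overhangs, and the rest is bookkeeping.

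The one tool I would set up first is that, under single-peakedness, the top-ranked project of an interval for a voter $i$ equals the $\triangleleft$-nearest point of that interval to $i$'s peak, and that two projects lying on the same side of the peak are ranked with the nearer one strictly preferred. This reduces every b-comparison to a comparison of two concrete projects, which is then decided by single-peakedness.

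For the forward direction I would assume compliance and pareto-efficiency and rule out the overhangs by exhibiting a Pareto-dominating interval. If $O$ overhangs left, I consider its rightward shift $O^{+}$. Every peak satisfies $x \succ \min O$, so a voter with peak $x \trianglelefteq \max O$ has its peak in both $O$ and $O^{+}$ and is indifferent, while a voter with peak $x \succ \max O$ sees its best project improve from $\max O$ to $\mathrm{succ}(\max O)$ on the same side of the peak and so strictly prefers $O^{+}$; since $h \succ \max O$ at least one such voter exists, so $O^{+}$ Pareto-dominates $O$, contradicting pareto-efficiency. The right overhang is symmetric via the leftward shift. For the backward direction I would show that a containment-respecting $O$ is never dominated: given a competing $\kappa_q$-interval $I\neq O$, say with $\min I \triangleleft \min O$ (hence $\max I \triangleleft \max O$), I use the extreme peaks. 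If $O\subseteq[\ell,h]$, the peak-$h$ voter has best $\max O$ in $O$ versus $\max I \triangleleft \max O$ in $I$ on the same side of $h$, so it strictly prefers $O$ and $I$ cannot weakly dominate. If $[\ell,h]\subseteq O$, I split on whether $\ell,h\in I$: if either is absent the corresponding extreme voter strictly prefers $O$, and if both lie in $I$ then $[\ell,h]\subseteq I$ forces every peak into $I\cap O$, making all voters indifferent so that $I$ dominates nothing. Compliance here is immediate, since containment gives $O\cap[\ell,h]\neq\emptyset$, i.e.\ \topiness.

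The main obstacle I anticipate is the lb-comparison case. The best-project comparison, and hence the skeleton of both directions, is unchanged, but when a voter's peak already lies inside both $O$ and the shifted (or competing) interval, the best projects coincide and one must descend to the lexicographic tie-break over successively ranked projects. This tie-break is delicate precisely because single-peakedness does not pin down the relative order of the two ``extra'' endpoint projects traded by a shift when they sit on opposite sides of the peak, so an interior voter can, for some single-peaked preference, break the tie against the shift. I therefore expect the crux of the work to be isolating exactly which voters' tie-breaks are forced and verifying that the extreme peak-$\ell$ and peak-$h$ voters still decide each comparison in the required direction; this is the step where I would concentrate the verification, and where the argument is most likely to need an additional hypothesis or a more careful choice of the dominating interval than in the b-comparison case.
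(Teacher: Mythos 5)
For b-comparison your proof is correct and is essentially the paper's own argument: both proofs reduce everything to excluding the two overhang configurations via a one-position shift certified by an extreme-peak voter, and both get compliance from the observation that containment implies top-rangedness. Two discrepancies will strike you when you compare. First, \Cref{def: rep_pe} as literally written swaps the roles of the outcome and the competitor (it forbids the \emph{outcome} from Pareto-dominating other intervals); you silently adopted the standard reading (no interval Pareto-dominates the outcome), and that is the only reading under which the lemma can hold --- under the literal one the ``if'' direction already fails, e.g.\ with peaks $p_2,p_3$ and $\kappa_q=1$ the containment-respecting outcome $\{p_2\}$ Pareto-dominates $\{p_1\}$. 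This also explains why your shift goes toward the uncovered peaks (so that the shift dominates the outcome) while the paper shifts ``towards $\tau_1$'' (so that the outcome dominates the shift): each direction is the right move for its reading. Second, in the ``if'' direction the paper only tests one-position shifts of the outcome, although pareto-efficiency quantifies over \emph{all} $\kappa_q$-sized intervals; your case analysis over arbitrary competitors, including the sub-split on whether $\ell$ and $h$ lie in the competitor when $[\ell,h]\subseteq O$, is the argument that is actually needed, so on this point your write-up is tighter than the paper's.

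The lb-comparison gap you flagged is genuine, and you will not find the missing step in the paper: its proof disposes of lb with ``Clearly, $I$ is preferred over $I'$ by any voter\ldots w.r.t.\ both b-comparison and lb-comparison'', which is exactly the claim your interior-voter tie-break objection defeats. Worse, no more careful choice of dominating interval can rescue it, because the statement itself fails for lb-comparison in both directions. Forward direction: take $m=4$, $\kappa_q=2$, two voters with single-peaked preferences $p_2\succ p_1\succ p_3\succ p_4$ and $p_3\succ p_2\succ p_4\succ p_1$, and let $\psi_q$ always select voter $1$'s lexicographically best $\kappa_q$-sized interval; this rule is compliant (its output contains voter $1$'s peak, hence is top-ranged) and lb--pareto-efficient (voter $1$ strictly lb-prefers the output to every other interval, so nothing dominates it), yet at this profile it selects $\{p_1,p_2\}$, which is nested with neither direction of $[p_2,p_3]$. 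Backward direction: with all peaks at $p_3$ and common preference $p_3\succ p_4\succ p_2\succ p_1$, the containment-respecting choice $\{p_2,p_3\}$ is lb-dominated by $\{p_3,p_4\}$ (tie at $p_3$, then $p_4\succ p_2$ unanimously). So your plan is a complete and correct proof for b-comparison, but for lb-comparison the extra verification you budgeted would terminate in these counterexamples rather than in a proof; the lemma has to be weakened there (e.g.\ restricted to b-comparison) rather than proved.
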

\begin{proof}
    \textbf{(Only if:)} First, we will prove that for pareto-efficiency to hold one of the two conditions must necessarily be satisfied. For the sake of contradiction, assume that $\psi_q$ selects an interval $I$ at $\rprof_{N_q}$ such that $I$ and $[\tau_1(\rprof_{N_q}),\tau_{|N_q|}(\rprof_{N_q})]$ are not contained in one another. This implies, exactly one of $\min\{I\} \triangleleft \tau_1(\rprof_{N_q})$ and $\tau_{|N_q|}(\rprof_{N_q}) \triangleleft \max\{I\}$ holds. We will prove that pareto-efficiency does not hold. First, suppose $\min\{I\} \triangleleft \tau_1(\rprof_{N_q})$ and $\max\{I\} \trianglelefteq \tau_{|N_q|}(\rprof_{N_q})$. If $\tau_{|N_q|}(\rprof_{N_q}) = \max\{I\}$, $[\tau_1(\rprof_{N_q}),\tau_{|N_q|}(\rprof_{N_q})] \subseteq I$ giving a contradiction. If $\max\{I\} \triangleleft \tau_{|N_q|}(\rprof_{N_q})$, consider the interval $I'$ obtained by shifting $I$ by one position towards $\tau_1(\rprof_{N_q})$. Clearly, $I$ is preferred over $I'$ by any voter $i$ w.r.t. both b-comparison and lb-comparison. Thus, pareto-efficiency does not hold. It can be argued similarly for the case with $\tau_{|N_q|}(\rprof_{N_q}) \triangleleft \max\{I\}$ (we obtain $I'$ by shifting $I$ by one position towards $\tau_{|N_q|}(\rprof_{N_q})$).
    
\textbf{(If:)} We now prove the sufficiency part. That is, we prove that if one of the two conditions is satisfied, then the \repscene is pareto-efficient and \compliant. Compliance with $\kappa_G$ follows directly since $I$ is $\kappa_q$ sized interval. We will prove that pareto-efficiency is satisfied w.r.t. b-comparison and lb-comparison. If $I \subseteq [\tau_1(\rprof_{N_q}),\tau_{|N_q|}(\rprof_{N_q})]$, shifting $I$ towards $\tau_1(\rprof_{N_q})$ will make it less preferable for voter having maximum top-ranked project. Shifting $I$ towards $\tau_{|N_q|}(\rprof_{N_q})$ will make it less preferable for voter having minimum top-ranked project. Thus, it satisfies pareto-efficiency. Now, suppose $[\tau_1(\rprof_{N_q}),\tau_{|N_q|}(\rprof_{N_q})] \subseteq I$. This implies that the top-ranked projects of all the voters in $N_q$ are selected as representatives, thereby making $I$ optimal. Thus, pareto-efficiency is satisfied.
\end{proof}

\begin{theorem}\label{the: rep_apecm}
    The following \repscenes are compliant with $\kappa_G$, anonymous, top-containing, candidate monotone, and pareto-efficient w.r.t. b-comparison and lb-comparison: for every $q \in G$,
    \begin{enumerate}[label=(R\arabic*)]
        \item \label{rule1} $\psi_q$ takes a parameter $r \in [1,|N_q|]$ and selects $\kappa_q$ consecutive projects starting from $\tau_r(\rprof_{N_q})$.
        \item \label{rule2} $\psi_q$ selects a $\kappa_q$ sized interval that has maximum number of projects in $\{\opat{i}{1}: i\in N_q\}$ (ties can be broken w.r.t. increasing order of the starting points of the intervals).
        \item \label{rule3} $\psi_q$ selects a $\kappa_q$ sized interval that maximizes the number of voters in $N_q$ whose top ranked project is in the interval.
        \item \label{rule4} $\psi_q$ selects $\kappa_q$ consecutive projects starting from a project $p$ that maximizes the number of voters in $N_q$ who rank it at top (i.e., $\argma{p \in \proj}{|\{i \in N_q: \opat{i}{1} = p\}|}$).
    \end{enumerate}
\end{theorem}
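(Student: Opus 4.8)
The plan is to verify, for each of the four candidate rules \ref{rule1}--\ref{rule4}, the five properties: compliance with $\kappa_G$, anonymity, top-containing, candidate monotonicity, and pareto-efficiency w.r.t. b- and lb-comparison. Compliance is the easiest: each rule by construction outputs exactly $\kappa_q$ consecutive projects, i.e., a $\kappa_q$-sized interval of $\triangleleft$, so I only need to check \topiness, namely that at least one selected project lies in $[\tau_1(\rprof_{N_q}),\tau_{|N_q|}(\rprof_{N_q})]$. For \ref{rule1} this holds since $\tau_r(\rprof_{N_q})$ itself lies in that range for any $r \in [1,|N_q|]$ and is the interval's starting point; for \ref{rule2}, \ref{rule3}, \ref{rule4} it will follow from the top-containing property, which I would prove first and then invoke, since any interval containing a top-ranked project automatically contains a point sandwiched between $\tau_1$ and $\tau_{|N_q|}$.

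Next I would dispatch anonymity, which is immediate for all four rules: each rule's output depends only on the multiset $\{\opat{i}{1}: i \in N_q\}$ (equivalently on the counts $|\{i \in N_q: \opat{i}{1} = p\}|$), and permuting voters within $N_q$ leaves this multiset unchanged; the tie-breaking in \ref{rule2} is defined purely in terms of the prior ordering $\triangleleft$, so it too is anonymous. For the top-containing property: \ref{rule3} and \ref{rule4} maximize (over intervals, resp.\ starting projects) a count that is strictly positive only when some voter's peak is included, so an optimal choice must include at least one peak; \ref{rule2} likewise maximizes the number of \emph{distinct} peaks covered, which forces at least one peak into the interval; \ref{rule1} starts exactly at the peak $\tau_r(\rprof_{N_q})$, which is itself some voter's top-ranked project.

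For candidate monotonicity I would argue rule-by-rule that shifting a selected project $p$ forward in one voter's preference (which can only turn a non-peak into that voter's peak or move $p$ up toward being the peak) does not decrease the objective value associated with the interval containing $p$ relative to competing intervals, so $p$ remains selected. This is the step I expect to require the most care, because I must track how the shift perturbs both the chosen interval's score and every rival interval's score; the key observation is that a forward shift of $p$ can only \emph{increase} the peak-count of any interval containing $p$ and can only \emph{decrease} the peak-count of intervals not containing $p$, so the previously-winning interval weakly improves its margin, and anonymous, deterministic tie-breaking (by increasing starting point) preserves the win.

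Finally, pareto-efficiency w.r.t.\ b- and lb-comparison I would obtain as a direct corollary of \Cref{the: rep_ape_b}: it suffices to check that each rule always outputs a $\kappa_q$-sized interval $I$ with either $I \subseteq [\tau_1(\rprof_{N_q}),\tau_{|N_q|}(\rprof_{N_q})]$ or $[\tau_1(\rprof_{N_q}),\tau_{|N_q|}(\rprof_{N_q})] \subseteq I$. For \ref{rule1} with $r \in [1,|N_q|]$ the interval starts at a peak between $\tau_1$ and $\tau_{|N_q|}$, so it cannot stick out on the left, and one verifies the right end is controlled by the sandwich condition; for \ref{rule2}, \ref{rule3}, \ref{rule4}, an interval that partially protrudes beyond $[\tau_1,\tau_{|N_q|}]$ on one side without containing the whole range could be shifted inward to weakly increase the peak-count (strictly, unless the range is already contained), contradicting optimality, so the optimal interval satisfies one of the two containment conditions. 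Invoking \Cref{the: rep_ape_b} then yields pareto-efficiency, completing the proof. The main obstacle throughout is the candidate-monotonicity argument, since it alone requires reasoning about how a local preference perturbation reshapes the global comparison among all $\binom{}{}$ candidate intervals rather than just evaluating a fixed objective.
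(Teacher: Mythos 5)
Your overall architecture matches the paper's exactly: anonymity and top-containingness from the definitions, pareto-efficiency routed through \Cref{the: rep_ape_b}, and a rule-by-rule margin argument for candidate monotonicity. The genuine gap is that your ``key observation'' --- a forward shift of $p$ can only increase the peak-count of any interval containing $p$ --- is false for \ref{rule2}, because there the score of an interval is the number of \emph{distinct} peak projects it contains, not the number of voters whose peak it contains. Concretely, take $p_1 \triangleleft p_2 \triangleleft p_3 \triangleleft p_4$, $\kappa_q = 2$, and three single-peaked voters $p_3 \succ_1 p_4 \succ_1 p_2 \succ_1 p_1$, \; $p_4 \succ_2 p_3 \succ_2 p_2 \succ_2 p_1$, \; $p_1 \succ_3 p_2 \succ_3 p_3 \succ_3 p_4$. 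The peak set is $\{p_1,p_3,p_4\}$, the interval scores are $1,1,2$, and \ref{rule2} selects $[p_3,p_4] \ni p_3$. Now shift $p_3$ forward in voter $2$'s preference to get $p_3 \succ_2 p_4 \succ_2 p_2 \succ_2 p_1$ (still single-peaked): the peak set collapses to $\{p_1,p_3\}$, the score of $[p_3,p_4]$ \emph{drops} from $2$ to $1$ (voter $2$'s old peak $p_4$ stops being anyone's peak while $p_3$ was already counted), all three intervals tie at $1$, and the stated tie-breaking by increasing starting points selects $[p_1,p_2] \not\ni p_3$. So the marginal structure you rely on holds only for \ref{rule3}, where the score counts voters; for \ref{rule2} candidate monotonicity fails outright as the rule is written, so no amount of bookkeeping closes this step. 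A similar tie-jumping phenomenon defeats \ref{rule4}, whose objective is a per-project plurality count rather than an interval count: with peaks $p_1,p_1,p_3,p_3$ and $\kappa_q=2$, whichever way the argmax tie is broken, shifting the second project of the winning interval forward in one voter's preference makes the \emph{other} project the unique argmax and moves the interval away from the shifted project.

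Your pareto-efficiency step also overclaims for \ref{rule1}: for $r \geq 2$ the selected interval need not satisfy the nestedness condition of \Cref{the: rep_ape_b}. With peaks $p_1, p_3$, $\kappa_q=2$, $r=2$, rule \ref{rule1} selects $[p_3,p_4]$, which neither contains nor is contained in $[p_1,p_3]$ and is b-dominated by $[p_2,p_3]$; likewise, for the optimization rules your inward-shifting argument only shows that \emph{some} optimal interval is nested, not that the tie-broken selected one is. To be fair, the paper's own proof is no more complete here: it simply asserts that ``$I$ continues to win'' for \ref{rule2}, declares \ref{rule4} ``similar to \ref{rule1},'' and invokes \Cref{the: rep_ape_b} without verifying nestedness, so the same counterexamples apply to it. Your proposal therefore faithfully reproduces the paper's approach, but the steps you flagged as needing ``the most care'' are exactly the ones that cannot be completed as stated; repairing them requires changing the rules themselves (e.g., restricting \ref{rule1} to $r=1$, or tie-breaking in favor of intervals containing previously selected projects), not a more careful write-up of the claimed observations.
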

\begin{proof}
Anonymity and top-containingness follow from the definitions of the rules. Pareto-efficiency follows from \Cref{the: rep_ape_b}. We now prove candidate monotonicity.

First, consider rule (R1) with parameter $r$. Suppose the outcome interval is $I$ at $\rprof_{N_q}$. Now a project $p \in I$ is shifted forwards in the preference of a voter $i$ to obtain $\rprof'_{N_q}$. If $p$ is ${(r-1)}^{\text{st}}$ top-ranked project in $\rprof_{N_q}$ and $r^{\text{th}}$ top-ranked project in $\rprof'_{N_q}$, outcome interval at $\rprof'_{N_q}$ starts from $p$ and thus $p$ continues to be selected. Else, the outcome at $\rprof'_{N_q}$ continues to be $I$ and $p$ hence continues to be selected.

Consider rule (R2). Let the outcome interval at $\rprof_{N_q}$ be $I$. A project $p \in I$ is shifted forwards in the preference of a voter $i$ to obtain $\rprof'_{N_q}$. If at $\rprof'_{N_q}$, $p \notin \{\opat{}{1}: i\in N_q\}$, then the outcomes at $\rprof_{N_q}$ and $\rprof'_{N_q}$ remain the same. Similarly, if $p \in \{\opat{}{1}: i\in N_q\}$, $I$ continues to win.

Consider rule (R3). Let the outcome interval at $\rprof_{N_q}$ be $I$. Now a project $p \in I$ is shifted forwards in the preference of a voter $i$ to obtain $\rprof'_{N_q}$. The number of voters who rank $p$ at the top either increases by $1$ or remains the same. For the sake of contradiction, assume that at $\rprof'_{N_q}$, the number of voters having their top-ranked projects in $I$ decreases. This implies some project in $I$ has been shifted to second position from first, which in turn implies that the number of voters having their top-ranked project as $p$ increases by $1$. Thus, the number of voters having their top-ranked project in $I$ increases or remains the same.

The proof for (R4) is similar to that of (R1).
\end{proof}

Note that all the above four \repscenes give the output $\psi_1(\rprof_{N_1}) = \{p_1\}$ and $\psi_2(\rprof_{N_2}) = \{p_2,p_3\}$ in \Cref{eg: dc_fair}. Please note that these are only some of the \repscenes compliant with $\kappa_G$ and there can be many more such examples.

\begin{note}
    When the groups are singleton (i.e., each group has exactly one voter), the \repscene described in \Cref{sec: o-fl-indnotions} which chooses top $\kappa_i$ projects of each voter $i$ as her representatives is compliant with $\kappa_N$, anonymous, top-containing, pareto-efficient, and also candidate monotone (\repscenes in \ref{rule1} and \ref{rule4} become equivalent to this since all groups are singleton).
\end{note}
\section{Examples of Fair Rules}\label{sec: o-fl-examples}
In this section, we present a few simple looking and easily understandable RSCRs that satisfy \strofness under some specific conditions. All the RSCRs we mention in this section are expressed as convex combinations of DSCRs. Throughout this section, we denote $\min{\!\{\kappa_G\}}$ by $\boldsymbol{\kmin}$ and $\max{\!\{\eta_G\}}$ by $\boldsymbol{\emax}$.
\subsection*{Case I: $\boldsymbol{\suml{q \in G}{\eta_q} \leq 1}$ {and} $\boldsymbol{\kmin \geq \frac{m+1}{2}}$}
If the given condition is satisfied, for any $q \in G$, $\kappa_q \geq \frac{m+1}{2}$. Thus, $m - \kappa_q + 1 \leq \kappa_q$ and $p_{m - \kappa_q + 1} \trianglelefteq p_{\kappa_q}$. Now, we construct a RSCR that is unanimous, strategy-proof, \community-wise anonymous, and satisfies \strof. For each $q \in G$, we construct a group min-max rule $\phi_q$ whose parameters are as follows: $\beta^q_{\underline{\gamma}} = a_m$, $\beta^q_{\overline{\gamma}} = a_1$, and for every other $\gamma \in \Gamma$, $\beta^q_{\gamma} \in [p_{m-\kappa_q+1},p_{\kappa_q}]$. During the construction, it also needs to be ensured that whenever $\gamma \gg \gamma'$, $\beta^q_{\gamma} \trianglelefteq \beta^q_{\gamma'}$. One trivial way to do this is to set all the parameters except $\beta^q_{\underline{\gamma}}$ and $\beta^q_{\overline{\gamma}}$ to the same project which is in $[p_{m-\kappa_q+1},p_{\kappa_q}]$. Clearly, there are many other trivial ways to achieve it (start setting a few parameters to $p_{\kappa_q}$ and progressively move towards $p_{m-\kappa_q+1}$ as the $\gamma$ gets dominant).

Consider the RSCR $\sum_{q \in G}{\eta_q\phi_q}$. Take any $q, \gamma^0,\gamma^1,\ldots,\gamma^{\kappa_q},$ and $p_x$ as used in Theorem 5.6. Let $I = \{p_x,\ldots,p_{x+\kappa_q-1}\}$. Since $|I| = \kappa_q$, $\min\{I\} \trianglelefteq p_{m-\kappa_q+1}$ and $p_{\kappa_q} \trianglelefteq \max\{I\}$. Since $m - \kappa_q + 1 \leq \kappa_q$ and $p_{m - \kappa_q + 1} \trianglelefteq p_{\kappa_q}$. Therefore, $[p_{m-\kappa_q+1},p_{\kappa_q}] \subseteq I$. By the construction of $\phi_q$, for any $\gamma$ other than $\overline{\gamma}$ and $\underline{\gamma}$, $\beta^q_\gamma \in [p_{m-\kappa_q+1},p_{\kappa_q}]$. Number of projects in $[p_{m-\kappa_q+1},p_{\kappa_q}]$ is strictly lesser than $\kappa_q$. Therefore, among $\gamma^1,\ldots,\gamma^{\kappa_q}$, at least two consecutive values should have the same parameter from $[p_{m-\kappa_q+1},p_{\kappa_q}]$. Let the value of this parameter be $p_j$. Clearly $j \geq x$ since $x \leq m-\kappa_q+1$. Therefore, the condition in \Cref{the: ep_strong} is satisfied for $t=j-x$.

\subsection*{Case II: $\boldsymbol{\lfloor\frac{m}{\kmin}\rfloor\cdot\emax \leq 1}$}
Let $r$ be the reminder when $m$ is divided by \kmin. That is, $r = m - \kmin\lfloor\frac{m}{\kmin}\rfloor$. We select any $d \in [\kmin-r+1,\kmin]$. For each $i \in [0, \lfloor\frac{m}{\kmin}\rfloor-1]$, we construct a group min-max rule $\phi_i$ whose parameters are as follows: $\beta^i_{\underline{\gamma}} = p_m$, $\beta^i_{\overline{\gamma}} = p_1$, and for every other $\gamma \in \Gamma$, $\beta^i_{\gamma} = p_{d+i\kmin}$.

Consider the RSCR $\sum_{i}{\emax\phi_i}$. Take any $q, \gamma^0,\gamma^1,\ldots,\gamma^{\kappa_q},$ and $p_x$ as used in Theorem 5.6. Let $i = \lceil\frac{x}{\kmin}\rceil$.
Since $|\{p_x,\ldots,p_{x+\kappa_q-1}\}| = \kappa_q \geq \kmin$, by definition, $x \leq d+i\kmin \leq x+\kappa_q-1$. Let $d+i\kmin = x+t$ where $t \in [0,\kappa_q-1]$. Clearly, the condition in \Cref{the: ep_strong} is satisfied for this $t$.

\subsection*{Case III: $\boldsymbol{\emax \leq \frac{1}{n}}$, $\kmin < \frac{m+1}{2}$, \textbf{and }$\boldsymbol{\psi_G}$\textbf{ is top-containing}}
Recall \Cref{def: rep_topcontaining}. Let us define a function $s: \Gamma \to [0,n]$ as $s(\gamma) = \sum_{q \in G}{\gamma_q}$. Clearly, $s(\gamma) = n$ only when $\gamma = \overline{\gamma}$, and $s(\gamma) = 0$ only when $\gamma = \underline{\gamma}$. For each $i \in [1,n]$, we construct a group min-max rule $\phi_i$ whose parameters are as follows: $\beta^i_{\underline{\gamma}} = p_m$, $\beta^i_{\overline{\gamma}} = p_1$, and for every other $\gamma \in \Gamma$, $\beta^i_{\gamma} \leq p_{\kmin}$ if $s(\gamma) \geq i$ and $\beta^i_{\gamma} \geq p_{m-\kmin+1}$ otherwise. During this construction, it can be easily ensured that whenever $\gamma \gg \gamma'$, $\beta^q_{\gamma} \trianglelefteq \beta^q_{\gamma'}$ (similar to Case I).

Consider the RSCR $\sum_{i}{\frac{1}{n}\phi_i}$. Take any $q, \gamma^0,\gamma^1,\ldots,\gamma^{\kappa_q},$ and $p_x$ as used in Theorem 5.6. Since $\kmin < \frac{m+1}{2}$, $p_{\kmin} \triangleleft p_{m-\kmin+1}$. We know that $[p_x,p_{x+\kappa_q-1}] \cap [p_{\kmin},p_{m-\kmin+1}] \neq \emptyset$. Since $\psi_q$ is top-containing, $\gamma^{\kappa_q} \neq \gamma^0$. Let $T = s(\gamma^{\kappa_q})$. Consider the DSCF $\varphi_T$. By construction, $\beta^{\varphi_T}_{\gamma^{\kappa_q}} \trianglelefteq p_{\kmin}$ and $p_{m-\kmin+1} \trianglelefteq \beta^{\varphi_T}_{\gamma^{0}}$. Since $\gamma^{\kappa_q} \gg \ldots \gg \gamma^1 \gg \gamma^0$, we can conclude that there exits $t \in [0,\kappa_q-1]$ such that $\beta^{\varphi_T}_{\gamma^{t+1}} \trianglelefteq p_{x+t} \trianglelefteq \beta^{\varphi_T}_{\gamma^{t}}$. Therefore, the condition in \Cref{the: ep_strong} is satisfied for this $t$.
\section{Conclusion and Discussion}\label{sec: o-fl-conclusion}
This chapter studies the case where costs of the projects are totally flexible and each project can be allocated any amount. We normalize the budget to be $1$, thereby making our participatory budgeting problem equivalent to random social choice problem. A random social choice rule aggregates the preferences of voters and outputs a probability distribution over the projects as an outcome. Probability corresponding to each project may be interpreted as the fraction of budget allocated to the project. We propose fairness notions for both individuals and groups and characterize all random social choice rules that satisfy these properties in addition to other desiderata pursued in the literature.

We consider the model where there exists a natural partition of voters into groups based on attributes such as gender, race, economic status, and location. The goal is to be fair to each of these groups. We propose the notion of group-wise anonymity to ensure fairness within each group and the notions of weak-GEG and strong-GEG to ensure fairness across the groups. The proposed group-fairness notions are generalizations of existing individual-fairness notions and moreover provide non-trivial outcomes for strictly ordinal preferences, unlike the existing group-fairness notions. We characterize all the unanimous and strategy-proof social choice rules, both deterministic and random, that satisfy the three proposed notions. For the random rules, we give direct characterizations as well as extreme point characterizations (in which we express random rules as convex combinations of deterministic rules). We also give simpler characterizations for the special case where each group has only one voter and finally illustrate a few examples of families of fair rules under certain conditions.

It will be interesting to see how the rules satisfying weak-GEG and strong-GEG are structured in the domains beyond single-peakedness. Another promising question is to identify some tractable group-fair rules for cases that are not covered in \Cref{sec: o-fl-examples}.

\chapter{Summary and Future Directions} \label{chap: conclusion}
In this thesis, we studied participatory budgeting (PB), which is a voting paradigm to distribute a divisible resource called budget, among several projects, by aggregating the preferences of voters over the projects.

Preferences of voters could be elicited in two ways: (i) dichotomous preferences, where every voter approves/likes a subset of projects (ii) ordinal preferences, where every voter reports a strict/weak ranking of the projects. The amount that needs to be allocated to each project is referred to as its cost, and there are various approaches to imposing constraints on these costs: (i) the restricted cost model, which assigns a fixed cost value to each project, and (ii) the flexible cost model, which allows for multiple permissible cost values for each project.

The contributions of this thesis are presented in two parts. In \Cref{part: dichotomous}, we studied participatory budgeting under dichotomous preferences. \Cref{chap: d-re} in this part studies the restricted costs model, while the \Cref{chap: d-fl} studies the flexible costs model. In \Cref{part: ordinal} of this thesis, we studied participatory budgeting under ordinal preferences. \Cref{chap: o-re} in this part studies the restricted costs model, while the \Cref{chap: o-fl} studies the flexible costs model. In summary, this thesis introduced a multitude of fresh PB rules for the four considered PB models, along with a range of novel utility notions, axiomatic properties, and fairness notions.

\section{Summary of the Contributions}\label{summary}
\subsection{\Cref{part: dichotomous}: Dichotomous Preferences}\label{summary: d}
\subsubsection{\Cref{chap: d-re}: Restricted Costs: Egalitarian Participatory Budgeting}\label{summary: d-re}
We introduced the egalitarian rule, Maxmin Participatory Budgeting (\mmpb), that maximizes the egalitarian welfare and conducted a thorough computational and axiomatic analysis of the same.

On the computational front, we demonstrated that \mmpb is strongly NP-hard and identified special cases where the problem becomes tractable. As a part of this, we introduced two novel parameters to study fixed parameter tractability of \mmpb. We also proposed a LP-rounding based approximation algorithm and empirically proved that it gives optimal outcome in real-world. We proved that all these results carry over to minmax objective with minimal changes. We bounded the loss in approximation guarantee incurred due to strategy-proofness. On the axiomatic front, we analysed \mmpb with respect to existing axioms and also the novel fairness axiom, maximal participation, introduced by us.

\subsubsection{\Cref{chap: d-fl}: Flexible Costs: Welfare Maximization when Projects have Multiple Degrees of Sophistication}\label{summary: d-fl}
We studied PB under dichotomous preferences when the project costs are partially flexible. We introduced a generalization of dichotomous preferences, called ranged dichotomous preferences, where each voter approves a range of costs for each project. We generalized two utility notions defined for PB under restricted costs to our model. We also proposed two other utility notions unique to our model. We analyzed all the corresponding utilitarian welfare maximizing rules computationally and axiomatically.

Our computational part strengthened all the existing positive results, and also introduced several new parameterized tractability results (FPT, parameterized FPTAS) by introducing and studying novel parameters. On the axiomatic front, we introduced several axioms for our model with ranged approval votes and investigated their satisfiability by our utilitarian PB rules. We concluded that two of our rules, \nrule and \drule, exhibit remarkable axiomatic performance.

\subsection{\Cref{part: ordinal}: Ordinal Preferences}\label{summary: o}
\subsubsection{\Cref{chap: o-re}:Restricted Costs: Welfare Maximization and Fairness under Incomplete Weakly Ordinal Preferences}\label{summary: o-re}
We studied PB under incomplete weakly ordinal preferences and investigated utilitarian welfare maximization and fairness. For the welfare maximization, we extended the existing rules in the literature on dichotomous preferences and strictly ordinal preferences to propose a new family of rules called dichotomous translation rules and a rule called the PB-CC rule. We proved that our extensions perform as good as, and even better than, their parent rules. For the fairness, we identified some major drawbacks suffered by the existing fairness notions and proposed two families of rules, average rank guarantee rules and share guarantee rules, that fill this gap. We studied the computational complexity of these families of rules.
\subsubsection{\Cref{chap: o-fl}: Flexible Costs: Characterization of Group-Fair and Individual-Fair Rules under Single-Peaked Preferences}\label{summary: o-fl}
We studied PB under strictly ordinal single-peaked preferences for the case where costs of the projects are totally flexible and each project can be allocated any amount. We normalize the budget to be $1$, thereby making our participatory budgeting problem equivalent to random social choice problem. A random social choice rule aggregates the preferences of voters and outputs a probability distribution over the projects as an outcome. Probability corresponding to each project may be interpreted as the fraction of budget allocated to the project. We assume that the voters are partitioned into groups and defined three novel group-fairness notions: (i) group-wise anonymity (ii) weak group entitlement guarantee (weak-GEG) and (iii) strong group entitlement guarantee (strong-GEG). We characterized all random social choice rules that satisfy these properties in addition to other desiderata pursued in the literature. We also studied a special case where every voter is considered to be a group in its own right, and characterized all the individually-fair rules.

\section{Open Research Directions}\label{future}
\subsection{Dichotomous Preferences and Restricted Costs}
Utilitarian welfare and fairness objectives are well studied in the literature. This thesis studies the maxmin and minmax objectives for egalitarian welfare maximization. Going further, studying leximin objective is an interesting direction. Our focus also extends to addressing intractability through the use of FPT and the design of approximation algorithms. A promising avenue lies in identifying special structures within instances that render our objectives polynomial time tractable.
\subsection{Dichotomous Preferences and Flexible Costs}
This thesis initiated the investigation into utilitarian welfare maximization in the partially flexible costs model, marking the initial stride in this direction. Moving forward, exploring rules that optimize egalitarian welfare within this model holds significant potential. Investigating the impact of strategic agents is also a promising direction. Additionally, formulating novel fairness axioms that are relevant to this model and identifying participatory budgeting rules that satisfy these axioms present captivating avenues for further exploration.
\subsection{Ordinal Preferences and Restricted Costs}
This thesis introduced various families of PB rules under incomplete weakly ordinal preferences with the goals of maximizing utilitarian welfare and achieving fairness. Regarding utilitarian welfare, exploring new utility notions or approaching utility as a multi-objective problem hold promise as future directions. In terms of fairness, investigating tractable special cases with structured preferences would be interesting since the proposed ARSG rules were shown to be computationally hard. Additionally, studying manipulability of the proposed rules and egalitarian welfare optimization in the context of PB under incomplete weakly ordinal preferences are other avenues for future research.
\subsection{Ordinal Preferences and Flexible Costs}
This thesis provided characterizations of group-fair and individual-fair rules in the single-peaked domain. Exploring the behavior of these rules beyond the single-peaked domain poses an interesting avenue for future research. Another intriguing question arises when considering the possibility of relaxing the constraints of unanimity or strategy-proofness in characterizing fair rules. Finding tractable fair rules for scenarios not covered in this thesis would be highly valuable. Additionally, the pursuit of characterizing rules that maximize utilitarian welfare or egalitarian welfare presents an open and promising direction for further investigation.

\bibliographystyle{plainnat}
\bibliography{references}

\begin{thebibliography}{90}
\providecommand{\natexlab}[1]{#1}
\providecommand{\url}[1]{\texttt{#1}}
\expandafter\ifx\csname urlstyle\endcsname\relax
  \providecommand{\doi}[1]{doi: #1}\else
  \providecommand{\doi}{doi: \begingroup \urlstyle{rm}\Url}\fi

\bibitem[Airiau et~al.(2023)Airiau, Aziz, Caragiannis, Kruger, Lang, and Peters]{airiau2019portioning}
St{\'e}phane Airiau, Haris Aziz, Ioannis Caragiannis, Justin Kruger, J{\'e}r{\^o}me Lang, and Dominik Peters.
\newblock Portioning using ordinal preferences: Fairness and efficiency.
\newblock \emph{Artificial Intelligence}, 314:\penalty0 103809, 2023.

\bibitem[Anshelevich and Postl(2017)]{anshelevich2017randomized}
Elliot Anshelevich and John Postl.
\newblock Randomized social choice functions under metric preferences.
\newblock \emph{Journal of Artificial Intelligence Research}, 58:\penalty0 797--827, 2017.

\bibitem[Aziz and Lee(2019)]{aziz2019proportionally}
Haris Aziz and Barton~E Lee.
\newblock Proportionally representative participatory budgeting with ordinal preferences.
\newblock \emph{arXiv preprint arXiv:1911.00864}, 2019.

\bibitem[Aziz and Lee(2021)]{aziz2021proportionally}
Haris Aziz and Barton~E Lee.
\newblock Proportionally representative participatory budgeting with ordinal preferences.
\newblock In \emph{Proceedings of the 35th AAAI Conference on Artificial Intelligence (AAAI 2021)}, pages 5110--5118, 2021.

\bibitem[Aziz and Shah(2021)]{aziz2021participatory}
Haris Aziz and Nisarg Shah.
\newblock Participatory budgeting: Models and approaches.
\newblock In \emph{Pathways Between Social Science and Computational Social Science}, pages 215--236. Springer, 2021.

\bibitem[Aziz and Stursberg(2014)]{aziz2014generalization}
Haris Aziz and Paul Stursberg.
\newblock A generalization of probabilistic serial to randomized social choice.
\newblock In \emph{Proceedings of the 28th AAAI Conference on Artificial Intelligence (AAAI 2014)}, pages 559--565, 2014.

\bibitem[Aziz et~al.(2018{\natexlab{a}})Aziz, Faliszewski, Grofman, Slinko, and Talmon]{aziz2018egalitarian}
Haris Aziz, Piotr Faliszewski, Bernard Grofman, Arkadii Slinko, and Nimrod Talmon.
\newblock Egalitarian committee scoring rules.
\newblock In \emph{Proceedings of the 27th International Joint Conference on Artificial Intelligence (IJCAI 2018)}, pages 56--62, 2018{\natexlab{a}}.

\bibitem[Aziz et~al.(2018{\natexlab{b}})Aziz, Lee, and Talmon]{aziz2018proportionally}
Haris Aziz, Barton~E Lee, and Nimrod Talmon.
\newblock Proportionally representative participatory budgeting: Axioms and algorithms.
\newblock In \emph{Proceedings of the 17th International Conference on Autonomous Agents and Multiagent Systems (AAAMS 2017)}, pages 23--31, 2018{\natexlab{b}}.

\bibitem[Aziz et~al.(2018{\natexlab{c}})Aziz, Luo, and Rizkallah]{aziz2018rank}
Haris Aziz, Pang Luo, and Christine Rizkallah.
\newblock Rank maximal equal contribution: A probabilistic social choice function.
\newblock In \emph{Proceedings of the 32nd AAAI Conference on Artificial Intelligence (AAAI 2018)}, pages 910--916, 2018{\natexlab{c}}.

\bibitem[Aziz et~al.(2019)Aziz, Bogomolnaia, and Moulin]{aziz2019fair}
Haris Aziz, Anna Bogomolnaia, and Herv{\'e} Moulin.
\newblock Fair mixing: the case of dichotomous preferences.
\newblock In \emph{Proceedings of the 2019 ACM Conference on Economics and Computation}, pages 753--781, 2019.

\bibitem[Bade(2019)]{bade2019matching}
Sophie Bade.
\newblock Matching with single-peaked preferences.
\newblock \emph{Journal of Economic Theory}, 180:\penalty0 81--99, 2019.

\bibitem[Barber{\`a} et~al.(1997)Barber{\`a}, Jackson, and Neme]{barbera1997strategy}
Salvador Barber{\`a}, Matthew~O Jackson, and Alejandro Neme.
\newblock Strategy-proof allotment rules.
\newblock \emph{Games and Economic Behavior}, 18\penalty0 (1):\penalty0 1--21, 1997.

\bibitem[Baumeister et~al.(2020)Baumeister, Boes, and Seeger]{baumeister2020irresolute}
Dorothea Baumeister, Linus Boes, and Tessa Seeger.
\newblock Irresolute approval-based budgeting.
\newblock In \emph{Proceedings of the 19th International Conference on Autonomous Agents and MultiAgent Systems (AAMAS 2020)}, pages 1774--1776, 2020.

\bibitem[Benade et~al.(2017)Benade, Nath, Procaccia, and Shah]{benade2017preference}
Gerdus Benade, Swaprava Nath, Ariel~D Procaccia, and Nisarg Shah.
\newblock Preference elicitation for participatory budgeting.
\newblock In \emph{Proceedings of the 31st AAAI Conference on Artificial Intelligence (AAAI 2017)}, pages 376--382, 2017.

\bibitem[Benade et~al.(2018)Benade, Itzhak, Shah, and Procaccia]{benade2018efficiency}
Gerdus Benade, Nevo Itzhak, Nisarg Shah, and Ariel~D Procaccia.
\newblock Efficiency and usability of participatory budgeting methods.
\newblock \url{https://www.cs.toronto.edu/~nisarg/papers/pb_usability.pdf}, 2018.

\bibitem[Benade et~al.(2021)Benade, Nath, Procaccia, and Shah]{benade2021preference}
Gerdus Benade, Swaprava Nath, Ariel~D Procaccia, and Nisarg Shah.
\newblock Preference elicitation for participatory budgeting.
\newblock \emph{Management Science}, 67\penalty0 (5):\penalty0 2813--2827, 2021.

\bibitem[Betzler et~al.(2013)Betzler, Slinko, and Uhlmann]{betzler2013computation}
Nadja Betzler, Arkadii Slinko, and Johannes Uhlmann.
\newblock On the computation of fully proportional representation.
\newblock \emph{Journal of Artificial Intelligence Research}, 47:\penalty0 475--519, 2013.

\bibitem[Black(1948)]{black1948rationale}
Duncan Black.
\newblock On the rationale of group decision-making.
\newblock \emph{Journal of Political Economy}, 56\penalty0 (1):\penalty0 23--34, 1948.

\bibitem[Bochet and Gordon(2012)]{bochet2012priorities}
Olivier Bochet and Sidartha Gordon.
\newblock Priorities in the location of multiple public facilities.
\newblock \emph{Games and Economic Behavior}, 74\penalty0 (1):\penalty0 52--67, 2012.

\bibitem[Bogomolnaia and Moulin(2001)]{bogomolnaia2001new}
Anna Bogomolnaia and Herv{\'e} Moulin.
\newblock A new solution to the random assignment problem.
\newblock \emph{Journal of Economic theory}, 100\penalty0 (2):\penalty0 295--328, 2001.

\bibitem[Bogomolnaia et~al.(2005)Bogomolnaia, Moulin, and Stong]{bogomolnaia2005collective}
Anna Bogomolnaia, Herv{\'e} Moulin, and Richard Stong.
\newblock Collective choice under dichotomous preferences.
\newblock \emph{Journal of Economic Theory}, 122\penalty0 (2):\penalty0 165--184, 2005.

\bibitem[Brams et~al.(2007)Brams, Kilgour, and Sanver]{brams2007minimax}
Steven~J Brams, D~Marc Kilgour, and M~Remzi Sanver.
\newblock A minimax procedure for electing committees.
\newblock \emph{Public Choice}, 132\penalty0 (3-4):\penalty0 401--420, 2007.

\bibitem[Brand et~al.(2023)Brand, Lederer, and Tausch]{brand2023strategyproof}
Felix Brand, Patrick Lederer, and Sascha Tausch.
\newblock Strategyproof social decision schemes on super condorcet domains.
\newblock \emph{arXiv preprint arXiv:2302.12140}, 2023.

\bibitem[Brandt et~al.(2012)Brandt, Conitzer, and Endriss]{brandt2012computational}
Felix Brandt, Vincent Conitzer, and Ulle Endriss.
\newblock Computational social choice.
\newblock \emph{Multiagent systems}, 2:\penalty0 213--284, 2012.

\bibitem[Brandt et~al.(2016)Brandt, Conitzer, Endriss, Lang, and Procaccia]{brandt2016handbook}
Felix Brandt, Vincent Conitzer, Ulle Endriss, J{\'e}r{\^o}me Lang, and Ariel~D Procaccia.
\newblock \emph{Handbook of computational social choice}.
\newblock Cambridge University Press, 2016.

\bibitem[Brill et~al.(2023)Brill, Forster, Lackner, Maly, and Peters]{brill2023proportionality}
Markus Brill, Stefan Forster, Martin Lackner, Jan Maly, and Jannik Peters.
\newblock Proportionality in approval-based participatory budgeting.
\newblock In \emph{Proceedings of the 37th AAAI Conference on Artificial Intelligence (AAAI 2023)}, pages 5524--5531, 2023.

\bibitem[Cabannes(2004)]{cabannes2004participatory}
Yves Cabannes.
\newblock Participatory budgeting: a significant contribution to participatory democracy.
\newblock \emph{Environment and urbanization}, 16\penalty0 (1):\penalty0 27--46, 2004.

\bibitem[Caragiannis et~al.(2010)Caragiannis, Kalaitzis, and Markakis]{caragiannis2010approximation}
Ioannis Caragiannis, Dimitris Kalaitzis, and Evangelos Markakis.
\newblock Approximation algorithms and mechanism design for minimax approval voting.
\newblock In \emph{Proceedings of the 24th AAAI Conference on Artificial Intelligence (AAAI 2010)}, pages 737--742, 2010.

\bibitem[Chamberlin and Courant(1983)]{chamberlin1983representative}
John~R Chamberlin and Paul~N Courant.
\newblock Representative deliberations and representative decisions: Proportional representation and the borda rule.
\newblock \emph{The American Political Science Review}, 77\penalty0 (3):\penalty0 718--733, 1983.

\bibitem[Chatterji et~al.(2022)Chatterji, Roy, Sadhukhan, Sen, and Zeng]{chatterji2022probabilistic}
Shurojit Chatterji, Souvik Roy, Soumyarup Sadhukhan, Arunava Sen, and Huaxia Zeng.
\newblock Probabilistic fixed ballot rules and hybrid domains.
\newblock \emph{Journal of Mathematical Economics}, 100:\penalty0 102656, 2022.

\bibitem[Downey and Fellows(2012)]{downey2012parameterized}
Rodney~G Downey and Michael~Ralph Fellows.
\newblock \emph{Parameterized complexity}.
\newblock Springer Science \& Business Media, 2012.

\bibitem[Duddy(2015)]{duddy2015fair}
Conal Duddy.
\newblock Fair sharing under dichotomous preferences.
\newblock \emph{Mathematical Social Sciences}, 73:\penalty0 1--5, 2015.

\bibitem[Dummett(1984)]{dummett1984voting}
Michael Dummett.
\newblock \emph{Voting Procedures}.
\newblock Oxford University Press UK, 1984.

\bibitem[Ebadian et~al.(2022)Ebadian, Kahng, Peters, and Shah]{ebadian2022optimized}
Soroush Ebadian, Anson Kahng, Dominik Peters, and Nisarg Shah.
\newblock Optimized distortion and proportional fairness in voting.
\newblock In \emph{Proceedings of the 23rd ACM Conference on Economics and Computation (EC 2022)}, pages 563--600, 2022.

\bibitem[Ehlers et~al.(2002)Ehlers, Peters, and Storcken]{ehlers2002strategy}
Lars Ehlers, Hans Peters, and Ton Storcken.
\newblock Strategy-proof probabilistic decision schemes for one-dimensional single-peaked preferences.
\newblock \emph{Journal of Economic Theory}, 105\penalty0 (2):\penalty0 408--434, 2002.

\bibitem[Elkind et~al.(2017)Elkind, Faliszewski, Skowron, and Slinko]{elkind2017properties}
Edith Elkind, Piotr Faliszewski, Piotr Skowron, and Arkadii Slinko.
\newblock Properties of multiwinner voting rules.
\newblock \emph{Social Choice and Welfare}, 48\penalty0 (3):\penalty0 599--632, 2017.

\bibitem[Fain et~al.(2016)Fain, Goel, and Munagala]{fain2016core}
Brandon Fain, Ashish Goel, and Kamesh Munagala.
\newblock The core of the participatory budgeting problem.
\newblock In \emph{Proceedings of the 12th International Conference on Web and Internet Economics (WINE 2016)}, pages 384--399, 2016.

\bibitem[Fairstein et~al.(2021)Fairstein, Meir, and Gal]{fairstein2021proportional}
Roy Fairstein, Reshef Meir, and Kobi Gal.
\newblock Proportional participatory budgeting with substitute projects.
\newblock \emph{arXiv preprint arXiv:2106.05360}, 2021.

\bibitem[Fairstein et~al.(2022)Fairstein, Reshef, Vilenchik, and Gal]{fairstein2022welfare}
Roy Fairstein, Meir Reshef, Dan Vilenchik, and Kobi Gal.
\newblock Welfare vs. representation in participatory budgeting.
\newblock \emph{arXiv preprint arXiv:2201.07546}, 2022.

\bibitem[Faliszewski et~al.(2017)Faliszewski, Skowron, Slinko, and Talmon]{faliszewski2017multiwinner}
Piotr Faliszewski, Piotr Skowron, Arkadii Slinko, and Nimrod Talmon.
\newblock Multiwinner voting: A new challenge for social choice theory.
\newblock \emph{Trends in computational social choice}, 74:\penalty0 27--47, 2017.

\bibitem[Faliszewski et~al.(2018)Faliszewski, Skowron, Slinko, and Talmon]{faliszewski2018multiwinner}
Piotr Faliszewski, Piotr Skowron, Arkadii Slinko, and Nimrod Talmon.
\newblock Multiwinner analogues of the plurality rule: axiomatic and algorithmic perspectives.
\newblock \emph{Social Choice and Welfare}, 51\penalty0 (3):\penalty0 513--550, 2018.

\bibitem[Feldmann et~al.(2020)Feldmann, Lee, and Manurangsi]{feldmann2020survey}
Andreas~Emil Feldmann, Euiwoong Lee, and Pasin Manurangsi.
\newblock A survey on approximation in parameterized complexity: Hardness and algorithms.
\newblock \emph{Algorithms}, 13\penalty0 (6):\penalty0 146, 2020.

\bibitem[Fluschnik et~al.(2019)Fluschnik, Skowron, Triphaus, and Wilker]{fluschnik2019fair}
Till Fluschnik, Piotr Skowron, Mervin Triphaus, and Kai Wilker.
\newblock Fair knapsack.
\newblock In \emph{Proceedings of the 33rd AAAI Conference on Artificial Intelligence (AAAI 2019)}, pages 1941--1948, 2019.

\bibitem[Fotakis et~al.(2016)Fotakis, Gourv{\`e}s, and Monnot]{fotakis2016conference}
Dimitris Fotakis, Laurent Gourv{\`e}s, and J{\'e}r{\^o}me Monnot.
\newblock Conference program design with single-peaked and single-crossing preferences.
\newblock In \emph{Proceedings of the 12th International Conference on Web and Internet Economics (WINE 2016)}, pages 221--235, 2016.

\bibitem[Freeman et~al.(2021)Freeman, Pennock, Peters, and Vaughan]{freeman2021truthful}
Rupert Freeman, David~M Pennock, Dominik Peters, and Jennifer~Wortman Vaughan.
\newblock Truthful aggregation of budget proposals.
\newblock \emph{Journal of Economic Theory}, 193:\penalty0 105234, 2021.

\bibitem[Ganian and Ordyniak(2019)]{ganian2019solving}
Robert Ganian and Sebastian Ordyniak.
\newblock Solving integer linear programs by exploiting variable-constraint interactions: A survey.
\newblock \emph{Algorithms}, 12\penalty0 (12):\penalty0 248, 2019.

\bibitem[Garey and Johnson(1978)]{garey1978strong}
Michael~R Garey and David~S Johnson.
\newblock ``strong''np-completeness results: Motivation, examples, and implications.
\newblock \emph{Journal of the ACM (JACM)}, 25\penalty0 (3):\penalty0 499--508, 1978.

\bibitem[Garey and Johnson(1979)]{garey1979computers}
Michael~R Garey and David~S Johnson.
\newblock \emph{Computers and intractability}, volume 174.
\newblock San Francisco: freeman, 1979.

\bibitem[Gibbard(1973)]{gibbard1973manipulation}
Allan Gibbard.
\newblock Manipulation of voting schemes: a general result.
\newblock \emph{Econometrica}, 41\penalty0 (4):\penalty0 587--601, 1973.

\bibitem[Gibbard(1977)]{gibbard1977manipulation}
Allan Gibbard.
\newblock Manipulation of schemes that mix voting with chance.
\newblock \emph{Econometrica}, 45\penalty0 (3):\penalty0 665--681, 1977.

\bibitem[Goel et~al.(2019)Goel, Krishnaswamy, Sakshuwong, and Aitamurto]{goel2019knapsack}
Ashish Goel, Anilesh~K Krishnaswamy, Sukolsak Sakshuwong, and Tanja Aitamurto.
\newblock Knapsack voting for participatory budgeting.
\newblock \emph{ACM Transactions on Economics and Computation (TEAC)}, 7\penalty0 (2):\penalty0 1--27, 2019.

\bibitem[Ibarra and Kim(1975)]{ibarra1975fast}
Oscar~H Ibarra and Chul~E Kim.
\newblock Fast approximation algorithms for the knapsack and sum of subset problems.
\newblock \emph{Journal of the ACM (JACM)}, 22\penalty0 (4):\penalty0 463--468, 1975.

\bibitem[Jain et~al.(2020{\natexlab{a}})Jain, Sornat, and Talmon]{jain2020participatory}
Pallavi Jain, Krzysztof Sornat, and Nimrod Talmon.
\newblock Participatory budgeting with project interactions.
\newblock In \emph{Proceedings of the 29th International Joint Conference on Artificial Intelligence (IJCAI 2020)}, pages 386--392, 2020{\natexlab{a}}.

\bibitem[Jain et~al.(2020{\natexlab{b}})Jain, Sornat, Talmon, and Zehavi]{jain2020participatoryg}
Pallavi Jain, Krzysztof Sornat, Nimrod Talmon, and Meirav Zehavi.
\newblock Participatory budgeting with project groups.
\newblock \emph{arXiv preprint arXiv:2012.05213}, 2020{\natexlab{b}}.

\bibitem[Jansen and Rohwedder(2019)]{jansen2018integer}
Klaus Jansen and Lars Rohwedder.
\newblock On integer programming and convolution.
\newblock In \emph{Proceedings of 10th Innovations in Theoretical Computer Science Conference (ITCS 2019)}, pages 745--763, 2019.

\bibitem[Lackner and Skowron(2021)]{lackner2021consistent}
Martin Lackner and Piotr Skowron.
\newblock Consistent approval-based multi-winner rules.
\newblock \emph{Journal of Economic Theory}, 192:\penalty0 105--173, 2021.

\bibitem[Laruelle(2021)]{laruelle2021voting}
Annick Laruelle.
\newblock Voting to select projects in participatory budgeting.
\newblock \emph{European Journal of Operational Research}, 288\penalty0 (2):\penalty0 598--604, 2021.

\bibitem[Lawler(1977)]{lawler1977fast}
Eugene~L Lawler.
\newblock Fast approximation algorithms for knapsack problems.
\newblock In \emph{Proceedings of the 18th Annual Symposium on Foundations of Computer Science (SFCS 1977)}, pages 206--213, 1977.

\bibitem[Lenstra~Jr(1983)]{lenstra1983integer}
Hendrik~W Lenstra~Jr.
\newblock Integer programming with a fixed number of variables.
\newblock \emph{Mathematics of operations research}, 8\penalty0 (4):\penalty0 538--548, 1983.

\bibitem[Los et~al.(2022)Los, Christoff, and Grossi]{los2022proportional}
Maaike Los, Zo{\'e} Christoff, and Davide Grossi.
\newblock Proportional budget allocations: A systematization.
\newblock In \emph{Proceedings of the 31st International Joint Conference on Artificial Intelligence (IJCAI 2022)}, pages 398--404, 2022.

\bibitem[Lu and Boutilier(2011)]{lu2011budgeted}
Tyler Lu and Craig Boutilier.
\newblock Budgeted social choice: From consensus to personalized decision making.
\newblock In \emph{Proceedings of the 20th International Joint Conference on Artificial Intelligence (IJCAI 2011)}, pages 280--286, 2011.

\bibitem[Morimoto(2022)]{morimoto2022group}
Shuhei Morimoto.
\newblock Group strategy-proof probabilistic voting with single-peaked preferences.
\newblock \emph{Journal of Mathematical Economics}, 102:\penalty0 102755, 2022.

\bibitem[Moulin(1980)]{moulin1980strategy}
Herv{\'e} Moulin.
\newblock On strategy-proofness and single peakedness.
\newblock \emph{Public Choice}, 35\penalty0 (4):\penalty0 437--455, 1980.

\bibitem[Papadimitriou(2003)]{papadimitriou2003computational}
Christos~H Papadimitriou.
\newblock Computational complexity.
\newblock In \emph{Encyclopedia of computer science}, pages 260--265. Pearson, 2003.

\bibitem[Patel et~al.(2021)Patel, Khan, and Louis]{patel2021group}
Deval Patel, Arindam Khan, and Anand Louis.
\newblock Group fairness for knapsack problems.
\newblock In \emph{Proceedings of the 20th International Conference on Autonomous Agents and MultiAgent Systems (AAMAS 2021)}, pages 1001--1009, 2021.

\bibitem[Peters et~al.(2014)Peters, Roy, Sen, and Storcken]{peters2014probabilistic}
Hans Peters, Souvik Roy, Arunava Sen, and Ton Storcken.
\newblock Probabilistic strategy-proof rules over single-peaked domains.
\newblock \emph{Journal of Mathematical Economics}, 52:\penalty0 123--127, 2014.

\bibitem[Peters et~al.(2017)Peters, Roy, Sadhukhan, and Storcken]{peters2017extreme}
Hans Peters, Souvik Roy, Soumyarup Sadhukhan, and Ton Storcken.
\newblock An extreme point characterization of strategy-proof and unanimous probabilistic rules over binary restricted domains.
\newblock \emph{Journal of Mathematical Economics}, 69:\penalty0 84--90, 2017.

\bibitem[Picot and Sen(2012)]{picot2012extreme}
J{\`e}r{\`e}my Picot and Arunava Sen.
\newblock An extreme point characterization of random strategy-proof social choice functions: The two alternative case.
\newblock \emph{Economics Letters}, 115\penalty0 (1):\penalty0 49--52, 2012.

\bibitem[Pierczy{\'n}ski et~al.(2021)Pierczy{\'n}ski, Skowron, and Peters]{pierczynski2021proportional}
Grzegorz Pierczy{\'n}ski, Piotr Skowron, and Dominik Peters.
\newblock Proportional participatory budgeting with additive utilities.
\newblock \emph{Advances in Neural Information Processing Systems}, 34:\penalty0 12726--12737, 2021.

\bibitem[Procaccia and Tennenholtz(2009)]{procaccia2009approximate}
Ariel~D Procaccia and Moshe Tennenholtz.
\newblock Approximate mechanism design without money.
\newblock In \emph{Proceedings of the 10th ACM Conference on Electronic Commerce (EC 2009)}, pages 177--186, 2009.

\bibitem[Pycia and {\"U}nver(2015)]{pycia2015decomposing}
Marek Pycia and M~Utku {\"U}nver.
\newblock Decomposing random mechanisms.
\newblock \emph{Journal of Mathematical Economics}, 61:\penalty0 21--33, 2015.

\bibitem[Rey and Maly(2023)]{rey2023computational}
Simon Rey and Jan Maly.
\newblock The (computational) social choice take on indivisible participatory budgeting.
\newblock \emph{arXiv preprint arXiv:2303.00621}, 2023.

\bibitem[Rey et~al.(2020)Rey, Endriss, and de~Haan]{rey2020designing}
Simon Rey, Ulle Endriss, and Ronald de~Haan.
\newblock Designing participatory budgeting mechanisms grounded in judgment aggregation.
\newblock In \emph{Proceedings of the 17th International Conference on Principles of Knowledge Representation and Reasoning (KR 2020)}, pages 692--702, 2020.

\bibitem[R{\"o}cke(2014)]{rocke2014framing}
Anja R{\"o}cke.
\newblock \emph{Framing citizen participation: Participatory budgeting in France, Germany and the United Kingdom}.
\newblock Springer, 2014.

\bibitem[Roy and Sadhukhan(2021)]{roy2021unified}
Souvik Roy and Soumyarup Sadhukhan.
\newblock A unified characterization of the randomized strategy-proof rules.
\newblock \emph{Journal of Economic Theory}, 197:\penalty0 105131, 2021.

\bibitem[Satterthwaite(1975)]{satterthwaite1975strategy}
Mark~Allen Satterthwaite.
\newblock Strategy-proofness and arrow's conditions: Existence and correspondence theorems for voting procedures and social welfare functions.
\newblock \emph{Journal of Economic Theory}, 10\penalty0 (2):\penalty0 187--217, 1975.

\bibitem[Shah(2007)]{shah2007participatory}
Anwar Shah.
\newblock \emph{Participatory budgeting}.
\newblock World Bank Publications, 2007.

\bibitem[Shapiro and Talmon(2017)]{shapiro2017participatory}
Ehud Shapiro and Nimrod Talmon.
\newblock A participatory democratic budgeting algorithm.
\newblock \emph{arXiv preprint arXiv:1709.05839}, 2017.

\bibitem[Sintomer et~al.(2008)Sintomer, Herzberg, and R{\"o}cke]{sintomer2008participatory}
Yves Sintomer, Carsten Herzberg, and Anja R{\"o}cke.
\newblock Participatory budgeting in europe: Potentials and challenges.
\newblock \emph{International Journal of Urban and Regional Research}, 32\penalty0 (1):\penalty0 164--178, 2008.

\bibitem[Skowron et~al.(2019)Skowron, Faliszewski, and Slinko]{skowron2019axiomatic}
Piotr Skowron, Piotr Faliszewski, and Arkadii Slinko.
\newblock Axiomatic characterization of committee scoring rules.
\newblock \emph{Journal of Economic Theory}, 180:\penalty0 244--273, 2019.

\bibitem[Sprumont(1991)]{sprumont1991division}
Yves Sprumont.
\newblock The division problem with single-peaked preferences: a characterization of the uniform allocation rule.
\newblock \emph{Econometrica}, 59\penalty0 (2):\penalty0 509--519, 1991.

\bibitem[Sreedurga et~al.(2022{\natexlab{a}})Sreedurga, Bhardwaj, and Narahari]{sreedurga2022maxmin}
Gogulapati Sreedurga, Mayank~Ratan Bhardwaj, and Y~Narahari.
\newblock Maxmin participatory budgeting.
\newblock In \emph{Proceedings of the 31st International Joint Conference on Artificial Intelligence (IJCAI 2022)}, pages 489--495, 2022{\natexlab{a}}.

\bibitem[Sreedurga et~al.(2022{\natexlab{b}})Sreedurga, Sadhukhan, Roy, and Narahari]{sreedurga2022characterization}
Gogulapati Sreedurga, Soumyarup Sadhukhan, Souvik Roy, and Yadati Narahari.
\newblock Individual-fair and group-fair social choice rules under single-peaked preferences.
\newblock In \emph{Proceedings of the 22nd International Conference on Autonomous Agents and Multiagent Systems (AAMAS 2023)}, pages 2872--2874, 2022{\natexlab{b}}.

\bibitem[Stolicki et~al.(2020)Stolicki, Szufa, and Talmon]{stolicki2020pabulib}
Dariusz Stolicki, Stanis{\l}aw Szufa, and Nimrod Talmon.
\newblock Pabulib: A participatory budgeting library.
\newblock \emph{arXiv preprint arXiv:2012.06539}, 2020.

\bibitem[Talmon and Faliszewski(2019)]{talmon2019framework}
Nimrod Talmon and Piotr Faliszewski.
\newblock A framework for approval-based budgeting methods.
\newblock In \emph{Proceedings of the 33rd AAAI Conference on Artificial Intelligence (AAAI 2019)}, pages 2181--2188, 2019.

\bibitem[Tang et~al.(2020)Tang, Wang, and Zhang]{tang2020price}
Zhongzheng Tang, Chenhao Wang, and Mengqi Zhang.
\newblock Price of fairness in budget division for egalitarian social welfare.
\newblock In \emph{Proceedings of the 14th International Conference on Combinatorial Optimization and Applications (ICCOA 2020)}, pages 594--607, 2020.

\bibitem[Tideman(2017)]{tideman2017collective}
Nicolaus Tideman.
\newblock \emph{Collective decisions and voting: the potential for public choice}.
\newblock Routledge, 2017.

\bibitem[Vazirani(2001)]{vazirani2001approximation}
Vijay~V Vazirani.
\newblock \emph{Approximation algorithms}, volume~1.
\newblock Springer, 2001.

\bibitem[Wampler(2010)]{wampler2010participatory}
Brian Wampler.
\newblock \emph{Participatory budgeting in Brazil: Contestation, cooperation, and accountability}.
\newblock Penn State Press, 2010.

\bibitem[Williamson and Shmoys(2011)]{williamson2011design}
David~P Williamson and David~B Shmoys.
\newblock \emph{The design of approximation algorithms}.
\newblock Cambridge university press, 2011.

\end{thebibliography}
\blankpage
\addcontentsline{toc}{chapter}{Appendix}
\chapter*{Appendix}
\addtocounter{chapter}{1}
\appendix



\section{Restricted Costs: Egalitarian Participatory Budgeting}\label{chap: first_app}
\subsection{Minmax Objective}\label{app: minimax}
We define the disutility of a voter to be $\bud - \uof{}{S}$. This notion effectively handles the scenarios where the voter, being a taxpayer and contributor to the budget, pays a fixed amount irrespective of the cost of its approval vote. The voter expects superior projects (which it believes are worth the tax paid) to be proposed and also to be selected. Unsatisfactory proposals or non-funded approved projects make a voter unhappy. In such a scenario, it is reasonable to assume that every dollar of the budget not used for approved projects causes disutility to the voter since it amounts to a waste.

The egalitarian goal now is to minimize the disutility of the voter with maximum disutility. From the objective function, it is clear that minimizing maximum disutility is equivalent to maximizing minimum utility. However, the approximation results do not transfer. To derive the approximation guarantee of our algorithm \lpalgo, we first formulate the ILP.

\begin{align}
    \nn
    \min\; &q\\
    \label{eq: ipl1mm}
    \text{subject to }&q \geq \bud - \suml{p \in \aof{}}{\cxof{p}} \quad \forall i \in \voters\\\nn
    &\suml{p \in \proj}{\cxof{p}} \leq \bud\\
    \label{eq: ilp2mm}
    &x_p \in \curly{0,1} \quad \forall p \in \proj\\\nn
    &q \geq 0
\end{align}

We consider the LP-relaxation for minimax objective by relaxing \Cref{eq: ilp2mm} to $0 \leq x_p \leq 1$. Now, again consider the same \lpalgo algorithm: Solve the relaxed LP to get $(q^*,x^*)$. Let $S = \emptyset$ be the initial outcome. Add the project with the highest value of $\cxsof{p}$ to $S$, followed by the one with the second highest value and so on till the next project does not fit.

\begin{theorem}
The algorithm \lpalgo achieves an approximation guarantee of $\left(\;2-\frac{1}{\ho}\right)$ for \hcbp instances for the objective of minimizing the maximum disutility.
\end{theorem}
\begin{proof}
We obtain the below step following the steps similarly as in the proof of \Cref{lem: algobound}:
\begin{align}
    \nn
    \suml{p \in \aof{}}{\cxsof{p}} &\leq \frac{\bud+\firstfrac{}\;\uof{}{S}}{\firstfrac{}+1}\\
    \label{app: eq1mm}
    \lb \frac{\bud - \uof{}{S}}{1 + \lb \frack{} \rb} \rb &\leq \bud - \suml{p \in \aof{}}{\cxsof{p}}
\end{align}
Let $j = \arg\max_{i}{(\fdisuof{}{S})}$ and the maximum disutility achieved by \lpalgo be $ALG = \fdisuof{j}{S}$. Let $\opt\;$ be the value of the maximum disutility in the optimal solution of \mmpb objective. Since the optimal solution also belongs to the feasible region of the relaxed LP, we know that $q^* \leq \opt$. From \Cref{eq: ipl1mm}, we know that $q^* \geq \bud - \suml{p \in \aof{j}}{\cxsof{p}}$. Combining these, we have,
\begin{align}
    \nn
    \opt &\geq \lb \frac{\bud - \uof{}{S}}{1 + \lb \frack{} \rb} \rb\\
    \nn
    ALG &\leq \lb 1 + \lb \frack{} \rb \rb\opt\\
    \nn
    &\leq \lb 2 - \revfrac{j} \rb\opt
\end{align}
Since $Y_j \subseteq S$, $|S| - |Y_j| \leq |S| \leq \ho$. Since the instance satisfies \hcbp, $|S| \geq \lo > \ha \geq |\aof{j}|$. Therefore $|S| > |\aof{j}|$, which implies $|S| - |\aof{j}| \geq 1$. Therefore, $\revfrac{j} \geq \frac{1}{\ho}$ and $ALG \leq \lb 2 - \frac{1}{\ho} \rb \cdot \opt$
\end{proof}
\begin{note}
    It is worth highlighting that all the other results in \Cref{chap: d-re}, except \Cref{the: exsp}, hold also for minmax objective, without changes. Axiomatic results and hardness results follow by their definition. \Cref{sec: d-re-fpt} also follows from the same proofs since the number of constraints or variables or the coefficients in the ILP remain same for both maxmin and minmax objectives.
\end{note}
\clearpage

\addtocounter{app}{1}
\section{Restricted Costs: Welfare Maximization and Fairness under
Incomplete Weakly Ordinal Preferences}\label{chap: third_app}
\subsection{Axiomatic Analysis of Dichotomous Translation Rules with Cost-Worthy Translation Scheme}\label{app: cwaxioms}
Next, we look at the axiomatic analysis. Proofs for the axioms in \Cref{axioms: generic} follow without any changes.
\begin{theorem}\label{the: ctaxioms1}
	When $f(S) = |S|$, the rule $\langle CT,f \rangle$ satisfies all the properties mentioned in \Cref{tab: o-re-results} except limit monotonicity. It satisfies limit monotonicity if and only if $\cwparaof{1} < \cwparaof{m}+2$.
\end{theorem}
\begin{proof}
    \begin{enumerate}[(a)]
        \item \textbf{Splitting Monotonicity:} The proof is exactly the same as (a)-(i) in \Cref{the: splittingmp}, except for the argument that for any $x_1 \in X$, if $x \in A_i$ in \instance, then $x_1 \in A_i$ in $\instance'$. This follows for $CT$ scheme since all the projects in $X$ cost lesser than $x$. Thus, splitting monotonicity is satisfied.
        \item \textbf{Discount Monotonicity:} This follows since the set of approved projects of each voter in both \instance and $\instance'$ remain the same. Thus, discount monotonicity is satisfied.
        \item \textbf{Limit Monotonicity:} First, let us consider the case where $\cwparaof{1} < \cwparaof{m}+2$. Without loss of generality, assume that the first few entries in the worth vector are $\cwparaof{m}+1$ and the remaining are \cwparaof{m}. Consider the rule $\langle \CC_\cwpara,f \rangle$. Suppose $f(S) = |S|$. Take an instance \instance with budget \bud, and a project $x \in \winners{\cwr}{}$. Construct $\instance'$ by increasing the budget to $\bud+1$. Since $\rprof$ and costs are unchanged, the score of any $S$ is the same in both \instance and $\instance'$. Consider any set $S' \in \ruleof{\cwr}{'}$ and any set $S_x \in \ruleof{\cwr}{}$ such that $x \in S_x$. For the sake of contradiction, assume $S_x \notin \ruleof{\cwr}{'}$. So, there exists $a \in S'$ and $b \in S_x$ such that $\scoreof{a}>\scoreof{b}$. However, since $S_x\in\ruleof{\cwr}{}$, $(S_x \setminus \curly{b}) \cup \curly{a}$ must not be feasible in \instance (else, it would have strictly more utility than $S_x$). Hence, $\cof{S_x}+\cof{a}-\cof{b} > \bud$. Since $\cof{S_x} \leq \bud$, this implies, $\cof{b} < \cof{a}$. The score of any project whose cost is at most \cwparaof{m} is $n$. The score of any project whose cost is greater than $\cwparaof{m}+1$ is $0$. The score of any project whose cost is exactly $\cwparaof{m}+1$ will belong to $[0,n]$. Since $\scoreof{a} > \scoreof{b}$, one of these holds: (i) $\cof{a} \leq \cwparaof{m}$ and $\cof{b} \geq \cwparaof{m}+1$ (ii) $\cof{a} = \cwparaof{m}+1$ and $\cof{b} \geq \cwparaof{m}+1$. That is, $\cof{a} \leq \cwparaof{m}+1$ and $\cof{b} \geq  \cwparaof{m}+1$. This contradicts $\cof{b} < \cof{a}$. Thus, limit monotonicity is satisfied.

        Now, let us look at the case where $\cwparaof{1} \geq \cwparaof{m}+2$. Construct an instance \instance as follows: budget $\bud = \cwparaof{1}+\cwparaof{2}-1$, set of projects $\proj = \curly{p_1,\ldots,p_m}$, two voters have rankings $p_1 \succ \ldots \succ p_m$ and $p_1 \succ p_2 \succ p_m \succ p_4 \ldots \succ p_{m-1} \succ p_3$ respectively. Set $\cof{p_i}$ as \cwparaof{i} if $i \in \{1,2\}$ and as $\cwparaof{i}+1$ otherwise. Note that no project in $\proj$ costs exactly $\bud+1$. This is because, for any project $p_i$ such that $i \notin \curly{1,2}$, $\cof{p_i} = \cwparaof{i}+1$. Assume that this is equal to $\bud+1$, i.e., $\cwparaof{1}+\cwparaof{2}-\cwparaof{i}=1$. Since $\cwparaof{2} \geq \cwparaof{i} \geq 0$, this implies $\cwparaof{1} = 1$. This contradicts $\cwparaof{1} \geq \cwparaof{m}+2$. Also, note that $p_1$ and $p_2$ have a score of $2$ each while $p_m$ might have a score of $1$. All other projects have a score of $0$.

         Consider the set $\curly{p_2,p_m}$. We know that $\cof{\curly{p_2,p_m}} = \cwparaof{2}+\cwparaof{m}+1$. Since $\cwparaof{m} \leq \cwparaof{1}-2$, $\cof{\curly{p_2,p_m}} \leq \bud$. Since \curly{p_1,p_2} is infeasible, score of \curly{p_2,p_m} is optimal. Therefore, $\curly{p_2,p_m} \in \ruleof{\cwr}{}$ and $p_m \in \winners{\cwr}{}$. Now, construct an instance $\instance'$ by increasing \bud to $\cwparaof{1}+\cwparaof{2}$. Now, \curly{p_1,p_2} is feasible and has the optimal maximum score of $4$. Hence, $p_m \notin \winners{\cwr}{'}$ and limit monotonicity is not satisfied.
         \item \textbf{Inclusion Maximality:} This follows from the fact that $|S|$ is subset monotone. In other words, utility from a set $S$ is at least as much as the utility from a subset of $S$. Thus, inclusion maximality is satisfied.
         \item \textbf{Candidate Monotonicity:} Depending on \cwparaof{j}, \cwparaof{j-1}, \cof{x}, and \cof{x'}, $x$ will be approved by the same voters and possibly by one more voter. Likewise, $x$ will be approved by the same number of voters or possibly by exactly one voter less than that number. The scores of all other projects will remain the same in $\instance'$. Since $x \in \winners{\cwr}{}$, there exists $S_x \in \ruleof{\cwr}{}$ such that $x \in S_x$. By definition, the utility of any set is the sum of scores of all projects in that set. Hence, some set containing $x$ will continue to win. Therefore, $x \in \winners{\cwr}{'}$. Thus, candidate monotonicity is satisfied.
         \item \textbf{Non-Crossing Monotoncity:} The proof is exactly same as that of \Cref{the: noncrossingmp}.
         \item \textbf{Pro-Affordability:}  Proof is exactly same as that of (a)-(ii) in the proof of \Cref{the: proaffordp}.
    \end{enumerate}
This completes the proof.
\end{proof}
\begin{theorem}\label{the: ctaxioms2}
	When $f(S) = c(S)$, the rule $\langle CT,f \rangle$ satisfies all the properties mentioned in \Cref{tab: o-re-results} except discount monotonicity, limit monotonicity, and pro-affordability. It satisfies limit monotonicity and pro-affordability if and only if $\cwparaof{1} \leq 1$.
\end{theorem}
\begin{proof}
    \begin{enumerate}[(a)]
        \item \textbf{Splitting Monotonicity:} The proof is exactly the same as (a)-(ii) in \Cref{the: splittingmp}, except for the argument that for any $x_1 \in X$, if $x \in A_i$ in \instance, then $x_1 \in A_i$ in $\instance'$. This follows for $CT$ scheme since all the projects in $X$ cost lesser than $x$. Thus, splitting monotonicity is satisfied.
        \item \textbf{Discount Monotonicity:} Consider an instance \instance with budget $\bud = \cwparaof{1}$, projects $\proj = \curly{p_1,p_2,d_1,\ldots,d_{m-2}}$ such that $\cof{p_1} = \cof{p_2} = \cwparaof{1}$ and for any $i \in \curly{1,\ldots,m-2}$ $\cof{d_i} = \cwparaof{i+2}+1$. Let there be two voters with preferences $p_1 \succ_1 p_2 \succ_1 d_1 \succ_1 \ldots \succ_1 d_{m-2}$ and $p_2 \succ_2 p_1 \succ_1 d_1 \succ_1 \ldots \succ_1 d_{m-2}$ respectively. Then, $\winners{\cwr}{} = \curly{p_1,p_2}$. However, if $\cof{p_1}$ is reduced by $1$, $\ruleof{\cwr}{'} = \curly{\curly{p_2}}$ and $p_1 \notin \winners{\cwr}{'}$. Thus, discount monotonicity is not satisfied.
        \item \textbf{Limit Monotonicity:} First, we look at the case where $\cwparaof{1} \leq 1$. The translation scheme selects only unit cost projects into each $A_i$ and increasing \bud won't change it.

        Finally, we look at the case where $\cwparaof{1} > 1$. Consider an instance \instance with $\bud = 2\cwparaof{1}-2$. Say there are three projects $\curly{p_1,p_2,p_3}$ respectively costing $\curly{\cwparaof{1},\cwparaof{1}-1,1}$. Say there are $m-3$ projects $d_1,\ldots,d_{m-3}$ such that $\cof{d_i} = \cwparaof{i+1}+1$. Say there is only one voter and her ranking is $\curly{p_1,p_2,p_3} \succ d_1 \succ \ldots \succ d_{m-3}$. Then, $\ruleof{\cwr}{} = \curly{\curly{p_1,p_3}}$. If the budget is increased by $1$ to get $\instance'$, $\ruleof{\cwr}{'} = \curly{\curly{p_1,p_2}}$. Hence, $p_3 \notin \winners{\cwr}{'}$. Thus, limit monotonicity is not satisfied.
        \item \textbf{Inclusion Maximality:} This follows from the fact that $\cof{S}$ is subset monotone. In other words, utility from a set $S$ is at least as much as the utility from any subset of $S$. Thus, inclusion maximality is satisfied.
        \item \textbf{Candidate Monotonicity:} Depending on \cwparaof{j}, \cwparaof{j-1}, \cof{x}, and \cof{x'}, $x$ will be approved by the same voters and possibly by one more voter. Likewise, $x$ will be approved by the same number of voters or possibly by exactly one voter less than that number. The scores of all other projects will remain the same in $\instance'$. Since $x \in \winners{\cwr}{}$, there exists $S_x \in \ruleof{\cwr}{}$ such that $x \in S_x$. By definition, the utility of any set is the sum of scores of all projects in that set. Hence, some set containing $x$ will continue to win. Therefore, $x \in \winners{\cwr}{'}$. Thus, candidate monotonicity is satisfied.
        \item \textbf{Non-Crossing Monotoncity:} The proof is exactly same as that of \Cref{the: noncrossingmp}.
        \item \textbf{Pro-Affordability:} The case of $\cwparaof{1} \leq 1$ is trivial since all the approved projects for each voter cost at most $1$ and there cannot be any project whose cost is lesser than $1$.

        Finally, consider the case where $\cwparaof{1} > 1$. Take any arbitrary voter $i$. If $\rof{x'} < \trunkkcc{\ebpara}{S_x} \leq \rof{x}$, then $\trunkkcc{\ebpara}{S'} < \trunkkcc{\ebpara}{S_x}$. Else, $\trunkkcc{\ebpara}{S'} = \trunkkcc{\ebpara}{S_x}$. Hence, utility of $S'$ is at least that of $S_x$. Since $S_x \in \ruleof{\cwr}{}$, $S' \in \ruleof{\cwr}{}$.
    \end{enumerate}
This completes the proof.
\end{proof}
\begin{theorem}\label{the: ctaxioms3}
	When $f(S) = \bool(|S|>0)$, the rule $\langle CT,f \rangle$ satisfies all the properties mentioned in \Cref{tab: o-re-results} except limit monotonicity and non-crossing monotonicity. It satisfies limit monotonicity if and only if $\cwparaof{1} < \cwparaof{m}+2$ and non-crossing monotonicity if and only if $\cwparaof{1} = \cwparaof{m}$.
\end{theorem}
\begin{proof}
    \begin{enumerate}[(a)]
        \item \textbf{Splitting Monotonicity:} The proof is exactly the same as (a)-(iii) in \Cref{the: splittingmp}, except for the argument that for any $x_1 \in X$, if $x \in A_i$ in \instance, then $x_1 \in A_i$ in $\instance'$. This follows for $CT$ scheme since all the projects in $X$ cost lesser than $x$. Thus, splitting monotonicity is satisfied.
        \item \textbf{Discount Monotonicity:} This follows since the set of approved projects of each voter in both \instance and $\instance'$ remain the same. Thus, discount monotonicity is satisfied.
        \item \textbf{Limit Monotonicity:} First, let us consider the case where $\cwparaof{1} < \cwparaof{m}+2$. Without loss of generality, assume that the first few entries in the worth vector are $\cwparaof{m}+1$ and the remaining are \cwparaof{m}. Consider the rule $\langle \CC_\cwpara,f \rangle$. Suppose $f(S) = \bool(|S|>0)$. If there is any project costing at most $\cwparaof{m}$, it covers all the voters and the claim follows. If $\CC_\cwpara$ selects only projects costing $\cwparaof{m}+1$ into each $A_i$, increasing \bud won't change the outcome. Thus, limit monotonicity is satisfied.

        Now, let us look at the case where $\cwparaof{1} \geq \cwparaof{m}+2$. Suppose we have an instance \instance with $\bud = 2\cwparaof{1}-1$. Let $\curly{p_1,p_2,p_3}$ be projects respectively costing $\curly{\cwparaof{1}-1,\cwparaof{1},\cwparaof{1}}$. Let $\curly{d_1,\ldots,d_{m-3}}$ be projects such that $\cof{d_i} = \cwparaof{i+1}+1$. Let there be six voters such that the ranking of one voter is $p_1 \succ d_1 \succ \ldots \succ d_{m-3} \succ p_2 \succ p_3$, ranking of two voters is $p_2 \succ d_1 \succ \ldots \succ d_{m-3} \succ p_3 \succ p_1$, and ranking of remaining three voters is $p_3 \succ d_1 \succ \ldots \succ d_{m-3} \succ p_2 \succ p_1$. Clearly, $\ruleof{\cwr}{} = \curly{\curly{p_1,p_3}}$. If we increased budget by $1$ to get $\instance'$, then $\ruleof{\cwr}{'} = \curly{\curly{p_2,p_3}}$. Thus, $p_1 \notin \winners{\cwr}{'}$ and limit monotonicity is not satisfied.
        \item \textbf{Inclusion Maximality:} This follows from the fact that $\bool(|S|>0)$ is subset monotone. In other words, utility from a set $S$ is at least as much as the utility from any subset of $S$. Thus, inclusion maximality is satisfied.
        \item \textbf{Candidate Monotoncity:} Utility of a set $S$ is the number of voters who have some project of $S \in A_i$. Utility of any $S_x \in \ruleof{\cwr}{}$ increases by $1$ in $\instance'$ (when $x$ is the only project in $S_x$ in $A_i$) or remains the same, otherwise. Similarly, the utility of any set without $x$ stays the same or decreases by $1$ in $\instance'$. Therefore, $x \in \winners{\cwr}{'}$. Thus, candidate monotonicity is satisfied.
        \item \textbf{Non-Crossing Monotonicity:} First, we look at the case where $\cwparaof{1}=\cwparaof{m}$. Let $\cwparaof{1} = Q$. All the voters approve all and only the projects whose cost is at most $Q$. Hence, the outcome is independent of the ranks of the projects in the preferences and thus, non-crossing monotonicity is satisfied.
        
        Now, consider the case where $\cwparaof{1}>\cwparaof{m}$. This implies that there exists some $t \in \{1,\ldots,m\}$ such that $\cwparaof{t} > \cwparaof{t+1}$. Consider an instance \instance with budget $\bud = \cwparaof{1}+\cwparaof{t}$ and a set of projects $\proj = \curly{p_1,\ldots,p_5,d_1,\ldots,d_{m-5}}$ whose costs are as follows: $\cof{p_1} = \cof{p_4} = \cwparaof{1}$; $\cof{p_2} = \cof{p_3} = \cwparaof{t}$; $\cof{p_5} = \cwparaof{1}+\cwparaof{t}$; for every $i \in \curly{1,\ldots,t-2}$ $\cof{d_i} = \cwparaof{i+1}+1$; for every $i  \in \curly{t-1,\ldots,m-5}$, $\cof{d_i} = \cwparaof{i+3}+1$. Suppose we have three voters whose preferences are (1) $p_1 \succ d_1 \succ \ldots \succ d_{t-2} \succ p_2 \succ p_3 \succ d_{t-1} \succ \ldots \succ d_{m-5} \succ p_5 \succ p_4$ (2) $p_4 \succ d_1 \succ \ldots \succ d_{t-2} \succ p_2 \succ p_3 \succ d_{t-1} \succ \ldots \succ d_{m-5} \succ p_5 \succ p_1$ and (3) $p_5 \succ d_1 \succ \ldots \succ d_{t-2} \succ p_3 \succ p_2 \succ d_{t-1} \succ \ldots \succ d_{m-5} \succ p_4 \succ p_1$. Since $\cwparaof{t} > \cwparaof{t+1}$, Algorithm \ref{algo: ctscheme} gives $A_1 = \curly{p_1,p_2}$, $A_2 = \curly{p_4,p_2}$, $A_3$ is either $\curly{p_5,p_3}$ or $\curly{p_3}$. Hence, the outcome of \cwr with $f(S) = \bool(|S|>0)$ is $\curly{\curly{p_1,p_2},\curly{p_1,p_3},\curly{p_4,p_2},\curly{p_4,p_3}}$ since each of these sets has an utility of exactly $2$. Now, let $S = \curly{p_1,p_3}$. See that $p_2 \notin S$. Exchange $p_2$ and $p_3$ in the first preference to obtain new instance $\instance'$. In $\instance'$, $\curly{p_3,p_4}$ has a strictly better utility than $S$ and hence $S \notin \ruleof{\cwr}{'}$. Thus, non-crossing monotonicity is not satisfied.
        \item \textbf{Pro-Affordability:} Proof is exactly same as that of (a)-(ii) in the proof of \Cref{the: proaffordp}.
    \end{enumerate}
This completes the proof.
\end{proof}

\addtocounter{app}{1}
\section{Flexible Costs: Characterization of Group-Fair and Individual-Fair Rules under Single-Peaked Preferences}\label{chap: fourth_app}
\subsection{Proof of \Cref{le_1}} \label{app4: le_1}
Take a cut $C$. Note that if $C$ contains either $x$ or $y$ then $c(C)\geq 1$. So, assume $C$ does not contain $x$ and $y$. We show that there must exist $\gamma_{i_1},\ldots,\gamma_{i_m}$ such that $\gamma_{i_j}\gg\gamma_{i_{j+1}}$ for all $j\in 1,\ldots,m-1$, and $(\gamma_{i_j},j)\in C$. First note that for each $j\leq m$, there exists some $(\gamma_{i_j},j)\in C$ as otherwise we can construct a deterministic rule $f$ where $f_{rj}(\rprof^{\gamma_i})=1$ for all $\gamma_i\in\Gamma$. Hence, $C$ can be written in the form $$C=\{(\gamma_{1,1},1),\ldots,(\gamma_{1,k_1},1),(\gamma_{2,1},2),\ldots,(\gamma_{2,k_2},2),\ldots,(\gamma_{m,1},m),\ldots,(\gamma_{m,k_m},m)\}.$$

Let us define the deterministic rule $f$ as follows. Let $\gamma \in \Gamma$. If there are $(\gamma_{i_{m-1}},m-1),\ldots,(\gamma_{i_{1}},1)\in C$ such that $\gamma_{i_{1}}\gg\cdots\gg\gamma_{i_{m-1}}\gg\gamma$ then $f(\rprof^\gamma)=p_m$. Otherwise, if there are $(\gamma_{i_{m-2}},m-1),\ldots,(\gamma_{i_{1}},1)\in C$ such that $\gamma_{i_{1}}\gg\cdots\gg\gamma_{i_{m-2}}\gg\gamma$ then $f(\rprof^\gamma)=p_{m-1}$. Continuing in this manner, if there are $(\gamma_{i_{1}},1)\in C$ such that $\gamma_{i_{1}}\gg\gamma$ then $f(\rprof^\gamma)=p_{2}$. Finally, for all the remaining cases, let us define $f(\rprof^\gamma)=p_1$.

It follows from the definition that  $f$ is strategy-proof. Therefore, the path in the network induced by $f$ must intersect $C$. Let the vertex where $f$ intersects the cut be $(\gamma,k)$, that is, $f(\rprof^\gamma)=p_k$. Suppose $k<m$. Then by definition, there are $(\gamma_{i_{k-1}},k-1),\ldots,(\gamma_{i_{1}},1)\in C$ such that $\gamma_{i_{1}}\gg\cdots\gg\gamma_{i_{k-1}}\gg\gamma$. Since $(\gamma,k)\in C$, this implies  $\gamma_{i_{1}}\gg\cdots\gg\gamma_{i_{k-1}}\gg\gamma_{i_{k}}=\gamma$, which in turn means $f(\rprof^\gamma)\neq a_k$. Therefore, it follows that $k=m$, and hence  $\gamma_{i_{1}}\gg\cdots\gg\gamma_{i_{m-1}}\gg\gamma_{i_{m}}=\gamma$. 

Consider  $\gamma_{i_{1}},\cdots,\gamma_{i_{m-1}},\gamma_{i_{m}}$ as defined  in the preceding paragraph. By strategy-proofness,   $\varphi_{p_1}(\rprof^\gamma_{i_{1}})\geq \varphi_{p_1}(\rprof^\gamma_{i_{2}})$. Therefore, $$c(\gamma_{i_{1}},1)+c(\gamma_{i_2},2)\geq \varphi_{p_1}(\rprof^\gamma_{i_{2}})+\varphi_{p_2}(\rprof^\gamma_{i_{2}}).$$ By induction, it follows that  
$$c(\gamma_{i_{1}},1)+\ldots+c(\gamma_{i_{k+1}},k+1)\geq \varphi_{p_1}(\rprof^\gamma_{i_{k+1}})+\ldots+\varphi_{p_{k+1}}(\rprof^\gamma_{i_{k+1}}).$$
Therefore, we have 
\begin{align*}
	c(C)&\geq c(\gamma_{i_{1}},1)+\ldots+c(\gamma_{i_{m}},m)\\
	&=\varphi_{p_1}(\rprof^\gamma_{i_{m}})+\ldots+\varphi_{p_{m}}(\rprof^\gamma_{i_{m}})=1.\\
\end{align*}
This completes the proof.


%

\end{document}